\let\cite\citep
\def\moverlay{\mathpalette\mov@rlay}
\def\mov@rlay#1#2{\leavevmode\vtop{%
   \baselineskip\z@skip \lineskiplimit-\maxdimen
   \ialign{\hfil$\m@th#1##$\hfil\cr#2\crcr}}}
\newcommand{\charfusion}[3][\mathord]{
    #1{\ifx#1\mathop\vphantom{#2}\fi
        \mathpalette\mov@rlay{#2\cr#3}
      }
    \ifx#1\mathop\expandafter\displaylimits\fi}
\newcommand{\cupdot}{\charfusion[\mathbin]{\cup}{\cdot}}
\newcommand{\bigcupdot}{\charfusion[\mathop]{\bigcup}{\cdot}}
\providecommand{\keywords}[1]{\textbf{\textit{Keywords: }} #1}
\DeclareMathOperator{\join}{\triangledown}
\DeclareMathOperator{\lca}{lca}
\def\arrowedvec{\mathaccent"017E}
\newcommand{\hc}{\emph{hc}}
\newcommand{\child}{\mathsf{child}}
\newcommand{\parent}{\mathsf{par}}
\newcommand{\rthin}{\mathsf{R}}
\newcommand{\sthin}{\mathsf{S}}
\newcommand{\sigmasthin}{\sigma_{/\mathsf{S}}}
\newcommand{\AX}[1]{\textnormal{#1}}
\newcommand{\G}{\arrowedvec{G}}  
\newcommand{\E}{\arrowedvec{E}}
\definecolor{special-color}{RGB}{200,100,100} 
\newcommand{\NEW}[1]{\begingroup\color{black}#1\endgroup}
\newenvironment{ctheorem}[1]
  {\innercustomthm}
  {\endinnercustomthm}
\newenvironment{clemma}[1]
  {\innercustomlem}
  {\endinnercustomlem}
\newenvironment{cdefinition}[1]
  {\innercustomdef}
  {\endinnercustomdef}
\newenvironment{ccorollary}[1]
  {\innercustomcor}
  {\endinnercustomcor}
\newenvironment{cfact}[1]
  {\innercustomfact}
  {\endinnercustomfact}
\newtheorem{theorem}{Theorem}
\newtheorem{corollary}{Corollary}
\newtheorem{proposition}{Proposition}
\newtheorem{lemma}{Lemma}
\newtheorem{definition}{Definition}
\newtheorem{fact}{Observation}
\newtheorem{example}{Example}
\newtheorem{remark}{Remark}
\begin{document}

	\title{Reciprocal Best Match Graphs}
	
	\author[1]{Manuela Gei{\ss}}
	\author[1,2,3,4,5,6,7]{Peter F.\ Stadler}
	\author[8,9]{Marc Hellmuth}

	\affil[1]{Bioinformatics Group, Department of Computer Science; and 
		Interdisciplinary Center of Bioinformatics, University of Leipzig,
		H{\"a}rtelstra{\ss}e 16-18, D-04107 Leipzig, Germany}
	\affil[2]{German Centre for Integrative Biodiversity Research (iDiv)
		Halle-Jena-Leipzig} 
	\affil[3]{Competence Center for Scalable Data Services
		and Solutions} 
	\affil[4]{Leipzig Research Center for Civilization Diseases,
		Leipzig University,
		H{\"a}rtelstra{\ss}e 16-18, D-04107 Leipzig}
	\affil[5]{Max-Planck-Institute for Mathematics in the Sciences,
		Inselstra{\ss}e 22, D-04103 Leipzig}
	\affil[6]{Inst.\ f.\ Theoretical Chemistry, University of Vienna,
		W{\"a}hringerstra{\ss}e 17, A-1090 Wien, Austria}
	\affil[7]{Santa Fe Institute, 1399 Hyde Park Rd., Santa Fe,
		NM 87501, USA}
	\affil[8]{Institute of Mathematics and Computer Science, 
		University of Greifswald, Walther-Rathenau-Stra{\ss}e 47, 
		D-17487 Greifswald, Germany}
	\affil[9]{Center for Bioinformatics, Saarland University, Building E 2.1, 
		P.O.\ Box 151150, D-66041 Saarbr{\"u}cken, Germany}

	\date{}
	\normalsize
	
	\maketitle

\begin{abstract} 
  Reciprocal best matches play an important role in numerous applications
  in computational biology, in particular as the basis of many widely used
  tools for orthology assessment. Nevertheless, very little is known about
  their mathematical structure. Here, we investigate the structure of
  reciprocal best match graphs (RBMGs). In order to abstract from the
  details of measuring distances, we define reciprocal best matches here as
  pairwise most closely related leaves in a gene tree, arguing that
  conceptually this is the notion that is pragmatically approximated by
  distance- or similarity-based heuristics. We start by showing that a
  graph $G$ is an RBMG if and only if its quotient graph w.r.t.\ a certain
  thinness relation is an RBMG. Furthermore, it is necessary and sufficient
  that all connected components of $G$ are RBMGs.  The main result of this
  contribution is a complete characterization of RBMGs with 3
  colors/species that can be checked in polynomial time. For 3 colors,
  there are three distinct classes of trees that are related to the
  structure of the phylogenetic trees explaining them. We derive an
  approach to recognize RBMGs with an arbitrary number of colors; it
  remains open however, whether a polynomial-time for RBMG recognition
  exists. In addition, we show that RBMGs that at the same time are
  cographs (co-RBMGs) can be recognized in polynomial time. Co-RBMGs are
  characterized in terms of hierarchically colored cographs, a particular
  class of vertex colored cographs that is introduced here.  The (least
  resolved) trees that explain co-RBMGs can be constructed in polynomial
  time.

  \keywords{Pairwise best hit \and reciprocal best match heuristics \and
    vertex colored graph \and phylogenetic tree \and hierarchically colored
    cograph}
\end{abstract} 

\sloppy

\section{Introduction}

An important task in computational biology is the annotation of newly
sequenced genomes, and in particular to establish correspondences between
orthologous genes. Two genes $x$ and $y$ in two different species $s$ and
$t$, respectively, are orthologs if their last common ancestor was the
speciation event that separated the lineages of $s$ and $t$
\cite{Fitch:00}. A large class of software tools for orthology assignment
are based on the pairwise reciprocal best match heuristic: the genes $x$
and $y$ are (candidate) orthologs if $y$ is a best match in terms of
sequence similarity to $x$ among all genes from species $t$, and $x$ is a
best match to $y$ among all genes from $s$. This approach, which has been
termed Symmetric Best Match \cite{Tatusov:97}, bidirectional best hits
(BBH) \cite{Overbeek:99}, reciprocal best hits (RBH) \cite{Bork:98}, or
reciprocal smallest distance (RSD) \cite{Wall:03}, provides orthology
assignments on par with elaborate phylogeny-based methods, see
\citet{Altenhoff:09,Altenhoff:16,Setubal:18a} for reviews and benchmarks.

The application of pairwise best hits methods to orthology detection relies
on the observation that given a gene $x$ in species $r$ (and disregarding
horizontal gene transfer), all its co-orthologous genes $y$ in species $s$
are by definition closest relatives of $x$. Since orthology is a symmetric
relation, orthologs are necessarily reciprocal best matches (RBMs).  In
practice, however, reciprocal best matches are approximated by sequence
similarities, making the tacit assumption that the molecular clock
hypothesis is not violated too dramatically, see \citet{Geiss:18x} for a
more detailed discussion. Modern orthology detection tools are well aware
of the shortcoming of pairwise best sequence similarity estimates and
employ additional information, in particular synteny
\cite{Lechner:14a,Jahangiri:17}, or use small subsets of pairwise matches
to identify erroneous orthology assignments \cite{Yu:11,Train:17}. In this
contribution we will not concern ourselves with the practicalities of
inferring best matches. Instead, we will focus on the best match relation
from a mathematical point of view.

Despite its practical importance, very little is known about the RBM
relation. Like orthology, RBMs are a phylogenetic concept and thus refer to
a phylogenetic tree $T$. Denote by $L$ the leaf set of $T$ and consider a
surjective map $\sigma: L\to S$ that assigns to each gene $x\in L$ the
species $\sigma(x)\in S$ within which $x$ resides. To avoid trivial cases,
we assume that there are $|S|>1$ species. In this setting we can express
the concept of RBMs by
\begin{definition}
  \label{def:rbm}
  The leaf $y$ is a \emph{best match} of the leaf $x$ in the gene tree $T$
  if and only if $\lca(x,y)\preceq \lca(x,y')$ for all leaves $y'$ with
  $\sigma(y')=\sigma(y)$. If $x$ is also a best match of $y$, we call $x$
  and $y$ \emph{reciprocal best matches}.
\end{definition}
As usual, $\lca_T(x,y)$ denotes the last common ancestor of $x$ and $y$ on
$T$, and $\preceq_T$ is the ancestor order on the vertices of $T$, where
the root of $T$ is the unique maximal element. We omit the index $T$
whenever the context is clear.  The reciprocal best match relation is
symmetric by definition. It is reflexive because every gene $x$ in species
$s$ is its own (unique) best match within $s$.

The best match relation and the reciprocal best match relation are
conveniently represented as a vertex colored digraph $(\G,\sigma)$ and
vertex colored undirected graph $(G,\sigma)$, respectively, with vertex set
$L$. Arcs and edges represent best matches and reciprocal best matches,
respectively. Since there is a 1-1 relationship between graphs with a loop
at each vertex and graphs without loops, consider $(\G,\sigma)$ and
$(G,\sigma)$ as loop-less.  The relationship between these graphs and the
trees from which they are derived is captured by
\begin{definition}
\label{def:explains}    
Let $(T,\sigma)$ be a tree with leaf set $L$, let $\G=(L,\E)$ be a digraph
and $G=(L,E)$ an undirected graph, both with vertex set $L$, and let
$\sigma:L\to S$ be a map with $|S|\ge 2$.  Then, $(T,\sigma)$
\emph{explains} $(\G,\sigma)$ if there is an arc $(x,y)\in\E$ in $\G$
precisely if $y$ is a best match of $x$ in $(T,\sigma)$ with
$\sigma(x)\ne\sigma(y)$.  Analogously, $(T,\sigma)$ \emph{explains}
$(G,\sigma)$ if there is an edge $xy\in E$ in $G$ precisely if $x$ and $y$
form a reciprocal best match in $(T,\sigma)$ with $\sigma(x)\ne\sigma(y)$.
\end{definition}
Def.~\ref{def:explains} gives rise to two classes of vertex colored graphs:
\begin{definition}
  A vertex colored digraph $(\G,\sigma)$ is a \emph{best match
    graph} (BMG) if there exists a leaf-colored tree $(T,\sigma)$ that
  explains $(\G,\sigma)$. An undirected graph $(G,\sigma)$ is a
  \emph{reciprocal best match graph} (RBMG) if there exists a
  leaf-colored tree $(T,\sigma)$ that explains $(G,\sigma)$.
\end{definition}
For BMGs we recently reported two different characterizations and
corresponding polynomial-time recognition algorithms \cite{Geiss:18x}.

Here we extend the analysis to RBMGs. Since the material is rather
extensive (many of the proofs use elementary graph theory but are very
technical) we subdivided the presentation in a main narrative text
explaining the main results and a second technical part collecting the
proofs of the main results as well as additional technical results that are
useful for later more practical applications. \NEW{In order to ensure that
  the second, technical part of the contribution is self-contained all
  definitions and results are (re)stated there. In the narrative part we
  only give those definitions that are necessary to understand the results
  presented there. We use the same numbering of statements in the narrative
  and the technical part to facilitate the cross-referencing.}

We start in Section \ref{sect:prelim} to define the concepts that we need
here. We follow the general strategy to reduce redundancy by identifying
classes of trees explaining the same graphs and equivalence classes of
graphs explained by trees with essentially the same structure. We start in
Sections \ref{sect:leastres-new} in the main text and \ref{sect:leastres}
in the technical part with the description of least resolved trees as
representatives that are sufficient to explain a given RBMG. As it turns
out, least resolved trees are not unique in general.  Complementarily, in
Sections \ref{sect:thin-new} and \ref{sect:thin} we introduce a color-aware
thinness relation $\sthin$ and show that it suffices to characterize
$\sthin$-thin RBMGs.  Combining these ideas, we demonstrate in Sections
\ref{sect:connect-new} and \ref{sect:connect} that $(G,\sigma)$ is an RBMG
if and only if each of its connected components is an RBMG and at least one
of them contains all colors, and give a simple construction for a tree
explaining $(G,\sigma)$ from trees for the connected components. In order
to characterize connected, $\sthin$-thin RBMGs, we first consider the case
of three colors (Sections \ref{sect:classes-new} and \ref{sect:classes}).
We find that there are three distinct classes of 3-RBMGs that can be
recognized in polynomial time.  One of these classes does not contain
induced paths on four vertices, so called $P_4$s, while the other two
classes do. Since $P_4$s are at the heart of cograph editing approaches to
improve orthology estimates \cite{Hellmuth:13a}, we consider these
structures in some more detail in Section \ref{sect:P4} of the technical
part and characterize three distinct types: good, bad, and ugly $P_4$s.  In
Sections \ref{sect:n-new} and \ref{sect:n} we prove that trees explaining
an $n$-RBMG can be composed from tree sets explaining the induced 3-RBMGs
for all three-color subsets. However, the computational complexity for
recognizing $n$-RBMGs is left as an open problem.  Because of their
practical relevance in orthology detection, we then characterize the
$n$-RBMGs that are so-called cographs.  As we shall see, the recognition of
cograph $n$-RBMGs and the construction of trees that explain them can be
done in polynomial time. We finish with a brief survey of potential
applications of the results presented here and some open problems.

\section{Preliminaries}
\label{sect:prelim}

Throughout this contribution, we say that two sets $P,Q$ \emph{do not
  overlap} if $P\cap Q\in \{\emptyset, P, Q\}$, and they \emph{overlap},
otherwise. We will also assume throughout that the map $\sigma:L\to S$ is
surjective.  For a subset $L'\subseteq L$ we write
$\sigma(L')=\{\sigma(x)\mid x\in L'\}$.  Moreover, we use the notation
$\sigma_{|L'}$ for the surjective map $\sigma:L'\to \sigma(L')$.  We will
frequently need to refer to the number $|S|$ of colors and often speak of
$|S|$-BMGs and $|S|$-RBMGs.

A \emph{phylogenetic tree $T=(V,E)$ (on $L$)} is a rooted tree with root
$\rho_T$, leaf set $L(T) = L\subseteq V$ and inner vertices
$V^0=V\setminus L$ such that each inner vertex of $T$ (except possibly the
root) is of degree at least three.  For $x\in V$, we denote by $T(x)$ the
subtree rooted at $x$.  For a phylogenetic tree $T$ on $L$, the
\emph{restriction $T_{|L'}$ of $T$ to $L'\subseteq L$} is the phylogenetic
tree with leaf set $L'$ that is obtained from $T$ by first taking the
minimal subtree of $T$ with leaf set $L'$ and then suppressing all vertices
of degree two with the exception of the root $\rho_{T_{|L'}}$.
	
\emph{Throughout this contribution all rooted
trees are assumed to be phylogenetic unless explicitly stated otherwise.}

A vertex $u\in V$ is an \emph{ancestor} of $v\in V$, $u\succeq_T v$, and
$v$ is a \emph{descendant} of $u$, $v\preceq_T u$, if $u$ lies on the
unique path from $v$ to the root $\rho_T$. We write $u\succ_T v$
($v\prec_T u$) for $u\succeq_T v$ ($v\preceq_T u$) and $u\neq v$.  For a
subset $\alpha\subseteq V$ we write $\alpha \preceq_T u$ to mean that
$x \preceq_T u$ for all $x\in \alpha$.  If $uv\in E$ and $u\succ_T v$, we
call $u$ the \emph{parent} of $v$, denoted by $\parent(v)$, and define the
\emph{children} of $u$ as $\child(u)\coloneqq \{v\in V\mid uv\in E\}$.  It
will be convenient to use the notation $uv\in E$ to indicate $u\succ_T v$,
i.e., $u$ is closer to the root. Moreover, we say that $e=uv$ is an
\emph{outer} edge if $v\in L$ and an \emph{inner} edge otherwise.

A tree $T'$ is displayed by $T$, denoted by $T'\le T$, if $T'$ can be
obtained from a subtree of $T$ by a series of edge contractions. For a tree
$T$ on $L$ with coloring map $\sigma: L \to S$, in symbols $(T,\sigma)$, we
say that $(T,\sigma)$ \emph{displays} or is a \emph{refinement} of
$(T',\sigma')$ if $T'\le T$ and $\sigma'(v)=\sigma(v)$ for any
$v\in L(T')\subseteq L$. The subtree $T(v)$ has leaf set
$L'\coloneqq L(T(v))$ and leaf coloring
$\sigma_{L'}: L'\to \sigma(L')$. We write $\lca_T(A)$ for the
\emph{last common ancestor} of all elements of a set $A\subseteq V$ of
vertices.  For a tree $T=(V,E)$ and some inner edge $e=uv$ of $T$, we
denote by $T_e$ the tree that is obtained from $(T,\sigma)$ by contraction
of $e$, i.e., by identifying $u$ and $v$. Analogously, $T_A$ is obtained by
contracting a sequence of edges $A=(e_1,\dots,e_k)\subseteq E$.

A \emph{triple} is a binary tree on three leaves. We write $xy|z$ if the
path from $x$ to $y$ does not intersect the path from $z$ to the root. A
set $\mathcal{R}$ of triples is consistent if there is a tree $T$ that
displays every triple in $\mathcal{R}$. Analogously, we say that a set of
trees $\mathcal{T}$ is consistent it there is a tree $T$ such that $T$
displays every tree $T'\in\mathcal{T}$. Consistency of a set of triples
$\mathcal{R}$ and more generally trees $\mathcal{T}$ can be decided in
polynomial time by explicitly constructing a supertree $T$ \cite{Aho:81}.

In the following, $G=(V,E)$ and $\G=(V,\E)$ denote simple undirected and
simple directed graphs, respectively.  Throughout, we will distinguish
directed arcs $(x,y)$ in a digraph $\G$ from edges $xy$ in an undirected
graph $G$ or tree $T$. For $x\in V$ we write
$N^+(x)\coloneqq \{y\in V\mid (x,y)\in \E\}$ for its out-neighborhood and
$N^-(x)\coloneqq \{y\in V\mid(y,x)\in \E\}$ for its in-neighborhood.  The
notation naturally extends to sets of vertices $A\subseteq V$:
$N^+(A)=\bigcup_{x\in A}N^+(x)$ and $N^-(A)=\bigcup_{x\in A}N^-(x)$.  Two
vertices $x$ and $y$ of $\G$ are in relation $\rthin$ if $N^+(x)=N^+(y)$
and $N^-(x)=N^-(y)$, see e.g.\ \cite{Hellmuth:15}. Obviously $\rthin$ is an
equivalence relation. For each $\rthin$-class $\alpha$ and every
$x\in \alpha$ holds $N^+(x)=N^+(\alpha)$ and $N^-(x)=N^-(\alpha)$.  The set
of all $\rthin$-classes of $(\G,\sigma)$ will be denoted by $\mathcal{N}$,
or $\mathcal{N}(\G)$.

For a colored di-graph $(\G,\sigma)$, we write
$N_s^+(x)\coloneqq \{y\in N^+(x)\mid\sigma(y)=s\}$ and
$N_s^-(x)\coloneqq \{y\in N^-(x)\mid\sigma(y)=s\}$.  Similarly, for an
undirected colored graph $(G,\sigma)$ with $G=(V,E)$, we write
$N(x)\coloneqq \{y\in V\mid xy\in E\}$ for the neighborhood of some vertex
$x\in V$. Moreover, we set $N_s(x)\coloneqq \{y\in N(x)\mid\sigma(y)=s\}$.

A \emph{connected component} of a graph $(G,\sigma)$ is a maximal connected
subgraph of $(G,\sigma)$. A digraph is connected whenever its underlying
undirected graph (obtained by ignoring the direction of the arcs) is
connected. For our purposes it will not be relevant to distinguish two
colored graphs $(G,\sigma)$ and $(G',\sigma')$ that are isomorphic in the
usual sense of isomorphic colored graphs, i.e., isomorphic graphs $G$ and
$G'$ that only differ by a permutation of their vertex-coloring. A vertex
coloring $\sigma$ is \emph{proper} if $xy\in E(G)$ implies
$\sigma(x)\ne\sigma(y)$. As an immediate consequence of Def.\
\ref{def:explains} we have
\begin{fact}
  \label{obs:proper}
  If $(G,\sigma)$ is an RBMG, then $\sigma$ is a proper vertex coloring.
\end{fact}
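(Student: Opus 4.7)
The plan is to argue directly from Definition~\ref{def:explains}, since the observation is essentially immediate from how edges of an RBMG are produced. Suppose $(G,\sigma)$ is an RBMG and let $(T,\sigma)$ be a leaf-colored tree that explains it. By Definition~\ref{def:explains}, an edge $xy$ appears in $E(G)$ \emph{precisely if} $x$ and $y$ form a reciprocal best match in $(T,\sigma)$ \emph{and} $\sigma(x)\neq\sigma(y)$. The second condition is built into the definition of ``explains,'' so for every $xy\in E(G)$ we have $\sigma(x)\neq\sigma(y)$. Hence $\sigma$ is a proper vertex coloring of $G$.

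There is no obstacle here; the only thing worth remarking on is that Definition~\ref{def:rbm} would by itself make $x$ a reciprocal best match of itself within its own color class (reflexivity), and more generally allow same-color reciprocal best matches to contribute edges. Those are deliberately excluded in Definition~\ref{def:explains} via the clause $\sigma(x)\neq\sigma(y)$, which is consistent with treating $(G,\sigma)$ as loop-less and motivates the observation. Thus the one-line argument above suffices, and no further case analysis is required.
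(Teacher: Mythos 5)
Your proof is correct and matches the paper's reasoning: the paper presents this observation as an immediate consequence of Definition~\ref{def:explains}, whose clause $\sigma(x)\neq\sigma(y)$ is exactly what you invoke. Your remark about reflexive/same-color reciprocal best matches being deliberately excluded is accurate but not needed beyond the one-line argument.
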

As a consequence, $(G,\sigma)$ cannot be explained by a leaf-colored tree
unless $\sigma$ is a proper vertex coloring. We may therefore assume
throughout this contribution that $(G,\sigma)$ is a properly vertex colored
graph. Moreover, for $W\subseteq V(G)$ we denote with G[W] the
\emph{induced} subgraph of $G$ and put
$(G,\sigma)[W]\coloneqq (G[W],\sigma_{|W})$.

We write $\langle x_1\dots x_k\rangle \in \mathscr{P}_k$ to denote that the
vertices $x_1,\dots, x_k$ form an induced path
$P=\langle x_1\dots x_k \rangle$ on $k$ vertices and with edges
$x_ix_{i+1}$, $1\leq i \leq k-1$. Analogously,
$\langle x_1\dots x_k \rangle \in \mathscr{C}_k$ denotes the fact that the
vertices $x_1,\dots, x_k$ induce a cycle $C=\langle x_1\dots x_k \rangle$
on $k$ vertices with edges $x_ix_{i+1}$, $1\leq i \leq k-1$, and $x_kx_1$.
An induced cycle on six vertices is called \emph{hexagon}.  We will write
that $\langle x_1\dots x_k\rangle\in\mathscr{P}_k$, resp.  $\mathscr{C}_k$
is of the form $(\sigma(x_1), \dots, \sigma(x_k))$ to indicate the vertex
colors along induced paths, resp., cycles.

\emph{Cographs} form a class of undirected graphs that play an important in
the context of this contribution. They are defined recursively
\cite{Corneil:81}:
\begin{definition}
  An undirected graph $G$ is a cograph if
 \begin{itemize}
  \item[(1)] $G=K_1$
  \item[(2)] $G= H \join H'$, where $H$ and $H'$ are cographs and
    $\join$ denotes the join,
  \item[(3)] $G= H \cupdot H'$, where $H$ and $H'$ are cograph and $\cupdot$
    denotes the disjoint union.
\end{itemize}
The join of two disjoint graphs $H=(V,E)$ and $H'=(V',E')$ is defined by
$H \join H'=(V\cup V',E\cup E' \cup \{xy\mid x\in V, y\in V'\})$, whereas
their disjoint union is given by $H \cupdot H'=(V\cup V', E\cup E')$.
\label{def:cograph}
\end{definition}
A graph is a cograph if and only if does not contain an induced $P_4$
\cite{Corneil:81}.

Each cograph $G$ is associated with \emph{cotrees} $T_G$, that is,
phylogenetic trees with internal vertices labeled by $0$ or $1$, whose
leaves correspond to the vertices of $G$. In $T_G$, each subtree rooted at
an internal vertex $x$ with label $0$ corresponds to the disjoint union of
the subgraphs of $G$ induced by the leaf sets $L(T_G(y))$ of the children
$y\in\child(x)$ of $x$, and each subtree rooted at an $x$ with label $1$
corresponds to the join of the subgraphs of $G$ induced by the $L(T_G(y))$,
$y\in\child(x)$. In other words, $(T,t)$ is a cotree for $(G,\sigma)$, if
$t(\lca_T(x, y)) = 1$ if and only if $xy \in E(G)$.  For each cograph $G$
there is a unique \emph{discriminating} cotree $T_G$ with the property that
the labels $0$ and $1$ alternate along each root-leaf path in $T_G$
\cite{Corneil:81}.  For later reference, we summarize here some of the
results in \cite[Sect.\ 3]{Hellmuth:13a}.
\begin{proposition}
  Any cotree of a cograph $G$ is a refinement of the unique discriminating
  cotree of $G$.  In particular $(T_e,t_e)$ is a cotree for a cograph $G$
  if and only if $(T,t)$ is a cotree for $G$, $e=xy$ is an edge with
  $t(x)=t(y)$ that is contracted to vertex $v_e$ in $T_e$ and
  $t_e(v_e)=t(x)$ and $t_e(v)=t(v)$ for all remaining vertices.
\label{prop:cograph2}
\end{proposition}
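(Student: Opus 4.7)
The plan is to first establish the ``in particular'' statement as the technical lemma, and then derive the refinement statement from it by iterated contraction.

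First I would prove the equivalence of $(T,t)$ being a cotree for $G$ and $(T_e,t_e)$ being a cotree for $G$, where $e=xy$ is an inner edge with $t(x)=t(y)$. By the characterization $ab\in E(G) \Longleftrightarrow t(\lca_T(a,b))=1$, it suffices to verify that $t(\lca_T(a,b))=t_e(\lca_{T_e}(a,b))$ for every pair of leaves $a,b$. I would split on the location of $w\coloneqq \lca_T(a,b)$: if $w\notin\{x,y\}$, contracting $e$ does not change the identity of the least common ancestor (one only has to replace $x,y$ by $v_e$ in the ancestor chain), and the label is preserved by definition of $t_e$; if $w\in\{x,y\}$, then $\lca_{T_e}(a,b)=v_e$, and by the assumption $t(x)=t(y)=t_e(v_e)$ the labels agree again. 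The converse direction is symmetric: given a cotree $(T_e,t_e)$, any way of splitting $v_e$ into two adjacent inner vertices $x,y$ (with the children of $v_e$ partitioned arbitrarily between them) and setting $t(x)=t(y)=t_e(v_e)$ yields the same $\lca$-labels, hence a cotree.

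Second, I would deduce the refinement statement. Let $(T,t)$ be an arbitrary cotree for $G$. I repeatedly contract inner edges whose endpoints share a label; by the lemma each contraction yields again a cotree for the same graph $G$, and the total number of inner edges decreases, so the process terminates with a cotree $(T^*,t^*)$ of $G$ in which the labels strictly alternate along every root-to-leaf path, i.e., a discriminating cotree. By the uniqueness of the discriminating cotree cited from \cite{Corneil:81}, $(T^*,t^*)$ coincides with $T_G$. Reversing the sequence of contractions exhibits $(T,t)$ as obtained from $T_G$ by a sequence of vertex expansions that each split an inner vertex into two adjacent inner vertices carrying the same label, which is exactly what it means for $(T,t)$ to refine $T_G$ in the labeled sense.

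The only genuine technical obstacle is the $\lca$ case analysis in the single-edge contraction step; one has to keep track cleanly of the positions of $x,y$ relative to $\lca_T(a,b)$ and confirm that their identification to $v_e$ does not accidentally alter any $\lca$-label. Once this lemma is established, the refinement assertion follows by a straightforward induction on the number of inner edges with equally-labeled endpoints, invoking the quoted uniqueness of the discriminating cotree only at the terminal step.
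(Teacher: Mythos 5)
Your proof is correct. Note, however, that the paper does not prove Proposition~\ref{prop:cograph2} at all: it is stated as a summary of known results from \citet{Corneil:81} and \citet{Hellmuth:13a}, so there is no in-paper argument to compare against. Your two-stage plan is the standard and appropriate one: the single-contraction lemma via the $\lca$ case analysis (only pairs $a,b$ with $\lca_T(a,b)\in\{x,y\}$ can have their $\lca$ relocated, and then it becomes $v_e$ with the same label since $t(x)=t(y)=t_e(v_e)$; all other $\lca$s and labels are untouched), followed by iterated contraction down to a label-alternating tree and an appeal to the cited uniqueness of the discriminating cotree. The converse direction of your lemma is also sound, since any vertex splitting that preserves the label can only reassign an $\lca$ from $v_e$ to $x$ or $y$, both carrying the same label. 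The only cosmetic caveat is that "splitting $v_e$ with the children partitioned arbitrarily" may produce a degree-two vertex and hence a non-phylogenetic tree; this is irrelevant for the cotree property and does not affect the specific pre-image $(T,t)$ required by the statement.
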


\section{Least Resolved Trees}
\label{sect:leastres-new}

In this section we consider a notion of ``smallest'' trees explaining a
given RBMG. A we shall see, the characterization of these trees is closely
related to the one of best matches but cannot be expressed in terms of
\textit{reciprocal} best matches alone. Throughout this work, the vertex
set of BMGs and RBMGs as well as the leaf set of the trees that
explain them will be denoted by $L$ and we will write
\begin{equation*}
  L[s]\coloneqq \{x \mid x\in L, \sigma(x)=s\}
\end{equation*}
for the subset of vertices with color $s\in S$.

Given a leaf-colored tree $(T,\sigma)$, one can easily derive the
respective BMG $\G(T,\sigma)$ and RBMG $G(T,\sigma)$ that are explained by
$(T,\sigma)$.  Conversely, if $(G,\sigma)$ is an RBMG, then there is a tree
$(T,\sigma)$ that explains $(G,\sigma)$. This tree also explains the
digraph $\G(T,\sigma)$ with the property that $xy \in E(G)$ if and only if
both $(x,y)$ and $(y,x)$ are arcs in $\G(T,\sigma)$. A colored graph
$(G,\sigma)$ therefore is an RBMG if and only if it is the symmetric part
of some BMG.
  
It is important to note that distinct trees $(T',\sigma)$ and
$(T'',\sigma)$ may explain the same RBMG, i.e.,
$G(T',\sigma)=G(T'',\sigma)$, albeit the leaf-set $L$ and the leaf-coloring
$\sigma$ of course must be the same. In general the BMGs $\G(T',\sigma)$
and $\G(T'',\sigma)$ can also be different, even if
$G(T',\sigma)=G(T'',\sigma)$.  For an example, consider the RBMG
$(G,\sigma)$ and the two distinct trees $(T_e,\sigma)$ and $(T_f,\sigma)$
in Fig.~\ref{fig:lr_nonunique}. We have
$(G,\sigma) = G(T_e,\sigma)=G(T_f,\sigma)$. However, $\G(T_e,\sigma)$
contains the arc $(a,b')$ which is not contained in
$\G(T_f,\sigma)$. Hence, $\G(T_e,\sigma) \neq \G(T_f,\sigma)$.

Nevertheless, some properties of BMGs will be helpful as a means to gaining
insights into the structure of RBMGs. To this end we briefly recall some
pertinent results by \citet{Geiss:18x}.

\begin{clemma}{\ref{lem:bmg-basics}}
  Let $(\G,\sigma)$ be a BMG with vertex set $L$.  Then, $x\rthin y$
  implies $\sigma(x)=\sigma(y)$. In particular, $(\G,\sigma)$ has no arcs
  between vertices within the same $\rthin$-class.  Moreover,
  $N^+(x)\ne\emptyset$, while the in-neighborhood $N^-(x)$ may be empty for
  all $x\in L$.
\end{clemma}

\begin{figure}
  \begin{center}
    \includegraphics[width=1\textwidth]{./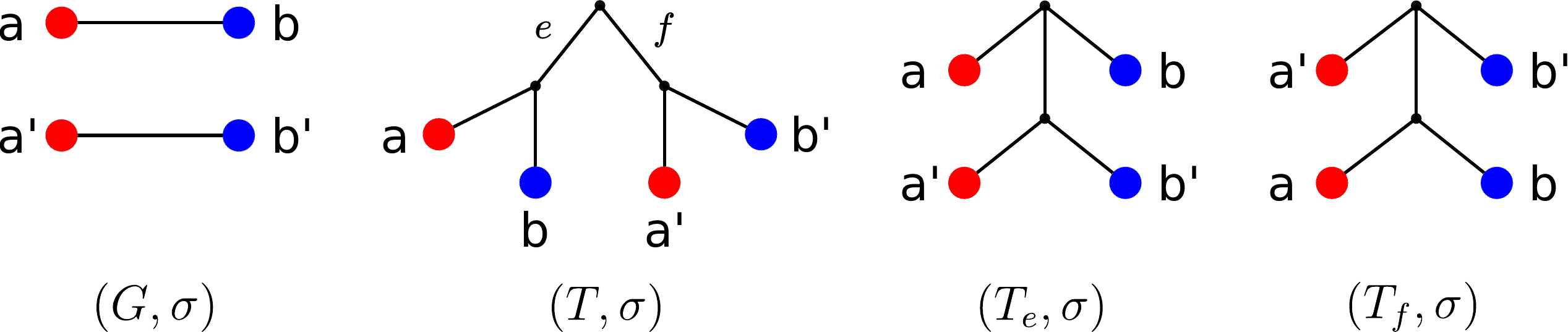}
  \end{center}
  \caption{The reciprocal best match graph $(G,\sigma)$ on two colors (red
    and blue) is explained by $(T,\sigma)$ which contains the redundant
    edges $e$ and $f$. Contraction of one of these edges gives
    $(T_e,\sigma)$ and $(T_f,\sigma)$, respectively, which are both
    least resolved \NEW{but distinct from each other}, i.e., there exists no unique least resolved
    tree w.r.t.\ $(G,\sigma)$. In particular, none of the trees
    $(T_{ef},\sigma)$ and $(T_{fe},\sigma)$ explains $(G,\sigma)$.}
  \label{fig:lr_nonunique}
\end{figure}

Although there are in general many different trees that explain the same
BMG or RBMG, it is shown by \citet{Geiss:18x} that every BMG $(\G,\sigma)$
is explained by a uniquely defined ``smallest'' tree, its so called
\emph{least resolved} tree.  The notion of least resolved trees are also of
interest for RBMGs even though we shall see below that they are not unique
in the reciprocal setting.
\begin{cdefinition}{\ref{def:redundant}}
  Let $(G,\sigma)$ be an RBMG that is explained by a tree $(T,\sigma)$. An
  inner edge $e$ is called \emph{redundant} if $(T_e,\sigma)$ also explains
  $(G,\sigma)$, otherwise $e$ is called \emph{relevant}.
\end{cdefinition}

Lemma \ref{lem:lr} in the technical part provides a characterization of
redundant edges.  It is interesting to note that this characterization of
redundancy (w.r.t.\ an RBMG) of edges in $(T,\sigma)$ requires
information on (directed) best matches and apparently cannot be expressed
entirely in terms of the reciprocal best match relation.

\begin{cdefinition}{\ref{def:series-edge-contract}}
  Let $(G,\sigma)$ be an RBMG explained by $(T,\sigma)$. Then $(T,\sigma)$
  is \emph{least resolved w.r.t.\ $(G,\sigma)$} if $(T_A,\sigma)$ does not
  explain $(G,\sigma)$ for any non-empty series of edges $A$ of
  $(T,\sigma)$.
\end{cdefinition}

Given two distinct redundant edges $e\neq f$ of $(T,\sigma)$, the edge $f$
is not necessarily redundant in $(T_e,\sigma)$, i.e., the tree
$(T_{ef},\sigma)$ obtained by sequential contraction of $e$ and $f$ does
not necessarily explain $(G,\sigma)$. This in particular implies that the
contraction of all redundant edges of $(T,\sigma)$ does not necessarily
result in a least resolved tree for the same RBMG.  Moreover, there may be
more than one least resolved tree that explains a given $n$-RBMG
$(G,\sigma)$. Fig.\ \ref{fig:lr_nonunique} gives an example of least
resolved trees that are not unique. The following theorem summarizes some
key properties of least resolved trees.
\begin{ctheorem}{\ref{thm:lr}}
  Let $(G,\sigma)$ be an RBMG explained by $(T,\sigma)$. Then there exists
  a (not necessarily unique) least resolved tree
  $(T_{e_1\dots e_k},\sigma)$ explaining $(G,\sigma)$ obtained from
  $(T,\sigma)$ by a series of edge contractions $e_1 e_2 \dots e_k$ such
  that the edge \NEW{$e_1$ is redundant in $(T,\sigma)$ and} $e_{i+1}$ is
  redundant in $(T_{e_1\dots e_i},\sigma)$ for $i\in \{1,\dots,k-1\}$. In
  particular, $(T,\sigma)$ displays $(T_{e_1\dots e_k},\sigma)$.
\end{ctheorem}

We will return to least resolved trees in Section \ref{ssec:n-RBMG}, where
the concept will be needed as a means to construct a tree explaining an
$n$-RBMG from sets of least resolved trees that explain the induced 3-RBMGs
for all subsets on three colors.

\section{$\sthin$-Thinness}
\label{sect:thin-new}

The $\rthin$ relation introduced in the previous sections is the natural
generalization of thinness in undirected graphs \cite{McKenzie:71}. As an
immediate consequence of Lemma~\ref{lem:bmg-basics}, all vertices within an
$\rthin$-class of a BMG have the same color. However, a corresponding
result does not hold for RBMGs. Fig.\ \ref{fig:sameNeighbors} shows a
counterexample, where $N(a)=N(b)$ holds for vertices with different colors
$\sigma(a)\ne\sigma(b)$. Since color plays a key role in our context, we
introduce a color-preserving thinness relation:
\begin{cdefinition}{\ref{def:sthin}} 
  Let $(G,\sigma)$ be an undirected colored graph. Then two vertices $a$
  and $b$ are in relation $\sthin$, in symbols $a\sthin b$, if
  $N(a) = N(b)$ and $\sigma(a) = \sigma(b)$.\\
  An undirected colored graph $(G,\sigma)$ is $\sthin$-thin if no two
  distinct vertices are in relation $\sthin$. We denote the $\sthin$-class
  that contains the vertex $x$ by $[x]$.
\end{cdefinition}

As a consequence of Lemma \ref{lem:bmg-basics} and the fact that every
RBMG $(G,\sigma)$ is the symmetric part of some BMG $\G(T,\sigma)$, we
obtain
\begin{clemma}{\ref{lem:SRthin}}
  Let $(G,\sigma)$ be an RBMG, $(T,\sigma)$ a tree explaining $(G,\sigma)$,
  and $\G(T,\sigma)$ the corresponding BMG. Then $x\rthin y$ in
  $\G(T,\sigma)$ implies that $x\sthin y$ in $(G,\sigma)$.
\end{clemma}
The converse of Lemma \ref{lem:SRthin} is not true, however. In Fig.\
\ref{fig:sameNeighbors}, for instance, changing the color of the leaf $b_3$
from blue to red in the tree $(T,\sigma)$ implies $N(a_2)=N(b_3)$ in the
RBMG $(G,\sigma)$ and the set $\{a_2,b_3\}$ forms an $\sthin$-class. On
the other hand, we have $N^+(a_2)\neq N^+(b_3)$ in the corresponding BMG
$\G(T,\sigma)$, thus $a_2$ and $b_3$ do not belong to the same
$\rthin$-class of $\G(T,\sigma)$.

\begin{figure}
  \begin{center}
    \includegraphics[width=0.7\textwidth]{./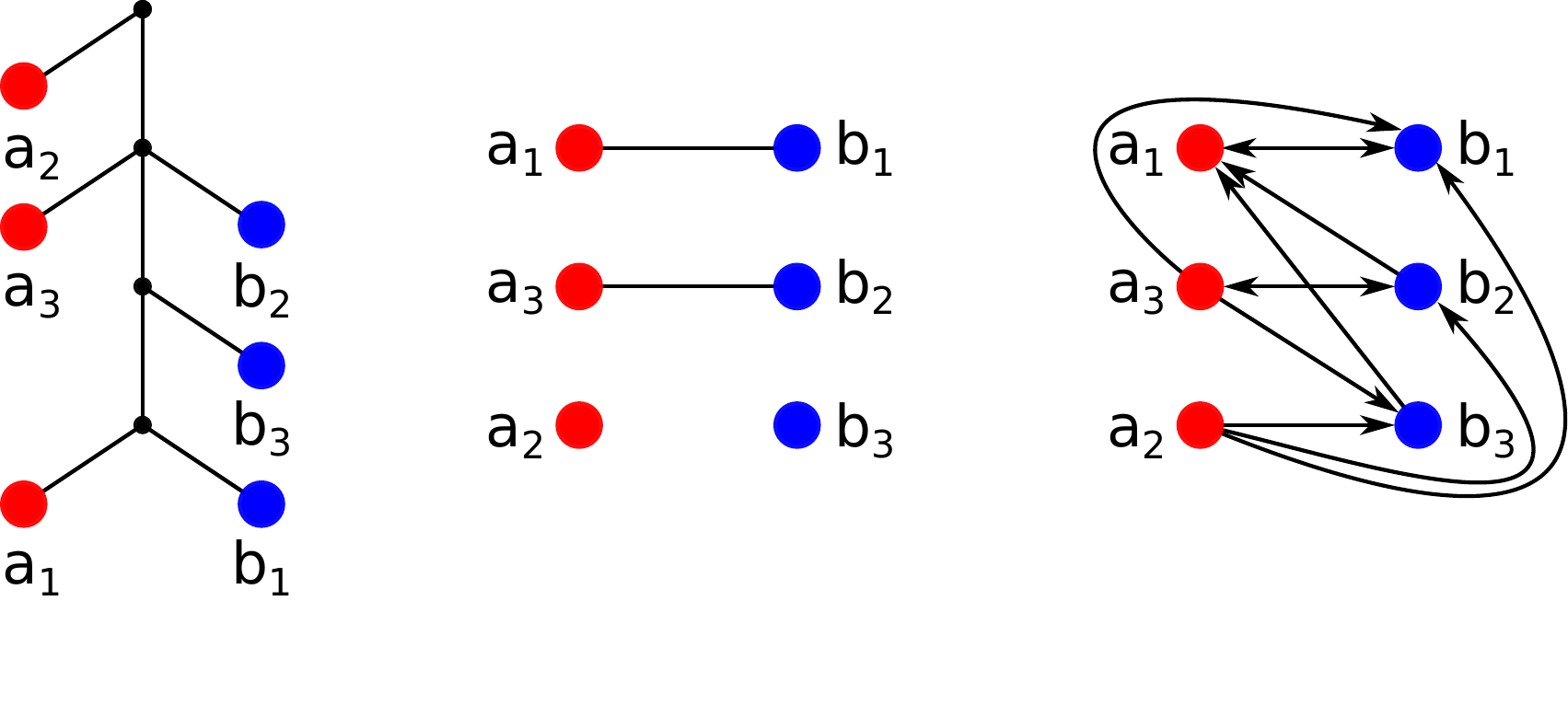}
  \end{center}
  \caption{The leaf-colored tree $(T,\sigma)$ on the left explains the RBMG
    $G(T,\sigma)$ (middle) and the BMG $\G(T,\sigma)$ (right). The colored
    graph $\G(T,\sigma)$ is $\rthin$-thin.  Thus, all leaves within an
    $\rthin$-class are trivially of the same color.  However, in the RBMG
    we have $N(a_2)=N(b_3)=\emptyset$ but $a_2$ and $b_3$ are of different
    color. Note, by definition $a_2$ and $b_3$ are not within the same
    $\sthin$-class.}
  \label{fig:sameNeighbors}
\end{figure}

For an undirected colored graph $(G,\sigma)$, we denote by $G/\sthin$ the
graph whose vertex set are exactly the $\sthin$-classes of $(G,\sigma)$,
and two distinct classes $[x]$ and $[y]$ are connected by an edge in
$G/\sthin$ if there is an $x'\in [x]$ and $y'\in [y]$ with $x'y'\in
E(G)$. Moreover, since the vertices within each $\sthin$-class have the
same color, the map $\sigmasthin \colon V(G/\sthin) \to S$ with
$\sigmasthin([x]) = \sigma(x)$ is well-defined.

\begin{clemma}{\ref{lem:sthin}}
  $(G/\sthin, \sigmasthin)$ is $\sthin$-thin for every undirected colored
  graph $(G,\sigma)$. Moreover, $xy\in E(G)$ if and only if
  $[x][y] \in E(G/\sthin)$. Thus, $G$ is connected if and only if
  $G/\sthin$ is connected.
\end{clemma}

The map $\gamma_{\sthin} \colon V(G) \to V(G/\sthin): x\mapsto [x]$
collapses all elements of an $\sthin$-thin class in $(G,\sigma)$ to a
single node in $(G_{/\sthin},\sigmasthin)$. Hence, the
$\gamma_{\sthin}$-image of a connected component of $(G,\sigma)$ is a
connected component in $(G_{/\sthin},\sigmasthin)$. Conversely, the
pre-image of a connected component of $(G_{/\sthin},\sigmasthin)$ that
contains an edge is a single connected component of
$(G,\sigma)$. Furthermore, $(G_{/\sthin},\sigmasthin)$ contains at most one
isolated vertex of each color $r\in S$. If it exists, then its pre-image is
the set of all isolated vertices of color $r$ in $(G,\sigma)$; otherwise
$(G,\sigma)$ has no isolated vertex of color $r$. Surprisingly, it suffices
to characterize the $\sthin$-thin RBMGs:
\begin{clemma}{\ref{lem:Sthin-tree}}
  $(G,\sigma)$ is an RBMG if and only if $(G/\sthin, \sigmasthin)$ is an
  RBMG.  Moreover, every RBMG $(G,\sigma)$ is explained by a tree
  $(\widehat{T},\sigma)$ in which any two vertices $x,x'\in [x]$ of each
  $\sthin$-classes $[x]$ of $(G,\sigma)$ have the same parent.
\end{clemma}

Lemma~\ref{lem:Sthin-tree} is illustrated in Fig.\ \ref{fig:SthinTree},
where $a_2$ and $a_4$ belong to the same $\sthin$-thin class
$[a_2]$. However, in the tree representation on the l.h.s., $a_2$ and $a_4$
are attached to different parents. Substituting the edge $\parent(a_4)a_4$
by $\parent(a_2)a_4$ and suppressing the vertex $\parent(a_4)$, which now
has degree $2$, yields a tree $(\hat T,\sigma)$ with
$\parent(a_2)=\parent(a_4)$ that still explains $(G,\sigma)$. Next, we can
remove the edges $\parent(a_2)a_2$ and $\parent(a_2)a_4$ as well as the
leaves $a_2$ and $a_4$ from $(\hat T,\sigma)$ and add the edge
$\parent(a_2)[a_2]$. Finally, we replace any vertex $y\neq a_2,a_4$ by
$[y]$ and set $\sigma(x)=\sigma_{/\sthin}([x])$ for all $x\in V(\hat
T)$. The resulting tree explains the $\sthin$-thin RBMG
$(G_{/\sthin},\sigma_{/\sthin})$.

\begin{figure}
  \begin{center}
    \includegraphics[width=0.9\textwidth]{./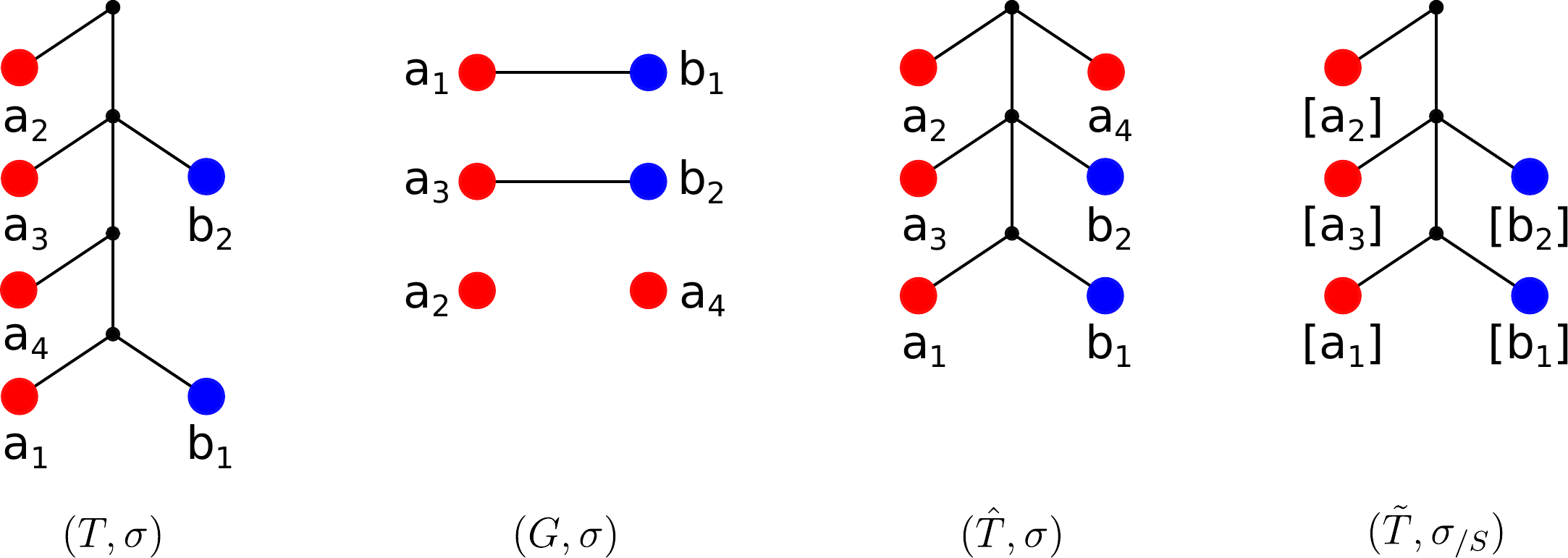}
  \end{center}
  \caption{The leaf-colored tree $(T,\sigma)$ on the left explains the
    RBMG $(G,\sigma)$. \NEW{Here,} $a_2, a_4\in [a_2]$ but they do not have
    the same parent in $T$. The tree $(\hat T,\sigma)$ is obtained from
    $(T,\sigma)$ by re-attaching the leaf \NEW{$a_4$ to $\parent(a_2)$}
    and suppressing the 2-degree vertex \NEW{$\parent(a_4)$.} The resulting
    tree still explains $(G,\sigma)$ and $a_2$ and $a_4$ are now
    siblings. Retaining only one representative of each $\sthin$-class
    finally gives the tree $(\tilde T, \sigma_{/S})$ on the right that
    explains the $\sthin$\NEW{-thin} graph $(G_{/S},\sigma_{/S})$.}
  \label{fig:SthinTree}
\end{figure}

\section{Connected Components} 
\label{sect:connect-new}

This section aims at simplifying the problem of finding a characterization
for RBMGs by showing that an undirected colored graph is an RBMG if and
only if each of its connected components is an RBMG (cf.\ Theorem
\ref{thm:connected}) and at least one of them contains all colors. This, in
turn, reduces the problem to connected graphs.  This is a non-trivial
observation since BMGs are not hereditary. Hence, we cannot expect RBMGs to
be hereditary, either.  They do satisfy a somewhat weaker property,
however:
\begin{clemma}{\ref{lem:fact:1}}
  Let $(G,\sigma)$ be an RBMG with vertex set $L$ explained by $(T,\sigma)$
  and let $(T_{|L'},\sigma_{|L'})$ be the restriction of $(T,\sigma)$ to
  $L'\subseteq L$. Then the induced subgraph
  $(G,\sigma)[L']\coloneqq (G[L'],\sigma_{|L'})$ of $(G,\sigma)$ is a (not
  necessarily induced) subgraph of $G(T_{|L'},\sigma_{|L'})$.
\end{clemma}

Lemma~\ref{lem:fact:1} provides a starting point to show that the connected
components of an RBMG are again RBMGs that can be explained by
corresponding restrictions of a leaf-colored tree:
\begin{ctheorem}{\ref{thm:conn_comp}}
  Let $(G^*,\sigma^*)$ with vertex set $L^*$ be a connected component of
  some RBMG $(G,\sigma)$ and let $(T,\sigma)$ be a leaf-colored tree
  explaining $(G,\sigma)$. Then, $(G^*,\sigma^*)$ is again an RBMG and is
  explained by the restriction $(T_{|L^*},\sigma_{|L^*})$ of $(T,\sigma)$
  to $L^*$.
\end{ctheorem}
It is worth noting that there is no similar result for BMGs.

Every connected component of an $n$-RBMG is therefore a $k$-RBMG possibly
with a strictly smaller number $k$ of colors.  Our aim in the remainder of
this section is to show that the disjoint union of RBMGs is again an RBMG
provided that one of these RBMGs contains all colors.

Let $(G,\sigma)$ be an undirected, vertex colored graph with vertex set $L$
and $|\sigma(L)|=n$. We denote the connected components of $(G,\sigma)$ by
$(G_i^n,\sigma_i^n)$, $1\le i \le k$, with vertex sets $L_i^n$ if
$\sigma(L_i^n)=\sigma(L)$ and $(G_j^{<n},\sigma_j^{<n})$, $1\le j \le l$,
with vertex sets $L_j^{<n}$ if $\sigma(L_j^{<n})\subsetneq\sigma(L)$. That
is, the upper index distinguishes components with all colors present from
those that contain only a proper subset. Suppose that each
$(G_i^n,\sigma_i^n)$ and $(G_j^{<n},\sigma_j^{<n})$ is an RBMG. Then there
are trees $(T_i^n,\sigma_i^n)$ and $(T_j^{<n},\sigma_j^{<n})$ explaining
$(G_i^n,\sigma_i^n)$ and $(G_j^{<n},\sigma_j^{<n})$, respectively. The
roots of these trees are $u_i$ and $v_j$, respectively. We construct a tree
$(T_G^*,\sigma)$ with leaf set $L$ in two steps:
\begin{itemize}
\item[(1)] Let $(T',\sigma^n)$ be the tree obtained by attaching the trees
  $(T_i^n,\sigma_i^n)$ with their roots $u_i$ to a common root $\rho'$.
\item[(2)] First, construct a path $P=v_1v_2\dots v_{l-1}v_l\rho'$, where
  $\rho'$ is omitted whenever $T'$ is empty.  Now attach the trees
  $(T_j^{<n},\sigma_j^{<n})$, $1\le j\le l$, to $P$ by identifying the root
  of each $T_j^{<n}$ with the vertex $v_j$ in $P$.  Finally, if
  $(T',\sigma^n)$ exists, attach it to $P$ by identifying the root of $T'$
  with the vertex $\rho'$ in $P$. The coloring of $L$ is the one given for
  $(G,\sigma)$.
\end{itemize}
This construction is illustrated in Fig.~\ref{fig:tree_constr} \NEW{for
  $n\ge 2$. For $n=1$, the resulting tree is simply the star tree on $L$. }
  
\begin{figure}
  \begin{tabular}{lcr}
    \begin{minipage}{0.5\textwidth}
      \begin{center}
        \includegraphics[width=\textwidth]{./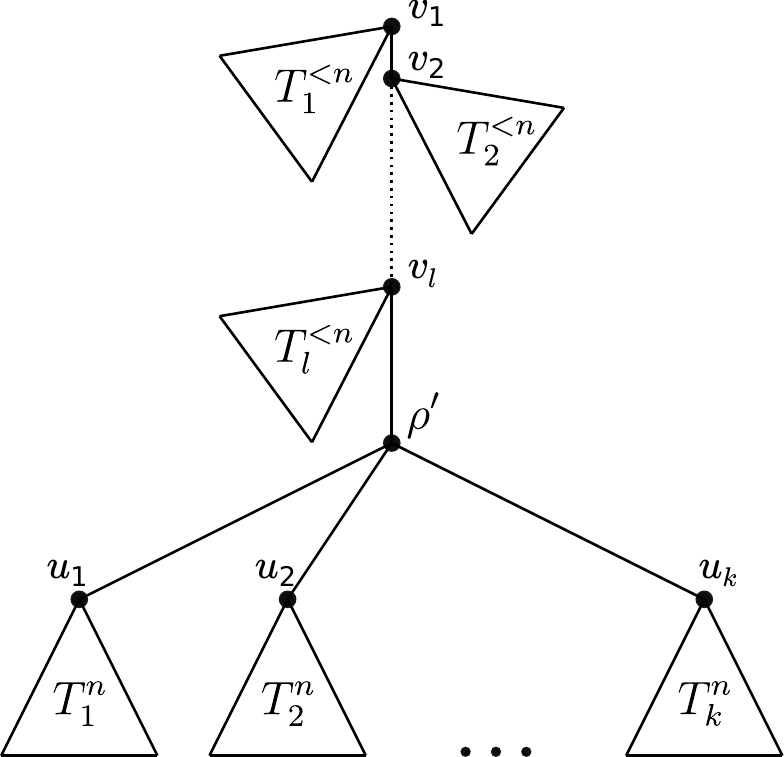}
      \end{center}
    \end{minipage}
    & & 
    \begin{minipage}{0.4\textwidth}
      \caption{Shown is a tree $(T^*_G,\sigma)$ that explains the graph
        $(G,\sigma)=\bigcup_{1\le i\le k}G((T_i^n,\sigma_i^n))\cup
        \bigcup_{1\le j\le l}G((T_j^{<n},\sigma_j^{<n}))$ such that each of
        the subtrees $(T_i^n,\sigma_i^n)$ and $(T_j^{<n},\sigma_j^{<n})$
        induces one connected component of $(G,\sigma)$. The subtree
        $(T_i^n,\sigma_i^n)$ explains the $n$-colored connected component
        $(G_i^{n},\sigma_i^{n})$ of $(G,\sigma)$. Each connected component
        $(G_j^{<n},\sigma_j^{<n})$ that does not contain all colors of $S$,
        is explained by a subtree $(T_j^{<n},\sigma_j^{<n})$. Any $n$-RBMG
        $(G,\sigma)$ can be explained by a tree of such form (cf.\ Lemma
        \ref{lem:tree}). See Fig.\ \ref{fig:tree_ex} for an explicit
        example of such a tree $(T^*_G,\sigma)$.}
    \label{fig:tree_constr}
    \end{minipage}
  \end{tabular}
\end{figure}

We then proceed by demonstrating that it suffices to consider trees of the
form $(T^*_G,\sigma)$. The key result, Lemma~\ref{lem:tree} in the
technical part, shows that an undirected vertex colored graph $(G,\sigma)$
whose connected components are RBMGs can be explained by $(T^*_G,\sigma)$
provided $(G,\sigma)$ contains a connected component in which all colors
are represented. It is not hard to check that these conditions are also
necessary.

\begin{ctheorem}{\ref{thm:connected}}
  An undirected leaf-colored graph $(G,\sigma)$ is an RBMG if and only if
  each of its connected components is an RBMG and at least one connected
  component contains all colors.
\end{ctheorem}

\begin{figure}
  \begin{center}
    \includegraphics[width=\textwidth]{./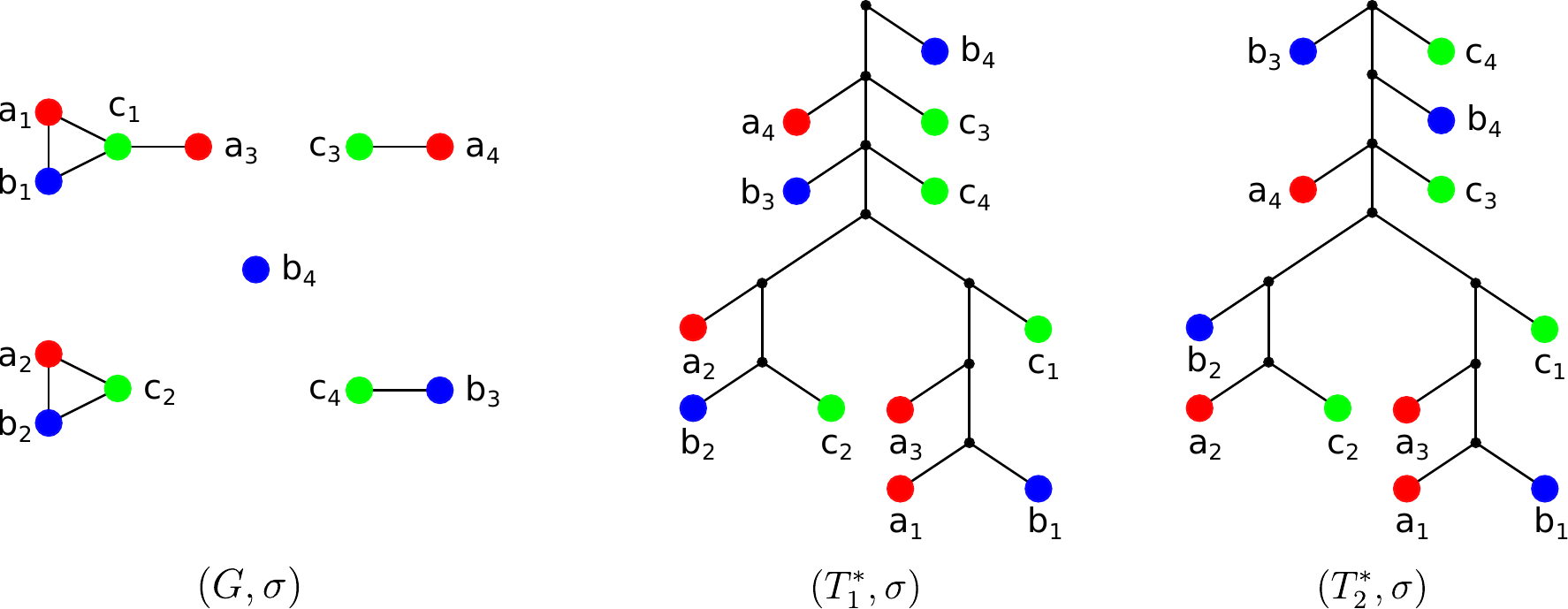}      
  \end{center}
  \caption{The trees $(T_1^*,\sigma)$ and $(T_2^*,\sigma)$ both explain the
    3-RBMG $(G,\sigma)$ with \NEW{five} connected components and are of the
    form $(T^*_G,\sigma)$.}
  \label{fig:tree_ex}
\end{figure}

The existence of an connected component using all colors is crucial for the
statement above. Consider, for instance, an edge-less graph on two
vertices, where both vertices have different color. Each of the two
connected components is clearly an RBMG, however, one easily checks that
their disjoint union is not.
\begin{ccorollary}{\ref{cor:tree2}}
  Every RBMG can be explained by a tree of the form $(T^*_G,\sigma)$.
\end{ccorollary}
By Theorem \ref{thm:connected}, it suffices to consider each connected
component of an RBMG separately. In the following section, hence, we will
consider the characterization of connected RBMGs.

\section{Three Classes of Connected 3-RBMGs}
\label{sect:classes-new}

Reciprocal best match graphs on two colors convey very little structural
information. Their connected components are either single vertices or
complete bipartite graphs \cite[Cor.\ 6]{Geiss:18x}, which reduce to a
$K_2$ with two distinctly colored vertices under
$\sthin$-thinness. Connected 3-RBMGs, in contrast, can be quite complex.
As we shall see, they fall into three distinct classes which correspond to
trees with different shapes. In the next section we will make use of
3-RBMGs to characterize general $n$-RBMGs.

Our starting point are three types of leaf-colored trees on three colors:
\begin{cdefinition}{\ref{def:3-col-tree}}
  Let $(T,\sigma)$ be a 3-colored tree with color set $S=\{r,s,t\}$.  The
  tree $(T,\sigma)$ is of
  \begin{description}
  \item[\textbf{Type \AX{\bf (I)}},] if there exists $v\in \child(\rho_T)$ such
    that $|\sigma(L(T(v)))|=2$ and
    $\child(\rho_T)\setminus \{v\}\subsetneq L$.
  \item[\textbf{Type \AX{\bf (II)}},] if there exists
    $v_1, v_2\in \child(\rho_T)$ such that
    $|\sigma(L(T(v_1)))|=|\sigma(L(T(v_2)))| = 2$,
    $\sigma(L(T(v_1)))\neq \sigma(L(T(v_2)))$ and
    $\child(\rho_T)\setminus \{v_1,v_2\}\subsetneq L$,
  \item[\textbf{Type \AX{\bf (III)}},] if there exists
    $v_1, v_2, v_3\in \child(\rho_T)$ such that
    $\sigma(L(T(v_1)))=\{r,s\}$, $\sigma(L(T(v_2)))=\{r,t\}$,
    $\sigma(L(T(v_3)))=\{s,t\}$, and
    $\child(\rho_T)\setminus \{v_1,v_2, v_3\} \subsetneq L$.
  \end{description}
\end{cdefinition}
Fig.\ \ref{fig:categories} illustrates these three tree types.

\begin{figure}
  \includegraphics[width=\textwidth]{./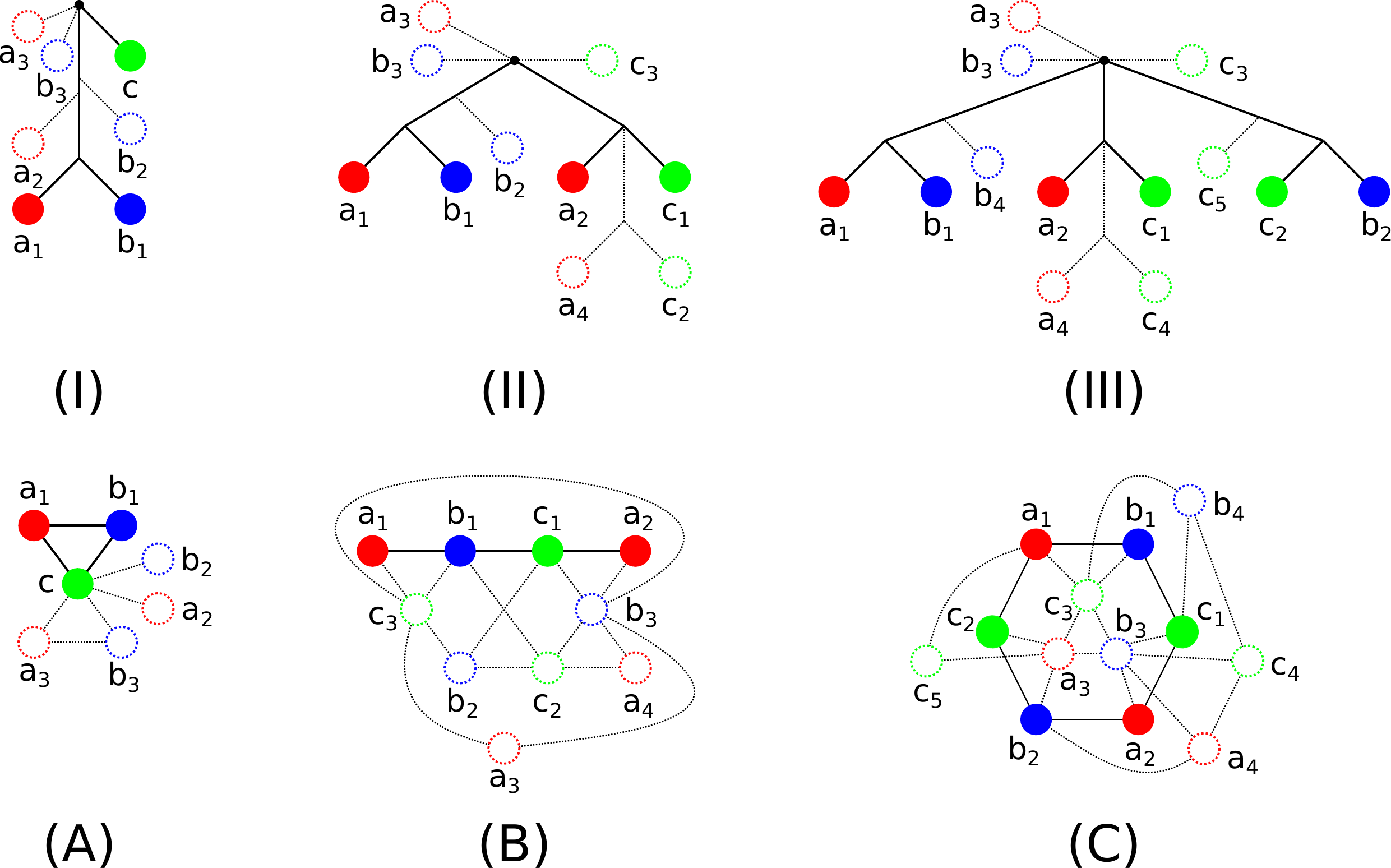}
  \centering
  \caption[]{The three categories of three-colored connected 3-RBMGs are
    shown on the bottom: (A) Contains a $K_3$ on three colors but no
    induced \NEW{$C_n$, $n\ge 5$} or $P_4$, (B) contains an induced $P_4$,
    whose endpoints have the same color, but no induced \NEW{$C_n$ with
      $n\ge 5$}, (C) contains a $C_6$ of the form $(r,s,t,r,s,t)$. The
    corresponding tree Types (I), (II), and (III) are shown on top. Solid
    lines represent edges and vertices that must necessarily be contained
    in the graph, dashed elements may be missing.}
  \label{fig:categories}
\end{figure}

Correspondingly, we distinguish three classes of 3-colored graphs:
\begin{cdefinition}{\ref{def:typesABC}}
  An undirected, connected graph $(G,\sigma)$ on three colors is of
  \begin{description}
  \item[\textbf{Type \AX{\bf (A)}}] if $(G,\sigma)$ contains a $K_3$ on
    three colors but no induced $P_4$, and thus also no induced \NEW{$C_n$,
      $n\ge 5$.}
  \item[\textbf{Type \AX{\bf (B)}}] if $(G,\sigma)$ contains an induced
    $P_4$ on three colors whose endpoints have the same color, but no
    no induced $C_n$ for $n\ge 5$.
  \item[\textbf{Type \AX{\bf (C)}}] if $(G,\sigma)$ contains an induced
    $C_6$ along which the three colors appear twice in the same
    permutation, i.e., $(r,s,t,r,s,t)$.
  \end{description}
\end{cdefinition}
The main result of this section is that each tree class explains the RBMGs
belonging to one of the classes of colored graphs. More precisely:
\begin{ctheorem}{\ref{thm:3c-types}}
  Let $(G,\sigma)$ be an $\sthin$-thin connected 3-RBMG.  Then $(G,\sigma)$
  is either of Type \AX{(A)}, \AX{(B)}, or \AX{(C)}.  An RBMG of Type
  \AX{(A)}, \AX{(B)}, and \AX{(C)}, resp., can be explained by a tree of
  Type \AX{(I)}, \AX{(II)}, and \AX{(III)}, respectively.
\end{ctheorem}

An undirected, colored graph $(G,\sigma)$ contains an induced $K_3$, $P_4$,
or $C_6$, respectively, if and only if $(G/\sthin,\sigma/\sthin)$ contains
an induced $K_3$, $P_4$, or $C_6$, resp., on the same colors (cf.\ Lemma
\ref{lem:sthin}). An immediate consequence of this fact is
\begin{ctheorem}{\ref{thm:thinABC}}
  A connected (not necessarily $\sthin$-thin) 3-RBMG $(G,\sigma)$ is either
  of Type \AX{(A)}, \AX{(B)}, or \AX{(C)}.
\end{ctheorem}

As a further consequence of Theorem \ref{thm:3c-types} and the well-known
properties of cographs \cite{Corneil:81} we obtain
\begin{cfact}{\ref{fact:cograph}}
  Let $(G,\sigma)$ be a connected, $\sthin$-thin 3-RBMG.  Then it is of
  Type \AX{(A)} if and only if it is a cograph.
\end{cfact}
This observation is of practical interest because orthology relations are
necessarily cographs \cite{Hellmuth:13a}. Hence, 3-RBMGs cannot perfectly
reflect orthology unless they are of Type \AX{(A)}.

So-called \emph{hub-vertices} play a key role for the characterization of
Type \AX{(A)} 3-RBMGs:
\begin{cdefinition}{\ref{def:hub}}
  Let $G=(V,E)$ be an undirected graph. A vertex $x\in V(G)$ such that
  $N(x)=V\setminus \{x\}$ is a \emph{hub-vertex}.
\end{cdefinition}

\begin{clemma}{\ref{lem:charA}} 
  A properly vertex colored, connected, $\sthin$-thin graph $(G,\sigma)$ on
  three colors with vertex set $L$ is a 3-RBMG of Type \AX{(A)} if and
  only if $G \notin \mathscr{P}_3$ and it satisfies the following
  conditions:
  \begin{enumerate}
  \item[\AX{(A1)}] $G$ contains a hub-vertex $x$, i.e.,
    $N(x)=V(G)\setminus \{x\}$
  \item[\AX{(A2)}] $|N(y)|<3$ for every $y\in V(G)\setminus \{x\}$.
  \end{enumerate}
\end{clemma}

We proceed by characterizing Type \AX{(B)} and \AX{(C)} 3-RBMGs. To this
end, we first introduce \emph{B-like} and \emph{C-like} graphs
$(G,\sigma)$:
\begin{cdefinition}{\ref{def:Ltsr}}
  Let $(G,\sigma)$ be an undirected, connected, properly colored,
  $\sthin$-thin graph with vertex set $L$ and color set
  $\sigma(L)=\{r,s,t\}$, and assume that $(G,\sigma)$ contains the induced
  path $P\coloneqq \langle \hat x_1 \hat y \hat z \hat x_2 \rangle$ with
  $\sigma(\hat x_1)=\sigma(\hat x_2)=r$, $\sigma(\hat y)=s$, and
  $\sigma(\hat z)=t$. Then $(G,\sigma)$ is \emph{B-like w.r.t.\ $P$} if (i)
  $N_{r}(\hat y)\cap N_{r}(\hat z)=\emptyset$, and (ii) $G$ does not
  contain an induced cycle \NEW{$C_n$, $n\ge 5$.}
\end{cdefinition}
An example is given in Fig.\ \ref{fig:NOindep-P-choice}.

\begin{figure}
  \begin{center}
    \includegraphics[width=0.8\textwidth]{./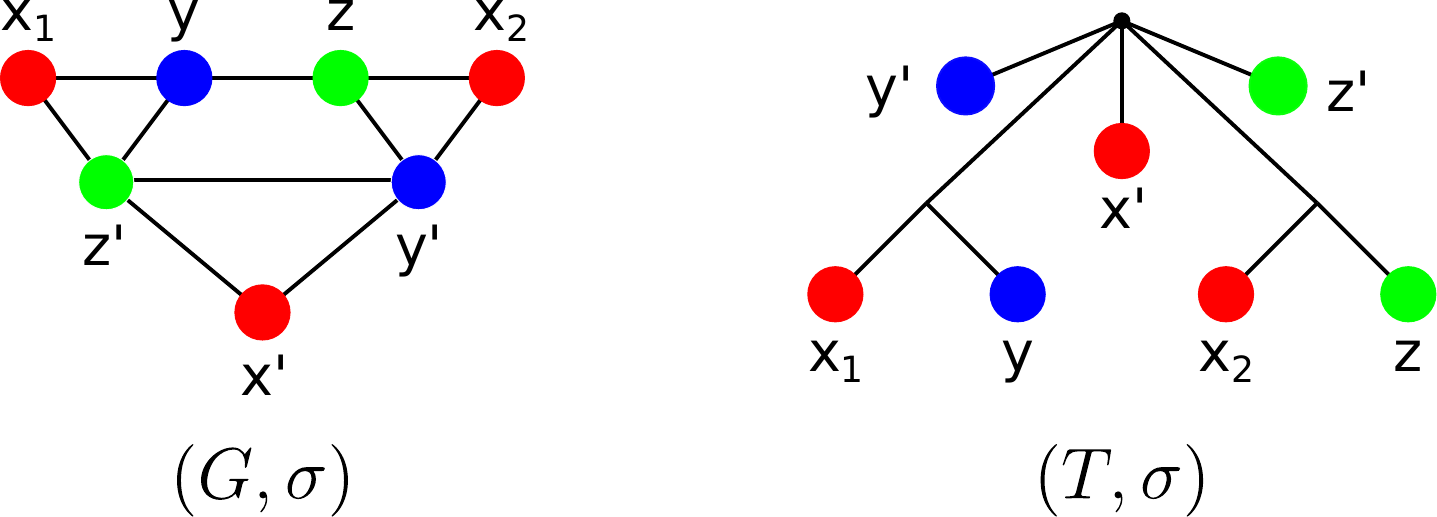}      
  \end{center}
  \caption{The graph $(G,\sigma)$ is a 3-RBMG since it is explained by
    $(T,\sigma)$. Moreover, $(G,\sigma)$ does not contain an induced
    \NEW{$C_n$, $n\ge 5$}
    but induced $P_4$s, thus it is of Type \AX{(B)}.  It is easy to see
    that $(G,\sigma)$ is B-like w.r.t.\ $\langle x_1 y z x_2\rangle$.
    However, $(G,\sigma)$ is not B-like w.r.t.\
    $\langle x_1 z' y' x_2\rangle$, since $ x' \in N_r(y')\cap N_r(z')$.  }
  \label{fig:NOindep-P-choice}
\end{figure}

For a $3$-colored, $\sthin$-thin graph $(G,\sigma)$ that is B-like w.r.t.\
the induced path
$P\coloneqq \langle \hat x_1 \hat y \hat z \hat x_2 \rangle$ we define the
following subsets of vertices:
\begin{align*} 	
  L_{t,s}^{P} \coloneqq & \{y \mid  \langle xy\hat z\rangle \in
                          \mathscr{P}_3 \text{ for any } x\in N_{r}(y)\}\\
  L_{t,r}^{P} \coloneqq & \{x\mid N_{r}(y)=\{x\} \text{ and }
                          \langle xy\hat z\rangle \in \mathscr{P}_3\}
                          \cup \\
                        & \{x\mid x\in L[r],\, N_{s}(x)=\emptyset,\,
                          L[s]\setminus L_{t,s}^P\neq\emptyset\}\\
  L_{s,t}^{P} \coloneqq & \{z \mid \langle xz\hat y\rangle \in \mathscr{P}_3
                          \text{ for any } x\in N_{r}(z)\} \\ 
  L_{s,r}^{P} \coloneqq & \{x\mid N_{r}(z)=\{x\} \text{ and }
                            xz\hat y \in \mathscr{P}_3\} \cup \\
                        & \{x\mid x\in L[r], N_{t}(x)=\emptyset,
                          L[t]\setminus L_{s,t}^P\neq\emptyset\}\\
\end{align*}
The first subscripts $t$ and $s$ refer to the color of the vertices
$\hat z$ and $\hat y$, respectively, that ``anchor'' the $P_3$s within the
defining path $P$. The second index identifies the color of the vertices in
the respective set, since by definition we have
$L_{t,s}^{P}\subseteq L[s]$, $L_{t,r}^{P}\subseteq L[r]$,
$L_{s,t}^{P}\subseteq L[t]$ and $L_{s,r}^{P}\subseteq L[r]$. Furthermore,
we set
\begin{align*}
  L_{t}^{P} \coloneqq & L_{t,s}^{P} \cup L_{t,r}^{P} \\
  L_{s}^{P} \coloneqq & L_{s,t}^{P} \cup L_{s,r}^{P} \\
  L_*^P     \coloneqq & L \setminus (L_t^P \cup L_s^P). 
\end{align*}
By definition, $L_{s,r}^P=L_s^P\cap L[r]$, $L_{t,r}^P=L_t^P\cap L[r]$,
$L_{s,t}^P=L_s^P\cap L[t]$, and $L_{t,s}^P=L_t^P\cap L[s]$. For simplicity
we will often write $L_*^P[i] \coloneqq L_*^P\cap L[i]$ for  $i\in \{s,t\}$.

The construction of Type \AX{(B)} 3-RBMGs can be extended to a similar
one of Type \AX{(C)} 3-RBMGs.

\begin{cdefinition}{\ref{def:C-like}}
  Let $(G,\sigma)$ be an undirected, connected, properly colored,
  $\sthin$-thin graph.  Moreover, assume that $(G,\sigma)$ contains the
  hexagon
  $H\coloneqq \langle \hat x_1 \hat y_1 \hat z_1 \hat x_2 \hat y_2 \hat z_2
  \rangle$ such that $\sigma(\hat x_1)=\sigma(\hat x_2) =r$,
  $\sigma(\hat y_1)=\sigma(\hat y_2)=s$, and
  $\sigma(\hat z_1)=\sigma(\hat z_2) =t$.  Then, $(G,\sigma)$ is
  \emph{C-like w.r.t.\ $H$} if there is a vertex
  $v\in\{\hat x_1 ,\hat y_1, \hat z_1, \hat x_2 ,\hat y_2, \hat z_2\}$ such
  that $|N_c(v)|>1$ for some color $c\neq \sigma(v)$.  Suppose that
  $(G,\sigma)$ is C-like w.r.t.\
  $H =\langle \hat x_1 \hat y_1 \hat z_1 \hat x_2 \hat y_2 \hat z_2
  \rangle$ and assume w.l.o.g. that $v=\hat x_1$ and $c=t$, i.e.,
  $|N_t(\hat x_1)|>1$. Then we define the following sets:
  \begin{align*}
    L_t^H &\coloneqq \{x \mid \langle x \hat z_2 \hat y_2 \rangle
            \in \mathscr{P}_3\}\cup
            \{y \mid \langle y \hat z_1 \hat x_2 \rangle\in \mathscr{P}_3\} \\
    L_s^H &\coloneqq \{x \mid \langle x \hat y_2 \hat z_2 \rangle
            \in \mathscr{P}_3\} \cup
            \{z \mid \langle z \hat y_1 \hat x_1 \in \rangle
            \mathscr{P}_3\} \\
    L_r^H &\coloneqq \{ y \mid \langle y \hat x_2 \hat z_1 \rangle
            \in \mathscr{P}_3\} \cup
            \{z \mid \langle z \hat x_1 \hat y_1 \rangle\in \mathscr{P}_3\}\\
    L_*^H &\coloneqq V(G)\setminus (L_r^H\cup L_s^H\cup L_t^H).
  \end{align*}		
\end{cdefinition}

The main result of this section is the following, rather technical, result,
which provides a complete characterization of 3-colored RBMGs.
\begin{ctheorem}{\ref{thm:char3cBMG}}
  Let $(G,\sigma)$ be an undirected, connected, $\sthin$-thin, and properly
  3-colored graph with color set $S=\{r,s,t\}$ and let $x\in L[r]$,
  $y\in L[s]$ and $z\in L[t]$.  Then $(G,\sigma)$ is a 3-RBMG if and only
  if one of the following is true:
  \begin{enumerate}
  \item Conditions \AX{(A1)} and \AX{(A2)} are satisfied, or
  \item Conditions \AX{(B1)} to \AX{(B3.b)} are satisfied, after possible
    permutation of the colors, where:
  \begin{description}
  \item[\AX{(B1)}] $(G,\sigma)$ is B-like w.r.t.\
    $P = \langle\hat x_1\hat y\hat z \hat x_2 \rangle$ for some
    $\hat x_1, \hat x_2\in L[r]$, $\hat y\in L[s]$, $\hat z \in L[t]$,
  \item[\AX{(B2.a)}] If $x\in L_*^P$, then $N(x)=L_*^P\setminus\{x\}$,
  \item[\AX{(B2.b)}] If $x\in L_t^P$, then $N_s(x)\subset L_t^P$ and
    $|N_s(x)|\le 1$, and $N_t(x)= L_*^P[t]$,
  \item[\AX{(B2.c)}] If $x\in L_s^P$, then
    $N_t(x)\subset L_s^P$ and $|N_t(x)|\le 1$, and $N_s(x)= L_*^P[s]$
  \item[\AX{(B3.a)}] If $y\in L_*^P$, then
    $N(y)=L_s^P\cup (L_*^P\setminus\{y\})$, 
  \item[\AX{(B3.b)}] If $y\in L_t^P$, then
    $N_r(y)\subset L_t^P$ and $|N_r(y)|\le 1$, and $N_t(y)=L[t]$,
  \end{description} 
  or
\item $(G,\sigma)$ is either a hexagon or $|L|>6$ and, up to permutation of
  colors, the following conditions are satisfied:
  \begin{description}
  \item[\AX{(C1)}] $(G,\sigma)$ is C-like w.r.t.\ the hexagon
    $H = \langle \hat x_1\hat y_1\hat z_1 \hat x_2 \hat y_2 \hat z_2
    \rangle$ for some $\hat x_i\in L[r]$, $\hat y_i\in L[s]$,
    $\hat z_i \in L[t]$ with $|N_t(\hat x_1)|>1$,
  \item[\AX{(C2.a)}] If $x\in L_*^H$, then
    $N(x)=L_r^H\cup (L_*^H\setminus\{x\})$,
  \item[\AX{(C2.b)}] If $x\in L_t^H$, then
    $N_s(x)\subset L_t^H$ and $|N_s(x)|\le 1$, and
    $N_t(x)= L_*^H[t]\cup L_r^H[t]$,
  \item[\AX{(C2.c)}] If $x\in L_s^H$, then
    $N_t(x)\subset L_s^H$ and $|N_t(x)|\le 1$, and
    $N_s(x)= L_*^H[s]\cup L_r^H[s]$
  \item[\AX{(C3.a)}] If $y\in L_*^H$, then
    $N(y)=L_s^H\cup (L_*^H\setminus\{y\})$,
  \item[\AX{(C3.b)}] If $y\in L_t^H$, then
    $N_r(y)\subset L_t^H$ and $|N_r(y)|\le 1$, and
    $N_t(y)=L_*^H[t]\cup L_s^H[t]$,
  \item[\AX{(C3.c)}] If $y\in L_r^H$, then
    $N_t(y)\subset L_r^H$ and $|N_t(y)|\le 1$, and
    $N_r(y)=L_*^H[r]\cup L_s^H[r]$.
   \end{description} 
 \end{enumerate}
\end{ctheorem}

\NEW{%
\begin{remark}
  As a consequence of Lemma \ref{lem:sthin}, every (not necessarily $\sthin$-thin) Type \AX{(B)} 3-RBMG
  $(G,\sigma)$ contains an induced path $\langle xyzx'\rangle$ with
  $\sigma(x)=\sigma(x')=r$, $\sigma(y)=s$, and $\sigma(z)=t$ for distinct
  colors $r,s,t$ such that $N_r(y)\cap N_r(z)=\emptyset$.  Similarly, every
  Type \AX{(C)} 3-RBMG $(G,\sigma)$ contains a hexagon
  $\langle xyzx'y'z'\rangle$ with $\sigma(x)=\sigma(x')=r$,
  $\sigma(y)=\sigma(y')=s$, and $\sigma(z)=\sigma(z')=t$ for distinct
  colors $r,s,t$ such that $|N_c(v)|>1$ for some $v\in\{x,x',y,y',z,z'\}$
  and $c\neq\sigma(v)$.
\end{remark}
}

In the technical part we describe an algorithm that determines whether a
given properly 3-colored connected graph $(G,\sigma )$ is a 3-RBMG and, in
the positive case, returns a tree $(T,\sigma )$ that explains $(G,\sigma)$
in $O(mn^2+m')$ time, where $n=|V(G/\sthin)|$, $m=|E(G/\sthin)|$ and
$m'= |E(G)|$ (cf.\ Algorithm \ref{alg:3RBMG}, Lemmas
\ref{lem:runtime:alg-3rbmg} and \ref{lem:correct:alg-3rbmg}).

In addition, we provide in Section \ref{sect:P4} in the technical part
results about the structure of induced $P_4$s in RBMGs.  In particular,
those $P_4$s can be classified as so-called \emph{good}, \emph{bad}, and
\emph{ugly} quartets and the sets $L_t^P$, $L_s^P$, and $L_*^P$ can be
determined by good quartets and are independent of the choice of the
respective good quartet.  As shown by \citet{GGL:19}, good quartets also
play an important role for the detection of false positive and false
negative orthology assignments.

\section{Characterization of $n$-RBMGs}\label{sect:n-new}
The first part of this section is dedicated to the characterization of
$n$-RBMGs by combining the sets of least resolved trees for the induced
3-RBMGs for any triplet of colors. It remains, however, an open question if
the problem of recognizing $n$-RBMGs can be solved in polynomial
time. Because of the importance of cographs in best-match-based orthology
assignment methods, we investigate \emph{co-RBMGs}, i.e., $n$-RBMGs that
are cographs, in more detail and provide a characterization of this
subclass.

\subsection{The General Case: Combination of 3-RBMGs} 
\label{ssec:n-RBMG}

It will be convenient in the following to use the simplified notation
\begin{cdefinition}{\ref{def:restrict-3}}
  $(G_{rst},\sigma_{rst})\coloneqq (G[L[r]\cup L[s]\cup L[t]],
  \sigma_{|L[r]\cup L[s]\cup L[t]})$ and
  $(T_{rst},\sigma_{rst})\coloneqq (T_{|L[r]\cup L[s]\cup L[t]},
  \sigma_{|L[r]\cup L[s]\cup L[t]})$ for any three colors $r,s,t\in S$.
\end{cdefinition}
The restriction of a BMG $\G(T,\sigma)$ to a subset $S'\subset S$ of colors
is an induced subgraph of $\G(T,\sigma)$ explained by the restriction of
$(T,\sigma)$ to the leaves with colors in $S'$, an thus again a BMG
\cite[Observation 1]{Geiss:18x}.  Since $G(T,\sigma)$ is the symmetric part
of $\G(T,\sigma)$, it inherits this property. In particular, we have
\begin{cfact}{\ref{obs:induced_sub}}
  If $(G,\sigma)$ is an $n$-RBMG, $n\ge 3$ explained by $(T,\sigma)$, then
  for any three colors $r,s,t \in S$, the restricted tree
  $(T_{rst},\sigma_{rst})$ explains $(G_{rst},\sigma_{rst})$, and
  $(G_{rst},\sigma_{rst})$ is an induced subgraph of $(G,\sigma)$.
\end{cfact}

The key idea of characterizing $n$-RBMGs is to combine the information
contained in their 3-colored induced subgraphs $(G_{rst},\sigma_{rst})$.
Observation \ref{obs:induced_sub} plays a major role in this context. It
shows that $(G_{rst},\sigma_{rst})$ is an induced subgraph of an $n$-RBMG
and it is always a 3-RBMG that is explained by $(T_{rst},\sigma_{rst})$.
Unfortunately, the converse of Observation \ref{obs:induced_sub} is in
general not true. Fig.~\ref{fig:supertree_counterex} shows a 4-colored
graph that is not a 4-RBMG while each of the four subgraphs induced by a
triplet of colors is a 3-RBMG.  We can, however, rephrase Observation
\ref{obs:induced_sub} in the following way:
\begin{cfact}{\ref{fact:induced_sub}}
  Let $(G,\sigma)$ be an $n$-RBMG for some $n\ge 3$. Then, $(T,\sigma)$
  explains $(G,\sigma)$ if and only if $(T_{rst},\sigma_{rst})$ explains
  $(G_{rst},\sigma_{rst})$ for all triplets of colors $r,s,t \in S$.
\end{cfact}

\begin{figure}
  \begin{center}
    \includegraphics[width=\textwidth]{./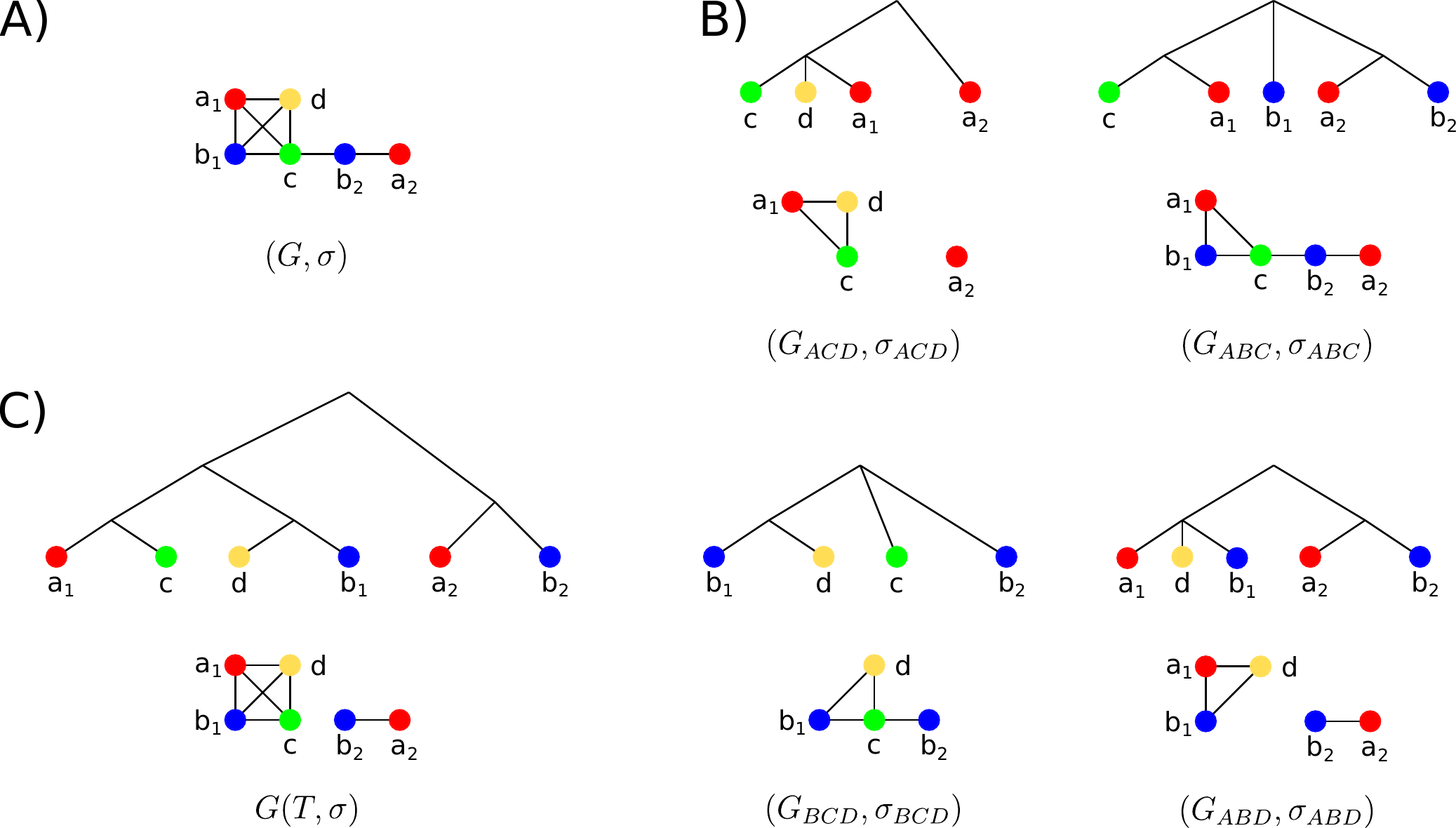}      
  \end{center}
  \caption{The 4-colored graph $(G,\sigma)$ in (A) with color set
    $S=\{A,B,C,D\}$ is not an RBMG. All four subgraphs induced by three of
    the four colors, however, are (not necessarily connected)
    3-RBMGs. These are explained by the unique least resolved trees in
    (B). Because of the uniqueness of the least resolved trees on three
    colors, the tree explaining $(G,\sigma)$ must display these four
    trees. The tree $(T,\sigma)$ in Panel (C) is the least resolved
    supertree of $\mathcal{P}\coloneqq \bigcup_{r,s,t \in S}T_{rst}$. However,
    $(T,\sigma)$ does not explain $(G,\sigma)$ since the edge $b_2c$ is not
    contained in $G(T,\sigma)$. Clearly, there exists no refinement
    $(T',\sigma)$ of $(T,\sigma)$ such that $b_2c\in E(G(T',\sigma))$ and
    therefore $(G,\sigma)$ is not an RBMG.}
  \label{fig:supertree_counterex}
\end{figure} 

\begin{cdefinition}{\ref{def:Grst-Trst}} 
  Let $(G,\sigma)$ be an $n$-RBMG. Then the \emph{tree set of $(G,\sigma)$}
  is the set
  \begin{equation*}
    \mathcal{T}(G,\sigma)\coloneqq \{(T,\sigma) \mid \NEW{(T,\sigma)}
    \text{ is least resolved and } G(T,\sigma)=(G,\sigma)\}
  \end{equation*} of all leaf-colored trees
  explaining $(G,\sigma)$. Furthermore, we write
  $\mathcal{T}_{rst}(G,\sigma)$ for the set of all least resolved trees
  explaining the induced subgraphs $(G_{rst},\sigma_{rst})$.
\end{cdefinition}
It is tempting to conjecture that the existence of a supertree for the tree
set
$\mathcal{P}\coloneqq
\{T\in\mathcal{T}_{rst}(G_{rst},\sigma_{rst}),\,r,s,t\in S\}$ is sufficient
for $(G,\sigma)$ to be an $n$-RBMG. However, this is not the case as shown
by the counterexample in Fig.\ \ref{fig:supertree_counterex}. 
  
\begin{ctheorem}{\ref{thm:n-crbmg}}
  A (not necessarily connected) undirected colored graph $(G,\sigma)$ is an
  $n$-RBMG if and only if (i) all induced subgraphs
  $(G_{rst},\sigma_{rst})$ on three colors are 3-RBMGs and (ii) there
  exists a supertree $(T,\sigma)$ of \NEW{the} tree set
  $\mathcal{P}\coloneqq \{T\in\mathcal{T}_{rst}(G,\sigma) \mid r,s,t\in
  S\}$, such that $G(T,\sigma)=(G,\sigma)$.
\end{ctheorem} 

Whether the recognition problem of $n$-RBMGs is NP-hard or not may
strongly depend on the number of least resolved trees for a given 3-colored
induced subgraph.  However, even if this number is polynomial bounded in
the input size (e.g.\ number of vertices), the number of possible (least
resolved) trees that explain a given $n$-RBMG, may grow exponentially.  In
particular, since the order of the inner nodes in the 2-colored subtrees of
Type \AX{(I)}, \AX{(II)}, and \AX{(III)} trees is in general arbitrary,
determining the number of least resolved trees seems to be far from
trivial. We therefore leave it as an open problem.

\subsection{Characterization of $n$-RBMGs that are cographs}

Probably the most important application of reciprocal best matches is
orthology detection. Since orthology relations are cographs
\cite{Hellmuth:13a}, it is of particular interest to characterize RBMGs of
this type. Since cographs are hereditary (see e.g.\ \cite{Sumner:74} where
they are called Hereditary Dacey graphs), one expects their 3-colored
restrictions to be of Type \AX{(A)}. The next theorem shows that this
intuition is essentially correct. It is based on the following observation
about cographs:
\begin{cfact}{\ref{fact:ortho-cograph}}
  Any undirected colored graph $(G,\sigma)$ is a cograph if and only if the
  corresponding $\sthin$-thin graph $(G/\sthin,\sigmasthin)$ is a cograph.
\end{cfact} 

\begin{ctheorem}{\ref{thm:cographA}}   
  Let $(G,\sigma)$ be an $n$-RBMG with $n\ge3$, and denote by
  $(G'_{rst},\sigma'_{rst})\coloneqq (G_{rst}/S,\sigma_{rst}/S)$ the
  $\sthin$-thin version of the 3-RBMG that is obtained by restricting
  $(G,\sigma)$ to the colors $r$, $s$, and $t$.  Then $(G,\sigma)$ is a
  cograph if and only if every \emph{3-colored} connected component of
  $(G'_{rst},\sigma'_{rst})$ is a 3-RBMG of Type \AX{(A)} for all triples
  of distinct colors $r,s,t$.
\end{ctheorem} 

\NEW{%
\begin{remark}
  	Theorem \ref{thm:cographA} has been stated for $\sthin$-thin induced
    3-RBMGs only. However, as a consequence of Lemma \ref{lem:sthin}, it
    extends to general RBMGs, i.e., an $n$-RBMG $(G,\sigma)$ is a cograph
    if and only if every 3-colored connected component of
    $(G_{rst},\sigma_{rst})$ is a Type \AX{(A)} 3-RBMG for all triplets of
    distinct colors $r,s,t$.
\end{remark}
}

\subsection{Hierarchically Colored Cographs}

Thm.\ \ref{thm:cographA} yields a polynomial time algorithm for recognizing
$n$-RBMGs that are cographs. It is not helpful, however, for the
reconstruction of a tree $(T,\sigma)$ that explains such a graph. Below, we
derive an alternative characterization in terms of so-called hierarchically
colored cographs (\hc-cographs).  As we shall see, the cotrees of
\hc-cographs explain a given $n$-RBMG and can be constructed in polynomial
time.
\begin{cdefinition}{\ref{def:co-RBMG}}
  A graph that is both a cograph and an RBMG is a \emph{co-RBMG}.
\end{cdefinition}

Cographs are constructed using joins and disjoint unions. We extend these
graph operations to vertex colored graphs.
\begin{cdefinition}{\ref{def:coRBMG2}}
  Let $(H_1,\sigma_{H_1})$ and $(H_2,\sigma_{H_2})$ be two vertex-disjoint
  colored graphs.  Then
  $(H_1,\sigma_{H_1}) \join (H_2,\sigma_{H_2}) \coloneqq (H_1\join
  H_2,\sigma)$ and
  $(H_1,\sigma_{H_1}) \cupdot (H_2,\sigma_{H_2}) \coloneqq (H_1\cupdot
  H_2,\sigma)$ denotes their join and union, respectively, where
  $\sigma(x) = \sigma_{H_i}(x)$ for every $x\in V(H_i)$, $i\in\{1,2\}$.
\end{cdefinition}

\begin{cdefinition}{\ref{def:hc-cograph}}
  An undirected colored graph $(G,\sigma)$ is a \emph{hierarchically
    colored cograph (\hc-cograph)} if 
 \begin{itemize}
  \item[\AX{(K1)}] $(G,\sigma)=(K_1,\sigma)$, i.e., a colored vertex, or 
  \item[\AX{(K2)}] $(G,\sigma)= (H,\sigma_H) \join (H',\sigma_{H'})$ and
    $\sigma(V(H))\cap \sigma(V(H'))=\emptyset$, or 
  \item[\AX{(K3)}] $(G,\sigma)= (H,\sigma_H) \cupdot (H',\sigma_{H'})$ and
    $\sigma(V(H))\cap \sigma(V(H')) \in \{\sigma(V(H)),\sigma(V(H'))\}$,
\end{itemize}
where both $(H,\sigma_H)$ and $(H',\sigma_{H'})$ are \hc-cographs. 
For the color-constraints (cc) in \AX{(K2)} and \AX{(K3)}, we simply write
\AX{(K2cc)} and \AX{(K3cc)}, respectively.
\end{cdefinition}
Omitting the color-constraints reduces Def.\ \ref{def:hc-cograph} to Def.\
\ref{def:cograph}. Therefore we have
\begin{cfact}{\ref{fact:hc-cographx}}
  If $(G,\sigma)$ is an \hc-cograph, then $G$ is cograph.
\end{cfact}

The recursive construction of an \hc-cograph $(G,\sigma)$ according to
Def.\ \ref{def:hc-cograph} immediately produces a binary \hc-cotree
$T^G_{\hc}$ corresponding to $(G,\sigma)$. The construction is essentially
the same as for the cotree of a cograph (cf.\ \citet[Section
3]{Corneil:81}): Each of its inner vertices is labeled by $1$ for a $\join$
operation and $0$ for a disjoint union $\cupdot$, depending on whether
\AX{(K2)} or \AX{(K3)} \NEW{is} used \NEW{in the} construction steps. We write
$t:V^0(T^G_{\hc})\to\{0,1\}$ for the labeling of the inner vertices. The
recursion terminates with a leaf of $T^G_{\hc}$ whenever a colored
single-vertex graph, i.e., \AX{(K1)} is reached. We therefore identify the
leaves of $T^G_{\hc}$ with the vertices of $(G,\sigma)$. The binary
\hc-cograph $(T^G_{\hc},t,\sigma)$ with leaf coloring $\sigma$ and labeling
$t$ at its inner vertices uniquely determines $(G,\sigma)$, i.e.,
$xy\in E(G)$ if and only if $t(\lca(x,y)) =1$.

By construction, $(T^G_{\hc},t)$ is a not necessarily discriminating cotree
for $G$. An example for different constructions of $(T^G_{\hc},t,\sigma)$
based on the particular \hc-cograph representation of $(G,\sigma)$ is given
in Fig.\ \ref{fig:hc-cograph}.

While the cograph property is hereditary, this is no longer true for
\hc-cographs, i.e., an \hc-cograph may contain induced subgraphs that are
not \hc-cographs. As an example, consider the three single vertex graphs
$(G_i,\sigma_i)$ with $V_i=\{i\}$ and colors $\sigma_1(1)=r$ and
$\sigma_2(2)=\sigma_3(3)=s \neq r$. Then
$(G,\sigma)=((G_1,\sigma_1)\join(G_2,\sigma_2))\cupdot (G_3,\sigma_3)$ is
an \hc-cograph. However, the induced subgraph
$(G,\sigma)[1,3] = (G_1,\sigma_1) \cupdot (G_3,\sigma_3)$ is not an
\hc-cograph, since $\sigma_1(V_1)\cap \sigma_3(V_3) =\emptyset$ and hence,
$(G,\sigma)[1,3]$ does not satisfy Property \AX{(K3cc)}.

Both $\join$ and $\cupdot$ are commutative and associative operations on
graphs. For a given cograph $G$, hence, alternative binary cotrees may
exist that can be transformed into each other by applying the commutative
or associative laws. This is no longer true for \hc-cographs as a
consequence of the color constraints. There are no restrictions on
commutativity, i.e., if $(G,\sigma)$ can be obtained as the join
$(H,\sigma_{H})\join(H',\sigma_{H'})$, equivalently we have
$(G,\sigma)=(H',\sigma_{H'})\join(H,\sigma_{H})$. The same holds for the
disjoint union $\cupdot$.  If $(G,\sigma)$ is obtained as
$(H,\sigma_{H})\join\big((H',\sigma_{H'})\join(H'',\sigma_{H''})\big)$,
i.e., if $(H',\sigma_{H'})\join(H'',\sigma_{H''})$ is also an \hc-cograph,
then the color sets of $H$, $H'$, and $H''$ must be disjoint by
Def.~\ref{def:hc-cograph}, and thus $(H,\sigma_H)\join(H',\sigma_{H'})$ is
also an \hc-cograph. Condition \AX{K3cc}, however, is not so well-behaved:
\begin{example}\label{exmpl:brackets}
  Consider the single vertex graphs $(G_i,\sigma_i)$ with vertex set
  $V_i=\{i\}$, $1\leq i \leq 4$ and colors $\sigma(i)=r$ if $i$ is odd and
  $\sigma(i)=s\neq r$ if $i$ is even. Consider the graph
  $G= G_1\cupdot(G_2\cupdot(G_3\join G_4))$. By construction
  $(G_3\join G_4, \sigma_{|\{3,4\}})$ is an \hc-cograph because
  $\sigma(V_3)\cap \sigma(V_4)=\emptyset$ and thus \AX{(K2cc)} is
  satisfied. Also, $(G_2\cupdot(G_3\join G_4)),\sigma_{|\{2,3,4\}})$ is an
  \hc-cograph since
  $\sigma(V_2)=\{s\}\subseteq \sigma(V_3\cup V_4)=\{r,s\}$ and thus
  \AX{(K3cc)} is satisfied. Checking \AX{(K3cc)} again, we verify that
  $(G,\sigma)$ is an \hc-cograph. By associativity of $\join$ and
  $\cupdot$, we also have
  $G'=(G_1 \cupdot G_2) \cupdot (G_3\join G_4) = G$. However,
  $(G_1 \cupdot G_2, \sigma_{\{1,2\}})$ is not an \hc-cograph because
  $\sigma(V_1)\cap \sigma(V_2)=\emptyset$ implies that
  $(G_1 \cupdot G_2, \sigma_{\{1,2\}})$ does not satisfy Property
  \AX{(K3cc)}.
\end{example}

As a consequence, we cannot simply contract edges in the \hc-cotree
$T^G_{\hc}$ with incident vertices labeled by the $\cupdot$ operation. In
other words, it is not sufficient to use discriminating trees to represent
\hc-cotrees. Moreover, not every (binary) tree with colored leaves and
internal vertices labeled with $\join$ or $\cupdot$ (which specifies a
cograph) determines an \hc-cotree, because in addition the
color-restrictions \AX{(K2cc)} and \AX{(K3cc)} must be satisfied for each
internal vertex.

\begin{clemma}{\ref{lem:hc-properties}}
  Every \hc-cograph $(G,\sigma)$ is a properly colored cograph.
\end{clemma}

Not every properly colored cograph is an \hc-cograph, however. The simplest
counterexample is $\overline{K_2}=K_1\cupdot K_1$ with two differently
colored vertices, violating \AX{(K3cc)}. The simplest connected
counterexample is the 3-colored $P_3$ since the decomposition
$P_3=(K_1\cupdot K_1)\join K_1$ is unique, and involves the non-\hc-cograph
$K_1\cupdot \NEW{K_1}$ with two distinct colors as a factor in the join.
The next result shows that co-RBMGs and \hc-cographs are actually
equivalent.
\begin{ctheorem}{\ref{thm:hc-cograph}/\ref{thm:explainThc}}
  A vertex labeled graph $(G,\sigma)$ is a co-RBMG if and only if it is an
  \hc-cograph. Moreover, every co-RBMG $(G,\sigma)$ is explained by its cotree
  $(T^G_{\hc},\sigma)$.
\end{ctheorem}

\begin{figure}
  \begin{center}
    \includegraphics[width=\textwidth]{./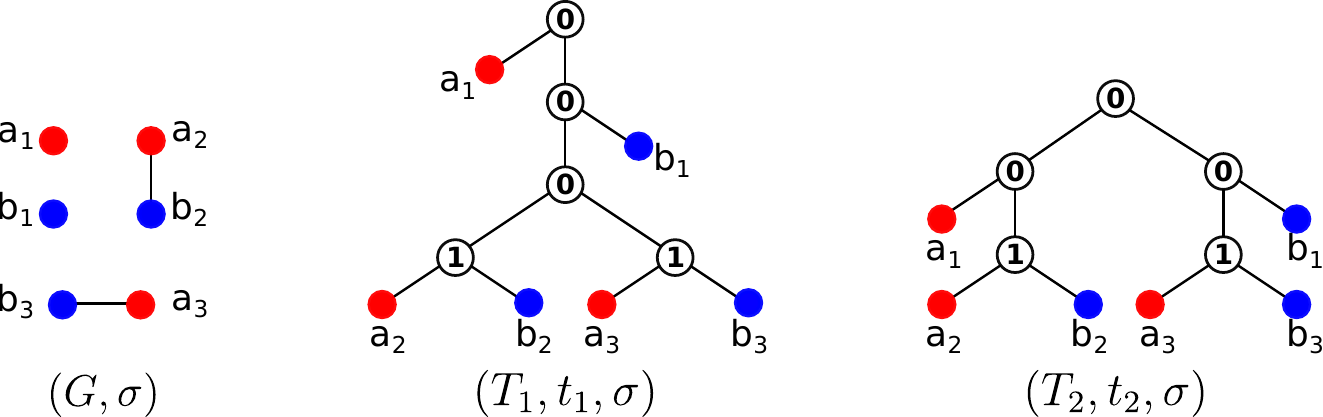}
  \end{center}
  \caption{The graph $(G,\sigma)$ is an \hc-cograph and, by Thm.\
    \ref{thm:hc-cograph}, a co-RBMG.  The trees $(T_1,t_1,\sigma)$ and
    $(T_2,t_2,\sigma)$ correspond to two possible cotrees
    $(T_{\hc}^G,t_i,\sigma)$ that explain $(G,\sigma)$. The inner labels
    ``$0$'' and ``$1$'' in the cotrees correspond to the values of the maps
    $t_i:V^0\to \{0,1\}$, $i=1,2$, such that $xy\in E(G)$ if and only if
    $t(\lca(x,y)) =1$.  Let $\mathscr{G}_{x} = ((\{x\},\emptyset),\sigma_x)$
    be the colored single vertex graph $K_1$ with $\sigma_x(x) = \sigma(x)$
    for each $x\in \{a_1,a_2,a_3,b_1,b_2,b_3\}$ as indicated in the figure.
    The tree $(T_1,t_1,\sigma)$ is constructed based on the valid
    \hc-cograph representation
    $G=\mathscr{G}_{a_1} \cupdot (\mathscr{G}_{b_1} \cupdot
    ((\mathscr{G}_{a_2}\join \mathscr{G}_{b_2})
    \cupdot(\mathscr{G}_{a_3}\join \mathscr{G}_{b_3})))$.  Here,
    $\mathscr{G}_{a_1}$ plays the role of $\mathscr{G}_{\ell}$ as in Lemma
    \ref{lem:minimal} in the technical part.  The tree $(T_2,t_2,\sigma)$
    is constructed based on the valid \hc-cograph representation
    $G=(\mathscr{G}_{a_1} \cupdot (\mathscr{G}_{a_2}\join \mathscr
    G_{b_2})) \cupdot (\mathscr{G}_{b_1} \cupdot (\mathscr{G}_{a_3}\join
    \mathscr{G}_{b_3}))$.}
  \label{fig:hc-cograph}
\end{figure}

\begin{ctheorem}{\ref{thm:coRBMG-recog}}
  Let $(G,\sigma)$ be a properly colored undirected graph.  Then it can be
  decided in polynomial time whether $(G,\sigma)$ is a co-RBMG and, in the
  positive case, a tree $(T,\sigma)$ that explains $(G,\sigma)$ can be
  constructed in polynomial time.
\end{ctheorem}

Some additional mathematical results and algorithmic considerations related
to the reconstruction of least resolved cotrees that explain a given RBMG
are discussed in Section \ref{sec:rec-hc-cogr} of the technical part.

\section{Concluding Remarks} 

Reciprocal best match graphs are the symmetric parts of best match graphs
\cite{Geiss:18x}. They have a surprisingly complicated structure that makes
it quite difficult to recognize them. Although we have succeeded here in
obtaining a complete characterization of 3-RBMGs, it remains an open
problem whether the general $n$-RBMGs can be recognized in polynomial
time. This is in striking contrast to the directed BMGs, which are
recognizable in polynomial time \cite{Geiss:18x}. The key difference
between the directed and symmetric version is that every BMG $(\G,\sigma)$
is explained by a unique least resolved tree which is displayed by every
tree $(T,\sigma)$ that explains $(\G,\sigma)$. RBMGs, in contrast, can be
explained by multiple, mutually inconsistent trees.  This ambiguity seems
to be the root cause of the complications that are encountered in the
context of RBMGs with more than $3$ colors.

An important subclass of RBMGs are the ones that have cograph structure
(co-RBMGs). These are good candidates for correct estimates of the
orthology relation. Interestingly, they are easy to recognize: by Theorem
\ref{thm:cographA} it suffices to check that all connected 3-colored
restrictions are cographs. Moreover, hierarchically colored cographs
(\hc-cographs) characterize co-RBMGs. Thm.\ \ref{thm:coRBMG-recog} shows
that co-RBMGs $(G,\sigma)$ can be recognized in polynomial time. 
Moreover, Thm.\ \ref{thm:coRBMG-recog} and \ref{thm:last} imply that 
a least resolved tree that explains 
$(G,\sigma)$ can be constructed in polynomial time.  Since every orthology
relation is equivalently represented by a cograph, every co-RBMG
$(G,\sigma)$ represents an orthology relation.  The converse, however, is
not always satisfied, as not all mathematically valid orthology relations
are \hc-cographs. The relationships of orthology relations and RBMGs,
however, will be the topic of a forthcoming contribution.

The practical motivation for considering the mathematical structure of
RBMGs is the fact that reciprocal best hit (RBH) heuristics are used
extensively as the basis of the most widely-used orthology detection
tools. A complete characterization of RBMGs is a prerequisite for the
development of algorithms for the ``RBMG-editing problem'', i.e., the task
to correct an empirically determined reciprocal best hit graph $(G,\sigma)$
to a mathematically correct RBMG. Empirical observations e.g.\ by
\citet{Hellmuth:15a} indicate that reciprocal best hit heuristics typically
yield graphs with fairly large edit distances from cographs and thus
orthology relations. We therefore suggest that orthology detection
pipelines could be improved substantially by inserting first RBMG-editing
and then the removal of good $P_4$s, followed by a variant of cograph
editing that respects the \hc-cograph structure. Some of these aspects will
be discussed in forthcoming work, some of which heavily relies on Lemmas
that we have relegated here to the technical part.

A number of interesting questions remain open for future research.  Most
importantly, $n$-RBMGs that are not cographs are not at all well
understood. While the classification of the $P_4$s in terms of the
underlying directed BMGs and the characterization of the 3-color case
provides some guidance, many problems remain. In particular, can we
recognize $n$-RBMGs in polynomial time? Is the information contained in
triples derived from 3-colored connected components sufficient, even if
this may not lead to a polynomial time recognition algorithm? Regarding the
connection of RBMGs and orthology relations, it will be interesting to ask
whether and to what extent the color information on the $P_4$s can help to
identify false positive orthology assignments. A possibly fruitful way of
attacking this issue is to ask whether there are trees that are displayed
from all $(T,\sigma)$ that explain an RBMG $(G,\sigma)$. For instance, when
is the discriminating cotree of the \hc-cotree $(T^G_{hc},t,\sigma)$
displayed by all \hc-cotrees explaining $(G,\sigma)$? And if so, are they
associated with an event-labeling $t$ at the inner vertices, and can $t$ be
leveraged to improve orthology detection?

Last but not least, we have not at all considered the question of
reconciliations of gene and species trees \cite{HHH+12,Hel-17}. While BMGs
and RBMGs do not explicitly encode information on reconciliation, it is
conceivable that the vertex coloring imposes constraints that connect to
horizontal transfer events. Conversely, can complete or partial information
on the Fitch relation \cite{Geiss:18a,HS:19}, which encodes the information on
horizontal transfer events, be integrated e.g.\ to provide additional
constraints on the trees explaining $(G,\sigma)$? The Fitch relation is
non-symmetric, corresponding to a subclass of directed co-graphs. Since
directed co-graphs \cite{CP-06} are also connected to generalization of
orthology relations that incorporate HGT events \cite{Hellmuth:17a}, it
seems worthwhile to explore whether there is a direct connection between
BMGs and directed cographs, possibly for those BMGs whose symmetric part
is an \hc-cograph.

\section*{Acknowledgements}
  Partial financial support by the German Federal Ministry of Education and
  Research (BMBF, project no.\ 031A538A, de.NBI-RBC) is gratefully
  acknowledged.

\bibliographystyle{plainnat}
\bibliography{rbmg}

\begin{appendix}

\section*{TECHNICAL PART}

\section{Least Resolved Trees}
\label{sect:leastres}

The understanding of least resolved trees, i.e., the ``smallest'' trees
that explain a given RBMGs relies crucially on the properties of BMGs. We
therefore start by recalling some pertinent results by \citet{Geiss:18x}.
\begin{lemma}
  Let $(\G,\sigma)$ be a BMG with vertex set $L$.  Then, $x\rthin y$
  implies $\sigma(x)=\sigma(y)$. In particular, $(\G,\sigma)$ has no arcs
  between vertices within the same $\rthin$-class.  Moreover,
  $N^+(x)\ne\emptyset$, while the in-neighborhood $N^-(x)$ may be empty for
  all $x\in L$.
\label{lem:bmg-basics}
\end{lemma}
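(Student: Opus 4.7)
The plan is to fix a leaf-colored tree $(T,\sigma)$ explaining $(\G,\sigma)$ (which exists by definition of a BMG) and to unpack Definitions \ref{def:rbm} and \ref{def:explains}. The one observation to keep in mind throughout is that the arc set $\E$ of $\G$ by construction contains arcs only between vertices of distinct colors; this is precisely what will let a neighborhood-equality hypothesis collide with a color assumption.

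For the first claim, that $x\rthin y$ forces $\sigma(x)=\sigma(y)$, I would argue by contradiction. Assuming $\sigma(x)\neq\sigma(y)$, set $s\coloneqq\sigma(y)$. Since $L[s]\neq\emptyset$ (it contains $y$), the $\preceq_T$-order on internal vertices of $T$ guarantees the existence of a leaf $y^\ast\in L[s]$ with $\lca_T(x,y^\ast)$ minimal. By Definition \ref{def:rbm}, $y^\ast$ is a best match of $x$; and because $\sigma(x)\neq s=\sigma(y^\ast)$, Definition \ref{def:explains} yields $(x,y^\ast)\in\E$, i.e., $y^\ast\in N^+(x)$. The hypothesis $N^+(x)=N^+(y)$ then forces $y^\ast\in N^+(y)$, which is impossible since $\sigma(y^\ast)=\sigma(y)$ and $\G$ contains no arcs between same-colored vertices.

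Claim two then follows immediately: any arc inside an $\rthin$-class would by claim one connect two same-colored vertices, contradicting the color-distinctness of arc endpoints in $\G$. For the third claim I would produce a best match explicitly: since $|S|\geq 2$, pick any color $s\neq\sigma(x)$; surjectivity of $\sigma$ ensures $L[s]\neq\emptyset$, and the same $\lca$-minimization as before yields $y^\ast\in L[s]$ with $(x,y^\ast)\in\E$, so $N^+(x)\neq\emptyset$. The concluding remark that $N^-(x)$ may be empty is not a theorem but a sanity check; I would support it by exhibiting a small example, e.g., a caterpillar-shaped tree with three leaves where one leaf of color $r$ is never the $\lca$-closest vertex of color $r$ to any leaf of the other color.

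There is no real technical obstacle here; the lemma is essentially a careful reading of the definitions. The only point that needs mild care is not conflating ``$y^\ast$ is a best match of $x$ in $(T,\sigma)$'' (a statement about $T$) with ``$(x,y^\ast)\in\E$'' (which additionally requires $\sigma(x)\neq\sigma(y^\ast)$), since this distinction is precisely what drives the contradiction in the first claim.
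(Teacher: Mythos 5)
Your proof is correct. Note, however, that the paper does not actually prove this lemma: it is recalled verbatim from \citet{Geiss:18x} (the companion paper on best match graphs), so there is no in-text argument to compare against. Your derivation is the natural one — choosing $y^\ast\in L[s]$ minimizing $\lca_T(x,y^\ast)$ (which exists since all such $\lca$'s lie on the path from $x$ to the root and are hence totally ordered), concluding $y^\ast\in N^+(x)$, and letting $N^+(x)=N^+(y)$ collide with the fact that arcs of $\G$ join only differently colored vertices — and it handles all three assertions cleanly, including treating the ``$N^-(x)$ may be empty'' clause as an existence-of-example statement rather than a universal one. The only point worth making explicit in a written version is that the contradiction in the first claim does not depend on whether $y^\ast=y$, since $y^\ast\in N^+(y)$ already fails on color grounds alone.
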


For an $\rthin$-class $\alpha$ of a BMG we define its color
$\sigma(\alpha)=\sigma(x)$ for some $x\in \alpha$. This is indeed
well-defined, since, by Lemma \ref{lem:bmg-basics}, all vertices within
$\alpha$ must share the same color. Definition 9 of \citet{Geiss:18x} is a
key construction in the theory of BMGs. It introduces the \emph{root
  $\rho_{\alpha,s}$ of an $\rthin$-class $\alpha$ with color
  $\sigma(\alpha)=r$ w.r.t.\ a second, different color $s\ne r$} in a tree
$(T,\sigma)$ that explains a BMG $(\G,\sigma)$ by means of the following
equation:
\begin{equation}
  \rho_{\alpha,s} \coloneqq 
  \max_{\substack{x\in\alpha \\ y\in N^+_s(\alpha)}} \lca_{T}(x,y), 
\label{eq:root_long}
\end{equation}
where $\max$ is taken w.r.t.\ $\prec_T$.  The roots $\rho_{\alpha,s}$ are
uniquely defined by $(T,\sigma)$ because the color-restricted
out-neighborhoods $N^+_s(\alpha)$ are determined by $(T,\sigma)$
alone. Since $\lca(x,y)=\lca(x,y')$ for any two
$y,y'\in N^+_s(x), x\in \alpha$, Equ.\ \eqref{eq:root_long} simplifies to
\begin{equation}
  \rho_{\alpha,s} \coloneqq 
  \max_{x\in\alpha} \lca_{T}(x,y).
\label{eq:root}
\end{equation}
Their most important property \cite[Lemma~14]{Geiss:18x} is
\begin{equation}
  N^+_s(\alpha)=L(T(\rho_{\alpha,s}))\cap L[s]
\label{eq:bmg14}
\end{equation}
for all $s\in S\setminus\{\sigma(\alpha)\}$.

\begin{figure}[t]
  \begin{tabular}{lcr}
    \begin{minipage}{0.45\textwidth}
      \begin{center}
        \includegraphics[width=0.75\textwidth]{./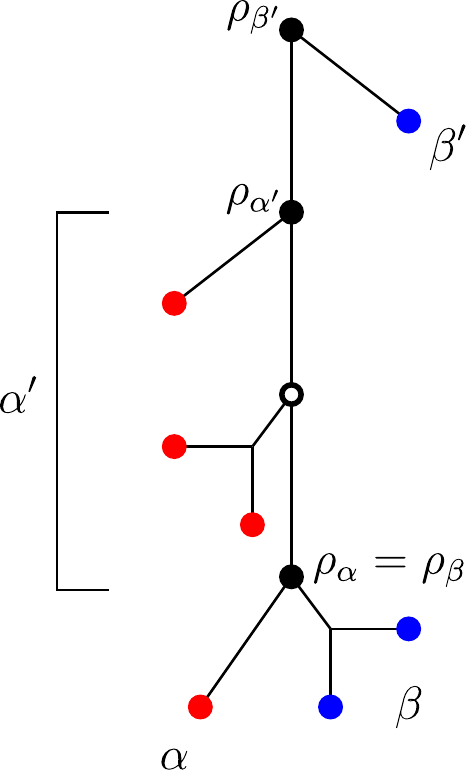}
      \end{center}
    \end{minipage}
    &    &
    \begin{minipage}{0.5\textwidth}
      \caption{Relationship between $\rthin$-classes and their roots. Shown
        is a tree with leaves from two colors (red and blue) whose leaf set
        consists of the four $\rthin$-classes $\alpha$, $\alpha'$ (red) and
        $\beta$, $\beta'$ (blue). The inner nodes of the tree corresponding
        to the roots $\rho_\alpha$, $\rho_{\alpha'}$, $\rho_\beta$ and
        $\rho_{\beta'}$ are marked in black.  \hfill\break Figure reused
        from \cite{Geiss:18x}, \copyright Springer}
      \label{fig:rthin_root}
    \end{minipage}
  \end{tabular}
\end{figure}

Least resolved for BMGs are unique \citet{Geiss:18x}. Here we consider an
analogous concept for RBMGs:
\begin{definition}
  Let $(G,\sigma)$ be an RBMG that is explained by a tree $(T,\sigma)$. An
  inner edge $e$ is called \emph{redundant} if $(T_e,\sigma)$ also explains
  $(G,\sigma)$, otherwise $e$ is called \emph{relevant}.
\label{def:redundant}
\end{definition}

The next result gives a characterization of redundant edges:
\begin{lemma}\label{lem:lr}
  Let $(G,\sigma)$ be an RBMG explained by $(T,\sigma)$. An inner edge
  $e=uv$ in $T$ is redundant if and only if $e$ satisfies the condition
  \begin{description}
  \item[\AX{(LR)}] For all colors
    $s\in \sigma(L(T(v)))\cap \sigma(\NEW{L(T(u))\setminus L(T(v))})$
    holds that if $v=\rho_{\alpha,s}$ for some $\rthin$-class
    $\alpha\in\mathcal{N}(\G(T,\sigma))$, then
    $\rho_{\beta,\sigma(\alpha)}\prec u$ for every $\rthin$-class
    $\beta\subseteq \NEW{L(T(u))\setminus L(T(v))}$
    of $\G(T,\sigma)$ with $\sigma(\beta)=s$.
  \end{description}
\end{lemma}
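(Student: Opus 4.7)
The plan is to analyze how edge contraction affects the best-match relation of $\G(T,\sigma)$ and to extract \AX{(LR)} as the exact condition that no new reciprocal best matches appear in $T_e$. The key initial observation is that contracting $e=uv$ to a single vertex $w$ changes lca's only for pairs $(x,y)$ with $\lca_T(x,y)\in\{u,v\}$, all of which satisfy $\lca_{T_e}(x,y)=w$; every other lca is preserved. A short case analysis then shows that (i) best-match arcs from any $x\notin L(T(v))$ are unchanged, since pairs with $\lca_T=u$ remain mutually tied at $w$ and no new color-$s$ leaf obtains a strictly smaller lca with $x$; and (ii) for $x\in L(T(v))$ and color $s$, the set of best color-$s$ matches grows only if the best match in $T$ satisfied $\lca_T(x,\cdot)=v$, in which case every color-$s$ leaf in $L(T(u))\setminus L(T(v))$ ties in and becomes a new best match of $x$ in $T_e$. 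In particular, contraction never removes arcs, so $G(T,\sigma)\subseteq G(T_e,\sigma)$ always holds.

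I would then reformulate the new BM arcs in the language of $\rthin$-classes. By \eqref{eq:root} and \eqref{eq:bmg14}, the condition "$\lca_T(x,N^+_s(x))=v$" holds for some $x\in\alpha$ exactly when $\rho_{\alpha,s}=v$, and such an $x$ necessarily lies in $L(T(v))$. Any new RBM $\{x,y\}$ in $T_e$ must therefore have $x\in L(T(v))\cap\alpha$ for some $\rthin$-class $\alpha$ with $\rho_{\alpha,s}=v$, and $y\in L(T(u))\setminus L(T(v))$ of color $s$; the arc $(x,y)$ is new, while the reverse arc $(y,x)$ must already be an arc of $\G(T,\sigma)$ because arcs from $y$ are unchanged by contraction. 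Writing $\beta$ for the $\rthin$-class of $y$ and $u'$ for the child of $u$ containing $\beta$, the arc $(y,x)$ is in $\G(T,\sigma)$ iff $x\in N^+_{\sigma(\alpha)}(\beta)=L(T(\rho_{\beta,\sigma(\alpha)}))\cap L[\sigma(\alpha)]$. Since color $\sigma(\alpha)$ is represented in $L(T(u))$ by $x$, we have $\rho_{\beta,\sigma(\alpha)}\preceq u$, which yields the dichotomy: $\rho_{\beta,\sigma(\alpha)}\prec u$ forces $\rho_{\beta,\sigma(\alpha)}\preceq u'$, so $L(T(\rho_{\beta,\sigma(\alpha)}))\subseteq L(T(u'))$ is disjoint from $L(T(v))$ and no new RBM arises; whereas $\rho_{\beta,\sigma(\alpha)}=u$ gives $N^+_{\sigma(\alpha)}(\beta)=L(T(u))\cap L[\sigma(\alpha)]\ni x$, so $(y,x)$ is already an arc in $\G(T,\sigma)$ and $\{x,y\}$ becomes a new RBM in $T_e$.

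Combining these observations, no new RBM arises from contracting $e$ iff for every color $s\in\sigma(L(T(v)))\cap\sigma(L(T(u))\setminus L(T(v)))$, every $\rthin$-class $\alpha$ with $\rho_{\alpha,s}=v$, and every $\rthin$-class $\beta\subseteq L(T(u))\setminus L(T(v))$ with $\sigma(\beta)=s$, we have $\rho_{\beta,\sigma(\alpha)}\prec u$. This is exactly \AX{(LR)}, and combined with the fact that no arcs (hence no RBM edges) are ever removed by contraction, it yields the equivalence: $e$ is redundant iff \AX{(LR)} holds. The main technical obstacle is the case analysis establishing that BM arcs from vertices outside $L(T(v))$ are genuinely preserved, which requires verifying that the collapse of the two $\lca$-levels $u$ and $v$ into the single level $w$ never creates a strictly closer color-$s$ leaf for any $x\notin L(T(v))$.
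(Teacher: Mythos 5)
Your proof is correct and follows essentially the same route as the paper's: both reduce redundancy of $e=uv$ to the non-appearance of new reciprocal best matches after contraction and decide this via the roots $\rho_{\alpha,s}$, $\rho_{\beta,\sigma(\alpha)}$ and Equ.~\eqref{eq:bmg14}, with the dichotomy $\rho_{\beta,\sigma(\alpha)}\prec u$ versus $\rho_{\beta,\sigma(\alpha)}= u$ doing the decisive work. The only difference is organizational: you first record that contraction never deletes arcs (the paper's Lemma~\ref{lem:edge_con}) and then characterize the new edges in a single pass, whereas the paper argues the two implications of the equivalence separately.
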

\begin{proof}
  The $\rthin$-classes appearing throughout this proof refer to the
  directed graph $(\G,\sigma)=\G(T,\sigma)$, and hence are completely
  determined by $(T,\sigma)$.  By definition, any redundant edge of
  $(T,\sigma)$ is an inner edge, thus we can assume that $e=uv$ is an inner
  edge of $(T,\sigma)$ throughout the whole proof.

  Suppose that Property \AX{(LR)} is satisfied. We show (with the help of
  Equ.\ \eqref{eq:bmg14}) that most neighborhoods in the BMG
  $(\G,\sigma)\coloneqq \G(T,\sigma)$ remain unchanged by the contraction
  of $e$, while those neighborhoods that change do so in such a way that
  $(T_e,\sigma)$ still explains the RBMG $(G,\sigma)$.

  We denote the inner vertex in $T_e$ obtained by contracting $e=uv$ again
  by $u$.  Recall that by convention $u\succ_T v$ in $T$. By construction,
  we have $L(T(w))=L(T_e(w))$ for all $w\ne v$ and
  $\lca_{T}(x,y)=\lca_{T_e}(x,y)$ unless $\lca_T(x,y)=v$.  Hence, a root
  $\rho_{\alpha,s}\ne v$ of $(T,\sigma)$ is also a root in $T_e$.  Equ.\
  \eqref{eq:bmg14} thus implies that $N^+_s(\alpha)$ remains unchanged upon
  contraction of $e$ whenever $\rho_{\alpha,s}\ne v$.

  Now let $\alpha$ and $s$ be such that $v=\rho_{\alpha,s}$, thus
  $N^+_s(\alpha)=L(T(v))\cap L[s]$ by Equ.\ \eqref{eq:bmg14} and in
  particular $s\in \sigma(L(T(v)))$. We distinguish two cases:\\
  (1) If $s\notin\sigma(L(T(u))\setminus L(T(v)))$, then there is no
  $\rthin$-class
  $\beta\subseteq L(T(u))\setminus L(T(v))$ of color $s$, which implies
  $L(T(u))\cap L[s]=L(T(v))\cap L[s]$. Hence, the
  set $N^+_s(\alpha)$ remains unaffected by contraction of $e$.\\
  (2) Assume $s\in\sigma(L(T(u))\setminus L(T(v)))$ and let
  $\beta\subseteq L(T(u))\setminus L(T(v))$
  be an $\rthin$-class of color $\sigma(\beta) = s$. Moreover\footnote{At
    this point, MH informed the coauthors via git commit from the delivery
    room that his daughter Lotta Merle was being born.}, let
  $\sigma(\alpha)=r\neq s$. We thus have $\rho_{\beta,r}\prec_T u$ by
  Property \AX{(LR)}. Now, $N^+_s(\alpha)=L(T(v))\cap L[s]$ and
  $\beta\subseteq L(T(u))\setminus L(T(v))$
  imply $\beta\cap N^+_s(\alpha)=\emptyset$. Moreover, Equ.\
  \eqref{eq:bmg14} and $\rho_{\beta,r}\prec_T u$ imply that
  $\alpha\cap N^+_r(\beta)=\emptyset$ in $(T,\sigma)$, i.e.,
  $xy\notin E(G)$ for any $x\in\alpha$ and $y\in\beta$ since neither
  $(x,y)$ nor $(y,x)$ is an arc in $\G$. After contraction of $e$, we have
  $\rho_{\beta, r}\prec\rho_{\alpha,s}$, i.e.,
  $\beta \subseteq N^+_s(\alpha)$, but $\alpha\cap N^+_r(\beta)=\emptyset$
  in $(T_e,\sigma)$ by Equ.\ \eqref{eq:bmg14}. Thus we have
  $(x,y)\in E(\G)$ and $(y,x)\notin E(\G)$, which implies
  $xy\notin E(G(T_e,\sigma))$. In summary, we can therefore conclude that
  $(T_e,\sigma)$ still explains $(G,\sigma)$.

  Conversely, suppose that $e$ is a redundant edge. If there is no
  $\rthin$-class $\alpha$ with $v=\rho_{\alpha,s}$, then Equ.\
  \eqref{eq:bmg14} again implies that contraction of $e$ does not affect
  the out-neighborhoods of any $\rthin$-classes, thus $(T_e,\sigma)$
  explains $(G,\sigma)$. Hence assume, for contradiction, that there is a
  color $s\in\sigma(L(T(v)))\cap\sigma(L(T(u))\setminus L(T(v)))$ and an
  $\rthin$-class \NEW{$\beta\subseteq L(T(u))\setminus L(T(v))$ of color
    $s$} with $\rho_{\beta,r}\succeq u$, where $r\in S\setminus \{s\}$ such
  that there exists an $\rthin$-class $\alpha$ of color $\sigma(\alpha)=r$
  with $v=\rho_{\alpha,s}$. Note that this in particular means that there
  is no leaf $z$ of color $r$ in $L(T(u))\setminus L(T(v))$ as otherwise
  $\lca(\beta,z)\prec_T u=\rho_{\beta,r}=\lca(\beta,\alpha)$; a
  contradiction since $\alpha \in N_r^+(\beta)$ by Equ.\
  \eqref{eq:bmg14}. Since by construction $\alpha\prec v$, we have
  $u=\lca(\alpha,\beta)$ and therefore $\rho_{\beta,r}= u$. In particular,
  it holds $\rho_{\beta,r}\succ \rho_{\alpha,s}$. As a consequence, we have
  $\beta \cap N^+_s(\alpha)=\emptyset$ and $\alpha\subseteq N^+_r(\beta)$
  in $(T,\sigma)$, again by Equ.\ \eqref{eq:bmg14}. Thus, for any
  $x\in \alpha$ and $y\in \beta$ we have $(x,y)\notin E(\G)$ and
  $(y,x)\in E(\G)$, and therefore $xy\notin E(G)$. Since
  $\rho_{\beta,r}= u$, contraction of $e$ implies
  $\rho_{\beta,r}= \rho_{\alpha,s}$ in $(T_e,\sigma)$. Therefore,
  $(x,y)\in E(\G)$ and $(y,x)\in E(\G)$, which implies
  $xy\in E(G(T_e,\sigma))$. Thus $(T_e,\sigma)$ does not explain
  $(G,\sigma)$; a contradiction.  
\end{proof} 
Note that the characterization of redundant edges requires information on
(directed) best matches. In particular, Property \AX{(LR)} requires
$\rthin$-classes.

The next result, Lemma \ref{lem:lem2-2}, provides alternative sufficient
conditions for least resolved trees.  In particular, it shows whether inner
edges $uv$ can be contracted based on the particular colors of leaves below
the children of $u$.  We will show in the last section that the conditions
in Lemma \ref{lem:lem2-2} are also necessary for RBMGs that are cographs
(cf.\ Lemma \ref{lem:lem2-2-necessary}).  These conditions are thus
designed to fit in well within the framework of RBMGs that are cographs,
which will be introduced in more detail later, although these conditions
may be relaxed for the general case.
	
\begin{lemma}
  Let $(G,\sigma)$ be an RBMG explained by $(T,\sigma)$ and let $e=uv$ be
  an inner edge of $T$.  Moreover, for two vertices $x,y$ in $T$, we define
  $S_{x,\neg y}\coloneqq \sigma(L(T(x)))\setminus \sigma(L(T(y)))$.  Then
  $(T_e,\sigma)$ explains $(G,\sigma)$, if one of the following conditions
  is satisfied:
 \begin{itemize}
 \item[(1)] $\sigma(L(T(v'))) \cap \sigma(L(T(v))) = \emptyset$ for all
   $v'\in\child_T(u)$, or
 \item[(2)]
   $\sigma(L(T(v'))) \cap \sigma(L(T(v))) \in
   \{\sigma(L(T(v))),\sigma(L(T(v')))\}$ for all $v'\in\child_T(u)$, and
   either
   \begin{itemize}
   \item[(i)] $\sigma(L(T(v)))\subseteq \sigma(L(T(v')))$ for all
     $v'\in\child_T(u)$, or
   \item[(ii)] if $\sigma(L(T(v')))\subsetneq \sigma(L(T(v)))$ for some
     $v'\in\child_T(u)$, then, for
     every $w\in\child_T(v)$ that satisfies $S_{w,\neg v'} \neq \emptyset$,
		 it holds that $\sigma(L(T(v')))$ and $\sigma(L(T(w)))$ do
     not overlap and thus, $\sigma(L(T(v')))\subseteq \sigma(L(T(w)))$ .
   \end{itemize}
 \end{itemize}
 \label{lem:lem2-2}
\end{lemma}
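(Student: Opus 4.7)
The plan is to apply Lemma~\ref{lem:lr} in each of the three cases, verifying that the inner edge $e=uv$ satisfies the redundancy condition \AX{(LR)}; then $(T_e,\sigma)$ explains $(G,\sigma)$ as required. The principal tools are Equation~\eqref{eq:bmg14} and the observation that $\rho_{\alpha,s}=v$ implies both $\alpha\subseteq L(T(v))$ and $N^+_s(\alpha)=L(T(v))\cap L[s]$ (otherwise some $x\in\alpha$ lying outside $L(T(v))$ would have $\lca(x,y)\succ v$ for every $y\in N^+_s(\alpha)$, contradicting the definition of $\rho_{\alpha,s}$ via Equation~\eqref{eq:root}).

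For Case~(1), the hypothesis directly forces $\sigma(L(T(v)))\cap\sigma(L(T(u))\setminus L(T(v)))=\emptyset$, so no color $s$ ever needs to be checked and \AX{(LR)} holds vacuously. For Case~(2)(i), fix $s$ in the above intersection and an $\rthin$-class $\alpha$ of color $r$ with $v=\rho_{\alpha,s}$. An $\rthin$-class $\beta$ of color $s$ in $L(T(u))\setminus L(T(v))$ is contained in $L(T(v''))$ for some $v''\in\child_T(u)\setminus\{v\}$; since $r\in\sigma(L(T(v)))\subseteq\sigma(L(T(v'')))$, some $r$-leaf in $L(T(v''))$ witnesses $\rho_{\beta,r}\preceq v''\prec u$, so \AX{(LR)} holds.

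The main obstacle is Case~(2)(ii), which I treat by contradiction. Suppose $s$, $\alpha$, $\beta$ violate \AX{(LR)}: $v=\rho_{\alpha,s}$, $r=\sigma(\alpha)$, $\beta\subseteq L(T(v^\dagger))$ for some $v^\dagger\in\child_T(u)\setminus\{v\}$, and $\rho_{\beta,r}\succeq u$. Since $r\in\sigma(L(T(v)))$, there are $r$-leaves below $u$, which forces $\rho_{\beta,r}=u$; Equation~\eqref{eq:bmg14} then yields $r\notin\sigma(L(T(v^\dagger)))$ (otherwise $\rho_{\beta,r}\preceq v^\dagger\prec u$). The nesting assumption of Case~(2), together with $r\in\sigma(L(T(v)))\setminus\sigma(L(T(v^\dagger)))$, rules out $\sigma(L(T(v)))\subseteq\sigma(L(T(v^\dagger)))$ and thus imposes $\sigma(L(T(v^\dagger)))\subsetneq\sigma(L(T(v)))$, so $v^\dagger$ plays the role of a smaller child~$v'$ in the sub-condition.

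To complete the contradiction I locate a child $w\in\child_T(v)$ with $\alpha\cap L(T(w))\neq\emptyset$ and $s\notin\sigma(L(T(w)))$. Indeed, any $w'\in\child_T(v)$ satisfying $\alpha\cap L(T(w'))\neq\emptyset$ and $L(T(w'))\cap L[s]\neq\emptyset$ would give, by the $\rthin$-class equality $N^+_s(\alpha)=N^+_s(x)$ for any $x\in\alpha\cap L(T(w'))$, that $N^+_s(\alpha)=L(T(w'))\cap L[s]$. Hence such a $w'$ is unique and $\rho_{\alpha,s}=v$ then rules out $\alpha\subseteq L(T(w'))$, so $\alpha$ extends into at least one other child $w\in\child_T(v)$, which necessarily has $L(T(w))\cap L[s]=\emptyset$. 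For this $w$ we have $r\in\sigma(L(T(w)))\setminus\sigma(L(T(v^\dagger)))$, hence $S_{w,\neg v^\dagger}\neq\emptyset$. The Case~(2)(ii) hypothesis then yields $\sigma(L(T(v^\dagger)))\subseteq\sigma(L(T(w)))$, and since $s\in\sigma(L(T(v^\dagger)))$ this contradicts $s\notin\sigma(L(T(w)))$. Thus \AX{(LR)} holds in Case~(2)(ii) as well and Lemma~\ref{lem:lr} concludes the proof.
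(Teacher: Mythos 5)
Your proof is correct and follows essentially the same route as the paper: both reduce the claim to verifying condition \AX{(LR)} of Lemma~\ref{lem:lr} via Equ.~\eqref{eq:bmg14}, with Case~(1) and Case~(2)(i) handled identically and Case~(2)(ii) resting on the same key step of producing a child $w\in\child_T(v)$ meeting $\alpha$ with $S_{w,\neg v'}\neq\emptyset$ and $s\notin\sigma(L(T(w)))$, so that \AX{(K3cc)}-style containment forces $s\notin\sigma(L(T(v')))$. Your contradiction framing is just the contrapositive of the paper's direct argument, and your explicit justification that a child $w$ of $v$ with $\alpha\cap L(T(w))\neq\emptyset$ and $s\notin\sigma(L(T(w)))$ actually exists is, if anything, slightly more careful than the paper's choice of an arbitrary child of $v$ containing an element of $\alpha$.
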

\begin{proof} 
  Suppose that $e=uv$ satisfies one of the Properties (1) or (2). If
  Property (1) is satisfied, we clearly have
  $\sigma(L(T(v)))\cap \sigma(L(T(u))\setminus L(T(v)))=\emptyset$, which
  implies that Condition \AX{(LR)} of Lemma \ref{lem:lr} is trivially
  satisfied. Therefore, $e$ is redundant in $(T,\sigma)$ and, by Def.\
  \ref{def:redundant}, $(T_e,\sigma)$ explains $(G,\sigma)$.

  Now let
  $\sigma(L(T(v'))) \cap \sigma(L(T(v))) \in
  \{\sigma(L(T(v))),\sigma(L(T(v')))\}$ for all $v'\in\child_T(u)$ and
  assume that either Property (2.i) or (2.ii) is satisfied. In order to see
  that $(T_e,\sigma)$ explains $(G,\sigma)$, we show that $e$ is redundant
  in $(T,\sigma)$ by application of Lemma \ref{lem:lr}. Thus suppose
  $v=\rho_{\alpha,s}$ for some $\rthin$-class
  $\alpha\in\mathcal{N}(\G(T,\sigma))$. If there exists no $\rthin$-class
  $\beta\subseteq L(T(u))\setminus L(T(v))$ of $\G(T,\sigma)$ with
  $\sigma(\beta)=s$, then Lemma \ref{lem:lr} is again trivially satisfied
  and $(T_e,\sigma)$ explains $(G,\sigma)$.  Hence, suppose that there is
  an $\rthin$-class $\beta\subseteq L(T(u))\setminus L(T(v))$ of
  $\G(T,\sigma)$ with $\sigma(\beta)=s$.  Clearly, if
  $\beta\preceq_T x\prec_T u$ for some $x\in \child_T(u)\setminus \{v\}$
  with $\sigma(L(T(v)))\subseteq \sigma(L(T(x)))$, then
  $\rho_{\beta,\sigma(\alpha)}\preceq_T x\prec_T u$.

  Hence, if Property (2.i) holds, i.e.,
  $\sigma(L(T(v))) \subseteq \sigma(L(T(v')))$ for all $v'\in \child_T(u)$,
  we easily see that for all $\rthin$-classes
  $\beta\subseteq L(T(u)\setminus T(v))$ with $\sigma(\beta)=s$ we have
  $\rho_{\beta,\sigma(\alpha)}\preceq_T x\prec_T u$ for some
  $x\in \child_T(u)\setminus \{v\}$. Therefore, $e$ is redundant in
  $(T,\sigma)$ and $(T_e,\sigma)$ explains $(G,\sigma)$.
  
  Now suppose that Property (2.ii) holds. If
  $\sigma(\alpha)\in \sigma(L(T(v')))$ for each $v'\in \child_T(u)$, we
  easily see that $\rho_{\beta,\sigma(\alpha)}\preceq_T x \prec_T u$ for
  some $x\in\child_T(u)\setminus \{x\}$. Otherwise, there exists some
  $\tilde v\in\child_T(u)\setminus \{v\}$ such that
  $\sigma(\alpha)\notin \sigma(L(T(\tilde v)))$. By Property (2),
  $\sigma(L(T(\tilde v)))$ and $\sigma(L(T(v)))$ do not overlap. Therefore,
  $\sigma(L(T(\tilde v))) \subsetneq \sigma(L(T(v)))$.  In order to show
  that \AX{(LR)} is satisfied, we thus need to show that
  $s\notin \sigma(L(T(\tilde v)))$, otherwise $\rho_{\beta',s}=u$ for some
  $\rthin$-class $\beta' \subseteq L(T(u))\setminus L(T(v))$ of
  $\G(T,\sigma)$. Let $w\in\child_T(v)$ such that $a\preceq_T w$ for some
  $a\in \alpha$. Since $\sigma(\alpha)\notin \sigma(L(T(\tilde v)))$, it
  follows $S_{w,\neg \tilde v}\neq \emptyset$. Hence, by Property (2.ii),
  it must hold $\sigma(L(T(\tilde v)))\subseteq \sigma(L(T(w)))$. Since
  $\rho_{\alpha,s}=v$ by assumption, we necessarily have
  $s\notin \sigma(L(T(w)))$ and thus, as
  $\sigma(L(T(\tilde v)))\subseteq \sigma(L(T(w)))$, we can conclude
  $s\notin \sigma(L(T(\tilde v)))$.  Thus, for all children
  $v'\in \child_T(u)$, we either have $\sigma(\alpha)\in \sigma(L(T(v')))$
  or $\sigma(\alpha),\sigma(\beta)\not\in \sigma(L(T(v')))$.  Now, one can
  easily see that $\rho_{\beta,\sigma(\alpha)}\preceq_T x \prec_T u$ for
  some $x\in\child_T(u)\setminus \{x\}$.  Hence, Condition \AX{(LR)} from
  Lemma \ref{lem:lr} is always satisfied. Therefore, the edge $e$ is
  redundant in $(T,\sigma)$, i.e., $(T_e,\sigma)$ explains $(G,\sigma)$.
  
\end{proof}

\begin{definition}\label{def:series-edge-contract}
  Let $(G,\sigma)$ be an RBMG explained by $(T,\sigma)$. Then $(T,\sigma)$
  is \emph{least resolved w.r.t.\ $(G,\sigma)$} if $(T_A,\sigma)$ does not
  explain $(G,\sigma)$ for any non-empty series of edges $A$ of
  $(T,\sigma)$.
\end{definition}

Fig.\ \ref{fig:lr_nonunique} gives an example of least resolved trees that
are not unique. We summarize the discussion as
\begin{theorem}\label{thm:lr}
  Let $(G,\sigma)$ be an RBMG explained by $(T,\sigma)$. Then there exists
  a (not necessarily unique) least resolved tree
  $(T_{e_1\dots e_k},\sigma)$ explaining $(G,\sigma)$ obtained from
  $(T,\sigma)$ by a series of edge contractions $e_1 e_2 \dots e_k$ such
  that the edge \NEW{$e_1$ is redundant in $(T,\sigma)$ and} $e_{i+1}$ is
  redundant in $(T_{e_1\dots e_i},\sigma)$ for $i\in \{1,\dots,k-1\}$. In
  particular, $(T,\sigma)$ displays $(T_{e_1\dots e_k},\sigma)$.
\end{theorem}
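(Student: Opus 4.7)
The plan is to deduce the theorem from a monotonicity property of the RBMG under inner edge contraction, which I would establish first as a lemma: for every tree $(T,\sigma)$ and every inner edge $e=uv$ of $T$ with $u\succ_T v$, we have $E(G(T,\sigma))\subseteq E(G(T_e,\sigma))$. To prove it, fix $xy\in E(G(T,\sigma))$ so that $y$ is a best match of $x$ and $x$ is a best match of $y$. For any $y'$ with $\sigma(y')=\sigma(y)$, set $a=\lca_T(x,y)$ and $b=\lca_T(x,y')$, so $a\preceq_T b$. In $T_e$, the vertices $u$ and $v$ are identified, every $\lca_T$ that is not $v$ is unchanged, and every $\lca_T$ equal to $v$ becomes the identified vertex. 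Since $v\prec_T u$, a short case analysis on whether $a=v$ or $b=v$ yields $\lca_{T_e}(x,y)\preceq_{T_e}\lca_{T_e}(x,y')$. Hence $y$ remains a best match of $x$ in $T_e$; by symmetry $x$ remains a best match of $y$, so $xy\in E(G(T_e,\sigma))$.

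Next I would consider the finite non-empty collection $\mathcal{F}$ of trees on leaf set $L$ obtained from $T$ by some (possibly empty) sequence of inner edge contractions and satisfying $G(T',\sigma)=(G,\sigma)$. Pick $T^*\in\mathcal{F}$ with the minimum number of inner edges. If $T^*$ were not least resolved, a non-empty series $A$ of its edges would satisfy $G(T^*_A,\sigma)=(G,\sigma)$, so $T^*_A$ would lie in $\mathcal{F}$ with strictly fewer inner edges, contradicting minimality. Hence $T^*$ is least resolved.

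Finally I would fix any sequence $e_1,\dots,e_k$ of inner edges of $T$ whose successive contraction produces $T^*$. Iterating the monotonicity lemma yields
\begin{equation*}
E(G(T,\sigma))\subseteq E(G(T_{e_1},\sigma))\subseteq\dots\subseteq E(G(T_{e_1\dots e_k},\sigma))=E(G(T^*,\sigma)).
\end{equation*}
Both endpoints equal $E(G)$, forcing every inclusion to be an equality. Consequently each intermediate tree $(T_{e_1\dots e_i},\sigma)$ explains $(G,\sigma)$, which by Definition~\ref{def:redundant} means that $e_1$ is redundant in $(T,\sigma)$ and $e_{i+1}$ is redundant in $(T_{e_1\dots e_i},\sigma)$ for $i\in\{1,\dots,k-1\}$. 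The final claim that $(T,\sigma)$ displays $(T_{e_1\dots e_k},\sigma)$ is immediate from the construction.

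The hard part I anticipate is the careful case analysis for the monotonicity lemma when $a$ or $b$ equals $v$ and collapses onto $u$ in $T_e$. Once monotonicity is secured, the remaining steps are a minimum-cardinality argument on $\mathcal{F}$ and a chaining of inclusions, with no further delicate tree-structural reasoning required.
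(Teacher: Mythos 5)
Your proof is correct, and it takes a genuinely different route from the paper's, whose own argument for Theorem~\ref{thm:lr} is a two-sentence appeal to the definition of least resolved trees (implicitly: greedily contract redundant edges until none remain, remarking only that two redundant edges cannot in general be contracted simultaneously). The two differences are these. First, you isolate the monotonicity property $E(G(T,\sigma))\subseteq E(G(T_e,\sigma))$ for every inner edge $e$; this is exactly Lemma~\ref{lem:edge_con} of the paper, which is proved there only much later (via Equ.~\eqref{eq:bmg14}, for a different purpose) and is never cited in the paper's proof of this theorem. Your direct argument for it is fine: the identified vertex sits at the position of $u$, every vertex $w\neq v$ keeps its relative order, and $v\prec_T u$ ensures that anything weakly above $v$ ends up weakly above the identified vertex, so best matches are preserved in each direction. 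Second, instead of running the greedy process forward, you pick a tree $T^*$ with the minimum number of inner edges among all contraction-descendants of $T$ that still explain $(G,\sigma)$; least-resolvedness of $T^*$ is then immediate from minimality, and the chain of inclusions forces every intermediate tree on the way to $T^*$ to explain $(G,\sigma)$, i.e., every step to be a redundant contraction. What your approach buys is that it explicitly supplies the ingredient the paper's terse proof glosses over: to pass from ``no single redundant edge'' (or from minimality) to the full conclusion one needs precisely this monotonicity, since without it a non-redundant first contraction followed by further contractions could conceivably restore $(G,\sigma)$. Your version is therefore the more self-contained of the two.
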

\begin{proof}
  The Theorem follows directly from the definition of least resolved trees
  and the observation that for any two redundant edges $e\neq f$ of
  $(T,\sigma)$, the tree $(T_{ef},\sigma)$ does not necessarily explain
  $(G,\sigma)$.  Clearly, by definition, $(T_{e_1\dots e_k},\sigma)$ is
  displayed by $(T,\sigma)$.
  
\end{proof}

\section{$\sthin$-Thinness}\label{sect:thin}

Fig.\ \ref{fig:sameNeighbors} shows that $N(a)=N(b)$ does not necessarily
imply $\sigma(a)=\sigma(b)$ for RBMGs. We therefore work here with a
color-preserving variant of thinness.
\begin{definition} 
  Let $(G,\sigma)$ be an undirected colored graph. Then two vertices $a$
  and $b$ are in relation $\sthin$, in symbols $a\sthin b$, if
  $N(a) = N(b)$ and $\sigma(a) = \sigma(b)$.\\
  An undirected colored graph $(G,\sigma)$ is $\sthin$-thin if no two
  distinct vertices are in relation $\sthin$. We denote the $\sthin$-class
  that contains the vertex $x$ by $[x]$.
  \label{def:sthin}
\end{definition}

As a consequence of Lemma \ref{lem:bmg-basics} and the fact that every
RBMG $(G,\sigma)$ is the symmetric part of some BMG $\G(T,\sigma)$, we
obtain
\begin{lemma}
  Let $(G,\sigma)$ be an RBMG, $(T,\sigma)$ a tree explaining $(G,\sigma)$,
  and $\G(T,\sigma)$ the corresponding BMG. Then $x\rthin y$ in
  $\G(T,\sigma)$ implies that $x\sthin y$ in $(G,\sigma)$.
\label{lem:SRthin}
\end{lemma}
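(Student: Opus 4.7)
The plan is a direct unpacking of definitions, using the two basic facts at hand: first, that $\rthin$ in a BMG forces equality of color by Lemma \ref{lem:bmg-basics}, and second, that $(G,\sigma)$ is exactly the symmetric part of $\G(T,\sigma)$, i.e., $xy\in E(G)$ iff both $(x,y)\in\E$ and $(y,x)\in\E$.

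Assume $x\rthin y$ in $\G(T,\sigma)$. First I would invoke Lemma \ref{lem:bmg-basics} to get $\sigma(x)=\sigma(y)$, which supplies the color condition in Def.\ \ref{def:sthin}. It remains to verify $N_G(x)=N_G(y)$. I would argue by a symmetric inclusion: take any $z\in N_G(x)$, so that $(x,z),(z,x)\in\E$ by the symmetric-part characterization. Since $x\rthin y$ gives $N^+(x)=N^+(y)$ and $N^-(x)=N^-(y)$, the arc $(x,z)\in\E$ yields $z\in N^+(y)$, i.e., $(y,z)\in\E$, and the arc $(z,x)\in\E$ yields $z\in N^-(y)$, i.e., $(z,y)\in\E$. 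Both arcs being present means $yz\in E(G)$, so $z\in N_G(y)$. The reverse inclusion $N_G(y)\subseteq N_G(x)$ is identical with the roles of $x$ and $y$ swapped, since $\rthin$ is symmetric. Hence $N_G(x)=N_G(y)$, which together with $\sigma(x)=\sigma(y)$ yields $x\sthin y$ by Def.\ \ref{def:sthin}.

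There is essentially no obstacle; the statement is a straightforward consequence of the fact that reciprocal edges in $G$ are encoded by mutual arcs in $\G$, and $\rthin$ preserves both out- and in-neighborhoods simultaneously. The only subtlety worth flagging in the write-up is that the color-preservation part requires Lemma \ref{lem:bmg-basics}: without it, the definition of $\sthin$ would not be met even though the neighborhood equality in $G$ would still follow purely from the arc bookkeeping above. Noting this also explains why the converse fails in general, as illustrated in Fig.\ \ref{fig:sameNeighbors}: $N(x)=N(y)$ in $(G,\sigma)$ need not lift to $N^+(x)=N^+(y)$ and $N^-(x)=N^-(y)$ in $\G(T,\sigma)$, because the undirected RBMG retains only the mutual arcs and thereby loses the asymmetric information that $\rthin$ requires.
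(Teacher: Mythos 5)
Your proof is correct and follows exactly the route the paper intends: the paper states this lemma as an immediate consequence of Lemma~\ref{lem:bmg-basics} (for the color equality) together with the fact that $xy\in E(G)$ iff both $(x,y)$ and $(y,x)$ are arcs of $\G(T,\sigma)$ (for the neighborhood equality), and gives no further proof. Your write-up simply spells out that two-line justification, including the correct observation about why the converse fails.
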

The converse of Lemma \ref{lem:SRthin} is not true, however. A
counterexample can be found in Fig.\ \ref{fig:sameNeighbors}.

For an undirected colored graph $(G,\sigma)$, we denote by $G/\sthin$ the
graph whose vertex set are exactly the $\sthin$-classes of $G$, and two
distinct classes $[x]$ and $[y]$ are connected by an edge in $G/\sthin$ if
there is an $x'\in [x]$ and $y'\in [y]$ with $x'y'\in E(G)$. Moreover,
since the vertices within each $\sthin$-class have the same color, the map
$\sigmasthin \colon V(G/\sthin) \to S$ with $\sigmasthin([x]) = \sigma(x)$
is well-defined.

\begin{lemma}\label{lem:sthin}
  $(G/\sthin, \sigmasthin)$ is $\sthin$-thin for every undirected colored
  graph $(G,\sigma)$. Moreover, $xy\in E(G)$ if and only if
  $[x][y] \in E(G/\sthin)$. Thus, $G$ is connected if and only if
  $G/\sthin$ is connected.
\end{lemma}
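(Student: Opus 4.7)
The plan is to establish the three claims in order: the edge-equivalence $xy \in E(G) \iff [x][y] \in E(G/\sthin)$, then the $\sthin$-thinness of $(G/\sthin, \sigmasthin)$, and finally connectedness as a direct corollary. The proof is essentially a careful unpacking of the definition of the quotient graph, so the real task is to be precise about the role of loop-freeness and of the ``any representative'' lifting.

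First I would prove the edge-equivalence. For the forward direction, suppose $xy\in E(G)$. I claim $[x]\ne[y]$: if $[x]=[y]$ then $x\sthin y$ gives $N(x)=N(y)$, so $y\in N(x)$ forces $y\in N(y)$, contradicting loop-freeness of $G$. Hence by the definition of $G/\sthin$, with witnesses $x\in[x]$ and $y\in[y]$, we get $[x][y]\in E(G/\sthin)$. For the converse, if $[x][y]\in E(G/\sthin)$, pick representatives $x'\in[x]$ and $y'\in[y]$ with $x'y'\in E(G)$. Since $x\sthin x'$ we have $N(x)=N(x')$, so $y'\in N(x)$, i.e.\ $xy'\in E(G)$; since $y\sthin y'$ we have $N(y)=N(y')$, and $x\in N(y')=N(y)$, so $xy\in E(G)$.

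Next I would prove that $(G/\sthin,\sigmasthin)$ is $\sthin$-thin by contradiction. Suppose there exist distinct classes $[x]\ne[y]$ in $V(G/\sthin)$ with $\sigmasthin([x])=\sigmasthin([y])$ and $N_{G/\sthin}([x])=N_{G/\sthin}([y])$. The color condition immediately yields $\sigma(x)=\sigma(y)$ in $G$. For any $z\in N_G(x)$, the edge $xz\in E(G)$ yields (by the just-established edge-equivalence) $[x][z]\in E(G/\sthin)$, hence $[z]\in N_{G/\sthin}([x])=N_{G/\sthin}([y])$, hence $[y][z]\in E(G/\sthin)$, hence $yz\in E(G)$, so $z\in N_G(y)$. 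Swapping $x$ and $y$ gives $N_G(x)=N_G(y)$. Combined with $\sigma(x)=\sigma(y)$, this says $x\sthin y$, so $[x]=[y]$, contradicting our assumption.

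Finally, connectedness follows from the edge-equivalence: any path $x=v_0,v_1,\dots,v_k=y$ in $G$ projects to a walk $[v_0],[v_1],\dots,[v_k]$ in $G/\sthin$ (consecutive classes being distinct as noted above), and conversely any walk $[x]=[u_0],[u_1],\dots,[u_k]=[y]$ in $G/\sthin$ lifts to a walk $u_0,u_1,\dots,u_k$ in $G$ since the edge-equivalence holds for every choice of representatives in each class. The main obstacle I anticipate is the edge case of a $\sthin$-class consisting of several pairwise-isolated vertices: then $G/\sthin$ collapses them to a single isolated vertex and the statement must be read as a correspondence of connected components modulo this convention, which presumably is what the authors have in mind when writing ``thus''. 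Apart from this pedantic point, the argument is a straightforward combination of the first two parts of the lemma.
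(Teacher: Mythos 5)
Your proof is correct and follows essentially the same route as the paper's: establish the edge-equivalence directly from $N(x)=N(x')$ together with loop-freeness, derive $\sthin$-thinness of the quotient by contradiction, and read off connectedness. The edge case you flag is genuine --- for an edgeless graph on at least two vertices all of one color, $(G/\sthin,\sigmasthin)$ is a connected $K_1$ while $G$ is disconnected --- but the paper's proof silently ignores it as well (it stops after the thinness argument), and the issue is harmless in the paper's standing setting where $\sigma$ is surjective onto $|S|\ge 2$ colors.
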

\begin{proof}
  First, we show that $xy\in E(G)$ if and only if $[x][y] \in E(G/\sthin)$.
  Assume $xy\in E(G)$. Since $G$ does not contain loops, we have
  $x\notin N(x)$. However, $x\in N(y)$. Therefore, $N(x)\neq N(y)$ and
  thus, $[x]\neq [y]$. By definition, thus, $[x][y] \in E(G/\sthin)$.
  
  Now assume $[x][y] \in E(G/\sthin)$. By construction, there exists
  $x'\in [x]$ and $y'\in [y]$ such that $x'y'\in E(G)$ and thus
  $x'\in N(y') = N(y)$ and $y'\in N(x') = N(x)$. In particular,
  $x'y'\in E(G)$ implies $\sigma(x')\neq\sigma(y')$ and thus,
  $\sigma(x)\neq\sigma(y)$ since by definition all vertices within an
  $\sthin$-class are of the same color. Therefore, $xy\in E(G)$ by
  definition of $\sthin$-thinness.

  Now suppose, for contradiction, that $(G/\sthin,\sigmasthin)$ is not
  $\sthin$-thin.  Then, there are two distinct vertices $[x],[y]$ in
  $G/\sthin$ that have the same neighbors $[v_1],\dots,[v_k]$ in $G/\sthin$
  and $\sigmasthin([x]) = \sigmasthin([y])$ and, in particular,
  $\sigma(x)=\sigma(y)$. From ``$xy\in E(G)$ if and only if
  $[x][y] \in E(G/\sthin)$'' we infer
  $N_G(x) = \bigcup_{i=1}^k \bigcup_{v\in [v_i]} \{v\}= N_G(y)$ and thus
  $[x] = [y]$; a contradiction. Thus, $(G/\sthin, \sigmasthin)$ must be
  $\sthin$-thin.  
\end{proof}

The map $\gamma_{\sthin} \colon V(G) \to V(G/\sthin): x\mapsto [x]$
collapses all elements of an $\sthin$-thin class in $(G,\sigma)$ to a
single node in $(G_{/\sthin},\sigmasthin)$. Hence, the
$\gamma_{\sthin}$-image of a connected component of $(G,\sigma)$ is a
connected component in $(G_{/\sthin},\sigmasthin)$. Conversely, the
pre-image of a connected component of $(G_{/\sthin},\sigmasthin)$ that
contains an edge is a single connected component of
$(G,\sigma)$. Furthermore, $(G_{/\sthin},\sigmasthin)$ contains at most one
isolated vertex of each color $r\in S$. If it exists, then its pre-image is
the set of all isolated vertices of color $r$ in $(G,\sigma)$; otherwise
$(G,\sigma)$ has no isolated vertex of color $r$.

The next lemma shows how a tree $(T,\sigma)$ that explains an RBMG
$(G,\sigma)$ can be modified to a tree that still explains $(G,\sigma)$ by
replacing edges that are connected to vertices within the same
$\sthin$-class. Although this lemma is quite intuitive, one needs to be
careful in the proof since changing edges in $(T,\sigma)$ may also change
the neighborhoods $N_G(x)$ of vertices $x\in V(G)$ and may result in a tree
that does not explain $(G,\sigma)$ anymore.

\begin{lemma}
  Let $(G,\sigma)$ be an RBMG that is explained by $(T,\sigma)$ on $L$.
  Let $x,x'\in [x]$ be two distinct vertices in an $\sthin$-class $[x]$
  of $(G,\sigma)$. Suppose that $x$ and $x'$ have distinct parents $v_x$
  and $v_{x'}$ in $T$, respectively.  Denote by $T_{x',v_x}$ the tree on
  $L$ obtained from $T$ by (i) removing the edge $(v_{x'},x')$, (ii)
  suppressing the vertex $v_{x'}$ if it now has degree $2$, and (iii)
  inserting the edge $(v_{x},x')$.  Then, $(T_{x',v_x},\sigma)$ explains
  $(G,\sigma)$.
  \label{lem:same_parent}
\end{lemma}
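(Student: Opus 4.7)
I would show directly that $G(T_{x',v_x},\sigma)=(G,\sigma)$ by comparing the best-match inequalities underlying every potential edge in $T$ and in $T':=T_{x',v_x}$. Set $u_0:=\lca_T(x,x')$ and let $c_x,c_{x'}$ be the children of $u_0$ ancestral to $x$ and $x'$, respectively.

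Two elementary observations drive the proof. (O1) For any two leaves $u,w\in L\setminus\{x'\}$, $\lca_T(u,w)=\lca_{T'}(u,w)$, since relocating $x'$ and suppressing the possibly vacated vertex $v_{x'}$ does not alter ancestor relations among the remaining leaves. (O2) Because $x$ and $x'$ share the parent $v_x$ in $T'$, we have $\lca_{T'}(u,x')=\lca_{T'}(u,x)=\lca_T(u,x)$ for every $u\in L\setminus\{x'\}$. In particular, (O2) gives $xy\in E(G(T',\sigma))\iff x'y\in E(G(T',\sigma))$ for every $y\in L\setminus\{x,x'\}$, in agreement with the standing hypothesis $N_G(x)=N_G(x')$. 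Together, (O1) and (O2) make the forward inclusion $E(G)\subseteq E(G(T',\sigma))$ immediate: the only defining inequality that can differ between $T$ and $T'$ is the one at the candidate $z=x'$, and in $T'$ it collapses to the (satisfied) inequality at $z=x$.

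For the reverse inclusion I argue by contradiction. Suppose some pair $(a,b)$ with $\sigma(a)\neq\sigma(b)$ is an edge of $G(T',\sigma)$ but not of $(G,\sigma)$. A short case analysis on colors shows that at least one of $a,b$ must have color $\sigma(x)$ and that the discrepancy must come from the candidate $z=x'$ failing in $T$ while becoming trivial in $T'$. Taking $\sigma(a)=\sigma(x)$ (the other cases being completely analogous), this forces $\lca_T(b,x')\prec\lca_T(b,a)\preceq\lca_T(b,x)=u_0$, so $b$ lies below $c_{x'}$ and $x'$ is a best match of $b$ in $T$. Since $\lca_T(b,x')\prec\lca_T(b,x)$ prevents $x$ from being a best match of $b$ in $T$, we get $xb\notin E(G)$, and the hypothesis $N_G(x)=N_G(x')$ yields $x'b\notin E(G)$. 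Consequently $b$ fails to be a best match of $x'$ in $T$, and we may pick a leaf $b^*$ of color $\sigma(b)$ that is a best match of $x'$ in $T$ with $\lca_T(x',b^*)\prec\lca_T(x',b)$; such a $b^*$ necessarily lies below $c_{x'}$ as well.

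The argument is completed by showing that $x'$ is also a best match of $b^*$ in $T$: the condition ``$a$ is a best match of $b$ in $T'$'' restricts which $\sigma(x)$-coloured leaves can sit below $c_{x'}$, and in the end leaves only $x$ as a potential competitor for $x'$ at $b^*$, which is beaten because $\lca_T(b^*,x')\prec u_0=\lca_T(b^*,x)$. Hence $x'b^*\in E(G)$, and a second application of $N_G(x)=N_G(x')$ gives $xb^*\in E(G)$, forcing $x$ to be a best match of $b^*$ in $T$ --- contradicting $\lca_T(b^*,x')\prec\lca_T(b^*,x)$. The main obstacle I anticipate lies in the sub-case $\sigma(a)=\sigma(x)$ with $a$ itself below $c_{x'}$, where $a$ appears as an additional $\sigma(x)$-coloured candidate below $c_{x'}$ and the ``only $x$ can compete with $x'$ at $b^*$'' step has to be refined --- most cleanly, I expect, by choosing $b^*$ to minimise $\lca_T(x',b^*)$ and, if necessary, invoking the $\sthin$-reduction once more with $a$ playing the role of $x'$ to propagate the contradiction.
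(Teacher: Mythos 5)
Your proposal is correct and follows essentially the same route as the paper's proof: both reduce the only nontrivial direction to a putative edge of $G(T_{x',v_x},\sigma)$ that is absent from $(G,\sigma)$, show that this forces $x'$ to be the strictly unique best match of some leaf $b$ of another colour, and then exhibit a leaf $b^*$ (the paper's $y'$) that is a reciprocal best match of $x'$ but not of $x$, contradicting $N_G(x)=N_G(x')$. The sub-case you flag as a possible obstacle is in fact already covered: the strict uniqueness of $x'$ among the $\sigma(x)$-coloured leaves at $b$ forces every other such leaf, including $a$, to lie outside the subtree rooted at the child of $\lca_T(x',b)$ containing $x'$, so $x'$ remains the unique competitor at $b^*$ without any further refinement.
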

\begin{proof}	
  Let $[x]$ be an $\sthin$-class with vertices $x,x'\in[x]$ that have
  distinct parents $v_x$ and $v_{x'}$ in $T$, respectively. Put
  $T' = T_{x',v_x}$ and let $(G',\sigma)$ be the RBMG explained by
  $(T',\sigma)$.  We proceed with showing that $(G',\sigma)=(G,\sigma)$.
  To see this, we observe that $x,x'\in[x]$ implies that $N_G(x) = N_G(x')$
  and $\sigma(x)=\sigma(x')$. By construction we also have
  $N_{G'}(x)=N_{G'}(x')$ and $x'\notin N_{G'}(x)$. Moving $x'$ in $T$ does
  not affect the last common ancestors of $x$ and any $y\ne x'$, hence
  $N_{G'}(x)=N_{G}(x)$, and thus also $N_{G'}(x')=N_{G}(x)$.  Now consider
  $N_{G'}(y)$ and $N_{G}(y)$ for some $y\neq x,x'$ and assume, for
  contradiction, that $N_{G'}(y) \neq N_{G}(y)$. Then there exists a vertex
  $z\in N_{G}(y)\setminus N_{G'}(y)$ or $z\in N_{G'}(y)\setminus N_{G}(y)$,
  which in particular implies $N_G(z)\neq N_{G'}(z)$. As shown above,
  $N_{G'}(x)=N_{G}(x) = N_{G}(x') = N_{G'}(x')$.  Hence,
  $N_G(z)\neq N_{G'}(z)$ implies $z\ne x,x'$.  Moreover, since $z$ is
  adjacent to $y$ in either $G$ or $G'$, we have $\sigma(z)\neq
  \sigma(y)$. However, replacing $x'$ in $T$ cannot influence the
  adjacencies between vertices $u$ and $v$ with $\sigma(u)\neq \sigma(x')$
  and $\sigma(v)\neq\sigma(x')$.  Taken the latter arguments together, we
  can conclude that $\sigma(z) = \sigma(x)\neq \sigma(y)$.
	
  First assume  $z\in N_{G}(y)\setminus  N_{G'}(y)$. Then  
  \begin{align}
    &\lca_{T}(z,y) \preceq_{T} \lca_{T}(z',y)  \text{ for all } z'
      \text{ with } \sigma(z') = \sigma(z) \text{ and} \label{eq:z0} \\
    &\lca_{T}(z,y) \preceq_{T} \lca_{T}(z,y') \text{ for all } y'
      \text{ with } \sigma(y') = \sigma(y).
  \end{align}
  Since $z\notin N_{G'}(y)$, we additionally have 
  \begin{align}
    &\lca_{T'}(z,y) \succ_{T'} \lca_{T'}(z',y) \text{ for some } z'
      \text{ with } \sigma(z') = \sigma(z) \text{ or} \label{eq:z1}\\
    &\lca_{T'}(z,y) \succ_{T'} \lca_{T'}(z,y') \text{ for some } y'
      \text{ with } \sigma(y') = \sigma(y). 
  \end{align}
  The fact that  $T$ and $T'$ are identical up to the location of $x'$
  together with  $\sigma(z')=\sigma(x')\neq\sigma(y)$ and $x'\neq z$
  implies that in $T'$ we still have $\lca_{T'}(z,y) \preceq_{T'}
  \lca_{T'}(z,y')$ for all $y'$ with $\sigma(y') = \sigma(y)$.
  Hence, Equ.\ \eqref{eq:z1} must be satisfied.
  Equ.\ \eqref{eq:z0} and \eqref{eq:z1} together imply that $x'=z'$ and
  that $x'$ is the only vertex that satisfies Equ.\ \eqref{eq:z1}.  
  In $T'$ the vertices $x$ and $x'$ have the same parent. Together with 
  $x'=z'$ and Equ.\ \eqref{eq:z1} this implies
  $\lca_{T'}(x,y) = \lca_{T'}(x',y) \prec_{T'} \lca_{T'}(z,y)$. 
  Since $T$ and $T'$ are identical up to the location of $x'$, we also have
  $\lca_{T'}(x,y) = \lca_{T}(x,y)$ and $\lca_{T'}(y,z) = \lca_{T}(y,z)$.
  Combining these arguments, we obtain $\lca_{T}(x,y) \prec_{T}
  \lca_{T}(y,z)$, which contradicts Equ.\ \eqref{eq:z0} because 
  $\sigma(z) = \sigma(x)$.
  
  Assuming $z\in N_{G'}(y)\setminus N_{G}(y)$, and interchanging the role
  of $T$ and $T'$ in the argument above, we obtain
  \begin{align}
    &\lca_T(z,y) \succ_T \lca_T(x',y) \text{ and } \label{eq:z2}\\
    &\lca_{T}(z,y) \preceq_{T} \lca_{T}(z,y') \text{ for all } y'
      \text{ with } \sigma(y') = \sigma(y)
  \end{align}
  and that there is no other vertex $z^*\neq x'$ with
  $\sigma(z^*)=\sigma(x')$ and $\lca_T(z,y) \succ_T \lca_T(z^*,y)$.  Since
  $x$ and $x'$ have the same parent in $T'$ we have
  $\lca_T(x,y) = \lca_{T'}(x,y) = \lca_{T'}(x',y) \succeq_{T'}
  \lca_{T'}(z,y)=\lca_{T}(z,y) \succ_T \lca_T(x',y)$.  The fact that $T$
  and $T'$ are identical up to the location of $x'$ now implies that for
  all inner vertices $v,w$ of $T'$ we have $v\prec_{T'}w$ if and only if
  $v\prec_T w$. Hence we have
  $$\lca_T(x,y) \succeq_{T} \lca_{T}(z,y)  \succ_T \lca_T(x',y)$$
  implying that $T$ displays the triple $x'y|x$. Therefore $xy$ is not an
  edge in $(G,\sigma)$, whence 	$y\notin N_G(x) = N_G(x')$.\\
  Since there is no other vertex $z^*\neq x'$ with $\sigma(z^*)=\sigma(x')$
  and $\lca_T(z,y) \succ_T \lca_T(z^*,y)$, we have
  $\lca_T(z^*,y) \succ_T \lca_T(x',y)$ for all $z^*\neq x'$ with
  $\sigma(z^*)=\sigma(z)=\sigma(x')$.  Since $y\notin N_G(x')$, there must
  be a vertex $y'$ with $\sigma(y')=\sigma(y)$ such that
  $\lca_T(x',y)\succ_T \lca_T(x',y')$. We can choose $y'$ such that there
  is no other vertex $y^*\neq y'$ satisfying $\sigma(y^*)=\sigma(y')$ and
  $\lca_T(x',y') \succ_T \lca_T(x',y^*)$. Thus we have
  $$\lca_T(x',y')\prec_T\lca_T(x',y)\prec_T \lca_T(x,y),$$
  which implies $y'\not\in N_G(x)$. However, since $x'$ is unique w.r.t.\
  Equ.\ \eqref{eq:z2}, we must have $y'\in N_G(x')$; a contradiction to
  $N_G(x)=N_G(x')$.\\
  Therefore, we have $N_G(v) = N_{G'}(v)$ for all $v\in V(G)$, and thus
  $(G,\sigma)=(G',\sigma)$ as claimed.

\end{proof}

\begin{lemma}
  $(G,\sigma)$ is an RBMG if and only if $(G/\sthin, \sigmasthin)$ is an
  RBMG.  Moreover, every RBMG $(G,\sigma)$ is explained by a tree
  $(\widehat{T},\sigma)$ in which any two vertices $x,x'\in [x]$ of each
  $\sthin$-class $[x]$ of $(G,\sigma)$ have the same parent.
\label{lem:Sthin-tree}
\end{lemma}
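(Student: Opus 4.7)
The plan is to prove both directions by explicitly constructing a tree for one graph from a tree for the other, and the ``moreover'' statement will fall out of the forward direction. The key technical tool is the preceding Lemma \ref{lem:same_parent}, which lets us freely re-attach members of an $\sthin$-class to a common parent without destroying the RBMG explained by the tree.

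\emph{Forward direction, construction of $\widehat{T}$.} Suppose $(G,\sigma)$ is an RBMG explained by $(T,\sigma)$. For each $\sthin$-class $[x]$ of $(G,\sigma)$ containing more than one vertex, fix a representative $x_*\in[x]$ with parent $v_*$ in $T$, and apply Lemma~\ref{lem:same_parent} once to each other member $x'\in[x]\setminus\{x_*\}$ whose parent in the current tree differs from $v_*$. Since each application moves only the single leaf $x'$ and preserves the RBMG explained by the tree, iterating this over all $\sthin$-classes yields a tree $(\widehat{T},\sigma)$ that still explains $(G,\sigma)$ and in which, for every $\sthin$-class $[x]$, all vertices of $[x]$ share a common parent $p_{[x]}$. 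This already proves the ``moreover'' statement.

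\emph{Forward direction, from $\widehat T$ to a tree for $(G/\sthin,\sigmasthin)$.} From $(\widehat{T},\sigma)$ construct $(\tilde{T},\sigmasthin)$ by deleting, for each $\sthin$-class $[x]$, all members of $[x]$ except one representative $x_*$, relabeling $x_*$ as the vertex $[x]$, and finally suppressing any resulting degree-two inner vertex (so that $\tilde T$ is phylogenetic). I claim $(\tilde T,\sigmasthin)$ explains $(G/\sthin,\sigmasthin)$. By Lemma~\ref{lem:sthin} it suffices to check, for any two distinct $\sthin$-classes $[x]\neq[y]$, that $xy\in E(G)$ iff $[x]$ and $[y]$ are reciprocal best matches in $(\tilde T,\sigmasthin)$. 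The crucial observation is that for any $x\in[x]$ and $y\in[y]$ with $[x]\neq[y]$ we have $\lca_{\widehat T}(x,y)=\lca_{\widehat T}(x_*,y_*)$ (since $x,x_*$ and $y,y_*$ each share a parent in $\widehat T$), and suppression preserves the ancestor order, so this value corresponds to $\lca_{\tilde T}([x],[y])$. Moreover, ranging over all $y'$ with $\sigma(y')=\sigma(y)$ in $\widehat T$ corresponds exactly to ranging over all $[y']$ with $\sigmasthin([y'])=\sigmasthin([y])$ in $\tilde T$, because the intra-class $y'\in[y]$ give the same LCA and so do not affect the minimum. Therefore the best-match condition for $x,y$ in $(\widehat T,\sigma)$ coincides with that for $[x],[y]$ in $(\tilde T,\sigmasthin)$, and the claim follows.

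\emph{Backward direction and conclusion.} Conversely, assume $(G/\sthin,\sigmasthin)$ is an RBMG explained by some tree $(\tilde T,\sigmasthin)$. Build $(\widehat T,\sigma)$ by replacing each leaf $[x]$ of $\tilde T$ by $|[x]|$ new leaves, one for each $x\in[x]$, all attached to the same parent vertex of $\tilde T$ (if $[x]$ is a singleton there is nothing to do), and colored by $\sigma$. Then by construction the members of every $\sthin$-class share a parent in $\widehat T$, and for any distinct $[x]\neq[y]$ and any choice $x\in[x]$, $y\in[y]$ we again have $\lca_{\widehat T}(x,y)=\lca_{\tilde T}([x],[y])$. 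Running the equivalence of best-match conditions of the previous paragraph in reverse shows that $x,y$ are reciprocal best matches in $(\widehat T,\sigma)$ iff $[x],[y]$ are reciprocal best matches in $(\tilde T,\sigmasthin)$, which by Lemma~\ref{lem:sthin} is equivalent to $xy\in E(G)$. Hence $(\widehat T,\sigma)$ explains $(G,\sigma)$, so $(G,\sigma)$ is an RBMG. The main obstacle I expect is the careful bookkeeping in the LCA comparison when one of the candidates $y'$ lies in the same $\sthin$-class as $y$ and when degree-two vertices are suppressed; both are handled by noting that intra-class LCAs coincide with the class's parent vertex and that suppression preserves the ancestor order on all surviving vertices.
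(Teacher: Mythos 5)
Your proposal is correct and follows essentially the same route as the paper: repeated application of Lemma~\ref{lem:same_parent} to obtain $(\widehat{T},\sigma)$ with class members as siblings, collapsing each class (with suppression of degree-two vertices) to explain the quotient, and expanding leaves of a tree for $(G/\sthin,\sigmasthin)$ for the converse. Your explicit LCA bookkeeping fills in details the paper leaves to the reader, but the argument is the same.
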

\begin{proof}
  Consider an RBMG $(G,\sigma)$ explained by the tree $(T,\sigma)$, and let
  $[x]$ be an $\sthin$-class of $(G,\sigma)$.  If all the vertices within
  $[x]$ have the same parent $v$ in $T$, then we can identify the edges
  $vx'$ for all $x'\in [x]$ to obtain the edge $v[x]$.  If all children of
  $v$ are leaves of the same color, we additionally suppress $v$ in order
  to obtain a phylogenetic tree $T/[x]$. Note that in this case,
  $\parent(v)$ cannot be incident to any leaf $y$ of color $\sigma(x)$ in
  $(T,\sigma)$ as this would imply $N(x)=N(y)$ and therefore $x\sthin
  y$. Hence, suppression of $v$ has no effect on any of the neighborhoods
  and thus does not affect any of the reciprocal best matches in
  $(T,\sigma)$.  If all $\sthin$-classes are of this form, then the tree
  $(T/\sthin,\sigmasthin)$ obtained by collapsing each class $[x]$ to a
  single leaf and potential suppression of 2-degree nodes still explains
  $(G/\sthin,\sigmasthin)$.

  The construction of $T_{x',v_x}$ as in Lemma \ref{lem:same_parent} can be
  repeated until all vertices $x'$ of each $\sthin$-class $[x]$ have been
  re-attached to have the same parent $v_x$. After each re-attachment step,
  the tree still explains $(G,\sigma)$. The procedure stops when all
  $x'\in [x]$ are siblings of $x$ in the tree, i.e., a tree
  $(\widehat{T},\sigma)$ of the desired form is reached. The tree obtained
  by retaining only one representative of each $\sthin$-class $[x]$
  (relabeled as $[x]$), explains $(G/\sthin, \sigmasthin)$.

  Conversely, assume that $(G/\sthin, \sigmasthin)$ is an RBMG explained by
  the tree $(\tilde T,\sigmasthin)$. Each leaf in $\tilde T$ is an
  $\sthin$-class $[x]$. Consider the tree $(T,\sigma)$ obtained by
  replacing, for all $\sthin$-classes $[x]$ the edge $\parent([x])[x]$ in
  $T$ by the edges $\parent([x])x'$ and setting
  $\sigma(x') = \sigmasthin([x])$ for all $x'\in[x]$. By construction,
  $(T,\sigma)$ explains $(G,\sigma)$, and thus $(G,\sigma)$ is an RBMG.

\end{proof}
Lemma~\ref{lem:Sthin-tree} is illustrated in Fig.\ \ref{fig:SthinTree}.

\begin{lemma}\label{lem:2col}
  Let $(G,\sigma)$ be an $\sthin$-thin $n$-RBMG explained by $(T,\sigma)$
  with $n\geq 2$. Then $|\sigma(L(T(v)))|\ge2$ holds for every inner vertex
  $v\in V^0(T)$. 
\end{lemma}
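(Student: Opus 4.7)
The plan is to argue by contradiction via an interchangeability argument: if an inner vertex $v$ has $|\sigma(L(T(v)))|=1$, then any two distinct leaves beneath $v$ are indistinguishable from the ``outside'' and hence form an $\sthin$-class of size at least two, violating $\sthin$-thinness of $(G,\sigma)$.

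First I would dispose of the root case: since $n=|\sigma(L)|\ge 2$ and $L(T(\rho_T))=L$, the inequality $|\sigma(L(T(\rho_T)))|=n\ge 2$ holds automatically. So any offending vertex $v$ must be non-root, and by the phylogenetic-tree degree condition $v$ then has at least two children, so $L(T(v))$ contains at least two distinct leaves. Assume for contradiction $\sigma(L(T(v)))=\{s\}$ and pick distinct $x,y\in L(T(v))$, both of color $s$.

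The central observation is that for every leaf $z\notin L(T(v))$ we have $\lca_T(x,z)=\lca_T(y,z)=\lca_T(v,z)$, since the paths from $x$ and $y$ to $z$ must both pass through $v$. Because $\sigma(L(T(v)))=\{s\}$, every leaf $z$ with $\sigma(z)\neq s$ lies outside $L(T(v))$, and so does every $z'$ of the same color as $z$. Applying Def.~\ref{def:rbm}, the statement ``$z$ is a best match of $x$'' rewrites as $\lca_T(v,z)\preceq_T \lca_T(v,z')$ for all $z'$ with $\sigma(z')=\sigma(z)$, which is exactly the condition that governs ``$z$ is a best match of $y$''. Analogously, for the reverse direction one examines, for each $x'$ of color $s$, the inequality $\lca_T(z,v)\preceq_T \lca_T(z,x')$; if $x'\in L(T(v))$ one has the trivial equality $\lca_T(z,x')=\lca_T(z,v)$, and if $x'\notin L(T(v))$ the inequality is the same whether compared against $\lca_T(z,x)$ or $\lca_T(z,y)$. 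Hence $xz\in E(G)$ if and only if $yz\in E(G)$ for every $z$ with $\sigma(z)\neq s$, while Obs.~\ref{obs:proper} guarantees that neither $x$ nor $y$ has any neighbor of color $s$ in $G$. Therefore $N(x)=N(y)$ and $\sigma(x)=\sigma(y)$, giving $x\sthin y$ with $x\neq y$ and contradicting $\sthin$-thinness.

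I do not expect a genuine obstacle here: the entire argument rests on the single fact that $x$ and $y$ both factor through $v$, after which the best-match definition propagates mechanically. The only small bookkeeping is the case split in the reverse direction on whether $x'$ of color $s$ lies inside or outside $L(T(v))$, and the initial root check that leverages the standing hypothesis $n\ge 2$.
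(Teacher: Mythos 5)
Your proof is correct and follows essentially the same route as the paper: assume a monochromatic subtree below an inner vertex, extract two same-colored leaves from it, show they must have identical neighborhoods because every relevant $\lca$ comparison factors through $v$, and contradict $\sthin$-thinness. The only cosmetic difference is that the paper phrases the neighborhood equality via the in/out-neighborhoods of the underlying BMG, whereas you unwind the best-match definition directly; both are the same argument.
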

\begin{proof}
  \NEW{Let $S=\sigma(V(G))$.}  Assume, for contradiction, that there exists
  an inner vertex $v\in V^0(T)$ such that $\sigma(L(T(v)))=\{r\}$ with
  $r\in S$. Since $(T,\sigma)$ is phylogenetic, there must be two distinct
  leaves $a,b\in L(T(v))$ with $\sigma(a)=\sigma(b)=r$. Since $(G,\sigma)$
  is $\sthin$-thin, $a$ and $b$ do not belong to the same
  $\sthin$-class. Hence, $\sigma(a)=\sigma(b)$ implies $N(a)\neq
  N(b)$. Since $|S|\ge2$, there is a leaf $c\in V(G)$ with
  $\sigma(c)=s\in S\setminus \{r\}$. On the other hand,
  $\sigma(L(T(v)))=\{r\}$ implies $\lca(a,c)=\lca(b,c)\succ v$.
  
  Now consider the corresponding BMG $\G(T,\sigma)$. Since
  $\sigma(L(T(v)))=\{r\}$, we have $c\in N^-(a)$ if and only if
  $c\in N^-(b)$, and $c\in N^+(a)$ if and only if $c\in N^+(b)$. Together,
  this implies $N(a)=N(b)$ in $G(T,\sigma)$ ; a contradiction.
  
\end{proof}

Any two leaves $x,y$ in $(T,\sigma)$ with $\sigma(x)=\sigma(y)$ and
$\parent(x)=\parent(y)$ obviously belong to the same $\sthin$-equivalence
class of $G(T,\sigma)$. The absence of such pairs of vertices in
$(T,\sigma)$ is thus a necessary condition for $G(T,\sigma)$ to be
$\sthin$-thin, it is not sufficient, however. We leave it as an open
question for future research to characterize the leaf-colored trees that
explain $\sthin$-thin RBMGs.

\section{Connected Components, Forks, and Color-Complete Subtrees}
\label{sect:connect}

Although RBMGs, like BMGs, are not hereditary, they satisfy a related,
weaker property that will allow us to restrict our attention to connected
RBMGs.
\begin{lemma}
  Let $(G,\sigma)$ be an RBMG with vertex set $L$ explained by $(T,\sigma)$
  and let $(T_{|L'},\sigma_{|L'})$ be the restriction of $(T,\sigma)$ to
  $L'\subseteq L$. Then the induced subgraph
  $(G,\sigma)[L']\coloneqq (G[L'],\sigma_{|L'})$ of $(G,\sigma)$ is a (not
  necessarily induced) subgraph of $G(T_{|L'},\sigma_{|L'})$.
  \label{lem:fact:1}
\end{lemma}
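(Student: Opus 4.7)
The plan is to establish the edge inclusion $E(G[L']) \subseteq E(G(T_{|L'},\sigma_{|L'}))$ vertex-by-vertex: for every edge $xy\in E(G[L'])$ I will verify that $x$ and $y$ are reciprocal best matches in the restricted leaf-colored tree $(T_{|L'},\sigma_{|L'})$. To this end, take an arbitrary $xy\in E(G[L'])$, so $x,y\in L'$ and $xy\in E(G)$; by Observation~\ref{obs:proper} we have $\sigma(x)\ne\sigma(y)$. Since $(T,\sigma)$ explains $(G,\sigma)$, Definitions~\ref{def:rbm} and \ref{def:explains} yield
\[
\lca_T(x,y)\preceq_T \lca_T(x,y') \text{ for all } y'\in L \text{ with } \sigma(y')=\sigma(y),
\]
together with the symmetric inequality obtained by swapping $x$ and $y$.

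The heart of the argument is the following lca-preservation property of tree restrictions: for any $a,b,c,d\in L'$,
\[
\lca_T(a,b)\preceq_T \lca_T(c,d) \iff \lca_{T_{|L'}}(a,b)\preceq_{T_{|L'}}\lca_{T_{|L'}}(c,d).
\]
I would prove this from the explicit definition of $T_{|L'}$. The minimal subtree of $T$ spanning $L'$ contains, for every pair $a,b\in L'$, the vertex $\lca_T(a,b)$, since it lies on the $a$-$b$ path. Moreover, at $\lca_T(a,b)$ two distinct branches of the minimal subtree meet (the branch toward $a$ and the branch toward $b$), so together with its parent this vertex has degree at least three in the minimal subtree, unless it is its root. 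Hence $\lca_T(a,b)$ survives the subsequent suppression of degree-2 vertices, so $\lca_{T_{|L'}}(a,b)$ equals $\lca_T(a,b)$ as a vertex of $T$; and suppressing interior degree-2 vertices preserves the ancestor order on all remaining vertices.

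Combining the two ingredients finishes the proof: for every $y'\in L'$ with $\sigma_{|L'}(y')=\sigma(y)$ (which is simply a subset of the competitors in $L$), the inequality $\lca_T(x,y)\preceq_T\lca_T(x,y')$ transports to $\lca_{T_{|L'}}(x,y)\preceq_{T_{|L'}}\lca_{T_{|L'}}(x,y')$; the symmetric statement follows identically, so $x$ and $y$ are reciprocal best matches in $(T_{|L'},\sigma_{|L'})$, proving $xy\in E(G(T_{|L'},\sigma_{|L'}))$. The only nontrivial step is the lca-preservation lemma above, and this is the main (though elementary) obstacle, requiring only careful bookkeeping about which vertices of the minimal spanning subtree are suppressed. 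The reason the inclusion may fail to be an equality, as emphasized by the ``not necessarily induced'' qualifier, is that the restriction can only shrink the competitor set of each color: a pair $(x,y)\in L'\times L'$ that failed to be an RBM in $(T,\sigma)$ because some $y''\in L\setminus L'$ with $\sigma(y'')=\sigma(y)$ satisfied $\lca_T(x,y'')\prec_T\lca_T(x,y)$ can become an RBM once $y''$ has been removed.
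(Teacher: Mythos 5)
Your proof is correct, but it takes a different route from the paper. The paper's entire proof is a citation: it invokes the analogous statement for directed best match graphs (Lemma~1 of \citet{Geiss:18x}), which says that restricting the tree to $L'$ can only add arcs to the induced sub-BMG, and then observes that this inclusion is inherited by the symmetric part. You instead give a direct, self-contained argument: you reduce everything to the fact that for $a,b\in L'$ the vertex $\lca_T(a,b)$ lies on the $a$--$b$ path of the minimal spanning subtree, has degree at least three there (or is the retained root), hence survives suppression, so that the ancestor order among such lcas is preserved in $T_{|L'}$; combined with the observation that the competitor set $\{y'\in L':\sigma(y')=\sigma(y)\}$ only shrinks under restriction, every reciprocal best match in $(T,\sigma)$ between vertices of $L'$ remains one in $(T_{|L'},\sigma_{|L'})$. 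This is exactly the content hiding behind the cited BMG lemma, so what you gain is independence from the external reference and an explicit justification of the lca-preservation step that the paper leaves implicit; what you lose is brevity. Your closing remark about why the containment need not be an equality (removed out-of-$L'$ competitors can promote non-edges to edges) correctly identifies the reason for the ``not necessarily induced'' qualifier. The only degenerate case you leave unmentioned is $|\sigma(L')|=1$, where $G[L']$ is edge-less by proper coloring and the claim is vacuous, so nothing is lost.
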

\begin{proof}
  Lemma~1 in \cite{Geiss:18x} states the analogous result for BMGs.  It
  obviously remains true for the symmetric part.
  
\end{proof}

The next result is a direct consequence of Lemma \ref{lem:fact:1} that will
be quite useful for proving some of the following results.
\begin{corollary}
  Let $(G,\sigma)$ be an RBMG that is explained by $(T,\sigma)$.  Moreover,
  let $v\in V(T)$ be an arbitrary vertex and $(G^*_v,\sigma^*_v)$ be a
  connected component of $G(T(v), \sigma_{|L(T(v))})$. Then,
  $(G^*_v,\sigma^*_v)$ is contained in a connected component
  $(G^*,\sigma^*)$ of $(G,\sigma)$.
  \label{cor:subconnComp}
\end{corollary}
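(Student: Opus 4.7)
My plan is to reduce the corollary to a sharper equality: for any $v\in V(T)$, $G(T(v), \sigma_{|L(T(v))}) = (G,\sigma)[L(T(v))]$. The inclusion $E((G,\sigma)[L(T(v))]) \subseteq E(G(T(v), \sigma_{|L(T(v))}))$ is the restriction case of Lemma~\ref{lem:fact:1}, applied with $L' = L(T(v))$ and noting that $T_{|L(T(v))} = T(v)$ (since $T(v)$ is already the minimal subtree of $T$ with leaf set $L(T(v))$, and its root $v$ need not be suppressed).

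For the reverse inclusion, I would fix $x, y \in L(T(v))$ with $xy \in E(G(T(v), \sigma_{|L(T(v))}))$ and verify that $y$ remains a best match of $x$ in the full tree $(T,\sigma)$. Consider any $y' \in L$ with $\sigma(y') = \sigma(y)$. If $y' \in L(T(v))$, then $\lca_T(x, y') = \lca_{T(v)}(x, y') \succeq_T \lca_{T(v)}(x, y) = \lca_T(x, y)$ by the best-match property within $T(v)$. If $y' \notin L(T(v))$, then $y'$ is not a descendant of $v$, which forces $\lca_T(x, y') \succeq_T v \succeq_T \lca_T(x, y)$. Either way, $\lca_T(x, y) \preceq_T \lca_T(x, y')$, so $y$ is a best match of $x$ in $(T, \sigma)$. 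By symmetry, $x$ is also a best match of $y$, hence $xy \in E(G)$, and equality follows.

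Given the equality, the connected component $(G^*_v, \sigma^*_v)$ of $G(T(v), \sigma_{|L(T(v))})$ is simultaneously a connected subgraph of $(G,\sigma)$. Consequently any two of its vertices are joined by a path in $G$, and hence lie in a common connected component $(G^*, \sigma^*)$ of $(G,\sigma)$. This gives the desired containment $(G^*_v, \sigma^*_v) \subseteq (G^*, \sigma^*)$ as a (not necessarily induced) subgraph. The only nontrivial step is the $\lca$ comparison ruling out ``closer'' external competitors, which reduces to the elementary ancestor observation that $y' \not\preceq_T v$ entails $\lca_T(x, y') \succeq_T v$ whenever $x \in L(T(v))$; I anticipate no real obstacles beyond this bookkeeping.
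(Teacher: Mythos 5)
Your proof is correct, and it is more careful than what the paper offers: the paper states Cor.~\ref{cor:subconnComp} with only the remark that it is ``a direct consequence of Lemma~\ref{lem:fact:1}''. Note, however, that Lemma~\ref{lem:fact:1} only yields the inclusion $E(G[L(T(v))])\subseteq E(G(T(v),\sigma_{|L(T(v))}))$, which by itself goes the wrong way for the corollary: it would allow a connected component of $G(T(v),\sigma_{|L(T(v))})$ to merge vertices lying in different components of $(G,\sigma)$. What the corollary actually needs is the reverse inclusion $E(G(T(v),\sigma_{|L(T(v))}))\subseteq E(G)$, which the paper only establishes later as the ``in particular'' clause of Lemma~\ref{lem:rbm-pairs}. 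Your $\lca$ argument supplies exactly this missing ingredient: for $xy\in E(G(T(v),\sigma_{|L(T(v))}))$ and any competitor $y'$ of color $\sigma(y)$, either $y'\in L(T(v))$ and the comparison is inherited from $T(v)$, or $y'\notin L(T(v))$ and $\lca_T(x,y')\succ_T v\succeq_T\lca_T(x,y)$; hence $xy\in E(G)$, so every component of $G(T(v),\sigma_{|L(T(v))})$ is a connected subgraph of $(G,\sigma)$ and lies in a single component. Your additional observation that $T_{|L(T(v))}=T(v)$ and that equality $G(T(v),\sigma_{|L(T(v))})=(G,\sigma)[L(T(v))]$ holds is a correct (and slightly stronger) byproduct, though only the reverse inclusion is needed for the stated corollary.
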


We next ensure the existence of certain types of edges in any RBMG.
\begin{lemma}\label{lem:rbm-pairs}
  Let $(T,\sigma)$ be a leaf-colored tree on $L$ and let $v\in V(T)$. Then,
  for any two distinct colors $r,s\in \sigma(L(T(v)))$, there is an edge
  $xy\in E(G(T,\sigma))$ with $x\in L[r]\cap L(T(v))$ and
  $y\in L[s]\cap L(T(v))$.  In particular, all edges in
  $G(T(v), \sigma_{|L(T(v))})$ are contained in $G(T,\sigma)$.
\end{lemma}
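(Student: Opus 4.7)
The plan is to prove both claims via a single LCA-minimization argument together with an elementary observation about how the LCA behaves under subtree restriction. I would start with the second (``in particular'') statement, since it is cleaner and then gives the existence assertion as an immediate corollary.

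For the inclusion $E(G(T(v),\sigma_{|L(T(v))})) \subseteq E(G(T,\sigma))$, let $ab$ be an edge of the subtree RBMG, so that $a,b \in L(T(v))$ and they form reciprocal best matches with respect to $(T(v), \sigma_{|L(T(v))})$. The crucial observation is a simple dichotomy: for any two leaves $p,q \preceq_T v$ the path between them lies entirely inside $T(v)$, whence $\lca_T(p,q) = \lca_{T(v)}(p,q) \preceq_T v$; while for any $c \in L \setminus L(T(v))$, the path from $c$ to any descendant of $v$ must cross $v$, so $\lca_T(a,c) \succeq_T v$. Given any $b' \in L[\sigma(b)]$, I would split into two cases: if $b' \in L(T(v))$, then $\lca_T(a,b) = \lca_{T(v)}(a,b) \preceq_T \lca_{T(v)}(a,b') = \lca_T(a,b')$ by the RBM property inside $T(v)$; if $b' \notin L(T(v))$, then $\lca_T(a,b') \succeq_T v \succeq_T \lca_T(a,b)$. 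Hence $b$ remains a best match of $a$ in the full tree, and the symmetric argument in $a$ yields $ab \in E(G(T,\sigma))$.

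For the existence statement I would argue directly by an LCA-minimization in $T$ (or, equivalently, apply the same argument inside $T(v)$ and then transfer the edge via the inclusion just proved). Fix distinct $r,s \in \sigma(L(T(v)))$ and, since the set is finite and non-empty, pick a pair $(x,y) \in (L[r] \cap L(T(v))) \times (L[s] \cap L(T(v)))$ that minimizes $\lca_T(x,y)$ in the partial order $\preceq_T$. For any $y' \in L[s]$, either $y' \in L(T(v))$ and minimality yields $\lca_T(x,y) \preceq_T \lca_T(x,y')$, or $y' \notin L(T(v))$ and then $\lca_T(x,y') \succeq_T v \succeq_T \lca_T(x,y)$. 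Because $\lca_T$ is symmetric in its arguments and the defining set is symmetric in $r$ and $s$, interchanging the roles of $x$ and $y$ shows that $x$ is also a best match of $y$. Thus $xy$ is a reciprocal best match and the required edge exists.

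There is no real obstacle: the whole proof rests on the observation that the restriction to $L(T(v))$ preserves LCAs of leaves in $L(T(v))$, while anything outside $L(T(v))$ meets such leaves only at or above $v$. The only subtle point is to ensure that the same minimizing pair works for both directions of the RBM requirement, which the choice of $(x,y)$ handles automatically by symmetry.
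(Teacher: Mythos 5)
Your proof is correct and follows essentially the same route as the paper's: your choice of a pair $(x,y)$ with $\preceq_T$-minimal $\lca_T(x,y)$ is just an extremal reformulation of the paper's descent to a deepest vertex $w\preceq_T v$ below which the two colors separate into distinct children, and both arguments then use the fact that any leaf outside $L(T(v))$ meets a leaf of $L(T(v))$ only above $v$. The only point worth making explicit is that for fixed $x$ all vertices $\lca_T(x,\cdot)$ lie on the path from $x$ to the root and are therefore pairwise comparable, which is what upgrades ``no pair has a strictly smaller lca'' to the inequality $\lca_T(x,y)\preceq_T\lca_T(x,y')$ you need.
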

\begin{proof}
  Let $v$ be a vertex of $(T,\sigma)$ such that $r,s\in \sigma(L(T(v)))$,
  $r\ne s$. Then there is always an inner vertex $w\preceq_T v$ such that
  (i) $\{r,s\}\subseteq\sigma(L(T(w)))$ and (ii) none of its children
  $w_i\in\child(w)$ satisfies $\{r,s\}\subseteq\sigma(L(T(w_i)))$. Any such
  $w$ has children $w_r,w_s\in \child(w)$ such that
  $r\in\sigma(L(T(w_r)))$, $s\notin\sigma(L(T(w_r)))$ and
  $s\in\sigma(L(T(w_s)))$, $r\notin\sigma(L(T(w_s)))$. Thus
  $\lca_T(x,y)\succeq_T w$ for every $x\in L(T(w_r))\cap L[r]\ne\emptyset$
  and $y\in L[s]$, with equality whenever $y\in L(T(w_s))$. Analogously,
  $\lca_T(y,x)\succeq_T w$ for every $y\in L(T(w_s))\cap L[s]\ne\emptyset$
  and $x\in L[r]$, with equality whenever $x\in L(T(w_r))$. Hence $xy$ is a
  reciprocal best match mediated by $\lca_T(x,y)=w$ whenever
  $x\in L(T(w_r))\cap L[r]$ and $y\in L(T(w_s))\cap L[s]$. Therefore
  $xy\in E(G(T,\sigma))$.

  In particular, the latter construction shows that the chosen leaves
  $x\in L(T(w_r))\cap L[r]$ and $y\in L(T(w_s))\cap L[s]$ are reciprocal
  best matches in $(T(v), \sigma_{|L(T(v))})$. Hence, every edge in
  $G(T(v), \sigma_{|L(T(v))})$ is also contained in $G(T,\sigma)$.  
\end{proof}
    
As a direct consequence of Lemma \ref{lem:rbm-pairs}, we obtain
\begin{corollary}
  If $(G,\sigma)$ is an RBMG with $|S|\ge 2$ colors, then there is at least
  one edge $xy\in E(G[L[r]\cup L[s]])$ for any two distinct colors
  $r, s\in S$.
  \label{cor:funfact}
\end{corollary}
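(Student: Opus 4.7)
The plan is to apply Lemma \ref{lem:rbm-pairs} directly with $v$ taken to be the root of a tree $(T,\sigma)$ that explains $(G,\sigma)$. The statement is essentially a specialization of that lemma from an arbitrary vertex $v$ down to the root, combined with the observation that surjectivity of $\sigma$ guarantees both colors appear as leaf labels somewhere in $T$.

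First, I would invoke the fact that $(G,\sigma)$ is an RBMG, so there exists a leaf-colored tree $(T,\sigma)$ with $G(T,\sigma) = (G,\sigma)$. Next, I would fix two distinct colors $r,s \in S$ and observe that since $\sigma: L \to S$ is assumed surjective throughout the paper, we have $L[r], L[s] \neq \emptyset$, hence $\{r,s\} \subseteq \sigma(L) = \sigma(L(T(\rho_T)))$. Applying Lemma \ref{lem:rbm-pairs} to $v = \rho_T$ then yields an edge $xy \in E(G(T,\sigma)) = E(G)$ with $x \in L[r] \cap L(T(\rho_T)) = L[r]$ and $y \in L[s] \cap L(T(\rho_T)) = L[s]$.

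Finally, since $x \in L[r] \subseteq L[r] \cup L[s]$ and $y \in L[s] \subseteq L[r] \cup L[s]$, the edge $xy$ lies in the induced subgraph $G[L[r] \cup L[s]]$, proving the claim. There is no real obstacle here; the only thing worth being careful about is that we are not claiming the stronger (and false in general) hereditary statement that every pair of color classes intersected with an arbitrary subset $L' \subseteq L$ still contains an edge --- the argument works precisely because we apply the lemma at the root, where all colors are available.
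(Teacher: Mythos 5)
Your proof is correct and matches the paper's approach exactly: the paper presents this corollary as a direct consequence of Lemma \ref{lem:rbm-pairs}, and your instantiation at $v=\rho_T$ together with surjectivity of $\sigma$ is precisely the intended argument.
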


\begin{theorem}\label{thm:conn_comp}
  Let $(G^*,\sigma^*)$ with vertex set $L^*$ be a connected component of
  some RBMG $(G,\sigma)$ and let $(T,\sigma)$ be a leaf-colored tree
  explaining $(G,\sigma)$. Then, $(G^*,\sigma^*)$ is again an RBMG and is
  explained by the restriction $(T_{|L^*},\sigma_{|L^*})$ of $(T,\sigma)$
  to $L^*$.
\end{theorem}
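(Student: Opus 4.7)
The proof consists in showing $(G^*, \sigma^*) = G(T_{|L^*}, \sigma_{|L^*})$ as vertex-colored graphs. The inclusion $(G^*, \sigma^*) \subseteq G(T_{|L^*}, \sigma_{|L^*})$ is immediate from Lemma \ref{lem:fact:1}: since $(G, \sigma)[L^*] = (G^*, \sigma^*)$ is always a subgraph of $G(T_{|L^*}, \sigma_{|L^*})$, every reciprocal best match of $(G, \sigma)$ restricted to $L^*$ remains a reciprocal best match in $G(T_{|L^*}, \sigma_{|L^*})$.

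For the reverse inclusion I argue by contradiction. Suppose $xy \in E(G(T_{|L^*}, \sigma_{|L^*}))$ for some $x, y \in L^*$ but $xy \notin E(G)$. By the symmetry of the reciprocal best match relation we may assume that $y$ is not a best match of $x$ in $(T, \sigma)$, so there is a $y' \in L$ with $\sigma(y') = \sigma(y)$ and $\lca_T(x, y') \prec_T \lca_T(x, y)$; choosing $y'$ with minimal $\lca_T(x, y')$ makes $y'$ a best match of $x$ in $T$, and we set $v_0 := \lca_T(x, y')$. Because the ancestor order of $T_{|L^*}$ on the leaves of $L^*$ is inherited from $T$, the assumption that $y$ is a best match of $x$ in $T_{|L^*}$ forces every $z \in L^* \cap L[\sigma(y)]$ to satisfy $\lca_T(x, z) \succeq_T \lca_T(x, y) \succ_T v_0$. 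Hence $y' \notin L^*$, and moreover $L^* \cap L[\sigma(y)] \cap L(T(v_0)) = \emptyset$.

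The plan is to contradict this last emptiness by exhibiting a color-$\sigma(y)$ leaf of $L(T(v_0))$ that lies in $L^*$. This is done through Lemma \ref{lem:rbm-pairs} applied to the subtree $T(v_0)$: since both colors $\sigma(x)$ and $\sigma(y)$ appear in $\sigma(L(T(v_0)))$, the lemma produces a reciprocal best match edge in $G(T(v_0))$, and by its in-particular clause this edge is also in $G$. The minimality in the choice of $y'$ forces $\sigma(y) \notin \sigma(L(T(c_x)))$ for the child $c_x$ of $v_0$ containing $x$, so $c_x$ plays the role of the $\sigma(x)$-colored child in the bipartition from Lemma \ref{lem:rbm-pairs}: when the lowest bichromatic vertex $w \preceq_T v_0$ coincides with $v_0$, the RBM edge directly pairs $x$ with a color-$\sigma(y)$ leaf of $L(T(v_0))$, which therefore lies in $L^*$, yielding the desired contradiction. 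The main obstacle is the case $w \prec_T v_0$, in which the RBM pair resides in a child subtree of $v_0$ disjoint from $x$; here one must exploit Corollary \ref{cor:subconnComp} together with an iterated alternating best-match walk between the colors $\sigma(x)$ and $\sigma(y)$ starting from $x$ and $y'$ in order to transport an RBM edge into $x$'s connected component of $G$ and thereby reach the contradiction.
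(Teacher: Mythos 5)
Your first inclusion and the setup of the reverse inclusion are sound and essentially mirror the paper's argument (with the roles of the two colors swapped): the paper likewise invokes Lemma \ref{lem:fact:1}, picks the interfering leaf with $\preceq_T$-minimal last common ancestor, deduces that no leaf of the relevant color in $L(T(v_0))$ can lie in $L^*$, and uses Lemma \ref{lem:rbm-pairs} to manufacture a reciprocal best match pair below $v_0$ whose endpoints both lie outside $L^*$. Your ``easy case'' ($w=v_0$), where $x$ itself can serve as the $\sigma(x)$-colored endpoint of the pair from Lemma \ref{lem:rbm-pairs}, is also handled correctly.

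The gap is the remaining case, which is the heart of the theorem and occupies roughly two thirds of the paper's proof. Your closing sentence proposes ``an iterated alternating best-match walk between the colors $\sigma(x)$ and $\sigma(y)$ starting from $x$ and $y'$,'' but such a walk cannot deliver the contradiction. First, the walk you describe descends into $L(T(c_{y'}))$ and stabilizes at a reciprocal best match pair $\tilde x\tilde y$ whose endpoints you have already shown lie \emph{outside} $L^*$; nothing transports an edge \emph{into} $x$'s component. Second, and more fundamentally, in this case one can show $N^{+}_{\sigma(y)}(x)=L(T(v_0))\cap L[\sigma(y)]$ is disjoint from $L^*$, hence $N_{\sigma(y)}(x)=\emptyset$ in $G$; so the very first edge of any $x$--$y$ path inside $G^*$ already leaves the two-color world, and the connectivity of $G^*$ that must ultimately be contradicted is realized through vertices of arbitrary colors. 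The paper resolves this by taking a shortest $x$--$y$ path $a_0a_1\dots a_k$ in $G^*$ and proving by induction two invariants for every interior vertex $a_i$: that $a_i$ lies in $L(T(v_0))$, and that there is a same-colored ``shadow'' $\tilde a_i\in L(T(v_0))\setminus L^*$ (produced via repeated applications of Lemma \ref{lem:rbm-pairs}, with a case split on whether $\sigma(a_i)$ equals $\sigma(x)$); the contradiction then falls out at the path vertex adjacent to the endpoint not contained in $L(T(v_0))$. Corollary \ref{cor:subconnComp} alone does not substitute for this induction. As written, your proof establishes the conclusion only when every child of $v_0$ containing color $\sigma(y)$ also contains color $\sigma(x)$ fails to occur, and the general case remains open.
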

\begin{proof}
  Throughout this proof, all $N^{+}$-neighborhoods are taken w.r.t.\ the
  underlying BMG $\G(T,\sigma)$. It suffices to show that
  $G(T_{|L^*},\sigma_{|L^*})=(G^*,\sigma^*)$. Lemma \ref{lem:fact:1}
  implies that $(G^*,\sigma^*)$ is a (not necessarily induced) subgraph of
  $G(T_{|L^*},\sigma_{|L^*})$, i.e.,
  $E(G^*)\subseteq E(G(T_{|L^*},\sigma_{|L^*}))$. By assumption,
  $(G^*,\sigma^*)$ is an induced subgraph of $(G,\sigma)$. Thus, we only need
  to prove that $E(G(T_{|L^*},\sigma_{|L^*}))\subseteq E(G^*)$.

  Assume, for contradiction, that there exists an edge $xy$ in
  $G(T_{|L^*},\sigma_{|L^*})$ that is not contained in $(G^*,\sigma^*)$.
  By definition, $r\coloneqq \sigma(x) \neq s\coloneqq \sigma(y)$ and, in
  particular, $x,y\in L^*$. Let $u\coloneqq \lca_T(x,y)$. 
  By construction, any two
  vertices within $L^*$ have the same last common ancestor in $(T,\sigma)$
  and $(T_{|L^*},\sigma_{|L^*})$.  Since the edge $xy$ is not contained in
  $(G^*,\sigma^*)$, the edge $xy$ is not contained in $(G,\sigma)$
  either. Hence, $x$ and $y$ do not form reciprocal best matches in
  $(T,\sigma)$.  Thus, there must exist some $x'\in L[r]$ with
  $\lca_T(x',y)\prec_T \lca_T(x,y)$, or a leaf $y'\in L[s]$ with
  $\lca_T(x,y')\prec_T \lca_T(x,y)$.

  W.l.o.g.\ we assume that the first case is satisfied.  Since
  $\lca_T(x',y)\prec_T \lca_T(x,y)$, we must have $x'\in L\setminus L^*$,
  as otherwise,
  $\lca_{T_{|L^*}}(x',y)\prec_{T_{|L^*}} \lca_{T_{|L^*}}(x,y)$ and hence,
  $x$ cannot be a best match of $y$, which in turn would imply that $xy$ is
  not an edge in $G(T_{|L^*},\sigma_{|L^*})$.  We will re-use the latter
  argument and refer to it as \emph{Argument-1}.

  In the following, w.l.o.g.\ we choose $x'\in L[r]$ such that
  $\lca_T(x',y)\prec_T \lca_T(x,y)$ and $\lca(x',y)$ is $\preceq_T$-minimal
  among all least common ancestors that satisfy the latter condition.  We
  write $v\coloneqq \lca_T(x',y)$. By construction, we have $v\prec_T u$.  By
  contraposition of \emph{Argument-1}, we have for all $x''\in L^*$ with
  $\sigma(x'') = r$ it must hold that $\lca_T(x'',y)\succeq_T \lca_T(x,y)$
  and thus, $x''\notin L(T(v))$. In other words, we have
  \begin{equation} \label{eq:l}
    x''\notin L^* \text{ for all\ }  x''\in L(T(v))\cap L[r]. 
  \end{equation} 

  Let $v_{x'},v_y\in \child(v)$ with $x'\preceq_T v_{x'}$ and
  $y\preceq_T v_y$. The choice of $x'$ and the resulting
  $\preceq_T$-minimality of $\lca(x',y)$ implies that
  $\sigma(x)=r\notin \sigma(L(T(v_y)))$.  Therefore, $x'\in N^+_r(y)$.  We
  observe that $x'y\notin E(G)$ since, otherwise, $x'\in L^*$; a
  contradiction.  From $x'y\notin E(G)$ we conclude $y\notin N^+_s(x')$ and
  thus there exists an $y'\in L[s]$ such that
  $\lca_T(x',y')\prec_T \lca_T(x',y)=v$ and hence, $y'\prec_T v_{x'}$.
         
  The latter, in particular implies that $r,s\in \sigma(L(T(v_{x'}))$.
  Hence, we can apply Lemma \ref{lem:rbm-pairs} to conclude that there are
  two vertices $\tilde x\in L[r]\cap L(T(v_{x'}))$ and
  $\tilde y\in L[s]\cap L(T(v_{x'}))$ such that $\tilde x\tilde y\in
  E(G)$. Equ. \eqref{eq:l} now implies $\tilde x \notin L^*$.  Therefore,
  $\tilde x\tilde y\in E(G)$ now allows us to conclude that
  $\tilde y\in L\setminus L^*$.

  Now, let $\mathscr{P}_{xy}=(x=a_0a_1a_2\dots a_{k-1} a_{k}=y)$ be a
  shortest path in $(G^*, \sigma^*)$ connecting $x$ and $y$. Since $x$ and
  $y$ reside within the same connected component $(G^*, \sigma^*)$ of
  $(G,\sigma)$ and $xy\notin E(G^*)$, such a path exists and, in
  particular, it must contain at least one $a_i\neq x,y$, i.e., $k>1$.  By
  definition of a shortest path, $a_ia_j\notin E(G)$ for all
  $i,j\in\{0,1,\dots, k\}$ that satisfy $|i-j|>1$.  Since $a_i\in L^*$ for
  any $0\leq i \leq k$ but $\tilde x,\tilde y\in L\setminus L^*$, we
  have \begin{equation}
    \label{eq:edge}
    \tilde xa_i, \tilde ya_i \notin E(G)
  \end{equation}
  for any $0\leq i \leq k$, since otherwise, $\tilde x$ and $\tilde y$
  would be contained in the connected component $(G^*,\sigma^*)$ and thus,
  also in $L^*$; a contradiction.

  We proceed to show by induction that
  \begin{enumerate}
  \item[(I1)] $a_i\in L(T(v))$, $1\le i \le k$, and
  \item[(I2)] there exists a vertex
    $\tilde a_i \in L(T(v))\cap L[\sigma(a_i)]$ such that
    $\tilde a_i\notin L^*$, $1\le i \le k$.
  \end{enumerate}
  We start with $i=k$.  By construction, $y=a_{k}\in L(T(v))$ satisfies
  Property (I1).  Moreover, $\tilde a_{k} \coloneqq \tilde y$ satisfies
  Property (I2). For the induction step assume that, for a fixed $m\leq k$,
  Property (I1) and (I2) is satisfied for all $i$ with $m < i\leq k$.

  Now, consider the case $i=m$. For better readability we put
  $b\coloneqq a_{m+1}$ and $\tilde b\coloneqq \tilde a_{m+1}$. By induction
  hypothesis, $b$ and $\tilde b$ satisfy Property (I1) and (I2),
  respectively. Since $a_m b\in E(G)$, we know that
  $\sigma(a_m)\neq \sigma(b)$.  In what follows, we consider the two
  exclusive cases: either $\sigma(a_m)=\sigma(x) = r$ or
  $\sigma(a_m)\neq r$. If $\sigma(a_m) = r$, then we put
  $\tilde a_m = \tilde x$. Hence, Property (I2) is trivially satisfied for
  $\tilde a_m$.  Moreover, $a_m$ must then be contained in $L(T(v))$,
  otherwise $v\succeq_T\lca_T(b,\tilde x)$ implies that
  $\lca_T(b,\tilde x)\prec_T \lca(b,a_m)$, which contradicts
  $a_m b\in E(G)$, i.e., Property (I1) is satisfied as well.

  In case $\sigma(a_m)\neq r$ assume first, for contradiction, that
  $a_m\notin L(T(v))$. Since $b,\tilde b \in L(T(v))$ we observe that
  $\lca_T(b,a_m)=\lca_T(\tilde b,a_m) \succ_T v$.  Note that we have
  $b\in N^+_{\sigma(b)}(a_m)$ since $ba_m\in E(G)$ by definition of
  $\mathscr{P}_{xy}$.  Thus, $\lca_T(b,a_m)=\lca_T(\tilde b,a_m)$ implies
  $\tilde b\in N^+_{\sigma(b)}(a_m)$.  Since $a_m\in L^*$ (by definition)
  and $\tilde b\notin L^*$ (by Property (I2)), we can conclude that
  $a_m\tilde b\notin E(G)$.  The latter two arguments imply that
  $a_m \notin N^+_{\sigma(a_m)}(\tilde b)$.  Hence, there exists a leaf
  $a_m'$ with $\sigma(a_m)=\sigma(a_m')$ such that
  $\lca_T(\tilde b, a_m')\prec_T \lca_T(\tilde b, a_m)$.  There are two
  cases, either $a_m'\in L(T(v_{\tilde b}))$ or
  $a_m'\notin L(T(v_{\tilde b}))$, where $v_{\tilde b}\in \child(v)$ with
  $\tilde b \preceq_T v_{\tilde b}$.  If $a_m'\in L(T(v_{\tilde b}))$, then
  $\lca_T(b, a_m')\preceq_T v$ and we can re-use the fact
  $\lca_T(b,a_m) \succ_T v$ from above to conclude that
  $\lca_T(b,a_m')\prec_T \lca_T(b,a_m)$.  If
  $a_m'\notin L(T(v_{\tilde b}))$, then
  $\lca_T(b,a_m')\preceq_T\lca_T(\tilde b, a_m')$. Thus, we have
  $\lca_T(b,a_m')\preceq_T\lca_T(\tilde b, a_m')\prec_T \lca_T(\tilde b,
  a_m) = \lca_T(b, a_m)$.  Hence, in either case we obtain
  $\lca_T(b,a_m')\prec_T \lca_T(b,a_m)$, thus $a_mb\notin E(G)$; a
  contradiction.  Therefore, $a_m\in L(T(v))$, i.e., Property (I1) is
  satisfied by $a_m$.

  To summarize the argument so far, Property (I1) is always satisfied
  for $a_m$, independent of the particular color $\sigma(a_m)$.  Moreover,
  Property (I2) is satisfied, in case $\sigma(a_m) = r$.  Thus, it remains
  to show that Property (I2) is also satisfied in case
  $\sigma(a_m)\neq r$. To this end, let $v_m\in \child(v)$ such that
  $a_m\preceq_T v_m$. If $r\in \sigma(L(T(v_m)))$, then Lemma
  \ref{lem:rbm-pairs} implies that there must exist leaves
  $\tilde x_m, \tilde a_m \in L(T(v_m))$ with $\sigma(\tilde x_m)=r$ and
  $\sigma(\tilde a_m)=\sigma(a_m)$ such that
  $\tilde x_m \tilde a_m \in E(G)$.  By Equ.\ \eqref{eq:l}, no vertex
  in $L(T(v))\cap L[r]$ is contained in $L^*$, and thus, we have
  $\tilde x_m \notin L^*$ and, since $\tilde x_m \tilde a_m \in E(G)$, it
  must also hold $\tilde a_m \notin L^*$.

  Otherwise, if $r\notin \sigma(L(T(v_m)))$, then $\sigma(\tilde x) = r$
  and $\tilde x\preceq_T v_{x'}$ implies that $v_m\neq v_{x'}$. Hence,
  $\lca(a_m, \tilde x)=v$. In particular, there is no vertex $x''\in L[r]$
  such that $\lca_T(a_m,x'')\prec_T\lca(a_m, \tilde x)=v$, thus
  $\tilde x\in N^+_r(a_m)$.  Since $a_m\in L^*$ and $\tilde x\notin L^*$,
  it must hold that $a_m \tilde x\notin E(G)$.  Thus, there must exist a
  leaf $\tilde a_m \in L[\sigma(a_m)]$ such that
  $\lca_T(\tilde x, \tilde a_m)\prec_T \lca_T(\tilde x, a_m)=v$, i.e.,
  $\sigma(a_m)\in \sigma(L(T(v_{x'})))$. We can therefore apply Lemma
  \ref{lem:rbm-pairs} to conclude that there must exist
  $\tilde x_m \in L(T(v_{x'})) \cap L[r]$ and
  $\tilde a_m \in L(T(v_{x'})) \cap L[\sigma(a_m)]$ such that
  $\tilde x_m \tilde a_m \in E(G)$. Analogous argumentation as in the case
  $r\in \sigma(L(T(v_m)))$ shows $\tilde x_m, \tilde a_m \notin
  L^*$. Hence, Property (I2) is satisfied, which completes the induction
  proof.

  Property (I1) finally implies that $a_1 \in L(T(v))$. Moreover, by
  construction of $\mathscr{P}_{xy}$ we have $xa_1\in E(G^*)$.  Property
  \eqref{eq:edge}, on the other hand, implies $\tilde xa_1\notin
  E(G^*)$. Consequently, we have
  $\lca_T(a_1,x)\prec_T \lca_T(a_1,\tilde x)$. This, however, contradicts
  $\lca_T(a_1,x) = u \succ_T v = \lca_T(a_1,\tilde x)$.  The shortest path
  $\mathscr{P}_{xy}$ can, therefore, consist only of the single edge $xy$,
  and hence $E(G(T_{|L^*},\sigma_{|L^*}))\subseteq E(G^*)$. Therefore
  $G(T_{|L^*},\sigma_{|L^*})=(G^*,\sigma^*)$ and $(T_{|L^*},\sigma_{|L^*})$
  explains $(G^*,\sigma^*)$. In particular, the connected component
  $(G^*,\sigma^*)$ is again an RBMG.
  
\end{proof}

So far, we have shown that every connected component of an $n$-RBMG is
therefore a $k$-RBMG possibly with a strictly smaller number $k$ of
colors. We next ask when the disjoint union of RBMGs is again an RBMG.  To
this end, we identify certain vertices in the leaf-colored tree
$(T,\sigma)$ that, as we shall see below, are related to the decomposition
of $G(T,\sigma)$ into connected components.

\begin{definition}
  Let $(T,\sigma)$ be a leaf-colored tree with leaf set $L$. An inner
  vertex $u$ of $T$ is \emph{color-complete} if
  $\sigma(L(T(u)))=\sigma(L)$. Otherwise it is \emph{color-deficient}.
\label{def:color-compl}
\end{definition}
We will also refer to a subtree $(T(u),\sigma_{|L(T(u))})$ of $(T,\sigma)$
as \emph{color-complete} if its root is color-complete.
    
We write $A(u)$ for the set of color-deficient children of $u$, i.e.,
\begin{equation}
  \label{eq:A(u)}
  A(u)\coloneqq \{v\mid v\in \child(u), \sigma(L(T(v))) \subsetneq \sigma(L)\}
\end{equation}
and set
\begin{equation}
  \mathcal{L}(u) \coloneqq  \bigcup_{v\in A(u)} L(T(v)).
\end{equation}

\begin{figure}[t]
  \begin{tabular}{lcr}
    \begin{minipage}{0.5\textwidth}
      \begin{center}
        \includegraphics[width=\textwidth]{./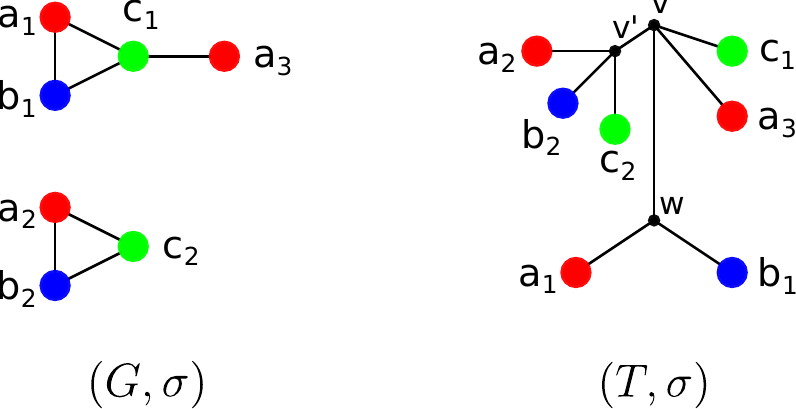}
      \end{center}
    \end{minipage}
    &    & 
    \begin{minipage}{0.4\textwidth}
      \caption{The 3-RBMG $(G,\sigma)$ on the left hand side can be
        explained by the tree $(T,\sigma)$ shown on the right. In
        $(T,\sigma)$, the inner vertex $v$ is a fork. The color-deficient
        children of $v$ are $c_1,a_3$ and $w$, thus
        $\mathcal{L}(v) = \{a_1,b_1,c_1,a_3\}$.  Also, $v'$ is a fork and
        $\mathcal{L}(v') =\{a_2,b_2,c_2\}$. The set of forks is
        $\zeta(T,\sigma)=\{v,v'\}$.}
      \label{fig:fork}
    \end{minipage}
  \end{tabular}
\end{figure}

\begin{definition}
  Let $(T,\sigma)$ a leaf-colored tree. An inner vertex $u\in V^0(T)$ is a
  \emph{fork} if $\sigma(\mathcal{L}(u))=\sigma(L)$. We write
  $\zeta(T,\sigma)$ for the set of forks in $T$.
\label{def:fork}
\end{definition}
For an illustration  see Fig.\ 
\ref{fig:fork}. As an immediate consequence
of the definition we have
\begin{lemma}
  \label{lem:fact:sum-cl} 
  Every fork in a leaf-colored tree $(T,\sigma)$ is color-complete, but not
  every color-complete vertex is a fork.
\end{lemma}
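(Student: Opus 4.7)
The plan is to split the lemma into its two assertions and handle each in turn.

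For the first assertion, I would argue directly from the definitions. Let $u$ be a fork, so $\sigma(\mathcal{L}(u))=\sigma(L)$. Since $A(u)\subseteq\child(u)$, each leaf $x\in L(T(v))$ with $v\in A(u)$ lies below $u$, hence $\mathcal{L}(u)=\bigcup_{v\in A(u)}L(T(v))\subseteq L(T(u))$. Applying $\sigma$ and using $L(T(u))\subseteq L$, we obtain the chain $\sigma(L)=\sigma(\mathcal{L}(u))\subseteq\sigma(L(T(u)))\subseteq\sigma(L)$, which forces $\sigma(L(T(u)))=\sigma(L)$. Thus $u$ is color-complete by Def.~\ref{def:color-compl}. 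This step is just set-inclusion chasing and presents no real obstacle.

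For the second assertion, I would exhibit a small explicit counterexample. Take a tree on color set $S=\{r,s,t\}$ whose root $\rho$ has exactly two children $v_1,v_2$, where $T(v_1)$ has leaves of all three colors while $T(v_2)$ consists of a single leaf of color $r$. Then $\sigma(L(T(\rho)))=\{r,s,t\}=\sigma(L)$, so $\rho$ is color-complete. However, $v_1$ is color-complete and therefore $v_1\notin A(\rho)$; only $v_2$ is color-deficient, giving $\mathcal{L}(\rho)=L(T(v_2))$ and $\sigma(\mathcal{L}(\rho))=\{r\}\subsetneq\sigma(L)$. Hence $\rho$ is not a fork. (The example in Fig.~\ref{fig:fork} could equivalently be cited, as its root is color-complete but is not a fork because it has a color-complete child.)

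There is no real obstacle here; the proof is essentially a definition unwinding plus one tiny example, and takes only a couple of lines. The only thing worth being a bit careful about is ensuring the counterexample uses a phylogenetic tree in the sense fixed in Section~\ref{sect:prelim} (each inner non-root vertex of degree at least three), which the construction above satisfies as soon as $T(v_1)$ has at least three leaves.
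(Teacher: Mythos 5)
Your proof is correct and follows essentially the same route as the paper: the first assertion by the inclusion chain $\sigma(L)=\sigma(\mathcal{L}(u))\subseteq\sigma(L(T(u)))\subseteq\sigma(L)$, and the second by exhibiting a color-complete root that fails to be a fork because a color-complete child is excluded from $A(\rho)$. The only cosmetic difference is that the paper's counterexample gives the root two color-complete children (so $A(\rho_T)=\emptyset$), whereas yours keeps one color-deficient child whose colors do not exhaust $\sigma(L)$; both work equally well.
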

\begin{proof}
  For a fork $u$, we have
  $\sigma(L)=\sigma(\mathcal{L(T)}(u))\subseteq
  \bigcup_{v\in\child(u)}\sigma(L(T(v)))=\sigma(L(T(u))$.  Thus every fork
  must be color-complete. In order to see that not every color-complete
  vertex is a fork, consider a leaf-colored tree $(T,\sigma)$, where
  $\rho_T$ has exactly two children both of which are color-complete.
  Then, $\rho_T$ is color-complete but $A(\rho_T)=\emptyset$. Hence,
  $\rho_T$ is not a fork.
  
\end{proof}

\NEW{Clearly, there are no forks in a leaf-labeled tree $(T,\sigma)$ with
  $|\sigma(L(T))|=1$. In the following, we will therefore restrict our
  attention to trees and graphs with at least two colors, omitting the
  trivial case of 1-RBMGs which correspond to the edge-less graph $(G,\sigma)$ 
	that is explained by any leaf-colored tree $(T,\sigma)$ with leaf set $L(T) = V(G)$.  Next, we
  derive some useful technical results about forks and color-complete
  trees, which will be needed to prove the main result of this section.}

\begin{lemma}\label{lem:zeta_nonempty}
  Let $(T,\sigma)$ be a leaf-colored tree.  Then
  $\zeta(T,\sigma)\neq \emptyset$.
\end{lemma}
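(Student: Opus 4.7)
The plan is to exhibit a fork by taking a $\preceq_T$-minimal color-complete vertex. Since the paper already restricts to $|\sigma(L)| \ge 2$, the argument naturally exploits that any leaf is monochromatic and hence cannot be color-complete, while the root is automatically color-complete.

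First I would observe that the set of color-complete vertices of $(T,\sigma)$ is nonempty: indeed, $L(T(\rho_T)) = L$, so $\sigma(L(T(\rho_T))) = \sigma(L)$ and the root $\rho_T$ is color-complete. Next I would pick a $\preceq_T$-minimal color-complete vertex $u$, that is, a color-complete vertex no proper descendant of which is color-complete. Such a $u$ exists because $T$ is finite and its set of color-complete vertices is nonempty and partially ordered by $\preceq_T$.

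Now I would verify that $u$ is an inner vertex. If $u$ were a leaf, then $L(T(u)) = \{u\}$ and $\sigma(L(T(u))) = \{\sigma(u)\}$, which cannot equal $\sigma(L)$ once $|\sigma(L)| \ge 2$. Hence $u \in V^0(T)$. Moreover, by $\preceq_T$-minimality, every child $v \in \child(u)$ satisfies $\sigma(L(T(v))) \subsetneq \sigma(L)$, which by Equ.~\eqref{eq:A(u)} gives $A(u) = \child(u)$, and therefore
\[
\mathcal{L}(u) \;=\; \bigcup_{v \in A(u)} L(T(v)) \;=\; \bigcup_{v \in \child(u)} L(T(v)) \;=\; L(T(u)).
\]
Since $u$ is color-complete, $\sigma(\mathcal{L}(u)) = \sigma(L(T(u))) = \sigma(L)$, so $u \in \zeta(T,\sigma)$ by Def.~\ref{def:fork}. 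This yields $\zeta(T,\sigma) \neq \emptyset$.

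The argument contains no real obstacle; the only subtlety is being careful that the minimality is taken with respect to $\preceq_T$ and noting the tacit assumption $|\sigma(L)|\ge 2$ that was just established in the paragraph preceding the lemma, which is what rules out the leaf case and makes the minimal color-complete vertex an inner vertex with all children color-deficient.
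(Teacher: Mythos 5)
Your proof is correct and is essentially the paper's argument in direct form: the paper descends from the root along color-complete children until it reaches a color-complete vertex all of whose children are color-deficient, which is exactly your $\preceq_T$-minimal color-complete vertex. You also correctly flag the tacit assumption $|\sigma(L)|\ge 2$ (established just before the lemma) that rules out the leaf case, which the paper uses implicitly.
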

\begin{proof}
  \NEW{Let $L=L(T)$.}  Assume, for contradiction, that
  $\zeta(T,\sigma)=\emptyset$. Thus, in particular,
  $\rho_T\notin\zeta(T,\sigma)$. Since the root $\rho_T$ is always
  color-complete, we have
  $\sigma(\mathcal{L}(\rho_T))\neq \sigma(L(T(\rho_T)))=\sigma(L)$, which
  implies $A(\rho_T) \subsetneq \child(\rho_T)$. Hence,
  Equ.~\eqref{eq:A(u)} implies that there is a child $u_1$ of the root with
  $\sigma(L(T(u_1)))=\sigma(L)$. Since $\zeta(T,\sigma)=\emptyset$, the
  vertex $u_1$ is not a fork. Repeating the argument, $u_1$ must have a
  child $u_2$ with $\sigma(L(T(u_2)))=\sigma(L)$, and so on. Hence, there
  is a sequence of inner vertices
  $\rho_T\coloneqq u_0 \succ_T u_1\succ_T u_2\succ_T \dots\succ_T u_k$ such
  that $u_j$ has only color-complete children for $0\le j<k$.  Since $T$ is
  finite, all maximal paths of this form a finite, i.e., the final vertex
  $u_k$ in every maximal path has only color-deficient children, i.e.,
  $A(u)=\child(u)$. Since $u_k$ itself is color-complete by construction,
  $\sigma(\mathcal{L}(u))=\sigma(L(T(u)))=\sigma(L)$, i.e., $u_k$ is fork,
  a contradiction.  
\end{proof}

\begin{lemma}\label{lem:cl}
  Let $(G,\sigma)$ be an $n$-RBMG, $n\ge 2$, $(T,\sigma)$ a tree \NEW{with
    leaf set $L$} that explains $(G,\sigma)$ and $(T(u),\sigma_{|L(T(u))})$
  a color-complete subtree of $(T,\sigma)$ for some $u\in V^0(T)$. Then,
  $xy\notin E(G)$ for any two vertices $x,y\in L$ with $x\in L(T(u))$ and
  $y\in L\setminus L(T(u))$.
\end{lemma}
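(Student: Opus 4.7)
The plan is to pick arbitrary $x\in L(T(u))$ and $y\in L\setminus L(T(u))$ and show directly from the definition of best match that $xy\notin E(G)$. I would first dispose of the trivial case $\sigma(x)=\sigma(y)$: by Observation \ref{obs:proper} every RBMG is properly colored, so in this case $xy$ cannot be an edge. So assume $\sigma(x)\neq\sigma(y)$ from now on.

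The key geometric observation is that, since $y\notin L(T(u))$, the vertex $u$ is an ancestor of $x$ but not of $y$, hence $\lca_T(x,y)\succ_T u$ strictly. On the other hand, because $u$ is color-complete we have $\sigma(L(T(u)))=\sigma(L)$, so there exists some $y'\in L(T(u))$ with $\sigma(y')=\sigma(y)$. This witness satisfies $\lca_T(x,y')\preceq_T u\prec_T \lca_T(x,y)$.

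By Definition \ref{def:rbm}, $y$ being a best match of $x$ would require $\lca_T(x,y)\preceq_T\lca_T(x,y')$ for every $y'$ of color $\sigma(y)$, which is contradicted by the witness above. Hence $y$ is not a best match of $x$ in $(T,\sigma)$, so $(x,y)$ is not an arc of $\G(T,\sigma)$; in particular $xy$ is not an edge of the symmetric part $G(T,\sigma)=(G,\sigma)$. Since $x$ and $y$ were arbitrary, the claim follows.

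I do not expect any real obstacle here: the argument is a one-line last-common-ancestor comparison, and the only thing to be careful about is handling the same-color case via properness and noting that the inclusion $\lca_T(x,y)\succ_T u$ is strict precisely because $y\notin L(T(u))$.
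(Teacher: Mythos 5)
Your proof is correct and follows essentially the same route as the paper's: both use color-completeness to produce a witness $y'\in L(T(u))$ with $\sigma(y')=\sigma(y)$ and compare $\lca_T(x,y')\preceq_T u\prec_T\lca_T(x,y)$ to conclude $y$ is not a best match of $x$. The only cosmetic difference is that you argue directly while the paper argues by contradiction, and you additionally dispose of the same-color case explicitly via properness, which the paper folds into the assumption $xy\in E(G)$.
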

\begin{proof}
  If $u=\rho_T$, then $L\setminus L(T(u)) = \emptyset$ and the lemma is
  trivially true. Thus suppose $u\neq \rho_T$.  Let $x\in L(T(u))$ and
  assume for contradiction $xy\in E(G)$ for some $y\in L\setminus L(T(u))$,
  i.e., $x$ and $y$ are reciprocal best matches. By choice of $x$ and $y$,
  $\lca(x,y)\succ u$ and $\sigma(x)\neq \sigma(y)$.  Since
  $(T(u),\sigma_{|L(T(u))})$ is color-complete, there exists a leaf
  $y'\in L(T(u))$ with $\sigma(y')=\sigma(y)$. Hence in particular,
  $\sigma(y')\neq \sigma(x)$ and thus, $y'\neq x$.  Since $y'\in L(T(u))$,
  we have $\lca(x,y')\preceq u \prec \lca(x,y)$; a contradiction to the
  assumption that $x$ and $y$ are reciprocal best matches.
  
\end{proof}

\begin{lemma}\label{lem:sum-com}
  Let $(T,\sigma)$ be a leaf-colored tree with leaf set $L$  
	that explains \NEW{the $n$-RBMG} $(G,\sigma)$ with $n>1$, and
  let $u\in \zeta(T,\sigma)$ be a fork in $(T,\sigma)$.  Then, the
  following statements are true:
  \begin{description}
  \item[(i)] If $L^*$ is the vertex set of a connected component
    $(G^*,\sigma^*)$ of $(G,\sigma)$, then either
    $L^*\subseteq \mathcal{L}(u)$ or $L^*\cap \mathcal{L}(u)=\emptyset$.
  \item[(ii)] There is a connected component
    $(G^*,\sigma^*)$ of $(G,\sigma)$ with leaf set
    $L^*\subseteq\mathcal{L}(u)$ and $\sigma(L^*)=\sigma(L)$.
  \item[(iii)] Let $(G^*,\sigma^*)$ be a connected component of
    $(G,\sigma)$ with vertex set $L^*$ and $\sigma(L^*)=\sigma(L)$. 
		Then, $u'\coloneqq \lca(L^*)$ is a fork and
    $L^*\subseteq \mathcal{L}(u')$.
  \end{description}
\end{lemma}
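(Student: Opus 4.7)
My plan is to prove (i)--(iii) in order, since (ii) and (iii) both invoke (i).

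For (i), I would decompose $L\setminus\mathcal{L}(u)=(L\setminus L(T(u)))\cup\bigcup_{v'\in\child(u)\setminus A(u)}L(T(v'))$. Since $u$ is a fork it is color-complete by Lemma~\ref{lem:fact:sum-cl}, so Lemma~\ref{lem:cl} applied to $u$ forbids edges of $G$ between $L(T(u))$ and $L\setminus L(T(u))$. Each child $v'\in\child(u)\setminus A(u)$ is by definition color-complete, so a second application of Lemma~\ref{lem:cl} forbids edges between $L(T(v'))$ and $L\setminus L(T(v'))\supseteq\mathcal{L}(u)$. Hence no edge of $G$ crosses $\mathcal{L}(u)$ and its complement, so every connected component of $(G,\sigma)$ either lies inside $\mathcal{L}(u)$ or is disjoint from it.

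For (ii), the plan is to induct on $n=|\sigma(L)|$, writing $C_v:=\sigma(L(T(v)))$ throughout. The base case $n=2$ is immediate: every $v\in A(u)$ satisfies $|C_v|=1$, and both colors occur among the $C_v$'s because $u$ is a fork; for $v,v'\in A(u)$ with $C_v=\{r\}$, $C_{v'}=\{s\}$ and any $x\in L(T(v))$, $y\in L(T(v'))$, one has $\lca_T(x,y)=u$ together with $s\notin C_v$ and $r\notin C_{v'}$, which yields $xy\in E(G)$. Hence $G[\mathcal{L}(u)]$ is a connected complete bipartite graph realizing both colors. For the step $n\ge 3$ I first record the identity $G[L(T(v))]=G(T(v),\sigma_{|L(T(v))})$ for every $v\in A(u)$: if $x,y\in L(T(v))$ and $y'\in L\setminus L(T(v))$ has color $\sigma(y)$, then $\lca_T(x,y')\succeq v\succeq\lca_T(x,y)$, so external leaves never outperform internal best matches. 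For each $v\in A(u)$ with $|C_v|\ge 2$, Lemma~\ref{lem:zeta_nonempty} yields a fork in $T(v)$, and the inductive hypothesis then provides a connected component $C^v$ of $G[L(T(v))]$ with $\sigma(V(C^v))=C_v$; for $|C_v|=1$, take $C^v$ to be any single leaf of $L(T(v))$. Now let $M\subseteq A(u)$ be those $v$ whose $C_v$ is maximal under $\subseteq$ in $\{C_w : w\in A(u)\}$. The $C_v$'s with $v\in M$ are pairwise incomparable by maximality, and since each color of $\sigma(L)$ lies in some $C_w$ and hence in some maximal $C_v$, we have $\bigcup_{v\in M}C_v=\sigma(L)$. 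For any two distinct $v,v'\in M$ I choose $r\in C_v\setminus C_{v'}$, $s\in C_{v'}\setminus C_v$, an $r$-leaf $x\in C^v$ and an $s$-leaf $y\in C^{v'}$; then $\lca_T(x,y)=u$, $s\notin C_v$ and $r\notin C_{v'}$ again give $xy\in E(G)$. Therefore all $C^v$ with $v\in M$ lie in one and the same connected component of $G$, which realizes every color of $\sigma(L)$ and, by (i), lies inside $\mathcal{L}(u)$.

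For (iii), I set $u':=\lca(L^*)$. Then $L^*\subseteq L(T(u'))$ and $\sigma(L^*)=\sigma(L)$ force $u'$ to be color-complete. For any color-complete child $v\in\child(u')$, Lemma~\ref{lem:cl} forbids edges of $G$ between $L(T(v))$ and $L\setminus L(T(v))$; connectedness of $L^*$ then forces either $L^*\subseteq L(T(v))$ or $L^*\cap L(T(v))=\emptyset$, and the first option gives $\lca(L^*)\preceq v\prec u'$, contradicting $\lca(L^*)=u'$. Thus $L^*\cap L(T(v))=\emptyset$ for every color-complete child of $u'$, so $L^*\subseteq\bigcup_{v\in A(u')}L(T(v))=\mathcal{L}(u')$. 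The inclusion $\sigma(L^*)\subseteq\sigma(\mathcal{L}(u'))\subseteq\sigma(L)$ then forces $\sigma(\mathcal{L}(u'))=\sigma(L)$, so $u'$ is a fork.

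The main obstacle will be the merging step in (ii): one must produce a single connected component of $G$ absorbing the inductively supplied components $C^v$ across multiple $v\in A(u)$ while still covering every color of $\sigma(L)$. The maximal-antichain sub-family $M$ is the key device, since its pairwise incomparability supplies the required cross-edges while its color-union equals $\sigma(L)$ by virtue of the fork hypothesis.
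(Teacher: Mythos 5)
Your parts (i) and (iii) follow essentially the same route as the paper: for (i), applying Lemma~\ref{lem:cl} once to the fork $u$ (which is color-complete by Lemma~\ref{lem:fact:sum-cl}) and once to each color-complete child shows that no edge of $G$ crosses the boundary of $\mathcal{L}(u)$, and for (iii) you reach the same contradiction $\lca(L^*)\preceq_T v'\prec_T u'$ as the paper does. The genuine divergence is in (ii). Writing $C_v\coloneqq\sigma(L(T(v)))$, the paper fixes a single color-deficient child $v_i$, obtains by induction a component of $G(T(v_i),\sigma_{|L(T(v_i))})$ realizing $C_{v_i}$, embeds it into a component $(G^*,\sigma^*)$ of $(G,\sigma)$ via Cor.~\ref{cor:subconnComp}, and then argues by \emph{contradiction} that $|\sigma(L^*)|=n$, using an ascending chain $|S_{v_i}|<|S_{v_j}|<|S_{v_k}|<\dots$ over $A(u)$ that must terminate in a color-complete child. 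You instead construct the $n$-colored component directly: you select children of $u$ whose color sets are inclusion-maximal in $\{C_w: w\in A(u)\}$, obtain an inductively supplied component realizing $C_v$ inside each (your identity $G[L(T(v))]=G(T(v),\sigma_{|L(T(v))})$ is correct and is a useful strengthening of Lemma~\ref{lem:fact:1} and the last claim of Lemma~\ref{lem:rbm-pairs}), and glue them with explicit cross-edges $xy$ with $\lca_T(x,y)=u$, whose status as reciprocal best matches follows from $\sigma(x)\notin C_{v'}$ and $\sigma(y)\notin C_v$. This is cleaner and constructive; the pairwise incomparability of the maximal color sets is exactly what produces the cross-edges, and the fork hypothesis is what makes their union all of $\sigma(L)$.

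One repair is needed in the gluing step. As you define it, $M$ contains \emph{every} child whose color set is maximal, so two distinct $v,v'\in M$ may satisfy $C_v=C_{v'}$ (e.g.\ two leaf children of $u$ carrying the same color); then $C_v\setminus C_{v'}=\emptyset$ and your choice of $r$ is impossible, and the claim that the $C_v$ with $v\in M$ are pairwise incomparable is false for such a pair. The fix is to pick one representative child per maximal color set: distinct representatives then have distinct, hence genuinely incomparable, maximal color sets; the union of their $C_v$ is still $\sigma(L)$ because every $C_w$, $w\in A(u)$, lies below some maximal set; and at least two representatives exist, since a unique maximal set would have to equal $\sigma(\mathcal{L}(u))=\sigma(L)$, contradicting color-deficiency. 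With that one-line adjustment the argument is complete and correct.
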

\begin{proof}
  All $N^+$-neighborhoods in this proof are taken w.r.t.\ the underlying
  BMG $\G(T,\sigma)$.  By Lemma \ref{lem:zeta_nonempty}, we have
  $\zeta(T,\sigma)\neq \emptyset$ and thus, there exists a fork in
  $(T,\sigma)$. In what follows, let $u\in\zeta(T,\sigma)$ be chosen
  arbitrarily.

  \textit{(i)}\quad Let $(G^*,\sigma^*)$ be a connected component of
  $(G,\sigma)$ and $L^*$ its vertex set. The statement is trivially true if
  $|L^*|=1$. Hence, assume that $|L^*|\ge 2$.  By Lemma
  \ref{lem:fact:sum-cl}, $(T(u),\sigma_{|L(T(u))})$ is
  color-complete. Lemma \ref{lem:cl} implies $xy\notin E(G)$ for any pair
  of leaves $x\in L(T(u))$ and $y\in L\setminus L(T(u))$. Therefore either
  $L^*\subseteq L(T(u))$ or $L^*\cap L(T(u))=\emptyset$. In the latter
  case, we have
  $L^* \cap \mathcal{L}(u)\subseteq L^* \cap L(T(u))=\emptyset$.\\
  Now, suppose $L^*\subseteq L(T(u))$ and consider a vertex $x\in L^*$ and
  let $z\in L^*\setminus\{x\}$ be a neighbor of $x$, i.e., $xz\in
  E(G)$. Such a $z$ exists since $|L^*|\ge 2$ and $G^*$ is connected. If
  $x\in L(T(u))\setminus \mathcal{L}(u)$, then there exists a
  color-complete inner vertex $v\in\child(u)$ that satisfies $x\prec
  v$. Since $v$ is color-complete, Lemma \ref{lem:cl} implies that there is
  no edge between $L(T(v))$ and $L(T(u))\setminus L(T(v))$ and thus we have
  $z\in L(T(v))$. Therefore $z\notin \mathcal{L}(u)$.  Now suppose that
  $x\in \mathcal{L}(u)$. If $z\notin \mathcal{L}(u)$, then
  $z\in L(T(u))\setminus \mathcal{L}(u)$.  Thus we can apply analogous
  arguments and Lemma \ref{lem:cl} to conclude that there cannot be an edge
  between $x$ and $z$; a contradiction.  Hence, $z\in \mathcal{L}(u)$.  In
  summary, we have either $L^*\subseteq \mathcal{L}(u)$ or
  $L^*\cap \mathcal{L}(u)=\emptyset$.

  \textit{(ii)}\quad \NEW{Let $S\coloneqq \sigma(L)$ with $|S|=n>1$.  We
    proceed by induction.}
  For $n=2$, the statement is a direct
  consequence of Lemma \ref{lem:rbm-pairs}.\\
  For the induction step, suppose the statement is correct for RBMGs with a
  color set of less than $n$ colors. Recall that for any $v_i\in A(u)$ the
  color set of any subtree $(T(v_i),\sigma_{|L(T(v_i))})$ contains less
  than $n$ colors, i.e., $S_{v_i}\coloneqq \sigma(L(T(v_i))) \neq S$.  By
  Lemma \ref{lem:zeta_nonempty}, there must exist a fork
  $w\in \zeta(T(v_i), \sigma_{|L(T(v_i))})$ within the tree
  $(T(v_i), \sigma_{|L(T(v_i))})$. Since $w$ is a fork in
  $(T(v_i), \sigma_{|L(T(v_i))})$, it is therefore also color-complete in
  $(T(v_i), \sigma_{|L(T(v_i))})$.  However, by definition, we have
  $w \preceq v_i\in A(u)$ and thus, $w$ is not color-complete in
  $(T,\sigma)$. Nevertheless, we can apply the induction hypothesis to the
  RBMG $(G_{v_i},\sigma_{v_i})\coloneqq G(T(v_i),\sigma_{|L(T(v_i))})$ to
  ensure that there exists a connected component
  $(G^*_{v_i},\sigma^*_{v_i})$ with leaf set
  $L_{v_i}^*\subseteq \mathcal{L}(w)$ and $\sigma(L_{v_i}^*)=S_{v_i}$. Now,
  fix this index $i$.  By Cor.\ \ref{cor:subconnComp}, there is a connected
  component $(G^*,\sigma^*)$ with leaf set $L^*$ of $(G,\sigma)$ that
  contains $(G^*_{v_i},\sigma^*_{v_i})$.
 
  Assume, for contradiction, that $|\sigma(L^*)|<n$. Suppose first that
  $|S_{v_i}|=n-1$. Thus, $S\setminus S_{v_i} = \{r\}$ and for each color
  $s\in S\setminus \{r\}$ there is a vertex $z\in V(G^*_{v_i})$ with color
  $\sigma(z)=s$.  By construction, $u\in \zeta(T,\sigma)$ implies that
  there exists a vertex $v_j\in A(u)$ ($i\neq j$) such that $r\in
  S_{v_j}$. In particular, it follows from Equ.\ \eqref{eq:bmg14} that
  $L(T(v_j))\cap L[r] \subseteq N_r^+(x)$ for all $x\in L(T(v_i))$. Since
  $S_{v_i} \subseteq \sigma (L^*)$ but $|\sigma (L^*)|<n$, we have
  $|\sigma (L^*)|=n-1$, and we conclude that $xy\notin E(G)$ for every
  $y\in L(T(v_j))\cap L[r]$ and $x\in V(G^*_{v_i})$.  The latter two
  arguments imply that $x\notin N_{\sigma(x)}^+(y)$ for all
  $y\in L(T(v_j))\cap L[r]$ and $x\in V(G^*_{v_i})$.
  This, however, is only possible if 
  $L(T(v_j))$ contains leaves of all colors $s\neq r$, i.e.,
  $S_{v_i}\subsetneq S_{v_j}$ and thus $|S_{v_j}|=n$; a contradiction to
  $v_j\in A(u)$. 

  Now, suppose that $|S_{v_i}|<n-1$, i.e.,
  $S\setminus S_{v_i} = \{r_1,...,r_m\}$. Again, for any $r_j$
  ($1\le j \le m$), there is a vertex $v_j\in A(u)$ ($i\neq j$) such that
  $r_j\in S_{v_j}$. Note that $v_j=v_k$ may be possible for two different
  colors $r_j$ and $r_k$. If there exists a color $s_j\in S_{v_i}$ that is
  not contained in $S_{v_j}$, then, for any $x\in L(T(v_j))\cap L[r_j]$ and
  $y\in L(T(v_i))\cap L[s_j]$, we have $\lca_T(x,y)=u\prec_T \lca_T(x,y')$
  and $\lca_T(x,y)=u\prec_T \lca_T(x',y)$ for all
  $x'\in L[r_j], y'\in L[s_j]$, and hence $xy\in E(G)$. Thus, there is a
  connected component in $G(T(u),\sigma_{|L(T(u))})$ that contains at least
  all colors $S_{v_i}\cup \{r_j\}$. Consequently, if for any
  $j\in \{1,\dots,m\}$ there exists such a color
  $s_j\in S_{v_i}\setminus S_{v_j}$, then there must be a connected
  component in $G(T(u),\sigma_{|L(T(u))})$ that contains all colors in
  $S$. By Cor.\ \ref{cor:subconnComp}, every connected component of
  $G(T(u),\sigma_{|L(T(u))})$ is contained in a connected component of
  $(G,\sigma)$ and statement (ii) is true for this case.
      
  On the other hand, if there is at least one $j$ for which this property
  is not true, similar argumentation as in the case $|S_{v_i}|=n-1$ shows
  that $S_{v_i}\subset S_{v_j}$, hence in particular
  $|S_{v_j}|>|S_{v_i}|$. We can then apply the same argumentation for RBMG
  $(G_{v_j},\sigma_{v_j})\coloneqq G(T(v_j), \sigma_{|L(T(v_j))})$ and either
  obtain a connected component on $n$ colors in $G(T(u),\sigma_{|L(T(u))})$
  or some inner vertex $v_k\in A(u)$ with $|S_{v_i}|<|S_{v_j}|<|S_{v_k}|$.
  Repeating this argumentation, in each step we either obtain either an
  $n$-colored connected component or further increase the sequence
  $|S_{v_i}|<|S_{v_j}|<|S_{v_k}|<\dots$. Since $L$ is finite, this sequence
  must eventually terminate with $|S_{v_l}|=n$, contradicting
  $v_l\in A(u)$. 
  In summary, we have shown that  $|\sigma(L^*)|\neq n$ is not possible
  and hence $\sigma(L^*)=\sigma(L)$. Finally, $\emptyset \neq L^*
  \cap L(T(v_i))$ and $v_i\in A(u)$ implies that
  $L^*\cap \mathcal{L}(u)\neq\emptyset$. Thus, we can apply 
  Statement (i) to conclude  that $L^*\subseteq \mathcal{L}(u)$.

  \textit{(iii)}\quad By Statement (ii), there is a connected component
  $(G^*,\sigma^*)$ with vertex set $L^*$ and $\sigma(L^*)=\sigma(L)$. Put
  $u'\coloneqq \lca(L^*)$.  We start by showing
  $L^*\subseteq \mathcal{L}(u')$.  Assume, for contradiction, that there
  exists a leaf $a\in L^*$ such that $a\notin \mathcal{L}(u')$. Let
  $v'\in \child(u')$ be the (unique) child of $u'$ with $a\prec_T
  v'$. Since $a\notin \mathcal{L}(u')$, we can conclude that
  $v'\notin A(u')$. Thus $v'$ is color-complete and therefore,
  $(T(v'),\sigma_{|L(T(v'))})$ is color-complete. By Lemma \ref{lem:cl}, we
  thus have $b\prec_T v'$ for any $b\in L$ with $ab\in
  E(G)$. \NEW{Repeating this argument for any $b\in N(a)$ and $c\in N(b)$
    and so on, this finally implies $L^*\subseteq_T L(T(v'))$.  Therefore
    $\lca(L^*)\preceq_T v'\prec_T u'$;}
  a contradiction to
  $u' = \lca(L^*)$.  Thus we have $L^*\subseteq \mathcal{L}(u')$. As a
  consequence, $\sigma(\mathcal{L}(u'))=\sigma(L)$, i.e., $u'$ is a fork.
  
\end{proof}

\begin{corollary}\label{cor:n-color}
  Let $(G,\sigma)$ be an $n$-RBMG, $n\ge 2$, that is explained by a tree
  $(T,\sigma)$ with root $\rho_T$.
  \begin{itemize}
  \item[(i)] There exists an $n$-colored connected component $(G^*,\sigma^*)$ of
    $(G,\sigma)$.
  \item[(ii)] If $(G,\sigma)$ is connected, then $\zeta(T,\sigma)=\{\rho_T\}$.
  \end{itemize}
\end{corollary}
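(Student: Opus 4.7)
The plan is to deduce both parts directly from Lemma \ref{lem:sum-com} together with Lemma \ref{lem:zeta_nonempty}, which guarantees the existence of a fork.

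For part (i), I would first invoke Lemma \ref{lem:zeta_nonempty} to obtain some fork $u \in \zeta(T,\sigma)$. Applying Lemma \ref{lem:sum-com}(ii) to this fork, there exists a connected component $(G^*,\sigma^*)$ of $(G,\sigma)$ with vertex set $L^* \subseteq \mathcal{L}(u)$ satisfying $\sigma(L^*) = \sigma(L)$. Since $|\sigma(L)| = n$, this $(G^*,\sigma^*)$ is the desired $n$-colored connected component.

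For part (ii), suppose $(G,\sigma)$ is connected, so that $L$ itself is the vertex set of the unique connected component of $G$, and trivially $\sigma(L) = \sigma(L)$. By Lemma \ref{lem:sum-com}(iii), with $L^* = L$, the vertex $\lca(L) = \rho_T$ is a fork, so $\rho_T \in \zeta(T,\sigma)$. To show that $\rho_T$ is the \emph{only} fork, suppose toward a contradiction that some $u \in \zeta(T,\sigma)$ with $u \neq \rho_T$ exists. Applying Lemma \ref{lem:sum-com}(ii) to $u$ yields a connected component of $(G,\sigma)$ whose vertex set is contained in $\mathcal{L}(u) \subseteq L(T(u))$. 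Since $G$ is connected, this vertex set must equal $L$, forcing $L \subseteq L(T(u))$. But $u \neq \rho_T$ and $(T,\sigma)$ is phylogenetic, so $L(T(u)) \subsetneq L$, a contradiction. Hence $\zeta(T,\sigma) = \{\rho_T\}$.

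The argument is essentially routine once Lemma \ref{lem:sum-com} is available; the only subtlety is ensuring that Lemma \ref{lem:sum-com}(ii) legitimately supplies a connected component whose leaf set sits inside $L(T(u))$ (and is in fact $\subsetneq L$ whenever $u$ is a proper descendant of the root), which is what drives the contradiction in part (ii). No additional case analysis appears to be required.
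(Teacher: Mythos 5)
Your proposal is correct and follows essentially the same route as the paper: part (i) is Lemma \ref{lem:zeta_nonempty} combined with Lemma \ref{lem:sum-com}(ii), and part (ii) uses Lemma \ref{lem:sum-com}(iii) for existence and Lemma \ref{lem:sum-com}(ii) plus connectedness to force $\mathcal{L}(u)=L$ and hence $u=\rho_T$. Your contradiction framing is just a repackaging of the paper's direct argument.
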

\begin{proof}
  (i) Since $\zeta(T,\sigma)\neq \emptyset$ (see Lemma
  \ref{lem:zeta_nonempty}), the existence of an $n$-colored connected
  component of $(G,\sigma)$ is a
  direct consequence of Lemma \ref{lem:sum-com}(ii).\\
  (ii) Lemma \ref{lem:sum-com}(iii) implies $\rho_T\in \zeta(T,\sigma)$. By
  (i) and Lemma \ref{lem:sum-com}(ii), we have
  $L(T)\subseteq \mathcal{L}(u)\subseteq L(T)$ for all
  $u\in \zeta(T,\sigma)$, hence $\mathcal{L}(u)= L(T)$. Since this is true
  only if $u=\rho_T$, assertion (ii) follows.  
\end{proof}

The following result helps to gain some understanding of the ambiguities
among the leaf-colored trees that explain the same RBMG.
\begin{lemma}\label{lem:tree-transfer}
  Let $(G,\sigma)$ be an $n$-RBMG, $n\ge 2$, explained by $(T,\sigma)$ and
  $u\in\zeta(T,\sigma)$ with $u\neq \rho_T$. Moreover, let $v\in\child(u)$,
  where $v$ is color-complete, and $(T',\sigma)$ the tree obtained from
  $(T,\sigma)$ by replacing the edge $uv$ by $\parent(u)v$. Then,
  $(T',\sigma)$ explains $(G,\sigma)$.
\end{lemma}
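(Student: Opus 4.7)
Let $L_v \coloneqq L(T(v))$, $L_u \coloneqq L(T(u))$, and $p \coloneqq \parent_T(u)$. The goal is to show that $G(T,\sigma) = G(T',\sigma)$ by comparing, for each pair of distinctly colored leaves $x,y$, whether $y$ is a best match of $x$ in $T$ versus in $T'$. The plan is to extract one crucial color-completeness fact, then analyze the LCAs and best matches case by case according to the partition $L = L_v \;\dot\cup\; (L_u\setminus L_v) \;\dot\cup\; (L\setminus L_u)$.

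The first step is the key observation that drives everything: since $v$ is color-complete, $v\notin A(u)$, hence $\mathcal{L}(u) \subseteq L_u\setminus L_v$. Because $u$ is a fork, $\sigma(\mathcal{L}(u))=\sigma(L)$, and therefore $\sigma(L_u\setminus L_v)=\sigma(L)$. So \emph{every} color appears both in $L_v$ (color-completeness of $v$) and in $L_u\setminus L_v$ (the fork property combined with $v$ being color-complete). Next, one compares the LCAs: it is immediate from the construction of $T'$ that $\lca_T(x,y)=\lca_{T'}(x,y)$ for every pair of leaves \emph{except} those with $x\in L_v$ and $y\in L_u\setminus L_v$ (or the symmetric case), in which case $\lca_T(x,y)=u$ while $\lca_{T'}(x,y)=p$. (If $u$ becomes degree 2 after removing $v$ and is suppressed, this still holds, because suppressing a degree-2 vertex does not affect any LCAs.)

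The remaining work is a case analysis by the location of $x$ and $y$. For $x,y\in L_v$, all LCAs within $T(v)$ are unchanged, and any leaf outside $L_v$ has LCA with $x$ at $\succeq u$ (in $T$) or $\succeq p$ (in $T'$) — both strictly above $v$ — so color-completeness of $v$ forces the best matches of $x$ in any color to lie in $L_v$ and to be identical in the two trees. For $x,y\in L\setminus L_u$, one checks that $\lca_T(x,z)=\lca_{T'}(x,z)$ for \emph{every} $z\in L$ (including $z\in L_v$, because both trees place the entire block $L_u\cup L_v$ under $p$), so best matches are entirely unchanged. For $x,y\in L_u\setminus L_v$, the only LCAs that change are those with $L_v$-leaves, moving from $u$ to $p$; using that $\sigma(L_u\setminus L_v)=\sigma(L)$, the minimum of $\lca(x,\cdot)$ over any color class is already attained (at depth $\preceq u$) by a leaf in $L_u\setminus L_v$, so the $L_v$-leaves are never best matches in either tree and the relation on $L_u\setminus L_v$ is preserved. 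The remaining mixed pairs $(L_v,\,L_u\setminus L_v)$ and $(L_v,\,L\setminus L_u)$ can never form edges in either tree, since for $x\in L_v$ and any $y\notin L_v$ the color-completeness of $v$ supplies a $z\in L_v$ of color $\sigma(y)$ with $\lca(x,z)\preceq v$, strictly below $\lca(x,y)\in\{u,p\}$ or above.

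The main obstacle — and the one case where naive LCA bookkeeping almost fails — is the cross case $x\in L_u\setminus L_v$ and $y\in L\setminus L_u$. Here the change $\lca(x,L_v)$ from $u$ to $p$ could, a priori, promote some $y\in L\setminus L_u$ to be a best match of $x$ in $T'$ without being one in $T$, creating a spurious new edge. This is precisely where the fork hypothesis on $u$ is essential: it guarantees a leaf $z\in L_u\setminus L_v$ of color $\sigma(y)$ with $\lca(x,z)\preceq u \prec p \preceq \lca(x,y)$, valid in \emph{both} $T$ and $T'$, so $y$ is never a best match of $x$ in either tree and the edge status of $xy$ is unchanged. With this critical case settled, combining all sub-cases yields $E(G(T,\sigma))=E(G(T',\sigma))$, hence $(T',\sigma)$ explains $(G,\sigma)$.
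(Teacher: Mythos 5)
Your proof is correct and follows essentially the same route as the paper's: partition leaf pairs by their position relative to $L(T(v))$, $L(T(u))\setminus L(T(v))$, and $L\setminus L(T(u))$, use color-completeness of $v$ to screen off $L(T(v))$, and use the fork property of $u$ (via $\mathcal{L}(u)\subseteq L(T(u))\setminus L(T(v))$) to show that $L(T(u))\setminus L(T(v))$ carries all colors and hence screens off $L\setminus L(T(u))$ — which is exactly the critical cross case the paper also isolates. The one quibble is that in the case $x,y\in L(T(u))\setminus L(T(v))$ a leaf of $L(T(v))$ can \emph{tie} at $u$ and thus be a best match of $x$ in $T$, so ``never best matches'' is slightly too strong; but your operative claim — that $\min_{y'}\lca(x,y')$ over each color class is attained within $L(T(u))\setminus L(T(v))$ and is unchanged by the re-attachment — is what the argument needs and is correct.
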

\begin{proof}
  First note that, since $u$ is a fork in $(T,\sigma)$, there must exist at
  least two color-deficient nodes $w_1, w_2 \in A(u)$. Since $v$ is
  color-complete, we have $v\neq w_1,w_2$, thus $\deg_{T'}(u)>2$, i.e.,
  $(T',\sigma)$ is phylogenetic.  In what follows, we show that
  $(G,\sigma) = G(T',\sigma)$. Put $L\coloneqq V(G)$.

  First, let $x,y\in L\setminus L(T(u))$. Then, by construction of
  $(T',\sigma)$, we have $\lca_T(x,y)=\lca_{T'}(x,y)$, and
  $\lca_T(x,y)\prec_T z$ implies $\lca_{T'}(x,y)\prec_{T'} z$ for all
  $z\in V(T)$. In other words, reciprocal best matches $xy$ with
  $x,y\notin L(T(u))$ remain reciprocal best matches in
  $(T',\sigma)$. Moreover, if $x$ and $y$ are not reciprocal best matches
  in $(T,\sigma)$, then we have w.l.o.g.\ $\lca_T(x,y)\succ_T\lca_T(x',y)$
  for some (fixed) $x'\in L[\sigma(x)]$. Clearly, if
  $x'\in L\setminus L(T(v))$, then we still have, by construction,
  $\lca_{T'}(x,y)=\lca_T(x,y)\succ_{T'}\lca_{T'}(x',y)=\lca_T(x',y)$. Thus,
  if $x'\in L\setminus L(T(v))$, then $x$ and $y$ do not form reciprocal
  best matches in $(T',\sigma)$.  If $x'\in L(T(v))$, then
  $\lca_T(x',y)\succeq_T \parent(u)$. Now, $\lca_T(x,y)\succ_T\lca_T(x',y)$
  implies that $\lca_T(x,y)\succ_T \parent(u)$.  In other words,
  $\lca_T(x',y)$ and $\lca_T(x,y)$ lie on the path from the root to
  $\parent(u)$. This and the construction of $(T',\sigma)$ implies that
  $\lca_T(x,y) = \lca_{T'}(x,y) \succ_{T'} \lca_T(x',y) = \lca_{T'}(x',y)$.
  Thus, $x$ and $y$ do not form reciprocal best matches in $(T',\sigma)$.
  In summary, $xy\in E(G)$ if and only if $xy\in E(G(T,\sigma))$ for all
  $x,y\in L\setminus L(T(u))$.

  Moreover, since $v$ is color-complete in both trees, we can apply Lemma
  \ref{lem:cl} to conclude that neither $(G,\sigma)$ nor $G(T',\sigma)$
  contains edges between $L(T(v))$ and $L\setminus L(T(v))$. Since
  $T'(v)=T(v)$ by construction, we additionally have
  $G(T'(v),\sigma_{|L(T(v))})=G(T(v),\sigma_{|L(T(v))})=
  (G[L(T(v))],\sigma_{|L(T(v))})$.

  It remains to show the case $x\in L'\coloneqq L(T(u))\setminus L(T(v))$,
  and either $y\in L'$ or $y \in L\setminus L(T(u))$.  Suppose first that
  $y \in L\setminus L(T(u))$.  Since $u$ is a fork, Lemma
  \ref{lem:sum-com}(ii) implies that there exists a connected component
  $(G^*,\sigma^*)$ of $(G,\sigma)$ with leaf set $L^*$ such that
  $L^*\subseteq \mathcal{L}(u)$. In particular, as $v$ is color-complete,
  it is not contained in $A(u)$. We therefore conclude that
  $L^*\subseteq L'$, i.e., the subtree $(T_{|L'},\sigma_{|L'})$ is
  color-complete as well. Since by construction $T'(u)=T(u)\setminus T(v)$,
  Lemma \ref{lem:cl} implies that there are no edges between $L(T(u))$ and
  $L\setminus L(T(u))$ in both $(G,\sigma)$ and $G(T',\sigma)$. In other
  words, $x$ and $y$ do not form reciprocal best matches, neither in
  $(T,\sigma)$ nor in $(T',\sigma)$ whenever
  $x\in L'\coloneqq L(T(u))\setminus L(T(v))$ and
  $y \in L\setminus L(T(u))$.

  Now, suppose that $y\in L'$. If $x$ and $y$ do not form reciprocal best
  matches in $(T,\sigma)$, then we have w.l.o.g.\
  $\lca_T(x,y)\succ_T\lca_T(x',y)$ for some (fixed) $x'\in L[\sigma(x)]$.
  This immediately implies that $x'\in L'$. Again, since $T'(u)=T_{|L'}$,
  we have
  $\lca_T(x,y)=\lca_{T'}(x,y)\succ_{T'} \lca_T(x',y) = \lca_{T'}(x',y)$.
  Hence, $x$ and $y$ do not form reciprocal best matches in $x$ and $y$ in
  $(T',\sigma)$.  Finally, if $x$ and $y$ are reciprocal best matches in
  $(T,\sigma)$, then $\lca_T(x,y) \preceq_T \lca_T(x',y)$ and
  $\lca_T(x,y) \preceq_T \lca_T(x,y')$ for all $x'\in L[\sigma(x)]$ and
  $y'\in L[\sigma(y)]$.  We first fix a leaf $x'\in L[\sigma(x)]$ for which
  the latter inequality is satisfied.  By construction,
  $ \lca_{T}(x,y) = \lca_{T'}(x,y) \preceq_{T'} u$.  Clearly, if
  $x'\in L'$, then the fact $T'(u)=T_{|L'}$
  implies that $\lca_{T'}(x,y) \preceq_{T'} \lca_{T'}(x',y)$.  On the other
  hand, if $x'\notin L'$, then $\lca_{T'}(x',y)\succ_{T'} u$ by
  construction of $(T',\sigma)$. We thus have
  $\lca_{T'}(x,y) \preceq_{T'} u \prec_{T'} \lca_{T'}(x',y)$.  Hence,
  $\lca_T(x,y) \preceq_T \lca_T(x',y)$ implies
  $\lca_{T'}(x,y) \preceq_{T'} \lca_{T'}(x',y)$ for all
  $x'\in L[\sigma(x)]$.  Analogous arguments hold for $y'\in L[\sigma(y)]$.
  Hence, $x$ and $y$ remain reciprocal best matches in $(T',\sigma)$.\\
  In summary, $xy\in E(G)$ if and only $xy\in E(G(T',\sigma))$.  
\end{proof}

Let $(G,\sigma)$ be an undirected, vertex colored graph with vertex set $L$
and $|\sigma(L)|=n$. We denote the connected components of $(G,\sigma)$ by
$(G_i^n,\sigma_i^n)$, $1\le i \le k$, with vertex sets $L_i^n$ if
$\sigma(L_i^n)=\sigma(L)$ and $(G_j^{<n},\sigma_j^{<n})$, $1\le j \le l$,
with vertex sets $L_j^{<n}$ if $\sigma(L_j^{<n})\subsetneq\sigma(L)$. That
is, the upper index distinguishes components with all colors present from
those that contain only a proper subset. Suppose that each
$(G_i^n,\sigma_i^n)$ and $(G_j^{<n},\sigma_j^{<n})$ is an RBMG. Then there
are trees $(T_i^n,\sigma_i^n)$ and $(T_j^{<n},\sigma_j^{<n})$ explaining
$(G_i^n,\sigma_i^n)$ and $(G_j^{<n},\sigma_j^{<n})$, respectively. The
roots of these trees are $u_i$ and $v_j$, respectively. We construct a tree
$(T_G^*,\sigma)$ with leaf set $L$ in two steps:
\begin{itemize}
\item[(1)] Let $(T',\sigma^n)$ be the tree obtained by attaching the trees
  $(T_i^n,\sigma_i^n)$ with their roots $u_i$ to a common root $\rho'$.
\item[(2)] First, construct a path $P=v_1v_2\dots v_{l-1}v_l\rho'$, where
  $\rho'$ is omitted whenever $T'$ is empty.  Now attach the trees
  $(T_j^{<n},\sigma_j^{<n})$, $1\le j\le l$, to $P$ by identifying the root
  of each $T_j^{<n}$ with the vertex $v_j$ in $P$.  Finally, if
  $(T',\sigma^n)$ exists, attach it to $P$ by identifying the root of $T'$
  with the vertex $\rho'$ in $P$. The coloring of $L$ is the one given for
  $(G,\sigma)$.
\end{itemize}
This construction is illustrated in Fig.~\ref{fig:tree_constr} \NEW{for
  $n\ge 2$. For $n=1$, the resulting tree is simply the star tree on $L$.}

Our goal for the remainder of this section is to show that every RBMG is
explained by a tree of the form $(T_G^*,\sigma)$. We start by collecting
some useful properties of $(T_G^*,\sigma)$.
\begin{fact}
  \label{fact:tree1}
  Let $(G,\sigma)$ be an undirected vertex colored graph \NEW{with
    $|\sigma(V(G))|\ge 2$} whose connected components are RBMGs and let
  $(T_G^*,\sigma)$ be the tree described above. Then
  \begin{itemize}
  \item[(i)] $\zeta(T^*_G,\sigma)=\{u_1,\dots,u_k\}$,
  \item[(ii)] Every subtree $(T_i^n,\sigma_i^n)$, $1\le i\le k$ and
    $(T^*(v_j),\sigma_{|L(T_G^*(v_j))})$ and $1\le j\le l$, resp., is
    color-complete.
  \end{itemize}
\end{fact}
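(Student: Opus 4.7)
The plan is to work directly from the construction, fixing $v_1$ as the root of $(T_G^*,\sigma)$ so that the backbone descends $v_1\succ v_2\succ\dots\succ v_l\succ\rho'$, each $(T_j^{<n},\sigma_j^{<n})$ hangs off $v_j$, and the $n$-colored trees $(T_i^n,\sigma_i^n)$ with roots $u_i$ hang off $\rho'$. This orientation is essentially forced by part (ii): with $\rho'$ at the top instead, $T_G^*(v_1)$ would see only $T_1^{<n}$ and hence fail to be color-complete. Throughout I assume $k\ge 1$, which is necessary because Lemma~\ref{lem:zeta_nonempty} forbids $\zeta(T_G^*,\sigma)=\emptyset$.

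For (ii), first note that $T_G^*(u_i)=T_i^n$ has leaves $L_i^n$ with $\sigma(L_i^n)=\sigma(L)$ by the very definition of an $n$-colored component, so each $u_i$ is color-complete. For each $v_j$, the subtree $T_G^*(v_j)$ contains $L_j^{<n}$ together with the entire subtree rooted at its path-child $v_{j+1}$ (or $\rho'$ when $j=l$); this latter subtree contains every $T_i^n$ with $1\le i\le k$, so all $n$ colors are present, and $v_j$ is color-complete.

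For the inclusion $\{u_1,\dots,u_k\}\subseteq\zeta(T_G^*,\sigma)$ in (i), I apply Corollary~\ref{cor:n-color}(ii) to each $(T_i^n,\sigma_i^n)$, which explains the connected $n$-RBMG $(G_i^n,\sigma_i^n)$: this yields $\zeta(T_i^n,\sigma_i^n)=\{u_i\}$, so the color-deficient children of $u_i$ in $T_i^n$ together cover all of $\sigma(L_i^n)=\sigma(L)$. Because the children of $u_i$ in $T_G^*$ coincide with those in $T_i^n$ and $\sigma(L(T_i^n))=\sigma(L)$, the fork condition transfers verbatim, yielding $u_i\in\zeta(T_G^*,\sigma)$.

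For the reverse inclusion I partition the remaining inner vertices into four cases. A vertex $w\prec_{T_G^*} u_i$ has $T_G^*(w)=T_i^n(w)$ and by Corollary~\ref{cor:n-color}(ii) is not a fork of $T_i^n$, hence not of $T_G^*$. A proper descendant of $v_j$ inside $T_j^{<n}$ has its subtree-colors contained in $\sigma_j^{<n}\subsetneq\sigma(L)$, so it is color-deficient and thus not a fork by Lemma~\ref{lem:fact:sum-cl}. At $v_j$ itself, the path-child is color-complete by (ii), while every remaining child lies in $T_j^{<n}$ and is color-deficient; hence $\mathcal{L}(v_j)=L_j^{<n}$ with $\sigma(L_j^{<n})\subsetneq\sigma(L)$, ruling $v_j$ out. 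Finally, at $\rho'$ all children $u_1,\dots,u_k$ are color-complete, so $\mathcal{L}(\rho')=\emptyset$. The only subtle bookkeeping is that being a fork depends on the ambient leaf-color set, so one must check that the fork property in the isolated tree $T_i^n$ coincides with the fork property in $T_G^*$; this holds precisely because $\sigma(L(T_i^n))=\sigma(L)$. Beyond pinning down the orientation and this small identification, no serious obstacle arises.
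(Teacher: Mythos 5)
Your proof is correct and follows the same route as the paper: part (ii) is the observation that $\sigma(L_i^n)=\sigma(L)$ together with the fact that every ancestor of the color-complete $u_1$ is color-complete, and part (i) rests on Corollary~\ref{cor:n-color}(ii). The paper dismisses (i) as an immediate consequence of that corollary, whereas you spell out the vertex-by-vertex case analysis (including the point that fork-ness transfers between $T_i^n$ and $T_G^*$ because $\sigma(L_i^n)=\sigma(L)$); this is a faithful, more detailed rendering of the same argument rather than a different one.
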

\begin{proof}
  Statement (i) is an immediate consequence of Cor.\ \ref{cor:n-color}(ii).
  For Statement (ii) observe that, by construction,
  $\sigma(L^n_i) = \sigma(L)$ and thus, $(T_i^n,\sigma_i^n)$ is a
  color-complete subtree of $(T^*_G,\sigma)$, $1\leq i\leq k$.  By step (2)
  of the construction of $(T^*_G,\sigma)$, we have
  $u_1\prec_{T_G^*}\rho'\prec_{T_G^*} v_l\prec_{T_G^*}\dots\prec_{T_G^*}
  v_1$.  Since $u_1$ is color-complete by assumption, so is each of its
  ancestors.
  
\end{proof}

\begin{lemma}\label{lem:tree}
  Let $(G,\sigma)$ be an undirected vertex colored graph on $n$ colors
  whose connected components are RBMGs and there is at least one
  $n$-colored connected component, and let $(T^*_G,\sigma)$ be the tree
  described above.  Then $(T^*_G,\sigma)$ explains $(G,\sigma)$.
\end{lemma}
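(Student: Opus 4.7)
The plan is to verify that $G(T^*_G,\sigma)=(G,\sigma)$ by a direct case analysis of last-common-ancestors in the grafted structure. I would begin by recording the backbone of $(T^*_G,\sigma)$: along the path $P$, oriented so that $v_1\succ v_2\succ\dots\succ v_l\succ\rho'$ with $v_1$ the root of $T^*_G$, the root $u_i$ of each $T^n_i$ is a child of $\rho'$, while each $T^{<n}_j$ is rooted at $v_j$ on $P$. Consequently $\lca_{T^*_G}(x,y)$ equals $\lca_{T^n_i}(x,y)$ whenever $x,y\in L^n_i$ and equals $\lca_{T^{<n}_j}(x,y)$ whenever $x,y\in L^{<n}_j$, while it strictly exceeds $u_i$ (resp.\ $v_j$) whenever exactly one endpoint lies inside $L^n_i$ (resp.\ $L^{<n}_j$).

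With this in hand, I would first show that the subgraph of $G(T^*_G,\sigma)$ induced on each $L^n_i$ (resp.\ $L^{<n}_j$) equals $(G^n_i,\sigma^n_i)$ (resp.\ $(G^{<n}_j,\sigma^{<n}_j)$): any candidate better match $y'$ of $x$ lying outside the component has $\lca_{T^*_G}(x,y')\succ u_i$ (resp.\ $\succ v_j$) and therefore cannot beat any candidate inside, so the best-match order restricted to the component is inherited from the original subtree, and hence so is the RBM relation. The second, main, step is to rule out cross-component edges. If one endpoint, say $x$, lies in some $L^n_i$ and the other endpoint $y$ lies outside $L^n_i$, then color-completeness of $T^n_i$ (cf.\ Observation \ref{fact:tree1}(ii)) provides a $y'\in L^n_i$ with $\sigma(y')=\sigma(y)$ and $\lca_{T^*_G}(x,y')\preceq u_i\prec\lca_{T^*_G}(x,y)$, so $y$ is not a best match of $x$. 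If both endpoints lie in color-deficient components, say $x\in L^{<n}_j$ and $y\in L^{<n}_{j'}$ with $j\neq j'$, then, assuming WLOG $v_j\succ v_{j'}$, we have $\lca_{T^*_G}(y,x)=v_j$, and color-completeness of $(T',\sigma^n)$, which is non-empty because of the hypothesized $n$-colored component, furnishes an $x'\in L(T')$ with $\sigma(x')=\sigma(x)$ and $\lca_{T^*_G}(y,x')=v_{j'}\prec v_j$, so $x$ is not a best match of $y$.

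The only real obstacle is careful bookkeeping of the relative positions of $u_i$, $v_j$, $v_{j'}$, and $\rho'$ along $P$; once these are fixed, each inequality reduces to a one-line ancestor check, and the edge cases $k=1$ (where $\rho'=u_1$) and $l=0$ are handled by exactly the same comparisons. The case $n=1$ is trivial, since then $(G,\sigma)$ is edge-less and is explained by the star tree on $L$.
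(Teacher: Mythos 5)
Your proof is correct and follows essentially the same route as the paper: decompose by connected components, observe that each subtree reproduces its component's RBM relation, and use color-completeness of the subtrees rooted at $u_i$ and $v_j$ (Observation \ref{fact:tree1}(ii)) to kill all cross-component edges. The only cosmetic difference is that your explicit witness arguments ($y'\in L^n_i$ of color $\sigma(y)$, resp.\ $x'\in L(T')$ of color $\sigma(x)$) re-derive inline what the paper delegates to Lemma \ref{lem:cl}.
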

\begin{proof}
  \NEW{For $n=1$, $(T^*_G,\sigma)$ is simply the star tree on
    $V(G)$. Clearly, $(G,\sigma)$ must be the edge-less graph, which is
    explained by $(T^*_G,\sigma)$. Now suppose $n>1$.}  Let
  $(G^n_i,\sigma^n_i)$ be an $n$-colored connected component of
  $(G,\sigma)$, $i\in \{1,\dots,k\}$ and $k\geq 1$. It has vertex set
  $L_i^{n}=L(T^*_G(u_i))$. By construction,
  $(T^*_G(u_i), \sigma_{|L_i^{n}}) = (T_i^n,\sigma_i^n)$ explains
  $(G^n_i,\sigma^n_i)$ and
  $\NEW{(G[L_i^n],\sigma_{|L_i^n})} =
  (G^n_i,\sigma^n_i)$.  Moreover, $(T^*_G(u_i), \sigma^n_i)$ is a
  color-complete subtree of $(T^*_G,\sigma)$ that is rooted at
  $u_i$. Hence, Lemma \ref{lem:cl} implies that there are no edges in
  $G(T^*_G,\sigma)$ between $L_i^{n}$ and any other vertex in
  $L\setminus L_i^{n}$.  In other words, $(G^n_i,\sigma^n_i)$ remains a
  connected component in $G(T^*_G,\sigma)$, $i\in \{1,\dots,k\}$.
	 
  Now, suppose that there is a connected component
  $(G^{<n}_j,\sigma^{<n}_j)$, $j\in \{1,\dots,l\}$ and $l\geq 1$, which
  contains less than $n$ colors.  Again, by construction,
  $(T^*_G(v_j)_{|L_j^{<n}}, \sigma_{|L_j^{<n}}) = (T_j^{<n},\sigma_j^{<n})$
  explains $(G^{<n}_j,\sigma^{<n}_j)$ and
  $(G[L_i^{<n}],\sigma_{|L_i^{<n}})= (G^{<n}_j,\sigma^{<n}_j)$.
  Furthermore, we have $L^{<n}_{j'}\cap L(T^*_G(v_j))=\emptyset$ if and
  only if $j'<j$ by construction of the path $v_1v_2\dots v_l$ in $T^*_G$.
  By Observation \ref{fact:tree1}(ii), $v_j$ is color-complete and Lemma
  \ref{lem:cl} implies that there is no edge between $L^{<n}_{j}$ and any
  $L^{<n}_{j'}$ whenever $j'<j$.  In other words,
  $(G^{<n}_j,\sigma^{<n}_j)$, $j\in \{1,\dots,l\}$ remains a connected
  component in $G(T^*_G,\sigma)$.
	
  To summarize, \emph{all} connected components of $(G,\sigma)$ remain
  connected components in $G(T^*_G,\sigma)$ and are explained by
  restricting $(T^*_G,\sigma)$ to the corresponding leaf set, which
  completes the proof.

\end{proof}

\begin{theorem}\label{thm:connected}
  An undirected leaf-colored graph $(G,\sigma)$ is an RBMG if and only if
  each of its connected components is an RBMG and at least one connected
  component contains all colors.
\end{theorem}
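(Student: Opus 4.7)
My plan is to assemble the theorem from the two heavy pieces of machinery already developed, treating the two implications separately.

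For the forward direction, assume $(G,\sigma)$ is an RBMG with $|\sigma(V(G))|=n$ colors, and let $(T,\sigma)$ be a tree explaining it. I would invoke Theorem \ref{thm:conn_comp} directly: it says that for every connected component $(G^*,\sigma^*)$ with vertex set $L^*$, the restriction $(T_{|L^*},\sigma_{|L^*})$ explains $(G^*,\sigma^*)$, so each connected component is itself an RBMG. To obtain the ``all colors'' requirement, I would apply Corollary \ref{cor:n-color}(i), which guarantees the existence of an $n$-colored connected component whenever $n\ge 2$ (and the case $n=1$ is trivial, since then a single vertex component already contains all colors).

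For the backward direction, suppose that every connected component of $(G,\sigma)$ is an RBMG and that at least one of them, say $(G^n_1,\sigma^n_1)$, uses every color in $\sigma(V(G))$. Label the remaining components in the notation preceding Lemma \ref{lem:tree}, distinguishing those that are color-complete from the color-deficient ones. Since each component is an RBMG, each is explained by some leaf-colored tree, giving us the ingredients $(T_i^n,\sigma_i^n)$ and $(T_j^{<n},\sigma_j^{<n})$ needed for the two-step construction of $(T_G^*,\sigma)$. Then a direct appeal to Lemma \ref{lem:tree} shows that $(T_G^*,\sigma)$ explains $(G,\sigma)$, so $(G,\sigma)$ is an RBMG.

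There is essentially no obstacle left at this level: all the technical work lies in Theorem \ref{thm:conn_comp} (showing that restricting an explaining tree to a connected component still explains it) and in Lemma \ref{lem:tree} (showing that $(T_G^*,\sigma)$ correctly assembles component-wise explaining trees into a global one, using that some component is color-complete to prevent spurious cross-component reciprocal best matches via Lemma \ref{lem:cl}). The only thing to do in the proof itself is to cite these results and handle the trivial case $n=1$, where the edge-less graph on a single color is explained by any star tree on $V(G)$.

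\begin{proof}
Let $n\coloneqq |\sigma(V(G))|$. If $n=1$, then by Observation \ref{obs:proper} the graph $(G,\sigma)$ has no edges, so its connected components are single vertices, each trivially an RBMG, and each of them (being a single-color graph) contains ``all'' colors; conversely the edge-less graph is explained by any star tree on $V(G)$, hence is an RBMG. Assume therefore $n\ge 2$.

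For the only-if direction, let $(T,\sigma)$ be a tree explaining $(G,\sigma)$. By Theorem \ref{thm:conn_comp}, every connected component $(G^*,\sigma^*)$ of $(G,\sigma)$ with vertex set $L^*$ is again an RBMG, namely one explained by $(T_{|L^*},\sigma_{|L^*})$. Moreover, Corollary \ref{cor:n-color}(i) guarantees the existence of an $n$-colored connected component of $(G,\sigma)$.

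For the if direction, assume that every connected component of $(G,\sigma)$ is an RBMG and that at least one of them uses all $n$ colors. Then the hypotheses of Lemma \ref{lem:tree} are satisfied, so the tree $(T_G^*,\sigma)$ constructed above Lemma \ref{lem:tree} explains $(G,\sigma)$. Hence $(G,\sigma)$ is an RBMG.
\end{proof}
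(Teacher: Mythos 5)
Your proof is correct and follows essentially the same route as the paper: Theorem \ref{thm:conn_comp} and Corollary \ref{cor:n-color}(i) for the forward direction, Lemma \ref{lem:tree} applied to the construction of $(T_G^*,\sigma)$ for the converse, with the degenerate single-color case dispatched separately exactly as the paper does.
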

\begin{proof}
  \NEW{For $n=1$, the statement trivially follows from the fact that an
    RBMG must be properly colored and thus, a 1-RBMG must be the edge-less graph. 
		Now suppose $n>1$.}  By Theorem \ref{thm:conn_comp} every
  connected component of an RBMG is again an RBMG. Cor.\
  \ref{cor:n-color}(i) ensures the existence of a connected component
  containing all colors. Conversely, if $(G,\sigma)$ is an undirected graph
  whose connected components are RBMGs and at least one of them contains
  all colors, then Lemma \ref{lem:tree} guarantees that it is explained by
  a tree of the form $(T^*_G,\sigma)$, and hence it is an RBMG.  
\end{proof}

\begin{corollary}
  \label{cor:tree2}
  Every RBMG can be explained by a tree of the form $(T^*_G,\sigma)$.
\end{corollary}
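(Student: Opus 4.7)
The plan is to combine Theorem \ref{thm:connected} with Lemma \ref{lem:tree} in a direct way. Let $(G,\sigma)$ be an arbitrary RBMG with vertex set $L$ and $|\sigma(L)|=n$. If $n=1$, then by Observation \ref{obs:proper} the graph $(G,\sigma)$ must be edge-less, and the star tree on $L$ is of the form $(T_G^*,\sigma)$ and trivially explains $(G,\sigma)$. So assume $n\geq 2$.

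By Theorem \ref{thm:connected}, every connected component of $(G,\sigma)$ is an RBMG, and at least one connected component contains all $n$ colors. Label the $n$-colored connected components $(G_i^n,\sigma_i^n)$, $1\le i\le k$, and the color-deficient ones $(G_j^{<n},\sigma_j^{<n})$, $1\le j\le l$. Since each of these components is an RBMG, we may pick leaf-colored trees $(T_i^n,\sigma_i^n)$ and $(T_j^{<n},\sigma_j^{<n})$ explaining them. Because at least one $n$-colored component exists, we have $k\ge 1$, which is exactly the hypothesis required by the construction of $(T_G^*,\sigma)$ described in the two-step recipe preceding Lemma \ref{lem:tree}.

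I then assemble $(T_G^*,\sigma)$ according to that recipe: first attach the trees $(T_i^n,\sigma_i^n)$ to a common root $\rho'$, then attach the trees $(T_j^{<n},\sigma_j^{<n})$ along a path $v_1 v_2 \dots v_l \rho'$ by identifying the root of each $T_j^{<n}$ with $v_j$. The coloring $\sigma$ is inherited from $(G,\sigma)$. By construction, the resulting tree has the required shape.

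Finally, Lemma \ref{lem:tree} applies verbatim: the connected components of $(G,\sigma)$ are RBMGs and at least one of them is $n$-colored, so $(T_G^*,\sigma)$ explains $(G,\sigma)$. There is no real obstacle: the entire content of the corollary has already been established, and the proof is essentially a one-line invocation of Lemma \ref{lem:tree} once the existence of an $n$-colored component is guaranteed by Theorem \ref{thm:connected}.
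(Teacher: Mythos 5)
Your proof is correct and follows essentially the same route as the paper: the corollary is an immediate consequence of the forward direction of Theorem~\ref{thm:connected} (each component is an RBMG, one of them $n$-colored) combined with Lemma~\ref{lem:tree}, which is exactly how the paper's proof of Theorem~\ref{thm:connected} already establishes that such a graph is explained by $(T^*_G,\sigma)$. Your separate treatment of the $n=1$ case via the star tree also matches the paper's remark on that degenerate situation.
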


\section{Three Classes of Connected 3-RBMGs}
\label{sect:classes}

\subsection{Three Special Classes of Trees} 

We start with a rather technical result that allows us to simplify the
structure of trees explaining a given 3-RBMG.
\begin{lemma}\label{lem:2-color-subtree}
  Let $(G,\sigma)$ be an $\sthin$-thin 3-RBMG that is explained by
  $(T,\sigma)$. Moreover, let $u\in V^0(T)$ be a vertex that has two
  distinct children $v_1,v_2\in\child(u)$ such that
  $\sigma(L(T(v_1)))=\sigma(L(T(v_2)))\subsetneq \sigma(L(T))$ and
  $v_1\in V^0(T)$, and denote by $(T',\sigma)$ the tree obtained by
  replacing the edge $uv_2$ in $(T,\sigma)$ by $v_1v_2$ and possible
  suppression of $u$, in case $u$ has exactly two children in $(T,\sigma)$
  or removal of $u$ and its incident edge, in case $u=\rho_{T'}$.  Then
  $(T',\sigma)$ explains $(G,\sigma)$.
\end{lemma}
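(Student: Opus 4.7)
The plan is to show that the surgery transforming $(T,\sigma)$ into $(T',\sigma)$ leaves the RBM relation unchanged, i.e., $G(T',\sigma)=(G,\sigma)$. Write $X\coloneqq L(T(v_1))$ and $Y\coloneqq L(T(v_2))$, and let $C\coloneqq \sigma(X)=\sigma(Y)$. Since $v_1\in V^0(T)$ and $(G,\sigma)$ is an $\sthin$-thin 3-RBMG, Lemma \ref{lem:2col} gives $|C|\ge 2$; combined with $|\sigma(L(T))|=3$ and $C\subsetneq\sigma(L(T))$, this forces $|C|=2$, say $C=\{r,s\}$. Hence the third color $t$ occurs only in $L(T)\setminus(X\cup Y)$, and $v_2$ is itself an inner vertex.

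First I would catalogue how the surgery affects the $\lca$-values. Pairs of leaves both in $X$, both in $Y$, or both in $L(T)\setminus L(T(u))$ retain their $\lca$ verbatim. For a pair with one leaf in $X\cup Y$ and one outside $T(u)$, the $\lca$ in either tree is the same ancestor of $u$ (or of $v_1$ in its new role as child of $\parent_T(u)$, in the case that $u$ gets suppressed), so it is preserved. The only genuine change occurs for pairs $(a,b)$ with $a\in X$ and $b\in Y$: one has $\lca_T(a,b)=u$ but $\lca_{T'}(a,b)=v_1$, since $a$ and $b$ now lie in two distinct children of $v_1$.

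Next I would verify the RBM relation case by case. The critical case is that no RBM-edge exists between $X$ and $Y$, neither in $T$ nor in $T'$. If $a\in X\cap L[r]$ and $b\in Y\cap L[s]$, then in $T$ the leaf $b$ is not a best match of $a$, because $X\cap L[s]$ supplies a color-$s$ leaf with $\lca\preceq v_1\prec u=\lca_T(a,b)$. In $T'$, although $\lca_{T'}(a,b)=v_1$ may place $b$ into the color-$s$ best-match set of $a$, the vertex $a$ fails to be a best match of $b$: since $Y\cap L[r]\ne\emptyset$, some color-$r$ leaf in $Y$ has $\lca_{T'}\preceq v_2\prec v_1=\lca_{T'}(a,b)$. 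Hence no $X$-$Y$ edge appears in $G(T,\sigma)$ or $G(T',\sigma)$. For pairs $a,a'\in X$, the lca $\lca(a,a')$ is unchanged and the minimum $\lca$ from $a$ to color-$r$ and color-$s$ leaves is also unchanged: the only new candidates from $Y$ contribute $\lca_{T'}(a,\cdot)=v_1$, which cannot strictly beat the $X$-internal minimum (which is $\preceq v_1$). Analogous reasoning with the sharper bound $v_2\prec v_1$ handles RBMs internal to $Y$, and the $\lca$-preservation established above handles RBMs involving outside leaves.

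The main obstacle is the tie-breaking step for pairs internal to $X$: the move of $T(v_2)$ under $v_1$ can enlarge the color-$s$ best-match set of some $a\in X\cap L[r]$ by the vertices of $Y\cap L[s]$ precisely when the minimum of $\lca_T(a,y')$ over $y'\in X\cap L[s]$ happens to equal $v_1$. The key point is that such enlargements never manufacture a new reciprocal match, because the color symmetry $\sigma(X)=\sigma(Y)=\{r,s\}$ and $v_2\prec_{T'}v_1$ jointly guarantee that the reverse direction of any putative $X$-$Y$ match is always broken inside $Y$.
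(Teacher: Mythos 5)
Your proof is correct, but it follows a genuinely different and more elementary route than the paper's. The paper first peels off the third color $t$ exactly as you do, but then passes to the restricted two-colored trees $(T_{rs},\sigma_{rs})$ and $(T'_{rs},\sigma_{rs})$ and argues structurally: it invokes Lemma~\ref{lem:cl} to show that the color-complete subtrees rooted at $v_1$ and $v_2$ emit no edges to the outside in either graph, then splits on whether $v_1$ is a fork (in which case the surgery is the inverse of the operation in Lemma~\ref{lem:tree-transfer}, and that lemma finishes the job) or not (in which case every child of $v_1$ is either color-complete or a leaf, and Lemma~\ref{lem:cl} localizes all edges to leaf-children or to single child subtrees, which are untouched). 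Your approach instead does direct $\lca$ bookkeeping: only $X$--$Y$ pairs change their $\lca$ (from $u$ to $v_1$), such pairs are non-edges in both trees because each side of the would-be match is beaten by an internal candidate of the same color ($\preceq v_1$ inside $X$, respectively $\preceq v_2$ inside $Y$), and the possible enlargement of out-neighborhoods of $X$-vertices by $Y$-vertices at the tied level $v_1$ never reciprocates. This buys self-containedness (no appeal to color-completeness, forks, or Lemma~\ref{lem:tree-transfer}) at the cost of a slightly more delicate tie-breaking discussion, which you do handle. Two small points to tighten: your catalogue of unchanged $\lca$'s omits pairs with one leaf in $X\cup Y$ and the other below a \emph{third} child of $u$ (their $\lca$ is $u$ in both trees, so nothing breaks, but the case should be listed to justify the claim that only $X$--$Y$ pairs change); and you should note in passing that $T'$ remains phylogenetic after the suppression or removal of $u$.
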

\begin{proof}
  It is easy to see that the resulting tree $(T',\sigma)$ is phylogenetic.
  We emphasize that this proof does not depend on whether $u$ has been
  suppressed or removed. Put $L\coloneqq L(T)$. Moreover, Lemma
  \ref{lem:2col} implies that $L(T(v_1))$ contains leaves of more than one
  color, hence $|\sigma(L(T(v_1)))|=2$.

  Let $S=\{r,s,t\}$ be the color set of $(G,\sigma)$ and
  $\sigma(L(T(v_1)))=\{r,s\}$. Since $L(T(v_1))$ and $L(T(v_2))$ do not
  contain leaves of color $t$, we have
  $\lca_T(y,z)=\lca_{T'}(y,z)\NEW{\succeq u}$ for every
  $y\in L[r]\cup L[s]$ and $z\in L[t]$. Hence, $yz \in E(G)$ if and only if
  $yz\in E(G(T',\sigma))$ for every $y\in L[r]\cup L[s]$ and $z\in
  L[t]$. It therefore suffices to consider
  $(T_{rs},\sigma_{rs})\coloneqq (T_{|L[r]\cup L[s]},\sigma_{|L[r]\cup
    L[s]})$ and
  $(T'_{rs},\sigma_{rs})\coloneqq (T'_{|L[r]\cup L[s]},\sigma_{|L[r]\cup
    L[s]})$.

  First note that, since $T'(v_2)=T(v_2)$, vertex $v_2$ is color-complete
  in both $(T_{rs},\sigma_{rs})$ and $(T'_{rs},\sigma_{rs})$. Hence, Lemma
  \ref{lem:cl} implies that neither $G(T_{rs},\sigma_{rs})$ nor
  $G(T'_{rs},\sigma_{rs})$ contains edges of the form $xy$, where
  $x\in L(T(v_2))$ and $y\notin L(T(v_2))$. Moreover, since
  $T'(v_2)=T(v_2)$, we have
  $G(T_{rs}(v_2),\sigma_{|L(T(v_2))})=G(T'_{rs}(v_2),\sigma_{|L(T(v_2))})$.
  Since $v_1$ is also color-complete in $(T_{rs},\sigma_{rs})$ and
  $(T'_{rs},\sigma_{rs})$, we can similarly conclude that both graphs
  $G(T_{rs},\sigma_{rs})$ and $G(T'_{rs},\sigma_{rs})$ contain no edges
  $xy$, where $x\in L(T(v_1))$ and $y\notin L(T(v_1))$. Hence, it suffices
  to consider edges between leaves in $L(T(v_1))$ If $v_1$ is a fork in
  $(T_{rs},\sigma_{rs})$, one can easily see that $(T,\sigma)$ is obtained
  from $(T',\sigma)$ by the same operation used in Lemma
  \ref{lem:tree-transfer}.  Hence, Lemma \ref{lem:tree-transfer} implies
  that $G(T_{rs},\sigma_{rs})=(G[L[r]\cup L[s]],\sigma_{rs})$.
  Suppose that $v_1$ is not a fork.  Note that any $w\in\child_T(v_1)$ with
  $|\sigma(L(T(w)))|=1$ must be a leaf as, otherwise, all leaves in
  $L(T(w))$ would be in a common $\sthin$-class and $(G,\sigma)$ would not
  be $\sthin$-thin. Therefore, any $w\in \child_T(v_1)$ is either
  color-complete or a leaf in $(T_{rs},\sigma_{rs})$.  Therefore, by Lemma
  \ref{lem:cl}, there are no edges between
  $G(T_{rs}(w_1),\sigma_{|L(T(w_1))})$ and
  $G(T_{rs}(w_2),\sigma_{|L(T(w_2))})$ as soon as one of the children $w_1$
  and $w_2$ is a non-leaf vertex. In other words, if there are edges
  between $G(T_{rs}(w_1),\sigma_{|L(T(w_1))})$ and
  $G(T_{rs}(w_2),\sigma_{| L(T(w_2))})$, then both vertices
  $w_1,w_2\in \child_T(v_1)$ are also contained in $L$. Since, by
  construction, $\child_T(v_1)\cap L=\child_{T'}(v_1)\cap L$, we have
  $w_1w_2\in E(G(T_{rs},\sigma_{rs}))$ if and only if
  $w_1w_2\in E(G(T'_{rs},\sigma_{rs}))$ for any $w_1,w_2\in
  \child_T(v_1)$. Moreover, by construction, we have
  $(T(w),\sigma_{|L(T(w))})=(T'(w),\sigma_{|L(T(w))})$ for any inner vertex
  $w\in \child_T(v_1)$, hence
  $G(T(w),\sigma_{|L(T(w))})=G(T'(w),\sigma_{|L(T(w))})$, which concludes
  the proof.
  
\end{proof}

\begin{definition} \cite{Harary:73}
  \label{def:caterpillar}
  A rooted tree $T$ is a \emph{caterpillar} if every inner vertex has a most
  one child that is an inner vertex.
\end{definition}

\begin{figure}[t]
  \begin{center}
    \includegraphics[width=1\textwidth]{./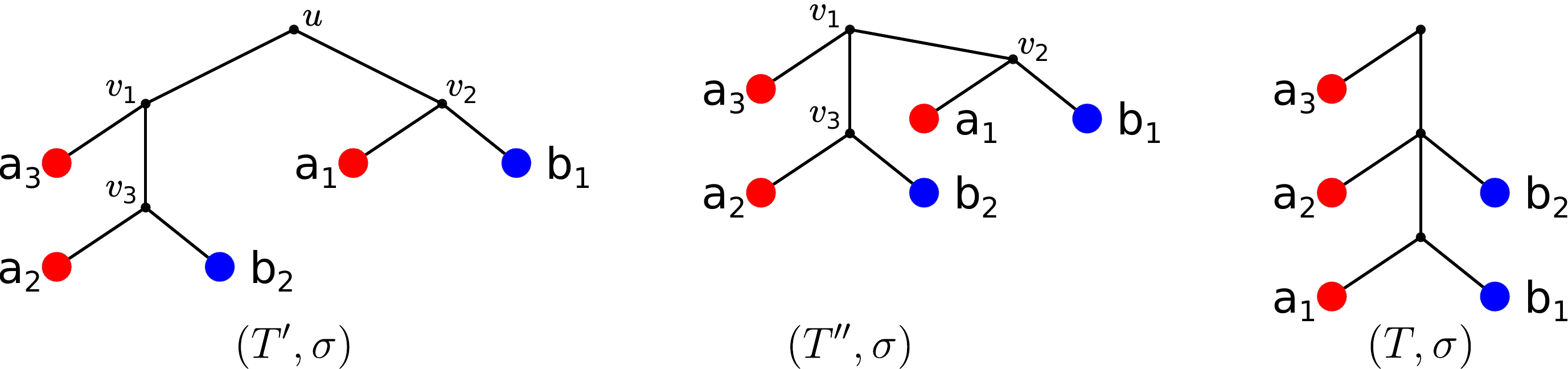}      
  \end{center}
  \caption{Assume that the tree $(T',\sigma)$ is the 2-colored restricted
    version of some tree that explains a 3-RBMG. It is easy to verify that
    all three trees $(T',\sigma)$, $(T'',\sigma)$, and $(T,\sigma)$ explain
    the same 2-RBMG $(G,\sigma)$.  According to the transformation of
    Lemma \ref{lem:2-color-subtree}, $(T'',\sigma)$ is obtained from
    $(T',\sigma)$ by deletion of the edge $uv_2$, inserting $v_1v_2$ and
    removal of $u$ and its single incident edge. Similarly, $(T,\sigma)$ is
    obtained from $(T'',\sigma)$ by deleting $v_1v_2$ and inserting
    $v_3v_2$. The final tree $(T,\sigma)$ is a caterpillar.}
  \label{fig:2-color-subtree}
\end{figure}

Repeated application of the transformation as in Lemma
\ref{lem:2-color-subtree} implies the following result, which is illustrated
in Fig.\ \ref{fig:2-color-subtree}.
\begin{corollary}\label{cor:2col-sub}
  Let $(G,\sigma)$ be an $\sthin$-thin 3-RBMG. Then there exists a tree
  $(T,\sigma)$ explaining $(G,\sigma)$ for which every 2-colored subtree
  $(T(u),\sigma_{|L(T(u))})$ with $|\sigma(L(T(u)))|=2$ is a caterpillar
  and $\sigma(L(T(v_1)))\neq \sigma(L(T(v_2)))$ for any distinct
  $v_1,v_2\in \child(u)$.
\end{corollary}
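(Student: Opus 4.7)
The plan is to iterate Lemma~\ref{lem:2-color-subtree} until no sibling pair shares a color set inside a 2-colored subtree, and then read off the caterpillar structure. Starting from any tree $(T,\sigma)$ explaining $(G,\sigma)$, I would search for an inner vertex $u$ with $|\sigma(L(T(u)))|=2$, say with color set $\{r,s\}$, that has two distinct children $v_1,v_2$ with $\sigma(L(T(v_1)))=\sigma(L(T(v_2)))$, and then apply the lemma to obtain a simplified tree still explaining $(G,\sigma)$.

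First I would verify that the hypothesis of Lemma~\ref{lem:2-color-subtree} is indeed met. Any child $w$ of $u$ inside this 2-colored subtree satisfies $\sigma(L(T(w)))\subseteq\{r,s\}$, and if the common color set of $v_1,v_2$ were a singleton $\{r\}$, Lemma~\ref{lem:2col} would force $v_1,v_2$ to be leaves of color $r$ sharing the parent $u$; but then $\lca_T(v_1,z)=\lca_T(v_2,z)$ for every $z\in L\setminus\{v_1,v_2\}$, so $v_1$ and $v_2$ would have identical neighborhoods in $(G,\sigma)$, i.e.\ $v_1\sthin v_2$, contradicting $\sthin$-thinness. Hence $\sigma(L(T(v_1)))=\sigma(L(T(v_2)))=\{r,s\}\subsetneq\sigma(L)$ with both children inner, and Lemma~\ref{lem:2-color-subtree} produces a tree $(T',\sigma)$ that still explains $(G,\sigma)$.

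For termination I would use the lexicographically ordered pair $\bigl(|V^0(T)|,\; -\sum_{v\in V^0(T)}\mathrm{depth}_T(v)\bigr)$. A suppression step, which occurs precisely when $u$ had only two children, removes $u$ and so strictly decreases the first coordinate. A non-suppression step leaves $|V^0(T)|$ unchanged but moves $v_2$ to become a child of $v_1$, raising the depth of $v_2$ and of each of its inner descendants by one; the total depth sum therefore strictly increases, so the second coordinate strictly decreases. Both coordinates are bounded, so the descent must halt in a tree $(\widehat T,\sigma)$ explaining $(G,\sigma)$ in which no inner vertex of any 2-colored subtree has two children of equal color set.

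Finally I would harvest the caterpillar property from the terminal condition. In any $(\widehat T(u),\sigma_{|L(\widehat T(u))})$ with color set $\{r,s\}$, each child of an inner vertex has color set in $\{\{r\},\{s\},\{r,s\}\}$, and by the terminal condition at most one child of each type occurs. Lemma~\ref{lem:2col} forces the $\{r\}$- and $\{s\}$-children to be leaves, so each inner vertex has at most one inner child, which is exactly the caterpillar property; the required distinctness of children's color sets holds by construction. The main obstacle I expect is the termination argument, since Lemma~\ref{lem:2-color-subtree} can either push $v_2$ strictly deeper or shrink the tree via suppression, so neither the depth sum nor the inner-vertex count alone is monotone and the lexicographic pair above is what makes the descent well-founded.
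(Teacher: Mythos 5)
Your proof is correct and follows essentially the same route as the paper: repeated application of Lemma~\ref{lem:2-color-subtree}, with Lemma~\ref{lem:2col} and $\sthin$-thinness ruling out the degenerate sibling configurations, and the caterpillar structure read off from the terminal tree. Your lexicographic termination measure is in fact somewhat more careful than the paper's argument, which only observes that the local count $|\child(v)\cap V^0(T(u))|$ drops by one per step and leaves the global well-foundedness implicit.
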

\begin{proof}
  Let $(T',\sigma)$ explain $(G,\sigma)$ and let $u\in V^0(T')$ be such
  that $(T'(u),\sigma_{|L(T'(u))})$ is a 2-colored subtree of
  $(T',\sigma)$. Suppose there exists an inner vertex $v\in V^0(T'(u))$
  with two distinct children that are again inner vertices, i.e.,
  $w_1,w_2\in \child(v)\cap V^0(T'(u))$. Since $(G,\sigma)$ is
  $\sthin$-thin, we can apply Lemma \ref{lem:2col} to conclude that
  $(T'(w_1),\sigma_{|L(T'(w_1))})$ and $(T'(w_2),\sigma_{|L(T'(w_2))})$ are
  both 2-colored subtrees, thus
  $\sigma(L(T'(w_1)))=\sigma(L(T'(w_2)))\subsetneq \sigma(L)$. By Lemma
  \ref{lem:2-color-subtree}, the tree $(T'',\sigma)$ that is obtained from
  $(T',\sigma)$ by deleting $vw_2$ and inserting $w_1w_2$, still explains
  $(G,\sigma)$ and satisfies
  $|\child(v)\cap V^0(T''(u))|=|\child(v)\cap V^0(T'(u))|-1$. Repeating
  this transformation until each inner vertex $v\in V^0(T)$ satisfies
  $\sigma(L(T(v_1)))\neq\sigma(L(T(v_2)))$ for any
  $v_1,v_2\in \child_T(v)$, finally yields a tree $(T,\sigma)$ for which
  $|\child(v)\cap V^0(T(u))|\le 1$, i.e., a caterpillar, that explains
  $(G,\sigma)$. In particular, we have $|\sigma(L(T(v_1)))|=1$ if and only
  if $v_1\in L$ (cf.\ Lemma \ref{lem:2col}), and $v$ cannot have two leaves
  of the same color as children because $(G,\sigma)$ is $\sthin$-thin.
  
\end{proof}

The restriction to connected $\sthin$-thin graphs with $3$ colors together
with the fact that all 2-colored subtrees can be chosen to be caterpillars
according to Cor.\ \ref{cor:2col-sub} identifies three distinct classes of
trees, see Fig.\ \ref{fig:categories}.

\begin{definition}\label{def:3-col-tree}
  Let $(T,\sigma)$ be a 3-colored tree with color set $S=\{r,s,t\}$.  The
  tree $(T,\sigma)$ is of
  \begin{description}
  \item[\textbf{Type \AX{\bf (I)}},] if there exists $v\in \child(\rho_T)$ such
    that $|\sigma(L(T(v)))|=2$ and
    $\child(\rho_T)\setminus \{v\}\subsetneq L$.
  \item[\textbf{Type \AX{\bf (II)}},] if there exists
    $v_1, v_2\in \child(\rho_T)$ such that
    $|\sigma(L(T(v_1)))|=|\sigma(L(T(v_2)))| = 2$,
    $\sigma(L(T(v_1)))\neq \sigma(L(T(v_2)))$ and
    $\child(\rho_T)\setminus \{v_1,v_2\}\subsetneq L$,
  \item[\textbf{Type \AX{\bf (III)}},] if there exists
    $v_1, v_2, v_3\in \child(\rho_T)$ such that
    $\sigma(L(T(v_1)))=\{r,s\}$, $\sigma(L(T(v_2)))=\{r,t\}$,
    $\sigma(L(T(v_3)))=\{s,t\}$, and
    $\child(\rho_T)\setminus \{v_1,v_2, v_3\} \subsetneq L$.
  \end{description}
\end{definition}

\begin{lemma}\label{lem:2col-subtrees}
  Let $(G,\sigma)$ be an $\sthin$-thin connected 3-RBMG with vertex set
  $L$ and color set $\sigma(L)=\{r,s,t\}$. Then, there is a tree
  $(T,\sigma)$ with root $\rho_T$ explaining $(G,\sigma)$ that satisfies
  the properties in Cor.\ \ref{cor:2col-sub} and is of Type \AX{(I)},
  \AX{(II)} or \AX{(III)}.  In particular, all leaves that are incident to
  the root of $(T,\sigma)$ must have pairwise distinct colors.
\end{lemma}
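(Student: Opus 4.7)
The plan is to begin with a tree $(T,\sigma)$ provided by Cor.\ \ref{cor:2col-sub}, use Cor.\ \ref{cor:n-color}(ii) and Lemma \ref{lem:cl} to restrict the structure of the root, use Lemma \ref{lem:2-color-subtree} to normalize the root's children, and conclude with a case analysis on the number of 2-colored inner children of $\rho_T$.

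Since $(G,\sigma)$ is connected, Cor.\ \ref{cor:n-color}(ii) gives $\zeta(T,\sigma)=\{\rho_T\}$, so $\rho_T$ is a fork and its color-deficient children jointly realize all three colors. By Lemma \ref{lem:2col}, each inner child $v\in\child(\rho_T)$ satisfies $|\sigma(L(T(v)))|\ge 2$. If some inner child $v$ were color-complete, Lemma \ref{lem:cl} would force that no edge of $(G,\sigma)$ joins $L(T(v))$ to $L\setminus L(T(v))$; since $\rho_T$ is a fork it has other children contributing leaves, making $(G,\sigma)$ disconnected, a contradiction. Hence every inner child of $\rho_T$ is exactly 2-colored. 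Applying Lemma \ref{lem:2-color-subtree} iteratively at $\rho_T$, one merges any two inner children that share the same 2-color set; the procedure strictly decreases the number of root children and ends with a tree whose inner root children carry pairwise distinct 2-subsets of $\{r,s,t\}$, so there are $k\le 3$ of them. A further iteration of Lemma \ref{lem:2-color-subtree} inside each now-enlarged 2-colored subtree restores the caterpillar property of Cor.\ \ref{cor:2col-sub} without touching the root. Finally, $\sthin$-thinness forbids $\rho_T$ from having two leaves of the same color, because such leaves $a\ne a'$ would satisfy $\lca_T(a,y)=\lca_T(a',y)=\rho_T$ for every other $y$, forcing $N_G(a)=N_G(a')$ and $a\sthin a'$.

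A case analysis on $k$ finishes the argument: $k=3$ is immediately Type \AX{(III)}; $k=2$ is Type \AX{(II)} since any two distinct 2-subsets of $\{r,s,t\}$ automatically cover all three colors and the remaining children are leaves; $k=1$ is Type \AX{(I)} because the fork condition forces a leaf of the missing third color to appear at $\rho_T$; and $k=0$ forces $\rho_T$ to carry exactly one leaf of each color, so $(G,\sigma)=K_3$, and the star can be refined into the caterpillar that groups two of the three leaves under a fresh inner vertex, which one checks still explains $K_3$ and is of Type \AX{(I)}. The main obstacle is the $k=0$ refinement, where one must verify that the chosen refinement preserves the RBMG and is compatible with Cor.\ \ref{cor:2col-sub}; a secondary subtlety is that the merges of Lemma \ref{lem:2-color-subtree} may momentarily violate the caterpillar shape at $v_1$, but another iteration of the same lemma inside that subtree repairs this.
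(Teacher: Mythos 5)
Your proof is correct and follows essentially the same route as the paper: unique fork at the root (Cor.~\ref{cor:n-color}(ii)), all root children color-deficient and hence exactly 2-colored or leaves (Lemma~\ref{lem:2col}), normalization via Lemma~\ref{lem:2-color-subtree}/Cor.~\ref{cor:2col-sub}, and a count of the 2-colored children. The only (harmless) deviations are that you derive color-deficiency of the root's children directly from Lemma~\ref{lem:cl} and connectedness rather than from Lemma~\ref{lem:sum-com}(i), and you absorb the $|L|=3$ case into $k=0$ at the end instead of dispatching it up front.
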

\begin{proof}
  Since $(G,\sigma)$ is an RBMG, there is a tree $(T,\sigma)$ that explains
  $(G,\sigma)$.  Denote its root by $\rho_{T}$.  Note, $|\sigma(L)|=3$
  implies that $|L|\geq 3$.

  If $|L|=3$, then it is easy to see that $G$ must be a complete graph on
  three vertices.  In this case, any tree $(T,\sigma)$ where $T$ is a
  triple explains $(G,\sigma)$ and satisfies Type \AX{(I)} and Cor.\
  \ref{cor:2col-sub}.

  Now suppose that $|L|>3$.  Since $(G,\sigma)$ is connected, Cor.\
  \ref{cor:n-color}(ii) implies that $\zeta(T,\sigma)=\{\rho_{T}\}$.  Lemma
  \ref{lem:sum-com}(i) then implies
  $L\subseteq \mathcal{L}(\rho_T) \subseteq L$, i.e.,
  $A(\rho_{T})=\child(\rho_{T})$, and thus $|\sigma(L(T(v)))|<3$ for every
  $v\in\child(\rho_{T})$. This is, every proper subtree of $(T,\sigma)$
  contains at most two colors.  As a consequence of Cor.\
  \ref{cor:2col-sub}, the tree $(T,\sigma)$ can be chosen such that there
  is no pair of distinct vertices $v_1,v_2\in\child(\rho_{T})$ for which
  $\sigma(L(T(v_1)))=\sigma(L(T(v_2)))$.  Moreover, as $|L|>3$ and
  $|\sigma(L)|=3$, it follows directly from Cor.\ \ref{cor:2col-sub} that
  $|\sigma(L(T(v)))| =1$ for every child $v\in \child(\rho_T)$ is not
  possible.  Thus, there is at least one child $v\in \child(\rho_T)$ with
  $|\sigma(L(T(v)))|\neq 1$ and thus, $|\sigma(L(T(v)))| =2$.
	
  In summary, there are only six possible subtrees
  $(T(v),\sigma_{L(T(v))})$ with $v\in\child(\rho_{T})$, three containing
  two colors and three containing only a single color, and each of these
  six types of subtrees can appear at most once, while there is, in
  addition, at least one child $v\in\child(\rho_{T})$ where
  $(T(v),\sigma_{L(T(v))})$ contains two colors.

  Therefore, we end up with the three cases \AX{(I)}, \AX{(II)}, and
  \AX{(III)}: If there is exactly one vertex $v\in \child(\rho_T)$ such
  that the subtree $(T(v),\sigma_{L(T(v))})$ contains two colors, any other
  leaf in $L\setminus L(T(v))$ must be directly attached to $\rho_T$, thus
  Condition \AX{(I)} is satisfied. Similarly, Condition \AX{(II)} and
  \AX{(III)}, respectively, correspond to the case where there exist two
  and three 2-colored subtrees below the root.  Since the three types of
  trees \AX{(I)}, \AX{(II)}, and \AX{(III)} differ by the number of
  two-colored subtrees of the root, no tree can belong to more than one
  type.  By the choice of $(T,\sigma)$, it satisfies Cor.\
  \ref{cor:2col-sub}.  

  Finally, if the root of a tree is incident to two leaves of the same
  color, then the graph explained by this tree cannot be
  $\sthin$-thin. Thus, the last statement must be satisfied.  
\end{proof}

The fact that every connected 3-RBMG can be explained by a tree with a very
peculiar structure can now be used to infer stringent structural
constraints on the 3-RBMGs themselves.

\begin{lemma}\label{lem:TreeTypes}
  Let $(G,\sigma)$ \NEW{with vertex set $L$} be an $\sthin$-thin connected
  3-RBMG with $\sigma(L)=\{r,s,t\}$ and $(T,\sigma)$ be a tree of Type
  \AX{(I)}, \AX{(II)}, or \AX{(III)} explaining $(G,\sigma)$. Consider
  $v\in\child(\rho_T)$ such that $\sigma(L(T(v))) = \{r,s\}$.  Then:
  \begin{itemize}
  \item[(i)] If $x\in L(T(v)) \cap L[r]$, then $xy\in E(G)$ for
    $\sigma(y)=s$ if and only if $\parent(x)=\parent(y)$ and thus,
    $y\in L(T(v))$.
  \end{itemize}
  If, in addition, there is a vertex $w\in\child(\rho_T)\setminus \{v\}$ with
  $\sigma(L(T(w)))=\{r,t\}$, i.e., $(T,\sigma)$ is of either Type \AX{(II)}
  or \AX{(III)}, then the following statements hold: 
  \begin{itemize} 
  \item[(ii)] For any $y\in L(T(v))$, $z\in L(T(w))$,
    we have $yz\in E(G)$ if and only if $y\in L[s]$ and $z\in L[t]$.
  \item[(iii)] If $(T,\sigma)$ is of Type \AX{(II)}, then $yz\in E(G)$ for
    every  $y\in L[s]$ and $z\in L[t]$.
  \item[(iv)] For any $a\in L(T(v))$, $b\in \child(\rho_T)\cap L$ with
    $\sigma(b)\neq \sigma(a)$, we have $ab\in E(G)$ if and only if
    $\sigma(b)\notin \sigma(L(T(v)))$.
  \end{itemize} 
\end{lemma}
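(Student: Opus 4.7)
The plan is to establish each of (i)--(iv) by direct $\lca$-computations in $(T,\sigma)$, using Lemma~\ref{lem:2col} as the central structural tool: since $(G,\sigma)$ is $\sthin$-thin, every inner vertex $u\in V^0(T)$ satisfies $|\sigma(L(T(u)))|\ge 2$, which rules out monochromatic inner subtrees and is what ultimately forces best matches within $T(v)$ to coincide with sibling relationships. For part (i), I first reduce to $y\in L(T(v))$: if $y\notin L(T(v))$, then $\lca(x,y)\succeq\rho_T\succ v$, whereas some $y'\in L(T(v))\cap L[s]$ (existing since $s\in\sigma(L(T(v)))$) satisfies $\lca(x,y')\preceq v$, ruling out $y$ as a best match of $x$. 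For $y\in L(T(v))$, the direction $(\Leftarrow)$ is immediate: if $\parent(x)=\parent(y)=u'$, then $\lca(x,y)=u'$ equals $\lca(x,y')$ for every $y'\in L(T(u'))\cap L[s]$ and is strictly smaller than $\lca(x,y')$ for $y'\notin L(T(u'))$, so $y$ is a best match of $x$, and symmetrically. For $(\Rightarrow)$, set $u'\coloneqq\lca(x,y)$ and suppose $\parent(y)\ne u'$; then $y$ lies strictly below some inner child $w\in\child(u')$, and since $L(T(w))\subseteq L(T(v))$ uses only colors in $\{r,s\}$, Lemma~\ref{lem:2col} forces $\sigma(L(T(w)))=\{r,s\}$. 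Picking any $x'\in L(T(w))\cap L[r]$ then yields $\lca(x',y)\preceq w\prec u'=\lca(x,y)$, contradicting $x$ being a best match of $y$; a symmetric argument gives $\parent(x)=u'$.

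For (ii), the key observation is that $y\in L(T(v))$ and $z\in L(T(w))$ force $\lca(y,z)=\rho_T$. For $(\Rightarrow)$, if $yz\in E(G)$ and $\sigma(y)=r$ (so $\sigma(z)=t$), then any $y'\in L(T(w))\cap L[r]$ gives $\lca(z,y')\preceq w\prec\rho_T$, contradicting $y$ being a best match of $z$; symmetrically $\sigma(z)\ne r$, leaving $\sigma(y)=s$ and $\sigma(z)=t$. For $(\Leftarrow)$, since $t\notin\sigma(L(T(v)))$ and $s\notin\sigma(L(T(w)))$, every $z'\in L[t]$ satisfies $\lca(y,z')=\rho_T$ and every $y'\in L[s]$ satisfies $\lca(z,y')=\rho_T$, so $y$ and $z$ are tied best matches and $yz\in E(G)$.

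For (iii) and (iv) the arguments extend naturally to direct leaf children of $\rho_T$. In Type~\AX{(II)}, any $y\in L[s]$ lies in $L(T(v))$ or is a leaf child of $\rho_T$ (since $s\notin\sigma(L(T(w)))$ and all children of $\rho_T$ other than $v$ and $w$ are leaves by Def.~\ref{def:3-col-tree}), and similarly for $z\in L[t]$; in each of the four location-combinations both $\lca(y,z)$ and every alternative $\lca(y,z')$ or $\lca(z,y')$ equal $\rho_T$, yielding $yz\in E(G)$. For (iv), $\lca(a,b)=\rho_T$ because $b$ is a leaf child of $\rho_T$; if $\sigma(b)\notin\sigma(L(T(v)))$ then no competing $b'\in L[\sigma(b)]$ sits in $L(T(v))$, so all $\lca(a,b')$ tie at $\rho_T$, whereas if $\sigma(b)\in\sigma(L(T(v)))$ some $b'\in L(T(v))\cap L[\sigma(b)]$ gives $\lca(a,b')\preceq v\prec\rho_T$, ruling out $b$ as a best match of $a$. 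The chief obstacle throughout is the case-by-case bookkeeping of leaf positions across $T(v)$, $T(w)$, any additional 2-colored subtree present in Type~\AX{(III)}, and direct leaf children of $\rho_T$; once one notes that every ``cross-subtree'' LCA collapses to $\rho_T$ and that Lemma~\ref{lem:2col} prevents a proper inner subtree of $T(v)$ from being monochromatic, the analysis becomes uniform.
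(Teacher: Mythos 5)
Your proof is correct and takes essentially the same route as the paper's: each part is settled by direct $\lca$ comparisons, with Lemma~\ref{lem:2col} supplying the strictly closer competing leaf inside any proper inner subtree of $T(v)$ for the forward direction of (i), and all cross-subtree $\lca$'s collapsing to $\rho_T$ for (ii)--(iv). The only cosmetic difference is that in (i) you show $\parent(x)=\lca(x,y)=\parent(y)$ directly rather than case-splitting on how $\parent(x)$ and $\parent(y)$ relate, which is an equivalent organization of the same argument.
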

\begin{proof}
  \smallskip\noindent(i) We assume $y\in L[s]$ and $xy\in E(G)$.  For
  contradiction, suppose $y\notin L(T(v))$. Since $L(T(v))$ contains at
  least one leaf $y'\neq y$ of color $s$, we have
  $\lca(x,y')\preceq v \prec \lca(x,y)$, which implies $xy\notin E(G)$; the
  desired contradiction. Hence, $y\in L(T(v))$. Now assume, again for
  contradiction, $\parent(x)\neq \parent(y)$. There are three cases: (a)
  $\parent(x)$ and $\parent(y)$ are incomparable in $(T,\sigma)$, (b)
  $\parent(x)\prec_T\parent(y)$, or (c) $\parent(y)\prec_T\parent(x)$.  In
  Case (a), Lemma \ref{lem:2col} implies that there is a leaf
  $y'\in L(T(\parent(x))\cap L[s]$ and therefore,
  $\lca(x,y')\prec \lca(x,y)$; again a contradiction to $xy\in E(G)$.
  Similar argumentation can be applied to the Cases (b) and (c). Hence, we
  conclude that
  $\parent(x)=\parent(y)$. \\
  Conversely, assume $\parent(x)=\parent(y)$ and $y\in L(T(v))$.  By
  construction, we have $\parent(x)=\lca(x,y)\preceq \lca(x,y')$ for all
  $y'\in L[s]$, thus $xy\in E(G)$.
 
  \smallskip\noindent(ii) Let $y\in L(T(v))$, $z\in L(T(w))$ and
  $yz\in E(G)$.  Assume, for contradiction, $\sigma(y)=r$. Since
  $(G,\sigma)$ does not contain edges between vertices of the same color,
  we have $z\in L[t]$. By construction of $(T,\sigma)$, there must be some
  $x\in L(T(w))$ of color $r$. Hence,
  $\lca(z,x)\preceq w \prec \lca(z,y)=\rho_T$; a contradiction to
  $yz\in E(G)$. Thus, $\sigma(y)=s$. An analogous argument yields
  $\sigma(z)=t$.  Conversely, let $y\in L(T(v))$ and $z\in L(T(w))$ such
  that $\sigma(y)=s$ and $\sigma(z)=t$.  Since neither
  $t\in \sigma(L(T(v)))$ nor $s\in \sigma(L(T(w)))$ and $(T,\sigma)$ is of
  Type \AX{(II) or \AX{(III)}}, we can immediately conclude that
  $\lca(y,z)=\rho_T = \lca(y,z')=\lca(y',z)$ for all $y'\in L[s]$ and all
  $z'\in L[t]$. Thus, $yz\in E(G)$.

  \smallskip\noindent(iii) Since, $s\notin\sigma(L(T(w)))$,
  $t\notin \sigma(L(T(v)))$, and $(T,\sigma)$ is of Type \AX{(II)}, we have
  $\lca_T(y,z)=\rho_T$ for any pair $y\in L[s]$, $z\in L[t]$. Thus,
  $yz\in E(G)$.

  \smallskip\noindent(iv) Let $\sigma(a)=r$ and suppose first
  $\sigma(b)=s$. Then, there is some $y\prec v$ with $\sigma(y)=s$, thus
  $\lca(a,y)\prec \lca(a,b)$. Therefore $a$ and $b$ cannot be reciprocal
  best matches, i.e., $ab\notin E(G)$. Now assume $\sigma(b)=t$. Since
  $t\notin \sigma(L(T(v)))$, we have $\lca(a,z)=\rho_T$ for every
  $z\in L[t]$.  In particular, we have $\lca(b, L[r])=\rho_T$, and
  therefore $ab\in E(G)$.
  
\end{proof}
We note in passing that Lemma \ref{lem:TreeTypes}(iv) is also satisfied by
Type \AX{(I)} trees.

In the following we also need a special form of Type \AX{(II)} trees:
\begin{definition}\label{def:typeiistar} 
  A tree $(T,\sigma)$ of Type \AX{(II)} with color set $S=\{r,s_1,s_2\}$
  and root $\rho_T$, where $v_1,v_2\in \child(\rho_T)$ with
  $\sigma(L(T(v_1)))=\{r,s_1\}$ and $\sigma(L(T(v_2)))=\{r,s_2\}$, is of
  Type \AX{(II$^*$)} if, for $i\in \{1,2\}$, it satisfies:
  \begin{description}
  \item[\AX{($\star$)}] If there is a vertex $w\in V^0(T(v_i))$ such that
    $\child(w) \cap L=\{x\}$ for some $x\in L[r]$, then there is a vertex
    $v\in \child(\rho_T)$ such that $\sigma(v)=s_i$.
  \end{description}
\end{definition}
\NEW{For leaf-colored trees explaining an $\sthin$-thin graph, Definition
  \ref{def:typeiistar} implies, in particular, that if there is some vertex
  $w\in V^0(T(v_i))$ with $\sigma(\child(w) \cap L)=\{r\}$ in a tree
  $(T,\sigma)$ of Type \AX{(II$^*$)}, then
  $ L[s_i]\setminus L(T(v_i))\neq \emptyset$. Moreover, note that in a
  leaf-colored tree explaining an $\sthin$-thin graph, the property
  $\sigma(\child(w) \cap L)=\{r\}$ always implies $|\child(w) \cap L|=1$. }


Given an arbitrary tree $(T,\sigma)$ of Type \AX{(II)} with colors and
subtrees as in Def.\ \ref{def:typeiistar}, one can easily construct a
corresponding tree $(T',\sigma)$ of Type \AX{(II$^*$)} using the following
rule for $i\in \{1,2\}$:
\begin{description}
\item[\AX{(r)}] If there is no vertex $v\in \child(\rho_T)$ such that
  $\sigma(v)=s_i$, then re-attach all vertices $x\in L[r]$ with
  $\child(\parent(x))\cap L =\{x\}$ to $\rho_T$ and suppress $\parent(x)$
  in case $\parent(x)$ has degree 2 after removal of the edge
  $\parent(x)x$.
\end{description}
By construction, the tree $(T',\sigma)$ has no vertices $w\in V^0(T(v_i))$
with $\sigma(\child(w) \cap L)=\{r\}$, and thus, $(T',\sigma)$ trivially
satisfies \AX{($\star$)}. Hence, $(T',\sigma)$ is of Type \AX{(II$^*$)}.

We proceed by showing that rule \AX{(r)} must be applied to at most one
leaf in order to obtain a tree $(T',\sigma)$ of Type \AX{(II$^*$)}.
\begin{lemma}
\label{lem:one-Ruler}
Let $(T,\sigma)$ be Type \AX{(II)} tree that is not of Type \AX{(II$^*$)}
and that explains a connected $\sthin$-thin 3-RBMG.  Let $\rho_T$ be the
root of $(T,\sigma)$ and $S=\{r,s_1,s_2\}$ its color set.  Moreover, let
$v_1,v_2\in\child(\rho_T)$ such that $\sigma(L(T(v_i)))=\{r,s_i\}$,
$i\in\{1,2\}$.  Then,
\begin{enumerate}
\item[(i)] no leaf of color $r$ is incident to $\rho_T$ and
\item[(ii)] if Rule \AX{(r)} is applied to some vertex $x\in L(T(v_i))$,
  then $x$ is the only leaf in $L[r]\cap L(T(v_i))$ with
  $\child(\parent(x))\cap L =\{x\}$ and all inner vertices in $L(T(v_j))$,
  $j\neq i$ satisfy Property \AX{($\star$)} in Def.\ \ref{def:typeiistar}.
\end{enumerate}
\end{lemma}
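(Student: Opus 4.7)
For part~(i), I would argue by contradiction: assume a leaf $y\in L[r]$ is a child of $\rho_T$, and show that $y$ is then isolated in $(G,\sigma)$, contradicting connectivity. For any $z\in L[s_j]$ with $j\in\{1,2\}$, one has $z\in L(T(v_j))$ and $\lca(y,z)=\rho_T$; since $\sigma(L(T(v_j)))=\{r,s_j\}$, there exists some $x\in L[r]\cap L(T(v_j))$ with $\lca(x,z)\preceq v_j\prec\rho_T$, so $y$ is not a best match of $z$ in color $r$, whence $yz\notin E(G)$. Together with the absence of $r$-$r$ edges imposed by proper coloring, $y$ has no neighbor at all, which contradicts connectivity since $|\sigma(L)|=3$ forces $|V(G)|\ge 3$.

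For part~(ii), I plan to fix, without loss of generality, $i=1$ and take a leaf $x\in L[r]\cap L(T(v_1))$ with $\child(\parent(x))\cap L=\{x\}$, where by the hypothesis of Rule~\AX{(r)} no leaf of color $s_1$ is attached to $\rho_T$. The key step is to enumerate $N(x)$ via Lemma~\ref{lem:TreeTypes}: part~(i) of that lemma rules out $s_1$-neighbors since $x$ has no leaf sibling in $T$; part~(ii) rules out $s_2$-neighbors inside $L(T(v_2))$ because $x\notin L[s_1]$; and part~(iv), together with part~(i) of the present lemma and Lemma~\ref{lem:2col-subtrees}, shows that the only remaining candidate neighbor of $x$ is a single leaf $b\in\child(\rho_T)\cap L$ of color $s_2$. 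Since $(G,\sigma)$ is connected on at least three vertices, $x$ is not isolated, so such a $b$ must exist, is unique by Lemma~\ref{lem:2col-subtrees}, and satisfies $N(x)=\{b\}$.

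For claim~(a), repeating the same analysis for any second leaf $x'\in L[r]\cap L(T(v_1))$ with $\child(\parent(x'))\cap L=\{x'\}$ yields $N(x')=\{b\}=N(x)$ and $\sigma(x')=\sigma(x)=r$, hence $x\sthin x'$, which forces $x=x'$ by $\sthin$-thinness of $(G,\sigma)$. For claim~(b), the existence of $b\in\child(\rho_T)\cap L$ with $\sigma(b)=s_2$ makes the conclusion of Property~\AX{($\star$)} hold for $v_2$ unconditionally, so every inner vertex of $T(v_2)$ trivially satisfies~\AX{($\star$)}. The main obstacle is the careful case analysis that pins down $N(x)$: because $x$ sits inside the ``$r$-$s_1$'' subtree $T(v_1)$ while a Type~\AX{(II)} tree also contains an ``$r$-$s_2$'' subtree $T(v_2)$ and possibly extra leaves at $\rho_T$, one must separately handle matches inside $L(T(v_1))$, matches across $L(T(v_1))$ and $L(T(v_2))$, and matches to $\child(\rho_T)\cap L$, invoking the appropriate part of Lemma~\ref{lem:TreeTypes} in each case.
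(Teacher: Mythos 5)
Your part~(ii) is essentially the paper's argument and is fine: you pin down $N(x)=\{b\}$ for a vertex $x$ with $\child(\parent(x))\cap L=\{x\}$ exactly as the paper does (via Lemma~\ref{lem:TreeTypes}(i), (ii), (iv) and Lemma~\ref{lem:2col-subtrees}), and then get uniqueness from $\sthin$-thinness and Property~\AX{($\star$)} for $T(v_2)$ from the existence of $b$.

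Part~(i), however, has a genuine gap. You assert that every $z\in L[s_j]$ lies in $L(T(v_j))$, and conclude that an $r$-colored leaf $y\in\child(\rho_T)$ would be isolated. A Type~\AX{(II)} tree may perfectly well have a leaf of color $s_1$ or $s_2$ attached directly to $\rho_T$, and if $z\in L[s_2]\cap\child(\rho_T)$ then $\lca(y,z)=\rho_T=\lca(y',z)=\lca(y,z')$ for all $y'\in L[r]$, $z'\in L[s_2]$, so $yz\in E(G)$ and $y$ is \emph{not} isolated. Worse, under the lemma's hypotheses such a leaf $z$ \emph{must} exist: the failure of \AX{($\star$)} for, say, $i=1$ produces a vertex $x\in L[r]\cap L(T(v_1))$ with $\sigma(\child(\parent(x))\cap L)=\{r\}$ and $L[s_1]\subseteq L(T(v_1))$, hence $N_{s_1}(x)=\emptyset$; connectivity then forces an $s_2$-neighbor of $x$, which by Lemma~\ref{lem:TreeTypes}(ii)+(iv) can only be a root-incident $s_2$-leaf. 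So your isolation argument fails in precisely the situation the lemma addresses. Note also that you never use the hypothesis that $(T,\sigma)$ is \emph{not} of Type~\AX{(II$^*$)} in part~(i) — a red flag, since statement~(i) is false for general Type~\AX{(II)} trees (a root-incident $r$-leaf adjacent to a root-incident $s_1$-leaf is perfectly consistent with a connected $\sthin$-thin 3-RBMG). The correct contradiction is the paper's: with $x$ and the unique root-incident $s_2$-leaf $z$ as above one gets $N(x)=\{z\}$, and if $y\in L[r]\cap\child(\rho_T)$ existed then $\child(\rho_T)\cap L=\{y,z\}$ and $N(y)=\{z\}=N(x)$, contradicting $\sthin$-thinness.
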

\begin{proof}
  First note that, since $(T,\sigma)$ is of Type \AX{(II)}, it satisfies
  $\child(\rho_T)\setminus \{v_1,v_2\}\subset L$. Since $(T,\sigma)$ is not
  of Type \AX{(II$^*$)}, there must be a leaf $x\in L[r]$ with
  $w\coloneqq \parent(x)\preceq_T v_i$ and $\sigma(\child(w)\cap L)=\{r\}$ such
  that there is no leaf of color $s_i$ incident to $\rho_T$, i.e.,
  $L[s_i]\subseteq L(T(v_i))$ for some $i\in \{1,2\}$. W.l.o.g.\ we can
  assume $i=1$. Now, $L[s_1]\subseteq L(T(v_1))$ and Lemma
  \ref{lem:TreeTypes}(i) implies that $N_{s_1}(x)=\emptyset$ in
  $(G,\sigma)$. However, since $(G,\sigma)$ is connected, there must exist
  some $z\in L[s_2]$ such that $xz\in E(G)$. Lemma
  \ref{lem:TreeTypes}(i)+(iv) then implies that every $z\in L[s_2]$ with
  $xz\in E(G)$ must be incident to $\rho_T$. However, Lemma
  \ref{lem:2col-subtrees} implies that $z$ is the only leaf of color $s_2$
  that is incident to the root.  Since $N_{s_1}(x)=\emptyset$, it holds
  that $z$ is the only vertex in $L$ that is adjacent to $x$ in $G$ and
  thus, $N(x)=\{z\}$ in $(G,\sigma)$.

  In order to show Statement (i), we now assume for contradiction that
  there exists another leaf $x'\neq x$ of color $r$ such that
  $x'\in \child(\rho_T)$. Then, as a consequence of Lemma
  \ref{lem:2col-subtrees} and since $L[s_1]\subseteq L(T(v_1))$, we have
  $\child(\rho_T)\cap L=\{x',z\}$. Thus Lemma \ref{lem:TreeTypes}(i)+(iv)
  implies that $x'$ is not adjacent to any vertex in
  $L(T(v_1))\cup L(T(v_2))$.  Moreover, we have
  $\lca_T(x',z) = \rho_T = \lca_T(x'',z) =\lca_T(x',z')$ for all
  $x''\in L[r]$ and $z'\in L[s_2]$, hence $x'z\in E(G)$. Taking the latter
  two arguments together with the observation that there is no leaf with
  color $s_1$ incident to the root $\rho_T$, we obtain $N(x')=N(x)=\{z\}$
  in $(G,\sigma)$; a contradiction to the $\sthin$-thinness of
  $(G,\sigma)$.

  We proceed with showing Statement (ii). Repeating the latter arguments,
  one easily checks that $N(x_1)=\{z\}$ for any vertex $x_1\in L(T(v_1))$
  with $\sigma(\child(\parent(x_1))\cap L)=\{r\}$. However, since
  $(G,\sigma)$ is $\sthin$-thin, we cannot have a further vertex
  $x_1\in L(T(v_1))$ with $N(x_1)=\{z\} = N(x)$.  Hence, there is exactly
  one $x\in L[r]\cap L(T(v_1))$ with $\child(\parent(x))\cap L =\{x\}$ to
  which Rule \AX{(r)} can be applied.  Moreover, the existence of
  $z\in \child(\rho_T)\cap L[s_2]$ immediately implies that Property
  \AX{($\star$)} in Def.\ \ref{def:typeiistar} is satisfied for every
  $w\in V^0(T(v_2))$ with $\sigma(\child(w)\cap L) = \{r\}$.  Thus,
  Statement (ii) is satisfied.  
\end{proof}

\begin{lemma}\label{lem:tree-star}
  If a connected $\sthin$-thin 3-RBMG can be explained by tree of Type
  \AX{(II)}, then it can be explained by a tree of Type \AX{(II$^*$)}.
\end{lemma}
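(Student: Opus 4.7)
The plan is to apply Rule \AX{(r)} exactly once, guided by Lemma~\ref{lem:one-Ruler}, and then verify that the resulting tree is of Type \AX{(II$^*$)} and still explains the same RBMG. Let $(T,\sigma)$ be a Type \AX{(II)} tree explaining $(G,\sigma)$ with root $\rho_T$ and distinguished children $v_1,v_2$. If $(T,\sigma)$ is already of Type \AX{(II$^*$)}, there is nothing to do; otherwise, I would first invoke Cor.~\ref{cor:2col-sub} to replace $(T,\sigma)$ by a Type \AX{(II)} tree whose two 2-colored subtrees $T(v_1)$ and $T(v_2)$ are caterpillars (the transformation in Lemma~\ref{lem:2-color-subtree} is local to a single 2-colored subtree and hence preserves Type \AX{(II)}). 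Lemma~\ref{lem:one-Ruler} will then pin down the situation: WLOG the defect lies in $T(v_1)$, there is a unique red leaf $x\in L(T(v_1))$ with $w\coloneqq\parent(x)$ satisfying $\child(w)\cap L=\{x\}$; we have $L[s_1]\subseteq L(T(v_1))$; there is a unique $s_2$-leaf $z\in\child(\rho_T)\cap L$ with $N_G(x)=\{z\}$; and all inner vertices of $T(v_2)$ already satisfy Property~\AX{($\star$)}.

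The surgery I will perform on $(T,\sigma)$ is: delete the edge $wx$, suppress $w$ if it becomes of degree $2$, and re-attach $x$ as a leaf-child of $\rho_T$; call the result $(T',\sigma)$. Two claims then need to be established. First, that $(T',\sigma)$ is of Type \AX{(II$^*$)}: using the caterpillar structure of $T(v_1)$ together with Lemma~\ref{lem:2col} and $\sthin$-thinness, the bottom inner vertex of $T(v_1)$ must carry both a red and an $s_1$-leaf as children, so $L(T(v_1))$ contains at least two red leaves and $T'(v_1)$ remains 2-colored with color set $\{r,s_1\}$. By the uniqueness clause of Lemma~\ref{lem:one-Ruler}, $w$ is the only offender in $T(v_1)$ and $T(v_2)$ has none; after the surgery $w$ is gone, so Property \AX{($\star$)} will hold vacuously on both sides of the root, and $(T',\sigma)$ is of Type \AX{(II$^*$)}.

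Second, and more delicately, $(T',\sigma)$ must explain $(G,\sigma)$. The surgery preserves $\lca_T(a,b)=\lca_{T'}(a,b)$ for every pair $a,b\in L\setminus\{x\}$ (removing the pendant edge $wx$ and optionally suppressing $w$ leaves the tree structure on $L\setminus\{x\}$ untouched), so the neighborhoods of all vertices $a\neq x$ in $G(T',\sigma)$ will coincide with those in $(G,\sigma)$ up to the inclusion of $x$. It then remains to prove $N_{G(T',\sigma)}(x)=\{z\}$. Since $x$ is a child of $\rho_{T'}$ in $T'$, we have $\lca_{T'}(x,y)=\rho_{T'}$ for every $y\neq x$. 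For $y\in L[s_1]\subseteq L(T'(v_1))$ a second red leaf inside $T'(v_1)$ will provide a strictly better $\lca$ with $y$ and block the candidate edge; for any $s_2$-leaf $y'\in L(T(v_2))$ a red leaf inside $T(v_2)$ will do the same; the only surviving candidate will be $y=z$, for which every red leaf of $L$ meets $z$ only at $\rho_{T'}$, so that $x$ and $z$ remain reciprocal best matches, matching $N_G(x)=\{z\}$.

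The main obstacle is this last step: after lifting $x$ to the root one must rule out every spurious new adjacency between $x$ and some $s_i$-leaf, and this hinges on the availability of blocking red leaves in both $L(T'(v_1))$ and $L(T(v_2))$. Both are furnished by the caterpillar reduction of Cor.~\ref{cor:2col-sub} and by the uniqueness assertion in Lemma~\ref{lem:one-Ruler}, which is why these preparatory steps are indispensable.
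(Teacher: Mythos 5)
Your proposal is correct and follows essentially the same route as the paper's proof: invoke Lemma~\ref{lem:one-Ruler} to isolate the unique leaf $x$ with $N(x)=\{z\}$, relocate it to the root via Rule \AX{(r)}, and then verify that only the edges incident to $x$ could change and that $N(x)=\{z\}$ is preserved. The only difference is your preliminary appeal to Cor.~\ref{cor:2col-sub} for the caterpillar structure, which the paper avoids by extracting the needed red--$s_1$ cherry directly from Lemma~\ref{lem:2col}; this is a harmless but inessential extra step.
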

\begin{proof}
  Assume that $(T,\sigma)$ is of Type \AX{(II)} and that $G(T,\sigma)$ is a
  connected $\sthin$-thin 3-RBMG.  Let $S=\{r,s,t\}$ be the color set of
  $L\coloneqq L(T)$ and $v_1,v_2\in \child(\rho_T)$ with
  $\sigma(L(T(v_1)))=\{r,s\}$ and $\sigma(L(T(v_2)))=\{r,t\}$. If
  $(T,\sigma)$ is already of Type \AX{(II$^*$)}, then the statement is
  trivially true.
	
  Now suppose that $(T,\sigma)$ is not of Type \AX{(II$^*$)}. Lemma
  \ref{lem:one-Ruler} implies that there is exactly one leaf $x\in L[r]$ to
  which Rule \AX{(r)} can be applied. Hence, by using Rule \AX{(r)} and
  thus re-attaching $x$ to the root, one obtains a tree $(T',\sigma)$ of
  Type \AX{(II$^*$)}.  In particular, Lemma \ref{lem:one-Ruler} implies
  that $x$ is the only vertex with color $r$ in $(T',\sigma)$ incident to
  the root.  W.l.o.g.\ assume that $x\in L(T(v_1))$.  Note, in particular,
  that the necessity of relocating $x$ implies
  $L[s]\setminus L(T(v_1))=\emptyset$ (cf.\ Rule \AX{(r)}), i.e.,
  $L[s]\subseteq L(T(v_1))$. Thus, $\child(\parent(x))\cap L=\{x\}$ and
  Lemma \ref{lem:TreeTypes}(i)+(iv) implies that $N_s(x)=\emptyset$ in
  $G(T,\sigma)$.

  Since $L=L(T')$, it suffices to show that
  $E(G(T',\sigma))=E(G(T,\sigma))$ to prove that $(T',\sigma)$ explains
  $G(T,\sigma)$.  One easily checks that the only edges that may be
  different between both sets are those containing the leaf $x$.

  We start by showing $N_s(x)=\emptyset$ in $G(T',\sigma)$.  Observe first
  that, as we have only changed the position of vertex $x\in L[r]$ to
  obtain $(T',\sigma)$, $L[s]\subseteq L(T(v_1))$ implies
  $L[s]\subseteq L(T'(v_1))$.  By Lemma \ref{lem:2col}, we have
  $\sigma(L(T(w))) = \{r,s\}$ for all inner vertices $w\preceq_{T}v_1$ in
  $T$.  Thus, there must be a vertex $w\in(L(T(v_1)))$ that is incident to
  two leaves $x'$ and $y$ with $\sigma(x') = r$ and $\sigma(y)=s$. Since
  $\{x',y\} \subseteq \child(\parent(y))$, it follows $x'\neq x$ and thus,
  $x'$ has not been re-attached.  The latter implies that
  $\sigma(L(T'(v_1))) = \{r,s\}$ and, by construction,
  $\lca_{T'}(x,y') =\rho_{T'} \succ_{T'} v_1 \succ_{T'} \lca_{T'}(x',y')$
  for all $x'\in L(T'(v_1))\cap L[r]$ and
  $y'\in L[s]\cap L(T'(v_1)) = L[s]$.  Therefore, there is no edge between
  $x$ and any $y'\in L[s]$ in $G(T',\sigma)$.  Hence, $N_s(x) = \emptyset$
  in $G(T',\sigma)$.
			
  It remains to show that $xz\in E(G(T,\sigma))$ if and only if
  $xz \in E(G(T',\sigma))$ for this particular re-located vertex $x$ and
  all $z\in L[t]$.  Since $G(T,\sigma)$ is connected and
  $N_s(x)=\emptyset$, there must exist some vertex $z$ with color $t$ such
  that $xz\in E(G(T,\sigma))$.  Note, Lemma \ref{lem:TreeTypes}(ii) implies
  that there are no edges $xz$ for all $z\in L[t]\cap L(T(v_2))$.  That is,
  $x$ and $z\in L[t]$ form an edge $xz$ in $G(T,\sigma)$ if and only if $z$
  is incident to the root $\rho_T$ (cf.\ Lemma
  \ref{lem:TreeTypes}(ii)+(iv)).  In particular, we have by construction of
  $(T',\sigma)$ that
  $\lca_{T'}(x,z) = \rho_{T'} = \lca_{T'}(x',z) =\lca_{T'}(x,z')$ for all
  $x'\in L[r]$ and all $z'\in L[t]$ and hence, $xz\in E(G(T',\sigma))$.

  Now assume that $xz\in E(G(T',\sigma))$ for this particular re-located
  vertex $x$ and some $z\in L[t]$.  Since $x$ has been re-attached, we have
  $\lca_{T'}(x,z) =\rho_{T'}$.  By Lemma \ref{lem:one-Ruler}, none of the
  vertices in $L(T(v_2))$ has been re-attached.  Hence,
  $L(T'(v_2)) = L(T(v_2))$ and thus, $\sigma(L(T(v_2))) = \{r,t\}$.
  Moreover, we have $xz'\notin E(G(T',\sigma))$ for all
  $z'\in L[t]\cap L(T(v_2))$ since
  $\lca_{T'}(x,z') =\rho_{T'} \succ_{T'} \lca_{T'}(x',z')$ for all
  $x'\in L[r]\cap L(T(v_2))$.  Thus, $z$ must be adjacent to
  $\rho_{T'}$. By construction, $z$ must be adjacent to $\rho_{T}$.  As
  argued above, $xz$ in $G(T,\sigma)$ if and only if $z$ is incident to the
  root $\rho_T$.  Therefore, $xz\in E(G(T,\sigma))$.

  In summary, $E(G(T',\sigma))=E(G(T,\sigma))$ and hence, $(T',\sigma)$
  explains $G(T,\sigma)$.  
\end{proof}
	
\subsection{Three classes of $\sthin$-thin $3$-RBMGs}

We are now in the position to use these results to show that connected 
components of 3-RBMGs can be grouped into three disjoint graph classes
that correspond to the three tree Types \AX{(I)}, \AX{(II)}, and
\AX{(III)}.  These three classes are shown in Fig.\ \ref{fig:categories}.

\begin{definition}
  An undirected, connected graph $(G,\sigma)$ on three colors is of
  \begin{description}
  \item[\textbf{Type \AX{\bf (A)}}] if $(G,\sigma)$ contains a $K_3$ on
    three colors but no induced $P_4$, and thus also no induced $C_n$,
    $n\ge 5$.
  \item[\textbf{Type \AX{\bf (B)}}] if $(G,\sigma)$ contains an induced
    $P_4$ on three colors whose endpoints have the same color, but no
     induced $C_n$ for $n\ge 5$.
  \item[\textbf{Type \AX{\bf (C)}}] if $(G,\sigma)$ contains an induced
    $C_6$ along which the three colors appear twice in the same
    permutation, i.e., $(r,s,t,r,s,t)$.
  \end{description}
  \label{def:typesABC}
\end{definition}

\begin{theorem}\label{thm:3c-types}
  Let $(G,\sigma)$ be an $\sthin$-thin connected 3-RBMG.  Then $(G,\sigma)$
  is either of Type \AX{(A)}, \AX{(B)}, or \AX{(C)}.  An RBMG of Type
  \AX{(A)}, \AX{(B)}, and \AX{(C)}, resp., can be explained by a tree of
  Type \AX{(I)}, \AX{(II)}, and \AX{(III)}, respectively.
\end{theorem}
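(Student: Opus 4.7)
My plan is to invoke Lemma \ref{lem:2col-subtrees}, which guarantees that every $\sthin$-thin connected 3-RBMG $(G,\sigma)$ is explained by some tree $(T,\sigma)$ of Type \AX{(I)}, \AX{(II)}, or \AX{(III)}. I will show that each tree type forces the graph into one specific graph type, from which both halves of the theorem follow, together with the easy mutual exclusivity of \AX{(A)}, \AX{(B)}, \AX{(C)}: Type \AX{(A)} has no induced $P_4$, so it cannot be \AX{(B)} or \AX{(C)}; Type \AX{(B)} has no induced $C_n$ for $n\ge 5$, so it cannot be \AX{(C)}.

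Case Type \AX{(I)}: let $v\in\child(\rho_T)$ be the unique child whose subtree is 2-colored, say $\sigma(L(T(v)))=\{r,s\}$, and let $z$ be a leaf of color $t$ attached to $\rho_T$ (which exists since $(G,\sigma)$ is 3-colored and $\sthin$-thin). By Lemma \ref{lem:2col} and Lemma \ref{lem:rbm-pairs} applied inside $T(v)$, I find an inner vertex $w\preceq v$ with children $x\in L[r]$ and $y\in L[s]$; Lemma \ref{lem:TreeTypes}(i) then gives $xy\in E(G)$, and Lemma \ref{lem:TreeTypes}(iv) gives $xz,yz\in E(G)$, yielding a 3-colored $K_3$. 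To rule out induced $P_4$s I would case-split on how the four vertices of a hypothetical induced $\langle a_1a_2a_3a_4\rangle$ distribute between $L(T(v))$ and the root-leaves, using (a) that root-leaves form a clique (distinct colors by $\sthin$-thinness, same $\lca=\rho_T$), (b) that a color-$t$ root-leaf is universal to $L(T(v))$ by Lemma \ref{lem:TreeTypes}(iv), and (c) that two differently-colored vertices inside $L(T(v))$ are adjacent only if they share a parent by Lemma \ref{lem:TreeTypes}(i). In each configuration these constraints force either a chord or a missing edge.

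Case Type \AX{(II)}: by Lemma \ref{lem:tree-star} I may assume $(T,\sigma)$ is of Type \AX{(II$^*$)}, with $v_1,v_2\in\child(\rho_T)$ and $\sigma(L(T(v_i)))=\{r,s_i\}$. Using Lemma \ref{lem:TreeTypes}(i) inside each $T(v_i)$, pick parent-sharing pairs $x_i\in L[r]\cap L(T(v_i))$ and $y_i\in L[s_i]\cap L(T(v_i))$; then Lemma \ref{lem:TreeTypes}(iii) gives $y_1y_2\in E(G)$ while Lemma \ref{lem:TreeTypes}(ii) forbids $x_1x_2,x_1y_2,x_2y_1\in E(G)$, so $\langle x_1y_1y_2x_2\rangle$ is an induced $P_4$ with equally colored endpoints. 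The harder half is showing that no induced $C_n$ with $n\ge 5$ exists. I would trace any hypothetical induced cycle through the root-level partition $L(T(v_1))\cupdot L(T(v_2))\cupdot\{\text{root-leaves}\}$: edges between $L(T(v_1))$ and $L(T(v_2))$ only use the complete bipartite $(L[s_1],L[s_2])$ structure from (iii), so a cycle can switch subtrees at most twice without producing a chord, and the Type \AX{(II$^*$)} property kills the one remaining configuration in which a red root-leaf could act as a long-cycle ``bridge''.

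Case Type \AX{(III)}: with $v_1,v_2,v_3\in\child(\rho_T)$ of colors $\{r,s\}$, $\{r,t\}$, $\{s,t\}$, Lemma \ref{lem:TreeTypes}(i) applied inside each subtree yields parent-sharing edges $x_1y_1$, $x_2z_1$, $y_2z_2$. Applying Lemma \ref{lem:TreeTypes}(ii) to each pair of subtrees (with the appropriate shared color playing the role of $r$ in the lemma statement) produces exactly the three cross-edges $y_1z_1$, $x_2y_2$, $z_2x_1$ and forbids every other cross-edge among the six chosen vertices, giving the induced hexagon $\langle x_1y_1z_1x_2y_2z_2\rangle$ of the required form $(r,s,t,r,s,t)$. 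The main obstacle throughout is the no-long-cycle verification in Case \AX{(II)}: the $K_3$, $P_4$, and $C_6$ constructions all fall out of Lemma \ref{lem:TreeTypes} quickly, but ruling out induced $C_n$ with $n\ge 5$ under a Type \AX{(II)} tree demands a careful traversal argument in which the Type \AX{(II$^*$)} normalization of Lemma \ref{lem:tree-star} is indispensable.
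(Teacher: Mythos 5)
Your overall strategy coincides with the paper's: invoke Lemma~\ref{lem:2col-subtrees} to obtain an explaining tree of Type \AX{(I)}, \AX{(II)}, or \AX{(III)}, show that each tree type forces the corresponding graph type, and then use the pairwise disjointness of the graph classes and of the tree classes to get the converse assignment. Your constructions of the 3-colored $K_3$ (Type \AX{(I)}), the induced $P_4$ with equally colored endpoints (Type \AX{(II)}), and the hexagon (Type \AX{(III)}) via Lemmas~\ref{lem:2col}, \ref{lem:rbm-pairs}, and \ref{lem:TreeTypes} are essentially identical to the paper's, and your facts (a)--(c) for excluding induced $P_4$s under a Type \AX{(I)} tree are correct and sufficient (the paper's version is shorter: the unique $t$-colored leaf is a hub, so no $P_4$ contains color $t$, and a 2-colored $P_4$ would force two siblings of the same color, contradicting $\sthin$-thinness).

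The one substantive gap is exactly where you flag it: excluding induced $C_n$, $n\ge 5$, in the Type \AX{(II)} case. Your ``traversal'' argument is only asserted (``a cycle can switch subtrees at most twice without producing a chord'', ``the Type \AX{(II$^*$)} property kills the one remaining configuration''), not carried out, and your claim that the \AX{(II$^*$)} normalization of Lemma~\ref{lem:tree-star} is indispensable here is a red herring --- the paper proves this step for an arbitrary Type \AX{(II)} tree with a purely local argument that you should adopt instead. Namely: since $yz\in E(G)$ for \emph{all} $y\in L[s]$, $z\in L[t]$ (Lemma~\ref{lem:TreeTypes}(iii)), no induced $C_n$ with $n\ge 5$ lives on colors $s$ and $t$ alone; and any $x\in L[r]$ lying on such a cycle has $|N(x)|>1$, but $\sthin$-thinness together with Lemma~\ref{lem:TreeTypes}(i),(ii),(iv) forces $|N_s(x)|\le 1$ and $|N_t(x)|\le 1$ (at most one sibling of $x$ of the other subtree color, and at most one root-leaf of the missing color), so $N(x)=\{y,z\}$ with $yz\in E(G)$ --- a triangle. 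A vertex whose two cycle-neighbours are adjacent cannot lie on an induced $C_n$ with $n\ge 5$, which closes the case without any global cycle-tracing.
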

\begin{proof}
  Let $(G,\sigma)$ be an $\sthin$-thin connected 3-RBMG. If $|L|=3$, then
  $|\sigma(L)|=3$ implies that $(G,\sigma)$ is the complete graph $K_3$ on
  three colors, i.e., a graph of Type \AX{(A)}. Every phylogenetic tree on
  three leaves explains $(G,\sigma)$, and all of them except the star are
  of Type \AX{(I)}. From here on we assume $|L|>3$.  By Lemma
  \ref{lem:2col-subtrees} every connected $\sthin$-thin 3-RBMG
  $(G,\sigma)$ is explained by a tree $(T,\sigma)$ of either Type \AX{(I)},
  \AX{(II)}, or \AX{(III)}.

  \smallskip\par\noindent\textit{Claim 1.} If $(T,\sigma)$ is of Type
  \AX{(I)}, then $(G,\sigma)$is of Type \AX{(A)}.
  \par\noindent\textit{Proof of Claim 1.}
  Let $(T,\sigma)$ be a Type \AX{(I)} tree, i.e., the root $\rho_T$ has one
  child $v$ such that $\sigma(L(T(v)))=\{r,s\}$ and all other children of
  $\rho_T$ are leaves. This and $|\sigma(L)|=3$ implies that there must be
  a leaf $z\in \child(\rho_T)\cap L[t]$. Since $(G,\sigma)$ is
  $\sthin$-thin, $z$ is the only leaf of color $t$ in $(T,\sigma)$, thus
  $xz\in E(G)$ for every $x\in L[r]$ and $yz\in E(G)$ for every
  $y\in L[s]$.  In particular, Lemma \ref{lem:2col} implies that there
  exists an inner vertex $u\preceq_T v$ such that $\child(u)=\{x^*,y^*\}$
  with $x^*\in L[r]$ and $y^*\in L[s]$, thus we have $x^*y^*\in E(G)$.
  Hence, the induced subgraph $G[x^*y^*z]$ forms a $K_3$.

  It remains to show that $(G,\sigma)$ contains no induced $C_n$, $n\ge 5$,
  or $P_4$. Since $L[t]=\{z\}$ and $xz,yz\in E(G)$ for any $x\in L[r]$ and
  $y\in L[s]$, we can conclude that there cannot be any induced $P_4$, and
  thus no induced $C_n$, $n\ge 5$, either, that contains color $t$. Now
  assume, for contradiction, that there is an induced $P_4$ that contains
  the two colors $r,s$. By construction, this $P_4$ must have subsequent
  coloring $(r,s,r,s)$, thus it contains three distinct vertices
  $x,y_1,y_2$ such that $x\in L[r]$ and $y_1,y_2\in L[s]$ and $x$ is
  adjacent to $y_1$ and $y_2$. Lemma \ref{lem:TreeTypes}(i) implies that
  $\parent(y_1)=\parent(y_2)$. Hence, $N(y_1)=N(y_2)$, which contradicts
  the $\sthin$-thinness of $(G,\sigma)$. Thus, there exists no induced
  $P_4$ and thus no induced $C_n$ with $n\ge 5$ containing only two colors.

  Hence, $(G,\sigma)$ is of Type \AX{(A)}.  \hfill$\triangleleft$
		
  \smallskip\noindent\textit{Claim 2.} If $(T,\sigma)$ is of Type
  \AX{(II)}, then $(G,\sigma)$ is of Type \AX{(B)}.
  \par\noindent\textit{Proof of Claim 2.}
  Let $(T,\sigma)$ be a Type \AX{(II)} tree, i.e., the root has two
  distinct children $v_1,v_2\in \child(\rho_T)$ such that
  $\sigma(L(T(v_1)))=\{r,s\}$ and $\sigma(L(T(v_2)))=\{r,t\}$, and all
  other children of the root are leaves.
	
  We start by showing that $(G,\sigma)$ contains the particular colored
  induced $P_4$.  Lemma \ref{lem:2col} implies that there must be a leaf
  $y_1\in L(T(v_1))\cap L[s]$ such that $\parent(y_1)=\parent(x_1)$ for
  some $x_1\in L(T(v_1))\cap L[r]$ and therefore $x_1y_1\in E(G)$.
  Similarly, there exist two leaves $z_1\in L(T(v_2))\cap L[t]$ and
  $x_2\in L(T(v_2))\cap L[r]$ such that $x_2z_1\in E(G)$.  Lemma
  \ref{lem:TreeTypes}(ii) implies that $x_1z_1\notin E(G)$ and
  $x_2y_1\notin E(G)$.  Clearly, $x_1x_2\notin E(G)$ since the two vertices
  have the same color.  Moreover, Lemma \ref{lem:TreeTypes}(iii) implies
  that $y_1z_1\in E(G)$. Hence, $\langle x_1y_1z_1x_2\rangle$ forms an
  induced $P_4$ in $(G,\sigma)$ on three colors whose endpoints have the
  same color.
	
  We proceed by showing that $(G,\sigma)$ does not contain 
 	an induced $C_n$ with $n\ge 5$.  First note that Lemma
  \ref{lem:TreeTypes}(iii) implies $yz\in E(G)$ for any two leaves
  $y\in L[s]$, $z\in L[t]$. Thus, $(G,\sigma)$ cannot contain an induced
  $C_n$ for some $n\ge 5$ on colors $s$ and $t$
  only. Therefore, assume, for contradiction, that there exists an induced
  $C_n$ for some fixed $n\ge 5$ in $G(T,\sigma)$ that contains a leaf $x$
  of color $r$. Note that this necessarily implies $|N(x)|>1$ in
  $G$. Suppose first that $x\in \child(\rho_T)$. Since
  $(T(v_1),\sigma_{|L(T(v_1))})$ and $(T(v_2),\sigma_{|L(T(v_2))})$ both
  contain leaves of color $r$, any vertex that is adjacent to $x$ in $G$
  must be incident to $\rho_T$ in $T$.  Hence, as $(G,\sigma)$ is
  $\sthin$-thin, $|N(x)|>1$ in $G$ implies that
  $\child(\rho_T)\cap L =\{x,y,z\}$, where $y\in L[s]$ and $z\in L[t]$,
  i.e., we have $N(x)=\{y,z\}$. Thus, any induced $C_n$, $n\geq 5$
  containing $x$ must also contain both $y$ and $z$. However, as $x$, $y$,
  and $z$ have the same parent in $T$, they clearly form a $K_3$ in $G$; a
  contradiction to $x$, $y$, and $z$ being part of an induced $C_n$.

  Now suppose $x\in L(T(v_1))\cap L[r]$.  Since
  $(T(v_2),\sigma_{|L(T(v_2))})$ contains the colors $r$ and $t$,
  $(G,\sigma)$ cannot contain an edge $xz$ with $z\in L(T(v_2))$ (cf.\
  Lemma \ref{lem:TreeTypes}(ii)).  Hence, $N_t(x)\neq \emptyset$ if and
  only if there exists a leaf $z$ of color $t$ that is directly attached to
  the root $\rho_T$.  Since $G(T,\sigma)$ is $\sthin$-thin, there can be at
  most one leaf of color $t$ that is attached to $\rho_T$, thus
  $|N_t(x)|\le 1$. This and $|N(x)|>1$ in $G$ implies that there must be a
  leaf $y\in L[s]$ such that $y\in N_s(x)$. By Lemma
  \ref{lem:TreeTypes}(i), this is the case if and only if
  $y\in L(T(v_1))\cap L[s]$ and $\parent(x)=\parent(y)$. Since
  $G(T,\sigma)$ is $\sthin$-thin, there exists at most one leaf of color
  $s$ with this property, hence in particular $N(x)=\{y,z\}$.  Using Lemma
  \ref{lem:TreeTypes}(iv), we can conclude that $yz\in E(G)$. Thus, $x$,
  $y$ and $z$ form a $K_3$. Therefore, these three leaves cannot be
  contained together in an induced $C_n$, $n\geq 5$.

  Since an analogous argumentation holds if $x\in L(T(v_2))$, we conclude
  that there cannot be an induced $C_n$, $n\geq 5$ containing
  a leaf of color $r$.

  In summary, $G$ does not contain an induced  $C_n$ for $n\ge 5$, and thus, $(G,\sigma)$ is of Type
  \AX{(B)}.  \hfill$\triangleleft$
    
  \smallskip\noindent\textit{Claim 3.} If $(T,\sigma)$ is of Type
  \AX{(III)}, then $(G,\sigma)$ is of Type \AX{(C)}.
  \par\noindent\textit{Proof of Claim 3.}
  Let $(T,\sigma)$ be a Type \AX{(III)} tree, i.e., the root $\rho_T$ has
  three children $v_1,v_2,v_3\in \child(\rho_T)$ such that
  $\sigma(L(T(v_1)))=\{r,s\}$, $\sigma(L(T(v_2)))=\{r,t\}$, and
  $\sigma(L(T(v_3)))=\{s,t\}$, and all remaining children are
  leaves. Again, Lemma \ref{lem:2col} and Lemma \ref{lem:TreeTypes}(i)
  imply that there exist $x_1,y_1\in L(T(v_1))$ with $x_1y_1\in E(G)$,
  $x_2,z_1\in L(T(v_2))$ with $x_2z_1\in E(G)$ and $y_2,z_2\in L(T(v_3))$
  with $y_2z_2\in E(G)$, where $x_i\in L[r]$, $y_i\in L[s]$ and
  $z_i\in L[t]$. Applying Lemma \ref{lem:TreeTypes}(ii), we can in addition
  conclude that $y_1z_1,x_2y_2,x_1z_2\in E(G)$, and $(G,\sigma)$ contains
  none of the edges $x_1z_1$, $x_1y_2$, $y_1x_2$, $y_1z_2$, $z_1y_2$, or
  $x_2z_2$. Moreover, $(G,\sigma)$ does not contain edges between vertices
  of the same color.  Hence, $(G[C],\sigma_{|C})$ with
  $C=\{x_1,y_1,z_1,x_2,y_2,z_2\}$ forms the desired induced
  $C_6$. Therefore, $(G,\sigma)$ is of Type \AX{(C)}.
  \hfill$\triangleleft$

  By definition, the three classes of 3-RBMGs \AX{(A)}, \AX{(B)}, and
  \AX{(C)} are disjoint. Lemma \ref{lem:2col-subtrees} states that the
  three classes of trees \AX{(I)}, \AX{(II)}, and \AX{(III)} are disjoint,
  hence there is a 1-1 correspondence between the tree Types \AX{(I)},
  \AX{(II)}, and \AX{(III)} and the graph classes \AX{(A)}, \AX{(B)}, and
  \AX{(C)}.
  
\end{proof}

An undirected, colored graph $(G,\sigma)$ contains an induced $K_3$, $P_4$,
or $C_6$, respectively, if and only if $(G/\sthin,\sigma/\sthin)$ contains
an induced $K_3$, $P_4$, or $C_6$, resp., on the same colors (cf.\ Lemma
\ref{lem:sthin}). An immediate consequence of this fact is
\begin{theorem}\label{thm:thinABC}
  A connected (not necessarily $\sthin$-thin) 3-RBMG $(G,\sigma)$ is
  either of Type \AX{(A)}, \AX{(B)}, or \AX{(C)}.
\end{theorem}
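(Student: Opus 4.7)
The plan is to reduce to Theorem \ref{thm:3c-types} via the $\sthin$-quotient. First, by Lemma \ref{lem:Sthin-tree}, $(G/\sthin,\sigma_{/\sthin})$ is an RBMG whenever $(G,\sigma)$ is, and the color set is preserved since each $\sthin$-class is monochromatic by definition. By Lemma \ref{lem:sthin}, the quotient $(G/\sthin,\sigma_{/\sthin})$ is $\sthin$-thin and connected (since $G$ is), and $\sigma_{/\sthin}$ still uses all three colors of $S$. Hence Theorem \ref{thm:3c-types} assigns $(G/\sthin,\sigma_{/\sthin})$ to exactly one of the Types \AX{(A)}, \AX{(B)}, or \AX{(C)}.

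The key remaining step is to transfer the defining and forbidden induced subgraphs of each Type back to $(G,\sigma)$. I would establish the following slightly stronger claim: for every $H\in\{K_3,P_4,C_n:n\ge 5\}$ and every coloring $c$ of $V(H)$ by $S$, the graph $(G,\sigma)$ contains an induced copy of $H$ with coloring $c$ if and only if $(G/\sthin,\sigma_{/\sthin})$ does. The backward direction is immediate: for each vertex $[x]$ appearing in the induced $H\subseteq G/\sthin$, pick any representative $x\in [x]$; by Lemma \ref{lem:sthin} the map sending $[x]$ to $x$ preserves adjacency and color, so the chosen representatives induce a copy of $H$ in $G$ with the same coloring. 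The forward direction rests on the structural fact that in each of $K_3$, $P_4$, and $C_n$ ($n\ge 5$) no two distinct vertices share the same neighborhood \emph{within the subgraph}. Consequently, if two vertices $x\ne y$ of an induced $H\subseteq G$ were $\sthin$-equivalent, then $N_G(x)=N_G(y)$ would force identical $H$-neighborhoods, which is impossible; hence all vertices of the induced $H$ lie in pairwise distinct $\sthin$-classes, and by Lemma \ref{lem:sthin} they project to an induced copy of $H$ in $G/\sthin$ with the identical coloring.

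Once this equivalence is in hand, the theorem follows case by case: Type \AX{(A)} in $G/\sthin$ means a 3-colored $K_3$ is present and no induced $P_4$ (hence no induced $C_n$, $n\ge 5$) exists, and the equivalence transfers both conditions verbatim to $(G,\sigma)$; the arguments for Types \AX{(B)} and \AX{(C)} are completely analogous, using the $P_4$ with same-colored endpoints (resp.\ the $C_6$ of the form $(r,s,t,r,s,t)$) together with the corresponding forbidden cycle conditions.

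The main obstacle is purely bookkeeping in the forward direction: one must verify the ``no two vertices with identical neighborhoods'' property for each of $K_3$, $P_4$, and $C_n$ ($n\ge 5$). For $K_3$ on three colors it is automatic since $\sthin$-classes are monochromatic; for $P_4=\langle a_1a_2a_3a_4\rangle$ a short case analysis rules out each of the six pairs $(a_i,a_j)$; for $C_n$ with $n\ge 5$, the standard fact that consecutive vertices along an induced cycle have distinct neighborhoods suffices. No deeper machinery is required beyond Lemma \ref{lem:Sthin-tree}, Lemma \ref{lem:sthin}, and Theorem \ref{thm:3c-types}.
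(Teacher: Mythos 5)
Your proposal is correct and follows essentially the same route as the paper: the paper also deduces the theorem from Theorem \ref{thm:3c-types} applied to the quotient, using the observation (justified via Lemma \ref{lem:sthin}) that $(G,\sigma)$ contains an induced $K_3$, $P_4$, or $C_n$ on given colors if and only if $(G/\sthin,\sigma_{/\sthin})$ does. You merely spell out the details the paper leaves implicit, in particular the check that no two vertices of such an induced subgraph can lie in the same $\sthin$-class.
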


\subsection{Characterization of Type \AX{(A)} 3-RBMGs}

As an immediate consequence of Theorem \ref{thm:3c-types} and the
well-known properties of cographs \cite{Corneil:81} we obtain
\begin{fact}\label{fact:cograph}
  Let $(G,\sigma)$ be a connected, $\sthin$-thin 3-RBMG.  Then it is of
  Type \AX{(A)} if and only if it is a cograph.
\end{fact}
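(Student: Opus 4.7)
The plan is to combine the trichotomy from Theorem \ref{thm:3c-types} with the well-known characterization of cographs as $P_4$-free graphs \cite{Corneil:81}. By Theorem \ref{thm:3c-types}, every connected, $\sthin$-thin 3-RBMG $(G,\sigma)$ belongs to exactly one of the Types \AX{(A)}, \AX{(B)}, or \AX{(C)}, and these classes are defined in terms of the presence or absence of induced $P_4$s and $C_n$ with $n\geq 5$ (Def.\ \ref{def:typesABC}). Hence it suffices to verify that Type \AX{(A)} coincides with being $P_4$-free, while Types \AX{(B)} and \AX{(C)} each force the existence of an induced $P_4$.

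For the forward direction, I would simply observe that if $(G,\sigma)$ is of Type \AX{(A)}, then by definition it has no induced $P_4$, which by the classical result of \citet{Corneil:81} is precisely what characterizes cographs among undirected graphs.

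For the converse, I would argue contrapositively: if $(G,\sigma)$ is not of Type \AX{(A)}, then by Theorem \ref{thm:3c-types} it is of Type \AX{(B)} or \AX{(C)}. In the first case, Def.\ \ref{def:typesABC} gives an induced $P_4$ directly. In the second case, $(G,\sigma)$ contains an induced hexagon $C_6$ of the form $(r,s,t,r,s,t)$; any four consecutive vertices of this $C_6$ form an induced $P_4$ (this is a trivial verification on the cycle's adjacencies, since non-consecutive vertices of the $C_6$ are non-adjacent by the \emph{induced} property). In either case $(G,\sigma)$ contains an induced $P_4$ and is therefore not a cograph.

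There is no real obstacle here: the statement is essentially a bookkeeping consequence of the trichotomy in Theorem \ref{thm:3c-types} and the observation that a $C_6$ displays $P_4$s as induced subpaths. The only subtlety worth mentioning is to note explicitly that $\sthin$-thinness is used implicitly through Theorem \ref{thm:3c-types} (which gives the trichotomy only in the $\sthin$-thin setting), so the statement is genuinely specific to that hypothesis.
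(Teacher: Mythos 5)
Your proof is correct and follows exactly the route the paper intends: the paper states this Observation as an immediate consequence of Theorem \ref{thm:3c-types} together with the $P_4$-free characterization of cographs, and your write-up just makes that deduction explicit (including the easy check that an induced $C_6$ contains induced $P_4$s). Nothing is missing.
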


\begin{definition}
  Let $G=(V,E)$ be an undirected graph. A vertex $x\in V(G)$ such that
  $N(x)=V\setminus \{x\}$ is a \emph{hub-vertex}.
\label{def:hub}
\end{definition}

\begin{lemma}\label{lem:charA} 
  A properly vertex colored, connected, $\sthin$-thin graph $(G,\sigma)$ on
  three colors with vertex set $L$ is a 3-RBMG of Type \AX{(A)} if and
  only if $G \notin \mathscr{P}_3$ and it satisfies the following
  conditions:
  \begin{enumerate}
  \item[\AX{(A1)}] $G$ contains a hub-vertex $x$, i.e.,
    $N(x)=V(G)\setminus \{x\}$
  \item[\AX{(A2)}] $|N(y)|<3$ for every $y\in V(G)\setminus \{x\}$.
  \end{enumerate}
\end{lemma}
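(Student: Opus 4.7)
The proof splits into two directions. For the forward direction, Theorem~\ref{thm:3c-types} supplies a Type \AX{(I)} tree $(T,\sigma)$ explaining $(G,\sigma)$. In such a tree the root $\rho_T$ has a single inner child $v$ with $\sigma(L(T(v)))=\{r,s\}$ for two of the three colors, and the remaining color $t$ can appear only among the leaf children of $\rho_T$; by $\sthin$-thinness (Lemma~\ref{lem:2col-subtrees}) exactly one leaf $x$ of color $t$ is attached there, and $x$ is also the unique vertex of color $t$ in $L$. Since $\lca(x,y)=\rho_T$ for every $y\in L\setminus\{x\}$ and $x$ is the only leaf of color $t$, the pair $xy$ is always a reciprocal best match, so $N(x)=L\setminus\{x\}$ and \AX{(A1)} holds. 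Lemma~\ref{lem:TreeTypes}(i) shows that any $y\in L[r]$ is adjacent to some $y'\in L[s]$ only when $\parent(y)=\parent(y')$, and $\sthin$-thinness allows at most one such sibling; analogously for $y\in L[s]$. Together with $x\in N(y)$ this gives $|N(y)|\le 2$, so \AX{(A2)} holds. Finally, Type \AX{(A)} requires a rainbow $K_3$ and hence $|L|\ge 3$; in the case $|L|=3$ this $K_3$ coincides with $G$, so $G=K_3\notin\mathscr{P}_3$, while $|L|\ge 4$ excludes $G\in\mathscr{P}_3$ automatically.

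For the converse, let $x$ be a hub-vertex from \AX{(A1)} and set $t\coloneqq\sigma(x)$. Since $\sigma$ is proper and $x$ is adjacent to every other vertex, $L[t]=\{x\}$. By \AX{(A2)} every $y\in L\setminus\{x\}$ has at most one neighbor different from $x$; this neighbor, if present, must lie in the opposite color within $\{r,s\}$. Hence $H\coloneqq G[L\setminus\{x\}]$ has maximum degree one, so it decomposes into a matching $\{y_1y_1',\ldots,y_my_m'\}$ (with $\sigma(y_i)=r$, $\sigma(y_i')=s$) plus isolated vertices. Since $(G,\sigma)$ is $\sthin$-thin, the isolated part contains at most one vertex $a\in L[r]$ and at most one vertex $b\in L[s]$. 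If $m=0$ then $L[r]=\{a\}$ and $L[s]=\{b\}$ both exist because $G$ is 3-colored, so $L=\{x,a,b\}$ and $G=P_3\in\mathscr{P}_3$, which is excluded.

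For $m\ge 1$ I construct a Type \AX{(I)} tree $(T,\sigma)$ by letting $\rho_T$ have children $x$, $a$ (if present), and an inner vertex $v$; giving $v$ the children $b$ (if present) together with inner vertices $w_1,\ldots,w_m$; and placing $y_i,y_i'$ as the two children of $w_i$. If this would leave $v$ with a single child, which happens only when $m=1$ and $b$ is absent, I suppress $v$ and promote $w_1$ to its role. A direct best-match calculation confirms $G(T,\sigma)=G$: the uniqueness of $x$ in color $t$ makes $x$ adjacent to every other leaf; each pair $(y_i,y_i')$ is minimally separated at $w_i$, hence forms a unique reciprocal best match across $\{r,s\}$ and is exclusively matched; and the asymmetric placement of $a$ above $v$ and $b$ below $v$ yields $\lca(b,y_i)=v\prec\rho_T=\lca(b,a)$, so $b$'s best $r$-match is some $y_i$ and not $a$, ruling out $ab$ and, by the same depth comparison, also $ay_i'$ and $by_i$. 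Since $(T,\sigma)$ is of Type \AX{(I)}, Theorem~\ref{thm:3c-types} concludes that $(G,\sigma)$ is a 3-RBMG of Type \AX{(A)}.

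The main obstacle is the placement of the unmatched vertices $a$ and $b$. Placing both as children of $\rho_T$, or both below $v$, would leave the ties $\lca(a,b)=\lca(a,y_i')=\lca(b,y_i)$ unbroken at the relevant level, whence $a$ and $b$ would become reciprocal best matches and the spurious edge $ab$ would appear in $G(T,\sigma)$. The asymmetric placement with exactly one of $a,b$ strictly below $v$ is precisely what breaks this tie in the required direction; once this is seen, all remaining verifications are routine comparisons of last common ancestors.
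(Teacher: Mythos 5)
Your proof is correct and takes essentially the same route as the paper's: the forward direction reads \AX{(A1)} and \AX{(A2)} off a Type \AX{(I)} tree supplied by Theorem~\ref{thm:3c-types}, and the converse rebuilds a Type \AX{(I)} tree from the hub-plus-matching structure forced by \AX{(A1)}, \AX{(A2)} and $\sthin$-thinness, placing the (at most two) degree-one vertices asymmetrically exactly as the paper does in its case distinction. The only minor imprecision is in the forward direction: for a vertex $y$ of color $r$ or $s$ attached directly to the root, the bound $|N(y)|\le 2$ requires Lemma~\ref{lem:TreeTypes}(iv) together with the fact that at most one leaf of each color hangs off the root, since Lemma~\ref{lem:TreeTypes}(i) only covers $y\in L(T(v))$; the conclusion is unaffected.
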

\begin{proof}
  By definition, a 3-RBMG is properly colored and has $|L|\ge 3$
  vertices. If $|L|=3$ there are only two connected graphs: $K_3$ and
  $P_3$. Both satisfy \AX{(A1)} and \AX{(A2)} since the three vertices have
  distinct colors. However, only $K_3$ is a 3-RBMG: it is explained by any
  tree on three leaves with pairwise distinct colors.  From here on we
  assume $|L|\ge 4$.

  We start with the ``only-if-direction'' and show that every 3-RBMG of
  Type \AX{(A)} satisfies \AX{(A1)} and \AX{(A2)}. We set $S=\{r,s,t\}$,
  and assume that $(G,\sigma)$ is a 3-RBMG of Type \AX{(A)}. Theorem
  \ref{thm:3c-types} implies that there exists a tree $(T,\sigma)$ with
  root $\rho_T$ explaining $(G,\sigma)$ that is of Type \AX{(I)}, i.e.,
  there is a vertex $v\in\child(\rho_T)$ such that
  $\sigma(L(T(v)))=\{s,t\}$ and $\child(\rho_T)\setminus \{v\}\subset
  L$. Thus every leaf $x$ with color $\sigma(x)=r$ is a child of
  $\rho_T$. Since $(G,\sigma)$ is $\sthin$-thin, this implies $|L[r]|=1$
  and therefore, $xy\in E(G)$ for every $y\neq x$, hence \AX{(A1)} is
  satisfied.  In order to show \AX{(A2)}, consider
  $y\in V(G)\setminus \{x\}$, where $x$ is again the unique vertex with
  color $r$.  Since $(G,\sigma)$ is properly colored, $\sigma(y)\neq
  r$. W.l.o.g., let $\sigma(y)=s$. Assume, for contradiction, that
  $|N(y)|\geq 3$. Then there are at least two distinct vertices
  $z,z'\in N_t(y)$. Assume first that $y\in L(T(v))$. Hence, there is
  $z^*\in L[t]$ with $\lca(y,z^*)\preceq v \prec \rho_T$. Hence, we must
  have $z,z'\in L(T(v))$. However, Lemma \ref{lem:TreeTypes}(i) implies
  that $z$ and $z'$ must be siblings and therefore $N(z)=N(z')$; a
  contradiction, since $(G,\sigma)$ was assumed to be $\sthin$-thin.  Now
  assume that $y\in \child(\rho_T)$.  Then, Lemma \ref{lem:TreeTypes}(iv)
  and $z,z'\in N_t(y)$ imply that $z$ and $z'$ both have to be adjacent to
  $\rho_T$; again this contradicts the assumption that $(G,\sigma)$ is
  $\sthin$-thin.  Thus, \AX{(A2)} is satisfied.
 
  We proceed with showing the ``if-direction''. Suppose $(G,\sigma)$ is a
  properly vertex colored, connected, $\sthin$-thin graph satisfying
  \AX{(A1)} and \AX{(A2)}.  In order to show that $(G,\sigma)$ is a Type
  \AX{(A)} RBMG it suffices, by Theorem \ref{thm:3c-types}, to construct a
  Type \AX{(I)} tree that explains $(G,\sigma)$.  Let $x$ be a vertex that
  is adjacent to all others, which exists by $\AX{(A1)}$. Assume w.l.o.g.\
  that $\sigma(x)=r$. Since $(G,\sigma)$ is $\sthin$-thin and it does not
  contain edges between vertices of the same color, $x$ must be the only
  vertex of color $r$. We define $L_2\coloneqq \{y \mid y\neq x, |N(y)|=2\}$.
  Since $|L|>3$ and thus $|N(x)|\ge 3$, we have $x\in L\setminus L_2$.
  Note, each vertex is adjacent to $x$ and thus, $N(y) = \{x,z\}$ for all
  $y\in L_2$ and some vertex $z\in L[t]$, $t\neq \sigma(y)$.  Property
  \AX{(A2)} implies that there are $|L\setminus L_2|-1$ vertices with
  degree $1$, all incident to $x$. Since $(G,\sigma)$ is $\sthin$-thin and
  $|S|=3$, there are at most two vertices with degree $1$, at most one of
  each color different from $\sigma(x)$, and thus $|L\setminus L_2|\le 3$.

  We first construct a caterpillar $(T_2,\sigma_{|L_2})$ with leaf set $L_2$
  and root $\rho_{T_2}$ such that $\parent(y)=\parent(z)$ for any
  $y,z\in L_2$ with $\sigma(y)\neq \sigma(z)$ if only if $yz\in E(G)$.  As
  $N(y) = \{x,z\}$ for all $y\in L_2$ and some vertex $z\in L[t]$,
  $t\neq \sigma(y)$, we can conclude that each connected component of
  $(G[L_2], \sigma_{|L_2})$ is a single edge $yz$.  Thus, it is easy to see
  that $(T_2,\sigma_{|L_2})$ explains $(G[L_2], \sigma_{|L_2})$.
	
  For the construction of $(T,\sigma)$, we then distinguish two cases: (i)
  If $L\setminus L_2=\{x,w_1\}$, then $(T,\sigma)$ is obtained by attaching
  the vertices $x$ and $w_1$ as well as $\rho_{T_2}$ as children of the
  root $\rho_T$. (ii) If $L\setminus L_2 = \{x,w_1,w_2\}$ for distinct
  vertices $x$, $w_1$, and $w_2$ in $(G,\sigma)$, we first build an
  auxiliary tree $(T',\sigma_{|L_2\cup \{w_2\}})$ with root $\rho_{T'}$ by
  attaching $\rho_{T_2}$ and vertex $w_2$ to $\rho_{T'}$. The tree
  $(T,\sigma)$ is then constructed from $(T',\sigma_{|L_2\cup \{w_2\}})$ by
  attaching $x$, the other vertex $w_1$ and $\rho_{T'}$ as children of
  $\rho_T$. It remains to show that $(T,\sigma)$ explains $(G,\sigma)$.  By
  construction, $(T,\sigma)$ is a tree of Type \AX{(I)} where the vertices
  $\rho_{T_2}$ and $\rho_{T'}$ play the role of $v$ in Def.\
  \ref{def:typeiistar} in Case (i) and (ii), respectively.  In the
  following let $v=\rho_{T_2}$ or $v= \rho_{T'}$ depending on whether we
  have Case (i) and (ii).
  
  Theorem \ref{thm:3c-types} implies that $G(T,\sigma)$ is 3-RBMG of Type
  \AX{(A)}.  It is easy to see that
  $G(T,\sigma)[L_2] = (G[L_2], \sigma_{|L_2})$.  Any remaining edges in
  $G(T,\sigma)$ are thus adjacent to vertices in $L\setminus L_2$.
  We first consider edges that may be incident to vertex $x$ in
  $G(T,\sigma)$.  Since $\lca_T(z, x) = \rho_T$ and
  $r\notin \sigma(L(T(v)))$, we have $xz \in E(G(T,\sigma))$ for all
  $z\in L\setminus \{x\}$. Hence, \AX{(A1)} is satisfied by $x$ in
  $G(T,\sigma)$.
	 
  Now, consider edges that may be incident to vertex $w_1$ in
  $G(T,\sigma)$.  First note, that in both Cases (i) and (ii), the vertex
  $w_1$ is adjacent to the root $\rho_T$ in $(T,\sigma)$.  Since
  $(T,\sigma)$ is a tree of Type \AX{(I)}, we can apply Lemma
  \ref{lem:TreeTypes}(i) to conclude that there are no edges in
  $G(T,\sigma)$ between $w_1$ and any vertex in
  $L(T(v))$. 
  This and the arguments above show that
  $N(w_1)=\{x\}$.  In other words, $w_1z \in
  E(G(T,\sigma))$ if and only if $w_1z \in E(G)$ for all $z\in
  L$. This in particular shows
  $(G,\sigma)=G(T,\sigma)$ conforms to Case (i).

  Finally, assume Case (ii) and consider edges that are incident to vertex
  $w_2$ in $(G,\sigma)$ By construction, $s,t\in
  \sigma(L_2)$. Since $\lca(w_2,z) =v \succ_T \rho_{T_2} \succeq
  \lca(w',z)$ for all $w',z\in L_2$ with $\sigma(w_2)=\sigma(w) \neq
  \sigma(z)$, we can conclude that
  $w_2$ is not adjacent to any other vertex in
  $L_2$.  This and the arguments above show that
  $N(w_2)=\{x\}$.  Therefore, $w_2z \in
  E(G(T,\sigma))$ if and only if $w_1z \in E(G)$ for all $z\in L$.
	
  In summary, $G(T,\sigma)=(G,\sigma)$. Therefore, $(G,\sigma)$ is of Type
  \AX{(A)}.
  
\end{proof}

For later reference, we record a simple property of hub-vertices.
\begin{corollary}\label{cor:hub-vertex}
  Let $x$ be a hub-vertex of some connected $\sthin$-thin 3-RBMG
  $(G,\sigma)$ of Type \AX{(A)} with vertex set $L$ and $|L|>3$. Then, $x$
  is the only vertex of its color in $(G,\sigma)$, i.e.,
  $L[\sigma(x)]= \{x\}$. Moreover, for any $(T,\sigma)$ explaining
  $(G,\sigma)$, $x$ must be incident to the root of $T$.
\end{corollary}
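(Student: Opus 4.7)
The plan is to handle the two claims in turn. The first claim is immediate: by Observation~\ref{obs:proper} the coloring $\sigma$ of any RBMG is proper, and since $x$ is a hub-vertex it is adjacent to every other vertex, so no $x' \ne x$ can satisfy $\sigma(x') = \sigma(x)$. Hence $L[\sigma(x)] = \{x\}$.

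For the second claim, fix any tree $(T,\sigma)$ explaining $(G,\sigma)$, write $r = \sigma(x)$, $S = \{r,s,t\}$, and argue by contradiction assuming $p \coloneqq \parent(x) \ne \rho_T$. The key observation I would use is that, because $x$ is adjacent to every $y \in L[s] \cup L[t]$, each such $y$ must be a best match of $x$, which forces $\lca(x,y)$ to take a common value $\hat{u}_c$ as $y$ ranges over $L[c]$, $c \in \{s,t\}$. Splitting $L$ at $L(T(p))$, the leaves $y \in L(T(p)) \setminus \{x\}$ satisfy $\lca(x,y) = p$, whereas $y \notin L(T(p))$ gives $\lca(x,y) \succ p$. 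Consequently no color in $\{s,t\}$ can appear on both sides; since $p$ is internal (it has at least one child besides $x$) and $p \ne \rho_T$, both sides are nonempty, so up to relabeling I may assume $L[s] \subseteq L(T(p))$ and $L[t] \cap L(T(p)) = \emptyset$.

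Working inside $T(p)$, which uses only colors $r$ and $s$ with $L[r] \cap L(T(p)) = \{x\}$, Lemma~\ref{lem:2col} forces every inner vertex $v \in V^0(T(p))$ to satisfy $x \in L(T(v))$, since otherwise $L(T(v)) \subseteq L[s]$ would be monochromatic. Hence the inner vertices of $T(p)$ form a chain between $p$ and $\parent(x)$. The constancy of $\lca(x,y)$ for $y \in L[s]$ pins all such $y$'s to a single level of the chain, and the phylogenetic degree constraints at $p$ and at $\parent(x)$ (each a non-root inner vertex needs at least two children) then force this level to be both $p$ and $\parent(x)$, so the chain reduces to the single vertex $p$. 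Finally, $\sthin$-thinness forbids two distinct siblings of $x$ of color $s$, giving $|L[s]| = 1$.

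Outside $T(p)$, I would argue that $\hat{u}_t = \rho_T$ and $p$ is a direct child of $\rho_T$: any strict ancestor $\hat{u}_t \prec \rho_T$ would leave leaves outside $L(T(\hat{u}_t))$ with no admissible color (since $L[r], L[s] \subseteq L(T(p))$ and $L[t]$ is confined to $L(T(\hat{u}_t))$ minus the subtree rooted at the child of $\hat{u}_t$ on the path to $x$), and the same accounting shows that the child of $\rho_T$ on the path to $x$ is $p$ itself. Lemma~\ref{lem:2col} then rules out any internal sibling of $p$ at $\rho_T$, whose subtree would be monochromatic of color $t$, so the remaining children of $\rho_T$ are leaves of color $t$; $\sthin$-thinness allows at most one such. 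Combining, $|L[r]| = |L[s]| = |L[t]| = 1$ and hence $|L| = 3$, contradicting $|L| > 3$. The main obstacle will be the bookkeeping needed to collapse the chain inside $T(p)$ to a single vertex; the rest is a careful application of Lemma~\ref{lem:2col}, $\sthin$-thinness, and the constancy of $\lca(x,y)$ over each single-colored set.
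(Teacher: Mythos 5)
Your proof is correct and takes essentially the same route as the paper's: assume $x$ is not attached to the root, then use Lemma~\ref{lem:2col} (every inner subtree carries at least two colors) together with $\sthin$-thinness and the hub property to force each of the three color classes to be a singleton, contradicting $|L|>3$. The only cosmetic oddity is the ``chain between $p$ and $\parent(x)$'', which is vacuous since $p=\parent(x)$ by your own definition, so $V^0(T(p))=\{p\}$ immediately and no collapsing argument is needed there.
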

\begin{proof}
  Since $(G,\sigma)$ does not contain edges between vertices of the same
  color, the first statement immediately follows from Property \AX{(A1)} and
  $\sthin$-thinness of $(G,\sigma)$.  
  
  For the second statement, let $(T,\sigma)$ be an arbitrary tree with root
  $\rho_T$ that explains $(G,\sigma)$. Let $v\in \child(\rho_T)$ with
  $x\preceq_T v$. Assume, for contradiction, $v\neq x$. Thus Lemma
  \ref{lem:2col} implies that there exists a leaf $y\in L$ with
  $\sigma(y)\neq \sigma(x)$ such that $y\preceq_T v$. Then, since $x$ is
  connected to any vertex in $L\setminus \{x\}$, all vertices of color
  $\sigma(y)$ must be contained in the subtree $T(v)$; otherwise
  $\lca_T(x,y)\prec_T \lca_T(x,y')=\rho_T$ for some vertex
  $y'\in L[\sigma(y)]$, $y'\neq y$, which yields a contradiction to
  $xy'\in E(G)$. As $(T,\sigma)$ is phylogenetic, the root $\rho_T$ has at
  least two different children, i.e., there is some $w\in \child(\rho_T)$,
  $w\neq v$. Let $r\neq \sigma(x),\sigma(y)$ be the third color in
  $(G,\sigma)$. We already argued
  $\sigma(x),\sigma(y)\notin\sigma(L(T(w)))$, thus
  $\sigma(L(T(w)))=\{r\}$. In particular, since $(G,\sigma)$ is
  $\sthin$-thin, Lemma \ref{lem:2col} implies that $w$ must be a
  leaf. Since $(G,\sigma)$ is connected, we can apply the same arguments as
  for $L[\sigma(y)]$ to conclude that $r\notin \sigma(L(T(v)))$, thus
  $|L[r]|=1$. Since $x$ is the only leaf of its color in $(T,\sigma)$ and
  $\sigma(L(T(v)))=\{\sigma(x),\sigma(y)\}$, we can again apply Lemma
  \ref{lem:2col} to conclude that $|L[\sigma(y)]|=1$. In summary, we have
  therefore shown $|L|=3$; a contradiction. Hence, $x$ must be incident to
  $\rho_T$.  
\end{proof}

\begin{lemma}
  Let $(G,\sigma)$ be an $\sthin$-thin graph satisfying \AX{(A1)} and
  \AX{(A2)}. Then $G$ is a cograph.
\end{lemma}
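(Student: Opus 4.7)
The plan is to show directly that no induced $P_4$ can exist in such a graph, using the cograph characterization that a graph is a cograph if and only if it is $P_4$-free \cite{Corneil:81}. The argument will be a short case analysis built on the interaction between the hub-vertex $x$ from \AX{(A1)} and the degree bound from \AX{(A2)}.

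I will argue by contradiction: suppose there is an induced path $\langle a_1 a_2 a_3 a_4\rangle$ in $G$. The first step is to rule out the possibility that the hub-vertex $x$ is one of the $a_i$. Since $N(x)=V(G)\setminus\{x\}$, the vertex $x$ is adjacent to every other vertex, whereas any induced $P_4$ contains two non-adjacent pairs. Explicitly, if $x=a_1$, then $x a_3$ would have to be an edge, contradicting the fact that $\langle a_1a_2a_3a_4\rangle$ is induced; the cases $x=a_2$, $x=a_3$, $x=a_4$ are handled identically. Hence $x \notin \{a_1,a_2,a_3,a_4\}$.

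With $x$ outside the $P_4$, the hub property forces $x$ to be adjacent to each of $a_1,a_2,a_3,a_4$. In particular, the interior vertex $a_2$ has $x$, $a_1$, and $a_3$ as three pairwise distinct neighbors, so $|N(a_2)|\ge 3$. But $a_2 \in V(G)\setminus\{x\}$, so \AX{(A2)} requires $|N(a_2)|<3$, a contradiction. Therefore $G$ contains no induced $P_4$ and is a cograph.

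There is no real obstacle here; the proof is essentially a two-line case distinction, and the $\sthin$-thinness hypothesis is not even needed for this direction (it was only used in Lemma \ref{lem:charA} to pin down the structure of $(G,\sigma)$ as a 3-RBMG). The lemma could in fact be stated slightly more generally, but as formulated it suffices to combine the hub-vertex with the degree cap to exclude induced four-paths.
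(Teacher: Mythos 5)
Your proof is correct, but it takes a different route from the paper. You use the forbidden-subgraph characterization (cograph $\iff$ $P_4$-free) and kill any candidate induced $P_4$ in two steps: the hub $x$ cannot lie on it because a $P_4$ contains non-adjacent pairs, and once $x$ is outside, the interior vertex $a_2$ acquires the three distinct neighbors $x$, $a_1$, $a_3$, violating \AX{(A2)}. The paper instead builds the cograph constructively: \AX{(A1)} gives $G = G' \join K_1$ with the $K_1$ being the hub, \AX{(A2)} forces every vertex of $G'$ to have degree at most $1$ (each non-hub vertex already spends one neighbor on $x$), so $G'$ is a disjoint union of $K_1$s and $K_2$s and $G$ is a cograph by the recursive join/union definition; $\sthin$-thinness is invoked only to bound the number of isolated vertices in $G'$ by two, which is extra structural information not needed for the cograph conclusion. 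Your observation that $\sthin$-thinness is dispensable for this direction is therefore accurate for both arguments. What each approach buys: yours is shorter and marginally more general, while the paper's decomposition delivers an explicit cotree-like description $G = \left((\bigcupdot^{n_1} K_1)\cupdot(\bigcupdot^{n_2} K_2)\right)\join K_1$ that feeds naturally into the later treatment of \hc-cographs and tree reconstruction.
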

\begin{proof}
  Since $G$ contains a hub-vertex by \AX{(A1)}, it can be written as join
  $G'\join K_1$, where the $K_1$ corresponds to the hub-vertex. As a
  consequence of \AX{(A2)}, $(G',\sigma)$ is a 2-colored graph with vertex
  degree at most $1$. The number of isolated vertices in $G'$ cannot exceed
  $2$, one of each color, since otherwise two vertices that are isolated in
  $G'$ would have the same color and thus share the hub as their only
  neighbor in $G$, contradicting $\sthin$-thinness of $(G,\sigma)$. Hence,
  $G'$ is the disjoint union of an arbitrary number of $K_2$ and at most
  two copies of $K_1$:
  $G = \left((\bigcupdot^{n_1} K_1)\cup (\bigcupdot^{n_2} K_2)\right)\join
  K_1$ with $0\le n_1\le 2$ and $n_2\ge 0$. Thus $G$ is a cograph
  \cite{Corneil:81}.  
\end{proof}

\subsection{Characterization of Type \AX{(B)} 3-RBMGs}

\begin{definition}
  Let $(G,\sigma)$ be an undirected, connected, properly colored,
  $\sthin$-thin graph with vertex set $L$ and color set
  $\sigma(L)=\{r,s,t\}$, and assume that $(G,\sigma)$ contains the induced
  path $P\coloneqq \langle \hat x_1 \hat y \hat z \hat x_2 \rangle$ with
  $\sigma(\hat x_1)=\sigma(\hat x_2)=r$, $\sigma(\hat y)=s$, and
  $\sigma(\hat z)=t$. Then $(G,\sigma)$ is \emph{B-like w.r.t.\ $P$} if (i)
  $N_{r}(\hat y)\cap N_{r}(\hat z)=\emptyset$, and (ii) $G$ does not
  contain an induced cycle $C_n$, $n\ge 5$.
\label{def:Ltsr}
\end{definition}

For a $3$-colored, $\sthin$-thin graph $(G,\sigma)$ that is B-like w.r.t.\
the induced path
$P\coloneqq \langle \hat x_1 \hat y \hat z \hat x_2 \rangle$ we define the
following subsets of vertices:
\begin{align*} 	
  L_{t,s}^{P} \coloneqq & \{y \mid  \langle xy\hat z\rangle \in
                          \mathscr{P}_3 \text{ for any } x\in N_{r}(y)\}\\
  L_{t,r}^{P} \coloneqq & \{x\mid N_{r}(y)=\{x\} \text{ and }
                          \langle xy\hat z\rangle \in \mathscr{P}_3\}
                          \cup \\
                        & \{x\mid x\in L[r],\, N_{s}(x)=\emptyset,\,
                          L[s]\setminus L_{t,s}^P\neq\emptyset\}\\
  L_{s,t}^{P} \coloneqq & \{z \mid \langle xz\hat y\rangle \in \mathscr{P}_3
                          \text{ for any } x\in N_{r}(z)\} \\ 
  L_{s,r}^{P} \coloneqq & \{x\mid N_{r}(z)=\{x\} \text{ and }
                            xz\hat y \in \mathscr{P}_3\} \cup \\
                        & \{x\mid x\in L[r], N_{t}(x)=\emptyset,
                          L[t]\setminus L_{s,t}^P\neq\emptyset\}\\
\end{align*}
The first subscripts $t$ and $s$ refer to the color of the vertices
$\hat z$ and $\hat y$, respectively, that ``anchor'' the $P_3$s within the
defining path $P$. The second index identifies the color of the vertices in
the respective set, since by definition we have
$L_{t,s}^{P}\subseteq L[s]$, $L_{t,r}^{P}\subseteq L[r]$,
$L_{s,t}^{P}\subseteq L[t]$ and $L_{s,r}^{P}\subseteq L[r]$. Furthermore,
we set
\begin{align*}
  L_{t}^{P} \coloneqq & L_{t,s}^{P} \cup L_{t,r}^{P} \\
  L_{s}^{P} \coloneqq & L_{s,t}^{P} \cup L_{s,r}^{P} \\
  L_*^P     \coloneqq & L \setminus (L_t^P \cup L_s^P). 
\end{align*}
By definition, $L_{s,r}^P=L_s^P\cap L[r]$, $L_{t,r}^P=L_t^P\cap L[r]$,
$L_{s,t}^P=L_s^P\cap L[t]$, and $L_{t,s}^P=L_t^P\cap L[s]$. For simplicity
we will often write $L_*^P[i] \coloneqq L_*^P\cap L[i]$ for $i\in \{s,t\}$.

These vertex sets arise naturally from trees of Type \AX{(II$^*$)}:
\begin{lemma}\label{lem:L^B-tree}
  Let $(G,\sigma)$ be a connected $\sthin$-thin 3-RBMG of Type \AX{(B)}
  with vertex set $L$ and color set $S=\{r,s,t\}$. Then, the colors can be
  permuted such that there are $\hat x_1, \hat x_2\in L[r]$,
  $\hat y\in L[s]$, $\hat z \in L[t]$ such that $(G,\sigma)$ is B-like
  w.r.t.\ $P= \langle\hat x_1\hat y\hat z \hat x_2 \rangle$. Moreover,
  there exists a tree $(T,\sigma)$ of Type \AX{(II$^*$)} explaining
  $(G,\sigma)$ such that
  \begin{itemize}
  \item[(i)] $L_t^P=L(T(v_1))$ and $L_s^P=L(T(v_2))$ for
    $v_1,v_2\in\child(\rho_T)\setminus L$, and
  \item[(ii)] $L_*^P=\child(\rho_T)\cap L$.
  \end{itemize}
\end{lemma}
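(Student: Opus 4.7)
The plan is to exhibit the tree $(T,\sigma)$ directly from the classification in Theorem~\ref{thm:3c-types} and then match the combinatorial sets $L_t^P, L_s^P, L_*^P$ to subtree leaf sets using Lemma~\ref{lem:TreeTypes}. First, since $(G,\sigma)$ is of Type~\AX{(B)}, Theorem~\ref{thm:3c-types} yields a tree of Type~\AX{(II)} explaining it, and Lemma~\ref{lem:tree-star} allows us to refine this to a tree $(T,\sigma)$ of Type~\AX{(II$^*$)}. After possibly permuting the colors, we label the two non-leaf children of $\rho_T$ as $v_1, v_2$ with $\sigma(L(T(v_1)))=\{r,s\}$ and $\sigma(L(T(v_2)))=\{r,t\}$; this fixes the color permutation used throughout.

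Second, I would construct the defining path $P$. Applying Lemma~\ref{lem:2col} to each of $T(v_1), T(v_2)$, together with Lemma~\ref{lem:TreeTypes}(i), yields siblings $\hat x_1 \in L[r]\cap L(T(v_1))$, $\hat y \in L[s]\cap L(T(v_1))$ with $\hat x_1\hat y \in E(G)$, and similarly $\hat x_2 \in L[r]\cap L(T(v_2))$, $\hat z\in L[t]\cap L(T(v_2))$ with $\hat x_2\hat z\in E(G)$. Lemma~\ref{lem:TreeTypes}(ii,iii) then gives $\hat y\hat z \in E(G)$ while $\hat x_1\hat z, \hat y\hat x_2 \notin E(G)$, so $\langle \hat x_1\hat y\hat z\hat x_2\rangle$ is an induced $P_4$ of the required form. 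To verify the B-like property w.r.t.\ $P$, note that by Lemma~\ref{lem:TreeTypes}(i), $N_r(\hat y)\subseteq L(T(v_1))$ and $N_r(\hat z)\subseteq L(T(v_2))$, so these sets are disjoint, and the absence of induced $C_n$, $n\ge 5$, is built into the definition of Type~\AX{(B)}.

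Third, I match the four defining sets to subtree leaf sets. By Lemma~\ref{lem:TreeTypes}(iii), every $y\in L[s]$ satisfies $y\hat z\in E(G)$, so $y\in L_{t,s}^P$ iff no $x\in N_r(y)$ is adjacent to $\hat z$. For $y\in L[s]\cap L(T(v_1))$, all $r$-neighbors of $y$ lie in $L(T(v_1))$ by Lemma~\ref{lem:TreeTypes}(i), and by Lemma~\ref{lem:TreeTypes}(ii) none are adjacent to $\hat z\in L(T(v_2))$, so $y\in L_{t,s}^P$. For $y\in L[s]\setminus L(T(v_1))$ (necessarily a leaf-child of $\rho_T$), Lemma~\ref{lem:TreeTypes}(iv) gives $\hat x_2\in N_r(y)$ while $\hat x_2\hat z\in E(G)$, so $y\notin L_{t,s}^P$. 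Hence $L_{t,s}^P = L[s]\cap L(T(v_1))$. An analogous two-case analysis, invoking Property~\AX{($\star$)} of Type~\AX{(II$^*$)} precisely for those $r$-leaves of $T(v_1)$ without an $s$-sibling, will show $L_{t,r}^P = L[r]\cap L(T(v_1))$; together this gives $L_t^P = L(T(v_1))$, and the same reasoning with the roles of $(v_1,s)$ and $(v_2,t)$ swapped gives $L_s^P = L(T(v_2))$. Finally, $L_*^P = L\setminus(L(T(v_1))\cup L(T(v_2))) = \child(\rho_T)\cap L$, establishing (ii).

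The main technical obstacle will be the second clause in the definition of $L_{t,r}^P$: showing that an $r$-leaf $x\in L(T(v_1))$ without any $s$-sibling lies in $L_{t,r}^P$ via $N_s(x)=\emptyset$ and $L[s]\setminus L_{t,s}^P\ne\emptyset$. The equality $N_s(x)=\emptyset$ follows from Lemma~\ref{lem:TreeTypes}(i,iv) together with the $\sthin$-thinness of $(G,\sigma)$, which rules out any $s$-neighbor in either subtree or at the root. The non-emptiness of $L[s]\setminus L_{t,s}^P$ is exactly the content of Property~\AX{($\star$)} for the Type~\AX{(II$^*$)} tree, which guarantees a root-adjacent $s$-leaf whenever such a solitary $r$-leaf exists; this is the reason the stronger Type~\AX{(II$^*$)} (rather than merely Type~\AX{(II)}) is needed in the statement.
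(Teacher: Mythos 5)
Your proposal is correct and follows essentially the same route as the paper's proof: obtain a Type \AX{(II$^*$)} tree via Theorem~\ref{thm:3c-types} and Lemma~\ref{lem:tree-star}, extract the defining $P_4$ from sibling pairs in the two $2$-colored subtrees, verify B-likeness via Lemma~\ref{lem:TreeTypes}(i), and then identify $L_t^P$ and $L_s^P$ with the subtree leaf sets by a color-by-color double inclusion, with Property~\AX{($\star$)} invoked exactly where you say, for the solitary $r$-leaves. The only place your sketch is lighter than the paper is the reverse containment $L_{t,r}^P\subseteq L(T(v_1))$ (ruling out that the witness $y$ with $N_r(y)=\{x\}$ sits at the root, which forces $x=\hat x_2$ and contradicts the $P_3$), but this is a routine completion of the two-case analysis you outline, not a gap in the approach.
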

\begin{proof}
  Let $(G,\sigma)$ be a connected, $\sthin$-thin 3-RBMG of Type
  \AX{(B)}. Then, by Lemmas \ref{lem:2col-subtrees} and
  \ref{lem:tree-star}, there is a tree $(T,\sigma)$ with root $\rho_T$
  explaining $(G,\sigma)$ that is of Type \AX{(II$^*$)}. In particular, the
  colors can be chosen such that there are $v_1,v_2\in\child(\rho_T)$ with
  $\sigma(L(T(v_1)))=\{r,s\}$, $\sigma(L(T(v_2)))=\{r,t\}$, and
  $\child(\rho_T)\setminus \{v_1,v_2\}\subset L$. Applying the same
  argumentation as in the proof of Thm.\ \ref{thm:3c-types} (Claim 2), we
  conclude that there are leaves $\hat x_1,\hat y \prec_T v_1$ and
  $\hat x_2,\hat z\prec_T v_2$, where $\hat x_1,\hat x_2\in L[r]$,
  $\hat y\in L[s]$, $\hat z\in L[t]$, such that
  $\langle \hat x_1\hat y\hat z\hat x_2\rangle$ is an induced $P_4$ in $G$.
  By $\sthin$-thinness of $(G,\sigma)$ and Lemma \ref{lem:TreeTypes}(i), we
  have $N_r(\hat y)=\{\hat x_1\}$ and $N_r(\hat z)=\{\hat x_2\}$, and thus
  $N_r(\hat y)\cap N_r(\hat z)=\emptyset$. Since 3-RBMGs of Type \AX{(B)}
  do not contain induced cycles on more than six vertices, $(G,\sigma)$ is
  B-like w.r.t.\ $\langle\hat x_1\hat y\hat z\hat x_2\rangle$. It remains
  to show Properties (i) and (ii).

  In what follows, we put for simplicity $L_1\coloneqq L_t^P$.  To
  establish Property (i) we treat vertices of colors $s$ and $r$
  separately.  Consider $y\in L[s]$. We first show that $y\in L(T(v_1))$
  implies $y\in L_1$. Clearly, since $L(T(v_1))$ contains leaves of color
  $r$, any $x\in N_r(y)$ must satisfy $x\prec_T v_1$. Lemma
  \ref{lem:TreeTypes}(i) implies that $x\in N_r(y)$ if and only if
  $\parent(x)=\parent(y)$. Therefore $|N_r(y)|\le 1$ because $(G,\sigma)$
  is $\sthin$-thin. If $N_r(y)=\emptyset$, then $y\in L_1$ by
  definition. Otherwise, $N_r(y)=\{x\}$ with $x\prec_T v_1$. By
  construction of $(T,\sigma)$, there is a leaf $x'\prec_T v_2$ of color
  $r$; this implies $\lca(\hat z,x)\succ_T\lca(\hat z,x')$ and hence
  $x\hat z\notin E(G)$. Since $y\hat z\in E(G)$ by Lemma
  \ref{lem:TreeTypes}(iii), we have
  $\langle xy\hat z \rangle \in \mathscr{P}_3$ and thus, $y\in L_1$. Hence,
  $L(T(v_1))\cap L[s] \subseteq L_1\cap L[s]$ as claimed.  We now show that
  $y\in L_1$ implies $y\in L(T(v_1))$. To this end, consider
  $y\in L_1\cap L[s]$, i.e., either $N_r(y)=\emptyset$ or
  $\langle xy\hat z \rangle \in \mathscr{P}_3$ for every $x\in
  N_r(y)$. Assume, for contradiction, that $y\notin L(T(v_1))$. Then $y$
  must be incident to the root $\rho_T$. Since $L(T(v_2))$ contains no leaf
  of color $s$, Lemma \ref{lem:TreeTypes}(iv) implies
  $y\hat x_2, y\hat z \in E(G)$. Since $\hat x_2 \hat z \in E(G)$, the
  vertices $\hat x_2, y,\hat z$ induce a $K_3$, thus
  $\langle \hat x_2 y \hat z \rangle\notin \mathscr{P}_3$, and therefore
  $y\notin L_1$; a contradiction. Hence, we can conclude
  $L_1\cap L[s] \subseteq L(T(v_1))\cap L[s]$. In summary we therefore have
  $L(T(v_1))\cap L[s] = L_1\cap L[s]$.

  Consider $x\in L[r]$. We show that $x\in L(T(v_1))$ implies $x\in L_1$.
  If there exists a leaf $y\in L[s]$ incident to $\parent(x)$, then
  $N_r(y)=\{x\}$ by Lemma \ref{lem:TreeTypes}(i) and
  $\langle xy\hat z \rangle \in \mathscr{P}_3$ by Lemma
  \ref{lem:TreeTypes}(ii)+(iii), implying $x\in L_1$. Otherwise,
  $\sthin$-thinness of $G$ implies that $\child(\parent(x))\cap L =
  \{x\}$. In this case, $N_s(x)=\emptyset$ by Lemma
  \ref{lem:TreeTypes}(i). Moreover, since $(T,\sigma)$ is of Type
  \AX{(II$^*$)} and $\child(\parent(x))\cap L=\{x\}$, we can apply
  Condition \AX{($\star$)} in Def.\ \ref{def:typeiistar} to conclude
  $L[s]\setminus L_{t,s}^P =L[s]\setminus L(T(v_1)) \neq \emptyset$, where
  equality holds because $L(T(v_1))\cap L[s] = L_1\cap L[s]=L_{t,s}^P$. In
  summary, we have thus shown that $x\in L_1$.  Hence,
  $L(T(v_1))\cap L[r]\subseteq L_1\cap L[r]$ as claimed.  Conversely, we
  show that $x\in L_1$ implies $x\in L(T(v_1))$. Assume that
  $x\in L_1\cap L[r]$. Then, by definition of $L_1$, we have
  $x\in L_{t,r}^P$. Thus, either (a) there is a leaf $y\in L[s]$ such that
  $N_r(y)=\{x\}$ and $\langle xy \hat z \rangle \in \mathscr{P}_3$, or, (b)
  $N_s(x)=\emptyset$ and
  $L[s]\setminus L_{t,s}^P=L[s]\setminus L(T(v_1))\neq \emptyset$.  In Case
  (a), assume first for contradiction that $y$ is adjacent to the root
  $\rho_T$.  Lemma \ref{lem:TreeTypes}(iv) implies that $x'y \in E(G)$ for
  any $x'\in L(T(v_2))\cap L[r]$. Since $L(T(v_2))\cap L[r]\neq \emptyset$
  and $|N_r(y)|=1$, we have $L(T(v_2))\cap L[r]=\{x\}$ and thus, as
  $\hat x_2\in L(T(v_2))$, it follows $x=\hat x_2$. However, since
  $x = \hat x_2$ and $\hat z$ are adjacent in $(G,\sigma)$, $xy\hat z$
  cannot form an induced $P_3$. We therefore conclude that $y$ cannot be
  adjacent to the root $\rho_T$.  Since $s\notin \sigma(L(T(v_2)))$, it
  must thus hold that $y\in L(T(v_1))$. Lemma \ref{lem:TreeTypes}(ii)+(iv)
  then implies that we have $x\in L(T(v_1))$.  In Case (b),
  $L[s]\setminus L_{t,s}^P=L[s]\setminus L(T(v_1))\neq \emptyset$ implies
  that there exists a leaf $y^*\in L[s]\setminus L(T(v_1))$. Since
  $s\notin \sigma(L(T(v_2)))$, this vertex $y^*$ must be incident to the
  root $\rho_T$. On the other hand, we have $xy^*\notin E(G)$ because
  $N_s(x) = \emptyset$, hence $x$ cannot be incident to $\rho_T$. Applying
  Lemma \ref{lem:TreeTypes}(iv) thus implies $x\in L(T(v_1))$. Therefore,
  $L_1\cap L[r]=L(T(v_1))\cap L[r]$.
  
  In summary we have shown $L_1=L(T(v_1))$. By symmetry of the definitions,
  analogous arguments imply $L_s^P = L(T(v_2))$, completing the proof of
  statement (i). Property (ii) now immediately follows from
  $\child(\rho_T)\cap L= L\setminus (L(T(v_1))\cup L(T(v_2)))= L\setminus
  (L_t^P\cup L_s^P)$. 
  
\end{proof}

The following remark will be useful for the design of algorithms to
recognize Type \AX{(B)} RBMGs. It implies, in particular, that testing
whether $(G,\sigma)$ is B-like w.r.t.\ some induced $P_4$ strongly depends
on the reference $P_4$, i.e., it is necessary to identify all $P_4$s in
$(G,\sigma)$.
\begin{fact}\label{fact:NOindep-P-choice}
  A connected $\sthin$-thin 3-RBMG $(G,\sigma)$ of Type \AX{(B)} may
  contain distinct induced $P_4$s $P$ and $P'$, both of the form
  $(r,s,t,r)$ for distinct colors $r,s,t$, such that $(G,\sigma)$ is B-like
  w.r.t.\ $P$ but not B-like w.r.t.\ $P'$. An example is given in
  Fig.\ \ref{fig:NOindep-P-choice}.
\end{fact}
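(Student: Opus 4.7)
The statement is purely existential, so the plan is to establish it by exhibiting the explicit example depicted in Fig.~\ref{fig:NOindep-P-choice}, namely a leaf-colored tree $(T,\sigma)$ of Type \AX{(II)} together with the 3-RBMG $(G,\sigma) = G(T,\sigma)$ it explains, and then to verify the three required properties for two specific induced $P_4$s.

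First, I would confirm that $(G,\sigma)$ is a connected, $\sthin$-thin 3-RBMG of Type \AX{(B)}. By construction, the root $\rho_T$ of $(T,\sigma)$ has two inner children $v_1, v_2$ with $\sigma(L(T(v_1)))=\{r,s\}$ and $\sigma(L(T(v_2)))=\{r,t\}$, so $(T,\sigma)$ is of Type \AX{(II)}. Theorem~\ref{thm:3c-types} then implies that $G(T,\sigma)$ is a Type \AX{(B)} 3-RBMG, and connectedness as well as $\sthin$-thinness can be read off directly from the figure. All adjacencies of $(G,\sigma)$ are determined by Lemma~\ref{lem:TreeTypes}.

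Next, I would locate two distinct induced $P_4$s of the form $(r,s,t,r)$. For the first path $P = \langle x_1\, y\, z\, x_2\rangle$, the vertices $y$ and $z$ lie in different two-colored subtrees of the root, so Lemma~\ref{lem:TreeTypes}(i) together with $\sthin$-thinness gives $N_r(y)=\{x_1\}$ and $N_r(z)=\{x_2\}$. Since $x_1 \neq x_2$, we obtain $N_r(y)\cap N_r(z)=\emptyset$, and since $(G,\sigma)$ is of Type \AX{(B)} it contains no induced $C_n$ with $n\geq 5$. Both conditions of Def.~\ref{def:Ltsr} are satisfied, so $(G,\sigma)$ is B-like w.r.t.~$P$.

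Finally, I would exhibit a second induced $P_4$ $P' = \langle x_1\, z'\, y'\, x_2\rangle$ of the same color pattern, and a vertex $x'\in L[r]$ with $x' \in N_r(y')\cap N_r(z')$, which is visible in the figure once one reads off the adjacencies via Lemma~\ref{lem:TreeTypes}(i)--(iv). Condition~(i) of Def.~\ref{def:Ltsr} then fails for $P'$, whence $(G,\sigma)$ is not B-like w.r.t.~$P'$. No real obstacle arises: the only verification work consists in carefully confirming, using Lemma~\ref{lem:TreeTypes}, that the two specific quadruples of vertices are induced $P_4$s in $(G,\sigma)$ and that the claimed intersections $N_r(y)\cap N_r(z)$ and $N_r(y')\cap N_r(z')$ have the required (non)emptiness.
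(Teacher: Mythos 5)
Your proposal is correct and follows exactly the same route as the paper, which establishes this observation purely by exhibiting the example of Fig.~\ref{fig:NOindep-P-choice}: the graph is B-like w.r.t.\ $\langle x_1 y z x_2\rangle$ but not w.r.t.\ $\langle x_1 z' y' x_2\rangle$ because $x'\in N_r(y')\cap N_r(z')$. Your additional verification steps via Theorem~\ref{thm:3c-types} and Lemma~\ref{lem:TreeTypes} just make explicit what the paper leaves to inspection of the figure.
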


Using the previous result, we obtain the following characterization for
3-colored RBMGs of Type \AX{(B)}.

\begin{lemma}\label{lem:charB}
  Let $(G,\sigma)$ be an undirected, connected, $\sthin$-thin and properly
  3-colored graph with color set $S=\{r,s,t\}$ and let $x\in L[r]$,
  $y\in L[s]$ and $z\in L[t]$. Then, $(G,\sigma)$ is a 3-RBMG of Type
  \AX{(B)} if and only if the following conditions are satisfied, after
  possible permutation of the colors:
  \begin{description}
  \item[\AX{(B1)}] $(G,\sigma)$ is B-like w.r.t.\
    $P = \langle\hat x_1\hat y\hat z \hat x_2 \rangle$ for some
    $\hat x_1, \hat x_2\in L[r]$, $\hat y\in L[s]$, $\hat z \in L[t]$,
  \item[\AX{(B2.a)}] If $x\in L_*^P$, then $N(x)=L_*^P\setminus\{x\}$,
  \item[\AX{(B2.b)}] If $x\in L_t^P$, then $N_s(x)\subset L_t^P$ and
    $|N_s(x)|\le 1$, and $N_t(x)= L_*^P[t]$,
  \item[\AX{(B2.c)}] If $x\in L_s^P$, then
    $N_t(x)\subset L_s^P$ and $|N_t(x)|\le 1$, and $N_s(x)= L_*^P[s]$
  \item[\AX{(B3.a)}] If $y\in L_*^P$, then
    $N(y)=L_s^P\cup (L_*^P\setminus\{y\})$, 
  \item[\AX{(B3.b)}] If $y\in L_t^P$, then
    $N_r(y)\subset L_t^P$ and $|N_r(y)|\le 1$, and $N_t(y)=L[t]$,
  \item[\AX{(B4.a)}] If $z\in L_*^P$, then
    $N(z)=L_t^P\cup (L_*^P\setminus\{z\})$, 
  \item[\AX{(B4.b)}] If $z\in L_s^P$, then
    $N_r(z)\subset L_s^P$ and $|N_r(z)|\le 1$, and $N_s(z)=L[s]$.
  \end{description} 
  In particular, $L_t^P$, $L_s^P$, and $L_*^P$ are pairwise disjoint and
  $\hat x_1, \hat y \in L_t^P$, $\hat x_2,\hat z\in L_s^P$.
\end{lemma}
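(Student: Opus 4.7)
My plan is to prove the two directions separately, with the forward direction relying almost entirely on the structural tree obtained from Lemma~\ref{lem:L^B-tree} and the backward direction requiring an explicit tree construction.

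For the forward direction, assume $(G,\sigma)$ is an $\sthin$-thin connected 3-RBMG of Type \AX{(B)}. Lemma~\ref{lem:L^B-tree} (after a suitable permutation of colors) yields a Type \AX{(II$^*$)} tree $(T,\sigma)$ with children $v_1,v_2\in\child(\rho_T)$ satisfying $\sigma(L(T(v_1)))=\{r,s\}$, $\sigma(L(T(v_2)))=\{r,t\}$, along with the identifications $L_t^P=L(T(v_1))$, $L_s^P=L(T(v_2))$ and $L_*^P=\child(\rho_T)\cap L$, together with a defining path $P=\langle \hat x_1\hat y\hat z\hat x_2\rangle$ that immediately establishes \AX{(B1)} as well as the membership statements $\hat x_1,\hat y\in L_t^P$ and $\hat x_2,\hat z\in L_s^P$, while the disjointness of $L_t^P$, $L_s^P$, and $L_*^P$ is automatic from the partition of $L$. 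Each of the neighborhood conditions \AX{(B2.a)}--\AX{(B4.b)} is then obtained by a short case analysis using Lemma~\ref{lem:TreeTypes}: parts (i) and (iv) of that lemma compute edges incident to a leaf in $\child(\rho_T)\cap L$, while parts (ii) and (iii) settle the cross-subtree edges, and $\sthin$-thinness bounds the cardinality $|N_c(\cdot)|\le 1$ whenever the neighbor must share a parent with the vertex in question.

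For the backward direction, assume conditions \AX{(B1)}--\AX{(B4.b)} hold for $P=\langle \hat x_1\hat y\hat z\hat x_2\rangle$. I construct a Type \AX{(II$^*$)} tree $(T,\sigma)$ whose root $\rho_T$ has as children (a) two inner vertices $v_1$ and $v_2$, (b) every leaf of $L_*^P$ attached directly. Under $v_1$ I place a caterpillar on $L_t^P$: every $y\in L_{t,s}^P$ is made a sibling of the unique $x\in L_{t,r}^P$ dictated by \AX{(B3.b)} (i.e.\ $N_r(y)=\{x\}$), and every $r$-leaf with $N_s(x)=\emptyset$ is attached alone below a dedicated inner vertex; the remaining pairs and singletons are stacked in caterpillar fashion. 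The subtree below $v_2$ is built symmetrically from $L_s^P$ using \AX{(B2.c)} and \AX{(B4.b)}. The \AX{($\star$)} condition of Definition~\ref{def:typeiistar} is met because whenever a $w\preceq v_1$ has $\sigma(\child(w)\cap L)=\{r\}$, the defining clause of $L_{t,r}^P$ forces $L[s]\setminus L_{t,s}^P\neq\emptyset$, and such an $s$-leaf necessarily lies in $L_*^P$ by \AX{(B3.a)}--\AX{(B3.b)}, hence is attached to $\rho_T$; the symmetric argument handles $v_2$.

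The main obstacle, which occupies the bulk of the argument, is verifying that the constructed $(T,\sigma)$ actually satisfies $G(T,\sigma)=(G,\sigma)$. Here I reuse Lemma~\ref{lem:TreeTypes}: edges $uw$ in $G(T,\sigma)$ are classified by which of $L(T(v_1))$, $L(T(v_2))$, $\child(\rho_T)\cap L$ the endpoints lie in, and in each of the nine (up to symmetry, effectively six) cases the membership in $E(G(T,\sigma))$ can be read off directly from parts (i)--(iv) of Lemma~\ref{lem:TreeTypes}; I then check that exactly the same adjacencies are prescribed by \AX{(B2.a)}--\AX{(B4.b)}. For instance, for $x\in L_t^P\cap L[r]$ and $z\in L[t]$: Lemma~\ref{lem:TreeTypes}(ii) gives $xz\notin E(G(T,\sigma))$ when $z\in L(T(v_2))=L_s^P$, whereas (iv) combined with the construction yields $xz\in E(G(T,\sigma))$ exactly when $z\in L_*^P[t]$, matching \AX{(B2.b)}. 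The trickiest subcase concerns edges between $L_t^P\cap L[s]$ and $L[t]$, where \AX{(B3.b)} predicts $N_t(y)=L[t]$; this is consistent with the construction because every $t$-leaf lies in $L(T(v_2))\cup L_*^P$, and Lemma~\ref{lem:TreeTypes}(iii) together with (iv) forces $y$ to be adjacent to every such $t$-leaf. The symmetric clauses \AX{(B4.a)}--\AX{(B4.b)} are handled by interchanging the roles of $s$ and $t$.

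Finally, Theorem~\ref{thm:3c-types} guarantees that any RBMG explained by a Type \AX{(II)} tree is of Type \AX{(B)}, closing the equivalence. Color permutations are handled uniformly because the argument never uses the asymmetry between $r$, $s$, $t$ beyond the choice of the reference path $P$.
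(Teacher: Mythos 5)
Your overall strategy coincides with the paper's: one direction is obtained from Lemma~\ref{lem:L^B-tree} (Type \AX{(II$^*$)} tree with $L_t^P=L(T(v_1))$, $L_s^P=L(T(v_2))$, $L_*^P=\child(\rho_T)\cap L$) followed by Lemma~\ref{lem:TreeTypes}, and the other by building a Type \AX{(II)} tree whose root carries two caterpillars on $L_t^P$ and $L_s^P$ plus the leaves of $L_*^P$, and then verifying $G(T,\sigma)=(G,\sigma)$ case by case. The extra effort you spend certifying property \AX{($\star$)} is unnecessary in that direction (any explaining tree suffices), but harmless.

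There is, however, one concrete omission in the direction ``conditions $\Rightarrow$ 3-RBMG''. Your construction attaches each vertex to exactly one of the three locations determined by $L_t^P$, $L_s^P$, $L_*^P$, so it is only well defined if these sets are pairwise disjoint. Disjointness of $L_*^P$ from the other two is definitional, but $L_{t,r}^P$ and $L_{s,r}^P$ both live in $L[r]$ and nothing in their raw definitions prevents a vertex $x$ from lying in both (e.g.\ $x$ could simultaneously satisfy $N_r(y)=\{x\}$ with $\langle xy\hat z\rangle\in\mathscr{P}_3$ and $N_t(x)=\emptyset$ with $L[t]\setminus L_{s,t}^P\neq\emptyset$). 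You assert disjointness only in the other direction, where it is indeed ``automatic from the partition of $L$'' given by the tree; here it must be derived from the hypotheses. The paper does this by contradiction: if $x\in L_{t,r}^P\cap L_{s,r}^P$, then either $N_s(x)=\emptyset$, in which case connectedness forces some $z\in N_t(x)$ that \AX{(B2.c)} places in $L_s^P$ while \AX{(B2.b)} places it in $L_*^P[t]$; or $N_r(y)=\{x\}$ for some $y$, which \AX{(B2.b)} places in $L_t^P$ while \AX{(B2.c)} places it in $L_*^P[s]$ --- a contradiction either way. You should add this argument (and, for the same reason, the short verification that $\hat x_1,\hat y\in L_t^P$ and $\hat x_2,\hat z\in L_s^P$ so that the constructed subtrees really carry the color sets $\{r,s\}$ and $\{r,t\}$, making the tree Type \AX{(II)} and Lemma~\ref{lem:TreeTypes} applicable). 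With those additions your proof is complete and matches the paper's.
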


\begin{proof}
  Suppose first that $(G,\sigma)$ satisfies Conditions \AX{(B1)} -
  \AX{(B4.b)}. By Condition \AX{(B1)}, $(G,\sigma)$ is B-like, thus in
  particular it contains no induced $C_n$, $n\ge 5$. Therefore, if
  $(G,\sigma)$ is an RBMG, then it must be of Type \AX{(B)}.

  In order to prove that $(G,\sigma)$ is indeed an RBMG, we construct a
  tree $(T,\sigma)$ based on the sets $L_t^P$, $L_s^P$, and $L_*^P$ and
  show that it explains $(G,\sigma)$. To this end, we show first that the
  sets $L_t^P$, $L_s^P$, and $L_*^P$ are pairwise disjoint. By definition,
  $L_*^P$ is disjoint from $L_t^P$ and $L_s^P$. Moreover, by definition,
  $\sigma(L_t^P)\cap \sigma(L_s^P)=\{r\}$. Thus, it suffices to show that
  any vertex $x\in L[r]\setminus L_*^P$ is is contained in exactly one of
  the sets $L_t^P$ or $L_s^P$. Assume, for contradiction, that there exists
  a leaf $x$ that is contained in both $L_t^P$ and $L_s^P$. Hence,
  $x\in L_{t,r}^P\cap L_{s,r}^P$. Then, by definition of $L_{t,r}^P$,
  either (a) $N_s(x)=\emptyset$ and
  $L[s]\setminus L_{t,s}^P\neq \emptyset$, or (b) $N_r(y)=\{x\}$ and
  $\langle xy\hat z \rangle\in \mathscr{P}_3$ for some $y\in L[s]$.
  Suppose first Case (a). Since $N_s(x)=\emptyset$ and $(G,\sigma)$ is
  connected, there must be a vertex $z\in L[t]$ such that $xz\in
  E(G)$. Since $x\in L_s^P$, Condition \AX{(B2.c)} implies $N_t(x)=\{z\}$.
  Furthermore, by Condition \AX{(B2.c)}, $N_t(x)\subset L_s^P$ and thus,
  $z\in L_s^P$. On the other hand, $x\in L_t^P$ and \AX{(B2.b)} imply
  $z\in L_*^P$; a contradiction since $L_s^P$ and $L_*^P$ are
  disjoint. Analogously, in Case (b), Condition \AX{(B2.b)} implies
  $y\in L_t^P$, whereas $y\in L_*^P$ by Condition \AX{(B2.c)}, which again
  yields a contradiction. We therefore conclude that $L_t^P$, $L_s^P$, and
  $L_*^P$ are disjoint.

  Moreover, for the construction of $(T,\sigma)$, we show that $G[L_i^P]$
  is the disjoint union of an arbitrary number of $K_1$'s and $K_2$'s with
  $i\in \{s,t\}$. By definition of $L_t^P$, we have
  $\sigma(L_t^P) \subseteq \{r,s\}$. Since $N_s(x) \subset L_t^P$ and
  $|N_s(x)|\le 1$ for any $x\in L_{t,r}^P$ by \AX{(B2.b)} as well as
  $N_r(y) \subset L_t^P$ and $|N_r(y)|\le 1$ for any $y\in L_{t,s}^P$ by
  \AX{(B3.b)}.  Therefore, any vertex of $L_t^P$ has at most one neighbor
  in $L_t^P$.  Similar arguments and application of Properties \AX{(B2.c)},
  resp., \AX{(B4.b)} show that any vertex of $L_s^P$ has at most one
  neighbor in $L_s^P$.  Thus, $G[L_i^P]$ is the disjoint union of an
  arbitrary number of $K_1$'s and $K_2$'s with $i\in \{s,t\}$.

  We are now in the position to construct a tree $(T,\sigma)$ based on the
  sets $L_t^P$, $L_s^P$, and $L_*^P$ and to show that it explains
  $(G,\sigma)$.  First, for $i\in \{s,t\}$, we construct a caterpillar
  $(T_i,\sigma_i)$, with root $\rho_{T_i}$, on the leaf set $L_i^P$ such
  that $\parent(a)=\parent(b)$ for any $a,b\in L_i^P$ with
  $\sigma(a)\neq \sigma(b)$ if and only if $ab\in E(G)$.  Since $G[L_i^P]$
  is the disjoint union of an arbitrary number of $K_1$'s and $K_2$'s, the
  tree $T_i$ is well-defined. It is, however, not unique as the order of
  inner vertices in $T_i$ is arbitrary.  Then $(T,\sigma)$ is given by
  attaching $\rho_{T_t}$, $\rho_{T_s}$ and $L_*^P$ to the root
  $\rho_T$. Since $L_t^P$, $L_s^P$, and $L_*^P$ are pairwise disjoint, the
  tree $(T,\sigma)$ is well-defined.
 
  We now show that $(T,\sigma)$ is of Type \AX{(II)} by verifying that
  $\sigma(L_t^P)=\{r,s\}$ and $\sigma(L_s^P) = \{r,t\}$.  It is easy to see
  that $\hat z\in L_{s,t}^P$ and $\hat y\in L_{t,s}^P$ and thus,
  $s\in \sigma(L_t^P)$ and $t\in \sigma(L_s^P)$.  Since
  $\hat z\in L_{s}^P$, we can apply Property \AX{(B4.b)} to conclude that
  $\hat x_2\in N_r(\hat z) \subset L_s^P$. Hence, $r\in \sigma(L_s^P)$.
  Applying \AX{(B3.b)}, one similarly shows $\hat x_1\in L_{t}^P$ and thus,
  $r\in \sigma(L_t^P)$.  By construction, $t\notin \sigma(L_t^P)$ and
  $s\notin \sigma(L_s^P)$.  Thus, $\sigma(L_t^P)=\{r,s\}$ and
  $\sigma(L_s^P) = \{r,t\}$ and hence, $(T,\sigma)$ is of Type \AX{(II)}.

  It remains to show that $G(T,\sigma)=(G,\sigma)$.  To this end, we put
  $L_1 \coloneqq L_t^P$ and $L_2 \coloneqq L_s^P$ as well as
  $v_1 \coloneqq \rho_{T_t}$ and $v_2 \coloneqq \rho_{T_s}$.  Therefore,
  $L_1 = L(T(v_1))$ and $\sigma(L_1) = \{r,s\}$ as well as
  $L_2 = L(T(v_2))$ and $\sigma(L_1) = \{r,t\}$.

  In order to show $G(T,\sigma)=(G,\sigma)$, we first consider the
  adjacencies between vertices $L[s]$ and $L[t]$.  By Conditions
  \AX{(B3.a)} and \AX{(B3.b)}, we have $yz\in E(G)$ for any $y\in L[s]$,
  $z\in L[t]$. The same is true for $G(T,\sigma)$ by Lemma
  \ref{lem:TreeTypes}(iii). Thus, the edges between vertices of color $s$
  and $t$ in $G(T,\sigma)$ and $ (G,\sigma)$ coincide.

  Next, we show that the neighborhood w.r.t.\ $r$ of any vertex of color
  $s$ and $t$, respectively, coincide in $(G,\sigma)$ and $G(T,\sigma)$.
  For each $y\in L_1$ with $\sigma(y)=s$, we have
  $\lca(y,x)\prec_T \lca(y,x')$ for any $x\in L(T(v_1))$,
  $x'\notin L(T(v_1))$ with $\sigma(x)=\sigma(x')=r$. Therefore,
  $N_r(y)\subset L_1$ in $G(T,\sigma)$ for all $y\in L_1\cap L[s]$.  By
  Condition \AX{(B3.b)}, we also have $N_r(y)\subset L_1$ in $(G,\sigma)$
  for all $y\in L_1\cap L[s]$.  Clearly, for any $x\in L(T(v_1))$, we have
  $xy\in E(G(T,\sigma))$ if and only if $\parent(x)=\parent(y)$.  Moreover
  we constructed $(T,\sigma)$ such that $\parent(x)=\parent(y)$ if and only
  if $xy\in E(G)$.  Hence, the neighborhoods $N_r(y)$ in $G(T,\sigma)$ and
  $(G,\sigma)$ coincide for all $y\in L_1\cap L[s]$.  By similar arguments
  and application of \AX{(B4.b)} one can show that the neighborhoods
  $N_r(z)$ in $G(T,\sigma)$ and $(G,\sigma)$ coincide for all
  $z\in L_2\cap L[t]$.

  Now suppose that $y\in L[s]$ is not contained in $L_1$, thus
  $y\in L_*^P$, and let $x\in L[r]$.  By construction of $(T,\sigma)$, we
  have $\lca(x,y) = \rho_T$ and thus, $\lca(x,y)\preceq_T\lca(x,y')$ for
  any $y'$ of color $s$ if and only if $x\in L_2\cup L_*^P$, hence
  $N_r(y)=(L_2\cup L_*^P) \cap L[r]$ in $G(T,\sigma)$. By Condition
  \AX{(B3.a)}, we have $N_r(y)=(L_2\cup L_*^P) \cap L[r]$ in $(G,\sigma)$
  as well.  Hence, the neighborhoods $N_r(y)$ coincide in $G(T,\sigma)$ and
  $(G,\sigma)$ for all $y\in L_*^P$.  Similar arguments and application of
  \AX{(B4.a)} shows that the neighborhoods $N_r(z)$ in $G(T,\sigma)$ and
  $(G,\sigma)$ coincide for all $z\in L_*^P$.
	
  So far, we have shown that the neighborhoods $N(y)$ and $N(z)$ of all
  $y\in L[s]$, resp., $z\in L[t]$ are the same in both, $G(T,\sigma)$ and
  $(G,\sigma)$. It remains to show that this is also true for vertices
  $x\in L[r]$. Since $y\in N(x)\cap L[s]$ if and only if
  $x\in N(y)\cap L[r]$ and the $N(y)$ neighborhoods for all $y\in L[s]$
  coincide in both graphs, we can conclude that $N_s(x)$ w.r.t.\
  $G(T,\sigma)$ coincides with $N_s(x)$ w.r.t.\ $(G,\sigma)$. The same is
  true for the $N_t(x)$ neighborhoods. Hence, the $N(x)$ neighborhoods in
  $G(T,\sigma)$ and $(G,\sigma)$, resp., are identical.  In summary, we
  have shown that $G(T,\sigma)=(G,\sigma)$, i.e., $(T,\sigma)$ explains
  $(G,\sigma)$. Hence, $(G,\sigma)$ is a 3-RBMG.
	
  Now let $(G,\sigma)$ be a 3-RBMG of Type \AX{(B)}. By Lemma
  \ref{lem:L^B-tree}, $(G,\sigma)$ is B-like w.r.t.\
  $\langle\hat x_1\hat y\hat z \hat x_2\rangle$ for some
  $\hat x_1, \hat x_2\in L[r]$, $\hat y\in L[s]$, $\hat z \in L[t]$, which
  proves \AX{(B1)}. Moreover, again by Lemma \ref{lem:L^B-tree}, the tree
  $(T,\sigma)$ that explains $(G,\sigma)$ can be chosen in a way that it is
  of Type \AX{(II$^*$)} and satisfies $L_1=L(T(v_1))$, $L_2=L(T(v_2))$ for
  $v_1,v_2\in\child(\rho_T)\setminus L$, and $L_*^P=\child(\rho_T)\cap
  L$. Now, careful application of Lemma \ref{lem:TreeTypes}(i)-(iv), which
  we leave to the reader, shows that Conditions \AX{(B2.a)} to \AX{(B4.b)}
  are satisfied.
  
\end{proof}

We note that some conditions in Lemma \ref{lem:charB} are redundant. For
instance, \AX{(B4.a)} and \AX{(B4.b)} are a consequence of
\AX{(B2.a)}-\AX{(B3.b)}. We find them convenient, however, to describe the
structure of Type \AX{(B)} 3-RBMGs since they emphasize the symmetric
structure of the conditions and somewhat simplify the arguments. We will
give a non-redundant set of conditions in Theorem \ref{thm:char3cBMG} at
the end of this section.

We give here an alternative receipt to reconstruct a 3-RBMG $(G,\sigma)$ of
Type \AX{(B)} that is B-like w.r.t.\ some $P$ as in Def.\ \ref{def:Ltsr},
based on its induced subgraphs
$(G_*,\sigma_*)\coloneqq (G[L_*^P],\sigma_{|L_*^P})$,
$(G_1,\sigma_1)\coloneqq (G[L_t^P],\sigma_{|L_t^P})$, and
$(G_2,\sigma_2)\coloneqq (G[L_s^P],\sigma_{|L_s^P})$.  This particular
reconstruction and the knowledge about the structure of Type \AX{(B)} RBMGs
may be potentially useful for orthology detection, more precisely for the
identification of false positive and false negative orthology assignments
\cite{GGL:19}. By \AX{(B2.a)} and \AX{(B2.b)}, $(G_1,\sigma_1)$ and
$(G_2,\sigma_2)$ are both disjoint unions of an arbitrary number of $K_1$'s
and $K_2$'s.  By Lemma \ref{lem:L^B-tree} there exists a tree of Type
\AX{(II$^*$)} that explains $(G,\sigma)$ and satisfies $L(T(v_1)) = L_t^P$,
$L(T(v_2)) = L_s^P$ for $v_1,v_2\in\child(\rho_T)\setminus L$, and
$L_*^P=\child(\rho_T)\cap L$.  Hence, by Lemma \ref{lem:TreeTypes},
$(G,\sigma)$ can be obtained by inserting edges all edges $ab$ with
\begin{itemize}
\item[(i)] $a\in L_t^P$, $b\in L_s^P$ and
  $\sigma(a),\sigma(b)\in\sigma(L_t^P)\triangle \sigma(L_s^P)$, where
  $\triangle$ denotes the symmetric difference, and
\item[(ii)] $a\in L_i^P$, $b\in L_*^P$ and $\sigma(b)\notin \sigma(L_i^P)$
  for $i\in \{s,t\}$
\end{itemize}
into the disjoint union of $(G_1,\sigma_1)$, $(G_2,\sigma_2)$, and
$(G_*,\sigma_*)$.  

However, the assignment of leaves to one of the sets $L_t^P$, $L_s^P$, or
$L_*^P$ strongly depends on the choice of the corresponding induced $P_4$.
We refer to Fig.\ \ref{fig:P_4-classes} for an example. The 3-RBMG
$(G,\sigma)$ contains the induced $P_4$s $\langle a_1b_1c_1a_2 \rangle$ and
$\langle a_1c_2b_2a_2 \rangle$, where $a_1,a_2\in L[r]$, $b_1,b_2\in L[s]$
and $c_1,c_2\in L[t]$.  If $L_t^P$, $L_s^P$, or $L_*^P$ are defined w.r.t.\
$P = \langle a_1b_1c_1a_2 \rangle$, then one obtains $L_t^P=\{a_1,b_1\}$,
$L_s^P=\{a_2,c_1\}$, and $L_*^P=\{b_2,c_2\}$, from which one constructs the
tree $(T_1,\sigma)$. On the other hand, if
$P = \langle a_1c_2b_2a_2 \rangle$ is chosen as the corresponding $P_4$, it
yields $L_t^P=\{a_1,c_2\}$, $L_s^P=\{a_2,b_2\}$, and $L_*^P=\{b_1,c_1\}$
and the tree $(T_2,\sigma)$.

We will return to the induced $P_4$s with endpoints of the same color in
Section~\ref{sect:P4} below. We shall see that they fall into two distinct
classes, which we call \emph{good} and \emph{bad}.  All good $P_4$s in
$(G,\sigma)$ imply the same vertex sets $L_t^P$, $L_s^P$, and $L_*^P$. In
contrast, different bad $P_4$s results in different vertex sets.

\subsection{Characterization of Type \AX{(C)} 3-RBMGs}

The construction of Type \AX{(B)} 3-RBMGs can be extended to a similar
characterization of Type \AX{(C)} 3-RBMGs.

\begin{definition} \label{def:C-like}
  Let $(G,\sigma)$ be an undirected, connected, properly colored,
  $\sthin$-thin graph.  Moreover, assume that $(G,\sigma)$ contains the
  hexagon
  $H\coloneqq \langle \hat x_1 \hat y_1 \hat z_1 \hat x_2 \hat y_2 \hat z_2
  \rangle$ such that $\sigma(\hat x_1)=\sigma(\hat x_2) =r$,
  $\sigma(\hat y_1)=\sigma(\hat y_2)=s$, and
  $\sigma(\hat z_1)=\sigma(\hat z_2) =t$.  Then, $(G,\sigma)$ is
  \emph{C-like w.r.t.\ $H$} if there is a vertex
  $v\in\{\hat x_1 ,\hat y_1, \hat z_1, \hat x_2 ,\hat y_2, \hat z_2\}$ such
  that $|N_c(v)|>1$ for some color $c\neq \sigma(v)$.  Suppose that
  $(G,\sigma)$ is C-like w.r.t.\
  $H =\langle \hat x_1 \hat y_1 \hat z_1 \hat x_2 \hat y_2 \hat z_2
  \rangle$ and assume w.l.o.g. that $v=\hat x_1$ and $c=t$, i.e.,
  $|N_t(\hat x_1)|>1$. Then we define the following sets:
  \begin{align*}
    L_t^H &\coloneqq \{x \mid \langle x \hat z_2 \hat y_2 \rangle
            \in \mathscr{P}_3\}\cup
            \{y \mid \langle y \hat z_1 \hat x_2 \rangle\in \mathscr{P}_3\} \\
    L_s^H &\coloneqq \{x \mid \langle x \hat y_2 \hat z_2 \rangle
            \in \mathscr{P}_3\} \cup
            \{z \mid \langle z \hat y_1 \hat x_1 \in \rangle
            \mathscr{P}_3\} \\
    L_r^H &\coloneqq \{ y \mid \langle y \hat x_2 \hat z_1 \rangle
            \in \mathscr{P}_3\} \cup
            \{z \mid \langle z \hat x_1 \hat y_1 \rangle\in \mathscr{P}_3\}\\
    L_*^H &\coloneqq V(G)\setminus (L_r^H\cup L_s^H\cup L_t^H).
  \end{align*}		
\end{definition}

Again, there is a close connection between these vertex sets and trees of
Type \AX{(III)}.
\begin{lemma}\label{lem:L^C-tree}
  Let $(G,\sigma)$ be an $\sthin$-thin 3-RBMG of Type \AX{(C)} with
  $|L|>6$ and color set $S=\{r,s,t\}$. Then, up to permutation of colors,
  $(G,\sigma)$ is C-like w.r.t.\ the hexagon
  $H = \langle \hat x_1\hat y_1\hat z_1 \hat x_2 \hat y_2 \hat z_2 \rangle$
  for some $\hat x_i\in L[r]$, $\hat y_i\in L[s]$, $\hat z_i \in L[t]$ and
  there exists a tree $(T,\sigma)$ of Type \AX{(III)} explaining
  $(G,\sigma)$ such that
  \begin{itemize}
  \item[(i)] $L_t^H=L(T(v_1))$, $L_s^H=L(T(v_2))$, and
    $L_r^H=L(T(v_3))$ where $v_1,v_2, v_3\in\child(\rho_T)$, and
  \item[(ii)] $L_*^H=\child(\rho_T)\cap L$.
  \end{itemize}
\end{lemma}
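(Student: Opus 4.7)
The plan is to adapt the argument for Lemma \ref{lem:L^B-tree} to the Type \AX{(III)} setting. By Theorem \ref{thm:3c-types} and Lemma \ref{lem:2col-subtrees}, I may take a tree $(T,\sigma)$ of Type \AX{(III)} explaining $(G,\sigma)$ whose 2-colored subtrees are caterpillars with alternating-color siblings. After a permutation of colors, label the three inner children of $\rho_T$ as $v_1,v_2,v_3$ with $\sigma(L(T(v_1)))=\{r,s\}$, $\sigma(L(T(v_2)))=\{r,t\}$, $\sigma(L(T(v_3)))=\{s,t\}$.

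Next I would extract the hexagon. Lemma \ref{lem:2col} together with Lemma \ref{lem:TreeTypes}(i) yields, inside each $T(v_i)$, a sibling pair of the two colors present below $v_i$ that is reciprocally best-matched in $G$. Choose $\hat x_1,\hat y_1\prec_T v_1$, $\hat x_2,\hat z_1\prec_T v_2$, and $\hat y_2,\hat z_2\prec_T v_3$ with colors suggested by the subscripts. Lemma \ref{lem:TreeTypes}(ii), applied in turn to the pairs $(v_1,v_2)$, $(v_2,v_3)$, $(v_1,v_3)$, then produces exactly the three cross-subtree edges $\hat y_1\hat z_1$, $\hat x_2\hat y_2$, $\hat z_2\hat x_1$ and rules out every other cross-subtree pair among the six chosen leaves, so $\langle\hat x_1\hat y_1\hat z_1\hat x_2\hat y_2\hat z_2\rangle$ is an induced $C_6$ of the required color pattern.

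To verify the C-like condition I would exploit $|L|>6$. Either some $T(v_i)$ contains more than two leaves, in which case $\sthin$-thinness plus the caterpillar shape forces two leaves of the same color in $L(T(v_i))$, and Lemma \ref{lem:TreeTypes}(ii) applied across a neighboring $v_j$ provides a hexagon vertex $v$ of color shared by $v_i,v_j$ with $|N_c(v)|\ge 2$ for the color $c$ that is unique to $v_i$; or there exists a leaf $w\in\child(\rho_T)\cap L$, in which case Lemma \ref{lem:TreeTypes}(iv) makes $w$ adjacent to every leaf of the unique $T(v_j)$ whose color set omits $\sigma(w)$, again yielding $|N_{\sigma(w)}(v)|\ge 2$ for a hexagon vertex $v\in L(T(v_j))$. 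A suitable cyclic permutation of the three colors (together with the compatible relabeling of the hexagon) transports this witness to the normalized form $|N_t(\hat x_1)|>1$ of Def.\ \ref{def:C-like}.

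Finally I would prove the set identities, for which by symmetry of the definitions it suffices to show $L_t^H=L(T(v_1))$. For the inclusion $L(T(v_1))\subseteq L_t^H$: any $x\in L(T(v_1))\cap L[r]$ satisfies $x\hat z_2\in E$ and $x\hat y_2\notin E$ by Lemma \ref{lem:TreeTypes}(ii) applied to $(v_1,v_3)$, so $\langle x\hat z_2\hat y_2\rangle\in\mathscr{P}_3$, and analogously any $y\in L(T(v_1))\cap L[s]$ yields $\langle y\hat z_1\hat x_2\rangle\in\mathscr{P}_3$ via $(v_1,v_2)$. For the reverse inclusion: an $x\in L_t^H\cap L[r]$ placed in $L(T(v_2))$ would contradict $x\hat z_2\in E$ by Lemma \ref{lem:TreeTypes}(ii) on $(v_2,v_3)$, while placing $x\in\child(\rho_T)\cap L$ would force $x\hat y_2\in E$ by Lemma \ref{lem:TreeTypes}(iv), leaving only $x\in L(T(v_1))$; the $s$-colored case is dual, using the $P_3$ through $\hat z_1,\hat x_2$. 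Cyclic relabeling of the triple $(v_1,v_2,v_3)$ then transfers the argument to $L_s^H=L(T(v_2))$ and $L_r^H=L(T(v_3))$, whence $L_*^H=\child(\rho_T)\cap L$ follows from the partition of $L$. The main obstacle I anticipate is the uniform book-keeping needed for the C-like verification and for the reverse inclusion, since the argument must exclude the "wrong" subtrees using the correct instance of Lemma \ref{lem:TreeTypes} in each of several symmetric cases without introducing hidden asymmetries.
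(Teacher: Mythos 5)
Your proposal is correct and follows essentially the same route as the paper: take a Type \AX{(III)} tree via Lemma \ref{lem:2col-subtrees}, extract the hexagon as in the proof of Theorem \ref{thm:3c-types}, use $|L|>6$ together with Lemma \ref{lem:TreeTypes}(ii)/(iv) to produce the C-like witness (the paper organizes this case split around the location of one extra leaf rather than around subtree sizes, but the two decompositions are equivalent), and establish $L_t^H=L(T(v_1))$ by the same two inclusions (your reverse inclusion argues from the edge $x\hat z_2$ where the paper argues from the non-edge $x\hat y_2$; both instances of Lemma \ref{lem:TreeTypes}(ii) work).
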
 
\begin{proof}
  We argue along the lines of the proof of Lemma \ref{lem:L^B-tree}.  Let
  $(G,\sigma)$ be a 3-RBMG of Type \AX{(C)}. Then, Lemma
  \ref{lem:2col-subtrees} implies that there exists a tree $(T,\sigma)$
  with root $\rho_T$ explaining $(G,\sigma)$ that is of Type \AX{(III)},
  thus in particular there are vertices $v_1,v_2,v_3\in \child(\rho_T)$
  with $\sigma(L(T(v_1)))=\{r,s\}$, $\sigma(L(T(v_2)))=\{r,t\}$, and
  $\sigma(L(T(v_3)))=\{s,t\}$, and
  $\child(\rho_T)\setminus \{v_1,v_2, v_3\}\subset L$.  Similar
  argumentation as in the proof of Thm.\ \ref{thm:3c-types} (Claim 3) shows
  that there are leaves $\hat x_1,\hat y_1 \prec_T v_1$,
  $\hat x_2,\hat z_1\prec_T v_2$, and $\hat y_2, \hat z_2\prec_T v_3$,
  where
  $\hat x_1,\hat x_2\in L[r], \hat y_1, \hat y_2\in L[s], \hat z_1, \hat
  z_2\in L[t]$, such that
  $\langle \hat x_1\hat y_1\hat z_1\hat x_2\hat y_2 \hat z_2 \rangle$ is a
  hexagon.  Since $|L|>6$, there exists an additional leaf $z'$.
  W.l.o.g. we can assume that this additional vertex has color
  $\sigma(z')=t$.  Since $(T,\sigma)$ is of Type \AX{(III)}, there are
  three mutually exclusive cases: $z'\prec_T v_2$ or $z'\prec_T v_3$ or
  $z'$ is incident to the root $\rho_T$.

  Suppose first that $z'\prec_T v_2$. Lemma \ref{lem:TreeTypes}(ii) implies
  $z'\hat y_1\in E(G)$.  Since in addition $\hat z_1\hat y_1\in E(G)$, we
  can conclude $|N_t(\hat y_1)|>1$.  Similarly, if $z'\prec_T v_3$, then
  Lemma \ref{lem:TreeTypes}(ii) implies $z'\hat x_1\in E(G)$ and thus, as
  $\hat z_2\hat x_1\in E(G)$, we have $|N_t(\hat x_1)|>1$.  Finally, if
  $z'$ is incident to the root $\rho_T$, then Lemma \ref{lem:TreeTypes}(iv)
  implies $z'\hat x_1\in E(G)$ and we again obtain $|N_t(\hat x_1)|>1$.  In
  summary, if $|L|>6$ and $(G,\sigma)$ is of Type \AX{(C)}, then there is
  always a hexagon $H$ and a vertex $v$ in $H$ such that $|N_c(v)|>1$ for
  some color $c\neq \sigma(v)$.  Therefore, $(G,\sigma)$ is C-like
  w.r.t. some hexagon in $G$.

  It remains to show Properties (i) and (ii).  Since $(G,\sigma)$ is C-like
  w.r.t. some hexagon $H$ in $G$ and one can always shift the vertex labels
  along
  $H=\langle \hat x_1\hat y_1\hat z_1 \hat x_2 \hat y_2 \hat z_2 \rangle$
  as well as permuting the colors in $(G,\sigma)$, we can w.l.o.g.\
  assume $v=\hat x_1$ and $c=t$, i.e., $|N_t(\hat x_1)|>1$.  Let
  $x\in L[r]$ and assume first $x\in L(T(v_1))$. We show that this implies
  $x\in L_t^H$. Lemma \ref{lem:TreeTypes}(ii) implies $x\hat z_2\in E(G)$
  and $x\hat y_2\notin E(G)$.  Since $\hat y_2\hat z_2 \in E(G)$ by
  definition of $H$, we can conclude
  $\langle x\hat z_2 \hat y_2 \rangle \in \mathscr{P}_3$. Thus,
  $x\in L_t^H$.  Hence, we have
  $L(T(v_1))\cap L[r] \subseteq L_t^H\cap L[r]$. Now let $x\in L_t^H$,
  i.e., $\langle x \hat z_2\hat y_2\rangle$ forms an induced $P_3$.  Since
  $x \hat y_2 \notin E(G)$, Lemma \ref{lem:TreeTypes}(ii) implies that
  $x \notin L(T(v_2))$. In addition, $x \hat y_2 \notin E(G)$ and Lemma
  \ref{lem:TreeTypes}(iv) imply that $x$ cannot be incident to the root
  $\rho_T$.  Moreover, $x\notin L(T(v_3))$ because
  $r\notin \sigma(L(T(v_3)))$ by construction of $(T,\sigma)$.  Hence, $x$
  must be contained in $L(T(v_1))$.  Therefore, we have
  $L(T(v_1))\cap L[r] \supseteq L_t^H\cap L[r]$, which implies
  $L(T(v_1))\cap L[r] = L_t^H\cap L[r]$.
		 
  Analogously, one shows $L(T(v_1))\cap L[s] = L_t^H\cap L[s]$ and
  consequently, $L(T(v_1))= L_t^H$. By symmetry, one then obtains
  $L(T(v_2)) = L_s^H$ and $L(T(v_3)) = L_r^H$, showing property
  (i). Property (ii) is a direct consequence of Property (i) because
  $L_*^H=L\setminus (L_t^H\cup L_s^H \cup L_r^H)=L\setminus (L(T(v_1))\cup
  L(T(v_2))\cup L(T(v_3)))=\child(\rho_T)\cap L$.\\
  
\end{proof}

\begin{lemma}\label{lem:charC}
  Let $(G,\sigma)$ be an undirected, connected, $\sthin$-thin, and properly
  3-colored graph with color set $S=\{r,s,t\}$ and let $x\in L[r]$,
  $y\in L[s]$ and $z\in L[t]$. Then, $(G,\sigma)$ is a 3-RBMG of Type
  \AX{(C)} if and only if $(G,\sigma)$ is either a hexagon or $|L|>6$ and,
  up to permutation of colors, the following conditions are satisfied:
  \begin{description}
  \item[\AX{(C1)}] $(G,\sigma)$ is C-like w.r.t.\ the hexagon
    $H = \langle \hat x_1\hat y_1\hat z_1 \hat x_2 \hat y_2 \hat z_2
    \rangle$ for some $\hat x_i\in L[r]$, $\hat y_i\in L[s]$,
    $\hat z_i \in L[t]$ with $|N_t(\hat x_1)|>1$,
  \item[\AX{(C2.a)}] If $x\in L_*^H$, then
    $N(x)=L_r^H\cup (L_*^H\setminus\{x\})$,
  \item[\AX{(C2.b)}] If $x\in L_t^H$, then
    $N_s(x)\subset L_t^H$ and $|N_s(x)|\le 1$, and
    $N_t(x)= L_*^H[t]\cup L_r^H[t]$,
  \item[\AX{(C2.c)}] If $x\in L_s^H$, then
    $N_t(x)\subset L_s^H$ and $|N_t(x)|\le 1$, and
    $N_s(x)= L_*^H[s]\cup L_r^H[s]$
  \item[\AX{(C3.a)}] If $y\in L_*^H$, then
    $N(y)=L_s^H\cup (L_*^H\setminus\{y\})$,
  \item[\AX{(C3.b)}] If $y\in L_t^H$, then
    $N_r(y)\subset L_t^H$ and $|N_r(y)|\le 1$, and
    $N_t(y)=L_*^H[t]\cup L_s^H[t]$,
  \item[\AX{(C3.c)}] If $y\in L_r^H$, then
    $N_t(y)\subset L_r^H$ and $|N_t(y)|\le 1$, and
    $N_r(y)=L_*^H[r]\cup L_s^H[r]$,
  \item[\AX{(C4.a)}] If $z\in L_*^H$, then
    $N(z)=L_t^H\cup (L_*^H\setminus\{z\})$,
  \item[\AX{(C4.b)}] If $z\in L_s^H$, then
    $N_r(z)\subset L_s^H$ and $|N_r(z)|\le 1$, and
    $N_s(z)=L_*^H[s]\cup L_t^H[s]$,
  \item[\AX{(C4.c)}] If $z\in L_r^H$, then
    $N_s(z)\subset L_r^H$ and $|N_s(z)|\le 1$, and
    $N_r(z)=L_*^H[r]\cup L_t^H[r]$.
  \end{description} 
  In particular, $L_t^H$, $L_s^H$, $L_r^H$, and $L_*^H$ are pairwise
  disjoint and $\hat x_1, \hat y_1\in L_t^H$,
  $\hat x_2, \hat z_1\in L_s^H$, $\hat y_2, \hat z_2\in L_r^H$.
 \end{lemma}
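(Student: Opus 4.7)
The plan is to mirror the structure of the proof of Lemma \ref{lem:charB}, but adapted to the trichromatic symmetry of Type \AX{(C)} 3-RBMGs. I handle the degenerate case $|L|=6$ first, then treat the two directions for $|L|>6$.

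If $|L|=6$, then being $\sthin$-thin, connected, and of Type \AX{(C)} forces $(G,\sigma)$ to be exactly a hexagon $\langle \hat x_1\hat y_1\hat z_1 \hat x_2 \hat y_2 \hat z_2 \rangle$ on the three colors; this is explained by any tree of Type \AX{(III)} whose three 2-colored subtrees are caterpillars on pairs $\{\hat x_1,\hat y_1\}$, $\{\hat z_1,\hat x_2\}$, $\{\hat y_2,\hat z_2\}$ (using Lemma \ref{lem:TreeTypes}(i)--(iii) to verify the edges). So from here on assume $|L|>6$.

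For the ``only if'' direction, Lemma \ref{lem:L^C-tree} provides, after permutation of colors, a hexagon $H$ w.r.t.\ which $(G,\sigma)$ is C-like (giving \AX{(C1)}) together with a tree $(T,\sigma)$ of Type \AX{(III)} explaining $(G,\sigma)$ such that $L_t^H=L(T(v_1))$, $L_s^H=L(T(v_2))$, $L_r^H=L(T(v_3))$, and $L_*^H=\child(\rho_T)\cap L$. Each of the nine conditions \AX{(C2.a)}--\AX{(C4.c)} then follows by a direct case analysis from Lemma \ref{lem:TreeTypes}. Concretely, for a leaf $x\in L_*^H\cap L[r]$, Lemma \ref{lem:TreeTypes}(ii),(iv) yields $N(x)=L_r^H\cup (L_*^H\setminus\{x\})$, giving \AX{(C2.a)}. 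For $x\in L_t^H=L(T(v_1))$ of color $r$, Lemma \ref{lem:TreeTypes}(i) gives $N_s(x)\subseteq L_t^H$ with $|N_s(x)|\le 1$ (by $\sthin$-thinness), while Lemma \ref{lem:TreeTypes}(ii),(iv) forces $N_t(x)=L_*^H[t]\cup L_r^H[t]$, giving \AX{(C2.b)}; condition \AX{(C2.c)} is symmetric. The six conditions on vertices of colors $s$ and $t$ follow in the same way after cycling the roles of the three 2-colored subtrees.

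For the ``if'' direction, assume \AX{(C1)}--\AX{(C4.c)}. I first verify pairwise disjointness of $L_t^H,L_s^H,L_r^H,L_*^H$: $L_*^H$ is disjoint from the other three by definition, and a would-be common vertex of two color-specific sets (say $x\in L_t^H\cap L_s^H$) forces, via \AX{(C2.b)}/\AX{(C2.c)}, a neighbor to lie simultaneously in two disjoint sets, a contradiction. The same conditions moreover imply that $G[L_t^H]$, $G[L_s^H]$, $G[L_r^H]$ are each disjoint unions of $K_1$'s and $K_2$'s (since every vertex has at most one neighbor of the unique opposite color available in the set). I then build the explaining tree $(T,\sigma)$: for each $i\in\{r,s,t\}$ build a caterpillar $(T_i,\sigma_i)$ on leaf set $L_i^H$ in which two distinct-color leaves share a parent iff they are adjacent in $G$, and attach the roots of $T_r,T_s,T_t$ together with every leaf of $L_*^H$ as children of a new root $\rho_T$. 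By construction $(T,\sigma)$ is of Type \AX{(III)}. I then check $G(T,\sigma)=(G,\sigma)$ edge by edge: within each $L_i^H$ the caterpillar construction yields exactly the prescribed edges (as in the analogous step of Lemma \ref{lem:charB}), while for cross-class edges the fact that $\sigma(L_i^H)$ misses exactly one color, together with Lemma \ref{lem:TreeTypes}(ii)--(iv), matches precisely the adjacencies demanded by \AX{(C2.a)}--\AX{(C4.c)}; since $(G,\sigma)$ contains the hexagon $H$ and satisfies \AX{(C1)}, it contains an induced $C_6$ and is thus of Type \AX{(C)}.

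The main obstacle is the bookkeeping in the ``if'' direction: one must verify that the nine neighborhood conditions, together with disjointness, uniquely pin down all cross-class edges in a way consistent with a Type \AX{(III)} tree, and in particular that the two-part definition of $L_{t,r}^P$-style sets inside each $L_i^H$ (vertices witnessed by some induced $P_3$ versus vertices that are ``isolated'' in their opposite color) correctly identifies the caterpillar structure; this is exactly the point at which the argument parallels but is strictly more involved than the Type \AX{(B)} case, because now all three colors play symmetric roles and each of the three 2-colored subtrees must simultaneously be reconstructed.
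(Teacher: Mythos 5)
Your proposal follows essentially the same route as the paper's proof: dispose of $|L|\le 6$ via Theorem \ref{thm:3c-types}, use Lemma \ref{lem:L^C-tree} plus Lemma \ref{lem:TreeTypes} for the ``only if'' direction, and for the ``if'' direction build three caterpillars on $L_r^H,L_s^H,L_t^H$, attach them together with $L_*^H$ to a common root, and verify the neighborhoods. That is exactly the paper's argument.

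One step of your sketch does not go through as written: the disjointness of the color-specific sets. You argue that $x\in L_t^H\cap L_s^H$ forces, via \AX{(C2.b)}/\AX{(C2.c)}, ``a neighbor to lie simultaneously in two disjoint sets.'' But the only disjointness available at that point is $L_*^H$ versus the others (which holds by definition); you would need, e.g., $N_s(x)\subset L_t^H$ together with $N_s(x)=L_*^H[s]\cup L_r^H[s]$, and this is only contradictory if $L_*^H[s]\neq\emptyset$ — if that set is empty you are left needing $L_r^H[s]\subset L_t^H$ to fail, i.e., precisely the disjointness you are trying to establish, so the argument is circular (and $N_s(x)$ may even be empty). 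This detour is imported from the Type \AX{(B)} case, where the sets $L_{t,r}^P$ etc.\ have a disjunctive definition that makes the neighborhood conditions necessary there. For Type \AX{(C)} the sets are defined purely by induced $P_3$'s, so disjointness is immediate from the definitions, as the paper does it: if $x\in L[r]$ lies in $L_t^H$ then $\langle x\hat z_2\hat y_2\rangle\in\mathscr{P}_3$ forces $x\hat z_2\in E(G)$, which rules out $\langle x\hat y_2\hat z_2\rangle\in\mathscr{P}_3$ and hence $x\notin L_s^H$; the other pairs are analogous. With that one-line repair your proof is complete.
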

\begin{proof}
  If $|L|\le6$, it follows from Theorem \ref{thm:3c-types} that
  $(G,\sigma)$ is a 3-RBMG of Type \AX{(C)} if and only if $|L|=6$ and
  $(G,\sigma)$ is a hexagon. 
  Hence, we assume that $|L|>6$ and $(G,\sigma)$ satisfies conditions
  \AX{(C1)} - \AX{(C4.c)}. As a consequence of \AX{(C1)}, if $(G,\sigma)$
  is an RBMG, then it must be of Type \AX{(C)}. In order to prove that
  $(G,\sigma)$ is indeed an RBMG, we construct a tree $(T,\sigma)$ based on
  the sets $L_t^H$, $L_s^H$, $L_r^H$, and $L_*^H$ and show that
  $(T,\sigma)$ explains $(G,\sigma)$.

  To this end, we first show that the sets $L_t^H$, $L_s^H$, $L_r^H$, and
  $L_*^H$ are pairwise disjoint.  By definition, $L_*^H$ is disjoint from
  $L_t^H$, $L_s^H$, and $L_r^H$. Now, let $x\in L[r]$ and assume that
  $x\in L_t^H$. Hence,
  $\langle x \hat z_2 \hat y_2 \rangle\in \mathscr{P}_3$ which in
  particular implies $x\hat z_2 \in E(G)$. Therefore,
  $\langle x \hat y_2 \hat z_2 \rangle\notin \mathscr{P}_3$ and thus,
  $x\notin L_s^H$. Repeated analogous argumentation shows that $L_t^H$,
  $L_s^H$, and $L_r^H$ are pairwise disjoint.
		
  Moreover, for the construction of $(T,\sigma)$, we show that $G[L_i^H]$
  is the disjoint union of an arbitrary number of $K_1$'s and $K_2$'s with
  $i\in \{r,s,t\}$.  By definition of $L_t^H$, we have
  $\sigma(L_t^H) \subseteq \{r,s\}$.  Since $N_s(x) \subset L_t^H$ and
  $|N_s(x)|\le 1$ for any $x\in L_{t}^H$ by \AX{(C2.b)} as well as
  $N_r(y) \subset L_t^H$ and $|N_r(y)|\le 1$ for any $y\in L_{t}^H$ by
  \AX{(C3.b)}, any vertex of $L_t^H$ has at most one neighbor in $L_t^H$.
  Thus, $G[L_t^H]$ is the disjoint union of an arbitrary number of $K_1$'s
  and $K_2$'s. Similar arguments and application of Properties \AX{(C2.c)}
  and \AX{(C4.b)}, resp., \AX{(C3.c)} and \AX{(C4.c)}, show that
  $G[L_s^H]$, resp., $G[L_r^H]$, is the disjoint union of an arbitrary
  number of $K_1$'s and $K_2$'s.

  We are now in the position to construct a tree $(T,\sigma)$ based on the
  sets $L_t^H$, $L_s^H$, $L_r^H$, and $L_*^H$.  First, for
  $i\in \{r,s,t\}$, we construct a caterpillar $(T_i,\sigma_i)$, with root
  $\rho_{T_i}$, on the leaf set $L_i^H$ such that $\parent(a)=\parent(b)$
  for any $a,b\in L_i^H$ with $\sigma(a)\neq \sigma(b)$ if and only if
  $ab\in E(G)$.  Since $G[L_i^H]$ is the disjoint union of an arbitrary
  number of $K_1$'s and $K_2$'s, the tree $T_i$ is well-defined. It is,
  however, not unique as the order of inner vertices in $T_i$ is arbitrary.
  Then $(T,\sigma)$ is given by attaching $\rho_{T_t}$, $\rho_{T_s}$,
  $\rho_{T_r}$ and $L_*^H$ to the root $\rho_T$.  Since $L_t^H$, $L_s^H$,
  $L_r^H$, and $L_*^H$ are pairwise disjoint, the tree $(T,\sigma)$ is
  well-defined.  It is easy to verify that $\hat x_1, \hat y_1 \in L_t^H$,
  $\hat x_2, \hat z_1 \in L_s^H$, and $\hat y_2, \hat z_2 \in L_r^H$.
  Therefore, $\sigma(L_t^H)=\{r,s\}$, $\sigma(L_s^H) = \{r,t\}$ and
  $\sigma(L_r^H) = \{s,t\}$.  This implies that $(T,\sigma)$ is of Type
  \AX{(III)}.  To this end, let $v_1\coloneqq \rho_{T_t}$,
  $v_2\coloneqq \rho_{T_s}$ and $v_3\coloneqq \rho_{T_r}$ and thus
  $L(T(v_1)) = L_t^H$, $L(T(v_2)) = L_s^H$ and $L(T(v_3)) = L_r^H$,
  respectively.
	
  It remains to show that $G(T,\sigma)=(G,\sigma)$.  We first consider the
  adjacencies of the vertices with color $r$.  Let $x\in L[r]$. Suppose
  first $x\in \child(\rho_T)$, i.e., $x\in L_*^H$ in $(G,\sigma)$. Clearly,
  any leaf (with color different from $\sigma(x)$) that is incident to the
  root $\rho_T$ is a neighbor of $x$ in $G(T,\sigma)$, i.e.,
  $L_*^H\setminus \{x\}\subseteq N(x)$. Moreover, since there is no leaf of
  color $r$ in $L(T(v_3))$, we have $L_r^H\subseteq N(x)$ in $G(T,\sigma)$
  by Lemma \ref{lem:TreeTypes}(iv). Hence,
  $L_r^H \cup (L_*^H\setminus \{x\})\subseteq N(x)$ in $G(T,\sigma)$.
  Furthermore, since $r$ is contained in $\sigma(L(T(v_1)))$ as well as in
  $\sigma(L(T(v_2)))$, we can apply Lemma \ref{lem:TreeTypes}(iv) to
  conclude that $x$ is not adjacent to any vertex in $L(T(v_1)) = L_t^H$
  and $L(T(v_2)) = L_s^H$ in $G(T,\sigma)$.  Thus,
  $N(x) \subseteq L_r^H \cup (L_*^H\setminus \{x\})$ and therefore,
  $N(x) = L_r^H \cup (L_*^H\setminus \{x\})$ in $G(T,\sigma)$ for all
  $x\in L[r]\cap L_*^H$.  By Property (C2.a), the latter is also satisfied
  in $(G,\sigma)$ for all $x\in L[r]\cap L_*^H$.  Hence, the respective
  neighborhoods of all $x\in L[r]\cap L_*^H$ in $G(T,\sigma)$ and
  $(G,\sigma)$ coincide.

  Now let $x\in L(T(v_1)) = L_t^H$. By construction and Lemma
  \ref{lem:TreeTypes}(i), we have $xy\in E(G(T,\sigma))$, resp.,
  $xy\in E(G)$ for $y\in L[s]$ if and only if $\parent(x)=\parent(y)$.
  Hence, the respective neighborhoods $N_s(x)$ of all
  $x\in L(T(v_1))\cap L[r]=L_t^H\cap L[r]$ in $G(T,\sigma)$ and
  $(G,\sigma)$ coincide.  Now consider the neighborhood $N_t(x)$ in
  $G(T,\sigma)$.  Since $r\in \sigma(L(T(v_2)))=L_s^H$, Lemma
  \ref{lem:TreeTypes}(ii) implies that $x$ is not adjacent to any vertex in
  $L_s^H$. Hence, as $t\notin \sigma(L_t^H)$, if follows
  $N_t(x)\subseteq L_*^H[t]\cup L_r^H[t]$.  Since there is no leaf of color
  $t$ in $L(T(v_1))$, we have $L_*^H[t]\subseteq N_t(x)$ by Lemma
  \ref{lem:TreeTypes}(iv).  Moreover, since $r\notin \sigma(L(T(v_3)))$,
  Lemma \ref{lem:TreeTypes}(ii) implies
  $L_r^H[t]=L(T(v_3))\cap L[t]\subseteq N_t(x)$. Hence,
  $L_*^H[t]\cup L_r^H[t] \subseteq N_t(x)$ and we therefore conclude
  $N_t(x) = L_*^H[t]\cup L_r^H[t]$.  By Property (C2.b), the latter is also
  satisfied in $(G,\sigma)$ for all $x\in L(T(v_1)) = L_t^H$.  Hence, the
  respective neighborhoods $N_t(x)$ of all $x\in L_t^H\cap L[r]$ are
  identical in $G(T,\sigma)$ and $(G,\sigma)$. Since
  $N_t(x)\cup N_s(x) = N(x)$, the neighborhoods $N(x)$ coincide in
  $G(T,\sigma)$ and $(G,\sigma)$ for every $x\in L_t^H\cap L[r]$.  By
  similar arguments, one can show that the same is true for any
  $x\in L_s^H\cap L[r]$.
	
  By symmetry, analogous arguments show that the neighborhoods of leaves
  with color $s$ or $t$ are the same in $(G,\sigma)$ and $G(T,\sigma)$. We
  therefore conclude $(G,\sigma)=G(T,\sigma)$, i.e., $(T,\sigma)$ explains
  $(G,\sigma)$.
 
  Conversely, let $(G,\sigma)$ be a connected, $\sthin$-thin 3-RBMG of Type
  \AX{(C)}.  By Theorem \ref{thm:3c-types}, $(G,\sigma)$ is either a
  hexagon, or $|L|>6$ and it contains a hexagon $H$ of the form
  $(r,s,t,r,s,t)$.  In the latter case, Lemma \ref{lem:L^C-tree} implies
  that $(G,\sigma)$ is always C-like w.r.t.\ some hexagon
  $H = \langle \hat x_1\hat y_1\hat z_1 \hat x_2 \hat y_2 \hat z_2 \rangle$
  with $\hat x_i\in L[r]$, $\hat y_i\in L[s]$, $\hat z_i \in L[t]$. Similar
  arguments as in the proof of Lemma \ref{lem:L^C-tree} show that w.l.o.g.\
  we can assume $|N_t(\hat x_1)|>1$.  Hence, $(G,\sigma)$ satisfies
  Property (C1).  Moreover, Lemma \ref{lem:L^C-tree} implies that there
  exists a tree $(T,\sigma)$ of Type \AX{(III)} that explains $(G,\sigma)$
  and such that $L(T(v_1)) = L_t^H$, $L(T(v_2)) = L_s^H$,
  $L(T(v_3)) = L_r^H$ and $L_*^H=\child(\rho_T)\cap L$.  Now, careful
  application of Lemma \ref{lem:TreeTypes}(i)-(iv), which we leave to the
  reader, shows that Conditions (C2.a) to (C4.c) are satisfied.  
\end{proof}

If $(G,\sigma)$ is a 3-RBMG of Type \AX{(C)}, an analogous construction as
in the case of Type \AX{(B)} 3-RBMGs can be used to obtain $(G,\sigma)$
from the sets $L_t^H$, $L_s^H$, $L_r^H$, and $L_*^H$. Again, this
information may be useful for correcting the orthology graph.  If $|L|=6$
then $(G,\sigma)$ is already a hexagon
$H = \langle x_1 y_1 z_1 x_2 y_2 z_2 \rangle $ such that, up to permutation
of the colors, $\sigma(x_i)=r$, $\sigma(y_i)=s$, and $\sigma(z_i)=t$,
$i\in\{1,2\}$.  This 3-colored graph is explained \NEW{by} the two distinct trees
$T_1\coloneqq ((x_1,y_1),(z_1,x_2),(y_2,z_2))$ and
$T_2\coloneqq ((y_1,z_1),(x_2,y_2),(z_2,x_1))$, given in standard Newick
tree format.  These two trees induce different leaf sets $L(T(v_i))$, where
$v_i\in \child(\rho_T)\cap V^0(T)$ in the corresponding tree.  One can
show, however, that for $|L|>6$, every hexagon defines the same sets
$L_i^H$, $i\in\{t,s,r\}$ and $L_*^H$.  To this end we will need the
following technical result:

\begin{lemma}\label{lem:C6}
  Let $(T,\sigma)$ be a tree of Type \AX{(III)} with root $\rho_T$
  explaining a connected $\sthin$-thin 3-RBMG $(G,\sigma$) and let
  $H\coloneqq \langle \hat x_1 \hat y_1\hat z_1\hat x_2\hat y_2\hat z_2
  \rangle$ be a hexagon in $(G,\sigma)$ such that $\hat x_i\in L[r]$,
  $\hat y_i\in L[s]$ and $\hat z_i\in L[t]$ for distinct colors $r,s,t$ and
  $1\leq i\leq 2$.  Then,
  $\hat x_i, \hat y_i, \hat z_i \notin \child(\rho_T)$, $1\leq i\leq 2$.
\end{lemma}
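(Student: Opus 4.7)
The strategy is to assume, for contradiction, that some vertex of the hexagon $H$ is a child of $\rho_T$, exploit the tight color/adjacency constraints of a Type \AX{(III)} tree via Lemma \ref{lem:TreeTypes}(iv), and derive a contradiction with one of the six edges of $H$. By the cyclic symmetry of $H$ and the freedom to permute the three colors, it suffices to rule out a single representative case, so I will only treat $\hat x_1 \in \child(\rho_T)\cap L$.

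Let $v_1,v_2,v_3 \in \child(\rho_T)$ be the children with $\sigma(L(T(v_1)))=\{r,s\}$, $\sigma(L(T(v_2)))=\{r,t\}$, $\sigma(L(T(v_3)))=\{s,t\}$. Suppose $\hat x_1 \in \child(\rho_T)\cap L$. Since $(G,\sigma)$ is $\sthin$-thin, Lemma \ref{lem:2col-subtrees} implies that $\hat x_1$ is the unique leaf of color $r$ attached to $\rho_T$. I will first locate the remaining hexagon vertices. Using Lemma \ref{lem:TreeTypes}(iv) applied to the Type \AX{(III)} tree, the edges/non-edges of $H$ incident to $\hat x_1$ give: $\hat y_2\notin L(T(v_3))$ (else $\hat x_1\hat y_2\in E(G)$, since $r\notin\{s,t\}$) and $\hat y_2\notin L(T(v_2))$ (wrong color); and $\hat y_2$ cannot lie at the root either, because if both $\hat x_1$ and $\hat y_2$ are attached to $\rho_T$ then $\lca(\hat x_1,y')=\rho_T$ for all $y'\in L[s]$ and symmetrically, forcing $\hat x_1\hat y_2\in E(G)$. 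Hence $\hat y_2\in L(T(v_1))$. By the analogous argument for the non-edge $\hat x_1\hat z_1$, I will conclude $\hat z_1\in L(T(v_2))$.

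The final blow uses the edges incident to $\hat x_2$. Since $\hat x_1$ is the unique color-$r$ leaf at $\rho_T$ and $r\notin\sigma(L(T(v_3)))$, we must have $\hat x_2\in L(T(v_1))$ or $\hat x_2\in L(T(v_2))$. In the first case, $\hat z_1\in L(T(v_2))$ gives $\lca(\hat z_1,\hat x_2)=\rho_T$ while any $x'\in L(T(v_2))\cap L[r]$ (which exists since $\sigma(L(T(v_2)))=\{r,t\}$) satisfies $\lca(\hat z_1,x')\preceq v_2 \prec \rho_T$; thus $\hat x_2$ is not a best match of $\hat z_1$, contradicting $\hat z_1\hat x_2\in E(G)$. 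In the second case, $\hat y_2\in L(T(v_1))$ and $\hat x_2\in L(T(v_2))$ give $\lca(\hat y_2,\hat x_2)=\rho_T$, but any $x'\in L(T(v_1))\cap L[r]$ satisfies $\lca(\hat y_2,x')\preceq v_1\prec\rho_T$; thus $\hat x_2$ is not a best match of $\hat y_2$, contradicting $\hat x_2\hat y_2\in E(G)$.

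Either case yields a contradiction, so $\hat x_1\notin\child(\rho_T)$. The hexagon $H$ is invariant (up to relabeling) under the cyclic shift $\hat x_1\to\hat y_1\to\hat z_1\to\hat x_2\to\hat y_2\to\hat z_2$ combined with a matching cyclic permutation of the three colors; each such relabeling preserves the statement that $(T,\sigma)$ is of Type \AX{(III)}. Applying these symmetries, the same argument rules out each $\hat y_i,\hat z_i,\hat x_2$, completing the proof. The main obstacle is the bookkeeping in the case analysis: one must carefully keep track of which child of $\rho_T$ each hexagon vertex could lie in, using both the color of the vertex and the edge/non-edge pattern of the hexagon; but once $\hat y_2$ and $\hat z_1$ are pinned down via Lemma \ref{lem:TreeTypes}(iv), the two sub-cases for $\hat x_2$ are forced and both fail against hexagon edges.
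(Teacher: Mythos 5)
Your proof is correct, and it uses the same basic toolkit as the paper (contradiction via Lemma \ref{lem:TreeTypes} applied to a Type \AX{(III)} tree, $\sthin$-thinness to forbid two root-incident leaves of the same color, and a symmetry reduction to a single representative vertex), but the deductive path is different. The paper follows the hexagon \emph{edges} forward: from $\hat x_1\in\child(\rho_T)$ and $\hat x_1\hat y_1\in E(G)$ it places $\hat y_1\preceq_T v_3$, then uses $\hat y_1\hat z_1\in E(G)$ and Lemma \ref{lem:TreeTypes}(i) to force $\hat z_1\preceq_T v_3$, and contradicts the chord $\hat x_1\hat z_1\notin E(G)$. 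You instead use the \emph{chords} $\hat x_1\hat y_2,\hat x_1\hat z_1\notin E(G)$ to pin $\hat y_2\in L(T(v_1))$ and $\hat z_1\in L(T(v_2))$, and then contradict one of the hexagon edges incident to $\hat x_2$ in each of the two possible locations of $\hat x_2$. A side benefit of your route is that it is airtight on a point the paper's write-up leaves dangling: the paper explicitly splits into ``$\hat y_1\in\child(\rho_T)$ or $\hat y_1\preceq_T v_3$'' but only derives a contradiction in the second branch, whereas your argument never needs that split (the chord-based placement of $\hat y_2$ and $\hat z_1$ is unconditional once $\hat x_1$ is assumed to sit at the root). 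Your explicit description of the symmetry --- cyclic shift of $H$ combined with the matching cyclic permutation of $\{r,s,t\}$, which preserves Type \AX{(III)} --- is also cleaner than the paper's bare ``due to symmetry''.
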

\begin{proof}
  By definition of $(T,\sigma)$, there exist distinct
  $v_1,v_2,v_3\in \child(\rho_T)$ such that $\sigma(L(T(v_1)))=\{r,s\}$,
  $\sigma(L(T(v_2)))=\{r,t\}$, and $\sigma(L(T(v_3)))=\{s,t\}$, and
  $\child(\rho_T)\setminus \{v_1,v_2,v_3\} \subset L$. Assume, for
  contradiction, that $\hat x_1\in \child(\rho_T)$.  Then, either
  $\hat y_1\in \child(\rho_T)$ or, by Lemma \ref{lem:TreeTypes}(iv),
  $\hat y_1\preceq_T v_3$. In the latter case, Lemma \ref{lem:TreeTypes}(i)
  implies $\hat z_1\preceq_T v_3$ and thus $\hat x_1\hat z_1\in E(G)$ by
  Lemma \ref{lem:TreeTypes}(iv); contradicting that $H$ is a hexagon.
  Hence, $\hat x_1\notin \child(\rho_T)$. Due to symmetry, we can apply
  similar arguments to the remaining vertices
  $\hat x_2, \hat y_i, \hat z_i$, $1\leq i\leq 2$, to show that none of
  them is contained in $\child(\rho_T)$.  
\end{proof}

We are now in the position to prove the uniqueness of $L_i^H$,
$i\in\{r,s,t\}$, and $L_*^H$.
\begin{lemma}\label{lem:uniqueLH}
  Let $(G,\sigma)$ be a connected $\sthin$-thin 3-RBMG of Type \AX{(C)}
  with leaf set $|L|>6$. Moreover, let the sets $L_t^H$, $L_s^H$, $L_r^H$,
  and $L_*^H$ be defined w.r.t.\ an induced hexagon
  $ H \coloneqq \langle \hat x_1\hat y_1\hat z_1\hat x_2\hat y_2\hat z_2
  \rangle$ with $|N_t(\hat x_1)|>1$, where $\hat x_i\in L[r]$,
  $\hat y_i\in L[s]$ and $\hat z_i\in L[t]$ for distinct colors
  $r,s,t$. Then, for any hexagon $ H'$ of the form $(r,s,t,r,s,t)$ we have
  $L_t^{H'}=L_t^H$, $L_s^{H'}=L_s^H$, $L_r^{H'}=L_r^H$, and
  $L_*^{H'}=L_*^H$.
\end{lemma}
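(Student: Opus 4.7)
The plan is to reduce everything to the tree representation guaranteed by Lemma \ref{lem:L^C-tree}, and then show that any hexagon $H'$ of the required form has its six vertices distributed across the three subtrees of the root in exactly the same pattern as $H$. First, I fix a Type \AX{(III)} tree $(T,\sigma)$ with root $\rho_T$ and children $v_1,v_2,v_3 \in \child(\rho_T)$ satisfying $\sigma(L(T(v_1)))=\{r,s\}$, $\sigma(L(T(v_2)))=\{r,t\}$, $\sigma(L(T(v_3)))=\{s,t\}$, and such that $L_t^H = L(T(v_1))$, $L_s^H = L(T(v_2))$, $L_r^H = L(T(v_3))$, $L_*^H = \child(\rho_T)\cap L$, with $\hat x_1,\hat y_1 \in L(T(v_1))$, $\hat z_1,\hat x_2 \in L(T(v_2))$, and $\hat y_2,\hat z_2 \in L(T(v_3))$, each same-subtree pair sharing a parent in $T$ by $\sthin$-thinness.

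Next, I take an arbitrary hexagon $H' = \langle\hat x'_1\hat y'_1\hat z'_1\hat x'_2\hat y'_2\hat z'_2\rangle$ of the form $(r,s,t,r,s,t)$. By Lemma \ref{lem:C6}, every vertex of $H'$ lies in $L(T(v_1))\cup L(T(v_2))\cup L(T(v_3))$, so none is in $L_*^H$. I then perform a case analysis on the subtree containing $\hat x'_1$, driven by the edge rules of Lemma \ref{lem:TreeTypes}: an edge of $G$ between $L(T(v_i))$ and $L(T(v_j))$ with $i\neq j$ requires the two endpoints to have exactly the two colors in the symmetric difference $\sigma(L(T(v_i)))\bigtriangleup\sigma(L(T(v_j)))$, while within $T(v_i)$ any vertex has at most one neighbor, namely its unique sibling of the other color. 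Tracing the six edges and six non-edges of $H'$ cyclically shows that the six vertices must partition into three pairs, one pair per subtree, each pair sharing a parent; moreover, the hypothesis $|N_t(\hat x'_1)|>1$ rules out every cyclic phase in which $\hat x'_1 \in L(T(v_2))$, since any $r$-vertex $x\in L(T(v_2))$ has as only possible $t$-neighbor its unique $t$-sibling in $T(v_2)$, giving $|N_t(x)|\leq 1$. This forces $\hat x'_1,\hat y'_1 \in L(T(v_1))$, $\hat z'_1,\hat x'_2 \in L(T(v_2))$, and $\hat y'_2,\hat z'_2 \in L(T(v_3))$.

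Once this subtree assignment is established, the sets $L_t^{H'}, L_s^{H'}, L_r^{H'}$ can be evaluated from their definitions exactly as was done for $H$ in the proof of Lemma \ref{lem:L^C-tree}: the pairs $(\hat z'_2,\hat y'_2)$ and $(\hat z'_1,\hat x'_2)$ play the same structural role as $(\hat z_2,\hat y_2)$ and $(\hat z_1,\hat x_2)$, each being a within-subtree edge. The induced $P_3$s appearing in the definitions then yield $L_t^{H'} = L(T(v_1)) = L_t^H$, and by symmetry $L_s^{H'} = L_s^H$ and $L_r^{H'} = L_r^H$. Consequently $L_*^{H'} = V(G)\setminus(L_t^{H'}\cup L_s^{H'}\cup L_r^{H'}) = L_*^H$.

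The main obstacle is the case analysis in the second paragraph: while each individual step is a routine application of Lemma \ref{lem:TreeTypes}, enumerating all subtree assignments for the six hexagon vertices through the cyclic structure requires care, and one must verify that the only extra candidates appearing in the $P_3$-definitions (e.g.\ the unique $s$-sibling of $\hat z'_2$ or $r$-sibling of $\hat z'_1$ inside their respective subtrees) are excluded because they coincide with other hexagon vertices, so that the $L_t^{H'}$ computation really collapses to $L(T(v_1))$ without leaking vertices from the other subtrees.
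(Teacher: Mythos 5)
Your overall strategy is the same as the paper's: fix a Type \AX{(III)} tree via Lemma \ref{lem:L^C-tree}, use Lemma \ref{lem:C6} to place the six vertices of $H'$ inside the three two-colored subtrees, trace the hexagon with Lemma \ref{lem:TreeTypes}, and then evaluate the $\mathscr{P}_3$-definitions. The tracing itself and the observation that an $r$-vertex of $L(T(v_2))$ has at most one $t$-neighbor are both sound. The problem is the step where you invoke ``the hypothesis $|N_t(\hat x'_1)|>1$'' to break the cyclic symmetry: no such hypothesis exists for $H'$. The lemma only assumes $|N_t(\hat x_1)|>1$ for the reference hexagon $H$; an arbitrary hexagon $H'$ of the form $(r,s,t,r,s,t)$ may perfectly well sit in the ``opposite phase,'' i.e.\ $\hat x'_1,\hat z'_2\in L(T(v_2))$, $\hat y'_1,\hat z'_1\in L(T(v_3))$, $\hat x'_2,\hat y'_2\in L(T(v_1))$, and in that phase a literal evaluation of the defining formulas would put $\hat x'_1$ into $L_t^{H'}$ even though $\hat x'_1\in L_s^{H}$, so the conclusion would be false as you have set it up. The paper resolves this by first arguing that the two $r$-vertices of $H'$ must lie in different subtrees among $L(T(v_1)),L(T(v_2))$, then relabelling $H'$ (swapping the indices $1$ and $2$, which preserves the form $(r,s,t,r,s,t)$) so that $x'_1\in L(T(v_1))$, and only then \emph{deriving} $|N_t(x'_1)|=|N_t(\hat x_1)|>1$ from the fact that all $r$-leaves of $L(T(v_1))$ have the same $t$-neighborhood. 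Your argument runs this implication backwards from an unavailable premise.

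A second, related omission: the sets $L_t^{H'},L_s^{H'},L_r^{H'}$ are only defined once one chooses a pivot vertex $v\in V(H')$ and a color $c$ with $|N_c(v)|>1$ (Definition \ref{def:C-like}), and different legitimate choices instantiate the three defining formulas differently. A substantial part of the paper's proof is devoted to checking that every admissible pair $(v,c)$ — for $H$ itself and then for $H'$ — yields the same three sets, using \AX{(C2.b)}, \AX{(C3.b)}, \AX{(C4.b)} to pin down which color $c$ can occur for each pivot. Your proposal fixes one pivot for $H'$ and never addresses the others, so even after repairing the phase issue the claimed equality $L_t^{H'}=L_t^H$ is only established for one of the possible instantiations of the left-hand side.
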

\begin{proof}
  Let $(T,\sigma)$ be a leaf-colored tree explaining $(G,\sigma)$, which,
  by Theorem \ref{thm:3c-types} can be chosen to be of Type \AX{(III)},
  i.e., there are distinct $v_1,v_2,v_3\in \child(\rho_T)$ such that
  $\sigma(L(T(v_1)))=\{r,s\}$, $\sigma(L(T(v_2)))=\{r,t\}$, and
  $\sigma(L(T(v_3)))=\{s,t\}$, and
  $\child(\rho_T)\setminus \{v_1, v_2, v_3\} \subset L$. In particular,
  Lemma \ref{lem:L^C-tree} implies that $(T,\sigma)$ can be chosen such
  that $L_t^H=L(T(v_1))$, $L_s^H=L(T(v_2))$, $L_r^H=L(T(v_3))$, and
  $L_*^H=\child(\rho_T)\cap L$. Lemma \ref{lem:C6} implies
  $V(H)\cap \child(\rho_T)=\emptyset$. Since $\hat x_1$ has more than one
  neighbor of color $t$, Lemma \ref{lem:TreeTypes}(i) and $\sthin$-thinness
  of $(G,\sigma)$ imply that $\hat x_1$ cannot be contained in $L(T(v_2))$
  as otherwise $|N_t(\hat x_1)|\le 1$. Hence, $\hat x_1\preceq
  v_1$. Applying Lemma \ref{lem:TreeTypes}(i)+(ii), we then conclude that
  $\hat y_1\preceq_T v_1$, $\hat x_2, \hat z_1\preceq_T v_2$, and
  $\hat y_2, \hat z_2\preceq_T v_3$. In other words,
  $\hat x_1,\hat y_1\in L_t^H$, $\hat x_2,\hat z_1\in L_s^H$, and
  $\hat y_2,\hat z_2\in L_r^H$.
  
  We proceed to show that the leaf sets $L_t^H$, $L_s^H$, $L_r^H$, and
  $L_*^H$ remain unchanged if they are taken w.r.t.\ some other vertex
  $v\in V(H)\setminus \{\hat x_1\}$ with $|N_c(v)|>1$ for some color
  $c\neq \sigma(v)$. Note that, as $(G,\sigma)$ is explained by
  $(T,\sigma)$ and $\hat x_1\in L_t^H$, we can apply Property \AX{(C2.b)}
  to conclude $|N_s(\hat x_1)|\le 1$.  Suppose first $v=\hat y_1$ and
  $|N_c(\hat y_1)|>1$. Since $\hat y_1\in L_t^H$, we can apply Property
  \AX{(C3.b)} and obtain $|N_r(\hat y_1)|\le 1$. Hence we have $c=t$. The
  definition of $L_t$ with $v=\hat y_1$ and $c=t$ implies that
  $L_t^{H'}=\{y\mid \langle y\hat z_1 \hat x_2\rangle\in
  \mathscr{P}_3\}\cup \{x\mid\langle x\hat z_2 \hat y_2\rangle\in
  \mathscr{P}_3\}=L_t^H$. Similarly, one obtains $L_s^{H'}=L_s^H$ and
  $L_r^{H'}=L_r^H$.  Now, let $v=\hat z_1$ and $|N_c(\hat z_1)|>1$. Then,
  as $(T,\sigma)$ explains $(G,\sigma)$ and $\hat z_1\in L_s^H$, Property
  \AX{(C4.b)} implies $|N_r(\hat z_1)|\le 1$. Hence $c=s$.

  Again, the definition of $L_t^H$ with $v=\hat z_1$ and $c=s$ implies that
  $L_t^{H'}=L_t^H$.  Similarly, the definition of $L_s^H$ and $L_r^H$ with
  $v=\hat z_1$ and $c=s$ shows that $L_s^{H'}=L_s^H$, and
  $L_r^{H'}=L_r^H$. Applying similar arguments to $v=\hat y_2$,
  $v=\hat z_2$ and $v=\hat x_2$ under the assumption that $|N_c(v)|>1$ for
  some color $c\neq \sigma(v)$, shows that $v$ and $c$ always induce the
  same leaf sets $L_t^H$, $L_s^H$, $L_r^H$.  The latter implies that also
  the set $L_*^H$ is independent from the particular choice of the vertices
  $v$ in $H$.

  Now let $H'\coloneqq \langle x_1 y_1 z_1 x_2 y_2 z_2\rangle \neq H$ with
  $x_i\in L[r]$, $y_i\in L[s]$, $z_i\in L[t]$. Lemma \ref{lem:C6} implies
  that $x_1$ and $x_2$ are not incident to the root of $(T,\sigma)$, hence
  $x_1,x_2\in L(T(v_1))\cup L(T(v_2))$. Assume, for contradiction, that
  they are contained in the same subtree, say $x_1,x_2\in L(T(v_1))$. Then,
  as $x_2z_1\in E(G)$ and $\sigma(L(T(v_1)))=\{r,s\}$, Lemma
  \ref{lem:TreeTypes}(ii) implies that $z_1$ cannot reside within a subtree
  that contains leaves of color $r$, \NEW{thus $z_1\in
    L(T(v_3))$}. Therefore, we can again apply Lemma
  \ref{lem:TreeTypes}(ii) to conclude that $x_1z_1\in E(G)$; a
  contradiction since $H'$ is a hexagon. Analogously one shows that $x_1$
  and $x_2$ cannot be both located in the subtree $T(v_2)$. Hence, we can
  w.l.o.g.\ assume $x_1\in L(T(v_1))$. Then, by construction of
  $(T,\sigma)$, we have $\lca_T(x_1,z)=\lca_T(\hat x_1,z)$ for any
  $z\in L[t]$, thus $N_t(x_1)=N_t(\hat x_1)$ and in particular
  $|N_t(x_1)|>1$. Applying Lemma \ref{lem:C6} and analogous argumentation
  as for $H$ yields $x_1, y_1\preceq_T v_1$, $x_2, z_1\preceq_T v_2$, and
  $y_2, z_2\preceq_T v_3$.  Thus, In other words, $ x_1, y_1\in L_t^H$,
  $ x_2, z_1\in L_s^H$, and $ y_2, z_2\in L_r^H$.

  Consider first $L_t^H$ and let $x\in L[r]$. By definition, $x\in L_t^H$
  if and only if $\langle x\hat z_2\hat y_2\rangle$ is an induced
  $P_3$. Since $\sigma(L(T(v_1)))=\{r,s\}$ and $\sigma(L(T(v_3)))=\{s,t\}$,
  Lemma \ref{lem:TreeTypes}(ii) implies
  $\langle x z_2 y_2\rangle\in \mathscr{P}_3$, i.e., $x\in
  L_t^{H'}$. Conversely, suppose $x\in L_t^{H'}$, thus
  $\langle x z_2 y_2\rangle\in \mathscr{P}_3$. Since $(T,\sigma)$ explains
  $(G,\sigma)$ and $y_2, z_2\preceq_T v_3$, Lemma
  \ref{lem:TreeTypes}(ii)+(iv) implies $x\in L(T(v_1))=L_t^H$.  Hence,
  $L_t^H\cap L[r]=L_t^{H'}\cap L[r]$. Similar arguments show
  $L_t^H\cap L[s]=L_t^{H'}\cap L[s]$ and thus $L_t^H=L_t^{H'}$.  By
  symmetry, analogous arguments yield $L_s^H=L_s^{H'}$ and
  $L_r^H=L_r^{H'}$. Taken together, this implies
  $L_*^H=L_*^{H'}$. Analogous argumentation as used for $H$ shows that any
  vertex $v\in V(H')$ with $|N_c(v)|>1$, $c\neq\sigma(v)$, induces the
  same leaf sets $L_t^{H'}$, $L_s^{H'}$, $L_r^{H'}$, and $L_*^{H'}$, which
  finally completes the proof.
  
\end{proof}

In contrast to Observation \ref{fact:NOindep-P-choice} for Type \AX{(B)}
3-RBMGs, we obtain the following result.
\begin{corollary}\label{cor:indep-H-choice}
  Let $(G,\sigma)$ be a connected $\sthin$-thin 3-RBMG of Type \AX{(C)}.
  If $(G,\sigma)$ is C-like w.r.t.\ a hexagon $H$, then $(G,\sigma)$ is
  C-like w.r.t.\ every hexagon of the form $(r,s,t,r,s,t)$.  
\end{corollary}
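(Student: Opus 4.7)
My plan is to exploit two results already established in the technical part: the invariance of the sets $L_t^H, L_s^H, L_r^H, L_*^H$ under the choice of defining hexagon (Lemma \ref{lem:uniqueLH}), combined with the adjacency constraint \AX{(C2.b)} from Lemma \ref{lem:charC}, which pins down the $t$-neighborhood of every $r$-colored vertex in $L_t^H$. The strategy is to transfer the witness for C-likeness from $H$ to an arbitrary hexagon $H'$ by showing that whichever red vertex of $H'$ lands in $L_t^H$ inherits the same large $t$-neighborhood as $\hat x_1$.

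First, I would observe that if $(G,\sigma)$ is C-like w.r.t.\ $H$, then necessarily $|L|>6$, since in a $6$-vertex hexagon every vertex has exactly one neighbor of each other color. After relabeling along $H$ and permuting colors, I may assume the witnessing vertex is $\hat x_1$ with $|N_t(\hat x_1)|>1$. By Lemma \ref{lem:L^C-tree} there is a Type \AX{(III)} tree $(T,\sigma)$ explaining $(G,\sigma)$ with $L(T(v_1))=L_t^H$, $L(T(v_2))=L_s^H$, $L(T(v_3))=L_r^H$, and $L_*^H=\child(\rho_T)\cap L$. Applying \AX{(C2.b)} to $\hat x_1\in L_t^H$ yields $N_t(\hat x_1)=L_*^H[t]\cup L_r^H[t]$, so $|L_*^H[t]\cup L_r^H[t]|>1$.

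Now let $H'=\langle x_1 y_1 z_1 x_2 y_2 z_2\rangle$ be an arbitrary hexagon of the form $(r,s,t,r,s,t)$ in $G$. Lemma \ref{lem:uniqueLH} gives $L_t^{H'}=L_t^H$, $L_s^{H'}=L_s^H$, $L_r^{H'}=L_r^H$, and $L_*^{H'}=L_*^H$. By Lemma \ref{lem:C6}, no vertex of $H'$ lies in $\child(\rho_T)$, so each vertex of $H'$ resides in one of $L(T(v_1)), L(T(v_2)), L(T(v_3))$. Repeating the short case distinction from the proof of Lemma \ref{lem:uniqueLH}, the two red vertices $x_1,x_2$ cannot both sit in $L(T(v_1))$ or both in $L(T(v_2))$, since Lemma \ref{lem:TreeTypes}(ii) would then force an edge $x_iz_j$ that is absent from the induced $C_6$. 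After cyclically rotating the labeling of $H'$ if necessary, I may therefore assume $x_1\in L(T(v_1))=L_t^{H'}$.

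Finally, I apply \AX{(C2.b)} once more, this time to $x_1\in L_t^{H'}$, obtaining $N_t(x_1)=L_*^{H'}[t]\cup L_r^{H'}[t]=L_*^H[t]\cup L_r^H[t]$, which has cardinality strictly greater than one by the first step. Hence $x_1\in V(H')$ witnesses that $(G,\sigma)$ is C-like w.r.t.\ $H'$. The only mildly technical point is the relabeling of $H'$ to place a red vertex into $L_t^H$, but this is essentially the same case analysis already carried out inside Lemma \ref{lem:uniqueLH} and introduces no new ideas; the heavy lifting has already been done by Lemmas \ref{lem:uniqueLH} and \ref{lem:charC}.
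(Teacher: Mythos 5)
Your proof is correct and follows essentially the same route as the paper, which derives the corollary directly from Lemma~\ref{lem:uniqueLH}: inside that lemma's proof the witness is transferred to the red vertex of $H'$ lying in $L(T(v_1))$ via $\lca_T(x_1,z)=\lca_T(\hat x_1,z)$ for all $z\in L[t]$, which is exactly your final step phrased through \AX{(C2.b)} instead. The only cosmetic caveat is that in that last step you should invoke \AX{(C2.b)} for $x_1\in L_t^{H}$ (the conditions of Lemma~\ref{lem:charC} are known to hold w.r.t.\ $H$) rather than w.r.t.\ $H'$, whose C-likeness is the very thing being proved; since $L_t^{H'}=L_t^{H}$ this changes nothing of substance.
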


\subsection{Characterization of 3-RBMGs and Algorithmic Results}

For later reference, finally, we summarize the main results of this
section, i.e., Theorem \ref{thm:3c-types} and the characterizations of the
three types in Lemmas \ref{lem:charA}, \ref{lem:charB}, and
\ref{lem:charC}:

\begin{theorem}
  \label{thm:char3cBMG} 
  An undirected, connected, properly 3-colored, $\sthin$-thin graph
  $(G,\sigma)$ is a 3-RBMG if and only if it satisfies either conditions
  \AX{(A1)} and \AX{(A2)}, \AX{(B1)}-\AX{(B3.b)}, or \AX{(C1)}-\AX{(C3.c)}
  and thus, is of Type \AX{(A)}, \AX{(B)}, or \AX{(C)}.
\end{theorem}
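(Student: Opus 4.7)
The proof is essentially a synthesis of Theorem \ref{thm:3c-types} together with Lemmas \ref{lem:charA}, \ref{lem:charB}, and \ref{lem:charC}, so my plan is simply to package these prior results into the stated biconditional.

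For the ``if'' direction, I will split on which block of conditions is satisfied. If $(G,\sigma)$ satisfies \AX{(A1)} and \AX{(A2)}, then Lemma \ref{lem:charA} immediately yields that $(G,\sigma)$ is a 3-RBMG of Type \AX{(A)}, except that Lemma \ref{lem:charA} also requires $G\notin \mathscr{P}_3$; but since $(G,\sigma)$ is properly 3-colored and $\sthin$-thin, any $P_3$ would have only two colors, which together with \AX{(A1)} forces a configuration that either has $|V(G)|=3$ with three distinct colors (a $K_3$ handled separately) or does not yield a $P_3$ at all, so this exclusion is automatic in our setting and I will note it briefly. If \AX{(B1)}--\AX{(B3.b)} hold, I will apply Lemma \ref{lem:charB} — noting that the seemingly missing conditions \AX{(B4.a)}, \AX{(B4.b)} are, as remarked after the proof of Lemma \ref{lem:charB}, consequences of \AX{(B2.a)}--\AX{(B3.b)} by the symmetric roles of $s$ and $t$ in the definition of $L_s^P$ and $L_t^P$ — so the reduced list suffices. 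Analogously, if \AX{(C1)}--\AX{(C3.c)} hold, Lemma \ref{lem:charC} gives the conclusion, again using that \AX{(C4.a)}--\AX{(C4.c)} follow by symmetry from the listed conditions (plus the small base case $|L|=6$ where $(G,\sigma)$ is a hexagon and is explained directly by a Type \AX{(III)} caterpillar, as discussed after Lemma \ref{lem:charC}).

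For the ``only if'' direction, suppose $(G,\sigma)$ is an $\sthin$-thin, connected, properly 3-colored RBMG. By Theorem \ref{thm:3c-types}, $(G,\sigma)$ falls into exactly one of the three classes \AX{(A)}, \AX{(B)}, \AX{(C)}. If it is of Type \AX{(A)}, Lemma \ref{lem:charA} provides \AX{(A1)} and \AX{(A2)}. If it is of Type \AX{(B)}, Lemma \ref{lem:L^B-tree} provides an induced $P_4$ of the required form showing \AX{(B1)}, after which Lemma \ref{lem:charB} delivers \AX{(B2.a)}--\AX{(B3.b)}. If it is of Type \AX{(C)}, Lemma \ref{lem:L^C-tree} yields an appropriate hexagon $H$ with the extra degree condition $|N_t(\hat x_1)|>1$ (for $|L|>6$; the case $|L|=6$ being the pure hexagon), giving \AX{(C1)}, and Lemma \ref{lem:charC} then delivers \AX{(C2.a)}--\AX{(C3.c)}.

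Since the three types are mutually exclusive by Definition \ref{def:typesABC} (they are separated by the presence of a $K_3$ without an induced $P_4$, an induced $P_4$ without an induced $C_n$, $n\geq 5$, or an induced hexagon of the form $(r,s,t,r,s,t)$), exactly one of the three condition blocks holds in each case, establishing the equivalence. No step is really hard; the only care needed is to confirm that the reduced sets \AX{(B2.a)}--\AX{(B3.b)} and \AX{(C2.a)}--\AX{(C3.c)} are sufficient after exploiting the symmetry in the definitions of $L_t^P, L_s^P$ and of $L_t^H, L_s^H, L_r^H$, which is where the mildly redundant formulations in Lemmas \ref{lem:charB} and \ref{lem:charC} are trimmed down to the non-redundant list given in Theorem \ref{thm:char3cBMG}.
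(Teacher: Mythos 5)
Your proposal follows essentially the same route as the paper: cite Theorem \ref{thm:3c-types} to place any connected $\sthin$-thin 3-RBMG into exactly one of the Types \AX{(A)}, \AX{(B)}, \AX{(C)}, then invoke Lemmas \ref{lem:charA}, \ref{lem:charB}, \ref{lem:charC} for the individual characterizations, and finally observe that \AX{(B4.a)}--\AX{(B4.b)} and \AX{(C4.a)}--\AX{(C4.c)} are redundant. (The paper justifies the redundancy by noting that the neighborhoods of the color-$t$ vertices are recoverable, via symmetry of adjacency, from the neighborhoods of the color-$r$ and color-$s$ vertices, which the retained conditions fully determine; your ``symmetric roles of $s$ and $t$'' phrasing is looser but points at the same fact.)

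The one place where your argument is actually wrong is the attempt to dismiss the $G\notin\mathscr{P}_3$ hypothesis of Lemma \ref{lem:charA}: you claim that a properly 3-colored, $\sthin$-thin $P_3$ ``would have only two colors,'' but a $P_3$ whose three vertices carry three distinct colors is properly colored, $\sthin$-thin, connected, and satisfies \AX{(A1)} and \AX{(A2)} while failing to be a 3-RBMG (only the $K_3$ on three colors is explained by a tree on three leaves). So the exclusion is not automatic, and the ``if'' direction genuinely needs it. To be fair, the theorem as printed in the paper also drops this hypothesis and its proof is silent on the point, so you have inherited a defect of the source rather than introduced a new one --- but your patch as written does not work, and the clean fix is simply to carry the $G\notin\mathscr{P}_3$ exclusion along with \AX{(A1)}--\AX{(A2)}.
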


\begin{proof}
  By Theorem \ref{thm:3c-types}, any $\sthin$-thin connected 3-RBMG
  $(G,\sigma)$ must be either of Type \AX{(A)}, \AX{(B)}, or
  \AX{(C)}. Lemma \ref{lem:charA} implies that $(G,\sigma)$ is a 3-RBMG of
  Type \AX{(A)} if and only if it satisfies \AX{(A1)} and \AX{(A2)}. By
  Lemma \ref{lem:charB}, $(G,\sigma)$ is a 3-RBMG of Type \AX{(B)} if and
  only if Properties \AX{(B1)}-\AX{(B4.b)} are satisfied. However, as the
  neighborhoods of all vertices of one color can clearly be recovered from
  the neighborhoods of all vertices of different color, Properties
  \AX{(B4.a)} and \AX{(B4.b)} are redundant, i.e., $(G,\sigma)$ is a Type
  \AX{(B)} 3-RBMG if and only if \AX{(B1)} to \AX{(B3.b)} are
  satisfied. One analogously argues that Properties \AX{(C4.a)}-\AX{(C4.c)}
  are redundant.
  
\end{proof}

\begin{algorithm}[tbp]
\caption{3-RBMG Recognition and Construction of Tree} 
\label{alg:3RBMG}
\begin{algorithmic}[1]
\REQUIRE Properly 3-colored connected graph $(G',\sigma')$. 
\STATE $(G,\sigma) \gets (G'/\sthin, \sigmasthin')$ \label{l:thin}
\IF { \texttt{Test\_Type\_A}$(G,\sigma)$  = \texttt{true}}  \label{l:test-a}
   \STATE  $(T,\sigma)\gets$ \texttt{Build-Tree}($G,\sigma$) \label{l:tree-a}
   \STATE \textbf{goto} Line \ref{l:final-tree}
\ELSE
   \STATE Find one hexagon $H$ of the form $(r,s,t,r,s,t)$  \label{l:findH}
   \IF{ $(G, \sigma)$ is C-like w.r.t.\ $H$} \label{l:test-c}
      \STATE compute $L_s^H$, $L_t^H$, $L_r^H$, $L_*^H$ \label{l:LH}
      \STATE $(T,\sigma)\gets$
      \texttt{Build-Tree}($(G,\sigma)$,$L_t^H$,$L_s^H$,$L_r^H$,$L_*^H$)
      \label{l:TH}
      \COMMENT{cf.\ Lemma \ref{lem:L^C-tree}}
      \IF {$(T,\sigma)$ explains $(G,\sigma)$ } \label{l:checkC}
          \STATE \textbf{goto} Line \ref{l:final-tree}
      \ENDIF
   \ENDIF	  
   \ELSIF{ $(G, \sigma)$ satisfies Def.\ \ref{def:Ltsr}(i) for
     some $P=\langle xyzx'\rangle\in \mathscr{P}_4$
     with $\sigma(x)=\sigma(x')$ \label{l:B-start}}
      \STATE compute $L_s^P$, $L_t^P$, $L_*^P$ \label{l:LP}
      \STATE $(T,\sigma)\gets$
      \texttt{Build-Tree}($(G,\sigma)$,$L_t^P$,$L_s^P$,$L_*^P$)
      \label{l:TP}
      \COMMENT{cf.\ Lemma \ref{lem:L^B-tree}}
      \IF {$(T,\sigma)$ explains $(G,\sigma)$  \label{l:B-end}}
         \STATE \textbf{goto} Line \ref{l:final-tree}
      \ENDIF
   \ENDIF
   \STATE \textbf{return} ``$(G,\sigma)$ is not a 3-RBMG'' \label{l:no}
   \STATE construct final tree $(T',\sigma')$ for $(G',\sigma')$ based on
   $(T,\sigma)$\label{l:final-tree} \label{l:goto}
   \STATE \textbf{return} $(T,\sigma)$ and $(T',\sigma')$  \label{l:return-true}
\end{algorithmic}
\end{algorithm}

Let us now consider the question how difficult it is to decide whether a
given graph is a 3-RBMG or not. It easy to see that all conditions in
Theorem \ref{thm:char3cBMG} can be tested in polynomial time.  In case
$(G,\sigma)$ is a 3-RBMG, we are also interested in a tree that can
explain $(G,\sigma)$. Unless $(G,\sigma)$ is of Type \AX{(A)}, we have we
have to construct the leaf sets $L_s^P$, $L_t^P$, $L_*^P$, or $L_s^H$,
$L_t^H$, $L_r^H$, $L_*^H$, respectively.  Instead of checking each of the
conditions for Type \AX{(B)} or Type \AX{(C)} graphs in Theorem
\ref{thm:char3cBMG}, we can directly construct the tree $(T,\sigma)$
directly from the sets $L_i^X$, $i\in \{r,s,t\}$, $X\in \{P,H\}$ (cf.\
Lemma \ref{lem:L^B-tree}, resp., \ref{lem:L^C-tree}) and test whether or
not $(T,\sigma)$ explains $(G,\sigma)$.  The overall structure of this
algorithm is summarized in Algorithm \ref{alg:3RBMG}.  We first show in
Lemma \ref{lem:correct:alg-3rbmg} that Algorithm \ref{alg:3RBMG} indeed
recognizes 3-RBMGs and, in the positive case, returns a tree.  The proof
of Lemma \ref{lem:correct:alg-3rbmg} provides at the same time a
description of the single steps of Algorithm \ref{alg:3RBMG}.  We then
continue to show in Lemma \ref{lem:runtime:alg-3rbmg} that Algorithm
\ref{alg:3RBMG} runs in $ O(|V(G/\sthin)|^2 |E(G/\sthin)| + |E(G)|)$ time
for a given input graph $(G,\sigma)$.

\begin{lemma}
  Algorithm \ref{alg:3RBMG} determines if a given properly 3-colored
  connected graph $(G',\sigma')$ is a 3-RBMG and, in the positive case,
  returns a tree $(T',\sigma')$ that explains $(G',\sigma')$
  \label{lem:correct:alg-3rbmg}
\end{lemma}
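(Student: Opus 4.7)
My proof plan for Lemma \ref{lem:correct:alg-3rbmg} follows the case distinction imposed by Theorem \ref{thm:3c-types} and relies on the reductions to the thin case and to the three canonical graph classes.

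First, I would justify the reduction in Line \ref{l:thin}. By Lemma \ref{lem:Sthin-tree}, $(G',\sigma')$ is a 3-RBMG if and only if $(G,\sigma) = (G'/\sthin,\sigmasthin')$ is a 3-RBMG, and any tree $(T,\sigma)$ explaining $(G,\sigma)$ can be inflated to a tree $(T',\sigma')$ explaining $(G',\sigma')$ by splitting every leaf $[x]$ into $|[x]|$ sibling leaves coloured according to $\sigma'$. This justifies the final step in Line \ref{l:final-tree}. From now on, I argue about $(G,\sigma)$, which is connected (Lemma \ref{lem:sthin}) and $\sthin$-thin. By Theorem \ref{thm:3c-types}, if $(G,\sigma)$ is a 3-RBMG then it is of exactly one of the Types \AX{(A)}, \AX{(B)}, \AX{(C)}, and conversely each type corresponds to a recognizable structural condition.

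Next, I would treat the three cases. If the Type \AX{(A)} test in Line \ref{l:test-a} succeeds, Lemma \ref{lem:charA} directly gives that $(G,\sigma)$ is a 3-RBMG, and its proof provides the explicit caterpillar construction invoked in Line \ref{l:tree-a}; verification that the resulting tree explains $(G,\sigma)$ is contained in that proof. Otherwise, I would first examine the Type \AX{(C)} branch (Lines \ref{l:findH}--\ref{l:checkC}). By Lemma \ref{lem:charC}, if $(G,\sigma)$ is a Type \AX{(C)} 3-RBMG with $|L|>6$ then it contains a hexagon $H$ of the form $(r,s,t,r,s,t)$ and is C-like w.r.t.\ $H$; moreover by Corollary \ref{cor:indep-H-choice} the choice of $H$ is immaterial. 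Lemma \ref{lem:L^C-tree} then shows that the sets $L^H_r,L^H_s,L^H_t,L^H_*$ yield, via \texttt{Build-Tree}, a Type \AX{(III)} tree $(T,\sigma)$ that indeed explains $(G,\sigma)$; the explicit verification step in Line \ref{l:checkC} therefore succeeds whenever $(G,\sigma)$ is of Type \AX{(C)}. (The case $|L|=6$ is handled as the degenerate hexagon using any Type \AX{(III)} tree.) Conversely, if the constructed $(T,\sigma)$ explains $(G,\sigma)$, then $(G,\sigma)$ is trivially an RBMG.

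For the Type \AX{(B)} branch (Lines \ref{l:B-start}--\ref{l:B-end}), the situation is slightly more delicate because, in contrast to the hexagon case, Observation \ref{fact:NOindep-P-choice} shows that the sets $L^P_s,L^P_t,L^P_*$ depend on the chosen $P_4$. The key point, supplied by Lemma \ref{lem:L^B-tree}, is that if $(G,\sigma)$ is a Type \AX{(B)} 3-RBMG then there exists at least one induced $P_4$ of the form $\langle \hat x_1\hat y\hat z\hat x_2\rangle$ satisfying Definition \ref{def:Ltsr}(i) for which the associated Type \AX{(II$^*$)} tree built by \texttt{Build-Tree} explains $(G,\sigma)$. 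Hence the ``for some $P$'' quantifier in Line \ref{l:B-start} must be interpreted as iterating over all B-like candidate paths $P$ until either Line \ref{l:B-end} verifies that the resulting $(T,\sigma)$ explains $(G,\sigma)$, or all such $P$ have been exhausted. If none succeeds, then by Theorem \ref{thm:3c-types} and Lemma \ref{lem:L^B-tree} $(G,\sigma)$ cannot be of Type \AX{(B)}, and since Types \AX{(A)} and \AX{(C)} have already been excluded, the algorithm correctly reports ``not a 3-RBMG'' in Line \ref{l:no}. Conversely, the explicit verification at Line \ref{l:B-end} guarantees that whenever the algorithm returns a tree, that tree provably explains $(G,\sigma)$, and reinflation through $\sthin$ yields a tree explaining $(G',\sigma')$, completing the correctness argument.

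The main technical obstacle is the non-uniqueness of the defining $P_4$ in the Type \AX{(B)} case: a priori, we cannot tell which $P_4$ is ``B-like enough'' to induce the partition $L^P_t\cup L^P_s\cup L^P_*$ used by Lemma \ref{lem:L^B-tree}. The clean resolution is to rely on Lemma \ref{lem:L^B-tree} only for existence and to use the explicit verification step ($G(T,\sigma)=(G,\sigma)$?) as a filter; this converts the recognition problem into a search over candidate $P_4$s combined with a polynomial-time check, which is precisely what Algorithm \ref{alg:3RBMG} does.
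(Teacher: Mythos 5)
Your proposal is correct and follows essentially the same route as the paper's proof: reduce to the $\sthin$-thin quotient via Lemma \ref{lem:Sthin-tree}, dispatch Type \AX{(A)} through Lemma \ref{lem:charA}, handle Type \AX{(C)} with a single hexagon justified by Corollary \ref{cor:indep-H-choice} and Lemma \ref{lem:L^C-tree}, iterate over all candidate $P_4$s for Type \AX{(B)} because of Observation \ref{fact:NOindep-P-choice} with Lemma \ref{lem:L^B-tree} supplying existence and the explicit check $G(T,\sigma)=(G,\sigma)$ acting as the filter, and finally reinflate the tree. The paper's proof makes the same observations, including the point that the verification step is what converts the existence statements into a sound recognition procedure.
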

\begin{proof} 
  Given a properly 3-colored connected graph $(G',\sigma')$, we first
  compute $(G,\sigma) = (G'/\sthin, \sigmasthin')$. By construction,
  $(G,\sigma)$ remains properly 3-colored and, by Lemma \ref{lem:sthin},
  $(G,\sigma)$ is $\sthin$-thin and connected.

  In Line \ref{l:test-a}, if $(G,\sigma)$ is of Type \AX{(A)}, then we can
  compute the tree $(T,\sigma)$ that explains $(G,\sigma)$ as constructed
  for \NEW{the} ``if-direction'' in the proof of Lemma \ref{lem:charA}, and jump to
  Line \ref{l:goto}.

  If $(G,\sigma)$ is not of Type \AX{(A)}, then we proceed by testing if
  $(G,\sigma)$ is of Type \AX{(C)}. To this end, we search first for one
  hexagon $H$ of the form $(r,s,t,r,s,t)$ in Line \ref{l:findH}.  If such a
  hexagon $H$ exists, we check if $(G,\sigma)$ is C-like w.r.t.\ $H$. By
  Cor.\ \ref{cor:indep-H-choice}, it is indeed sufficient to test
  C-likeness for one hexagon only.  If $(G,\sigma)$ is C-like w.r.t.\ $H$,
  then we compute the sets $L_t^H$, $L_s^H$, $L_r^H$, $L_*^H$ (Line
  \ref{l:LH}).  We proceed in Line \ref{l:TH} to construct a tree
  $(T,\sigma)$ based on the set $L_t^H$, $L_s^H$, $L_r^H$, $L_*^H$
  according to Lemma \ref{lem:L^C-tree}.  Now, to test if $(G,\sigma)$ is
  of Type \AX{(C)}, we can again apply Lemma \ref{lem:L^C-tree} which
  implies that it suffices show that $(T,\sigma)$ explains $(G,\sigma)$.
  If this is the case, we again jump to Line \ref{l:goto} and, if not, we
  proceed to check if $(G,\sigma)$ is of Type \AX{(B)}.

  If $(G,\sigma)$ is neither of Type \AX{(A)} nor \AX{(C)}, then either
  $(G,\sigma)$ is not a 3-RBMG or it must be of Type \AX{(B)}.  Thus, we
  continue in Line \ref{l:B-start}-\ref{l:B-end} to test if $(G,\sigma)$
  can be explained by some tree $(T,\sigma)$.  To this end, Observation
  \ref{fact:NOindep-P-choice} implies that we must check for every
  $P\in \mathscr{P}_4$ (for which the two endpoints have the same color),
  whether $(G,\sigma)$ satisfies Def.\ \ref{def:Ltsr}(i).  If this is not
  the case for any such induced $P_4$, then Lemma \ref{lem:L^B-tree}
  implies that $(G,\sigma)$ is not of Type \AX{(B)}. Together with the
  preceding tests, we can conclude that $(G,\sigma)$ is not a
  3-RBMG. Hence, the algorithm stops in Line \ref{l:no} and returns
  ``$(G,\sigma)$ is not a 3-RBMG''.  Otherwise, if $(G,\sigma)$ satisfies
  Def.\ \ref{def:Ltsr}(i) w.r.t.\ $P$, we construct a tree $(T,\sigma)$
  based on the set $L_s^P$, $L_t^P$, $L_*^P$ according to Lemma
  \ref{lem:L^B-tree}.  Again by Lemma \ref{lem:L^B-tree}, it is now
  sufficient to show that $(T,\sigma)$ explains $(G,\sigma)$ in order to
  test if $(G,\sigma)$ is a 3-RBMG.  Since the preceding tests already have
  established that $(G,\sigma)$ is neither of Type \AX{(A)} nor \AX{(C)},
  we can conclude that $(G,\sigma)$ is of Type \AX{(B)}.  If $(G,\sigma)$
  is a 3-RBMG, then we jump to Line \ref{l:goto}, otherwise we stop again
  in Line \ref{l:no} and the algorithm returns ``$(G,\sigma)$ is not a
  3-RBMG''.

  Finally, after having verified that $(G,\sigma)$ is indeed a 3-RBMG and
  constructed $(T,\sigma)$, the algorithm reaches Line \ref{l:goto}.  Lemma
  \ref{lem:Sthin-tree} implies that $(G',\sigma')$ is a 3-RBMG.  Moreover,
  the construction in the last part of the proof of Lemma
  \ref{lem:Sthin-tree} shows how to obtain a tree $(T',\sigma')$ that
  explains $(G',\sigma')$ from $(T,\sigma)$.  In Line \ref{l:return-true},
  the respective trees $(T',\sigma')$ and $(T,\sigma)$ are returned.  
\end{proof}

\begin{lemma}
  Let $(G',\sigma')$ is an undirected, properly 3-colored, connected graph
  and let $n=|V(G'/\sthin)|$, $m=|E(G'/\sthin)|$ and $m'= |E(G')|$.
  Algorithm \ref{alg:3RBMG} processes $(G',\sigma')$ in $O(mn^2+m')$ time.
  \label{lem:runtime:alg-3rbmg}
\end{lemma}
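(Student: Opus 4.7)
The plan is to bound the cost of each line of Algorithm \ref{alg:3RBMG} separately and sum them; let $n'\coloneqq|V(G')|$, so $n\le n'$ and $m\le m'$. As one-time preprocessing I build an $n\times n$ adjacency matrix for $(G,\sigma)$ in $O(n^2)$ time so that all subsequent adjacency queries cost $O(1)$, and a constant-time LCA oracle on whatever intermediate tree gets constructed. Computing the $\sthin$-quotient in Line \ref{l:thin} reduces to grouping same-colored vertices of $G'$ by their neighborhoods, which can be done in $O(n'+m')$ via radix/bucket-sorting the signatures (color, sorted neighbor list). The Type \AX{(A)} block (Lines \ref{l:test-a}--\ref{l:tree-a}) then runs in $O(n+m)$: the hub check and degree bound of Lemma \ref{lem:charA} are a linear scan, and the explicit tree construction in the proof of that lemma is also linear.

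For the Type \AX{(C)} block (Lines \ref{l:findH}--\ref{l:checkC}) the dominant step is the hexagon search. I iterate over the $O(n^2)$ ordered pairs $(\hat x_1,\hat x_2)\in L[r]\times L[r]$ and, for each pair, walk through the neighborhoods of $\hat x_1$ and $\hat x_2$ to identify $\hat y_1,\hat z_1,\hat y_2,\hat z_2$ completing an induced hexagon, using constant-time chord checks on the adjacency matrix. The work per pair is $O(m)$, yielding $O(mn^2)$ overall. Testing C-likeness, computing the four sets $L_t^H,L_s^H,L_r^H,L_*^H$, and building the tree as in Lemma \ref{lem:L^C-tree} each take $O(n^2)$; verifying that $(T,\sigma)$ explains $(G,\sigma)$ amounts to computing $G(T,\sigma)$ by $O(n^2)$ LCA queries and comparing it to the adjacency matrix, also in $O(n^2)$.

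For the Type \AX{(B)} block (Lines \ref{l:B-start}--\ref{l:B-end}) I enumerate candidate $P_4$s $P=\langle xyzx'\rangle$ with $\sigma(x)=\sigma(x')$, parametrized by the middle edge $yz$ of colors $(s,t)$ and by endpoints $x\in N_r(y)$, $x'\in N_r(z)$ with $xz,x'y\notin E$. Their total number is $\sum_{yz}|N_r(y)|\cdot|N_r(z)|=O(mn^2)$, and Def.\ \ref{def:Ltsr}(i) is checked in $O(1)$ per candidate after a one-time $O(mn)$ precomputation that tabulates, for every $(s,t)$-edge $yz$, whether $N_r(y)\cap N_r(z)=\emptyset$. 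By Lemma \ref{lem:L^B-tree}, the first B-like $P_4$ is enough: from it, the sets $L_s^P,L_t^P,L_*^P$, the Type \AX{(II$^*$)} tree, and the verification each cost $O(n^2)$, and are performed only once, so no blow-up occurs. Line \ref{l:final-tree} finally reconstructs $(T',\sigma')$ from $(T,\sigma)$ by expanding each $\sthin$-class leaf into its pre-image following the proof of Lemma \ref{lem:Sthin-tree}, which runs in $O(n'+m')$. Summing gives $O(n'+m')+O(n^2)+O(mn^2)+O(mn^2)+O(n'+m')=O(mn^2+m')$. The main obstacle is arranging the hexagon and $P_4$ enumerations so that the total work stays at $O(mn^2)$; this is achieved by amortizing over ordered pairs of red vertices, respectively over middle $(s,t)$-edges, and by relying on constant-time adjacency and LCA queries throughout so that the inner checks never exceed a constant.
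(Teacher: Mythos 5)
Your proposal is correct and follows essentially the same line-by-line accounting as the paper's proof: the dominant $O(mn^2)$ term comes from enumerating the colour-constrained $P_4$s (amortized over middle $(s,t)$-edges times pairs of endpoints) and hexagons (over ordered pairs of same-coloured vertices with $O(m)$ work each), while the $\sthin$-quotient and the final leaf expansion contribute the $O(m')$ term. Your only deviations are refinements within the same budget — skipping Floyd--Warshall, the one-time $O(mn)$ tabulation that makes each Def.~\ref{def:Ltsr}(i) test $O(1)$ per candidate (which actually tightens the paper's literal ``$O(n)$ per $P_4$ over $O(mn^2)$ paths'' accounting), and an $O(n^2)$ tree-verification via per-vertex, per-colour deepest-lca tables in place of the paper's cruder $O(mn^2)$ bound for that step.
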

\begin{proof} 
  In a worst case, Observation \ref{fact:NOindep-P-choice} implies that we
  need to list all induced $P_4$s
  $\langle \hat x_1 \hat y \hat z \hat x_2 \rangle $ with
  $\sigma(\hat x_1) = \sigma(\hat x_2)$.  Since for any edge $yz$ in $G$,
  there exist at most $(n-2)(n-3)$ possible combinations of vertices $x$
  and $x'$ such that $\langle xyzx'\rangle$ forms an induced $P_4$, there
  are at most $O(mn^2)$ such paths. Hence, the global runtime of the
  algorithm cannot be better than $O(mn^2)$.  Thus, we only provide rough
  upper bounds for all other subtask to show that they stay within
  $O(mn^2)$ time.
	
  The computation of the relation $\sthin$, its equivalence classes and
  $(G=(V,E),\sigma) = (G'/\sthin, \sigmasthin')$ in Line \ref{l:thin} can
  be done in a similar fashion as outlined by \citet[Section
  24.4]{Hammack:2011a} in $O(|E(G')|)$ time, cf.\ \cite[Lemma
  24.10]{Hammack:2011a}.

  To test whether $(G,\sigma)$ is of Type \AX{(A)}, we first apply Cor.\
  \ref{cor:hub-vertex} and check for which colors $i\in \{r,s,t\}$ we have
  $|L[i]|=1$, which can be done in $O(n)$ time.  We then apply Lemma
  \ref{lem:charA} and verify if $G\notin \mathscr{P}_3$. Note that the
  latter task can be done in constant time, since we can check if $n=3$
  and, in the positive case, if the three vertices of $G$ are pairwisely
  connected by an edge in constant time $O(1)$. We apply Lemma
  \ref{lem:charA} again, and check for all colors $i\in \{r,s,t\}$ with
  $L[i]=\{x\}$, if $x$ is a hub-vertex and if $|N(y)|<3$ for every
  $y\in V\setminus \{x\}$.  Both of the latter tasks can be done in $O(n)$
  time.  If such a color and vertex exists, then $(G,\sigma)$ is of Type
  \AX{(A)} and we can build the tree $(T,\sigma)$ that explains
  $(G,\sigma)$.  To this end, we apply the construction as in the
  ``if-direction'' of the proof of Lemma \ref{lem:charA}.  We first
  construct the caterpillar $(T_2 , \sigma_{|L_2})$ with leaf set
  $L_2 = \{y\mid y\neq x, |N(y)|=2\}$ and root $\rho_{T_2}$.  It is easy to
  see that $L_2$ can be constructed in $O(n)$ time.  For the tree $T_2$ we
  add vertices such that $\parent(y) = \parent(z)$ for any $y,z \in L_2$
  with $\sigma(y) \neq \sigma(z)$ if and only if $yz \in E(G)$. Clearly,
  this task can be done in $O(m)$ time.  To construct the final tree
  $(T,\sigma)$ we need to check if $|V\setminus L_2|=2$ or
  $|V\setminus L_2|=3$, which can be done trivially in $O(n^2)$ time. All
  remaining steps to construct $(T,\sigma)$ can be done in constant time.
  Hence, to construct $(T,\sigma)$ we need $O(m+n^2)=O(n^2)$ time.  In
  summary, Line \ref{l:test-a} and \ref{l:tree-a} have overall time
  complexity $O(n^2)$.

  We continue by testing if $(G,\sigma)$ is of Type \AX{(C)} in Line
  \ref{l:findH}-\ref{l:checkC}.  In Line \ref{l:findH}, we first check if
  $(G,\sigma)$ is C-like w.r.t.\ some hexagon $H$.  Note, all candidate
  hexagons must be of the form $(r,s,t,r,s,t)$. In order to find such
  hexagons, we first compute the pairwise distances between all vertices in
  $O(n^3)\subseteq O(n^2m)$ time (Floyd-Warshall).  Then, we fix one of the
  colors, say $r$.  Clearly, two vertices in $L[r]$ that have distance
  larger or smaller than $3$, cannot be both located on such a
  hexagon. Thus, for all vertices $x,x'\in L[r]$ with distance $d(x,x')=3$
  we proceed as follows: We check for all edges $yz$ with $y\in L[s]$,
  $z\in L[t]$, if $x\in N_r(y)$, $x'\in N_r(z)$, $x'\notin N_r(y)$,
  $x \notin N_r(z)$.  If this is the case,
  $\langle xyzx' \rangle\in \mathscr{P}_4$ and we store
  $\langle xyzx' \rangle$ in the list $P_{(x,x')}[s,t]$.  Similarly, if for
  the edge $yz$ with $y\in L[s]$, $z\in L[t]$ we have $x\in N_r(z)$,
  $x'\in N_r(y)$, $x'\notin N_r(z)$, $x \notin N_r(y)$, then we put
  $\langle xzyx' \rangle$ in the list $P_{(x,x')}[t,s]$.  For each edge,
  the latter tests can be done in constant time, e.g.\ by using the
  adjacency matrix representation of $(G, \sigma)$.  As soon as we have
  found two vertices $x,x'\in L[r]$ such that each list $P_{(x,x')}[s,t]$
  and $P_{(x,x')}[t,s]$ contains at least one element
  $\langle xyzx' \rangle$ and $\langle xz'y'x' \rangle$ such that $yz'$ and
  $zy'$ do not form an edge, we have found a hexagon
  $H=\langle xyzx'y'z' \rangle$ of the form $(r,s,t,r,s,t)$.  Thus, for a
  given pair $x,x'\in L[r]$, finding a hexagon that contains $x$ and $x'$
  can be done in $O(m)$ time. As the latter may be repeated for all
  $x,x'\in L[r]$, we can conclude that finding a hexagon of the form
  $(r,s,t,r,s,t)$ in Line \ref{l:findH}, can be done
  $O(|L[r]|^2 m) = O(n^2 m)$ time.  Clearly, the test if $(G,\sigma)$ is
  C-like w.r.t.\ $H$ in Line \ref{l:test-c} can be done in constant time.
  Now, the sets $L_s^H$, $L_t^H$, $L_r^H$, $L_*^H$ are computed in Line
  \ref{l:LH}.  To determine these sets, we compute for each edge $uv$ in
  $H$ all vertices $w\in V\setminus (L[\sigma(u)]\cup L[\sigma(v)])$ such
  that $\langle wuv \rangle \in \mathscr{P}_3$. The latter can be done in
  $O(n)$ for each edge in $H$.  Since $H$ has only a constant number of
  edges, all sets $L_s^H$, $L_t^H$, $L_r^H$ can be constructed in $O(n)$
  time. The set $L_*^H$ can then be trivially constructed in $O(n^2)$ time.
  Now, we continue in Line \ref{l:TH} to construct a tree $(T,\sigma)$ as
  in Lemma \ref{lem:L^C-tree}.  Similar arguments as in the Type \AX{(A)}
  case show that $(T,\sigma)$ can be constructed in $O(m)$ time.  Finally,
  we check in Line \ref{l:checkC} if $(T,\sigma)$ explains $(G,\sigma)$.
  To this end, we note that $T$ has $O(n)$ vertices. Moreover it was shown
  in \cite{Baruch:88}, that the last common ancestor of $x$ and $y$ can be
  accessed in constant time, after an $O(|V(T)|) = O(n)$ time preprocessing
  step.  Hence, for each edge $xy\in E(G)$, we check if
  $\lca(x,y) \preceq_T \lca(x,y')$ and $\lca(x,y) \preceq_T \lca(x',y)$ for
  all $x'\in L[\sigma(x)]$ and $y'\in L[\sigma(y)]$ in $O(n^2)$. As this
  has to repeated for all edges of $G$, Line \ref{l:checkC} takes $O(mn^2)$
  time.  In summary, testing if $(G,\sigma)$ is of Type \AX{(C)} in Line
  \ref{l:findH}-\ref{l:checkC} can be done in $O(mn^2)$ time.

  In Line \ref{l:B-start}, we verify if $(G,\sigma)$ satisfies
  Def. \ref{def:Ltsr}(i) w.r.t.\ some $P\in \mathscr{P}_4$. Note, there are
  at most $mn^2$ $P_4$s $\langle abcd\rangle$ with $\sigma(a) = \sigma(d)$
  in $(G,\sigma)$. Listing all such induced $P_4$s can therefore trivially
  be done $O(mn^2)$ time.  For each induced $P_4$ $\langle abcd\rangle$
  with $\sigma(a)=\sigma(d)$ we can verify the condition in
  Def. \ref{def:Ltsr}(i) in \NEW{at most $O(n)$} time.  Thus, Line
  \ref{l:B-start} requires $O(mn^2)$ time.

  In Line \ref{l:LP}, we need to construct the sets $L_s^P$, $L_t^P$, and
  $L_*^P$.  Assume that
  $P = \langle \hat x_1 \hat y \hat z \hat x_2 \rangle $ is of the form
  $(r,s,t,r)$. To construct the set $L_{t,s}^P$, we have $y\in L_{t,s}^P$
  if the edge $y\hat z$ and every $x\in N_r(y)$ form an induced $P_3$.
  For each edge $y\hat z$ the latter can be tested in $O(n)$ time. To
  obtain $L_{t,s}^P$ the latter test must be repeated for all edges
  $y\hat z$ with $y\in L[s]$.  Thus, $L_{t,s}^P$ can be constructed in
  $O(mn)$ time.  The set $L_{t,r}^P$ is the disjoint union of two sets $L'$
  and $L''$, where the first set $L'$ contains all $x\in L[r]$ for which
  $x,y,\hat z$ induce a $P_3$ with $N_r(y)=\{x\}$ and the second set $L''$
  contains all $x\in L[r]$ with $N_s(x)=\emptyset$ whenever
  $L[s]\setminus L_{t,s}^P\neq \emptyset$.  By similar arguments as for
  $L_{t,s}^P$, the set $L'$ can be constructed in $O(mn)$ time.  For the
  set $L''$, observe that $L[s]\setminus L_{t,s}^P\neq \emptyset$ can be
  trivially verified in at most $O(n^2)$ time and $N_s(x)=\emptyset$ can be
  verified in $O(n)$ time for a given $x\in L[r]$.  To obtain $L''$ we must
  repeat the latter for all $x\in L[r]$ and hence, end up with a time
  complexity $O(n^3) \subseteq O(n^2m)$.  In summary, the set $L_{t,s}^P$
  and $L_{t,r}^P$ can be constructed in $O(mn)$ and $O(n^2m)$ time,
  respectively.  Therefore, $L_t^P$ can be constructed in $O(n^2m)$ time. By
  symmetry, the construction of $L_s^P$ can be done in $O(n^2m)$ time as
  well. The set $L_*^P = V\setminus (L_t^P\cup L_s^P)$ can then trivially
  be constructed in $O(n^2)$ time.  Now, we continue in Line \ref{l:TP} to
  construct a tree $(T,\sigma)$ as in Lemma \ref{lem:L^B-tree}.  Similar
  arguments as in the Type \AX{(A)} case show that $(T,\sigma)$ can be
  constructed in $O(m)$ time. Finally, we check in Line \ref{l:B-end} if
  $(T,\sigma)$ explains $(G,\sigma)$. By similar arguments as in the Type
  \AX{(C)} case, the latter task can be done in $O(mn^2)$ time.  In
  summary, Lines \ref{l:B-start}-\ref{l:B-end} require $O(mn^2)$ time.

  Finally we construct, in Line \ref{l:goto} the tree $(T',\sigma')$ for
  $(G',\sigma')$ based on the tree $(T,\sigma)$. Given the equivalence
  classes as computed in the first step (Line \ref{l:thin}), one can
  construct $(T',\sigma')$ as in the last part of the proof of Lemma
  \ref{lem:Sthin-tree}.  Thus, for each of the $n$ leaves $x$ we can check
  in $O(n)$ time in which class it is contained and then expand the leaf
  $x$ by $|[x]|$ vertices.  As there are at most $O(n')$ vertices that we
  may additionally add to $(T,\sigma)$, we can construct $(T',\sigma')$ in
  $O(n+n')=O(n')\subseteq O(m')$ time.  Since the task of computing the
  quotient graph $(G,\sigma)$ already takes $O(m')$ time, we end up with an
  overall runtime of $O(mn^2 + m')$.

\end{proof}

\section{The Good, the Bad, and the Ugly: induced $P_4$s}
\label{sect:P4}

In order to gain a better understanding of Type (B) 3-RBMGs, we consider
here in more detail the influence of the choice of the ``reference'' $P_4$
on the definition of the vertex sets $L_t^P$, $L_s^P$, and $L_*^P$ that
determine the structure of $(G,\sigma)$. The $P_4$s can be classified as
so-called \emph{good}, \emph{bad}, and \emph{ugly} quartets.  Quartets will
play an essential role for the characterization of 3-RBMGs as we shall see
later. In particular, the sets $L_t^P$, $L_s^P$, and $L_*^P$ can be
determined by good quartets and are independent of the choice of the
respective good quartet. As shown by \citet{GGL:19}, good quartets also
play an important role for the detection of false positive and false
negative orthology assignments.

\begin{fact}
  An $n$-RBMG does not contain an induced $P_4$ with two colors. Moreover,
  any induced $P_4$ with three distinct colors is either of the Type
  $\langle xyzx'\rangle$ or $\langle xyx'z\rangle$ with
  $\sigma(x)=\sigma(x')$.
  \label{obs:P4}
\end{fact}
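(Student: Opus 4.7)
The plan has two parts. For the first assertion, since every RBMG is properly colored (Observation \ref{obs:proper}), any induced $P_4$ on two colors $r,s$ must alternate, so it is of the form $\langle abcd\rangle$ with $\sigma(a)=\sigma(c)=r$ and $\sigma(b)=\sigma(d)=s$. Fix a tree $(T,\sigma)$ explaining $(G,\sigma)$. The strategy is to use the reciprocal best match conditions on the edges $ab$, $bc$, $cd$ to collapse three last common ancestors to a single vertex, and then to argue that this forces the non-edge $ad$ to be an edge, a contradiction.

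Concretely, from $ab\in E$ (applied to $c\in L[r]$) and $bc\in E$ (applied to $a\in L[r]$) I obtain $\lca(a,b)\preceq\lca(b,c)$ and $\lca(b,c)\preceq\lca(a,b)$, hence equality. The symmetric application to the edges $bc$ and $cd$ using $b,d\in L[s]$ yields $\lca(b,c)=\lca(c,d)$. Call this common vertex $u$; then $a,b,c,d\in L(T(u))$, so $\lca(a,d)\preceq u$. To reach a contradiction I would suppose $ad\notin E$. Then either there exists $d'\in L[s]$ with $\lca(a,d')\prec\lca(a,d)$, in which case the fact that $b$ is a best match of $a$ gives $u=\lca(a,b)\preceq \lca(a,d')\prec\lca(a,d)\preceq u$, or there exists $a'\in L[r]$ with $\lca(a',d)\prec\lca(a,d)$, in which case $c$ being a best match of $d$ gives $u=\lca(c,d)\preceq\lca(a',d)\prec\lca(a,d)\preceq u$. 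Both cases are absurd, so $ad\in E$, contradicting that $\langle abcd\rangle$ is induced.

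For the second assertion I would argue purely combinatorially. A $P_4$ on three colors must use exactly one color twice (colour distribution $2{+}1{+}1$), and because adjacent vertices in an RBMG carry different colours (Observation \ref{obs:proper}), the two vertices sharing a colour cannot be adjacent in the $P_4$. Writing $P=\langle v_1v_2v_3v_4\rangle$, the only non-adjacent pairs are $\{v_1,v_3\}$, $\{v_2,v_4\}$, and $\{v_1,v_4\}$. The first two cases, up to reversing the path, both yield the form $\langle xyx'z\rangle$ with $\sigma(x)=\sigma(x')$, and the third case yields $\langle xyzx'\rangle$ with $\sigma(x)=\sigma(x')$. This exhausts all configurations.

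I do not anticipate a real obstacle: the main step is the bookkeeping with last common ancestors in the first part, and once the equality $\lca(a,b)=\lca(b,c)=\lca(c,d)$ is established, the contradiction is automatic. The second part is a one-line case analysis once the colour distribution has been identified.
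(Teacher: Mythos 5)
Your proof is correct, and for the first assertion it takes a genuinely different route from the paper. The paper disposes of two-colored induced $P_4$s in one line by citing \citet[Cor.~6]{Geiss:18x}, namely that the connected components of 2-RBMGs are complete bipartite graphs (combined with the fact that color-restrictions of RBMGs are again RBMGs). You instead give a direct, self-contained $\lca$ argument: the reciprocal best match conditions on the three edges of $\langle abcd\rangle$ force $\lca(a,b)=\lca(b,c)=\lca(c,d)=:u$, and then any witness against $ad$ being a reciprocal best match would have to lie strictly below $u$, contradicting that $b$ (resp.\ $c$) is already a best match of $a$ (resp.\ $d$) at level $u$. This is valid; the chains of inequalities are legitimate because all $\lca$'s of a fixed leaf with other leaves lie on its root path and are therefore totally ordered, so ``not a best match'' does translate into a strict $\prec$. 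In effect you have re-derived the relevant piece of the complete-bipartite structure of two-colored restrictions from first principles, which makes the observation independent of the external reference at the cost of a slightly longer argument. Your second part (proper coloring forces the repeated color onto a non-adjacent pair of the path, and the three non-adjacent pairs reduce, up to reversal, to the two stated types) coincides with the paper's reasoning.
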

\begin{proof}
  As shown by \citet[Cor.\ 6]{Geiss:18x}, there is no induced $P_4$ with
  only two colors since all 2-RBMGs are complete bipartite graphs. Hence,
  if we have three distinct colors, then exactly two vertices have
  the same color. Since RBMGs are properly colored, these vertices cannot
  be adjacent, leaving only the two alternatives $\langle xyzx'\rangle$ and
  $\langle xyx'z\rangle$.
  
\end{proof}
  
We emphasize that an RBMG on more than three colors may also contain
induced $P_4$s with four distinct colors. Consider, for instance, the tree
$((a_1,b_1,c),(a_2,b_2,d))$, given in Newick format, where $\sigma(a_i)=A$,
$\sigma(b_i)=B$, $\sigma(c)=C$, and $\sigma(d)=D$, $i\in\{1,2\}$ where $A$,
$B$, $C$, and $D$ are pairwise distinct colors. Then the RBMG $G(T,\sigma)$
contains the 4-colored induced $P_4$ $\langle a_1 c d b_2\rangle$. A
characterization for $n$-RBMGs that are cographs will be given later in
Theorem \ref{thm:cographA}.  For now we will restrict our attention to
$P_4$s with three colors only.

\begin{definition}[good, bad and ugly quartets\footnote{Best
    enjoyed with proper soundtrack at
    \url{https://www.youtube.com/watch?v=XjehlT1VjiU}}.]
  Let $(\G,\sigma)$ be a BMG with symmetric part $(G,\sigma)$ and let
  $Q\coloneqq \{x,x',y,z\} \subseteq L$ with $x,x'\in L[r]$, $y\in L[s]$,
  and $z\in L[t]$. The set $Q$, resp., the induced subgraph
  $\G[Q],\sigma_{|Q})$ is
  \begin{itemize}
  \item a \emph{good quartet} if (i) $\langle xyzx'\rangle$ is an induced
    $P_4$ in $(G,\sigma)$ and (ii) $(x,z),(x',y)\in E(\G)$ and
    $(z,x),(y,x')\notin E(\G)$,
  \item a \emph{bad quartet} if (i) $\langle xyzx'\rangle$ is an induced
    $P_4$ in $(G,\sigma)$ and (ii) $(z,x),(y,x')\in E(\G)$ and
    $(x,z),(x',y)\notin E(\G)$,
  \item an \emph{ugly quartet} if $\langle xyx'z\rangle$ is an induced
    $P_4$ in $(G,\sigma)$.
  \end{itemize}
  \label{def:GoodBadUgly}
\end{definition}
Fig.\ \ref{fig:P_4-classes} shows an example of an RBMG containing a good
quartet. Note that good, bad, and ugly quartets cannot appear in RBMGs
whose induced 3-colored subgraphs are all Type \AX{(A)} 3-RBMGs: by
definition, these do not contain induced $P_4$s.

\begin{figure}[t]
  \begin{center}
    \includegraphics[width=1\textwidth]{./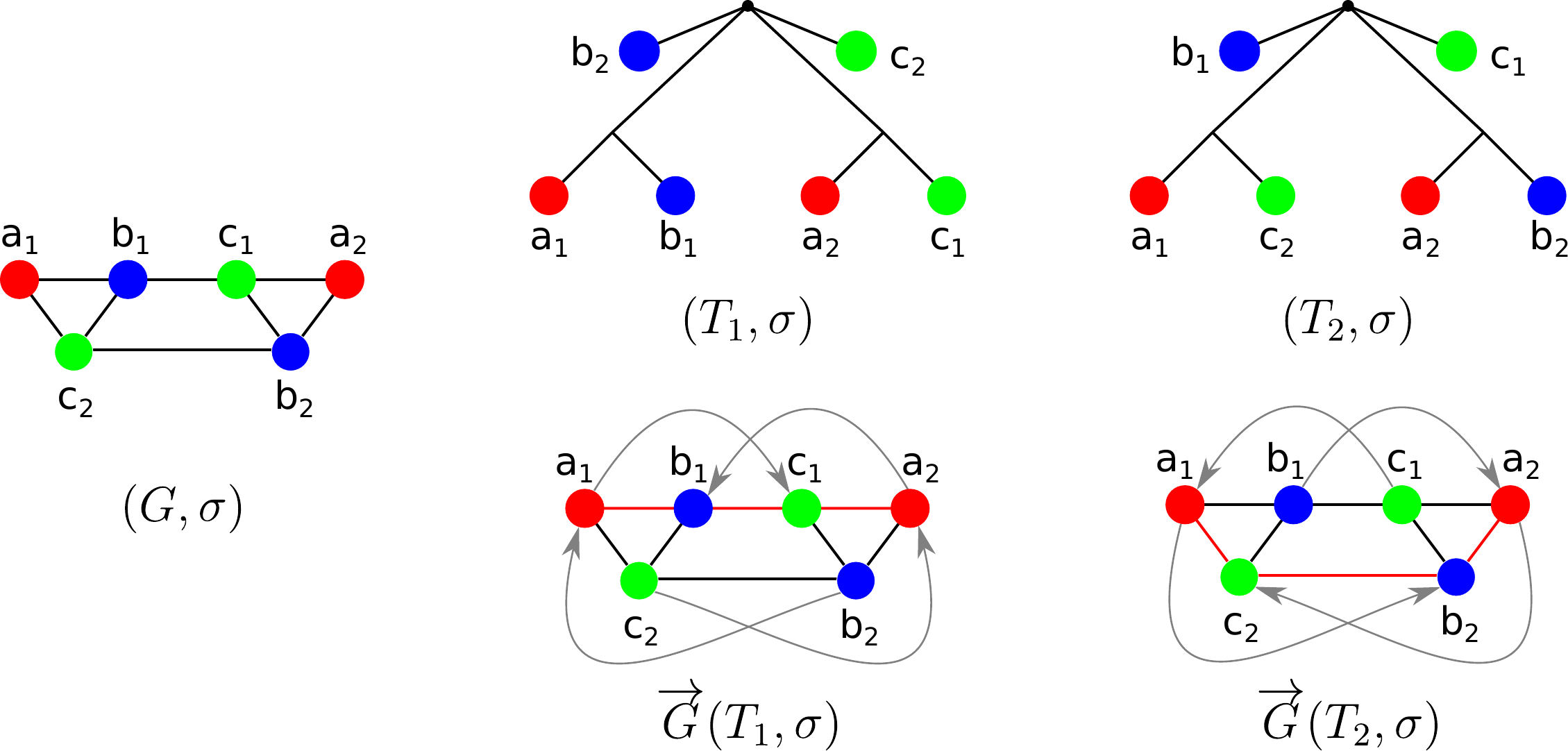}      
  \end{center}
  \caption{The $\sthin$-thin 3-RBMG $(G,\sigma)$ is explained by two trees
    $(T_1,\sigma)$ and $(T_2,\sigma)$ that induce distinct BMGs
    $\G(T_1,\sigma)$ and $\G(T_2,\sigma)$.  In $\G(T_1,\sigma)$,
    $P^1 =\langle a_1b_1c_1a_2\rangle$ defines a good quartet, while
    $P^2 =\langle a_1c_2b_2a_2\rangle$ induces a bad quartet. In
    $\G(T_2,\sigma)$ the situation is reversed. Moreover, the quartets for
    $P^1$ and $P^2$ induce different leaf sets. Denoting the leaf colors
    ``red'', ``blue'', and ``green'' by $r$, $s$, and $t$, respectively, we
    obtain $L_t^{P^1}=\{a_1,b_1\}$, $L_s^{P^1}=\{a_2,c_1\}$, and
    $L_*^{P^1}=\{b_2,c_2\}$, while $L_s^{P^2}=\{a_1,c_2\}$,
    $L_t^{P^2}=\{a_2,b_2\}$, and $L_*^{P^2}=\{b_1.c_1\}$. The good quartets
    in $\G(T_1,\sigma)$ and $\G(T_2,\sigma)$ are indicated by red
    edges. The induced paths $\langle a_1 b_1 c_1 b_2\rangle$ and
    $\langle a_2 c_1 b_1 c_2\rangle$ are examples of ugly quartets.}
  \label{fig:P_4-classes}
\end{figure}

\begin{lemma}\label{lem:P4-classes}
  Let $(G,\sigma)$ be an RBMG, $Q$ a set of four vertices with three
  colors, $G[Q]\in \mathscr{P}_4$, and $(\G,\sigma)$ a BMG containing
  $(G,\sigma)$. Then $Q$ is either a good, a bad, or an ugly quartet.
  \label{lem:type-Q}
\end{lemma}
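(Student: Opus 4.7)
My plan is to split according to the two possible forms of a 3-colored induced $P_4$ from Observation~\ref{obs:P4}. If $G[Q]=\langle xyx'z\rangle$ with $\sigma(x)=\sigma(x')$, then $Q$ is ugly by definition and nothing more is required. So I would concentrate on the remaining case $G[Q]=\langle xyzx'\rangle$ with $\sigma(x)=\sigma(x')=r$, $\sigma(y)=s$, $\sigma(z)=t$, and show that $Q$ is good or bad. Fixing a tree $(T,\sigma)$ that explains $(\G,\sigma)$, I would translate the edges $xy,yz,x'z$ and the non-edges $xz,yx'$ into the standard $\lca$-comparisons: in particular $\lca(x,y)\preceq\lca(x',y)$ and $\lca(x',z)\preceq\lca(x,z)$, plus the constraint that at most one direction of each of the arc pairs $\{(x,z),(z,x)\}$ and $\{(y,x'),(x',y)\}$ lies in $\E$.

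The case split is driven by the equivalences $(z,x)\in\E \Leftrightarrow \lca(x,z)=\lca(x',z)$ and $(y,x')\in\E \Leftrightarrow \lca(x,y)=\lca(x',y)$, both of which follow by combining the minimality of $\lca(x',z)$ and $\lca(x,y)$ (coming from $(z,x'),(y,x)\in\E$) with the best-match criterion. This yields four subcases depending on whether each of the two inequalities is strict or an equality. In the ``both strict'' subcase both $(z,x)$ and $(y,x')$ are absent from $\E$, and I would verify that $(x,z),(x',y)\in\E$ to conclude that $Q$ is good. The argument for $(x,z)\in\E$ is by contradiction: a hypothetical $z^*\in L[t]$ with $\lca(x,z^*)\prec\lca(x,z)=\lca(x,x')$ must sit in the child of $\lca(x,x')$ that contains $x$, and by the strict inequality on the $y$-side, $y$ lies in the same child, so $\lca(y,z^*)\prec\lca(y,z)$ would contradict $(y,z)\in\E$. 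The verification of $(x',y)\in\E$ is symmetric.

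In the ``both equalities'' subcase, $(z,x)$ and $(y,x')$ are forced into $\E$ and the non-edge hypotheses $xz,yx'\notin E(G)$ push $(x,z)$ and $(x',y)$ out of $\E$, yielding exactly the bad configuration. It remains to eliminate the two mixed subcases: the equality produces a witness $y^*$ (respectively $z^*$) strictly closer to $x'$ (respectively $x$) through the failing non-edge, while the strict inequality on the opposite side places $z$ (respectively $y$) in the corresponding sibling child of $\lca(x,x')$; combining these locates a leaf in $L[s]$ or $L[t]$ strictly closer to $z$ or $y$ than its reciprocal partner and contradicts $yz\in E(G)$. The main obstacle throughout is the positional bookkeeping of the witnesses $y^*,z^*$ relative to the children of $\lca(x,x')$ in each subcase; once this is set up carefully, all the required inequalities propagate directly from the defining best-match conditions.
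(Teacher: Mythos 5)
Your proposal is correct and follows essentially the same route as the paper: reduce via Observation~\ref{obs:P4} to the path $\langle xyzx'\rangle$, fix a tree explaining the BMG, translate the (non-)edges into $\lca$ comparisons, and case-split on which of the asymmetric arcs are present, deriving the good/bad configurations or a contradiction with $yz\in E(G)$ from the positions of witness leaves below the children of $\lca(x,x')$. The only difference is cosmetic: you organize the four subcases symmetrically via the equivalences $(z,x)\in\E\Leftrightarrow\lca(x,z)=\lca(x',z)$ and $(y,x')\in\E\Leftrightarrow\lca(x,y)=\lca(x',y)$, whereas the paper splits on the arc pair $\{(x,z),(z,x)\}$ into three cases, but the substance of the argument is the same.
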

\begin{proof}
  By Obs.\ \ref{obs:P4}, any induced $P_4$ is either of the form
  $\langle xyx'z\rangle$ or $\langle xyzx'\rangle$. In the first case $Q$
  is an ugly quartet. For the remainder of the proof we thus assume
  $\langle xyzx'\rangle$, and w.l.o.g, we suppose \NEW{that} the vertex colors are
  $\sigma(x)=\sigma(x')=r$, $\sigma(y)=s$, and $\sigma(z)=t$.
  
  Let $(T,\sigma)$ be a leaf-colored tree that explains $(\G,\sigma)$, and
  thus, by assumption, also $(G,\sigma)$. Since $\langle xyzx'\rangle$ is
  an induced $P_4$ in $(G,\sigma)$, the edge $xz$ cannot be contained in
  $E(G)$. Hence, we are left with three cases: (i) $(x,z)\in E(\G)$ and
  $(z,x)\notin E(\G)$, (ii) $(z,x)\in E(\G)$ and $(x,z)\notin E(\G)$, and
  (iii) $(x,z),(z,x)\notin E(\G)$.

  \par\noindent Case (i). We have $x'\in N^+_{r}(z)$ and $x\notin
  N^+_{r}(z)$, i.e., $\lca_T(x',z)\prec_T\lca_T(x,z)=:u$. This implies
  $\lca_T(x,x')=u$. Moreover, $(x,y)\in E(G)$ implies
  $\lca_T(x,y)\preceq_T \lca_T(x',y)$. In case of equality, we have
  $\lca_T(x,y)= \lca_T(x',y)\succeq _T u$. Thus, $x\in N^+_r(y)$ implies
  $x'\in N^+_r(y)$.  Hence, since $x'y\notin E(G)$, there must exist a
  leaf $y'\in L[s]$ such that $\lca_T(x',y')\prec_T \lca_T(x',y)$.
  Then, $\lca_T(x',z) \prec_T u$ implies either
  $\lca_T(z,y')\prec_T u \preceq_T \lca_T(x',y)$ or
  $\lca_T(z,y')=\lca_T(x',y')\prec_T \lca_T(x',y)$. Either alternative
  contradicts $yz\in E(G)$. Therefore, $\lca_T(x,y)\prec_T \lca_T(x',y)$.
  Together with $\lca_T(x,x')=u$ this implies $\lca_T(x,y)\prec_T u$.
  Now let $u_y \succeq_T \lca_T(x,y)$ and $u_z\succeq _T \lca_T(x',z)$,
  where $u_y, u_z \in \child(u)$. Then, $yz\in E(G)$ implies that
  $t\notin \sigma(L(T(u_y)))$ and $s\notin \sigma(L(T(u_z)))$.
  Hence, $(x,z),(x',y)\in E(\G)$ and $(z,x),(y,x')\notin E(\G)$, i.e.,
  $\{x,y,z,x'\}$ forms a good quartet. 

  \par\noindent Case (ii). We have $x,x'\in N^+_r(z)$, thus
  $u\coloneqq \lca_T(x,z)=\lca_T(x',z)$. On the other hand,
  $(x,z)\notin E(\G)$ implies that there exists a leaf $z'\in L[t]$ such
  that $\lca_T(x,z')\prec_T\lca_T(x,z)$. Hence, \NEW{as $z\in N^+_t(x')$,}
  we have distinct $u_x, u_{x'}, u_z \in \child(u)$ such that
  $x,z' \prec_T u_x$, $x' \preceq_T u_{x'}$, and $z\preceq _T u_z$.
  Moreover, $yz\in E(G)$ implies $\lca_T(y,z) \preceq _T \lca_T(y,z')$,
  thus we either have (a) $\lca_T(y,z) \prec _T \lca_T(y,z')$ or (b)
  $\lca_T(y,z) = \lca_T(y,z')$. In both cases, we have
  $u_x\prec_T \lca(x,y)$, thus, since $xy\in E(G)$, it follows
  $s\notin \sigma(L(T(u_x)))$. Similarly, since $zx'\in E(G)$, we have
  $r\notin \sigma(L(T(u_z)))$ and $t\notin \sigma(L(T(u_{x'})))$. \NEW{This
    in particular implies $\lca(x',y)\preceq \lca(x,y)$.}  Moreover,
  \NEW{as} $x'y\notin E(G)$, there must exist a leaf $y'\in L[s]$ with
  $\lca_T(x',y')\prec_T \lca_T(x',y)$. In Case (a), we have
  $y\in L(T(v_z))$ and thus $\lca_T(x',y)=u$, \NEW{which implies}
  $y'\preceq_T u_{x'}$.  In summary, this implies for Case (a)
  $x,z'\prec_T u_x$, $x',y'\prec_T u_{x'}$, and $y,z\prec_T u_z$ as well as
  $\sigma(L(T(u_x)))=\{r,t\}$, $\sigma(L(T(u_{x'})))=\{r,s\}$, and
  $\sigma(L(T(u_z)))=\{s,t\}$. Hence, $(z,x),(y,x')\in E(\G)$ and
  $(x,z),(x',y)\notin E(\G)$, i.e., $\{x,y,z,x'\}$ is a bad quartet in
  $(\G,\sigma)$. In Case (b), if $\lca_T(x',y')\succeq u$, then
  $\lca_T(x,y')=\lca_T(x',y')\prec_T \lca_T(x',y)$, contradicting
  $xy\in E(G)$. Hence, $y'\preceq_T u_{x'}$. This implies $\lca_T(x,y)=u$
  since otherwise $\lca_T(x,y')=u\prec_T \lca_T(x,y)$; again a
  contradiction to $xy\in E(G)$. Let $u_y\in \child(u)$ be such that
  $y\preceq_T u_y$. Since $xy, yz \in E(G)$, we conclude
  $\sigma(L(T(u_y)))=\{s\}$. Moreover, $yz\in E(G)$ then implies
  $s\notin \sigma(L(T(u_z)))$. Summarizing Case (b), we thus have
  $x,z'\prec_T u_x$, $x',y'\prec_T u_{x'}$, $y\preceq_T u_y$, and
  $z\prec_T u_z$ as well as $\sigma(L(T(u_x)))=\{r,t\}$,
  $\sigma(L(T(u_{x'})))=\{r,s\}$, $\sigma(L(T(u_y)))=\{s\}$, and
  $\sigma(L(T(u_z)))=\{t\}$. One now easily checks that $\{x,y,z,x'\}$
  again forms a bad quartet in $(\G,\sigma)$.

  \par\noindent Case (iii). Let $u\coloneqq \lca_T(x,x')$. Then
  $(x,z),(z,x)\notin E(\G)$ implies $\lca_T(x',z)\prec_T \lca_T(x,z)$ and
  there must exist some leaf $z'\in L[t]$ such that
  $\lca_T(x,z')\prec_T \lca_T(x,z)$. Hence, there are
  $u_x, u_{x'}\in \child(u)$ with $\lca_T(x,z')\preceq_T u_x$ and
  $\lca_T(x',z)\preceq_T u_{x'}$. By construction, we therefore have either
  $\lca_T(x,y)\prec_T\lca_(x',y)$ or $\lca_T(x,y)=\lca_(x',y)\succeq_T u$.
  The first case implies $\lca_T(y,z')\prec_T u=\lca_T(y,z)$, which
  contradicts $yz\in E(G)$. Hence, it must hold
  $\lca_T(x,y)=\lca_(x',y)\succeq_T u$ and thus, $x'\in N_r^+(y)$ because
  $x\in N_r^+(y)$. Consequently, since $x'y\notin E(G)$, there must be some
  $y'\in L[s]$ such that $\lca_T(x',y')\prec_T \lca_T(x',y)$. The same
  argumentation as in Case (i) shows that $\lca_T(x',z) \prec_T u$ implies
  either $\lca_T(z,y')\prec_T u \preceq_T \lca_T(x',y)$ or
  $\lca_T(z,y')=\lca_T(x',y')\prec_T \lca_T(x',y)$, which in either case
  contradicts $yz\in E(G)$. We therefore conclude that Case (iii) is
  impossible.

\end{proof}

We immediately find the following result that links good quartets to
3-RBMGs of Type \AX{(B)}:
\begin{lemma}
  Let $(G,\sigma)$ be an undirected, connected, $\sthin$-thin and properly
  3-colored graph that contains an induced path $P$ on four vertices.  If
  \NEW{$(G,\sigma)$} satisfies \AX{(B1)} to \AX{(B4.b)} \NEW{w.r.t.\ $P$},
  then there exists a tree $(T,\sigma)$ explaining $(G,\sigma)$ such that
  $P$ is a good quartet in $\G(T,\sigma)$.
\end{lemma}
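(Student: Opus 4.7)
The plan is to invoke Lemma~\ref{lem:charB} to obtain a concrete tree and then directly verify the four directed-arc conditions defining a good quartet. By hypothesis, $(G,\sigma)$ satisfies \AX{(B1)}--\AX{(B4.b)} w.r.t.\ $P=\langle \hat x_1\hat y\hat z\hat x_2\rangle$, so Lemma~\ref{lem:charB} guarantees that $(G,\sigma)$ is a 3-RBMG of Type \AX{(B)} and, moreover, that $\hat x_1,\hat y\in L_t^P$ and $\hat x_2,\hat z\in L_s^P$. The constructive part of the proof of Lemma~\ref{lem:charB} furnishes a tree $(T,\sigma)$ of Type \AX{(II$^*$)} explaining $(G,\sigma)$ in which two distinct inner children $v_1,v_2\in\child(\rho_T)$ of the root satisfy $L(T(v_1))=L_t^P$, $L(T(v_2))=L_s^P$, $\sigma(L(T(v_1)))=\{r,s\}$ and $\sigma(L(T(v_2)))=\{r,t\}$ (with $\sigma(\hat x_i)=r$, $\sigma(\hat y)=s$, $\sigma(\hat z)=t$). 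Consequently, $\hat x_1,\hat y\prec_T v_1$ and $\hat x_2,\hat z\prec_T v_2$.

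Given this tree, I would first establish the two arcs that must be \emph{present}, namely $(\hat x_1,\hat z),(\hat x_2,\hat y)\in E(\G(T,\sigma))$. For the first, note that $t\notin\sigma(L(T(v_1)))$ together with $\hat x_1\prec_T v_1$ forces $\lca_T(\hat x_1,z)=\rho_T$ for every $z\in L[t]$; hence every color-$t$ leaf, and in particular $\hat z$, is a best match of $\hat x_1$ in color $t$. The argument for $(\hat x_2,\hat y)$ is entirely symmetric, using $s\notin\sigma(L(T(v_2)))$.

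Next I would verify the two arcs that must be \emph{absent}, namely $(\hat z,\hat x_1),(\hat y,\hat x_2)\notin E(\G(T,\sigma))$. Because $\hat x_2\in L(T(v_2))\cap L[r]$ and $\hat z\prec_T v_2$, we have $\lca_T(\hat z,\hat x_2)\preceq_T v_2\prec_T\rho_T=\lca_T(\hat z,\hat x_1)$, so $\hat x_1\notin N^+_r(\hat z)$ and thus $(\hat z,\hat x_1)\notin E(\G(T,\sigma))$. An analogous computation based on $\hat x_1\in L(T(v_1))\cap L[r]$ and $\hat y\prec_T v_1$ handles the arc $(\hat y,\hat x_2)$. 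Combined with the hypothesis that $\langle \hat x_1\hat y\hat z\hat x_2\rangle$ is an induced $P_4$ in $(G,\sigma)$, this shows that $\{\hat x_1,\hat y,\hat z,\hat x_2\}$ satisfies every requirement of Def.~\ref{def:GoodBadUgly} and hence is a good quartet in $\G(T,\sigma)$.

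The only subtlety to double-check is that the tree produced in the proof of Lemma~\ref{lem:charB} genuinely places $\hat x_1,\hat y$ strictly below $v_1$ and $\hat x_2,\hat z$ strictly below $v_2$; but this is immediate from the construction, in which $v_1$ and $v_2$ are the roots of the caterpillars built on the leaf sets $L_t^P$ and $L_s^P$, together with the membership assertions $\hat x_1,\hat y\in L_t^P$ and $\hat x_2,\hat z\in L_s^P$ already stated in Lemma~\ref{lem:charB}. No further case analysis on the tree's internal structure is required.
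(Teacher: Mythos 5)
Your proposal is correct and follows essentially the same route as the paper: both invoke Lemma~\ref{lem:charB} to get the Type \AX{(B)} conclusion and the memberships $\hat x_1,\hat y\in L_t^P$, $\hat x_2,\hat z\in L_s^P$, then pass to a Type \AX{(II$^*$)} tree with $L(T(v_1))=L_t^P$ and $L(T(v_2))=L_s^P$ (the paper cites Lemma~\ref{lem:L^B-tree} for this; your appeal to the construction inside the proof of Lemma~\ref{lem:charB} yields the same tree). Your explicit verification of the four arc conditions simply fills in the ``one easily checks'' step of the paper's proof, and it is accurate.
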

\begin{proof}
	Suppose that $(G,\sigma)$ satisfies \AX{(B1)} to \AX{(B4.b)} w.r.t.\ $P$.
  Then, by Lemma
  \ref{lem:charB}, $(G,\sigma)$ is a 3-RBMG of Type \AX{(B)}. Thus,
  according to Lemma \ref{lem:L^B-tree}, there exists a Type \AX{(II$^*$)}
  tree $(T,\sigma)$ with root $\rho_T$ such that $L_t^P=L(T(v_1))$,
  $L_s^P=L(T(v_2))$ for distinct $v_1,v_2\in\child_T(\rho_T)\setminus L$
  and $L_*^P=\child_T(\rho_T)\cap L$, that explains $(G,\sigma)$. In
  particular, by Property \AX{(B1)}, we have
  $P\coloneqq \langle x y z x'\rangle$ with $\sigma(x)=\sigma(x')=r$,
  $\sigma(y)=s$ and $\sigma(z)=t$ for distinct colors $r,s,t$. Now, as
  $x,y\in L_t^P$ and $x',z\in L_s^P$ by Lemma \ref{lem:charB} and, by
  definition, $\sigma(L_t^P)=\{r,s\}$, $\sigma(L_s^P)=\{r,t\}$, one easily
  checks that $P$ is indeed a good quartet in $\G(T,\sigma)$.  
\end{proof}
We continue with some basic result about ugly quartets before analyzing
good and bad quartets in more details.
\begin{lemma}
  Let $\langle xyx'z\rangle$ be an ugly quartet in some connected
  $\sthin$-thin 3-RBMG $(G,\sigma)$ and $(T,\sigma)$ with root $\rho_T$ a
  tree of Type \AX{(II)} or \AX{(III)} that explains $(G,\sigma)$. Then the
  children $v_a\in \child(\rho_T)$ with $a\in \{x,x',y,z\}$ and
  $a\preceq_T v_a$ satisfy exactly one of the following conditions:
  \begin{enumerate}
  \item[(i)] $v_x=x$, $v_{x'}=v_z$, $v_y\neq y$, and $v_x$, $v_y$, $v_z$
    are pairwise distinct,
  \item[(ii)] $v_x=v_{x'}=v_z\neq v_y$ and $v_y\neq y$,
  \item[(iii)] $v_x\neq x$, $v_{x'}=x'$, $v_y=v_z$, and $v_x$, $v_{x'}$,
    $v_y$ are pairwise distinct,
  \item[(iv)] $v_x\neq x$, $v_{x'}=x'$, $v_y\neq y$, $v_z=z$, and $v_x$,
    $v_{x'}$, $v_y$, $v_z$ are pairwise distinct,
  \item[(v)] $v_x\neq x$, $v_{x'}=x'$, $v_y=y$, $v_z\neq z$, and $v_x$,
    $v_{x'}$, $v_y$, $v_z$ are pairwise distinct.
  \end{enumerate}
  In particular, $x$, $x'$, and $y$ can never reside within the same
  subtree $T(v_x)$. \NEW{Indeed, all cases may appear.}
\end{lemma}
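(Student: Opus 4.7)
My approach will be a case analysis on the subtrees of $\rho_T$ containing each of $x,x',y,z$. Writing $\sigma(x)=\sigma(x')=r$, $\sigma(y)=s$, $\sigma(z)=t$, the ugly quartet $\langle xyx'z\rangle$ provides the edges $xy,x'y,x'z$ and the non-edges $xz,xx',yz$ in $(G,\sigma)$. The first step is to rule out Type \AX{(II)}: in such a tree, Lemma \ref{lem:TreeTypes}(iii) would force $yz\in E(G)$ for every $y\in L[s]$ and $z\in L[t]$, contradicting $yz\notin E(G)$. Hence $(T,\sigma)$ must be of Type \AX{(III)}, and $\rho_T$ will have three inner children $v_1,v_2,v_3$ of colors $\{r,s\},\{r,t\},\{s,t\}$ respectively, together with possibly some leaf children. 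The color constraints immediately restrict the candidates to $v_x,v_{x'}\in\{v_1,v_2\}\cup\{\text{the corresponding leaf}\}$, $v_y\in\{v_1,v_3,y\}$, and $v_z\in\{v_2,v_3,z\}$.

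The second step is to rule out $v_x=v_1$. The plan is: assuming $v_x=v_1$, Lemma \ref{lem:TreeTypes}(i) combined with $xy\in E(G)$ forces $y\in L(T(v_1))$ with $\parent(x)=\parent(y)$, and each of the three remaining options for $v_{x'}$ then yields a contradiction. Namely, $v_{x'}=v_1$ would force $\parent(x')=\parent(y)=\parent(x)$, whence a combined use of Lemma \ref{lem:TreeTypes}(i),(ii),(iv) gives $N(x)=N(x')$, violating $\sthin$-thinness; $v_{x'}=v_2$ would give $x'y\notin E(G)$ by Lemma \ref{lem:TreeTypes}(ii); and $v_{x'}=x'$ would give $x'y\notin E(G)$ by Lemma \ref{lem:TreeTypes}(iv), since $s\in\sigma(L(T(v_1)))$. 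This will leave $v_x\in\{v_2,x\}$.

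The remainder of the proof is the case split on $v_x\in\{v_2,x\}$, combined with exhaustive substitution for $v_y\in\{v_1,v_3,y\}$ (with $v_y=v_1$ ruled out analogously to the first step) and $v_z\in\{v_2,v_3,z\}$. For each combination I will verify the edges $xy,x'y,x'z$ and the non-edges $xz,yz$ via Lemma \ref{lem:TreeTypes}(i)--(iv) and $\sthin$-thinness. This will leave exactly the five configurations (i)--(v) of the lemma; for example, Case (ii) will emerge from $v_x=v_{x'}=v_z=v_2$ together with the parent constraints $\parent(x')=\parent(z)\ne\parent(x)$ forced by $x'z\in E(G)$, $xz\notin E(G)$, and $\sthin$-thinness. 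In each of the five surviving configurations, $v_y\ne v_x$, so $y\notin L(T(v_x))$, which yields the ``in particular'' claim that $x,x',y$ never lie in a common $T(v_x)$. Realizability is then settled by exhibiting, for each of (i)--(v), an explicit small Type \AX{(III)} tree whose associated RBMG contains an ugly quartet of the prescribed shape. The main obstacle will be the bookkeeping in the subcases: Lemma \ref{lem:TreeTypes}(iv) must be invoked in both orientations, depending on whether the leaf child of $\rho_T$ under consideration is of color $r$ or of color $s$ or $t$, and $\sthin$-thinness must be rechecked whenever two $r$-colored leaves could end up attached below a common inner child of $\rho_T$.
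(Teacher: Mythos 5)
Your opening move — discarding Type \AX{(II)} trees — is incorrect, and since the whole case analysis is built on the resulting Type \AX{(III)} structure, this is a fatal gap rather than a cosmetic one. Lemma \ref{lem:TreeTypes}(iii) forces all edges between $L[s]$ and $L[t]$ only when the two $2$-colored children of the root carry the color sets $\{r,s\}$ and $\{r,t\}$, i.e., when the \emph{shared} color of the two subtrees is $r=\sigma(x)=\sigma(x')$. Nothing in the hypotheses guarantees this. If the shared color is instead $t=\sigma(z)$ (subtrees colored $\{r,t\}$ and $\{s,t\}$), the relabeled Lemma \ref{lem:TreeTypes}(iii) only forces all $r$--$s$ edges, which is perfectly consistent with the ugly quartet (indeed $xy,x'y\in E(G)$ there), while $yz\notin E(G)$ and $xz\notin E(G)$ are both attainable. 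Concretely, take $x\in\child(\rho_T)\cap L[r]$, $x',z$ siblings inside the $\{r,t\}$-colored subtree, and $y$ inside the $\{s,t\}$-colored subtree: Lemma \ref{lem:TreeTypes}(i), (ii) and (iv) give exactly the edges $xy,x'y,x'z$ and non-edges $xz,yz,xx'$, so $\langle xyx'z\rangle$ is an ugly quartet in a Type \AX{(II)} tree realizing configuration (i). The paper itself exhibits this: in Fig.\ \ref{fig:P_4-classes} the graph is a Type \AX{(B)} 3-RBMG explained by Type \AX{(II)} trees, and $\langle a_1 b_1 c_1 b_2\rangle$ (read as $\langle b_2 c_1 b_1 a_1\rangle$) is named there as an ugly quartet.

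The paper's proof avoids this trap by not fixing the tree type at all: it splits on whether $v_x=x$ (a leaf child of the root) or $v_x\neq x$, and in the latter case uses $xy,x'y\in E(G)$ plus $\sthin$-thinness to deduce $\sigma(L(T(v_x)))=\{r,t\}$, then branches on $v_x=v_{x'}$ versus $v_x\neq v_{x'}$ and on $v_y=v_z$ versus $v_y\neq v_z$. Your second step (excluding $s\in\sigma(L(T(v_x)))$ when $v_x$ is an inner vertex) is essentially the paper's argument and is fine, but you would need to redo the enumeration so that it also covers the Type \AX{(II)} configurations — in particular the subcases where only two $2$-colored children exist and one or more of $x,y,z$ sit in leaf children of the root — before the five-case conclusion and the realizability claims can be drawn.
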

\begin{proof}
  Let $L$ be the vertex set of $G$ and $r,s,t$ be three distinct colors in
  $(G,\sigma)$, where $\sigma(x)=\sigma(x')=r$, $\sigma(y)=s$, and
  $\sigma(z)=t$.

  We start by considering the two Cases (a) $v_x=x$, i.e.,
  $x\in \child(\rho_T)$ and (b) $v_x\neq x$.

  We first assume Case (a), i.e., $x\in \child(\rho_T)$. Note first that
  this immediately implies $x'\notin \child(\rho_T)$, i.e., $v_{x'}\neq x'$
  because $(G,\sigma)$ is $\sthin$-thin (cf.\ Lemma
  \ref{lem:2col-subtrees}). Moreover, we have
  $s\notin \sigma(L(T(v_{x'})))$ because $x'y\in E(G)$ and thus, since
  $v_{x'}\neq x'$, Lemma \ref{lem:2col} implies that
  $t\in \sigma(L(T(v_{x'})))$. Consequently, $z\in L(T(v_{x'}))$ as
  otherwise, there is some $z'\in L(T(v_{x'}))$ such that
  $\lca(x',z')\prec_T \lca(x',z)$, which contradicts $x'z\in E(G)$.  Since
  $xy\in E(G)$, $x\in \child(\rho_T)$ implies either $y\in \child(\rho_T)$
  or $\sigma(L(T(v_y)))=\{s,t\}$ as otherwise
  $\lca(x'',y)\prec_T \lca(x,y)$ for some $x''\in L(T(v_y))\cap L[r]$,
  contradicting $xy\in E(G)$. If the first case is true, we have, by
  construction, $\lca(y,z)\preceq_T \lca(y,z')$ and
  $\lca(y,z)\preceq_T \lca(y',z)$ for any $y'\in L[s]$ and $z'\in L[t]$,
  i.e., $yz\in E(G)$; a contradiction since $\langle xyx'z\rangle$ is an induced
  $P_4$. Hence, we have $\sigma(L(T(v_y)))=\{s,t\}$, which in particular
  implies $v_y\neq y$ and $v_y\neq v_{x'}, v_x$. Using Lemma
  \ref{lem:TreeTypes}, one easily checks that, if $\parent(x')=\parent(z)$,
  $\langle xyx'z\rangle$ is indeed an induced $P_4$ in $(G,\sigma)$, which
  finally shows Statement (i).

  Now assume that Case (b) is true, i.e., $v_x\neq x$. Then, by Lemma
  \ref{lem:2col}, the subtree $(T(v_x),\sigma_{|L(T(v_x))})$ must contain
  at least two colors. Assume first, for contradiction, that
  $s\in \sigma(L(T(v_x)))$. Then, as $xy,x'y\in E(G)$, Lemma
  \ref{lem:TreeTypes} implies that $x'\in L(T(v_x))$ and
  $\parent(x)=\parent(x')=\parent(y)$, which contradicts the
  $\sthin$-thinness of $(G,\sigma)$. Hence, $\sigma(L(T(v_x)))=\{r,t\}$,
  and in particular $v_x\neq v_y$.
  \\
  If $v_x=v_{x'}$, it follows from $x'z\in E(G)$ that $z\preceq_T v_x$
  (cf.\ Lemma \ref{lem:TreeTypes}), i.e., $v_x=v_z$. Moreover, since
  $s\notin \sigma(L(T(v_x)))$ and $yz\notin E(G)$, Lemma
  \ref{lem:TreeTypes}(iv) implies $v_y\neq y$. Hence, by Lemma
  \ref{lem:2col}, it must hold $\sigma(L(T(v_y)))=\{s,t\}$. Choosing
  $\parent(x')=\parent(z)\neq \parent(x)$, one can again use Lemma
  \ref{lem:TreeTypes} in order to show that $\langle xyx'z\rangle$ forms an
  induced $P_4$, which proves (ii).
  \\
  On the other hand, if $v_x\neq v_{x'}$, then $xy,x'y\in E(G)$ requires
  $\lca(x,y)=\lca(x',y)$, hence $v_y\neq v_x,v_{x'}$. Again, $x'y\in E(G)$
  then implies $s\notin \sigma(L(T(v_{x'})))$. Since
  $\sigma(L(T(v_{x'})))= \sigma(L(T(v_x)))=\{r,t\}$ is not possible by
  construction of \NEW{a tree of Type \AX{(II)} or \AX{(III)}}
  contains only color
  $r$, hence $v_{x'}=x'$ by Lemma \ref{lem:2col}. It finally remains to
  distinguish the two cases $v_y=v_z$ and $v_y\neq v_z$. In the first case,
  if $\parent(y)\neq \parent(z)$ in $(T,\sigma)$, we can apply Lemma
  \ref{lem:TreeTypes} to show $yz\notin E(G)$ and furthermore, that
  $\langle xyx'z\rangle$ is again an induced $P_4$. This yields Statement
  (iii).
  \\
  If the latter case is true, i.e., $v_y\neq v_z$, then, as
  $xy,x'y,x'z\in E(G)$, we have in particular
  $r\notin \sigma(L(T(v_y))),\sigma(L(T(v_z)))$. Furthermore, since
  $yz\notin E(G)$, there is either some $z'\in L(T(v_y))\cap L[t] $ such
  that $\lca(y,z')\prec_T\lca(y,z)$, or some $y'\in L(T(v_z))\cap L[s] $
  such that $\lca(y',z)\prec_T\lca(y,z)$. Note that, since $v_y\neq v_z$,
  $\sigma(L(T(v_y)))=\sigma(L(T(v_z)))=\{s,t\}$ is not possible by
  construction of $(T,\sigma)$. Hence, by applying Lemma \ref{lem:2col} to
  these two cases, we either obtain (iv) or (v). Again, Lemma
  \ref{lem:TreeTypes} easily shows that $\langle xyx'z\rangle$ is an
  induced $P_4$ in $(G,\sigma)$, which concludes the proof.  
\end{proof}

We will from now on focus on good and bad quartets only as they are of
particular interest for the characterization of Type \AX{(B)} 3-RBMGs. We
start with some basic result.

\begin{lemma}\label{lem:P_4-trees}
  Let $(G,\sigma)$ be an RBMG and $\NEW{P\coloneqq}\langle xyzx'\rangle$ an induced $P_4$
  in $(G,\sigma)$ with $\sigma(x)=\sigma(x')$, let $(T,\sigma)$ be a tree
  explaining $(G,\sigma)$ and let $v\coloneqq \lca_T(x,x',y,z)$. Then the
  distinct children $v_i \in \child(v)$ satisfy exactly one of the three
  alternatives
  \begin{enumerate}
  \item[(i)]   $x,y\preceq_T v_1$ and $x',z\preceq_T v_2$,
  \item[(ii)]  $x\preceq_T v_1$, $y,z\preceq_T v_2$, and $x' \preceq_T v_3$,
  \item[(iii)] $x\preceq_T v_1$, $y\preceq_T v_2$, $x'\preceq_T v_3$, and
    $z\preceq_T v_4$.
  \end{enumerate}
Indeed, all three cases may appear.
\end{lemma}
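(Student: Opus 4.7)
The plan is to analyze how the four leaves $x, x', y, z$ distribute among the children of $v=\lca_T(x,x',y,z)$. By definition of $v$, at least two of the subtrees $T(v_i)$ are occupied. First I would establish three forbidden pair-configurations that rule out having certain pairs inside a common $T(v_i)$. (1) If $x$ and $z$ shared a subtree $T(v_i)$, then $\lca_T(x,z)\preceq_T v_i\prec_T v=\lca_T(x',z)$, so $x'z$ would fail to be a best match, contradicting $x'z\in E(G)$. (2) Symmetrically, $x'$ and $y$ cannot share a subtree, for otherwise $\lca_T(x',y)\prec_T v=\lca_T(x,y)$, forcing $xy\notin E(G)$. (3) If $x$ and $x'$ shared a subtree $T(v_i)$, then $\lca_T(x,y)=\lca_T(x',y)=v$; the only way to break $yx'\notin E(G)$ would be via some $y^*\preceq_T v_i$ with $\lca_T(y^*,x')\prec_T v$, but that $y^*$ also yields $\lca_T(y^*,x)\prec_T\lca_T(x,y)$, contradicting $xy\in E(G)$.

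Every 3-element subset of $\{x,x',y,z\}$ contains one of the forbidden pairs, so no child of $v$ hosts three of the four leaves. Hence the induced partition of $\{x,x',y,z\}$ among the children of $v$ has shape $2{+}2$, $2{+}1{+}1$, or $1{+}1{+}1{+}1$. In the $2{+}2$ case, each block must consist of an allowed pair, i.e., $\{x,y\}$, $\{x',z\}$, or $\{y,z\}$. Since the complement of $\{y,z\}$ is $\{x,x'\}$, which is forbidden, the only possibility is $\{x,y\}\mid\{x',z\}$, yielding case (i).

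For the $2{+}1{+}1$ case, the paired block must again be an allowed pair. I would rule out the pair $\{x,y\}$ (and symmetrically $\{x',z\}$) by careful color tracking: placing $\{x,y\}\subseteq L(T(v_1))$, $x'\preceq_T v_2$, $z\preceq_T v_3$, the edge $yz\in E(G)$ forces $t\notin\sigma(L(T(v_1)))$, and the edge $x'z\in E(G)$ forces $r\notin\sigma(L(T(v_3)))$. Together these color exclusions imply that neither a leaf $x^*\in L[r]$ with $\lca_T(x^*,z)\prec_T v$ nor a leaf $z^*\in L[t]$ with $\lca_T(x,z^*)\prec_T v$ can exist, so $xz$ would be a reciprocal best match, contradicting that $\langle xyzx'\rangle$ is induced. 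The only remaining allowed pair is $\{y,z\}$, giving case (ii); its consistency with the $P_4$ requirements is then straightforward to verify by specifying which colors appear in which $T(v_i)$. The $1{+}1{+}1{+}1$ partition is case (iii), and a direct check of the RBM conditions shows it is realizable as soon as the color sets of the four subtrees of $v$ satisfy suitable inclusion/exclusion constraints.

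For the final claim that all three configurations actually occur, I would simply invoke explicit small examples; in particular, Fig.~\ref{fig:P_4-classes} together with the characterizations in Lemmas \ref{lem:L^B-tree} and \ref{lem:L^C-tree} already supplies trees realizing each of the three LCA-patterns. The main obstacle in the proof is the $2{+}1{+}1$ analysis: eliminating the pair $\{x,y\}$ requires tracking the forced color restrictions in each of $T(v_1), T(v_2), T(v_3)$ simultaneously in order to derive that $xz$ would become an edge, in contradiction with $\langle xyzx'\rangle$ being induced.
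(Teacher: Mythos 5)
Your overall strategy (classify the distribution of $\{x,x',y,z\}$ over the children of $v$ via forbidden pairs, then use color exclusions to eliminate the bad $2{+}1{+}1$ block) is sound and, modulo reorganization, is the paper's argument. The color-tracking step that kills the block $\{x,y\}$ with $x'$ and $z$ in separate subtrees is correct and is exactly the computation the paper performs. However, there is a gap in the step ``every 3-element subset of $\{x,x',y,z\}$ contains a forbidden pair, so no child of $v$ hosts three of the four leaves.'' Each of your three forbidden-pair arguments silently assumes that the relevant \emph{witness} leaf lies outside the subtree in question: in (1) you use $\lca_T(x',z)=v$, which holds only if $x'\notin L(T(v_i))$; in (2) you use $\lca_T(x,y)=v$, which holds only if $x\notin L(T(v_i))$; in (3) you use $\lca_T(x,y)=\lca_T(x',y)=v$, which holds only if $y\notin L(T(v_i))$. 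Consequently the configuration $x,x',y\in L(T(v_i))$, $z\notin L(T(v_i))$ is excluded by none of the three arguments as written: the witness for (2) and for (3) sits inside $T(v_i)$, so both arguments' premises fail. So the deduction that no child hosts three leaves is, as it stands, circular for that configuration.

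The gap is easily repaired, but it needs a different (and in fact simpler) observation: prove (3) \emph{first} and unconditionally. If $x,x'\in L(T(v_i))$, then at least one $w\in\{y,z\}$ lies outside $T(v_i)$ (since $v=\lca_T(x,x',y,z)$), and for that $w$ one has $\lca_T(x,w)=\lca_T(x',w)=v$; from this, $xw$ is a reciprocal best match if and only if $x'w$ is (the best-match conditions for $x$ and $x'$ relative to $w$ coincide, because any leaf of color $\sigma(w)$ strictly closer to $x$ than $v$ lies in $T(v_i)$ and is then also strictly closer to $x'$, and vice versa). This contradicts the $P_4$, which has $xy\in E(G)$, $x'y\notin E(G)$ and $x'z\in E(G)$, $xz\notin E(G)$. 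With (3) established, your arguments for (1) and (2) become valid verbatim, since the witness $x'$ (resp.\ $x$) is then guaranteed to lie outside $T(v_i)$, and the rest of your proof goes through. For the realizability claim, note that the paper settles it by writing down explicit color sets $\sigma(L(T(v_i)))$ for patterns (ii) and (iii); your pointer to examples works, but you should make sure a pattern-(ii) instance is actually exhibited (the bad quartets on a hexagon in a Type \AX{(III)} tree provide one).
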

\begin{proof}
  Let $\sigma(x)=\sigma(x')=r$, $\sigma(y)=s$, and $\sigma(z)=t$.\\
  Suppose first $x,y \in L(T(v_1))$ for some $v_1\in \child(v)$. If
  $z\preceq_T v_1$, then $x'\in L(T(v_1))$ since otherwise
  $\lca_T(x,z)\prec_T\lca_T(x',z)$, contradicting $zx'\in E(G)$. Thus,
  $\lca_T(x,x',y,z)\prec_T v$; a contradiction to the definition of
  $v$. Hence, $z\notin L(T(v_1))$. Then, $yz\in E(G)$ implies that
  $t\notin \sigma(L(T(v_1)))$, hence in particular
  $v=\lca_T(x,z)\preceq _T \lca_T(x,z')$ for any $z'\in L[t]$, i.e.,
  $z\in N^+_t(x)$. Since $xz\notin E(G)$, it must therefore hold
  $\lca_T(x',z)\prec_T \lca_T(x,z)=v$, thus $x',z\in L(T(v_2))$ for some
  $v_2\in \child(v)\setminus \{v_1\}$. This implies Case (i).

  Now suppose $x$ and $y$ are located in different subtrees below $v$,
  i.e., $x\preceq_T v_1$ and $y\preceq_T v_2$ for distinct children
  $v_1,v_2\in\child(v)$. Since $xy\in E(G)$ and $\lca_T(x,y)=v$, we
  conclude that $r\notin \sigma(L(T(v_2)))$ and
  $s \notin \sigma(L(T(v_1)))$. This immediately implies
  $x'\notin L(T(v_1))$ as otherwise (a) $\lca_T(x,y)=\lca_T(x',y)$ results
  in $x'\in N^+_r(y)$, and (b) $\lca_T(x',y)\preceq_T \lca_T(x',y')$ for
  any $y'\in L[s]$, hence $x'y\in E(G)$; a contradiction. Therefore, there
  must be a child $v_3\neq v_1,v_2$ of $v$ such that $x'\preceq_T v_3$. As
  a consequence, $z$ cannot be contained in $L(T(v_1))$ since this would
  imply $\lca_T(x,z)\prec_T \lca_T(x',z)$, contradicting $x'z\in
  E(G)$. Suppose $z\in L(T(v_3))$. Then, since $yz\in E(G)$, we have
  $t\notin \sigma(L(T(v_2)))$ and $s\notin \sigma(L(T(v_3)))$. As we
  already know $r\notin \sigma(L(T(v_2)))$, we conclude
  $\sigma(L(T(v_2)))=\{s\}$ and $\sigma(L(T(v_3)))=\{r,t\}$. Clearly, this
  implies $yx'\in E(G)$; a contradiction. Therefore, $z\notin L(T(v_3))$,
  thus we either have $z\preceq_T v_2$ or there exists another child $v_4$
  of $v$ ($v_4\neq v_1,v_2,v_3$) such that $z\preceq_T v_4$.  The latter
  shows that one of the Cases (ii) and (iii) may occur.  However, we need
  to ensure that both can happen given the existence of the induced $P_4$
  $\langle xyzx'\rangle$.  Let us first assume $z\in L(T(v_2))$, thus
  $\sigma(L(T(v_2)))=\{s,t\}$. If $\sigma(L(T(v_1)))=\{r,t\}$ and
  $\sigma(L(T(v_3)))=\{r,s\}$, then one easily checks that
  $\langle xyzx'\rangle$ is an induced $P_4$ in $(G,\sigma)$, which implies
  statement (ii). On the other hand, if $z\preceq_T v_4$ and
  $\sigma(L(T(v_1)))=\{r,t\}$, $\sigma(L(T(v_2)))=\{s\}$,
  $\sigma(L(T(v_3)))=\{r,s\}$ $\sigma(L(T(v_4)))=\{t\}$, then
  $\langle xyzx'\rangle$ also forms an induced $P_4$ in $(G,\sigma)$, i.e.,
  Case (iii) is true.  
\end{proof}

It turns out that the location of good quartets in any tree is strictly
constrained:
\begin{lemma}\label{lem:iP_4-tree}
  Let $(\G,\sigma)$ be a BMG containing a good quartet
  $\langle xyzx' \rangle$, $(T,\sigma)$ a tree explaining $(\G,\sigma)$ and
  $v\coloneqq \lca(x,x',y,z)$.  Then, $x,y \preceq_T v_1$ and
  $x',z\preceq _T v_2$ for some distinct $v_1, v_2\in \child(v)$.
\end{lemma}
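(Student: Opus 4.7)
The plan is to invoke Lemma \ref{lem:P_4-trees} to narrow down the possible placements of $x, x', y, z$ relative to the children of $v$ to three cases, and then use the directed-arc conditions that distinguish a good quartet from a general induced $P_4$ to rule out two of them. Throughout, let $\sigma(x)=\sigma(x')=r$, $\sigma(y)=s$, $\sigma(z)=t$, and let $v_1,v_2,\dots$ denote the distinct children of $v$ that host the four vertices as in Lemma \ref{lem:P_4-trees}. The core observation driving both contradictions is that $xy\in E(G)$ forces the arc $(y,x)\in \E$, hence $\lca_T(y,x)\preceq_T \lca_T(y,x'')$ for every $x''\in L[r]$; combined with the placement of $y$ this will exclude color $r$ from certain subtrees of $v$, which will clash with the existence of the ``witness'' $r$-vertex required by $(z,x)\notin\E$ or $(y,x')\notin\E$.

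First, I would treat Case (ii) of Lemma \ref{lem:P_4-trees}: $x\preceq_T v_1$, $y,z\preceq_T v_2$, $x'\preceq_T v_3$. Here $\lca_T(x,y)=v$, so the arc $(y,x)\in\E$ (which is implied by $xy\in E(G)$) forces $r\notin\sigma(L(T(v_2)))$, because any $x''\in L[r]\cap L(T(v_2))$ would satisfy $\lca_T(y,x'')\prec_T v = \lca_T(y,x)$. On the other hand, goodness gives $(z,x)\notin\E$, so by definition there is some $x''\in L[r]$ with $\lca_T(x'',z)\prec_T \lca_T(x,z)=v$; since $z\preceq_T v_2$, this $x''$ must lie in $L(T(v_2))$, forcing $r\in\sigma(L(T(v_2)))$. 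This contradicts the previous conclusion, ruling out Case (ii).

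Next, I would treat Case (iii): $x\preceq_T v_1$, $y\preceq_T v_2$, $x'\preceq_T v_3$, $z\preceq_T v_4$. Again $\lca_T(x,y)=v$ and $xy\in E(G)$ imply $(y,x)\in\E$, so $r\notin \sigma(L(T(v_2)))$ for the same reason as above. Now goodness gives $(y,x')\notin\E$, so there exists $x''\in L[r]$ with $\lca_T(x'',y)\prec_T \lca_T(x',y)=v$; since $y\preceq_T v_2$, this witness $x''$ must satisfy $x''\preceq_T v_2$, giving $r\in \sigma(L(T(v_2)))$, a contradiction. Hence only Case (i) of Lemma \ref{lem:P_4-trees} remains, namely $x,y\preceq_T v_1$ and $x',z\preceq_T v_2$ for two distinct children of $v$, which is the claim.

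There is no real obstacle here beyond carefully unpacking the asymmetries encoded in Def.\ \ref{def:GoodBadUgly}: the essential point is that ``good'' fixes the direction of the broken arcs (the missing best matches point \emph{out of} $z$ and $y$, not out of $x$ and $x'$), which is precisely what produces the $r$-witness below $v_2$ in each of the two forbidden cases and collides with the symmetric best-match requirement from $xy\in E(G)$. Once the placement of the witness vertex is identified, both contradictions are immediate.
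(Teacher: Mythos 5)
Your proof is correct and follows essentially the same route as the paper's: both reduce to the three placements of Lemma \ref{lem:P_4-trees} and exploit the missing reverse arcs of the good quartet to produce an $r$-colored witness below the child of $v$ containing $y$ (resp.\ $z$), which clashes with the best-match requirement $(y,x)\in\E$ coming from $xy\in E(G)$. The only difference is organizational: the paper disposes of cases (ii) and (iii) in one stroke by observing that $\lca_T(y,x')\preceq_T v=\lca_T(y,x)$ would force $(y,x')\in\E$, whereas you treat the two cases separately and use $(z,x)\notin\E$ as the witness-producing condition in case (ii).
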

\begin{proof}
  Let $v\coloneqq \lca_T(x,x',y,z)$ and $v_1\in \child(v)$ such that
  $x\preceq_T v_1$. Suppose first $y\notin L(T(v_1))$, hence in particular
  $\lca_T(y,x')\preceq_T \NEW{v} = \lca_T(y,x)$. Since
  $xy\in E(G(T,\sigma))$, this implies $x'\in N^+_{\sigma(x)}(y)$ in
  $(\G,\sigma)$; a contradiction to $\langle xyzx' \rangle$ forming a good
  quartet. Hence, $y\preceq_T v_1$.  As $(T,\sigma)$ must satisfy one of
  the three cases of Lemma \ref{lem:P_4-trees} and the only possible case
  is (i), we can now conclude $x,y\preceq_T v_1$ and $x',z\preceq _T v_2$.
  
\end{proof}
\NEW{Note that the latter result in addition shows that the Cases (ii) and
  (iii) in Lemma~\ref{lem:P_4-trees} must correspond to bad quartets.
  Lemma~\ref{lem:iP_4-tree} now can be used to show that the any good
  quartet in an BMG is endowed with the same coloring.}

\begin{corollary}\label{cor:iP_4-color}
  Let $(G,\sigma)$ be a connected $\sthin$-thin 3-RBMG of Type \AX{(B)},
  $(\G,\sigma)$ a BMG containing $(G,\sigma)$ as symmetric part, and let
  $Q=\langle xyzx'\rangle$ with $\sigma(x)=\sigma(x')$ be a good quartet in
  $(\G,\sigma)$. Then every good quartet with
  $\langle x_1y_1z_1x_1'\rangle\in \mathscr{P}_4$ has colors
  $\sigma(x_1)=\sigma(x_1')=\sigma(x)$, $\sigma(y_1)=\sigma(y)$, and
  $\sigma(z_1)=\sigma(z)$.
\end{corollary}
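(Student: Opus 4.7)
The approach is to fix an arbitrary tree $(T,\sigma)$ that explains $(\G,\sigma)$, apply Lemma~\ref{lem:iP_4-tree} to both good quartets simultaneously, extract tight color-set constraints on the subtrees into which the quartets fall, and then use the Type~(II*) normal form supplied by Lemma~\ref{lem:L^B-tree} to identify the endpoint color of a good quartet as a quantity depending only on $(G,\sigma)$.

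Writing $v = \lca_T(x,x',y,z)$ and $v' = \lca_T(x_1,x_1',y_1,z_1)$, Lemma~\ref{lem:iP_4-tree} supplies distinct $v_1, v_2 \in \child(v)$ with $x, y \preceq_T v_1$ and $x', z \preceq_T v_2$, and $v_1', v_2' \in \child(v')$ with $x_1, y_1 \preceq_T v_1'$ and $x_1', z_1 \preceq_T v_2'$. From the directed arc condition $(x,z) \in \E$ of a good quartet combined with $\lca_T(x,z) = v$, any $\sigma(z)$-colored leaf below $v_1$ would yield $\lca_T(x,\cdot) \prec_T v$, contradicting the best-match condition; hence $\sigma(z) \notin \sigma(L(T(v_1)))$ and symmetrically $\sigma(y) \notin \sigma(L(T(v_2)))$. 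Since $\sigma(V(G))$ has only three colors and $\sigma(x),\sigma(y) \in \sigma(L(T(v_1)))$, $\sigma(x),\sigma(z) \in \sigma(L(T(v_2)))$, this pins down $\sigma(L(T(v_1))) = \{\sigma(x),\sigma(y)\}$ and $\sigma(L(T(v_2))) = \{\sigma(x),\sigma(z)\}$; the analogous equalities hold for $v_1', v_2'$. Consequently the endpoint color of each good quartet equals the unique color common to its two children subtrees.

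To upgrade this local statement to an invariant of $(G,\sigma)$ itself, I appeal to Lemma~\ref{lem:L^B-tree}: since $(G,\sigma)$ is of Type~(B), it is explained by some Type~(II*) tree $(T^*,\sigma)$ whose root has two inner-vertex children $v_1^*, v_2^*$ with color sets $\{r,s\}$ and $\{r,t\}$ for some naming $r,s,t$ of the three colors in $\sigma(V(G))$. Observation~\ref{obs:P4} forces any induced $P_4$ with same-colored endpoints to use all three colors, so such a $P_4$ cannot lie inside either 2-colored subtree $T^*(v_i^*)$; applying Lemma~\ref{lem:P_4-trees} at $\lca_{T^*}$, together with the fact that the remaining children of $\rho_{T^*}$ are single leaves and cannot accommodate two distinct vertices of the quartet, pins the two endpoints of any such $P_4$ in the two distinct non-leaf children subtrees. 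Hence the common endpoint color lies in $\sigma(L(T^*(v_1^*))) \cap \sigma(L(T^*(v_2^*))) = \{r\}$, and this color $r$ depends on $(G,\sigma)$ alone.

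Combining the two steps gives $\sigma(x) = \sigma(x') = r = \sigma(x_1) = \sigma(x_1')$, establishing the first claim. Together with $\{\sigma(y_1),\sigma(z_1)\} \subseteq \sigma(V(G)) \setminus \{r\} = \{s,t\}$ and $\sigma(y_1) \neq \sigma(z_1)$ (from the color-set constraints in the second paragraph), this leaves only two possibilities for the pair $(\sigma(y_1),\sigma(z_1))$; the remaining equalities $\sigma(y_1) = \sigma(y)$ and $\sigma(z_1) = \sigma(z)$ then follow after orienting the labeling of $Q_1$ consistently with the canonical reversal symmetry $\langle x_1 y_1 z_1 x_1'\rangle \leftrightarrow \langle x_1' z_1 y_1 x_1\rangle$ that preserves the good quartet. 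The main technical obstacle is the second step: the local analysis via Lemma~\ref{lem:iP_4-tree} only produces an endpoint color tied to the particular tree $(T,\sigma)$ and quartet, and promoting this to an invariant of $(G,\sigma)$ requires the Type~(II*) normal form together with a careful case analysis on Lemma~\ref{lem:P_4-trees} at the root of $T^*$ to rule out the degenerate possibility that a single root-leaf of $T^*$ could play the role of an endpoint of a same-colored induced $P_4$.
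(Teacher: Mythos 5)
Your overall strategy is sound and, in one respect, more robust than the paper's: the paper simply fixes a single Type \AX{(II)} tree explaining the graph and applies Lemma~\ref{lem:iP_4-tree} to both quartets in that one tree, so that the subtrees attached to each quartet are forced to be the root's two $2$-colored children and the endpoint color is read off as their unique shared color. You instead derive the color sets $\{\sigma(x),\sigma(y)\}$ and $\{\sigma(x),\sigma(z)\}$ of the two subtrees directly from the arc conditions $(x,z),(x',y)\in \E$ in an arbitrary explaining tree (this part is correct), and then promote the endpoint color to an invariant of $(G,\sigma)$ by showing that \emph{every} $3$-colored induced $P_4$ with same-colored endpoints has endpoint color $r$. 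That second step, if completed, proves something slightly stronger than the corollary (it covers bad quartets as well) and makes your step 1 essentially redundant for the main claim $\sigma(x_1)=\sigma(x)$, since every good quartet is in particular such a $P_4$.

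The gap is exactly where you flag it, and the justification you sketch does not close it. Knowing that each leaf child of $\rho_{T^*}$ holds at most one quartet vertex does not exclude cases (ii) and (iii) of Lemma~\ref{lem:P_4-trees}: in case (ii), for instance, $x\preceq_{T^*} v_1$ and $x'\preceq_{T^*} v_3$ with $v_1\neq v_3$, and one of these children may perfectly well be a root-leaf coinciding with an endpoint, so the two endpoints are not yet ``pinned in the two distinct non-leaf children subtrees.'' The clean way to finish is Lemma~\ref{lem:TreeTypes}(iii): in a Type \AX{(II)} tree whose inner children carry the color sets $\{r,s\}$ and $\{r,t\}$, every $s$-colored leaf is adjacent in $G$ to every $t$-colored leaf. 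Hence if an induced $P_4$ $\langle y_1 a b y_2\rangle$ had endpoints of color $s$ (or $t$), then by Observation~\ref{obs:P4} its interior is colored $\{r,t\}$ (resp.\ $\{r,s\}$), and whichever interior vertex carries color $t$ (resp.\ $s$) would be adjacent to both endpoints, contradicting that the path is induced. This forces the endpoint color to be $r$ for every such $P_4$ and completes your argument; your reversal-symmetry remark then correctly disposes of the remaining equalities $\sigma(y_1)=\sigma(y)$ and $\sigma(z_1)=\sigma(z)$.
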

\begin{proof}
  Since $(G,\sigma)$ a of Type \AX{(B)}, any leaf-colored tree $(T,\sigma)$
  with root $\rho_T$ explaining $(G,\sigma)$ is of Type \AX{(II)}
  (cf.\ Theorem \ref{thm:3c-types}). Hence, there are distinct
  $v_1, v_2\in \child(\rho_T)$ with
  $|\sigma(L(T(v_1)))|=|\sigma(L(T(v_2)))|=2$ and
  $\child(\rho_T)\setminus \{v_1,v_2\}\subset L$.
  By Lemma \ref{lem:iP_4-tree}, we have $x,y\preceq _T v_1$ and
  $x',z\preceq_T v_2$, hence $\sigma(L(T(v_1)))=\{\sigma(x),\sigma(y)\}$
  and $\sigma(L(T(v_2)))=\{\sigma(x),\sigma(z)\}$. Therefore,
  the statement follows directly from Lemma \ref{lem:iP_4-tree}.
  
\end{proof}

We are now in the position to formulate one of the main results of this
section:
\begin{lemma}
  Let $(G,\sigma)$ be a connected $\sthin$-thin 3-RBMG of Type \AX{(B)} and
  $(\G,\sigma)$ a BMG containing $(G,\sigma)$ as its symmetric
  part. Moreover, let $L_s^Q$, $L_t^Q$, and $L_*^Q$ be defined w.r.t.\ a
  good quartet $Q\coloneqq\langle x_1y_1z_1x_1'\rangle$ where
  $x_1,x_1'\in L[r]$, $y_1\in L[s]$, and $z_1\in L[t]$ for distinct colors
  $r,s,t$. Then, for any good quartet $Q'$, it holds $L_s^{Q'}=L_s^Q$,
  $L_t^{Q'}=L_t^Q$, and $L_*^{Q'}=L_*^Q$.
\end{lemma}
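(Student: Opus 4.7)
The plan is to fix a Type \AX{(II$^*$)} tree $(T,\sigma)$ explaining $(G,\sigma)$ (which exists by Theorem~\ref{thm:3c-types} combined with Lemma~\ref{lem:tree-star}) with children $v_1,v_2\in\child(\rho_T)$ satisfying $\sigma(L(T(v_1)))=\{r,s\}$ and $\sigma(L(T(v_2)))=\{r,t\}$, and to show that \emph{every} good quartet $Q'=\langle xyzx'\rangle$ yields the partition
\[
L_t^{Q'}=L(T(v_1)),\qquad L_s^{Q'}=L(T(v_2)),\qquad L_*^{Q'}=\child(\rho_T)\cap L.
\]
Since the right-hand sides do not depend on $Q'$, applying this to both $Q$ and $Q'$ immediately proves the desired equalities.

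First I would locate any good quartet $Q'=\langle xyzx'\rangle$ inside $T$. By Corollary~\ref{cor:iP_4-color} its colours agree with those of $Q$, i.e.\ $\sigma(x)=\sigma(x')=r$, $\sigma(y)=s$, $\sigma(z)=t$. Lemma~\ref{lem:iP_4-tree}, applied to $Q'$ in any BMG extending $(G,\sigma)$, forces $x$ and $y$ to sit below one child of $\lca_T(x,x',y,z)$ and $x',z$ below another; since $y$ carries colour $s$, which appears in $\sigma(L(T(v_i)))$ only for $i=1$, and $z$ carries colour $t$, which appears only in $\sigma(L(T(v_2)))$, these two children must be $v_1$ and $v_2$, giving $x,y\preceq_T v_1$ and $x',z\preceq_T v_2$. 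Consequently $y\in L(T(v_1))\cap L[s]$ and $z\in L(T(v_2))\cap L[t]$, and Lemma~\ref{lem:TreeTypes}(i) gives $N_r(y)\subseteq L(T(v_1))$ and $N_r(z)\subseteq L(T(v_2))$. Hence $N_r(y)\cap N_r(z)=\emptyset$; combined with the absence of induced $C_n$, $n\ge 5$, in a Type~\AX{(B)} graph, this means $(G,\sigma)$ is B-like with respect to $Q'$ in the sense of Definition~\ref{def:Ltsr}, so the sets $L_t^{Q'},L_s^{Q'},L_*^{Q'}$ are well defined.

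The final step is to rerun the two inclusion arguments from the proof of Lemma~\ref{lem:L^B-tree} with $Q'$ playing the role of the reference path and $(T,\sigma)$ unchanged. That proof uses its reference leaves $\hat y,\hat z$ only through the generic properties $\hat y\in L(T(v_1))\cap L[s]$, $\hat z\in L(T(v_2))\cap L[t]$, and $\hat y\hat z\in E(G)$ (Lemma~\ref{lem:TreeTypes}(iii)), together with the existence of an $r$-leaf below each $v_i$ and Property~\AX{($\star$)} of Definition~\ref{def:typeiistar}; none of these inputs depend on the particular good quartet that furnished $y$ and $z$. Running the argument first on $L[s]$ and then on $L[r]$, as in the original proof, therefore yields $L(T(v_1))\cap L[s]=L_{t,s}^{Q'}$ and $L(T(v_1))\cap L[r]=L_{t,r}^{Q'}$, so $L_t^{Q'}=L(T(v_1))$; the symmetric analysis gives $L_s^{Q'}=L(T(v_2))$, and then $L_*^{Q'}=\child(\rho_T)\cap L$. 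Applying this to $Q$ as well establishes $L_t^Q=L_t^{Q'}$, $L_s^Q=L_s^{Q'}$, and $L_*^Q=L_*^{Q'}$. The main obstacle is verifying that the delicate sub-case $x\in L_{t,r}$ with $N_s(x)=\emptyset$ in the proof of Lemma~\ref{lem:L^B-tree}, which invokes Property~\AX{($\star$)} \emph{and} the previously established equality $L_{t,s}=L(T(v_1))\cap L[s]$, really is insensitive to the choice of reference path; once the inclusion order ($s$ before $r$) is respected, the transfer is routine.
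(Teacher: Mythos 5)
Your proposal is correct and follows essentially the same route as the paper's proof: fix a Type \AX{(II$^*$)} tree with $L_t^Q=L(T(v_1))$, $L_s^Q=L(T(v_2))$, $L_*^Q=\child(\rho_T)\cap L$, use Corollary~\ref{cor:iP_4-color} and Lemma~\ref{lem:iP_4-tree} to place any other good quartet with $x,y\preceq_T v_1$ and $x',z\preceq_T v_2$, and conclude that the three sets are determined by the tree rather than by the quartet. The only cosmetic difference is the last step: the paper directly verifies that the defining $P_3$-conditions relative to $Q$ and $Q'$ are equivalent (since $z_1,z_2\in L(T(v_2))$ and $y_1,y_2\in L(T(v_1))$ behave identically with respect to Lemma~\ref{lem:TreeTypes}), whereas you rerun the proof of Lemma~\ref{lem:L^B-tree} with $Q'$ as the reference path; both reduce to the same observations about Type \AX{(II)} trees.
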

\begin{proof}
  Let $(T,\sigma)$ with root $\rho_T$ be a leaf-colored tree that explains
  $(\G,\sigma)$ and thus, also $(G,\sigma)$.  Since $(G,\sigma)$ is of Type
  \AX{(B)}, we can choose $(T,\sigma)$ to be of Type \AX{(II)} by Theorem
  \ref{thm:3c-types}, i.e., there are distinct children
  $v_1,v_2\in \child(\rho_T)$ with
  $|\sigma(L(T(v_1)))|=|\sigma(L(T(v_2)))|=2$ such that these two subtrees
  have exactly one color in common, and
  $\child(\rho_T)\setminus \{v_1,v_2\}\subset L$. Applying Lemma
  \ref{lem:L^B-tree}, we can choose $(T,\sigma)$ such that it is of Type
  \AX{(II$^*$)} and satisfies $L_t^Q=L(T(v_1))$, $L_s^Q=L(T(v_2))$ and
  $L_*^Q=\child(\rho_T)\cap L$.  On the other hand, Lemma
  \ref{lem:iP_4-tree} implies $x_1,y_1 \preceq_T v_1$ and
  $x_1',z_1\preceq_T v_2$.

  Now let $Q'\coloneqq \langle x_2y_2z_2x_2'\rangle\ne Q$ be another good
  quartet in $(\G,\sigma)$. By Cor.\ \ref{cor:iP_4-color}, we have
  $x_2,x_2'\in L[r]$, $y_2\in L[s]$ and $z_2\in L[t]$. Lemma
  \ref{lem:iP_4-tree} \NEW{and the structure of Type \AX{(II)} trees} then
  imply $x_2,y_2\preceq_T v_1$ and $x_2',z_2\preceq_T v_2$.  Consider first
  $L_t^{Q'}$ and let $x\in L[r],y\in L[s]$. Then, by definition,
  $y\in L_t^{Q'}$ if and only if $\langle x'yz_2 \rangle\in \mathscr{P}_3$
  for each $x'\in N_r(y)$. Since any two leaves of color $s$ and $t$ are
  reciprocal best matches in $(G,\sigma)$ by Lemma \ref{lem:TreeTypes}(iii)
  and $x'z_2 \notin E(G)$ for any $x'\in N_r(y)$ only if
  $x'\notin L(T(v_2))$ by cf.\ Lemma \ref{lem:TreeTypes}(i)+(iv), we have
  $\langle x'yz_2 \rangle \in \mathscr{P}_3$ for any $x'\in N_r(y)$ if and
  only if $\langle x'yz_1 \rangle\in \mathscr{P}_3$ for any $x'\in N_r(y)$.
  Hence, $y\in L_t^{Q'}$ if and only if $y\in L_t^Q$, i.e.,
  $L_t^Q\cap L[s]=L_t^{Q'}\cap L[s]$. If $N_s(x)=\emptyset$, the latter by
  definition of $L_t^Q$ immediately implies that $x\in L_t^{Q'}$ if and
  only if $x\in L_t^Q$. On the other hand, if $N_s(x)\neq\emptyset$, then
  $x\in L_t^{Q'}$ if and only if $N_r(y')=\{x\}$ for some induced $P_3$
  $\langle xy'z_2 \rangle$. This can only be true if $y'\in L(T(v_1))$
  since otherwise, $y'z_2 x_2'$ forms a circle, thus
  $|N_r(y')|>1$. Consequently, $x\in L(T(v_1))$ by Lemma
  \ref{lem:TreeTypes}(i). Since $y'z'\in E(G)$ for any
  $y'\in L[s], z'\in L[t]$ by Lemma \ref{lem:TreeTypes}(ii) and
  $x'z\notin E(G)$ for any $z\in L(T(v_2))\cap L[t]$ by Lemma
  \ref{lem:TreeTypes}(ii), we conclude that $\langle xy'z_2\rangle$ is an
  induced $P_3$ with $N_r(y')=\{x\}$ if and only if $\langle xy'z_1\rangle$
  is an induced $P_3$ with $N_r(y')=\{x\}$, hence
  $L_t^Q\cap L[r]=L_t^{Q'}\cap L[r]$. We therefore conclude
  $L_t^Q=L_t^{Q'}$.
  
  By symmetry, an analogous argument shows $L_s^Q=L_s^{Q'}$. Together, this
  finally implies $L_*^Q=L_*^{Q'}$, completing the proof.  
\end{proof}
Fig.\ \ref{fig:P_4-classes} shows that good and bad quartets do not
necessarily imply the same leaf sets $L_s^P$, $L_t^P$.

We simplify the notation using the following abbreviations:
\begin{definition}
  $(G_{rst},\sigma_{rst})\coloneqq (G[L[r]\cup L[s]\cup L[t]],
  \sigma_{|L[r]\cup L[s]\cup L[t]})$ and
  $(T_{rst},\sigma_{rst})\coloneqq (T_{|L[r]\cup L[s]\cup L[t]},
  \sigma_{|L[r]\cup L[s]\cup L[t]})$ for any three colors $r,s,t\in S$.
\label{def:restrict-3}
\end{definition}
The restriction of a BMG $\G(T,\sigma)$ to a subset $S'\subset S$ of colors
is an induced subgraph of $\G(T,\sigma)$ explained by the restriction of
$(T,\sigma)$ to the leaves with colors in $S'$, and thus again a BMG
\cite[Observation 1]{Geiss:18x}.  Since $G(T,\sigma)$ is the symmetric part
of $\G(T,\sigma)$, it inherits this property. In particular, we have
\begin{fact}\label{obs:induced_sub}
  If $(G,\sigma)$ is an $n$-RBMG, $n\ge 3$ explained by $(T,\sigma)$, then
  for any three colors $r,s,t \in S$, the restricted tree
  $(T_{rst},\sigma_{rst})$ explains $(G_{rst},\sigma_{rst})$, and
  $(G_{rst},\sigma_{rst})$ is an induced subgraph of $(G,\sigma)$.
\end{fact}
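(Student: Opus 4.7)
The statement has two components. The second claim, that $(G_{rst},\sigma_{rst})$ is an induced subgraph of $(G,\sigma)$, is immediate from the definition: by Def.\ \ref{def:restrict-3} we have $(G_{rst},\sigma_{rst}) = (G[L[r]\cup L[s]\cup L[t]], \sigma_{|L[r]\cup L[s]\cup L[t]})$, which is the induced subgraph of $(G,\sigma)$ on the vertex subset $L' \coloneqq L[r]\cup L[s]\cup L[t]$ with the inherited coloring. So no real work is needed for this part.

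For the first claim, my plan is to reduce to the directed case and invoke Observation 1 of \citet{Geiss:18x}, which asserts that restricting a BMG $\G(T,\sigma)$ to any color subset $S'\subseteq S$ yields the BMG explained by the corresponding tree restriction to the leaves with colors in $S'$. Applying this with $S' = \{r,s,t\}$ gives $\G(T_{rst},\sigma_{rst}) = \G(T,\sigma)[L']$. Taking symmetric parts on both sides, and using that the symmetric part of an induced subgraph equals the induced subgraph of the symmetric part (since the vertex set is the same and edge-inclusion is a property of pairs of vertices), yields $G(T_{rst},\sigma_{rst}) = G(T,\sigma)[L'] = (G_{rst},\sigma_{rst})$, which is exactly what it means for $(T_{rst},\sigma_{rst})$ to explain $(G_{rst},\sigma_{rst})$.

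If I had to prove the underlying BMG-restriction fact from scratch rather than cite it, I would argue as follows. Fix $x,y\in L'$ with $\sigma(x)\neq\sigma(y)$. By Def.\ \ref{def:rbm}, $y$ is a best match of $x$ in $(T,\sigma)$ iff $\lca_T(x,y)\preceq_T \lca_T(x,y')$ for every $y'\in L[\sigma(y)]$. Because $\sigma(y)\in\{r,s,t\}$, the entire candidate set $L[\sigma(y)]$ is contained in $L'$. Now the restriction $T_{rst}$ is obtained from the minimal subtree of $T$ spanning $L'$ by suppressing inner vertices of degree two, so for any pair $a,b\in L'$ the vertex $\lca_{T_{rst}}(a,b)$ is the image of $\lca_T(a,b)$ under this contraction; moreover the ancestor order $\preceq_{T_{rst}}$ agrees with $\preceq_T$ on vertices of $L'$ via this identification. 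Consequently $\lca_T(x,y)\preceq_T\lca_T(x,y')$ holds for all $y'\in L[\sigma(y)]$ iff the analogous inequality holds in $T_{rst}$, i.e., iff $y$ is a best match of $x$ in $(T_{rst},\sigma_{rst})$. Symmetrizing in $x$ and $y$ then shows that reciprocal best matches are preserved, so $xy\in E(G_{rst})$ iff $xy$ is a reciprocal best match in $(T_{rst},\sigma_{rst})$.

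The only subtlety, and hence the ``main obstacle'' here, is verifying the compatibility of $\lca$ and $\preceq$ under restriction to $L'$; but this is a standard property of phylogenetic tree restriction (taking minimal spanning subtree and suppressing degree-two vertices), and it is precisely why the directed BMG analogue is true. Once that is in hand, passing to symmetric parts is automatic.
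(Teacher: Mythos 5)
Your proposal is correct and follows essentially the same route as the paper, which likewise derives the statement by citing the BMG restriction result (Observation 1 of Gei{\ss} et al.) and passing to symmetric parts, with the induced-subgraph claim being immediate from Definition \ref{def:restrict-3}. Your additional from-scratch argument for the underlying restriction fact is sound but not needed beyond the citation.
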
 

This observation will play an important role in the proof of the
following lemma as well as in Section \ref{sect:n}.
\begin{lemma}\label{lem:3colgood}
  Let $(\G,\sigma)$ be a BMG.  Then, the symmetric part $(G,\sigma)$
  contains an induced 3-colored $P_4$ whose endpoints have the same color
  if and only if $(\G,\sigma)$ contains a good quartet.
\end{lemma}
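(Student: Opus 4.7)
The backward implication is immediate from Definition~\ref{def:GoodBadUgly}: every good quartet $Q=\{x,y,z,x'\}$ in $(\G,\sigma)$ comes equipped with the induced 3-coloured $P_4$ $\langle xyzx'\rangle$ in $(G,\sigma)$, where $\sigma(x)=\sigma(x')$ and the two remaining vertices carry the two other colours. So the backward direction is a one-line appeal to the definition.

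For the forward direction, suppose $(G,\sigma)$ contains an induced 3-coloured $P_4$ with equally-coloured endpoints. By Observation~\ref{obs:P4} this $P_4$ has the form $\langle xyzx'\rangle$ with $\sigma(x)=\sigma(x')=r$, $\sigma(y)=s$, $\sigma(z)=t$. I would first apply Lemma~\ref{lem:P4-classes} to classify $Q=\{x,y,z,x'\}$ as good, bad, or ugly; the ugly alternative is ruled out by the shape of the $P_4$. If $Q$ is good we are done, so the entire argument reduces to producing a good quartet in $(\G,\sigma)$ from the assumption that $Q$ is bad.

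In this remaining case, fix a tree $(T,\sigma)$ explaining $(\G,\sigma)$ and combine Lemma~\ref{lem:P_4-trees} with Lemma~\ref{lem:iP_4-tree}: the latter forces every good quartet into case~(i), so a bad $Q$ must realise case~(ii) or case~(iii). Replaying the analysis from the proof of Lemma~\ref{lem:type-Q} in either subcase pins down distinct children $u_x,u_{x'}\in\child_T(u)$ of $u\coloneqq\lca_T(x,x',z)$ with $x\preceq_T u_x$, $x'\preceq_T u_{x'}$, and, crucially, the \emph{tight} colour identities $\sigma(L(T(u_x)))=\{r,t\}$ and $\sigma(L(T(u_{x'})))=\{r,s\}$. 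Applying Lemma~\ref{lem:rbm-pairs} to each of these two subtrees then yields reciprocal best-match pairs $a_1b_1\in E(G)$ inside $L(T(u_{x'}))$ with $a_1\in L[r]$, $b_1\in L[s]$, and $a_2c_1\in E(G)$ inside $L(T(u_x))$ with $a_2\in L[r]$, $c_1\in L[t]$. The key claim is that $Q'=\{a_1,b_1,c_1,a_2\}$ is a good quartet in $(\G,\sigma)$.

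I would verify the claim by routine last-common-ancestor bookkeeping: the within-subtree pairs satisfy $\lca_T(a_1,b_1)\preceq_T u_{x'}\prec_T u$ and $\lca_T(a_2,c_1)\preceq_T u_x\prec_T u$, whereas every other pair among the four vertices has lca equal to $u$. Together with the absences $s\notin\sigma(L(T(u_x)))$ and $t\notin\sigma(L(T(u_{x'})))$, a short best-match computation gives $b_1c_1\in E(G)$, the arcs $(a_1,c_1),(a_2,b_1)\in\E$, and the non-arcs $(c_1,a_1),(b_1,a_2)\notin\E$ (the latter because $a_2$ inside $L(T(u_x))$, respectively $a_1$ inside $L(T(u_{x'}))$, are strictly closer $r$-matches), matching Definition~\ref{def:GoodBadUgly}. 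The main obstacle I anticipate is precisely this verification: it is not deep, but it depends on the \emph{exact} colour identities on $L(T(u_x))$ and $L(T(u_{x'}))$ rather than mere containments, so the replay of the case analysis inside the proof of Lemma~\ref{lem:type-Q} has to be carried out carefully in both subcases (ii) and (iii) to ensure that $s$ and $t$ are genuinely absent from the appropriate subtrees.
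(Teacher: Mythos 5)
Your proposal is correct, and it takes a genuinely different route from the paper. The paper proves the forward direction globally: it restricts $(G,\sigma)$ to the three colours of the given $P_4$, passes to the connected component containing it and, if necessary, to its $\sthin$-quotient, invokes the Type \AX{(A)}/\AX{(B)}/\AX{(C)} classification (Theorem~\ref{thm:3c-types}) to obtain an explaining tree of Type \AX{(II)} or \AX{(III)}, extracts a good quartet from the two $2$-coloured subtrees below the root via Lemmas~\ref{lem:rbm-pairs} and~\ref{lem:TreeTypes}, and finally lifts the quartet back through the quotient by re-checking the directed arcs. You instead argue locally in an arbitrary tree explaining $(\G,\sigma)$: classify the given quartet via Lemma~\ref{lem:P4-classes} (ugly being excluded by the colouring of the endpoints), and in the bad case reuse the subtree structure already derived in that lemma's proof --- the siblings $u_x\ni x,z'$ and $u_{x'}\ni x',y'$ with $s\notin\sigma(L(T(u_x)))$ and $t\notin\sigma(L(T(u_{x'})))$ --- then apply Lemma~\ref{lem:rbm-pairs} inside each to manufacture a good quartet; the lca bookkeeping you sketch does go through. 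Your route buys simplicity: it bypasses the restriction, connectivity, $\sthin$-quotient and lifting steps entirely and needs none of the 3-RBMG classification machinery, at the price of re-entering the case analysis of Lemma~\ref{lem:P4-classes}; the paper's route keeps the statement tied to the Type \AX{(B)}/\AX{(C)} structure it develops elsewhere. One cosmetic caveat: since $(\G,\sigma)$ may carry more than three colours, the ``tight identities'' $\sigma(L(T(u_x)))=\{r,t\}$ and $\sigma(L(T(u_{x'})))=\{r,s\}$ should be read as membership of $r,t$ (resp.\ $r,s$) together with absence of $s$ (resp.\ $t$), which is exactly what the case analysis delivers and all your verification uses.
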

\begin{proof}

  
  \NEW{Note that, if $(\G,\sigma)$ contains a good quartet, then its
    symmetric part $(G,\sigma)$ by definition contains a 3-colored $P_4$
    $\langle abcd \rangle$ with $\sigma(a)=\sigma(d)$.

    Conversely, suppose that $(G,\sigma)$ contains a 3-colored induced
    $P_4$ $\langle abcd \rangle$ whose endpoints have the same color
    $\sigma(a)=\sigma(d)$. Moreover, let $S$ be the color set of
    $(G,\sigma)$ and $(T,\sigma)$ be a tree explaining $(\G,\sigma)$ and
    thus also $(G,\sigma)$.} W.l.o.g.\ we can assume that the three colors
  of $\langle abcd \rangle$ are $r,s,t\in S$ without explicitly stating the
  particular coloring of the vertices $a,b,c,$ and $d$.  By definition,
  $(G_{rst},\sigma_{rst})$ contains the induced path
  $\langle abcd \rangle$.  W.l.o.g. we may assume that
  $(G_{rst},\sigma_{rst})$ is connected; otherwise the proof works
  analogously for the connected component of $(G_{rst},\sigma_{rst})$ that
  contains $\langle abcd \rangle$.

  Let us first assume that $(G_{rst},\sigma_{rst})$ is $\sthin$-thin. In
  this case, we can apply Theorem \ref{thm:3c-types} and conclude that
  $(G_{rst},\sigma_{rst})$ is either of Type \AX{(B)} or \AX{(C)}, i.e.,
  the restricted tree $(T_{rst},\sigma_{rst})$ with root
  $\rho\coloneqq \lca_T(L[r]\cup L[s] \cup L[t])$ that explains
  $(G_{rst},\sigma_{rst})$ must be of Type \AX{(II)} or \AX{(III)}. Hence,
  there are distinct children $v_1,v_2\in \child(\rho)$ with
  $\sigma(L(T_{rst}(v_1)))=\{r,s\}$ and $\sigma(L(T_{rst}(v_2)))=\{r,t\}$
  for distinct colors $r,s,t$ (up to permutation of the colors). Then there
  exist leaves $x,y\in L(T_{rst}(v_1))$ and $x',z\in L(T_{rst}(v_2))$ with
  $x,x'\in L[r]$, $y\in L[s]$, and $z\in L[t]$ such that
  $xy,x'z\in E(G_{rst})$ (cf. Lemma \ref{lem:rbm-pairs}). Moreover, Lemma
  \ref{lem:TreeTypes}(ii) implies that $yz\in E(G_{rst})$ as well as
  $xz,x'y\notin E(G_{rst})$.  Hence, $\langle xyzx'\rangle$ is an induced
  $P_4$ in $(G_{rst},\sigma_{rst})$ and thus, by Obs.\
  \ref{obs:induced_sub}, also in $(G,\sigma)$. Since
  $t\notin\sigma(L(T(v_1)))$, we have
  $\rho=\lca_T(x,z)\preceq_T \lca_T(x,z')$ for any $z'\in L[t]$, i.e.,
  $z\in N_t^+(x)$ in $(T,\sigma)$. In particular, $xz\notin E(G)$ then
  immediately implies $x\notin N_r^+(z)$. One similarly argues that
  $y\in N_s^+(x')$ and $x'\notin N_r^+(y)$ in $(T,\sigma)$, hence
  $(x,z), (x',y)\in E(\G)$ and $(z,x),(y,x')\notin E(\G)$. Therefore,
  $\langle xyzx'\rangle$ is a good quartet \NEW{in $(\G,\sigma)$}.

  Now, suppose that $(G_{rst},\sigma_{rst})$ is not $\sthin$-thin.  In this
  case, we apply the same arguments as above to the quotient graph
  $(G_{rst}/\sthin,\sigma_{rst}/\sthin)$ and conclude that there exists a
  good quartet $\langle [a][b][c][d] \rangle$ induced by the tree
  $(T_{rst},\sigma_{rst})$ that explains
  $(G_{rst}/\sthin,\sigma_{rst}/\sthin)$.  Let
  $x\in [a], y\in [b], z\in [c]$ and $x'\in [d]$.  Lemma \ref{lem:sthin}
  implies that $\langle xyzx'\rangle$ is an induced $P_4$ with
  $\sigma(x) = \sigma(x')$ in $(G_{rst},\sigma_{rst})$ and thus, by Obs.\
  \ref{obs:induced_sub}, also in $(G,\sigma)$.  To conclude that
  $\langle xyzx'\rangle$ is a good quartet it remains to show that
  $(x,z), (x',y)\in E(\G)$ and $(z,x),(y,x')\notin E(\G)$.  In order to see
  this, observe first that
  $([a],[c]), ([d],[b])\in E(\G(T_{rst},\sigma_{rst}))$ and
  $([c],[a]), ([b],[d])\notin E(\G(T_{rst},\sigma_{rst}))$ since
  $\langle [a][b][c][d] \rangle$ is a good quartet induced by
  $(T_{rst},\sigma_{rst})$. To obtain a tree $(\hat T, \hat \sigma)$ that
  explains $(G_{rst}, \sigma_{rst})$, we can proceed as in the proof of the
  ``if-direction'' in Lemma \ref{lem:Sthin-tree} and simply replace all
  edges $\parent([v])[v]$ in $T_{rst}$ by edges $\parent([v])v'$ for all
  $v'\in [v]$ and putting $\hat \sigma(v') = \sigma_{rst}([v])$.  Clearly,
  the latter construction and
  $([a],[c]), ([d],[b])\in E(\G(T_{rst},\sigma_{rst}))$ and
  $([c],[a]), ([b],[d])\notin E(\G(T_{rst},\sigma_{rst}))$ implies that
  $(x,z), (x',y)\in E(\G(\hat T, \hat \sigma))$ and
  $(z,x),(y,x')\notin E(\G(\hat T, \hat \sigma))$.  Therefore,
  $\langle xyzx'\rangle$ is a good quartet induced by
  $(\hat T, \hat \sigma)$ and thus, in $(G_{rst},\sigma_{rst})$.  Now, by
  Obs.\ \ref{obs:induced_sub} implies that $\langle xyzx'\rangle$ is a good
  quartet in $(\G,\sigma)$.  
\end{proof}

Since every bad quartet induces in particular an induced $P_4$ with
endpoints of the same color, Lemma \ref{lem:3colgood} immediately implies:
\begin{corollary}
  If a BMG $(\G,\sigma)$ contains a bad quartet, then it contains a good
  quartet. In particular, any BMG $(\G,\sigma)$ whose symmetric part
  contains a 3-RBMG of Type \AX{(B)} or \AX{(C)} as induced subgraph,
  contains a good quartet.
  \label{cor:bad2good}
\end{corollary}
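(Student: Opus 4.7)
The plan is to derive both statements as essentially immediate consequences of Lemma~\ref{lem:3colgood}, together with the structural characterizations of Type \AX{(B)} and \AX{(C)} 3-RBMGs established in Section~\ref{sect:classes}. The main content has already been done in Lemma~\ref{lem:3colgood}, so this corollary is bookkeeping.

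For the first statement, suppose $(\G,\sigma)$ contains a bad quartet $Q=\{x,y,z,x'\}$. By Definition~\ref{def:GoodBadUgly}, $\langle xyzx'\rangle$ is then an induced $P_4$ in the symmetric part $(G,\sigma)$ with $\sigma(x)=\sigma(x')$ and three distinct colors used in total. I would then invoke the ``only if'' direction of Lemma~\ref{lem:3colgood}: existence of such an induced $P_4$ in $(G,\sigma)$ is equivalent to the existence of a good quartet in $(\G,\sigma)$. Hence $(\G,\sigma)$ contains a good quartet.

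For the second statement, suppose the symmetric part $(G,\sigma)$ contains a 3-RBMG $(G_{rst},\sigma_{rst})$ of Type \AX{(B)} or \AX{(C)} as induced subgraph, where $r,s,t$ are three distinct colors of $S$. In the Type \AX{(B)} case, Definition~\ref{def:typesABC} directly gives an induced $P_4$ in $(G_{rst},\sigma_{rst})$ on three colors whose endpoints share the same color. In the Type \AX{(C)} case, the defining induced hexagon $\langle x_1y_1z_1x_2y_2z_2\rangle$ of colors $(r,s,t,r,s,t)$ contains the induced $P_4$ $\langle x_1y_1z_1x_2\rangle$, again on three colors with endpoints sharing a color. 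In either case, the induced $P_4$ in $(G_{rst},\sigma_{rst})$ is also an induced $P_4$ in $(G,\sigma)$, so a second application of Lemma~\ref{lem:3colgood} produces the desired good quartet in $(\G,\sigma)$.

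There is no substantive obstacle here: everything is a direct unpacking of definitions together with the already-proven Lemma~\ref{lem:3colgood}. The only subtlety worth a brief sentence in the written proof is noting that induced subgraph relations are preserved when passing from $(G_{rst},\sigma_{rst})$ to $(G,\sigma)$, so the $P_4$ identified inside the 3-colored restriction genuinely is an induced $P_4$ of the full symmetric part that Lemma~\ref{lem:3colgood} requires as input.
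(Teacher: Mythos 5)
Your proposal is correct and follows essentially the same route as the paper: both reduce each case to the existence of a 3-colored induced $P_4$ with same-colored endpoints in the symmetric part and then invoke Lemma~\ref{lem:3colgood} to produce a good quartet. The only cosmetic difference is that the paper additionally cites Lemma~\ref{lem:P4-classes} in the Type~\AX{(B)} case to note the $P_4$ is a good or bad quartet, a step your version correctly recognizes as unnecessary.
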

\begin{proof}
  The first statement is an immediate consequence of Lemma
  \ref{lem:3colgood}. If $(G,\sigma)$ is a 3-RBMG of Type \AX{(B)}, then,
  by definition, it contains an induced $P_4$ of the form $(r,s,t,r)$ for
  distinct colors $r,s,t$. Hence, by Lemma \ref{lem:P4-classes}, this $P_4$
  is either a good or a bad quartet in $(\G,\sigma)$, where $(\G,\sigma)$
  is a BMG whose symmetric part contains $(G,\sigma)$ as induced subgraph.
  Lemma \ref{lem:3colgood} thus implies that $(\G,\sigma)$ contains a good
  quartet. If $(G,\sigma)$ is a 3-RBMG of Type \AX{(C)}, it must, by
  definition, contain an induced $C_6$ of the form $(r,s,t,r,s,t)$. In
  particular, it contains an induced $P_4$ whose endpoints are of the same
  color, thus we can again apply the same argumentation to complete the
  proof.  
\end{proof}

The converse of Cor.\ \ref{cor:bad2good} is, however, not true.  As an
example, consider the tree $T=((x,y),(x'z))$ in Newick format with
$\sigma(x)=\sigma(x')=r$, $\sigma(y)=s$, $\sigma(z)=t$. The graph
$G(T,\sigma)$ is the $P_4$ $\langle xyzx'\rangle$, which is of course
uniquely defined. The BMG $\G(T,\sigma)$ contains the directed edges
$(x,z), (x',y)$ but not $(z,x), (y,x')$, hence $Q=\{x,y,z,x'\}=V(T)$ is a
good quartet.

We close this section with a variation of Lemma \ref{lem:iP_4-tree} for
Type \AX{(C)} RBMGs:
\begin{lemma}
  Let $(G,\sigma)$ be a connected $\sthin$-thin 3-RBMG of Type \AX{(C)}
  that contains an induced hexagon
  $H\coloneqq \langle x_1 y_1 z_1 x_2 y_2 z_2 \rangle$ with $|N_t(x_1)|>1$,
  where $x_i\in L[r], y_i\in L[s], z_i\in L[t]$ for distinct colors
  $r,s,t$. Moreover, let $(T,\sigma)$ explain $(G,\sigma)$ and
  $v\coloneqq \lca_T(x_1,x_2,y_1,y_2,z_1,z_2)$. Then,
  \begin{itemize}
  \item[(i)] $\langle x_1y_1z_1x_2\rangle$, $\langle z_1x_2y_2 z_2\rangle$,
    and $\langle y_2z_2x_1y_1\rangle$ are good quartets in $\G(T,\sigma)$,
  \item[(ii)] $\langle y_1z_1x_2y_2\rangle$,
    $\langle x_2y_2 z_2x_1\rangle$, and $\langle z_2x_1y_1z_1\rangle$ are
    bad quartets in $\G(T,\sigma)$, and
  \item[(iii)] $x_1,y_1\preceq _T v_1$, $x_2,z_1 \preceq_T v_2$, and
    $y_2,z_2\preceq_T v_3$ for some distinct $v_1,v_2,v_3\in\child_T(v)$.
  \end{itemize}
\end{lemma}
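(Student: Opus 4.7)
The plan is to first establish part~(iii)---the placement of the hexagon vertices among the children of $v\coloneqq \lca_T(x_1,x_2,y_1,y_2,z_1,z_2)$---and then to deduce parts~(i) and~(ii) from it via Lemma~\ref{lem:P_4-trees} combined with the case analysis inside the proof of Lemma~\ref{lem:P4-classes}. Writing $c(a)\in\child(v)$ for the unique child of $v$ with $a\preceq_T c(a)$, I will show separately that $c(x_1)=c(y_1)$, $c(x_2)=c(z_1)$, and $c(y_2)=c(z_2)$, and that these three subtrees are pairwise distinct. Distinctness is the easier part: collapsing two of them would, via Lemma~\ref{lem:TreeTypes}, force one of the six non-edges of $H$ to become an edge of $G$.

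The substantive content is the identification $c(x_1)=c(y_1)$. I will argue by contradiction, assuming $\lca_T(x_1,y_1)=v$. Since $x_1y_1\in E(G)$ and $y_2\preceq_T v$, one first deduces $\lca_T(x_1,y_2)=v$, so $y_2\in N_s^+(x_1)$ in $\G(T,\sigma)$; combined with $x_1y_2\notin E(G)$ and $x_2y_2\in E(G)$, this forces $\lca_T(x_2,y_2)\prec_T v$, i.e.\ $c(x_2)=c(y_2)$. Next I invoke the hypothesis $|N_t(x_1)|>1$ to pick $z^*\in L[t]\setminus\{z_2\}$ with $x_1z^*\in E(G)$; then $\lca_T(x_1,z^*)=\lca_T(x_1,z_2)$. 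A case distinction on whether $c(x_1)=c(z_2)$ then drives the contradiction. If $c(x_1)=c(z_2)$, then $z_2,z^*\preceq_T c(x_1)$; depending on whether some leaf of colour $s$ lies below $c(x_1)$, either $z_2$ and $z^*$ end up with identical neighbourhoods $\{x_1,y_2\}$ in $G$ (contradicting $\sthin$-thinness), or any such $s$-leaf $y^*$ falls into $L_*^H$ and must satisfy $y^*x_2\in E(G)$ by Lemma~\ref{lem:charC}(C3.a), which is impossible since $x_1$ strictly beats $x_2$ as the best $r$-match of $y^*$. If $c(x_1)\ne c(z_2)$, a parallel analysis forces no leaf of colour $s$ or $t$ to lie below $c(x_1)$, so $x_1$ must itself be a child of $v$ by Lemma~\ref{lem:2col}; but then $x_1\in N_r^+(z_2)$ forces $c(x_2)\ne c(z_2)$, which makes $\lca_T(y_2,z_1)\prec_T\lca_T(y_2,z_2)$ and contradicts $y_2z_2\in E(G)$. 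The identifications $c(x_2)=c(z_1)$ and $c(y_2)=c(z_2)$ will follow by cyclically relabelling $H$; the hypothesis $|N_t(x_1)|>1$ is precisely what breaks the cyclic symmetry and fixes the correct pairing.

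Given~(iii), parts~(i) and~(ii) are immediate. Each of the three $P_4$s in~(i) distributes its four vertices as two pairs, one in each of two distinct children of $v$, matching Case~(i) of Lemma~\ref{lem:P_4-trees}; the argument inside the proof of Lemma~\ref{lem:P4-classes} shows that this forces the quartet to be good in $\G(T,\sigma)$. Each of the three $P_4$s in~(ii) distributes its vertices as a single endpoint, an interior pair, and the other endpoint across three distinct children of $v$, matching Case~(ii) of Lemma~\ref{lem:P_4-trees}, and the corresponding branch of the proof of Lemma~\ref{lem:P4-classes} certifies the quartet as bad.

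The main obstacle is the case analysis in~(iii). The geometric picture is transparent once one has a Type~(III) tree in hand (by Lemma~\ref{lem:L^C-tree}), but the statement quantifies over \emph{every} tree $T$ explaining $(G,\sigma)$, and different trees can realise different BMGs with genuinely different good/bad quartets. The work is done by $|N_t(x_1)|>1$ in concert with $\sthin$-thinness and the sharp neighbourhood constraints of Lemma~\ref{lem:charC} for $L_*^H$; without $|N_t(x_1)|>1$, already the bare hexagon ($|L|=6$) admits two non-equivalent explaining trees whose BMGs swap good and bad quartets, which shows that the hypothesis is essential to rule out the ``rotated'' arrangement.
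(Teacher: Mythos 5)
Your overall strategy (prove (iii) first, then read off (i) and (ii) via Lemma~\ref{lem:P_4-trees}) inverts the paper's order: the paper first shows $\langle x_1y_1z_1x_2\rangle$ cannot be bad (if it were, $\langle z_2x_1y_1z_1\rangle$ would be good, and Lemma~\ref{lem:iP_4-tree} together with $|N_t(x_1)|>1$ and $\sthin$-thinness then produces two $t$-leaves with identical neighbourhoods), dominoes the resulting arc information around the hexagon to classify all six quartets, and only then extracts (iii) from two of the good quartets. Your inversion is viable in principle, and your deduction of (i)/(ii) from (iii) is sound, but your proof of (iii) has two concrete gaps.

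First, in the sub-case $c(x_1)=c(z_2)$ with an $s$-leaf $y^*$ below $c(x_1)$: the claim $y^*\in L_*^H$ is unjustified. From Definition~\ref{def:C-like}, $y^*\in L_*^H$ requires that $y^*z_1$ and $y^*x_2$ are either both edges or both non-edges of $G$; you establish $y^*x_2\notin E(G)$ but have no control over $y^*z_1$, so a priori $y^*$ could lie in $L_t^H$ and Condition \AX{(C3.a)} would not apply. The sub-case is in fact vacuous for a much simpler reason: under your standing hypothesis $\lca_T(x_1,y_1)=v$, any $s$-leaf $y^*\preceq_T c(x_1)$ gives $\lca_T(x_1,y^*)\prec_T v=\lca_T(x_1,y_1)$, contradicting $x_1y_1\in E(G)$ outright. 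This branch should be deleted, not patched.

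Second, and more seriously, the closing step "the identifications $c(x_2)=c(z_1)$ and $c(y_2)=c(z_2)$ follow by cyclically relabelling $H$" does not work: as you yourself observe, $|N_t(x_1)|>1$ is what breaks the cyclic symmetry, so the relabelled hexagon does not satisfy the hypothesis your argument needs (you would require $|N_r(y_1)|>1$, etc., which you do not have). The conclusion is still reachable, but by propagation rather than relabelling: once $c(x_1)=c(y_1)=:v_1$ is known, one rules out $z_1,z_2\preceq_T v_1$ (otherwise all six hexagon vertices collapse below $v_1$, contradicting $v=\lca_T$ of the six); then $x_1z_1\notin E(G)$ together with $z_1x_2\in E(G)$ forces an $r$-leaf below $c(z_1)$ and hence $x_2\preceq_T c(z_1)$, and $z_1y_2\notin E(G)$ together with $y_2z_2\in E(G)$ forces a $t$-leaf below $c(y_2)$ and hence $z_2\preceq_T c(y_2)$; distinctness of the three children then follows. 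The same propagation closes the smaller gap in your case $c(x_1)\neq c(z_2)$, where the assertion $\lca_T(y_2,z_1)\prec_T\lca_T(y_2,z_2)$ silently uses $c(z_1)=c(x_2)$, which you have not yet established at that point.
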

\begin{proof}
  We start with proving Properties (i) and (ii).  By definition and Lemma
  \ref{lem:type-Q}, $\langle x_1y_1z_1x_2\rangle$ is either a good or a bad
  quartet in $\G(T,\sigma)$. Assume, for contradiction, that it is a bad
  quartet in $\G(T,\sigma)$, thus, in particular,
  $(z_1,x_1)\in E(\G(T,\sigma))$ and $(x_1,z_1)\notin
  E(\G(T,\sigma))$. Hence, as also $\langle z_2x_1y_1z_1\rangle$ must be
  either a good or a bad quartet in $\G(T,\sigma)$, this immediately
  implies that $\langle z_2x_1y_1z_1\rangle$ is a good quartet in
  $\G(T,\sigma)$. Let $w\coloneqq \lca_T(z_2,x_1,y_1,z_1)$. Lemma
  \ref{lem:iP_4-tree} then implies that there exist distinct
  $w_1,w_2\in\child_T(w)$ such that $z_2, x_1 \preceq_T w_1$ and
  $y_1,z_1\preceq_T w_2$. Clearly, as $x_1y_1\in E(G)$ and
  $\lca_T(x_1,y_1)=w$, we must have $s\notin \sigma(L(T(w_1)))$. Since
  $|N_t(x_1)|>1$, there is a leaf $z\in L[t]\setminus \{z_1,z_2\}$ such
  that $x_1z\in E(G)$. By Lemma \ref{lem:rbm-pairs}, there exists an edge
  $x'z'\in E(G(T,\sigma))$ with $x'\in L[r]\cap L(T(w'))$,
  $z'\in L[t]\cap L(T(w'))$ for any inner vertex $w'\preceq_T w$. One
  easily verifies that this and $x_1z_2\in E(G)$ necessarily implies that
  the leaves $x_1$, $z_2$, and $z$ must all be incident to the same parent
  in $T$. However, we then have $N(z)=N(z_2)$, i.e., $z$ and $z_2$ belong
  to the same $\sthin$-class; a contradiction since $(G,\sigma)$ is
  $\sthin$-thin. We therefore conclude that $\langle x_1y_1z_1x_2\rangle$
  must be a good quartet. Hence, $(x_2,y_1), (x_1,z_1)\in E(\G(T,\sigma))$
  and $(y_1,x_2), (z_1,x_1)\notin E(\G(T,\sigma))$, which, as a consequence
  of Lemma \ref{lem:type-Q}, immediately implies that
  $\langle y_1z_1x_2y_2\rangle$ and $\langle z_2x_1y_1z_1\rangle$ are bad
  quartets in $\G(T,\sigma)$. This similarly implies that
  $\langle z_1x_2y_2 z_2\rangle$ and $\langle y_2z_2x_1y_1\rangle$ are good
  quartets, from which we finally conclude that
  $\langle x_2y_2 z_2x_1\rangle$ is a bad quartet in $\G(T,\sigma)$.

  We continue with showing Property (iii).  Property (i) implies that
  $\langle x_1y_1z_1x_2\rangle$ and $\langle z_1x_2y_2 z_2\rangle$ are good
  quartets in $\G(T,\sigma)$. Hence, by Lemma \ref{lem:iP_4-tree}, we have
  $x_1,y_1\preceq _T w_1$, $x_2,z_1 \preceq_T w_2$ for distinct
  $w_1,w_2\in \child_T(u_1)$, where $u_1\coloneqq \lca_T(x_1,y_1,z_1,x_2)$,
  and $y_2,z_2\preceq _T w'_1$, $x_2,z_1 \preceq_T w'_2$ for distinct
  $w'_1,w'_2\in \child_T(u_2)$, where
  $u_2\coloneqq \lca_T(y_2,z_2,z_1,x_2)$, respectively. Since $u_1$ and
  $u_2$ are both located on the path from $z_1$ to the root of $T$, they
  must be comparable.  Next, we show that $u_1 = u_2$.  Assume first, for
  contradiction, $u_1\prec_T u_2$.  Then, by construction, we have
  $\lca_T(x_2,y_1) = u_1\prec_T u_2=\lca_T(x_2,y_2)$; a contradiction to
  $x_2y_2\in E(G)$. Similarly, $u_2\prec_T u_1$ yields a contradiction and
  thus, $u_1=u_2=v$ and, in particular, $w_2=w'_2$. It remains to show
  $w_1\neq w'_1$.  Assume, for contradiction, $w_1= w'_1$. Then
  $\lca_T(x_1,y_2) \preceq_T w_1\prec_T v=\lca_T(x_2,y_2)$; a contradiction
  to $x_2y_2\in E(G)$. Hence, $w_1$, $w_2$, and $w'_1$ are distinct
  children of $v$ in $T$, which completes the proof.  
\end{proof}
  
\section{Characterization of $n$-RBMGs}\label{sect:n}

\subsection{The General Case: Combination of 3-RBMGs} 

The key idea of characterizing $n$-RBMGs is to combine the information
contained in their 3-colored induced subgraphs $(G_{rst},\sigma_{rst})$,
cf.\ Def.\ \ref{def:restrict-3}.  Observation \ref{obs:induced_sub} shows
that $(G_{rst},\sigma_{rst})$ is a 3-colored induced subgraph of an
$n$-RBMG explained by $(T_{rst},\sigma_{rst})$, and thus a 3-RBMG.
Unfortunately, the converse of Observation \ref{obs:induced_sub} is in
general not true. Fig.~\ref{fig:supertree_counterex} shows a 4-colored
graph that is not a 4-RBMG while each of the four subgraphs induced by a
triplet of colors is a 3-RBMG.  We can, however, rephrase Observation
\ref{obs:induced_sub} in the following way:
\begin{fact}\label{fact:induced_sub}
  Let $(G,\sigma)$ be an $n$-RBMG for some $n\ge 3$. Then, $(T,\sigma)$
  explains $(G,\sigma)$ if and only if $(T_{rst},\sigma_{rst})$ explains
  $(G_{rst},\sigma_{rst})$ for all triplets of colors $r,s,t \in S$.
\end{fact}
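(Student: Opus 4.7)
The forward direction is immediate: it is exactly Observation~\ref{obs:induced_sub}, which in turn relies on \cite[Observation~1]{Geiss:18x} to conclude that the restriction of a tree explaining $(G,\sigma)$ explains the correspondingly restricted induced subgraph.

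The plan for the backward direction is to verify $xy\in E(G(T,\sigma))$ iff $xy\in E(G)$ for every pair of distinct leaves $x,y\in L$, using only the three-color restrictions. If $\sigma(x)=\sigma(y)$, then neither $G(T,\sigma)$ nor $G$ contains the edge $xy$ by Definition~\ref{def:explains} and Observation~\ref{obs:proper}, so both sides are false. Suppose therefore $\sigma(x)\ne\sigma(y)$. Since $n\ge 3$, choose any third color $t\in S\setminus\{\sigma(x),\sigma(y)\}$ and set $r=\sigma(x)$, $s=\sigma(y)$. Then both $x$ and $y$, together with \emph{every} leaf of color $r$, $s$, or $t$, are leaves of $T_{rst}$.

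The key observation is that restriction preserves the best-match relation between leaves whose colors are retained. More precisely, for any leaves $a,b,c$ of $T$ that survive in $T_{rst}$, the ancestor order of $\lca$s is preserved: $\lca_T(a,b)\preceq_T\lca_T(a,c)$ iff $\lca_{T_{rst}}(a,b)\preceq_{T_{rst}}\lca_{T_{rst}}(a,c)$, since $T_{rst}$ is obtained from the minimal subtree of $T$ on its leaf set by suppressing degree-two vertices. Applied to $a=x$, $b=y$, and $c$ ranging over all leaves of color $s$ (all of which lie in $L[r]\cup L[s]\cup L[t]$), this yields that $y$ is a best match of $x$ in $(T,\sigma)$ iff $y$ is a best match of $x$ in $(T_{rst},\sigma_{rst})$; symmetrically for $x$ as a best match of $y$. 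Hence $xy\in E(G(T,\sigma))$ iff $xy\in E(G(T_{rst},\sigma_{rst}))$.

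By hypothesis $G(T_{rst},\sigma_{rst})=(G_{rst},\sigma_{rst})$, and by Definition~\ref{def:restrict-3} we have $xy\in E(G_{rst})$ iff $xy\in E(G)$ (since $(G_{rst},\sigma_{rst})$ is the induced subgraph of $(G,\sigma)$ on $L[r]\cup L[s]\cup L[t]$, which contains $x$ and $y$). Chaining these equivalences gives $xy\in E(G(T,\sigma))$ iff $xy\in E(G)$, completing the proof. The only subtlety worth spelling out carefully is the $\lca$-preservation property of the restriction operation; otherwise the argument is a short bookkeeping exercise leveraging the three-color induced subgraphs.
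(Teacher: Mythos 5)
Your proof is correct. The paper states this as an Observation without its own proof (it is presented as a mere rephrasing of Observation~\ref{obs:induced_sub}), and your argument — the forward direction via Observation~\ref{obs:induced_sub}, and the backward direction via the fact that restriction to $L[r]\cup L[s]\cup L[t]$ preserves the relative $\preceq$-order of last common ancestors of surviving leaves while leaving the set of same-colored competitors of $x$ and $y$ unchanged — is precisely the justification the statement implicitly relies on.
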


\begin{lemma}\label{lem:edge_con} 
  Let $(T_e,\sigma)$ be the tree obtained by contracting an inner edge of
  $(T,\sigma)$. Then $G(T,\sigma)$ is a subgraph of $G(T_e,\sigma)$.
\end{lemma}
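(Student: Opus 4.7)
Let $e=uv$ with $u\succ_T v$ be the contracted inner edge, and denote by $w_e$ the vertex of $T_e$ obtained by identifying $u$ and $v$. My plan is to exploit the natural quotient map $\pi\colon V(T)\to V(T_e)$ defined by $\pi(u)=\pi(v)=w_e$ and $\pi(x)=x$ for every other vertex $x$. This map restricts to the identity on the leaf set $L$, so both trees have the same leaves and colouring.

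The first (routine) step is to verify two compatibility properties of $\pi$: for any two leaves $a,b\in L$, $\lca_{T_e}(a,b)=\pi(\lca_T(a,b))$, and for any two vertices $p,q\in V(T)$, $p\preceq_T q$ implies $\pi(p)\preceq_{T_e}\pi(q)$. Both follow directly from the definition of edge contraction: contracting $e$ only identifies $u$ with $v$ and leaves the remaining tree structure intact, so ancestor relations between non-contracted vertices are preserved and the pair $\{u,v\}$ becomes a single vertex $w_e$ that sits exactly where $u$ used to sit (equivalently, where $v$ used to sit).

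The main step is then to show that if $xy\in E(G(T,\sigma))$, then $xy\in E(G(T_e,\sigma))$. Assume $xy$ is a reciprocal best match in $(T,\sigma)$. Then $\sigma(x)\ne\sigma(y)$ and
\[
\lca_T(x,y)\preceq_T \lca_T(x,y') \text{ for all } y'\in L[\sigma(y)],
\quad
\lca_T(x,y)\preceq_T \lca_T(x',y) \text{ for all } x'\in L[\sigma(x)].
\]
Applying $\pi$ to both sides of each inequality, using the two compatibility properties above, yields
\[
\lca_{T_e}(x,y)\preceq_{T_e} \lca_{T_e}(x,y') \text{ and } \lca_{T_e}(x,y)\preceq_{T_e} \lca_{T_e}(x',y)
\]
for the same quantifiers, which is precisely the condition that $x$ and $y$ are reciprocal best matches in $(T_e,\sigma)$. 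Since $V(G(T,\sigma))=L=V(G(T_e,\sigma))$, this establishes $G(T,\sigma)\subseteq G(T_e,\sigma)$.

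I do not anticipate any real obstacle here; the whole argument rests on the elementary observation that edge contraction is an order-preserving quotient of the tree, which makes last common ancestors only move ``downward'' (in the sense of being identified with a strict descendant) and therefore can only create new best matches, never destroy existing ones. One could alternatively cite \citet[Lemma 1]{Geiss:18x} for the analogous statement on BMGs and then pass to the symmetric part, but the direct verification above seems cleaner.
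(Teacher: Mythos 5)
Your argument is correct, but it takes a different route from the paper's. The paper proves this lemma by passing through the directed BMG: using the characterization $N^+_s(\alpha)=L(T(\rho_{\alpha,s}))\cap L[s]$ of Equ.\ \eqref{eq:bmg14}, it observes that contracting $e$ can only enlarge the subtrees rooted at the relevant $\rho_{\alpha,s}$, hence out-neighborhoods (and therefore in-neighborhoods) in $\G(T,\sigma)$ only grow, and the symmetric part grows with them. You instead argue from first principles with the quotient map $\pi$: edge contraction is an order-preserving quotient under which $\lca_{T_e}(a,b)=\pi(\lca_T(a,b))$, so the defining inequalities $\lca_T(x,y)\preceq_T\lca_T(x,y')$ and $\lca_T(x,y)\preceq_T\lca_T(x',y)$ transfer verbatim to $T_e$ and reciprocal best matches are preserved. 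Your two compatibility claims do hold (the only potential failure of $\lca_{T_e}(a,b)=\pi(\lca_T(a,b))$ would require $u$ to be an ancestor of $a$ but not of $b$ while $v$ is an ancestor of $b$ but not of $a$, which is impossible since $u\succ_T v$), so the proof is sound; it is arguably more self-contained, since it avoids the $\rthin$-class and root machinery imported from the BMG paper, at the cost of having to verify the quotient-map properties explicitly. One small caveat: your parenthetical suggestion to instead cite Lemma~1 of \citet{Geiss:18x} is misplaced --- that lemma concerns restriction of the tree to a leaf subset (it is what underlies Lemma~\ref{lem:fact:1} here), not edge contraction --- but since this is offered only as an alternative, it does not affect your main argument.
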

\begin{proof}
  Consider the edge $e=xy$ in $T$. By construction
  $(T_e,\sigma)\le (T,\sigma)$, thus Equ.\ \eqref{eq:bmg14} implies
  $N^+_{T}(u) \subseteq N^+_{T_e}(u)$ for all $u\in L\setminus L(T(y))$ and
  $N^+_{T}(v) = N^+_{T_e}(v)$ for all $v\in L(T(y))$.  It immediately
  follows $N^-_{T}(w) \subseteq N^-_{T_e}(w)$ for all $w\in L$. Hence,
  $E(G)\subseteq E(G(T_e))$. Since the leaf set remains unchanged,
  $L(T)=L(T_e)$, we conclude that the $G(T,\sigma)$ is a subgraph of
  $G(T_e,\sigma)$.  
\end{proof}

\begin{definition} 
  Let $(G,\sigma)$ be an $n$-RBMG. Then the \emph{tree set of $(G,\sigma)$}
  is the set
  $\mathcal{T}(G,\sigma)\coloneqq \{(T,\sigma) \mid \NEW{(T,\sigma)}  \text{ is least
    resolved and } G(T,\sigma)=(G,\sigma)\}$ of all leaf-colored trees
  explaining $(G,\sigma)$. Furthermore, we write
  $\mathcal{T}_{rst}(G,\sigma)$ for the set of all least resolved trees
  explaining the induced subgraphs $(G_{rst},\sigma_{rst})$.
\label{def:Grst-Trst}
\end{definition}
 
The counterexample in Fig.\ \ref{fig:supertree_counterex} shows that the
existence of a supertree for the tree set
$\mathcal{P}\coloneqq
\{T\in\mathcal{T}_{rst}(G_{rst},\sigma_{rst}),\,r,s,t\in S\}$ is not
sufficient for $(G,\sigma)$ to be an $n$-RBMG. We can only obtain a
substantially weaker result.
  
\begin{theorem}\label{thm:n-crbmg}
  A (not necessarily connected) undirected colored graph $(G,\sigma)$ is an
  $n$-RBMG if and only if (i) all induced subgraphs
  $(G_{rst},\sigma_{rst})$ on three colors are 3-RBMGs and (ii) there
  exists a supertree $(T,\sigma)$ of \NEW{the} tree set
  $\mathcal{P}\coloneqq \{T\in\mathcal{T}_{rst}(G,\sigma) \mid r,s,t\in
  S\}$, such that $G(T,\sigma)=(G,\sigma)$.
\end{theorem}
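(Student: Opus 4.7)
The plan is to prove the two directions of the equivalence separately.

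\emph{Sufficiency (backward direction).} This direction is essentially immediate. Condition (ii) directly asserts the existence of a tree $(T,\sigma)$ with $G(T,\sigma) = (G,\sigma)$, which is precisely Def.\ \ref{def:explains} of $(G,\sigma)$ being an $n$-RBMG. (Condition (i) is needed implicitly so that $\mathcal{P}$ and thus its supertree are well-defined.)

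\emph{Necessity (forward direction).} Suppose $(G,\sigma)$ is an $n$-RBMG, explained by some leaf-colored tree $(T^*,\sigma)$. Condition (i) is immediate from Obs.\ \ref{obs:induced_sub}, since $(T^*_{rst},\sigma_{rst})$ explains $(G_{rst},\sigma_{rst})$ for every triple $r,s,t\in S$, so each such restriction is a 3-RBMG. For condition (ii), the aim is to exhibit a single tree $(T,\sigma)$ on $L$ that simultaneously explains $(G,\sigma)$ and displays every $T'\in\mathcal{P}$. The natural candidate is $(T^*,\sigma)$, possibly after a controlled refinement. Given any $T'\in\mathcal{T}_{rst}(G,\sigma)$, we have that both $(T^*_{rst},\sigma_{rst})$ and $T'$ explain the same 3-RBMG $(G_{rst},\sigma_{rst})$, and $T'$ is least resolved. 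By Thm.\ \ref{thm:lr}, $(T^*_{rst},\sigma_{rst})$ can be contracted along a series of redundant edges to reach a least resolved tree; the key step is to show that, possibly after refining $(T^*,\sigma)$ in a way that preserves $G(T^*,\sigma)=(G,\sigma)$, the contraction sequence can be arranged to end at $T'$, so that $(T^*_{rst},\sigma_{rst})$ displays $T'$.

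The main obstacle is precisely the non-uniqueness of least resolved trees illustrated in Fig.\ \ref{fig:lr_nonunique}: different orderings of redundant-edge contractions in $(T^*_{rst},\sigma_{rst})$ may lead to different elements of $\mathcal{T}_{rst}(G,\sigma)$, so a particular $T^*$ need not a priori display every element of $\mathcal{P}$. Guaranteeing a single tree that displays all of $\mathcal{P}$ while still explaining $(G,\sigma)$ requires global consistency across the $\binom{|S|}{3}$ triples; it is exactly the failure of this global consistency that produces counterexamples such as Fig.\ \ref{fig:supertree_counterex}, in which all three-color restrictions happen to be 3-RBMGs but no supertree of $\mathcal{P}$ explains $(G,\sigma)$. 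I therefore expect the proof to proceed by an inductive refinement argument starting from $(T^*,\sigma)$: as long as some $T'\in\mathcal{P}$ is not yet displayed, refine the current tree (preserving its RBMG) so as to display $T'$, using the existence of the \emph{global} tree $(T^*,\sigma)$ to ensure that such refinements can be carried out consistently and that the process terminates. The hardest step will be this global compatibility — showing that refinements needed for different triples never conflict when $(G,\sigma)$ is genuinely an $n$-RBMG.
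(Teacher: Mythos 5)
Your sufficiency direction and your derivation of condition (i) from Observation~\ref{obs:induced_sub} coincide with the paper's proof. The problem is condition (ii): what you offer there is not a proof but a programme (``I therefore expect the proof to proceed by an inductive refinement argument\dots''), and the programme is aimed at the wrong target. The paper's argument for (ii) is a one-liner that needs no refinement at all: take the tree $(T^*,\sigma)$ that explains $(G,\sigma)$. By definition of the restriction, $(T^*,\sigma)$ displays every $(T^*_{rst},\sigma_{rst})$, and by Theorem~\ref{thm:lr} each $(T^*_{rst},\sigma_{rst})$ in turn displays some least resolved tree explaining $(G_{rst},\sigma_{rst})$, i.e.\ some element of $\mathcal{T}_{rst}(G,\sigma)$. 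So $(T^*,\sigma)$ is already the required supertree of $\mathcal{P}$ and satisfies $G(T^*,\sigma)=(G,\sigma)$ by hypothesis. Nothing has to be ``arranged to end at $T'$.''

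The deeper issue is that your reading of ``supertree of $\mathcal{P}$'' — a tree displaying \emph{every} element of every $\mathcal{T}_{rst}(G,\sigma)$ — cannot be the intended one, because under that reading the necessity direction is simply false and no refinement strategy can rescue it. A single $\mathcal{T}_{rst}(G,\sigma)$ may contain mutually inconsistent least resolved trees: the 3-colored hexagon is explained by the two least resolved trees $((x_1,y_1),(z_1,x_2),(y_2,z_2))$ and $((y_1,z_1),(x_2,y_2),(z_2,x_1))$, which display the incompatible triples $x_1y_1|z_1$ and $y_1z_1|x_1$, so no tree displays both. Hence an $n$-RBMG whose restriction to some color triple is (or contains) a hexagon would violate (ii) under your interpretation. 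Your own diagnosis — that the difficulty is ``global consistency across the $\binom{|S|}{3}$ triples'' — conflates two different phenomena: the counterexample of Fig.~\ref{fig:supertree_counterex} shows that conditions (i) plus consistency of $\mathcal{P}$ do not \emph{suffice} (which is why (ii) explicitly demands $G(T,\sigma)=(G,\sigma)$ and why the theorem is weaker than one would like), not that necessity requires any work beyond exhibiting $(T^*,\sigma)$ itself.
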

\begin{proof}
  Suppose $(G,\sigma)$ is an $n$-RBMG explained by some tree
  $(T,\sigma)$. Then Obs.\ \ref{fact:induced_sub} implies that
  $(G_{rst},\sigma_{rst})$ is a 3-RBMG that is explained by
  $(T_{rst},\sigma_{rst})$ for all triplets of colors $r,s,t \in S$.  By
  definition, each $(T_{rst},\sigma_{rst})$ is displayed by $(T,\sigma)$
  and thus, $(T,\sigma)$ is a supertree of these trees. Hence, the
  conditions are necessary.  Conversely, the existence of some supertree
  $(T,\sigma)$ with $G(T,\sigma)=(G,\sigma)$, i.e., $(T,\sigma)$ explains
  $(G,\sigma)$, clearly implies that $(G,\sigma)$ is an $n$-RBMG.  
\end{proof}

\subsection{Characterization of $n$-RBMGs that are cographs}

\begin{fact}
  Any undirected colored graph $(G,\sigma)$ is a cograph if and only if the
  corresponding $\sthin$-thin graph $(G/\sthin,\sigmasthin)$ is a cograph.
  \label{fact:ortho-cograph}
\end{fact}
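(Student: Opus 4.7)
The plan is to reduce the cograph property to the well-known forbidden-subgraph characterization: a graph is a cograph if and only if it does not contain an induced $P_4$ (cf.\ \citet{Corneil:81}). Since neither the definition of a cograph nor the forbidden $P_4$ depends on the coloring $\sigma$, it suffices to prove that $G$ contains an induced $P_4$ if and only if $G/\sthin$ does.

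The workhorse is Lemma~\ref{lem:sthin}, which gives the equivalence $xy\in E(G) \Leftrightarrow [x][y]\in E(G/\sthin)$. For the backward direction, I would start from an induced $P_4$ $\langle [x_1][x_2][x_3][x_4]\rangle$ in $G/\sthin$, pick any representatives $x_i\in[x_i]$, and observe that the three consecutive pairs are edges and the three non-consecutive pairs are non-edges of $G$ by the above equivalence; distinctness of the $x_i$ is automatic because they belong to distinct $\sthin$-classes.

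The forward direction is the only step that needs a small argument. Given an induced $P_4$ $\langle x_1 x_2 x_3 x_4 \rangle$ in $G$, the plan is to show that the four $\sthin$-classes $[x_1],[x_2],[x_3],[x_4]$ are pairwise distinct. For adjacent pairs like $[x_1]=[x_2]$, equality would force $x_2\in N(x_2)$ (since $x_2\in N(x_1)=N(x_2)$), contradicting that $G$ is loopless. For a ``skip'' equality like $[x_1]=[x_3]$, we would get $x_4\in N(x_3)=N(x_1)$, contradicting that $x_1x_4\notin E(G)$; the remaining cases $[x_2]=[x_4]$ and $[x_1]=[x_4]$ are handled symmetrically. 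Once distinctness is established, Lemma~\ref{lem:sthin} immediately promotes the three edges and three non-edges of $P_4$ in $G$ to the corresponding edges and non-edges among $[x_1],[x_2],[x_3],[x_4]$ in $G/\sthin$, yielding an induced $P_4$ there.

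There is no substantial obstacle; the only delicate point is the distinctness verification above, which is routine using that $G$ is loopless. Combining both directions with the $P_4$-freeness characterization of cographs completes the proof.
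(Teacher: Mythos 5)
Your proposal is correct and follows exactly the route of the paper's own (one-line) proof: reduce to $P_4$-freeness and transfer induced $P_4$s between $G$ and $G/\sthin$ via Lemma~\ref{lem:sthin}. You merely spell out the distinctness-of-classes check that the paper leaves implicit, and that verification is sound.
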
 
\begin{proof}
  It directly follows from Lemma \ref{lem:sthin} that $(G,\sigma)$ contains
  an induced $P_4$ if and only if $(G/\sthin,\sigmasthin)$ contains an
  induced $P_4$, which yields the statement.
  
\end{proof}
We note in passing that point-determining (thin) cographs recently have
attracted some attention in the literature \cite{Li:12}.

\begin{theorem}\label{thm:cographA}   
  Let $(G,\sigma)$ be an $n$-RBMG with $n\ge3$, and denote by
  $(G'_{rst},\sigma'_{rst})\coloneqq (G_{rst}/S,\sigma_{rst}/S)$ the
  $\sthin$-thin version of the 3-RBMG that is obtained by restricting
  $(G,\sigma)$ to the colors $r$, $s$, and $t$.  Then $(G,\sigma)$ is a
  cograph if and only if every \emph{3-colored} connected component of
  $(G'_{rst},\sigma'_{rst})$ is a 3-RBMG of Type \AX{(A)} for all triples
  of distinct colors $r,s,t$.
\end{theorem}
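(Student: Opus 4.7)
\smallskip
For $(\Rightarrow)$ the plan is to chase definitions through heredity of cographs. If $(G,\sigma)$ is a cograph then so is every induced subgraph $(G_{rst},\sigma_{rst})$, and by Observation~\ref{obs:induced_sub} it is a 3-RBMG. I would then apply Fact~\ref{fact:ortho-cograph} to transfer the cograph property to the $\sthin$-quotient $(G'_{rst},\sigma'_{rst})$, and Lemma~\ref{lem:Sthin-tree} to ensure this quotient is again a 3-RBMG. Every connected component is an induced subgraph, hence a cograph, and is an RBMG by Theorem~\ref{thm:conn_comp}. A 3-colored component is therefore a connected $\sthin$-thin cograph 3-RBMG, and Fact~\ref{fact:cograph} concludes that it is of Type~\AX{(A)}.

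For $(\Leftarrow)$ I would argue by contradiction: assume every 3-colored connected component of $(G'_{rst},\sigma'_{rst})$ is of Type~\AX{(A)}, yet $(G,\sigma)$ contains an induced path $P=\langle abcd\rangle\in\mathscr{P}_4$. By Observation~\ref{obs:P4} $P$ uses exactly three or exactly four distinct colors. If $P$ uses three colors $\{p,q,r\}$, then $P$ already lies in $(G_{pqr},\sigma_{pqr})$, and by Lemma~\ref{lem:sthin} its image under the $\sthin$-quotient remains an induced 3-colored $P_4$ in $(G'_{pqr},\sigma'_{pqr})$. As a connected subgraph it sits inside a single 3-colored connected component, which by hypothesis is of Type~\AX{(A)} and therefore contains no induced $P_4$ by Definition~\ref{def:typesABC}---contradiction.

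The remaining case, a 4-colored induced $P_4$ $\langle abcd\rangle$ with $\sigma(a)=p$, $\sigma(b)=q$, $\sigma(c)=r$, $\sigma(d)=s$ pairwise distinct, is the main obstacle, since a 4-color path is not directly captured by any 3-color restriction. My plan is to reduce it to the 3-color case by producing from $P$ an auxiliary induced 3-colored $P_4$ in $(G,\sigma)$. Fix a tree $(T,\sigma)$ explaining $(G,\sigma)$; each of the non-edges $ac$, $bd$, $ad$ will, via Equation~\eqref{eq:bmg14} applied in the underlying BMG $\G(T,\sigma)$, force a witness of one of the involved colors---for instance a $c'\in L[r]$ with $\lca(a,c')\prec\lca(a,c)$, or an $a'\in L[p]$ with $\lca(a',c)\prec\lca(a,c)$. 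Choosing these witnesses to minimize the respective LCAs makes the resulting pairs into reciprocal best matches of $(G,\sigma)$. A case analysis over which color of witness arises for each of the three non-edges, using Lemma~\ref{lem:TreeTypes} to locate the vertices in $(T,\sigma)$ and the quartet classification of Lemma~\ref{lem:P4-classes}, should in every configuration produce an induced $P_4$ built from three of $\{a,b,c,d\}$ together with one or two witnesses and using only three of the four colors $p,q,r,s$---contradicting the 3-color case just settled. The hard part is the bookkeeping needed to guarantee that some consistent choice of witnesses always yields an \emph{induced} three-colored $P_4$, regardless of which edges happen to be present among the witnesses themselves.
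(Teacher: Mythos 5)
Your forward direction and the two- and three-color part of the reverse direction are correct and essentially identical to the paper's argument: heredity of cographs plus Observation~\ref{obs:induced_sub}, Fact~\ref{fact:ortho-cograph} and Fact~\ref{fact:cograph} for $(\Rightarrow)$, and Lemma~\ref{lem:sthin} to push a 3-colored $P_4$ into a 3-colored component for the easy half of $(\Leftarrow)$.

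The 4-colored $P_4$, however, is where essentially all of the content of this theorem lives, and there you have only a plan, not a proof. Three concrete problems. First, your sketch never uses the hypothesis that the 3-colored components are of Type~\AX{(A)} except as a final target; the paper's proof uses it structurally from the start, via Lemma~\ref{lem:charA} and Corollary~\ref{cor:hub-vertex}: since $\langle abc\rangle$ is an induced $P_3$ inside a Type~\AX{(A)} component, that component has at least four vertices and a hub-vertex, which must be $b$ (and, symmetrically, $c$ for $\langle bcd\rangle$); $\sthin$-thinness then forces the existence of auxiliary vertices $a'$ and $d'$ with $a'b,a'c,d'b,d'c\in E(G)$ and $a'd,ad'\notin E(G)$. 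Your "witnesses of non-edges chosen to minimize LCAs" do not come with any of these adjacency guarantees. Second, the tools you invoke do not apply: Lemma~\ref{lem:TreeTypes} concerns 3-colored trees of Types \AX{(I)}--\AX{(III)}, and the good/bad/ugly quartet classification of Lemma~\ref{lem:P4-classes} is defined only for $P_4$s containing two vertices of the \emph{same} color, so neither speaks to a 4-colored $P_4$ in an $n$-colored tree. Third, the shape of the final contradiction in the paper is telling: it is not "we exhibit a 3-colored induced $P_4$" but rather that the six-vertex configuration $\{a,a',b,c,d,d'\}$ cannot be explained by \emph{any} tree (by comparing $u=\lca(b,c)$ with $v=\lca(a,b)=\lca(a',b)$ and $w=\lca(c,d)=\lca(c,d')$ and showing the witness forced by the non-edge $ad$ contradicts $bd'\in E(G)$ or $a'c\in E(G)$). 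This strongly suggests that a purely local, witness-by-witness reduction to the 3-color case — the "bookkeeping" you defer — is not a routine verification but the actual missing argument, and there is no evidence in your sketch that the case analysis closes.
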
 
\begin{proof}
  We first emphasize that distinct $\sthin$-classes of some $\sthin$-thin
  $n$-RBMG $(G,\sigma)$ may belong to the same $\sthin$-class in
  $(G'_{rst},\sigma'_{rst})\coloneqq (G_{rst}/S,\sigma_{rst}/S)$.
  Likewise, distinct $\sthin$-classes in $(G'_{rst},\sigma'_{rst})$ may
  belong to the same $\sthin$-classes in $(G'_{r's't'},\sigma'_{r's't'})$.
  In the following, the vertex set of a connected component
  $(G_{rst}^*,\sigma_{rst}^*)$ of $(G'_{rst},\sigma'_{rst})$ will be
  denoted by $L^*_{rst}$.
	
  Recall that $(G,\sigma)$ is a cograph if and only if all of its connected
  components are cographs.  Clearly, if $(G,\sigma)$ is an RBMG, then
  $(G_{rst},\sigma_{rst})$ and thus in particular
  $(G'_{rst},\sigma'_{rst})$ is a 3-RBMG (cf.\ Obs.\
  \ref{obs:induced_sub}) for any three distinct colors $r,s,t$. Moreover,
  since $(G,\sigma)$ is a cograph, it cannot contain an induced $P_4$, thus
  its induced subgraph on $L[r]\cup L[s] \cup L[t]$ and therefore also
  $(G'_{rst},\sigma'_{rst})$ do not contain an induced $P_4$ either, i.e.,
  each connected component of $(G'_{rst},\sigma'_{rst})$ is again a
  cograph. Hence, by Obs.\ \ref{fact:cograph}, each of the connected
  components with three colors must be of Type \AX{(A)}.
 
  Conversely, suppose that, for any distinct colors $r,s,t$, each connected
  component $(G_{rst}^*,\sigma_{rst}^*)$ of $(G'_{rst},\sigma'_{rst})$ is a
  3-RBMG of Type \AX{(A)}. Thus, $(G_{rst}^*,\sigma_{rst}^*)$ is again
  $\sthin$-thin. Obs.\ \ref{fact:cograph} implies that
  $(G_{rst}^*,\sigma_{rst}^*)$ must be a cograph. Hence, in particular,
  $(G,\sigma)$ cannot contain an induced $P_4$ on two or three colors (cf.\
  Obs.\ \ref{obs:induced_sub}). Assume, for contradiction, that
  $(G,\sigma)$ contains an induced $P_4$ $\langle abcd \rangle$ on four
  distinct colors, where $\sigma(a)=A$, $\sigma(b)=B$, $\sigma(c)=C$, and
  $\sigma(d)=D$. By abuse of notation, we will write $a$, $b$, $c$, and $d$
  for the $\sthin$-classes $[a]$, $[b]$, $[c]$, and $[d]$, respectively.
  Hence, Lemma \ref{lem:sthin} implies that $(G/\sthin,\sigmasthin)$
  contains the induced $P_4$ $\langle abcd \rangle$ on four distinct
  colors.  By Obs.\ \ref{obs:induced_sub}, $\langle abc\rangle$ must again
  be an induced $P_3$ in some connected component
  $(G_{ABC}^*,\sigma_{ABC}^*)$ of $(G'_{ABC},\sigma'_{ABC})$. Let
  $L^*_{ABC}$ be the leaf set of $(G_{ABC}^*,\sigma_{ABC}^*)$. Since
  $G_{ABC}^*\notin \mathscr{P}_3$ as a consequence of Lemma
  \ref{lem:charA}, $(G_{ABC}^*,\sigma_{ABC}^*)$ must contain at least four
  vertices, i.e., $|L^*_{ABC}|>3$. Hence, as $(G_{ABC}^*,\sigma_{ABC}^*)$
  is of Type \AX{(A)}, Lemma \ref{lem:charA} implies that
  $(G_{ABC}^*,\sigma_{ABC}^*)$ contains a hub-vertex.  By Property
  \AX{(A1)}, the hub-vertex must be connected to any other vertex in
  $(G_{ABC}^*,\sigma_{ABC}^*)$. Hence, neither $a$ nor $c$ can be the
  hub-vertex, since $ac\notin E(G)$.  By Cor.\ \ref{cor:hub-vertex}, the
  hub-vertex must be the only vertex of its color in
  $(G_{ABC}^*,\sigma_{ABC}^*)$.  Therefore, no vertex of color $A$ or $C$
  in $(G_{ABC}^*,\sigma_{ABC}^*)$ can be the hub-vertex. We therefore
  conclude that the hub-vertex must be $b$.
                        
  Applying the same argumentation to the connected component
  $(G_{BCD}^*,\sigma_{BCD}^*)$ of $(G'_{BCD},\sigma'_{BCD})$ that contains
  the induced $P_3$ $\langle bcd\rangle$, we can conclude that
  $|L^*_{BCD}|>3$ and $c$ must be the hub-vertex of
  $(G_{BCD}^*,\sigma_{BCD}^*)$.  Thus, in particular, it is the only vertex
  of color $C$ in $(G_{BCD}^*,\sigma_{BCD}^*)$.

  Moreover, since $|L^*_{ABC}|>3$ and $b$ is the only vertex of color $B$,
  there must be at least one other vertex of color $A$ or $C$ in the
  connected component $(G_{ABC}^*,\sigma_{ABC}^*)$. 

  Assume, for contradiction, that $L^*_{ABC}$ contains a leaf $c'$ of color
  $C$ such that $c'\neq c$ in $(G_{ABC}^*,\sigma_{ABC}^*)$. Since $b$ is
  the hub-vertex of $(G_{ABC}^*,\sigma_{ABC}^*)$, we have
  $bc'\in E(G_{ABC}^*)$ and thus $bc'\in E(G)$.  As $c$ is the only leaf of
  color $C$ in $(G_{BCD}^*,\sigma_{BCD}^*)$, we must ensure $c=c'$ in
  $G_{BCD}^*$. Thus, $cd\in E(G)$ implies $c'd\in E(G)$.  Since $c\neq c'$
  in $G_{ABC}^*$, $c$ must be adjacent to a vertex $\tilde a$ of color $A$
  that is not adjacent to $c'$, or \emph{vice versa}.
  Suppose first that $\tilde ac'\in E(G)$ and $\tilde a c\notin E(G)$. In
  this case $a=\tilde a$ in $G_{ABC}^*$ is possible.  Since $b$ is the
  hub-vertex of $(G_{ABC}^*,\sigma_{ABC}^*)$, we have $\tilde a b\in E(G)$.
  Consider $(G'_{ACD},\sigma'_{ACD})$. By construction, the vertices
  $\tilde a, c, c',d$ \NEW{are} 
  contained in the same connected component $(G_{ACD}^*,\sigma_{ACD}^*)$ of
	 Type \AX{(A)} in the  3-RBMG $(G'_{ACD},\sigma'_{ACD})$.
  Since $\tilde a c\notin E(G)$, the hub-vertex of
  $(G_{ACD}^*,\sigma_{ACD}^*)$ must be of color $D$.  Hence, as the
  hub-vertex is the only vertex of its color in
  $(G_{ACD}^*,\sigma_{ACD}^*)$, we can conclude that \NEW{$d$}
  is the hub-vertex. Therefore, $\tilde{a}d\in E(G)$, which in particular
  implies $\tilde a \neq a$ in $(G_{ACD}^*,\sigma_{ACD}^*)$ because
  $ad\notin E(G)$ by assumption. Hence,
  $\langle a b \tilde a d\rangle\in \mathscr{P}_4$ in
  $(G_{ABD}^*,\sigma_{ABD}^*)$; a contradiction.  Now assume
  $\tilde ac\in E(G)$ and $\tilde a c'\notin E(G)$. Again, analogous
  argumentation implies $\langle a b \tilde a d\rangle\in \mathscr{P}_4$ in
  $(G^*_{ABD},\sigma_{ABD}^*)$; again a contradiction.

  We therefore conclude that $L^*_{ABC}$ does not contain a leaf $c'\neq c$
  of color $C$ and hence, $L^*_{ABC}[C]=\{c\}$.  Together with
  $L^*_{ABC}[B]=\{b\}$ and $|L^*_{ABC}|>3$, this implies that there must
  exist a leaf $a'\neq a$ of color $A$ in $L^*_{ABC}$. We have
  $a'b\in E(G)$ because $b$ is the hub-vertex of
  $(G_{ABC}^*,\sigma_{ABC}^*)$. Hence, since $(G_{ABC}^*,\sigma_{ABC}^*)$
  is $\sthin$-thin, the neighborhoods of $a$ and $a'$ must differ.  The
  latter and $L^*_{ABC}[B] \cup L^*_{ABC}[C]=\{b,c\}$ implies
  $a'c\in E(G_{ABC}^*)$, i.e., $a'c\in E(G)$. One now easily checks that,
  by $\sthin$-thinness of $(G^*_{ABC},\sigma^*_{ABC})$, there cannot be a
  third vertex of color $A$ in $L^*_{ABC}$, thus
  $L^*_{ABC}=\{a,a',b,c\}$. Applying analogous arguments to
  $(G_{BCD}^*,\sigma_{BCD}^*)$ shows that $L^*_{BCD}=\{b,c,d,d'\}$, where
  $\sigma(d')=D$ and $bd', cd'\in E(G)$.  Moreover, we have
  $a'd\notin E(G)$, because otherwise $ad,bd\notin E(G)$ would imply that
  $\langle aba'd \rangle$ is an induced $P_4$ in
  $(G^*_{ABD},\sigma^*_{ABD})$; a contradiction since
  $(G^*_{ABD},\sigma^*_{ABD})$ is of Type \AX{(A)}.  Similarly,
  $ad'\notin E(G)$ as otherwise $(G^*_{ACD},\sigma^*_{ACD})$ would contain
  the induced $P_4$ $\langle ad'cd \rangle$.

  In summary, $\langle abcd \rangle$ is an induced $P_4$ in $(G,\sigma)$
  and there are vertices $a'$ and $d'$ such that $a'b,a'c,d'b,d'c\in E(G)$
  and $a'd,ad'\notin E(G)$, see Fig.\ \ref{fig:shortProof}(A). 

  \begin{figure}[t]
    \begin{tabular}{lcr}
      \begin{minipage}{0.54\textwidth}
        \begin{center}
          \includegraphics[width=\textwidth]{./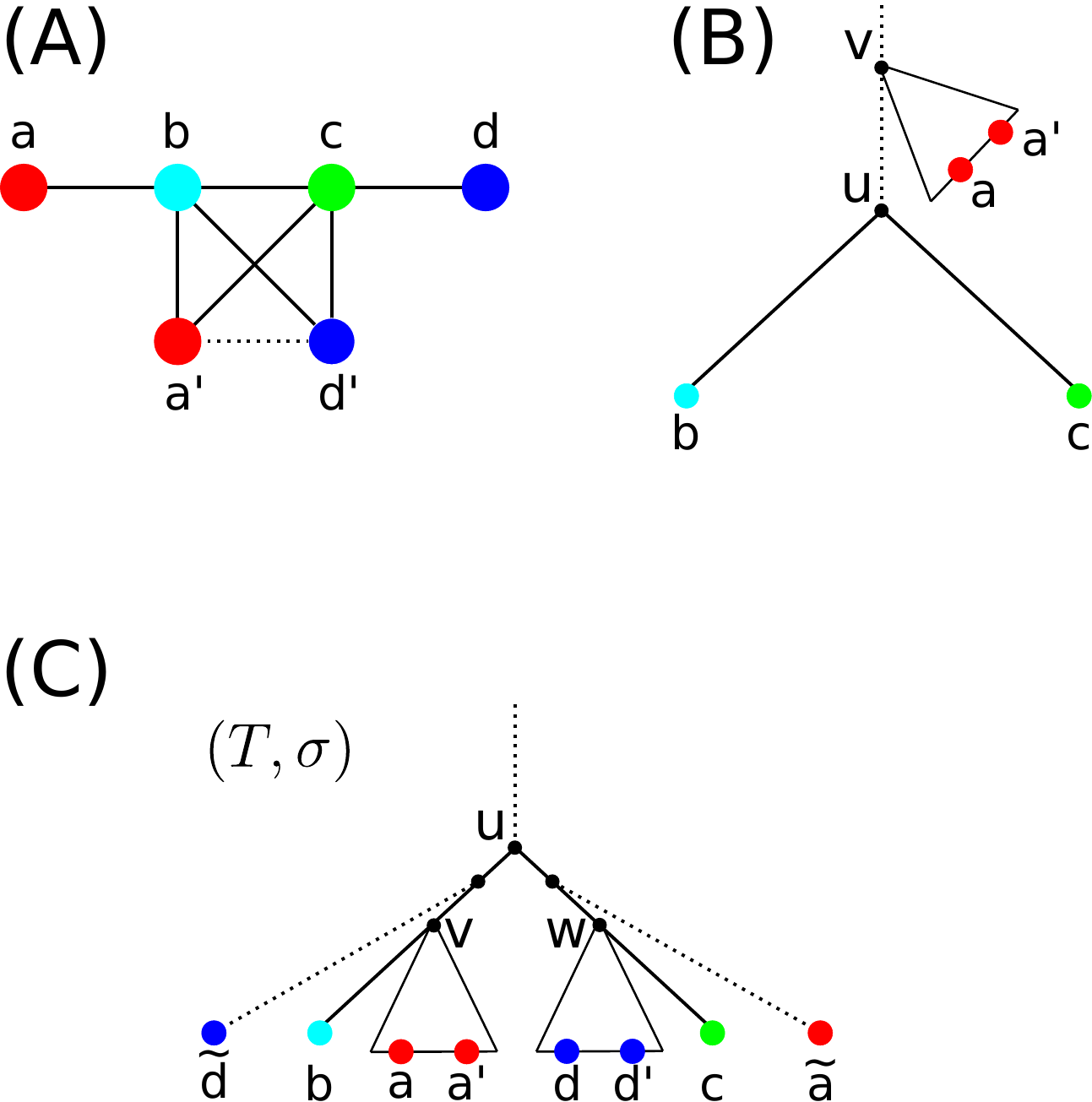}
        \end{center}
      \end{minipage}
      &    &
      \begin{minipage}{0.37\textwidth}
        \caption{Panel (A) shows the induced subgraph
          $(G[L'],\sigma_{|L'})$ with $L'=\{a,a',b,c,d,d'\}$ of
          $(G,\sigma)$ that is used in the proof of Theorem
          \ref{thm:cographA}.  In both trees, $u\coloneqq \lca(b,c)$ and
          $v\coloneqq \lca(a,b) = \lca(a',b)$.  Panel (B) shows a sketch of
          the subtree of $(T,\sigma)$ in case $v \succeq_T u$.  Panel (C)
          shows a sketch of a possible subtree of $(T,\sigma)$ in case
          $v \prec_T u$ and $w\coloneqq \lca(c,d) = \lca(c,d') \prec_T u$.
          In this representation of $(T,\sigma)$ we have
          $\lca(a,\widetilde d)\succ_T v$ and
          $\lca(\widetilde a,d)\succ_T w$.  However,
          $\lca(a,\widetilde d)\preceq_T v$ or
          $\lca(\widetilde a,d)\preceq_T w$ may be possible.  Dashed lines
          represent edges in $(G[L'],\sigma_{|L'})$ and paths in the trees
          that may or may not be present. Solid lines in the trees
          represent paths.}
        \label{fig:shortProof}
      \end{minipage}
    \end{tabular}
  \end{figure}
  
  Let $(T,\sigma)$ be a tree that explains $(G,\sigma)$ and
  $u\coloneqq \lca(b,c)$. The steps of the subsequent proof are illustrated
  in Fig.\ \ref{fig:shortProof}.  Since $ab \in E(G)$, we have
  $\lca(a,b) \preceq_T \lca(a',b)$. Similarly, $a'b\in E(G)$ implies
  $\lca(a',b) \preceq_T \lca(a,b)$.  Hence, we clearly have
  $v\coloneqq \lca(a,b) = \lca(a',b)$.  Note, $v$ and $u$ are both
  ancestors of $b$ and are thus located on the path from the root $\rho_T$
  to $b$. In other words, $v$ and $u$ are always comparable in $T$, i.e.,
  we have either $v\succeq_T u$ or $v\prec_T u$.  If $v\succeq_T u$, then
  $v=\lca(a,c)=\lca(a',c)$.  Since $a'c\in E(G)$, we have
  $v \preceq_T \lca(a',\widetilde c)$ and
  $v\preceq_T \lca(\widetilde a, c)$ for all $\widetilde{a} \in L[A]$ and
  all $\widetilde{c} \in L[C]$.  Together with $v=\lca(a,c)$, this implies
  $ac\in E(G)$; a contradiction.  Thus, only the case $v\prec_T u$ is
  possible and hence, $v$ must be located on the path from some child of
  $u$ to $b$ in $T$. Similarly, we have $w\coloneqq \lca(c,d) = \lca(c,d')$
  because $cd,cd'\in E(G)$. As $bd'\in E(G)$, $bd\notin E(G)$, we can apply
  an analogous argumentation as for $u$ and $v$ to conclude that only the
  case $w\prec_T u$ is possible. Thus $w$ must be located on the path from
  some child of $u$ to $c$ in $T$. In particular, we have $\lca(v,w)=u$ by
  definition of $u$ and therefore, $u=\lca(a,d)$. Since $ad\notin E(G)$, we
  have $u=\lca(a,d)\succ_T \lca(a,\widetilde d)$ for some $\widetilde a\in L[A]$ or
  $u=\lca(a,d)\succ_T \lca(\widetilde a,d)$ for some $\widetilde d\in L[D]$.  Assume that
  $u=\lca(a,d)\succ_T \lca(a,\widetilde d)$ for some
  $\widetilde d\in L[D]$. In this case,
  $\lca(a,\widetilde d)$ must be located on the path from some child $u'$
  of $u$ to $a$.  Thus, $u \succ_T u'\succ_T a,\widetilde d$ and by
  construction, $u'\succ_T b$.  Hence,
  $\lca(b,\widetilde d) \prec_T u = \lca(b,d')$ and thus
  $bd'\notin E(G)$; a contradiction.  Analogously,
  $u=\lca(a,d)\succ_T \lca(\widetilde a, d)$ would imply that
  $\lca(\widetilde a, d)$ is located on the path from some child of $u$ to
  $d$, which would contradict $a'c\in E(G)$.  Therefore, the tree
  $(T,\sigma)$ does not explain $(G,\sigma)$; a contradiction.

  Thus, if for any distinct colors $r,s,t$, each 3-colored connected
  component $(G_{rst}^*,\sigma_{rst}^*)$ of $(G'_{rst},\sigma'_{rst})$ is a
  3-RBMG of Type \AX{(A)}, then $(G,\sigma)$ must be a cograph.  
\end{proof}

\NEW{As a result of the previous proof, the induced subgraph shown in Panel
  (A) of Fig.\ \ref{fig:shortProof} is a forbidden subgraph of general
  RBMGs.
}

\subsection{Hierarchically Colored Cographs}

\begin{definition}\label{def:co-RBMG}
  A graph that is both a cograph and an RBMG is a \emph{co-RBMG}.
\end{definition}

\begin{lemma}
  Let $(G,\sigma)$ be a co-RBMG that is a explained by $(T,\sigma)$. Then,
  for any $v\in V^0(T)$ and each pair of distinct children
  $w_1,w_2\in \child_T(v)$, the sets $\sigma(L(T(w_1)))$ and
  $\sigma(L(T(w_2)))$ do not overlap.
\label{lem:child-overlap}
\end{lemma}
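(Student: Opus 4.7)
The plan is to argue by contradiction: if some pair of children has overlapping (but neither-nested) color sets, I will exhibit an induced $P_4$ in $(G,\sigma)$, contradicting the cograph hypothesis.

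First, suppose that $\sigma(L(T(w_1)))$ and $\sigma(L(T(w_2)))$ overlap. By the definition of overlap in the preliminaries, there exist three distinct colors $r,s,t$ with
\[
  r\in\sigma(L(T(w_1)))\cap\sigma(L(T(w_2))),\quad
  s\in\sigma(L(T(w_1)))\setminus\sigma(L(T(w_2))),\quad
  t\in\sigma(L(T(w_2)))\setminus\sigma(L(T(w_1))).
\]
Applying Lemma~\ref{lem:rbm-pairs} inside the subtrees $T(w_1)$ and $T(w_2)$, I obtain reciprocal best matches $xy\in E(G)$ with $x\in L[r]\cap L(T(w_1))$ and $y\in L[s]\cap L(T(w_1))$, and $x'z\in E(G)$ with $x'\in L[r]\cap L(T(w_2))$ and $z\in L[t]\cap L(T(w_2))$. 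Because $w_1\neq w_2$ and both are children of $v$, we have
\[
  \lca_T(x,x')=\lca_T(x,z)=\lca_T(x',y)=\lca_T(y,z)=v,
\]
while $\lca_T(x,y)\preceq_T w_1\prec_T v$ and $\lca_T(x',z)\preceq_T w_2\prec_T v$.

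The next step is to determine the four remaining adjacencies among $\{x,y,z,x'\}$. The non-edge $xx'$ is immediate from proper coloring (Obs.~\ref{obs:proper}). For $xz$: since $\lca_T(x',z)\prec_T v=\lca_T(x,z)$, the vertex $x'$ is a strictly better $r$-match for $z$ than $x$, so $x\notin N_r^+(z)$, which gives $xz\notin E(G)$; the symmetric argument yields $x'y\notin E(G)$. For $yz$, I need to verify both directions of the best-match condition. Any $y''\in L[s]$ lies either in $L(T(w_1))$ (yielding $\lca_T(y'',z)=v$) or outside $L(T(w_1))\cup L(T(w_2))$ (yielding $\lca_T(y'',z)\succeq_T v$), because $s\notin\sigma(L(T(w_2)))$ forbids the third case; hence $\lca_T(y,z)\preceq_T\lca_T(y'',z)$ for every such $y''$. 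The symmetric argument, using $t\notin\sigma(L(T(w_1)))$, shows $z$ is a best $t$-match for $y$. Therefore $yz\in E(G)$.

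Combining these facts, $\langle x\,y\,z\,x'\rangle$ is an induced $P_4$ in $(G,\sigma)$, contradicting the assumption that $G$ is a cograph. The main obstacle is the careful verification that $yz$ remains an edge of $G$: this is exactly where the asymmetric color exclusions ($s$ missing from $T(w_2)$ and $t$ missing from $T(w_1)$) are used to rule out a strictly closer pair elsewhere in the tree, and it is what distinguishes the cograph setting from the general RBMG setting, where overlapping child color sets can occur (compare Def.~\ref{def:3-col-tree} and the Type~\AX{(II)}/\AX{(III)} trees). Hence the assumption of overlap is untenable, and the claim follows.
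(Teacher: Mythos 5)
Your proof is correct and follows essentially the same route as the paper's: assume an overlapping pair of children, extract the three witnessing colors, use Lemma~\ref{lem:rbm-pairs} (the paper invokes its corollary, Cor.~\ref{cor:funfact}) to get the edges $xy$ and $x'z$ inside the two subtrees, and then verify that $\langle xyzx'\rangle$ is an induced $P_4$ via exactly the same $\lca$ comparisons. Your explicit case analysis for the edge $yz$ is a slightly more detailed version of the paper's one-line argument, but the content is identical.
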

\begin{proof}
  Assume, for contradiction, that there exists some $v\in V^0(T)$ and
  distinct $w_1,w_2 \in \child(v)$ such that
  $r\in \sigma(L(T(w_1)))\cap \sigma(L(T(w_2)))$, $s$ is contained in
  $\sigma(L(T(w_1)))$ but not in $\sigma(L(T(w_2)))$, and $t$ is contained
  in $\sigma(L(T(w_2)))$ but not in $\sigma(L(T(w_1)))$ for three distinct
  colors $r,s,t$ in $(G,\sigma)$. Then, by Cor.\ \ref{cor:funfact}, there
  is a pair $x,y\in L(T(w_1))$ with $\sigma(x)=r$, $\sigma(y)=s$, and
  $xy\in E(G)$, as well as a pair $x',z\in L(T(w_2))$ with
  $\sigma(x')=r, \sigma(z)=t$ and $x'z\in E(G)$. Since
  $t\notin \sigma(L(T(w_1)))$ and $s\notin \sigma(L(T(w_2)))$, we have
  $\lca_T(y,z)=v\preceq_T \lca_T(y,z')$ for any $z'\in L[t]$ and
  $\lca_T(y,z)=v\preceq_T \lca_T(y',z)$ for any $y'\in L[s]$, hence
  $yz\in E(G)$. Moreover, as $\lca_T(z,x')\prec_T v=\lca_T(z,x)$ and
  $\lca_T(y,x)\prec_T v=\lca_T(y,x')$, the edges $xz$ and $x'y$ are not
  contained in $(G,\sigma)$. We therefore conclude that
  $\langle xyzx'\rangle$ is an induced $P_4$ in $(G,\sigma)$; a
  contradiction since $(G,\sigma)$ is a cograph.
  
\end{proof}

\begin{definition}
  Let $(H_1,\sigma_{H_1})$ and $(H_2,\sigma_{H_2})$ be two vertex-disjoint
  colored graphs.  Then
  $(H_1,\sigma_{H_1}) \join (H_2,\sigma_{H_2}) \coloneqq (H_1\join
  H_2,\sigma)$ and
  $(H_1,\sigma_{H_1}) \cupdot (H_2,\sigma_{H_2}) \coloneqq (H_1\cupdot
  H_2,\sigma)$ denotes their join and union, respectively, where
  $\sigma(x) = \sigma_{H_i}(x)$ for every $x\in V(H_i)$, $i\in\{1,2\}$.
	\label{def:coRBMG2}
\end{definition}

\begin{lemma}\label{lem:join} 
  Let $(G,\sigma)$ be a properly colored undirected graph such that either
  $(G,\sigma)=(H,\sigma_H) \join (H',\sigma_{H'})$ with
  $\sigma(V(H))\cap \sigma(V(H'))=\emptyset$ or
  $(G,\sigma)= (H,\sigma_H) \cupdot (H',\sigma_{H'})$ with
  $\sigma(V(H))\cap \sigma(V(H')) \in \{\sigma(V(H)),\sigma(V(H'))\}$,
  where $(H,\sigma_H)$ and $(H',\sigma_{H'})$ are disjoint RBMGs.  Then,
  $(G,\sigma)$ is an RBMG.

  Moreover, let $(H,\sigma_H)$ and $(H',\sigma_{H'})$ be explained by the
  trees $(T_H, \sigma_H)$ and $(T_{H'}, \sigma_{H'})$, respectively and let
  $(T,\sigma)$ be the tree obtained by joining $(T_H,\sigma_H)$ and
  $(T_{H'},\sigma_{H'})$ by a common root $\rho_T$. Then $(T,\sigma)$
  explains $(G,\sigma)$.
\end{lemma}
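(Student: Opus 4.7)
My plan is to prove the ``moreover'' statement directly: once we show that $(T,\sigma)$ explains $(G,\sigma)$, the first assertion follows immediately from the definition of RBMG. The core of the argument is a careful case analysis on the location of a pair of leaves relative to the decomposition $V(T)=V(T_H)\cupdot V(T_{H'})\cupdot\{\rho_T\}$, combined with the observation that for any $u,v\in V(T_H)$ (resp.\ $V(T_{H'})$) we have $\lca_T(u,v)=\lca_{T_H}(u,v)$ (resp.\ $\lca_{T_{H'}}(u,v)$), while $\lca_T(u,v)=\rho_T$ whenever $u$ and $v$ lie in different factors.

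First I would handle the join case $(G,\sigma)=(H,\sigma_H)\join(H',\sigma_{H'})$ with disjoint color sets. For $x,y\in V(H)$ the disjointness of $\sigma(V(H))$ and $\sigma(V(H'))$ ensures $L[\sigma(y)]\subseteq V(H)$, so every candidate best match of $x$ for color $\sigma(y)$ lies inside $T_H$; hence $x$ and $y$ are reciprocal best matches in $(T,\sigma)$ iff they are in $(T_H,\sigma_H)$, iff $xy\in E(H)=E(G)[V(H)]$. The symmetric statement holds for $V(H')$. For $x\in V(H), y\in V(H')$, $\lca_T(x,y)=\rho_T$ and every $y'\in L[\sigma(y)]$ lies in $V(H')$, so $\lca_T(x,y')=\rho_T=\lca_T(x,y)$; thus $y$ is automatically a best match of $x$, and symmetrically. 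Therefore $xy\in E(G(T,\sigma))$, matching the definition of $\join$.

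Next I would treat the disjoint union case $(G,\sigma)=(H,\sigma_H)\cupdot(H',\sigma_{H'})$; by symmetry we may assume $\sigma(V(H'))\subseteq\sigma(V(H))$. For $x,y\in V(H)$, any candidate $y'\in V(H')$ of color $\sigma(y)$ satisfies $\lca_T(x,y')=\rho_T\succeq_T\lca_{T_H}(x,y)=\lca_T(x,y)$, so $y'$ cannot beat a best match found inside $T_H$; hence reciprocal best matches within $V(H)$ are exactly those of $(T_H,\sigma_H)$. For $x,y\in V(H')$, the same argument applied with roles flipped works even though same-colored vertices of $y$ may exist in $V(H)$: any such $y''\in V(H)$ gives $\lca_T(x,y'')=\rho_T\succ_T\lca_{T_{H'}}(x,y)$, so $y$ still beats $y''$. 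Hence reciprocal best matches within $V(H')$ coincide with those of $(T_{H'},\sigma_{H'})$. Finally, for $x\in V(H), y\in V(H')$ with $\sigma(x)\neq\sigma(y)$, the containment $\sigma(V(H'))\subseteq\sigma(V(H))$ yields some $y''\in V(H)$ with $\sigma(y'')=\sigma(y)$; then $\lca_T(x,y'')\preceq_T\rho_T=\lca_T(x,y)$, and in fact $\lca_T(x,y'')\prec_T\rho_T$, so $y$ is not a best match of $x$, giving $xy\notin E(G(T,\sigma))$ in agreement with $E(G)=E(H)\cup E(H')$.

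I do not anticipate a real obstacle: the hardest bookkeeping is verifying the second sub-case of the disjoint union (same-colored vertices distributed across both factors not disturbing within-factor best matches), but this is resolved cleanly by the $\rho_T\succ_T$ inequality above. The color-constraint $\sigma(V(H'))\subseteq\sigma(V(H))$ is used in exactly one place, namely to forbid cross-component edges in the disjoint union, which matches the intuition behind condition \AX{(K3cc)}.
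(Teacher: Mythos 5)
Your proof is correct and follows essentially the same route as the paper's: build $T$ by joining the two trees under a new root, note that within-factor last common ancestors are unchanged while cross-factor ones equal $\rho_T$, and split into the join and disjoint-union cases. The only cosmetic difference is that for the absence of cross edges in the disjoint-union case the paper invokes Lemma \ref{lem:cl} (a color-complete subtree admits no reciprocal best matches to the outside), whereas you reprove that fact inline via the strictly better candidate $y''$ of color $\sigma(y)$ inside $V(H)$.
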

\begin{proof}
  Let $(T,\sigma)$ be the tree that is obtained by joining
  $(T_H, \sigma_H)$ and $(T_{H'}, \sigma_{H'})$ with roots $\rho_{H}$ and
  $\rho_{H'}$, respectively, under a common root $\rho_T$.  By construction,
  $\sigma(V(H)) = \sigma_H$ and $\sigma(V(H')) = \sigma_{H'}$.  We first show
  that $(T(\rho_H),\sigma_H)$ and $(T(\rho_{H'}),\sigma_{H'})$ explain
  $(H,\sigma_H)$ and $(H',\sigma_{H'})$, respectively. By construction, we
  have $\lca_T(x,y)=\lca_{T_H}(x,y)$ and $\lca_T(x,y)\prec_T \lca_T(x,z)$
  for any $x,y\in V(H)$ and $z\in V(H')$. It is therefore easy to see that
  $G(T(\rho_H),\sigma_H)=(H,\sigma_H)$. Analogous arguments show
  $G(T(\rho_{H'}),\sigma_{H'})=(H',\sigma_{H'})$.  Therefore, in order to
  show that $(T,\sigma)$ explains $(G,\sigma)$, it remains to show that all
  edges between vertices in $V(H)$ and $V(H')$ are identical in
  $(G,\sigma)$ and $G(T,\sigma)$.

  Suppose first $(G,\sigma)=(H,\sigma_H)\join (H',\sigma_{H'})$ with
  $\sigma(V(H))\cap \sigma(V(H'))=\emptyset$.  Thus we need to show
  $xy\in E(G(T,\sigma))$ for any $x\in L(T(\rho_H)), y\in L(T(\rho_{H'}))$.
  Since $\sigma(V(H))$ and $\sigma(V(H'))$ form a partition of
  $\sigma(V(G))$, we have $\lca_T(x,y)=\rho_T$ for any
  $x\in L[r], y\in L[s]$ with $r \in \sigma(V(H))$ and
  $s\in \sigma(V(H'))$.  Hence, $xy\in E(G(T,\sigma))$ for any
  $x\in L(T(\rho_H)), y\in L(T(\rho_{H'}))$ and therefore,
  $G(T,\sigma)=(G,\sigma)$.

  Now suppose that $(G,\sigma)= (H,\sigma_H) \cupdot (H',\sigma_{H'})$ with
  $\sigma(V(H))\cap \sigma(V(H')) \in \{\sigma(V(H)),\sigma(V(H'))\}$.
  Thus, we need to show $xy\notin E(G(T,\sigma))$ for any
  $x\in L(T(\rho_H)), y\in L(T(\rho_{H'}))$.  W.l.o.g.\ assume
  $\sigma(V(H'))\subseteq \sigma(V(H))$.  Hence, $(T(\rho_H),\sigma_H)$ is
  color-complete.  We can therefore apply Lemma \ref{lem:cl} to conclude
  that $xy\notin E(G(T,\sigma))$ for any $x\in V(H), y\in V(H')$. Hence,
  $G(T,\sigma)=(G,\sigma)$.
  
\end{proof}

\begin{definition}
  An undirected colored graph $(G,\sigma)$ is a \emph{hierarchically
    colored cograph (\hc-cograph)} if 
 \begin{itemize}
  \item[\AX{(K1)}] $(G,\sigma)=(K_1,\sigma)$, i.e., a colored vertex, or 
  \item[\AX{(K2)}] $(G,\sigma)= (H,\sigma_H) \join (H',\sigma_{H'})$ and
    $\sigma(V(H))\cap \sigma(V(H'))=\emptyset$, or 
  \item[\AX{(K3)}] $(G,\sigma)= (H,\sigma_H) \cupdot (H',\sigma_{H'})$ and
    $\sigma(V(H))\cap \sigma(V(H')) \in \{\sigma(V(H)),\sigma(V(H'))\}$,
\end{itemize}
where both $(H,\sigma_H)$ and $(H',\sigma_{H'})$ are \hc-cographs. 
For the color-constraints (cc) in \AX{(K2)} and \AX{(K3)}, we simply write
\AX{(K2cc)} and \AX{(K3cc)}, respectively.
\label{def:hc-cograph}
\end{definition}
Omitting the color-constraints reduces Def.\ \ref{def:hc-cograph} to Def.\
\ref{def:cograph}. Therefore we have
\begin{fact}\label{fact:hc-cographx}
  If $(G,\sigma)$ is an \hc-cograph, then $G$ is cograph.
\end{fact}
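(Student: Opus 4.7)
The plan is to proceed by a straightforward induction on $|V(G)|$, noting that Def.~\ref{def:hc-cograph} is obtained from Def.~\ref{def:cograph} precisely by adding color-constraints to rules (2) and (3); in particular, every hc-cograph construction, read while ignoring the coloring, is a valid cograph construction.

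For the base case, if $(G,\sigma)=(K_1,\sigma)$ via \AX{(K1)}, then $G=K_1$, which is a cograph by Def.~\ref{def:cograph}(1). For the induction step, let $(G,\sigma)$ be an hc-cograph with $|V(G)|\ge 2$, so it must arise via \AX{(K2)} or \AX{(K3)} from two hc-cographs $(H,\sigma_H)$ and $(H',\sigma_{H'})$ on strictly fewer vertices. By the induction hypothesis, $H$ and $H'$ are cographs. In case \AX{(K2)} we have $G=H\join H'$, which is a cograph by Def.~\ref{def:cograph}(2); in case \AX{(K3)} we have $G=H\cupdot H'$, a cograph by Def.~\ref{def:cograph}(3). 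Either way $G$ is a cograph, completing the induction.

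There is no real obstacle here: the statement is essentially a tautology once one observes that dropping the color conditions \AX{(K2cc)} and \AX{(K3cc)} turns Def.~\ref{def:hc-cograph} into Def.~\ref{def:cograph} verbatim. The only subtlety worth mentioning in the write-up is that we are forgetting the coloring $\sigma$ when we assert that $G$ is a cograph, so the induction is carried out on the underlying uncolored graphs $H$ and $H'$ rather than on the colored pairs.
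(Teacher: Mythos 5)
Your proof is correct and matches the paper's reasoning: the paper justifies this observation with exactly the remark that omitting the color constraints \AX{(K2cc)} and \AX{(K3cc)} reduces Def.~\ref{def:hc-cograph} to Def.~\ref{def:cograph}, which your induction merely makes explicit. No issues.
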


The recursive construction of an \hc-cograph $(G,\sigma)$ according to
Def.\ \ref{def:hc-cograph} immediately produces a binary \hc-cotree
$T^G_{\hc}$ corresponding to $(G,\sigma)$. The construction is essentially
the same as for the cotree of a cograph (cf.\ \citet[Section
3]{Corneil:81}): Each of its inner vertices is labeled by $1$ for a $\join$
operation and $0$ for a disjoint union $\cupdot$, depending on whether
\AX{(K2)} or \AX{(K3)} \NEW{is} used \NEW{in the} construction steps. We write
$t:V^0(T^G_{\hc})\to\{0,1\}$ for the labeling of the inner vertices. By
construction, $(T^G_{\hc},t)$ is a not necessarily discriminating cotree
for $G$, see \ref{fig:hc-cograph}. The relationships of \hc-cographs and
their corresponding \hc-cotrees are described at length in the main text.

\begin{lemma}\label{lem:hc-properties}
  Every \hc-cograph $(G,\sigma)$ is a properly colored cograph.
\end{lemma}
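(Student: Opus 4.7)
The plan is to argue by structural induction on the recursive construction given in Definition \ref{def:hc-cograph}. Observation \ref{fact:hc-cographx} already yields that $G$ is a cograph, so the only content of the lemma is that $\sigma$ is a proper vertex coloring, i.e., that $\sigma(x)\neq\sigma(y)$ for every edge $xy\in E(G)$. Hence I will focus on the coloring.

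The base case \AX{(K1)} is trivial: a single-vertex graph has no edges, so there is nothing to check. For the inductive step, I assume the statement for the two smaller \hc-cographs $(H,\sigma_H)$ and $(H',\sigma_{H'})$ whose combination yields $(G,\sigma)$, i.e., both of them are properly colored.

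In case \AX{(K2)} we have $(G,\sigma)=(H,\sigma_H)\join(H',\sigma_{H'})$ with $\sigma(V(H))\cap\sigma(V(H'))=\emptyset$ by \AX{(K2cc)}. Any edge of $G$ is either an edge of $H$, an edge of $H'$, or a ``join edge'' $xy$ with $x\in V(H)$ and $y\in V(H')$. The first two types are properly colored by the induction hypothesis; for the third type, $\sigma(x)\in\sigma(V(H))$ and $\sigma(y)\in\sigma(V(H'))$, so the color-disjointness enforced by \AX{(K2cc)} gives $\sigma(x)\neq\sigma(y)$. In case \AX{(K3)} we have $(G,\sigma)=(H,\sigma_H)\cupdot(H',\sigma_{H'})$, and by definition of the disjoint union there are no edges between $V(H)$ and $V(H')$. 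Every edge of $G$ thus lies entirely inside $H$ or entirely inside $H'$ and is properly colored by induction; in this case the color-constraint \AX{(K3cc)} is not even needed.

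There is no genuine obstacle here: the statement follows immediately from the way colors are controlled in the recursive construction, and the argument is just a clean case analysis paralleling the definition. The only thing to be careful about is to record that in case \AX{(K2)} the hypothesis $\sigma(V(H))\cap\sigma(V(H'))=\emptyset$ is exactly what handles the new cross edges, while in case \AX{(K3)} the absence of cross edges is what makes proper coloring automatic.
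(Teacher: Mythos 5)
Your proof is correct and follows essentially the same route as the paper: the paper likewise delegates the cograph property to Observation~\ref{fact:hc-cographx} and then observes that every edge of $(G,\sigma)$ is created by some join step in the recursive construction, so \AX{(K2cc)} forces its endpoints to have distinct colors. Your structural induction is just a slightly more formal packaging of that same observation.
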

\begin{proof}
  Let $(G,\sigma)$ be an \hc-cograph.  In order to see that $(G,\sigma)$ is
  properly colored, observe that any edge $xy$ in $(G,\sigma)$ must be the
  result of some (possible preceding) join
  $(H,\sigma_H)\join (H',\sigma_{H'})$ during the recursive construction of
  $(G,\sigma)$ such that $x\in V(H)$ and $y\in V(H')$.  Condition (K2)
  implies that $\sigma(V(H))\cap \sigma(V(H'))=\emptyset$ and hence,
  $\sigma(x)\neq \sigma(y)$ for every edge $xy$ in $(G,\sigma)$.  
\end{proof}

\begin{theorem}
  A vertex labeled graph $(G,\sigma)$ is a co-RBMG if and only if it is an
  \hc-cograph.
\label{thm:hc-cograph}
\end{theorem}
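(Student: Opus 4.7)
My plan is to prove both implications by induction, using the tools already developed in the paper.

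\textbf{Forward direction ($\hc$-cograph $\Rightarrow$ co-RBMG).} The cograph property is immediate from Observation \ref{fact:hc-cographx}, so I only need to verify the RBMG property. I would induct on the recursive construction of $(G,\sigma)$ as an $\hc$-cograph. The base case \AX{(K1)} is trivial, since $(K_1,\sigma)$ is explained by the single-leaf tree. For the inductive step, assume that the $\hc$-cograph factors $(H,\sigma_H)$ and $(H',\sigma_{H'})$ are already RBMGs. If $(G,\sigma)$ arises via \AX{(K2)} then \AX{(K2cc)} holds, and if it arises via \AX{(K3)} then \AX{(K3cc)} holds; in either case Lemma \ref{lem:join} produces a tree explaining $(G,\sigma)$, namely the tree obtained by joining the trees for $H$ and $H'$ at a common root.

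\textbf{Backward direction (co-RBMG $\Rightarrow$ $\hc$-cograph).} I would induct on $|V(G)|$. The base case $|V(G)|=1$ is \AX{(K1)}. For the inductive step, split on connectivity.

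\emph{Case 1: $G$ connected.} Because $G$ is a connected cograph on $>1$ vertices, it admits a nontrivial join decomposition $G=H\join H'$. Since $(G,\sigma)$ is an RBMG it is properly colored by Observation \ref{obs:proper}, and since every vertex of $H$ is adjacent to every vertex of $H'$, the color sets $\sigma(V(H))$ and $\sigma(V(H'))$ must be disjoint, verifying \AX{(K2cc)}. The disjointness also makes restriction clean: the reciprocal best match relation among vertices of $H$ involves only leaves of colors in $\sigma(V(H))$, all of which lie in $V(H)$, so for any tree $(T,\sigma)$ explaining $G$ the restriction $(T_{|V(H)},\sigma_{|V(H)})$ explains $H$ (and analogously for $H'$). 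Both $H$ and $H'$ are therefore co-RBMGs and the inductive hypothesis applies.

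\emph{Case 2: $G$ disconnected.} Let $C_1,\dots,C_k$ be the connected components. By Theorems \ref{thm:conn_comp} and \ref{thm:connected} each $C_i$ is a co-RBMG, and some component, say $C_1$, satisfies $\sigma(V(C_1))=\sigma(V(G))$. I would then peel components off one at a time in any order that keeps $C_1$ until the end: for $j=k,k-1,\dots,2$ set $G_j \coloneqq C_1\cupdot C_2\cupdot\cdots\cupdot C_j$ and write $G_j=C_j\cupdot G_{j-1}$. Since $\sigma(V(C_j))\subseteq\sigma(V(G))=\sigma(V(C_1))\subseteq\sigma(V(G_{j-1}))$, condition \AX{(K3cc)} is verified at each step. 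Crucially, each $G_{j-1}$ contains $C_1$, so by Theorem \ref{thm:connected} it is itself an RBMG (every component of $G_{j-1}$ is an RBMG, and $C_1$ carries all colors of $G_{j-1}$); being also an induced subgraph of a cograph it is a co-RBMG. Inductive application to the smaller co-RBMGs $C_j$ and $G_{j-1}$, and finally to $G_1=C_1$ via Case 1, stitches together an $\hc$-cograph construction of $G$.

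\textbf{Main obstacle.} The nontrivial part is Case 2: the naive split $G=C_1\cupdot(C_2\cupdot\cdots\cupdot C_k)$ fails because the right-hand disjoint union need not be an RBMG (no $C_j$ with $j\geq 2$ must contain all colors of $\bigcup_{i\geq 2}\sigma(V(C_i))$). The peeling order chosen above, which keeps the color-complete component $C_1$ inside every intermediate $G_{j-1}$, is precisely what is needed for Theorem \ref{thm:connected} to certify that every intermediate graph is an RBMG; this is essentially the content of the referenced Lemma \ref{lem:minimal}, where the peeled piece plays the role of $\mathscr{G}_{\ell}$ in Figure \ref{fig:hc-cograph}.
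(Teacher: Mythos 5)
Your proof is correct, and your first direction (\hc-cograph $\Rightarrow$ co-RBMG, by induction on the construction and Lemma \ref{lem:join}) coincides with the paper's. For the converse you take a genuinely different route. The paper fixes a tree $(T,\sigma)$ explaining $(G,\sigma)$ and reads the decomposition off the root: Lemma \ref{lem:child-overlap} shows that the color sets $\sigma(L(T(v)))$, $v\in\child(\rho_T)$, never overlap, so grouping the children into classes with pairwise disjoint color sets yields either a join (non-trivial partition, which gives \AX{(K2cc)} for free) or, in the trivial case, a disjoint union obtained by splitting off the child with inclusion-maximal, hence color-complete, color set, with Lemma \ref{lem:cl} certifying that no edges cross the cut. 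You instead decompose $G$ purely as a cograph and verify the color constraints afterwards: for the join, proper coloring of the RBMG forces $\sigma(V(H))\cap\sigma(V(H'))=\emptyset$, and that disjointness is exactly what makes the restriction $(T_{|V(H)},\sigma_{|V(H)})$ explain $H$, since the best-match quantifiers for a vertex of $H$ range only over leaves carrying colors of $H$; for the disjoint union you replace the tree argument by Theorem \ref{thm:connected} together with a peeling order that keeps the color-complete component inside every residual graph, which is the same mechanism as Lemma \ref{lem:minimal}. Your version avoids Lemma \ref{lem:child-overlap} entirely (you never reason about the children of the root), at the price of having to justify the tree-restriction step in the join case explicitly --- a step the paper also relies on but dispatches with ``similar arguments''. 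Both arguments tacitly treat monochromatic factors (one-colored edgeless graphs) as RBMGs, consistent with the paper's convention for $1$-RBMGs.
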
 
\begin{proof}
  Suppose that $(G,\sigma)$ with vertex set $V$ and edge set $E$ is a
  co-RBMG. We show by induction on $|V|$ that $(G,\sigma)$ is an
  \hc-cograph. This is trivially true in the base case $|V|=1$.

  For the induction step assume that any co-RBMG with less than $N$
  vertices is at the same time an \hc-cograph and consider a co-RBMG with
  $|V|=N$.  Since $(G,\sigma)$ is an $n$-RBMG, there exists a tree
  $(T,\sigma)$ with root $\rho_T$ that explains $(G,\sigma)$.  By Lemma
  \ref{lem:child-overlap}, none of the color sets $\sigma(L(T(v)))$ and
  $\sigma(L(T(w)))$ overlap for any two children $v,w\in \child(\rho_T)$.
  Moreover, Lemma \ref{lem:child-overlap} allows us to define a partition
  $\Pi$ of $\child(\rho_T)$ into classes $P_1,\dots,P_k$ such that each
  pair of vertices $v\in P_i$ and $w\in P_j$, $i\neq j$ satisfies
  $\sigma(L(T(v))) \cap \sigma(L(T(w))) = \emptyset$. Note that each $P_i$
  may contain distinct elements $v,w$ such that
  $\sigma(L(T(v))) \cap \sigma(L(T(w))) \in \{\emptyset, \sigma(L(T(v))),
  \sigma(L(T(w))) \}$. 

  First assume that the partition $\Pi$ of $\child(\rho_T)$ is trivial,
  i.e., it consists of a single class $P_1$. Since none of the sets
  $\sigma(L(T(v)))$ and $\sigma(L(T(w)))$ overlap for any $v,w\in P_1$,
  there is an element $w\in P_1$ such that $\sigma(L(T(w)))$ is
  inclusion-maximal, i.e., $\sigma(L(T(v))) \subseteq \sigma(L(T(w)))$ for
  all $v\in P_1=\child(\rho_T)$.  Let
  $L_{\neg w}= \bigcup_{v\in P_1\setminus w} L(T(v))$ and $L_w = L(T(w))$.
  Since $\rho_T$ is always color-complete, $w$ must be color-complete,
  i.e., $\sigma(L_w) = \sigma(V)$.  Hence, we have
  $\sigma(L_{\neg w}) \subseteq \sigma(L_w)$.

  We continue by showing that $(G[L_{\neg w}], \sigma_{|L_{\neg w}})$ and
  $(G[L_w],\sigma_{|L_{w}})$ are RBMGs that are explained by
  $(T_{|L_{\neg w}}, \sigma_{|L_{\neg w}})$ and
  $(T_{|L_{w}}, \sigma_{|L_{w}})$, respectively.  By construction, we have
  $\lca_T(x,y)=\lca_{T_{|L_{\neg w}}}(x,y)$ and
  $\lca_T(x,y)\preceq_T \lca_T(x,z)$ for any $x,y\in L_{\neg w}$ and
  $z\in L_w$.  It is therefore easy to see that
  $G(T_{|L_{\neg w}},\sigma_{|L_{\neg w}})=(G[L_{\neg w}],\sigma_{|L_{\neg
      w}})$. Analogous arguments show that
  $G(T_{|L_{w}},\sigma_{|L_{w}})=(G[L_{w}],\sigma_{|L_{w}})$.  Hence,
  $(G[L_{\neg w}], \sigma_{|L_{\neg w}})$ and $(G[L_w],\sigma_{|L_{w}})$
  are RBMGs.  Since any induced subgraph of a cograph is again a cograph,
  we can thus conclude that $(G[L_{\neg w}], \sigma_{|L_{\neg w}})$ and
  $(G[L_w],\sigma_{|L_{w}})$ are co-RBMGs.  Hence, by induction
  hypothesis, $(G[L_{\neg w}], \sigma_{|L_{\neg w}})$ and
  $(G[L_w],\sigma_{|L_{w}})$ are \hc-cographs.  Moreover, since $w$ is
  color-complete, we can apply Lemma \ref{lem:cl} to conclude that
  $(G,\sigma)= (G[L_{\neg w}], \sigma_{|L_{\neg w}}) \cupdot
  (G[L_w],\sigma_{|L_{w}})$ is the disjoint union of two \hc-cographs and
  in addition satisfies $\sigma(L_{\neg w}) \subseteq \sigma(L_w)$. Hence,
  $(G,\sigma)$ satisfies Property \AX{(K3)} and is therefore an
  \hc-cograph.

  Now assume that $\Pi$ is non-trivial, i.e., there are at least two
  classes $P_1,\dots,P_k$. Then, by construction, we have
  $\sigma(L(T(v))) \cap \sigma(L(T(w))) = \emptyset$ for all $v\in P_i$ and
  $w\in P_j$, $i\neq j$.  Let $L_i = \bigcup_{v\in P_i} L(T(v))$.  By
  construction, $\sigma(L_i)\cap \sigma(L_j) = \emptyset$ for all distinct
  $i,j$.  Hence, $\sigma(L_1), \dots, \sigma(L_k)$ form a partition of
  $\sigma(V)$.  Thus, we have $\lca_T(x,y)=\rho_T$ for any
  $x\in L[r], y\in L[s]$ with $r \in \sigma(L_i)$ and $s\in \sigma(L_j)$
  for distinct $i,j\in \{1,\dots,k\}$, which clearly implies $xy\in
  E(G)$. Hence, $G=G[L_1]\join G[L_2] \join \dots \join G[L_k]$.  Thus, by
  setting $H = \join_{i=1}^{k-1} G[L_i]$ and $H'=G[L_k]$, we obtain
  $(G,\sigma) = (H,\sigma_{|V(H)})\join (H',\sigma_{|V(H')})$.  We proceed
  to show that $(H,\sigma_{|V(H)})$ and $(H',\sigma_{|V(H')})$ are
  \hc-cographs.  Since any induced subgraph of a cograph is again a
  cograph, we can conclude that $H$ and $H'$ are cographs. Thus it remains
  to show that $(H,\sigma_{|V(H)})$ and $(H',\sigma_{|V(H')})$ are RBMGs.
  By similar arguments as in the case for one class $P_1$, one shows that
  $(T_{|V(H)}, \sigma_{|V(H)})$ and $(T_{|V(H')}, \sigma_{|V(H')})$ explain
  $(H,\sigma_{|V(H)})$ and $(H',\sigma_{|V(H')})$, respectively.  In
  summary, $(H,\sigma_{|V(H)})$ and $(H',\sigma_{|V(H')})$ are co-RBMGs and
  thus, by induction hypothesis, \hc-cographs. Since
  $\sigma(V(H)) = \bigcup_{i=1}^{k-1} \sigma(L_i)$,
  $\sigma(V(H')) = \sigma(L_k)$, and $\sigma(L_i), \sigma(L_j)$ are
  disjoint for distinct $i,j\in \{1,\dots,k\}$, we can conclude that
  $\sigma(V(H)) \cap \sigma(V(H')) = \emptyset$.  Hence,
  $(G,\sigma) = (H,\sigma_{|V(H)}) \join (H',\sigma_{|V(H')})$ satisfies
  Property (K2). Thus $(G,\sigma)$ is an \hc-cograph.

  Now suppose that $(G=(V,E),\sigma)$ is an \hc-cograph.  Lemma
  \ref{lem:hc-properties} implies that $G$ is a cograph.  In order to show
  that $(G,\sigma)$ is an RBMG, we proceed again by induction on $|V|$. The
  base case $|V|=1$ is trivially satisfied. For the induction hypothesis,
  assume that any \hc-cograph $(G,\sigma)$ with $|V|<N$ is an RBMG. Now let
  $(G,\sigma)$ with $|V|=N>1$ be an \hc-cograph. By definition of
  \hc-cographs and since $|V|>1$, there exist disjoint \hc-cographs
  $(H,\sigma_H)$ and $(H',\sigma_{H'})$ such that either (i)
  $(G,\sigma)=(H,\sigma_{H}) \join (H',\sigma_{H'})$ and
  $\sigma(V(H)) \cap \sigma(V(H'))=\emptyset$ or (ii)
  $(G,\sigma)=(H,\sigma_{H}) \cupdot (H',\sigma_{H'})$ and
  $\sigma(V(H)) \cap \sigma(V(H'))\in \{\sigma(V(H)), \sigma(V(H'))\}$.  By
  induction hypothesis, $(H,\sigma(V(H)))$ and $(H',\sigma(V(H')))$ are
  RBMGs. Hence, we can apply Lemma \ref{lem:join} to conclude that
  $(G,\sigma)$ an RBMG.  
\end{proof}

\begin{theorem}
  Every co-RBMG $(G,\sigma)$ is explained by its cotree
  $(T^G_{\hc},\sigma)$.
\label{thm:explainThc}
\end{theorem}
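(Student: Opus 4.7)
The plan is to proceed by induction on $|V(G)|$, exploiting the recursive definition of \hc-cographs together with Lemma \ref{lem:join}. The key observation is that the \hc-cotree $(T^G_{\hc},\sigma)$ is, by construction, built up by repeatedly joining two subtrees at a common root, where at each step the two operands are themselves \hc-cographs related by either \AX{(K2)} or \AX{(K3)}. The inner labels $t\in\{0,1\}$ record which rule was applied but play no role for whether the tree \emph{explains} $(G,\sigma)$ in the sense of Def.\ \ref{def:explains}; only the underlying leaf-colored tree matters.

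For the base case $|V(G)|=1$ the cotree is a single leaf, and any graph $K_1$ with a coloring is (vacuously) explained by it. For the inductive step, assume the statement holds for every co-RBMG with fewer vertices. Since $(G,\sigma)$ is an \hc-cograph with $|V(G)|>1$, Def.\ \ref{def:hc-cograph} yields \hc-cographs $(H,\sigma_H)$ and $(H',\sigma_{H'})$ such that either
\begin{itemize}
\item[\AX{(K2)}] $(G,\sigma)=(H,\sigma_H)\join (H',\sigma_{H'})$ with $\sigma(V(H))\cap\sigma(V(H'))=\emptyset$, or
\item[\AX{(K3)}] $(G,\sigma)=(H,\sigma_H)\cupdot (H',\sigma_{H'})$ with $\sigma(V(H))\cap\sigma(V(H'))\in\{\sigma(V(H)),\sigma(V(H'))\}$.
\end{itemize}
By Theorem \ref{thm:hc-cograph}, both $(H,\sigma_H)$ and $(H',\sigma_{H'})$ are co-RBMGs. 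The induction hypothesis thus yields that $(T^H_{\hc},\sigma_H)$ explains $(H,\sigma_H)$ and $(T^{H'}_{\hc},\sigma_{H'})$ explains $(H',\sigma_{H'})$. By the construction of the \hc-cotree, $(T^G_{\hc},\sigma)$ is obtained precisely by joining $(T^H_{\hc},\sigma_H)$ and $(T^{H'}_{\hc},\sigma_{H'})$ under a common new root $\rho$; the label $t(\rho)\in\{0,1\}$ merely records which of \AX{(K2)}/\AX{(K3)} was used.

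Now Lemma \ref{lem:join} applies directly in both cases: whenever $(G,\sigma)$ decomposes as a join (resp.\ disjoint union) of two disjoint RBMGs satisfying the corresponding color condition in \AX{(K2)} (resp.\ \AX{(K3)}), the tree obtained by joining explaining trees at a common root explains $(G,\sigma)$. Applied with $(T_H,\sigma_H)=(T^H_{\hc},\sigma_H)$ and $(T_{H'},\sigma_{H'})=(T^{H'}_{\hc},\sigma_{H'})$, this yields $G(T^G_{\hc},\sigma)=(G,\sigma)$, completing the induction.

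The only subtle point, and thus the one place requiring care, is to verify that the recursive construction described in Def.\ \ref{def:hc-cograph} indeed produces \emph{exactly} the tree to which Lemma \ref{lem:join} applies; in particular, the color-set conditions \AX{(K2cc)} and \AX{(K3cc)} coincide with the hypotheses of Lemma \ref{lem:join}, so no additional verification beyond invoking the lemma is needed. No genuine obstacle arises: once the correspondence between the \hc-cograph decomposition and the cotree construction is made explicit, the induction closes immediately.
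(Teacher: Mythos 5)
Your proof is correct and follows essentially the same route as the paper's: induction on $|V(G)|$, using Theorem \ref{thm:hc-cograph} to pass from the co-RBMG hypothesis to the \hc-cograph decomposition $(H,\sigma_H)\star(H',\sigma_{H'})$, applying the induction hypothesis to the factors, and closing with Lemma \ref{lem:join}. No gaps.
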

\begin{proof}
  We show by induction on $|V|$ that $(G,\sigma)$ is explained by
  $(T^G_{\hc},\sigma)$. This is trivially true for the base case $|V|=1$.
  Assume that any co-RBMG with less than $N$ vertices is explained by its
  cotree $(T^G_{\hc},\sigma)$.  Now let $(G=(V,E),\sigma)$ be a co-RBMG
  with $|V|= N$.  By Thm.\ \ref{thm:hc-cograph}, $(G,\sigma)$ is an
  \hc-cograph.  Thus, $(G,\sigma)= (H,\sigma_H) \star (H',\sigma_{H'})$,
  $\star\in \{\join, \cupdot\}$ such that \AX{(K2)} and \AX{(K3)}, resp.,
  are satisfied. Thm.\ \ref{thm:hc-cograph} implies that $(H,\sigma_H)$ and
  $(H',\sigma_{H'})$ are co-RBMGs.  By induction hypothesis, the co-RBMGs
  $(H,\sigma_H)$ and $(H',\sigma_{H'})$ are explained by their \hc-cotrees
  $(T^H_{\hc},\sigma_H)$ and $(T^{H'}_{\hc},\sigma_{H'})$, respectively.
  By construction, $(T^G_{\hc},\sigma)$ is the tree that is obtained by
  joining $(T^H_{\hc},\sigma_H)$ and $(T^{H'}_{\hc},\sigma_{H'})$ under a
  common root.  Lemma \ref{lem:join} now implies that the
  $(T^G_{\hc},\sigma)$ explains $(G,\sigma)$.
  
\end{proof}

\subsection{Recognition of \hc-cographs}
\label{sec:rec-hc-cogr}

Although (discriminating) cotrees can be constructed and cographs can be
recognized in linear time \cite{HABIB2005183,BCHP:08,CPS:85}, these results
cannot be applied directly to the construction of \hc-cotrees and the
recognition of \hc-cographs. The key problem is that whenever $(G,\sigma)$
comprises $k>2$ connected components, there are $2^{k-1}-1$ bipartitions
$G=G_1\cupdot G_2$. For each of them \AX{(K3cc)} needs to be checked, and
if it is satisfied, both $G_1$ and $G_2$ need to be tested for being
\hc-cotrees. In general, this incurs exponential effort. The following
results show, however, that it suffices to consider a single, carefully
chosen bipartition for each disconnected graph $(G,\sigma)$.

\begin{lemma}
  Every connected component of an \hc-cograph is an \hc-cograph.
\label{lem:connComp-hc}
\end{lemma}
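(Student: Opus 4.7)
The plan is to prove the statement by induction on $|V(G)|$. For the base case $|V(G)|=1$, the graph $G=K_1$ has itself as its unique connected component, and $K_1$ is an \hc-cograph by rule \AX{(K1)}.

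For the inductive step, assume the claim holds for every \hc-cograph on fewer than $N$ vertices, and let $(G,\sigma)$ be an \hc-cograph with $|V(G)|=N\ge 2$. If $G$ is connected, then $G$ is its own unique connected component, and the claim holds trivially. Otherwise, $G$ is disconnected, and I will argue that $(G,\sigma)$ must have been obtained via \AX{(K3)}. Indeed, if the last construction step were \AX{(K2)}, i.e., $(G,\sigma)=(H,\sigma_H)\join(H',\sigma_{H'})$ with $H,H'$ non-empty, then $G$ would be connected since the join of two non-empty graphs contains all edges between $V(H)$ and $V(H')$. Hence $(G,\sigma)=(H,\sigma_H)\cupdot(H',\sigma_{H'})$ where $(H,\sigma_H)$ and $(H',\sigma_{H'})$ are \hc-cographs satisfying \AX{(K3cc)}.

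Now the key observation is that the connected components of $G = H \cupdot H'$ are precisely the union of the connected components of $H$ and the connected components of $H'$, because the disjoint union introduces no edges between $V(H)$ and $V(H')$. Since both $H$ and $H'$ have strictly fewer vertices than $G$ (as $V(H),V(H')\neq\emptyset$ and they are disjoint), the induction hypothesis applies and guarantees that every connected component of $H$ and every connected component of $H'$ is an \hc-cograph. Therefore, every connected component of $G$ is an \hc-cograph, which completes the induction.

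The only subtlety is observing that the nontrivial join operation always produces a connected graph, which forces any disconnected \hc-cograph to decompose via \AX{(K3)}. Once this is in place, the statement reduces to the trivial fact that disjoint union preserves connected components, and the induction goes through without further effort.
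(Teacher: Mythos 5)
Your proof is correct, but it takes a genuinely different route from the paper's. You argue structurally by induction on $|V(G)|$ directly from Definition~\ref{def:hc-cograph}: a disconnected \hc-cograph on at least two vertices cannot arise from \AX{(K1)} or from \AX{(K2)} (a join of two non-empty graphs is connected), so it must decompose as $(H,\sigma_H)\cupdot(H',\sigma_{H'})$ via \AX{(K3)}; since disjoint union creates no new edges, the components of $G$ are exactly the components of $H$ together with those of $H'$, and the induction hypothesis applied to the strictly smaller \hc-cographs $H$ and $H'$ finishes the argument. The paper instead deduces the lemma from the heavy machinery already in place: by Theorem~\ref{thm:hc-cograph} an \hc-cograph is a co-RBMG, each connected component of a cograph is a cograph, each connected component of an RBMG is an RBMG by Theorem~\ref{thm:connected}, hence each component is a co-RBMG and thus an \hc-cograph by Theorem~\ref{thm:hc-cograph} again. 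The paper's route is shorter given that this machinery is already established; your argument is elementary and self-contained, depends only on the definition of \hc-cographs, and could therefore be placed earlier in the development, independent of the RBMG characterization. Both are valid; the only cosmetic point in yours is the phrase ``the last construction step,'' which is better phrased as ``one of \AX{(K1)}--\AX{(K3)} must hold, and only \AX{(K3)} is possible,'' since the definition is a disjunction rather than a canonical construction history.
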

\begin{proof}
  Let $(G,\sigma)$ be an \hc-cograph. By Thm.\ \ref{thm:hc-cograph},
  $(G,\sigma)$ is a co-RBMG.  Since each connected component of a cograph
  is again a cograph, each connected component of $(G,\sigma)$ must be a
  cograph.  In addition, Thm.\ \ref{thm:connected} implies that each
  connected component of $(G,\sigma)$ is \NEW{an} RBMG.  The latter two arguments
  imply that each connected component of $(G,\sigma)$ is a co-RBMG.
  Hence, Thm.\ \ref{thm:hc-cograph} implies that each connected component
  of $(G,\sigma)$ is an \hc-cograph.  
\end{proof}

\begin{lemma}
  Let $(G,\sigma)$ be a disconnected \hc-cograph with connected components
  $G_{1}=(V_1,E_1)$, \dots, $G_k=(V_k,E_k)$ and let $G_{\ell}$,
  $1\leq \ell \leq k$ be a connected component whose color set is minimal
  w.r.t.\ inclusion, i.e., there is no $i\in\{1,\dots,k\}$ with
  $i\neq \ell$ such that $\sigma(V_{i}) \subsetneq \sigma(V_{\ell})$.
  Denote by $G-G_{\ell}$ the graph obtained from $(G,\sigma)$ by
  deleting the connected component $G_{\ell}$. Then
  \begin{equation*}
    (G,\sigma) = (G_{\ell},\sigma_{|V_{\ell}})
    \cupdot (G-G_{\ell},\sigma_{|V\setminus V_{\ell}})
  \end{equation*}
  satisfies Property \AX{(K3)}.
  \label{lem:minimal}
\end{lemma}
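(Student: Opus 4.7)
The plan is to verify the color-constraint of \AX{(K3cc)} for the decomposition
$(G,\sigma) = (G_{\ell},\sigma_{|V_{\ell}}) \cupdot (G-G_{\ell},\sigma_{|V\setminus V_{\ell}})$, i.e., to show that
$\sigma(V_\ell) \cap \sigma(V \setminus V_\ell)$ equals either $\sigma(V_\ell)$ or $\sigma(V \setminus V_\ell)$; equivalently, $\sigma(V_\ell)\subseteq \sigma(V\setminus V_\ell)$ or $\sigma(V\setminus V_\ell)\subseteq \sigma(V_\ell)$. I would do this by induction on $|V(G)|$.

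For the base case, the smallest disconnected \hc-cograph has two vertices, and \AX{(K3cc)} on the top-level decomposition forces their colors to coincide, so the claim is immediate. For the inductive step, since $(G,\sigma)$ is disconnected and an \hc-cograph, its defining construction must end with \AX{(K3)}: there exist \hc-cographs $(H,\sigma_H)$, $(H',\sigma_{H'})$ with $(G,\sigma)=(H,\sigma_H)\cupdot(H',\sigma_{H'})$ and, w.l.o.g., $\sigma(V(H))\subseteq\sigma(V(H'))$. Because $V(H)$ and $V(H')$ are vertex-disjoint and edge-disjoint, every connected component of $G$ is entirely in $V(H)$ or entirely in $V(H')$; in particular, the components of $G$ lying in $V(H)$ (resp.\ $V(H')$) are precisely the components of $H$ (resp.\ $H'$). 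I would then split into cases according to which side contains $G_\ell$.

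If $V_\ell\subseteq V(H)$, then $V(H')\subseteq V\setminus V_\ell$, so $\sigma(V_\ell)\subseteq \sigma(V(H))\subseteq \sigma(V(H'))\subseteq \sigma(V\setminus V_\ell)$, and \AX{(K3cc)} holds. If $V_\ell\subseteq V(H')$ and $H'$ is connected, then $V_\ell=V(H')$ and $V\setminus V_\ell=V(H)$, so $\sigma(V\setminus V_\ell)=\sigma(V(H))\subseteq \sigma(V(H'))=\sigma(V_\ell)$, again giving \AX{(K3cc)}. The remaining, and main, case is when $V_\ell\subseteq V(H')$ and $H'$ is disconnected. Here I would first argue that $V_\ell$ is still a color-minimal component of $H'$: any component of $H'$ with a strictly smaller color set would also be a component of $G$, contradicting the minimality of $V_\ell$ in $G$. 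Since $|V(H')|<|V(G)|$ (because $H$ is nonempty), the inductive hypothesis applied to $H'$ yields that $\sigma(V_\ell)\cap \sigma(V(H')\setminus V_\ell)$ equals $\sigma(V_\ell)$ or $\sigma(V(H')\setminus V_\ell)$.

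The main obstacle is closing this last case cleanly. In the sub-subcase $\sigma(V_\ell)\subseteq \sigma(V(H')\setminus V_\ell)$ we are done, since $V(H')\setminus V_\ell \subseteq V\setminus V_\ell$. The delicate sub-subcase is $\sigma(V(H')\setminus V_\ell)\subseteq \sigma(V_\ell)$: here I would combine this inclusion with $\sigma(V(H))\subseteq\sigma(V(H'))=\sigma(V_\ell)\cup \sigma(V(H')\setminus V_\ell)\subseteq \sigma(V_\ell)$ to conclude $\sigma(V(H))\subseteq\sigma(V_\ell)$, and hence $\sigma(V\setminus V_\ell)=\sigma(V(H))\cup\sigma(V(H')\setminus V_\ell)\subseteq\sigma(V_\ell)$. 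That yields \AX{(K3cc)} for the required decomposition and completes the induction.
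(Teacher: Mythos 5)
Your induction establishes only the color constraint \AX{(K3cc)}, but Property \AX{(K3)} as defined in Def.~\ref{def:hc-cograph} also requires that \emph{both factors} of the decomposition be \hc-cographs, and this is where the real content of the lemma lies: a disjoint union of \hc-cographs is in general not an \hc-cograph (e.g.\ two differently colored copies of $K_1$), so the fact that $(G-G_{\ell},\sigma_{|V\setminus V_{\ell}})$ is an \hc-cograph is neither automatic nor addressed anywhere in your proposal. The paper spends half of its proof on exactly this point: each component of $G-G_\ell$ is an \hc-cograph by Lemma~\ref{lem:connComp-hc}, hence a co-RBMG by Thm.~\ref{thm:hc-cograph}, and then Thm.~\ref{thm:connected} reassembles $G-G_\ell$ into an RBMG because the minimality of $\sigma(V_\ell)$ guarantees that a component carrying \emph{all} colors of $G$ survives in $G-G_\ell$. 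Your inductive framework could in principle be extended to close this gap (in your Case~3 one would have to show that $H\cupdot(H'-G_\ell)$ again satisfies \AX{(K3cc)}, which needs an extra use of the minimality of $\sigma(V_\ell)$ in the sub-case $\sigma(V(H')\setminus V_\ell)\subseteq\sigma(V_\ell)$ to force $\sigma(V(H')\setminus V_\ell)=\sigma(V_\ell)$), but as written the proof does not prove the stated claim.

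For the part you do prove, your route is genuinely different from the paper's and is worth noting. The paper argues semantically: it passes through the equivalence of \hc-cographs and co-RBMGs (Thm.~\ref{thm:hc-cograph}) and invokes Cor.~\ref{cor:n-color} to produce a connected component whose color set is all of $\sigma(V)$, from which $\sigma(V_\ell)\subseteq\sigma(V\setminus V_\ell)$ follows in one line. You instead run a purely syntactic structural induction on the \AX{(K3)}-decomposition defining $G$, never mentioning trees or RBMGs; your case analysis (including the delicate sub-case where $\sigma(V(H')\setminus V_\ell)\subseteq\sigma(V_\ell)$) is correct. The paper's argument is shorter but leans on heavy machinery; yours is more elementary and self-contained, at the price of a longer case analysis — and, at present, of omitting the \hc-cograph property of the factors.
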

\begin{proof}
  Let $(G=(V,E),\sigma)$ be a disconnected \hc-cograph with connected
  components
  $(G_{1} = (V_1,E_1),\sigma_1), \dots, (G_k=(V_k,E_k),\sigma_k)$ and put
  $\sigma_i\coloneqq \sigma_{|V_i}$, \NEW{$1\le i\le k$}.  Let
  $(G_{\ell},\sigma_{\ell})$, $1\leq \ell \leq k$ be a graph such that
  $\sigma(V_{\ell})$ is minimal w.r.t.\ inclusion. We write
  $G'=(V',E')\coloneqq G - G_{\ell}$ and
  $\sigma'\coloneqq \sigma_{|V\setminus V_{\ell}}$, thus
  $(G', \sigma') = (G- G_{\ell},\sigma_{|V\setminus V_{\ell}})$ and
  $V'=V\setminus V_{\ell}$.  In order to prove that
  $(G,\sigma) = (G_{\ell},\sigma_{\ell}) \cupdot (G', \sigma')$ satisfies
  \AX{(K3)}, we must show (i) that $(G_{\ell},\sigma_{\ell})$ and
  $(G', \sigma')$ are \hc-cographs and (ii) that
  $\sigma(V_{\ell}) \cap \sigma(V') \in \{\sigma(V_{\ell}), \sigma(V') \}$.

  (i) By Lemma \ref{lem:connComp-hc}, each connected component
  $(G_{1},\sigma_1), \dots, (G_k,\sigma_k)$ is an \hc-cograph. Thus, in
  particular, $(G_{\ell},\sigma_{\ell})$ is an \hc-cograph.  Furthermore,
  Thm.\ \ref{thm:hc-cograph} implies that each connected component
  $(G_{1},\sigma_1), \dots, (G_k,\sigma_k)$ is a co-RBMG.  \NEW{By Thm.\
    \ref{thm:connected}, the latter two arguments imply that
    $(G', \sigma')$ is a co-RBMG. Moreover, Thm.\ \ref{thm:connected}
    implies that there exists $1\le j\le k$ such that
    $\sigma(V_j)=\sigma(V)$ and $j\neq l$ since $\sigma(V_l)$ is minimal
    w.r.t.\ inclusion.}

  (ii) By Theorem \ref{thm:hc-cograph}, $(G,\sigma)$ is an RBMG. Applying
  Cor.\ \ref{cor:n-color}, we can conclude that $(G,\sigma)$ contains a
  connected component $(G^*,\sigma^*)$ with
  $\sigma(V(G^*))=\sigma(V)$. Since $\sigma(V_{\ell})$ is minimal w.r.t.\
  inclusion, we can w.l.o.g.\ assume that $(G^*,\sigma^*)$ is contained in
  $(G',\sigma')$. Hence,
  $\sigma(V_{\ell})\cap \sigma(V')=\sigma(V_{\ell})\cap
  \sigma(V)=\sigma(V_{\ell})$ and thus,
  $\sigma(V_{\ell})\subseteq \sigma(V')$, which implies that
  $(G,\sigma) = (G_{\ell},\sigma_{\ell}) \cupdot (G', \sigma')$ satisfies
  Property \AX{(K3cc)}.  
\end{proof}

The choice of the graph $G_{\ell}$ in Lemma \ref{lem:minimal} will in
general not be unique. As a consequence, there may be distinct cotrees
$(T_{\hc}^G,\sigma)$ that explain the same co-RBMG, see Fig.\
\ref{fig:hc-cograph} for an illustrative example.

While Thm.\ \ref{thm:cographA} allows the recognition of co-RBMGs in
polynomial time, it does not provide an explaining tree. The equivalence of
co-RBMGs and \hc-cographs together with Lemmas \ref{lem:join} and
\ref{lem:minimal} yields an alternative polynomial-time recognition
algorithm that is constructive in the sense that it explicitly provides a
tree explaining $(G,\sigma)$.

\begin{theorem}
  Let $(G,\sigma)$ be a properly colored undirected graph.  Then it can be
  decided in polynomial time whether $(G,\sigma)$ is a co-RBMG and, in the
  positive case, a tree $(T,\sigma)$ that explains $(G,\sigma)$ can be
  constructed in polynomial time.
  \label{thm:coRBMG-recog}
\end{theorem}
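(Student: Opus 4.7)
The plan is to combine Theorem~\ref{thm:cographA}, which gives a polynomial certificate for being a co-RBMG, with the constructive equivalence between co-RBMGs and \hc-cographs (Theorems~\ref{thm:hc-cograph} and~\ref{thm:explainThc}) and the canonical decomposition afforded by Lemma~\ref{lem:minimal}, to produce both a polynomial-time decision procedure and an explaining tree.

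For the recognition step I would first check that $\sigma$ is a proper vertex coloring (Observation~\ref{obs:proper}) and that $G$ is a cograph, both doable in linear time by standard cograph recognition. By Theorem~\ref{thm:cographA}, it then remains to verify, for each unordered triple $\{r,s,t\}\subseteq S$, that every $3$-colored connected component of the $\sthin$-thin quotient $(G_{rst}/\sthin,\sigma_{rst}/\sthin)$ satisfies conditions~\AX{(A1)} and~\AX{(A2)} of Lemma~\ref{lem:charA}. Forming each induced subgraph, its $\sthin$-quotient, the resulting components, and then checking the hub-vertex and degree conditions is clearly polynomial, and there are only $\binom{|S|}{3}$ triples to consider, so the entire test runs in polynomial time.

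For the tree construction step I would exploit Theorem~\ref{thm:hc-cograph}, which guarantees that $(G,\sigma)$ is an \hc-cograph, and Theorem~\ref{thm:explainThc}, which states that its \hc-cotree also explains it. I would build such a cotree by a recursion mirroring Definition~\ref{def:hc-cograph}: a single vertex returns the trivial cotree; a disconnected graph is split as $G=G_\ell \cupdot (G-G_\ell)$ for any connected component $G_\ell$ whose color set $\sigma(V(G_\ell))$ is inclusion-minimal among the components, which by Lemma~\ref{lem:minimal} satisfies~\AX{(K3cc)}; a connected graph $G$ with more than one vertex is split along the connected components $C_1,\dots,C_k$ of the complement $\overline{G}$, grouped into two sides $A$ and $B$ so that $\bigcup_{i\in A}\sigma(V(C_i))$ and $\bigcup_{j\in B}\sigma(V(C_j))$ are disjoint. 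Recursion on the two sides produces \hc-cotrees, which by Lemma~\ref{lem:join} can be joined under a common root (labeled $0$ in the $\cupdot$-case or $1$ in the $\join$-case) to give an \hc-cotree that explains $(G,\sigma)$. Each recursive call strictly decreases the vertex count, and every local computation is polynomial, so the whole construction is polynomial.

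The subtle part will be the connected case: a priori one might worry about enumerating exponentially many bipartitions of $\{C_1,\dots,C_k\}$ to find one satisfying~\AX{(K2cc)}. The resolution is that~\AX{(K2cc)} forces any two complement-components which share a color to land on the same side; defining $C_i\sim C_j$ as the transitive closure of $\sigma(V(C_i))\cap \sigma(V(C_j))\neq\emptyset$, the admissible bipartitions are precisely the non-trivial bipartitions of the $\sim$-classes, which can be computed in polynomial time. If there is only one such class, no valid split exists and $(G,\sigma)$ would not be an \hc-cograph, contradicting successful recognition; otherwise a valid split always exists and can be found efficiently, completing the construction in polynomial time.
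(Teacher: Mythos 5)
Your tree-construction step is essentially the paper's proof: decompose along $\overline{G}$'s connected components in the join case and along a component with inclusion-minimal color set (Lemma~\ref{lem:minimal}) in the disjoint-union case, recurse, and assemble via Lemma~\ref{lem:join}. Your resolution of the "exponentially many bipartitions" worry in the join case is also sound, and is equivalent to what the paper does: by associativity of $\join$ it suffices to check that the color sets of \emph{all} complement-components are pairwise disjoint, which is exactly when every component is its own $\sim$-class.

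The genuine gap is in your recognition step. Theorem~\ref{thm:cographA} is an equivalence \emph{under the hypothesis that $(G,\sigma)$ is already an $n$-RBMG}; it says nothing about when a properly colored cograph \emph{is} an RBMG. You are using it in the direction ``cograph $+$ all $3$-colored components of the restrictions are Type \AX{(A)} $\Rightarrow$ co-RBMG'', which is not what the theorem provides, and in fact this implication is false as you have stated the test. Concretely, take the edgeless graph on three vertices with three distinct colors: it is a properly colored cograph, every $3$-colored connected component of every restriction is Type \AX{(A)} vacuously (all components are single vertices), yet by Theorem~\ref{thm:connected} it is not an RBMG, since no connected component contains all colors. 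The same failure occurs for $\overline{K_2}$ on two colors. Your recognizer accepts these graphs, and your constructor, fed such a graph, either silently builds a tree that does not explain $(G,\sigma)$ (the star on the three-vertex example explains $K_3$, not the edgeless graph) or must detect a violation of \AX{(K3cc)} that you never check, because you invoke Lemma~\ref{lem:minimal} whose hypothesis (that $(G,\sigma)$ is an \hc-cograph) is precisely what is in question. The repair is the paper's route: drop the Theorem~\ref{thm:cographA}-based test and let the decomposition recursion itself do the recognizing, by explicitly verifying \AX{(K2cc)} (pairwise disjointness of the join factors' color sets) and \AX{(K3cc)} (the nested color-set condition for the chosen split) at every step and reporting failure on any violation; success then certifies that $(G,\sigma)$ is an \hc-cograph, hence a co-RBMG by Theorem~\ref{thm:hc-cograph}, and the resulting \hc-cotree explains it by Theorem~\ref{thm:explainThc}.
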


\begin{proof}
  Let $\overline G$ denote the complement of $G$.  Testing if $(G,\sigma)$
  is the join or disjoint union of graphs can clearly be done in polynomial
  time.

  Assume first that $(G,\sigma)$ is the join of graphs. In this case,
  $(\overline G,\sigma)$ decomposes into connected components
  $(\overline G_1,\sigma_1),\dots, (\overline G_k,\sigma_k)$, $k\geq2$,
  i.e.,
  $(\overline G,\sigma) = \bigcupdot_{i=1}^k (\overline G_i,\sigma_i)$.
  Therefore,
  $(G,\sigma) = (\overline{\overline G},\sigma) =
  \overline{\bigcupdot_{i=1}^k (\overline G_i,\sigma_i)}= \join_{i=1}^k
  (\overline{\overline G}_i,\sigma_i)=\join_{i=1}^k (G_i,\sigma_i)$, where
  none of the graphs $(G_i,\sigma_i)$ can be written as the join of two
  graphs and $k$ is maximal.  Note, we can ignore the parenthesis in the
  latter equation, since the $\join$ operation is associative. It follows
  from \AX{(K2cc)} that $(G,\sigma)$ is an \hc-cograph if and only if (1)
  all $(G_i,\sigma_i)$ are \hc-cographs and (2) all color sets
  $\sigma(V(G_i))$ are pairwise disjoint.  In this case, every binary
  tree with leaves $(G_1,\sigma_1),\dots, (G_k,\sigma_k)$ and all \NEW{inner} vertices
  labeled $1$ may appear in $(T^G_{hc},t)$.

  Now assume that $(G,\sigma)$ is the disjoint union of the connected
  graphs $(G_i,\sigma_i)$, $1\leq i\leq k$.  Let
  $G_{\ell} = (V_{\ell},E_{\ell})$, $1\leq \ell \leq k$ be a connected
  component such that $\sigma(V_{\ell})$ is minimal w.r.t.\ inclusion.
  Such a component can be clearly identified in polynomial time.  By Lemma
  \ref{lem:minimal},
  $(G,\sigma) = (G_{\ell},\sigma_{|V_{\ell}}) \cupdot
  (G-G_{\ell},\sigma_{|V\setminus V_{\ell}})$ satisfies \AX(K3), whenever
  $(G,\sigma)$ is a co-RBMG.  Again, Lemma \ref{lem:join} implies that the
  two trees that explains $(G_{\ell},\sigma_{|V_{\ell}})$ and
  $(G-G_{\ell},\sigma_{|V\setminus V_{\ell}})$, respectively, can then be
  joined under a common root in order to obtain a tree that explains
  $(G,\sigma)$. The effort is again polynomial.

  Finally, each of the latter steps must be repeated recursively on the
  connected components of either $G$ or $\overline G$. This either results
  in an \hc-cotree $(T^G_{\hc},t,\sigma)$ or we encounter a violation of
  \AX{(K2cc)} or \AX{(K3cc)} on the way. That is, we obtain a join
  decomposition such that the color sets $\sigma(V(G_i))$ are not
  pairwise disjoint, or a graph $G_{\ell}$ such that
  $(G_{\ell},\sigma_{|V_{\ell}}) \cupdot (G-G_{\ell},\sigma_{|V\setminus
    V_{\ell}})$ violates \AX{(K3cc)}. In either case, the recursion
  terminates and reports ``$(G,\sigma)$ is not an \hc-cograph''.  Since
  $T^G_{\hc}$ has $O(|V(G)|)$ vertices, the polynomial-time decomposition
  steps must be repeated at most $O(|V(G)|)$ times, resulting in an overall
  polynomial-time algorithm.
  
\end{proof}

Recall that $T_e$ denotes the tree that is obtained from $T$ by contraction
of the edge $e$ and $(T,t,\sigma)$ or $(T,t)$ is a cotree for $(G,\sigma)$
if $t(\lca(x, y))=1$ if and only if $xy \in E(G)$.

\begin{figure}[t]
  \begin{tabular}{lcr}
    \begin{minipage}{0.45\textwidth}
      \begin{center}  
        \includegraphics[width=\textwidth]{./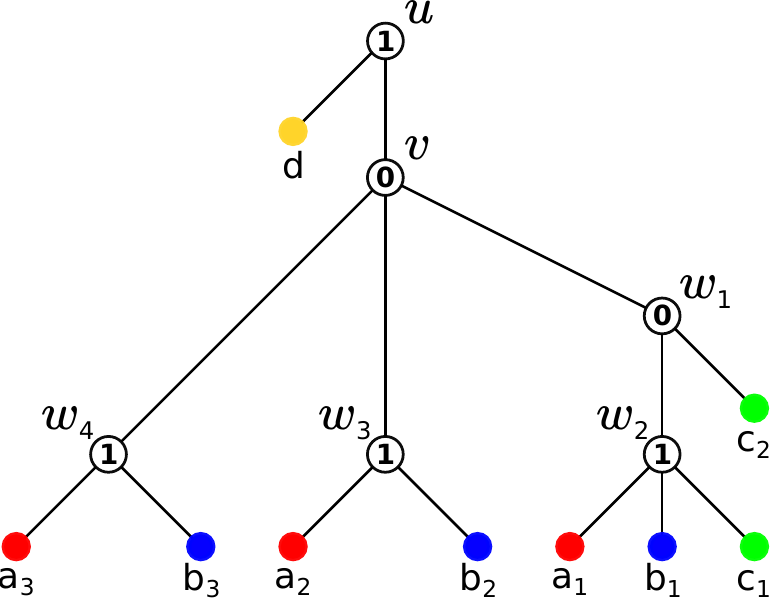}
      \end{center}
    \end{minipage}
    &    & 
    \begin{minipage}{0.45\textwidth}
      \caption{The cotree $(T,t,\sigma)$ explains a 4-colored co-RBMG
        $(G,\sigma)$. Cor.\ \ref{cor:1-0-edge} implies  that the edge
        $uv$ is redundant.  However, by Prop.\ \ref{prop:cograph2}, the
        tree $(T_e,t')$ is not a cotree for the co-RBMG $(G,\sigma)$ for
        all possible labelings $t':V^0\NEW{(T)}\to \{0,1\}$. Moreover, Lemma
        \ref{lem:lem2-2-necessary}(2) implies that the two inner edges $vw_3$
        and $vw_4 $ of $T$ are redundant. However, contracting both edges
        at the same time gives a tree $(T_1,\sigma)$ with
        $\parent(a_2)=\parent(a_3)$, thus $a_2$ and $a_3$ belong to the
        same $\sthin$-class in $G(T_1,\sigma)$. Hence, $(T_1,\sigma)$ does
        not explain $(G,\sigma)$ (cf.\ Lemma \ref{lem:01-edge}). Finally, one
        checks that the edge $vw_1$ is relevant because the edge $a_3c_2$
        is contained in $G(T_2,\sigma)$, where $(T_2,\sigma)$ is obtained
        from $(T,\sigma)$ by contraction of $vw_1$, but not in $(G,\sigma)$
        (cf.\ Lemma \ref{lem:lem2-2-necessary}).}
      \label{fig:cotree_00edge}
    \end{minipage}
  \end{tabular}
\end{figure}

Fig.\ \ref{fig:sameNeighbors} gives an example of a least resolved tree
$(T,\sigma)$ that explains a co-RBMG $(G,\sigma)$. However, one easily
verifies that $(T,\sigma)$ is not a refinement of the discriminating cotree
for $(G,\sigma)$.  Moreover, as shown in Fig.\ \ref{fig:hc-cograph}, there
might exist several cotrees that explain a given co-RBMG. By Prop.\
\ref{prop:cograph2}, the discriminating cotree for a co-RBMG $(G,\sigma)$
is unique. It does not necessarily explain $(G,\sigma)$, however. In order
to see this, consider the example in Fig.\ \ref{fig:cotree_00edge}, where
the edge $vw_1$ with $t(v)=t(w_1)=0$ cannot be contracted without violating
the property that the resulting tree still explains the underlying
co-RBMG.

To shed some light on the question how cotrees for a co-RBMG $(G,\sigma)$
and least resolved trees that explain $(G,\sigma)$ are related, we identify
the edges of a cotree for $(G,\sigma)$ that can be contracted.  To this
end, we show that the sufficient conditions in Lemma \ref{lem:lem2-2} are
also necessary for co-RBMGs.
\begin{lemma}
  Let $(T,t,\sigma)$ be a not necessarily binary cotree explaining the
  co-RBMG $(G,\sigma)$ that is also a cotree for $(G,\sigma)$ and let
  $e=uv$ be an inner edge of $T$. Then $(T_e,\sigma)$ explains $(G,\sigma)$
  if and only if either Property (1) or (2) from Lemma \ref{lem:lem2-2} is
  satisfied.
  \label{lem:lem2-2-necessary}
\end{lemma}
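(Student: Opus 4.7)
I propose the following strategy. Sufficiency is exactly Lemma~\ref{lem:lem2-2}, so I will prove only the necessity direction: assuming $(T_e,\sigma)$ explains $(G,\sigma)$, I will show that Property~(1) or~(2) holds. Because $(G,\sigma)$ is a co-RBMG explained by $(T,\sigma)$, Lemma~\ref{lem:child-overlap} ensures that for every sibling $v'\in\child_T(u)\setminus\{v\}$ the color sets $\sigma(L(T(v)))$ and $\sigma(L(T(v')))$ do not overlap, hence are either disjoint or nested.

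The first step is to use the cotree label $t(u)\in\{0,1\}$ to force a uniform alternative across all siblings of $v$. If $t(u)=1$, then for any $x\in L(T(v))$ and $y\in L(T(v'))$ the relation $\lca_T(x,y)=u$ forces $xy\in E(G)$ via the cotree rule, and the best-match interpretation of this edge demands $\sigma(L(T(v)))\cap\sigma(L(T(v')))=\emptyset$; hence all siblings are disjoint from $v$ and (1) holds. If $t(u)=0$, every cross-sibling pair must satisfy $xy\notin E(G)$, and a short case analysis of best matches using Lemma~\ref{lem:child-overlap} forces $\sigma(L(T(v)))$ and $\sigma(L(T(v')))$ to be nested for every $v'$; hence the first clause of~(2) holds. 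Mixed configurations (some sibling disjoint, some nested) would require $t(u)$ to be simultaneously $0$ and $1$ and are therefore excluded.

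In the nested case I will show that (2.i) or (2.ii) must additionally hold. Suppose, for contradiction, that both fail. Failure of (2.ii) provides $v^*\in\child_T(u)\setminus\{v\}$ with $\sigma(L(T(v^*)))\subsetneq\sigma(L(T(v)))$ together with $w\in\child_T(v)$ satisfying $S_{w,\neg v^*}\neq\emptyset$ such that either $\sigma(L(T(v^*)))$ and $\sigma(L(T(w)))$ overlap or $\sigma(L(T(v^*)))\not\subseteq\sigma(L(T(w)))$; in both subcases one obtains a color $r\in\sigma(L(T(v^*)))\setminus\sigma(L(T(w)))$, while by hypothesis there is also a color $s\in S_{w,\neg v^*}$. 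Pick leaves $b\in L(T(w))\cap L[s]$ and $c\in L(T(v^*))\cap L[r]$. The claim is that $bc\in E(G(T_e))\setminus E(G)$, contradicting the assumption that $(T_e,\sigma)$ explains $(G,\sigma)$. Indeed, since $w$ lacks color $r$ while $\sigma(L(T(v)))\supseteq\sigma(L(T(v^*)))\ni r$, the lowest ancestor of $b$ in $T$ carrying a color-$r$ leaf is $v$, so the best $r$-matches of $b$ in $T$ lie in $L(T(v))$ and $c\notin L(T(v))$ is not among them. After contracting $e$, the vertex $w$ becomes a direct child of $u$ and still lacks $r$, so the lowest $r$-ancestor of $b$ in $T_e$ is $u$, and $c\in L(T(u))\cap L[r]$ is now a best $r$-match of $b$. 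Conversely, $v^*$ lacks color $s$ and the subtree $T(v^*)$ is unaffected by the contraction of $e$, so in both $T$ and $T_e$ the best $s$-matches of $c$ coincide with $L(T(u))\cap L[s]$ and contain $b$. Reciprocity in $T_e$ thus yields $bc\in E(G(T_e))$, while its failure in $T$ yields $bc\notin E(G)$, as required.

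The main obstacle is the second paragraph, where the two roles of $(T,t,\sigma)$ — as a cotree for the cograph $G$ and as a leaf-colored tree explaining the RBMG $(G,\sigma)$ — must be simultaneously exploited to force the uniform dichotomy between disjoint and nested sibling color sets. Once this is in place, the remainder of the argument reduces to a fairly direct computation of lowest common ancestors and best-match sets in $T$ and $T_e$ combined with Lemma~\ref{lem:child-overlap}.
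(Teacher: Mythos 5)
Your proof is correct and follows essentially the same route as the paper's: the dichotomy you establish in your second paragraph is exactly what the paper obtains by noting that $(T,t,\sigma)$ is an \hc-cotree, so that \AX{(K2cc)} or \AX{(K3cc)} holds at $u$ according to whether $t(u)=1$ or $t(u)=0$, and your witness pair $b\in L(T(w))\cap L[s]$, $c\in L(T(v^*))\cap L[r]$ with the $\lca$ comparisons before and after contracting $e$ is, up to renaming of colors and vertices, precisely the pair $(a,b)$ the paper uses to contradict the assumption that $(T_e,\sigma)$ explains $(G,\sigma)$. No gaps.
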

\begin{proof}
  By Lemma \ref{lem:lem2-2}, Properties (1) and (2) ensure that
  $(T_e,\sigma)$ explains $(G,\sigma)$.

  Conversely, suppose that $(T_e,\sigma)$ explains $(G,\sigma)$. Since
  $(G,\sigma)$ is a co-RBMG, it is an \hc-cograph by Theorem
  \ref{thm:hc-cograph} and thus, $(T,t,\sigma)$ is an \hc-cotree for
  $(G,\sigma)$. Let $\mathscr{G}_x\coloneqq (G,\sigma)[L(T(x))]$ for
  $x\in V(T)$. Clearly, $(T(u),t_{|L(T(u))},\sigma_{|L(T(u))})$ is an
  \hc-cotree for $\mathscr{G}_u$ and thus, $\mathscr{G}_u$ is an
  \hc-cograph. Hence, $\mathscr{G}_u$ can be written as
  $\mathscr{G}_v\star (H,\sigma_{H})$, where $\star\in \{\join,\cupdot\}$
  and $(H,\sigma_{H}) = \star_{x\in C}\mathscr{G}_x$ for
  $C\coloneqq \child_T(u)\setminus \{v\}$. Clearly, either $t(u)=1$ (in
  case $\star= \join$) or $t(u)=0$ (in case $\star= \cupdot$). As
  $(T(u),t_{|L(T(u))},\sigma_{|L(T(u))})$ is an \hc-cotree, we have either
  $\sigma(L(T(v'))) \cap \sigma(L(T(v))) = \emptyset$ for all
  $v'\in\child_T(u)$ (in case $\star= \join$) or
  $\sigma(L(T(v'))) \cap \sigma(L(T(v))) \in
  \{\sigma(L(T(v))),\sigma(L(T(v')))\}$ for all $v'\in\child_T(u)$ (in case
  $\star= \cupdot$).

  Thus, if $\star=\join$, then we immediately obtain Property (1).  In the
  second case, where $\mathscr{G}_u=\mathscr{G}_v\cupdot(H',\sigma_{H'})$,
  the color constraint \AX{(K3cc)} implies
  $\sigma(L(T(v)))\subseteq \sigma(L(T(v')))$ or
  $\sigma(L(T(v')))\subsetneq \sigma(L(T(v)))$ for any
  $v'\in\child_T(u)$. If the first case is true for all children of $u$ in
  $T$, we obtain Property (2.i) of Lemma \ref{lem:lem2-2}.  Thus, suppose
  there exists some vertex $v'\in\child_T(u)$ with
  $\sigma(L(T(v')))\subsetneq \sigma(L(T(v)))$. Assume, for contradiction,
  that there is a vertex $w\in\child_T(v)$ with
  $S_{w,\neg v'} \neq \emptyset$ and a color $s$ such that $s$ is contained
  in $\sigma(L(T(v')))$ but not in $\sigma(L(T(w)))$. Thus, in particular,
  there exists a vertex $b\preceq_T v'$ with $\sigma(b)=s$.  Moreover,
  there is vertex $a\preceq_T w$ with $\sigma(a)=r\in S_{w,\neg v'}$.
  Since $r\notin \sigma(L(T(v')))$, the leaf $a$ must be contained in the
  out-neighborhood of $b$ in $\G(T,\sigma)$. Since
  $\sigma(L(T(v')))\subsetneq \sigma(L(T(v)))$ and
  $s\notin\sigma(L(T(w)))$, there exists a vertex
  $b'\in L(T(v))\setminus L(T(w))$ with $\sigma(b')=s$.  Hence,
  $\lca(a,b')=v$.  Thus, $b$ is not contained in the out-neighborhood of
  $a$, i.e., $ab\notin E(G)$. However, if we contract $e=uv$, we obtain the
  new vertex $uv = \lca_{T_e}(a,b) $ in $T_e$.  Since
  $r\notin \sigma(L(T(v')))$ and $s\notin \sigma(L(T(w)))$, we immediately
  obtain $\lca_{T_e}(a,b) \preceq_{T_e} \lca_{T_e}(a,b')$ and
  $\lca_{T_e}(a,b) \preceq_{T_e} \lca_{T_e}(a',b)$ for all $a'$ of color
  $\sigma(a)$ and $b'$ of color $\sigma(b)$. Thus $ab \in E(G(T_e,\sigma))$
  and hence, $(T_e,\sigma)$ does not explain $(G,\sigma)$; a contradiction.
  
\end{proof}

As an immediate consequence of Lemma \ref{lem:lem2-2-necessary}(1) we
obtain
\begin{corollary}\label{cor:1-0-edge}
  Let $(T,t,\sigma)$ be a not necessarily binary cotree explaining the
  co-RBMG $(G,\sigma)$ that is also a cotree for $(G,\sigma)$.  If
  $t(u)=1$ for an inner edge $e=uv$ of $T$, then the tree $(T_e,\sigma)$
  explains $(G,\sigma)$.
\end{corollary}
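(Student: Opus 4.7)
The plan is to reduce the corollary directly to Condition (1) of Lemma \ref{lem:lem2-2} via Lemma \ref{lem:lem2-2-necessary}, using the \hc-cotree structure implied by $t(u)=1$.

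First, I would observe that since $(G,\sigma)$ is a co-RBMG, by Theorem \ref{thm:hc-cograph} it is an \hc-cograph, and the hypothesis that $(T,t,\sigma)$ both explains $(G,\sigma)$ and is a cotree for $(G,\sigma)$ means that $(T,t,\sigma)$ is in fact an \hc-cotree for $(G,\sigma)$ (inner vertices labeled by $1$ or $0$ correspond to the joins or disjoint unions in a valid \hc-cograph construction of $(G,\sigma)$, cf.\ the discussion preceding Lemma \ref{lem:hc-properties}). In particular, the subgraph $\mathscr{G}_u := (G,\sigma)[L(T(u))]$ is obtained as the join of the subgraphs $\mathscr{G}_{v'} = (G,\sigma)[L(T(v'))]$ taken over $v' \in \child_T(u)$.

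Next, since $(T,t,\sigma)$ is an \hc-cotree and $t(u)=1$, the construction of $\mathscr{G}_u$ as an iterated join must respect the color constraint \AX{(K2cc)}: the color sets $\sigma(L(T(v')))$ for distinct $v' \in \child_T(u)$ must be pairwise disjoint. Applying this to the child $v$ of $u$ gives in particular
\[
  \sigma(L(T(v'))) \cap \sigma(L(T(v))) = \emptyset
  \quad \text{for all } v' \in \child_T(u) \setminus \{v\}.
\]
This is precisely Condition (1) of Lemma \ref{lem:lem2-2}.

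Finally, I would invoke Lemma \ref{lem:lem2-2-necessary} (or directly the sufficiency part, namely Lemma \ref{lem:lem2-2}): since the inner edge $e=uv$ satisfies Condition (1), the contracted tree $(T_e,\sigma)$ still explains $(G,\sigma)$, which is the desired conclusion. I do not anticipate any serious obstacle here; the corollary is essentially a one-line consequence of recognizing that $t(u)=1$ forces the disjoint-color-sets condition among children of $u$, and this is exactly the hypothesis needed to invoke the sufficient condition for contractibility.
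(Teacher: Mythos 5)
Your proof is correct and follows essentially the same route as the paper: the corollary is stated there as an immediate consequence of Lemma \ref{lem:lem2-2-necessary}, whose proof likewise observes that a cotree for the \hc-cograph $(G,\sigma)$ with $t(u)=1$ forces the color sets of the children of $u$ to be pairwise disjoint (this also follows directly from proper coloring, since any two leaves below distinct children of a $1$-labeled vertex are adjacent), which is exactly Condition (1) of Lemma \ref{lem:lem2-2}. Nothing further is needed.
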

Now, Cor.\ \ref{cor:1-0-edge} and Prop.\ \ref{prop:cograph2} imply
\begin{corollary}\label{cor:1-1-edge}
  Let $(T,t,\sigma)$ be a not necessarily binary cotree explaining the
  co-RBMG $(G,\sigma)$ and let $e=uv$ be an inner edge of $T$ with
  $t(u)=t(v)=1$.  Then the tree $(T_e,t_e,\sigma)$ explains $(G,\sigma)$
  and is a cotree for $(G,\sigma)$, where the vertex $w=uv$ obtained by
  contracting the edge $uv$ is labeled by $t_e(w)=1$ and $t_e(w')=t(w')$
  for all other vertices $w'\neq w$.
\end{corollary}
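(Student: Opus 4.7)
The plan is to obtain the statement as a direct combination of the two immediately preceding results cited in the hint, namely Corollary \ref{cor:1-0-edge} and Proposition \ref{prop:cograph2}. The corollary makes two assertions that must be established: (i) $(T_e,\sigma)$ still explains the RBMG $(G,\sigma)$, and (ii) the tree with the prescribed labeling $t_e$ is a cotree for the cograph $G$. The two parts are essentially independent and are handled by the two cited results separately.

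For part (i), I would invoke Corollary \ref{cor:1-0-edge} directly. That corollary requires only the single condition $t(u)=1$ on the parent endpoint of the inner edge $e=uv$; since the stronger hypothesis $t(u)=t(v)=1$ is assumed, this is immediate. Thus $(T_e,\sigma)$ explains $(G,\sigma)$, and in particular $L(T_e)=L(T)=V(G)$, so the leaf coloring $\sigma$ is unchanged.

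For part (ii), I would apply Proposition \ref{prop:cograph2}. Since $(T,t)$ is a cotree for the cograph $G$ and the contracted edge $e=uv$ satisfies $t(u)=t(v)$, the proposition yields that $(T_e,t_e)$ is again a cotree for $G$, with the labeling given exactly as in the statement: the new vertex $w=uv$ inherits the common label $t(u)=t(v)=1$ (so $t_e(w)=1$) and $t_e(w')=t(w')$ for all $w'\ne w$. In particular, the equivalence $xy\in E(G) \iff t_e(\lca_{T_e}(x,y))=1$ holds.

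I expect no real obstacle, as both parts follow mechanically. The only point worth spelling out is that the two results are compatible: the tree $(T_e,\sigma)$ produced in (i) coincides as a leaf-labeled tree with the underlying tree of $(T_e,t_e,\sigma)$ produced in (ii), so it is legitimate to combine the two conclusions into one and assert that the single object $(T_e,t_e,\sigma)$ simultaneously explains $(G,\sigma)$ and serves as a cotree for it.
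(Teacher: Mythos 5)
Your proposal is correct and matches the paper's argument exactly: the paper derives this corollary in one line as an immediate consequence of Corollary \ref{cor:1-0-edge} (for the fact that $(T_e,\sigma)$ still explains $(G,\sigma)$) and Proposition \ref{prop:cograph2} (for the fact that $(T_e,t_e)$ remains a cotree with the stated labeling). Your additional remark on the compatibility of the two conclusions is a reasonable bit of care that the paper leaves implicit.
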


Thus, if $(T^G_{\hc},t,\sigma)$ is a least resolved tree that explains
$(G,\sigma)$, then it will not have any adjacent vertices labeled by
$1$. The situation is more complicated for $0$-labeled vertices. Fig.\
\ref{fig:cotree_00edge} shows that not all edges $xy$ in
$(T^G_{\hc},t,\sigma)$ with $t(x)=t(y)=0$ can be contracted.  However, we
obtain the following characterization, which is an immediate consequence of
Prop.\ \ref{prop:cograph2}, Lemma \ref{lem:lem2-2-necessary} and Cor.\
\ref{cor:1-1-edge}.
\begin{corollary}
  Let $(T,t,\sigma)$ be a not necessarily binary cotree explaining the
  co-RBMG $(G,\sigma)$ that is also a cotree for $(G,\sigma)$.  Let $e=uv$
  be an inner edge of $T$.  The following two statements are equivalent:
  \begin{enumerate}
  \item $(T_e,t_e,\sigma)$ explains $(G,\sigma)$ and is a cotree for
    $(G,\sigma)$, where the vertex $w=uv$ obtained by contracting the edge
    $uv$ is labeled by $t_e(w)=t(u)$ and $t_e(w')=t(w')$ for all other
    vertices $w'\neq w$,
  \item $t(u)=t(v)$ and, if $t(u)=0$, then $e$ satisfies Properties (1) and
    (2) in Lemma \ref{lem:lem2-2}.
  \end{enumerate}
  \label{cor:e-cograph}
\end{corollary}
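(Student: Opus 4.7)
The plan is to prove the two directions separately, with the main work being to separate the ``cotree preservation'' issue (which constrains $t(u)=t(v)$) from the ``RBMG preservation'' issue (which, given $t(u)=t(v)=0$, is equivalent to Properties (1)/(2) of Lemma~\ref{lem:lem2-2}).

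For the direction $(2)\Rightarrow(1)$, I would first handle the case $t(u)=t(v)=1$ by appealing directly to Corollary~\ref{cor:1-1-edge}, which gives both that $(T_e,\sigma)$ explains $(G,\sigma)$ and that $(T_e,t_e)$ is a cotree. For the case $t(u)=t(v)=0$, Lemma~\ref{lem:lem2-2} applied under the assumed Properties (1)/(2) yields that $(T_e,\sigma)$ explains $(G,\sigma)$, while the cotree property of $(T_e,t_e)$ follows from Proposition~\ref{prop:cograph2}: contracting an edge whose two endpoints carry the same label and assigning that common label to the resulting vertex produces a cotree for the same cograph.

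For the direction $(1)\Rightarrow(2)$, the key step is to show that $t(u)=t(v)$ is forced. Since $v$ is an inner vertex of the phylogenetic tree $T$, it has at least two children, so we can pick leaves $x_1,x_2$ in distinct subtrees rooted at children of $v$. Then $\lca_T(x_1,x_2)=v$, hence $x_1x_2\in E(G)$ iff $t(v)=1$. After contraction, $\lca_{T_e}(x_1,x_2)=w$, and since $(T_e,t_e)$ is a cotree for $(G,\sigma)$, we get $x_1x_2\in E(G)$ iff $t_e(w)=t(u)=1$. Comparing the two characterizations forces $t(u)=t(v)$. Once this is established, the case $t(u)=t(v)=0$ is handled by Lemma~\ref{lem:lem2-2-necessary}: since $(T_e,\sigma)$ explains $(G,\sigma)$ by assumption, that lemma tells us Properties (1) or (2) of Lemma~\ref{lem:lem2-2} must hold.

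The argument is almost entirely bookkeeping built on top of Proposition~\ref{prop:cograph2}, Corollary~\ref{cor:1-1-edge}, Lemma~\ref{lem:lem2-2} and Lemma~\ref{lem:lem2-2-necessary}; there is no genuinely new combinatorial content. The only subtle point, and the closest thing to an obstacle, is making sure that ``$(T_e,t_e)$ is a cotree for $(G,\sigma)$'' is properly disentangled from ``$(T_e,\sigma)$ explains $(G,\sigma)$''\,: the former is a statement about whether the labeled contracted tree reproduces the cograph edge relation via the $\lca$-label rule, and it is this condition (not the RBMG condition) that forces $t(u)=t(v)$. Once this separation is clear, both directions follow cleanly from the cited results.
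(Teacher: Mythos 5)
Your proposal is correct and follows essentially the same route as the paper, which states this corollary as an immediate consequence of Proposition~\ref{prop:cograph2}, Corollary~\ref{cor:1-1-edge}, and Lemma~\ref{lem:lem2-2-necessary}. Your explicit $\lca$-argument forcing $t(u)=t(v)$ simply re-derives the relevant ``only if'' half of Proposition~\ref{prop:cograph2}, so there is no substantive difference.
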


If we apply Cor.\ \ref{cor:1-1-edge} and \ref{cor:e-cograph}, then Prop.\
\ref{prop:cograph2} implies that we always obtain a cotree
$(T_e,t_e,\sigma)$ for $(G,\sigma)$. Hence, we can repeatedly apply Cor.\
\ref{cor:1-1-edge} and \ref{cor:e-cograph} and conclude that the least
resolved tree $(T,t,\sigma)$ explaining $(G,\sigma)$ does neither contain
edges $xy$ with $t(x)=t(y)=1$ nor edges $xy$ with $t(x)=t(y)=0$ satisfying
Lemma \ref{lem:lem2-2}(1) and (2).  Moreover, Cor.\ \ref{cor:1-0-edge}
allows us to contract edges $xy$ with $t(x)=1\neq t(y)$.  In this case,
however, Prop.\ \ref{prop:cograph2} implies that $(T_e,t')$ is not a cotree
for the co-RBMG $(G,\sigma)$ for all possible labelings
$t':V^0\to \{0,1\}$.  Hence, the question arises, how often we can apply
Cor.\ \ref{cor:1-0-edge}. An answer is provided by the next result:

\begin{lemma}\label{lem:10-edge}
  Let $(T,t,\sigma)$ be a not necessarily binary cotree that explains the
  co-RBMG $(G,\sigma)$ and that is a cotree for $(G,\sigma)$.  Let $A$ be
  the set of all inner edges $e=uv$ of $T$ with $t(u)=1$.  Then,
  $(T_{B},\sigma)$ explains $(G,\sigma)$ for all $B\subseteq A$.
\end{lemma}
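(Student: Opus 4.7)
My plan is to prove the statement directly, bypassing any attempt to iterate Cor.\ \ref{cor:1-0-edge} (which is hopeless, since a single contraction of an edge in $A$ generally destroys the cotree property of the result, so the corollary cannot be re-applied). The first step is the identification: for any leaves $x,y \in L$, one has $\lca_{T_B}(x,y) = [\lca_T(x,y)]$, where $[w]$ denotes the $B$-equivalence class of $w$ (two vertices of $T$ are declared $B$-equivalent iff they are joined by a path of edges in $B$). This follows from the observation that each $B$-class is a connected subtree of $T$, so the quotient map $T \to T_B$ sends the path from $\rho_T$ to $x$ onto the path from $[\rho_T]$ to $x$ in $T_B$; a routine argument then shows that the deepest class containing a common ancestor of $x$ and $y$ in $T_B$ is precisely $[\lca_T(x,y)]$. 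The same argument shows that the quotient map is order-preserving: $a \preceq_T b$ implies $[a] \preceq_{T_B} [b]$.

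Writing $u \coloneqq \lca_T(x,y)$, I would then split into two cases according to $t(u)$. If $t(u) = 1$, then $xy \in E(G)$ because $(T,t)$ is a cotree for $G$, and since $T$ explains $(G,\sigma)$ we have $u \preceq_T \lca_T(x,y')$ for every $y' \in L[\sigma(y)]$, and analogously $u \preceq_T \lca_T(x',y)$ for every $x' \in L[\sigma(x)]$. Order-preservation of the quotient then gives $\lca_{T_B}(x,y) = [u] \preceq_{T_B} [\lca_T(x,y')] = \lca_{T_B}(x,y')$ and the symmetric inequality, so $x$ and $y$ are reciprocal best matches in $T_B$ and $xy \in E(G(T_B,\sigma))$.

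If instead $t(u) = 0$, then $xy \notin E(G)$, so there exists (up to symmetry) some $y' \in L[\sigma(y)]$ with $u' \coloneqq \lca_T(x,y') \prec_T u$. Order-preservation again gives $[u'] \preceq_{T_B} [u]$, and to get $xy \notin E(G(T_B,\sigma))$ I need the strict inequality $[u'] \prec_{T_B} [u]$. This is the crux: the downward path from $u$ to $u'$ in $T$ begins with an edge $uv^*$ whose upper endpoint is $u$, and since $t(u) = 0$ this edge does not lie in $A$, hence not in $B$. Therefore $u$ and $v^*$ (and consequently $u$ and $u'$) lie in distinct $B$-classes, so $[u'] \prec_{T_B} [u]$, whence $\lca_{T_B}(x,y') \prec_{T_B} \lca_{T_B}(x,y)$ and $x,y$ fail to be reciprocal best matches in $T_B$.

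The main obstacle—really the only subtle point—is the strictness argument in the second case: it is essential that $A$ is defined to require $t=1$ at the \emph{upper} endpoint, which is precisely what prevents a $0$-labelled vertex from being absorbed into a larger $B$-class from above. Without this asymmetric condition the $B$-contractions could swallow an lca from below and erase the strict descendant relations encoding non-edges of $G$. Granting this observation, the combination of the two cases gives $E(G(T_B,\sigma)) = E(G)$, and since $T_B$ and $T$ have the same leaf set $L$ with the same coloring $\sigma$, we conclude that $(T_B,\sigma)$ explains $(G,\sigma)$.
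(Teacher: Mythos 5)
Your proof is correct, and it takes a genuinely different route from the paper. The paper argues by iterated contraction: it applies Cor.~\ref{cor:1-0-edge} once, then observes that because $(T(u),t,\sigma)$ is an \hc-cotree the children of a $1$-labelled vertex have pairwise disjoint color sets, a property that survives the contraction, so that every further edge of $A$ incident to the same vertex satisfies condition (1) of Lemma~\ref{lem:lem2-2} and can be contracted as well; a top-down sweep then handles all of $B$. Your argument instead works directly on the quotient tree $T_B$ via the identification $\lca_{T_B}(x,y)=[\lca_T(x,y)]$ and order-preservation of the quotient map, verifying the reciprocal-best-match relation for every leaf pair in one pass. What your approach buys is that it handles an arbitrary $B\subseteq A$ in a single step, with no induction whose hypothesis must be shown to persist, and it isolates exactly the point that makes the lemma true: since $t(u)=1$ is required at the \emph{upper} endpoint of every edge in $A$, a vertex $u$ with $t(u)=0$ is never merged with any of its children, so the strict relations $\lca_T(x,y')\prec_T\lca_T(x,y)$ that witness non-edges survive contraction. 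What it gives up is the connection to the redundancy machinery (Lemma~\ref{lem:lr}, Lemma~\ref{lem:lem2-2}) that the paper reuses elsewhere. Your side remark that Cor.~\ref{cor:1-0-edge} cannot simply be re-applied is accurate, since by Prop.~\ref{prop:cograph2} contracting an edge with unequal labels destroys the cotree property; the paper's proof acknowledges this implicitly by switching to Lemma~\ref{lem:lem2-2}(1) after the first contraction.

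One small point to tidy up: in your second case the inference ``$xy\notin E(G)$, hence some $y'$ or $x'$ realizes a strictly smaller $\lca$'' is only valid when $\sigma(x)\neq\sigma(y)$; for two distinct leaves of the same color the non-edge may be due solely to the color condition in the definition of \emph{explains}. That sub-case is trivial, since $G(T_B,\sigma)$ likewise has no edges between equally colored vertices, but it should be stated.
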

\begin{proof}
  Any edge $e=uv \in B$ is contracted to some vertex $u_e$ in
  $(T_{B},\sigma)$.

  Let $e=uv\in A$. By definition of $A$, we have $t(u)=1$.  Clearly,
  $(T(u),t_{|L(T(u))},\sigma_{|L(T(u))})$ is an \hc-cotree.  Hence,
  $\sigma(L(T(v_1)))\cap \sigma(L(T(v_2)))=\emptyset$ for any two distinct
  vertices $v_1,v_2\in\child_{T}(u)$.  Now assume that we have contracted
  $e$ to obtain $(T_e,\sigma)$.  By Cor.\ \ref{cor:1-0-edge},
  $(T_e,\sigma)$ explains $(G,\sigma)$.  Moreover suppose that there exists
  another edge $f = uv'\in A$ which corresponds to $u_ev'$ in
  $(T_e,\sigma)$.  In $(T,\sigma)$, we have
  $\sigma(L(T(v)))\cap \sigma(L(T(v')))=\emptyset$ which, in particular,
  implies $\sigma(L(T(v')))\cap \sigma(L(T(w)))=\emptyset$ for all
  $w\in \child_T(v)$.  In $(T_e,\sigma)$, the children of $u_e$ are now
  $\child_{T_e}(u_e) = (\child_T(u)\setminus \{v\})\cup \child_T(v)$.
  Thus, $\sigma(L(T(v')))\cap \sigma(L(T(v'')))=\emptyset$ for all
  $v''\in \child_{T_e}(u_e)$.  Lemma \ref{lem:lem2-2}(1) implies that $f$
  can be contracted to obtain the tree $(T_{ef},\sigma)$ that explains
  $(G,\sigma)$.  Repeated application of the latter arguments shows that
  all edges incident to vertex $u$ in $(T,\sigma)$ can be contracted.
	
  Finally, the contraction of the edges can be performed in a top-down
  fashion. In this case, the contraction of edges incident to $u$ does not
  influence the children of any vertex $u'$ that is incident to some edge
  $e'= u'v'$ having label $t(u')=1$.  That is, we can apply the latter
  arguments to all edges in $B$ independently, from which we conclude that
  $(T_{B},\sigma)$ explains $(G,\sigma)$ for all $B\subseteq A$.
  
\end{proof}

For the contraction of edges $xy$ with $t(x)=0\neq t(y)$, however, the
situation becomes more complicated.
\begin{lemma}\label{lem:01-edge}
  Let $(T,t,\sigma)$ be a not necessarily binary cotree that explains the
  co-RBMG $(G,\sigma)$ and is a cotree for $(G,\sigma)$. Moreover, let
  $u\in V^0(T)$ be an inner vertex with $t(u)=0$ and $A=\{e_1,\dots,e_k\}$
  be the set of all inner edges $e_i=uv_i\in E(T)$ with
  $v_i\in \child_T(u)$ such that $t(v_i)=1$ and $e_i$ is redundant in \NEW{$(T,\sigma)$}.
  Then, $(T_e,\sigma)$ explains $(G,\sigma)$ for all $e\in A$ but
  $(T_{B},\sigma)$ does not explain $(G,\sigma)$ for all $B\subseteq A$
  with $|B|\geq 2$.
\end{lemma}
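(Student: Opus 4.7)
The first assertion is immediate from the definition of $A$: every $e\in A$ is redundant, and hence $(T_e,\sigma)$ explains $(G,\sigma)$ by Def.~\ref{def:redundant}. The rest of the plan concerns the second assertion.

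Let $B\subseteq A$ with $|B|\ge 2$, and pick distinct edges $e_i=uv_i$ and $e_j=uv_j$ in $B$. The plan is to exhibit two leaves $x,x'$ with $xx'\notin E(G)$ but $xx'\in E(G(T_B,\sigma))$, which immediately yields $G(T_B,\sigma)\ne G$, so that $(T_B,\sigma)$ cannot explain $(G,\sigma)$.

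The crucial preparatory step will be to prove $\sigma(L(T(v_i)))=\sigma(L(T(v_j)))$. Since both $e_i$ and $e_j$ are redundant, Lemma~\ref{lem:lem2-2-necessary} asserts that each satisfies Property~(1) or (2) of Lemma~\ref{lem:lem2-2}. First we would observe that Property~(1) is incompatible with the \hc-cotree structure at $u$: any binary expansion realising the disjoint union at $u$ must satisfy \AX{(K3cc)} at every step, and a child whose color set is disjoint from every sibling cannot be combined with the remaining subgraphs without violating the containment clause of \AX{(K3cc)}. Hence both edges satisfy Property~(2). Using that $t(v_i)=1$ forces the children of $v_i$ to have pairwise disjoint color sets partitioning $\sigma(L(T(v_i)))$ by \AX{(K2cc)}, together with the symmetric fact for $v_j$, we would then do a case analysis on (2.i)/(2.ii) for both edges. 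The only case that survives is (2.i) for both, from which $\sigma(L(T(v_i)))\subseteq\sigma(L(T(v_j)))$ and conversely, giving equality. Ruling out (2.ii) in the non-vacuous asymmetric cases is the technical heart of the argument: whenever some $v'$ with $\sigma(L(T(v')))\subsetneq\sigma(L(T(v_j)))$ exists, the containment required by (2.ii) combined with the disjoint partition of $\sigma(L(T(v_j)))$ by $\child_T(v_j)$ forces $v_j$ to have at most one child, contradicting $v_j$ being an inner vertex.

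With $C\coloneqq\sigma(L(T(v_i)))=\sigma(L(T(v_j)))$ established and $|C|\ge 2$ (again because $t(v_i)=1$), we pick distinct colors $r,s\in C$, a leaf $x\in L(T(v_i))$ of color $r$, and a leaf $x'\in L(T(v_j))$ of color $s$. On the one hand $\lca_T(x,x')=u$ and $t(u)=0$, so $xx'\notin E(G)$ by the cotree property. On the other hand, in $T_B$ we still have $\lca_{T_B}(x,x')=u$, while $x$ now sits in a subtree rooted at some $w_{i,a}\in\child_T(v_i)$ which has become a direct child of $u$ in $T_B$, and similarly $x'$ lies under some $w_{j,b}\in\child_T(v_j)$. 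Because the children of $v_i$ have disjoint color sets and $\sigma(x)=r\in\sigma(L(T(w_{i,a})))$, the color $s$ does not appear in $L(T(w_{i,a}))$, so $x'$ realises a minimal-lca best match of color $s$ from $x$ in $T_B$; the symmetric argument at $w_{j,b}$ shows $x\in N^+_r(x')$ in $T_B$. Hence $xx'\in E(G(T_B,\sigma))\setminus E(G)$. Contracting any further edges of $B$ outside $\{e_i,e_j\}$ only alters children of $u$ lying outside the subtrees at $w_{i,a}$ and $w_{j,b}$, so the argument works uniformly for every such $B$. The hardest part will be the case analysis eliminating Property~(2.ii) in the color-set-equality step, since it requires tracking the disjoint partition of $\sigma(L(T(v_j)))$ against the nested containment constraints imposed by \AX{(K3cc)} at $u$.
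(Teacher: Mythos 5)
Your overall strategy is sound and in places tighter than the paper's own argument: the paper only extracts the (w.l.o.g.) containment $\sigma(L(T(v_j)))\subseteq\sigma(L(T(v_i)))$ from \AX{(K3cc)} at $u$ and then runs a case analysis on where the relevant colors sit among the children of $v_i$ and $v_j$, whereas you first upgrade this to \emph{equality} of the two color sets via Lemma \ref{lem:lem2-2-necessary}. Your elimination of Property (1) (incompatible with $t(u)=0$ and \AX{(K3cc)}) and of case (2.ii) is correct: since $t(v_j)=1$ forces the children of $v_j$ to carry pairwise disjoint color sets, a sibling $v'$ with $\sigma(L(T(v')))\subsetneq\sigma(L(T(v_j)))$ would force all but one child of $v_j$ to have an empty color set. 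The reduction from a general $B$ with $|B|\ge 2$ to the single pair $\{e_i,e_j\}$ is also fine, since contracting further edges of $A$ neither changes $\lca_{T_B}(x,x')$ nor the subtrees below the former children of $v_i$ and $v_j$.

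The final step, however, has a genuine gap. You pick arbitrary distinct colors $r,s\in C$ and claim that ``because the children of $v_i$ have disjoint color sets and $r\in\sigma(L(T(w_{i,a})))$, the color $s$ does not appear in $L(T(w_{i,a}))$''. That does not follow: pairwise disjointness of the children's color sets says nothing about a \emph{single} child containing both $r$ and $s$. For instance, $v_i$ may have children with color sets $\{r,s\}$ and $\{t\}$; if $x$ has color $r$ and lies under the first child, then $s\in\sigma(L(T(w_{i,a})))$, so $x'$ is not a best match of $x$ in $T_B$ and no new edge $xx'$ arises from this choice. The argument is repairable, but it needs an additional combinatorial step: the children of $v_i$ and of $v_j$ induce two partitions of $C$ into at least two nonempty blocks each, and one must exhibit blocks $P_a$ (from $\child_T(v_i)$) and $Q_b$ (from $\child_T(v_j)$) with $P_a\setminus Q_b\neq\emptyset$ and $Q_b\setminus P_a\neq\emptyset$; choosing $r\in P_a\setminus Q_b$, $s\in Q_b\setminus P_a$ and the leaves $x\prec w_{i,a}$, $x'\prec w_{j,b}$ accordingly makes your best-match computation go through. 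Such an incomparable pair of blocks always exists (either two blocks from the two partitions are disjoint, or every block of one partition meets at least two blocks of the other, so no inclusion can hold), but this selection is exactly the point your proposal glosses over, and it is where the paper's proof spends its case analysis.
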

\begin{proof}
  Let the edges $e=uv$ and $f=uv'$ be contained in $A$. Since $e$ and $f$
  are redundant, $(T_e,\sigma)$ and $(T_f,\sigma)$ both explain
  $(G,\sigma)$. It suffices to show that $(T_{ef},\sigma)$ does not explain
  $(G,\sigma)$.  Following the same argumentation as in the beginning of
  the proof of Lemma \ref{lem:lem2-2-necessary}, we conclude that
  $(T(v),t_{|L(T(v))},\sigma_{|L(T(v))})$ is an \hc-cotree. This and
  $t(v)=1$ implies $\sigma(L(T(w_i)))\cap \sigma(L(T(w_j)))=\emptyset$ for
  any two distinct vertices $w_i, w_j\in\child_T(v)$.  Hence,
  $\sigma(L(T(v)))$ is partitioned into the sets
  $\sigma(L(T(w_1))), \dots, \sigma(L(T(w_k)))$ with $w_i\in\child_T(v)$,
  $1\le i \le k$.  Analogously,
  $\sigma(L(T(w'_1))),\dots, \sigma(L(T(w'_m)))$ with
  $w'_i\in\child_T(v')$, $1\le i \le m$ forms a partition of
  $\sigma(L(T(v')))$.  Consider an arbitrary but fixed vertex
  $w\in \child_T(v)$.  Assume, for contradiction, that $(T_{ef},\sigma)$
  explains $(G,\sigma)$ and denote by $u_{ef}$ the inner vertex in $T_{ef}$
  that is obtained by contracting the edges $uv$ and $uv'$.  Since
  $(G,\sigma)$ is an \hc-cograph and $t(u)=0$, the color sets
  $\sigma(L(T(v)))$ and $\sigma(L(T(v')))$ are neither disjoint nor do they
  overlap.  As $(T_{ef},\sigma)=(T_{fe},\sigma)$, we can w.l.o.g.\ assume
  that $\sigma(L(T(v'))) \subseteq \sigma(L(T(v)))$.
		
  Now, let $w'\in \child_T(v')$ such that there is some $z'\in L(T(w'))$
  with $\sigma(z')=t \notin \sigma(L(T(w)))$. Let $x\in L(T(w))$ with
  $\sigma(x)=r$. Since $t(u)=0$ and $u=\lca_T(x,z')$, we have
  $xz'\notin E(G)$. However, as
  $\sigma(L(T(v')))\subseteq \sigma(L(T(v)))$, there is some child
  $\widetilde w\in \child_T(v)$ distinct from $w$ such that
  $t\in \sigma(L(T(\widetilde w)))$. Let
  $\widetilde z\in L(T(\widetilde w))$ with $\sigma(\widetilde z)=t$. Since
  $t(v)=1$, we have $x\widetilde z\in E(G)$.  As $(T_{ef},\sigma)$ explains
  $(G,\sigma)$, $x\widetilde z$ must be an edge in $G(T_{ef},
  \sigma)$. Hence,
  $\lca_{T_{ef}}(x,\widetilde z) = u_{ef}\preceq_{T_{ef}}
  \lca_{T_{ef}}(x,z'')$ and
  $\lca_{T_{ef}}(x,\widetilde z) = u_{ef}\preceq_{T_{ef}}
  \lca_{T_{ef}}(x'',\widetilde z)$ for all $x''\in L[\sigma(x)]$ and
  $z''\in L[t]$. However, by construction of $T_{ef}$, we have
  $\lca_{T_{ef}}(x,z') = \lca_{T_{ef}}(x,\widetilde z) = u_{ef}$.  Hence,
  if $r\notin \sigma(L(T(w')))$, then
  $x\widetilde z \in E(G(T_{ef},\sigma))$ implies that $x z'$ is an edge in
  $G(T_{ef},\sigma)$; a contradiction to $xz'\notin E(G)$.  Now assume that
  $r\in \sigma(L(T(w')))$.  Clearly, $v'$ must have at least one further
  child $w''$ with $s\in \sigma(L(T(w'')))$ and $s\notin \sigma(L(T(w')))$.
  In particular, $r,t\notin \sigma(L(T(w'')))$.  Since
  $\sigma(L(T(v'))) \subseteq \sigma(L(T(v)))$, there exists a leaf
  $y \in L(T(v))$ with $\sigma(y)=s$.  Now, we either have $y\preceq_T w$
  or $y\preceq_T \widetilde w$ or
  $y\preceq_T \hat w\in \child_T(v)\setminus \{w,\widetilde w\}$.  In any
  case, $t(v)=1$ implies that at least one of the edges $xy$ or
  $y\widetilde z$ must be contained in $G$.  Assume $xy\in E(G)$. Since
  $t(u)=0$, we have $xy''\notin E(G)$ for any $y''\in L(T(w'')) \cap L[s]$.
  On the other hand, as $r\notin \sigma(L(T(w'')))$, we can apply the
  preceding argumentation to infer from $xy\in E(G(T_{ef},\sigma))$ that
  $x y''$ is an edge in $G(T_{ef},\sigma)$; a contradiction.  Analogously,
  the existence of an edge $y\widetilde z$ in $G$ yields  a
  contradiction as well.  Hence, $(T_{ef},\sigma)$ does not explain
  $(G,\sigma)$.
  
\end{proof}

The previous results can finally be used to obtain a least resolved tree
$(T,\sigma)$ from a cotree $(T',t,\sigma)$ for a given co-RBMG $(G,\sigma)$
that also explains $(G,\sigma)$. Instead of checking all inner edges of
$T'$ for redundancy, Lemma \ref{lem:10-edge} can be applied to identify
promptly many redundant edges, which considerably reduces the number of
edges that need to be checked. This idea is implemented in Algorithm
\ref{alg:lr-tree}, which returns a least resolved tree that explains
$(G,\sigma)$. Lemma \ref{lem:01-edge}, however, suggests that this least
resolved tree is not necessarily unique for $(G,\sigma)$.

\begin{algorithm}[tbp]
  \caption{From Cotrees to Least Resolved Trees} 
  \label{alg:lr-tree}
  \begin{algorithmic}[1]
    \REQUIRE Leaf-labeled cotree $(T,t,\sigma)$ 
    \STATE $A\gets \emptyset$
    \FORALL {inner edges $e=uv$ with $t(u)=1$} \label{alg:start-for}
       \STATE $A\gets A\cup\{e\}$
    \ENDFOR \label{alg:end-for}
    \STATE $(T,\sigma) \gets (T_A,\sigma)$ \label{alg:contract}  
    \WHILE {$(T,\sigma)$ contains redundant inner edges $e=uv$ } 
    \label{alg:start-while2}
       \STATE Contract $e$ to obtain  $(T_e,\sigma)$  
       \STATE $(T,\sigma) \gets (T_e,\sigma)$
    \ENDWHILE \label{alg:end-while2}
    \STATE \textbf{return} $(T,\sigma)$
  \end{algorithmic}
\end{algorithm}

\begin{theorem}
Let $(G,\sigma)$ be a co-RBMG that is explained by $(T,t,\sigma)$ such that
$(T,t,\sigma)$ is also a cotree for $(G,\sigma)$.
Then,  Algorithm \ref{alg:lr-tree} returns a
least resolved tree that explains $(G,\sigma)$, in polynomial time. 
\label{thm:last}
\end{theorem}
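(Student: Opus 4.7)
The plan is to verify three things: (i) the tree returned by Algorithm~\ref{alg:lr-tree} still explains $(G,\sigma)$; (ii) it is least resolved; (iii) the algorithm runs in polynomial time. For (i), after lines~\ref{alg:start-for}--\ref{alg:end-for} the set $A$ consists of all inner edges $e=uv$ of the input cotree with $t(u)=1$. Applying Lemma~\ref{lem:10-edge} with $B=A$ guarantees that $(T_A,\sigma)$ still explains $(G,\sigma)$, so line~\ref{alg:contract} preserves correctness. The subsequent while-loop only contracts edges that are redundant in the current tree, so by Definition~\ref{def:redundant} each such contraction preserves the property of explaining $(G,\sigma)$.

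For (ii), I would argue by contradiction. Suppose the returned tree $(T,\sigma)$ is not least resolved, so there is a non-empty series $e_1,\ldots,e_k$ of edges such that $(T_{e_1\ldots e_k},\sigma)$ also explains $(G,\sigma)$. Iterated application of Lemma~\ref{lem:edge_con} yields the chain of subgraphs
\[
G(T,\sigma)\subseteq G(T_{e_1},\sigma)\subseteq G(T_{e_1 e_2},\sigma)\subseteq\dots\subseteq G(T_{e_1\dots e_k},\sigma).
\]
Since the first and last graphs both equal $(G,\sigma)$, all intermediate RBMGs coincide with $(G,\sigma)$ as well. In particular $G(T_{e_1},\sigma)=(G,\sigma)$, meaning that $e_1$ is redundant in $(T,\sigma)$; this contradicts the termination condition of the while-loop, which only exits when no redundant inner edge is present.

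For (iii), observe that $T$ is phylogenetic, so $|V(T)|=O(n)$ with $n=|V(G)|$. The set $A$ and the contractions in lines~\ref{alg:start-for}--\ref{alg:contract} are computed in $O(n)$ time. Each iteration of the while-loop scans the $O(n)$ inner edges of the current tree and tests each one for redundancy; an edge $e$ is redundant iff $(T_e,\sigma)$ still explains $(G,\sigma)$, which can be decided by constructing $G(T_e,\sigma)$ via $O(n)$-preprocessed constant-time LCA queries and comparing with $(G,\sigma)$ in $O(n^2)$ time. Each iteration therefore costs $O(n^3)$, and since every iteration strictly decreases the number of inner edges there are at most $O(n)$ iterations, giving a total running time of $O(n^4)$.

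The subtle step---and the one I expect to be the main obstacle---is point (ii): showing that the absence of a single redundant edge already implies least resolvedness in the sense of Definition~\ref{def:series-edge-contract}, whose quantifier ranges over all series of edges. The monotonicity supplied by Lemma~\ref{lem:edge_con} is decisive, as it forces any series of contractions that preserves the RBMG to begin with a single redundant contraction. Without this monotonicity, a tree with no singly-redundant edge could in principle admit a series of joint contractions that preserve the RBMG, and the termination criterion of the while-loop would be too weak.
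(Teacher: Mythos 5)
Your proposal is correct and follows essentially the same route as the paper: Lemma \ref{lem:10-edge} justifies the batch contraction of all inner edges $uv$ with $t(u)=1$, the while-loop handles the rest, and the redundancy test of Lemma \ref{lem:lr} gives polynomial time. The one place where you go beyond the paper's (very terse) proof is step (ii), where you use the monotonicity of Lemma \ref{lem:edge_con} to show that the absence of any singly-redundant edge already implies least resolvedness in the sense of Definition \ref{def:series-edge-contract}; the paper leaves this implicit, and your chain-of-inclusions argument is a clean and valid way to close that gap.
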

\begin{proof}
  Lemma \ref{lem:10-edge} implies that all inner edges $e=uv$ with $t(u)=1$
  can be contracted, which is done in Line
  \ref{alg:start-for}-\ref{alg:contract}.  Afterwards, we check for all
  remaining inner edges $e=uv$ whether they are redundant or not, and, if
  so, contract them.  In summary, the algorithm is correct.  Clearly, all
  steps including the check for redundancy as in Lemma \ref{lem:lr} can be
  done in polynomial time.  
\end{proof}

\end{appendix} 
\end{document}